\pgfplotsset{compat=1.14}
\definecolor{myParula01Blue}{RGB}{0,114,189}
\definecolor{myParula02Orange}{RGB}{217,83,25}
\definecolor{myParula03Yellow}{RGB}{237,177,32}
\definecolor{myParula04Purple}{RGB}{126,47,142}
\definecolor{myParula05Green}{RGB}{119,172,48}
\definecolor{myParula06LightBlue}{RGB}{77,190,238}
\definecolor{myParula07Red}{RGB}{162,20,47}
\tikzset{myparula11/.style={color=myParula01Blue,solid,mark=+,mark options={solid}}}
\tikzset{myparula12/.style={color=myParula01Blue,densely dashed,mark=x,mark options={solid}}}
\tikzset{myparula13/.style={color=myParula01Blue,densely dotted,mark=o,mark options={solid}}}
\tikzset{myparula14/.style={color=myParula01Blue,dashdotted,mark=triangle,mark options={solid}}}
\tikzset{myparula15/.style={color=myParula01Blue,dashdotdotted,mark=square,mark options={solid}}}
\tikzset{myparula21/.style={color=myParula02Orange,solid,mark=+,mark options={solid}}}
\tikzset{myparula22/.style={color=myParula02Orange,densely dashed,mark=x,mark options={solid}}}
\tikzset{myparula23/.style={color=myParula02Orange,densely dotted,mark=o,mark options={solid}}}
\tikzset{myparula24/.style={color=myParula02Orange,dashdotted,mark=triangle,mark options={solid}}}
\tikzset{myparula25/.style={color=myParula02Orange,dashdotdotted,mark=square,mark options={solid}}}
\tikzset{myparula31/.style={color=myParula03Yellow,solid,mark=+,mark options={solid}}}
\tikzset{myparula32/.style={color=myParula03Yellow,densely dashed,mark=x,mark options={solid}}}
\tikzset{myparula33/.style={color=myParula03Yellow,densely dotted,mark=o,mark options={solid}}}
\tikzset{myparula34/.style={color=myParula03Yellow,dashdotted,mark=triangle,mark options={solid}}}
\tikzset{myparula35/.style={color=myParula03Yellow,dashdotdotted,mark=square,mark options={solid}}}
\tikzset{myparula41/.style={color=myParula04Purple,solid,mark=+,mark options={solid}}}
\tikzset{myparula42/.style={color=myParula04Purple,densely dashed,mark=x,mark options={solid}}}
\tikzset{myparula43/.style={color=myParula04Purple,densely dotted,mark=o,mark options={solid}}}
\tikzset{myparula44/.style={color=myParula04Purple,dashdotted,mark=triangle,mark options={solid}}}
\tikzset{myparula45/.style={color=myParula04Purple,dashdotdotted,mark=square,mark options={solid}}}
\tikzset{myparula51/.style={color=myParula05Green,solid,mark=+,mark options={solid}}}
\tikzset{myparula52/.style={color=myParula05Green,densely dashed,mark=x,mark options={solid}}}
\tikzset{myparula53/.style={color=myParula05Green,densely dotted,mark=o,mark options={solid}}}
\tikzset{myparula54/.style={color=myParula05Green,dashdotted,mark=triangle,mark options={solid}}}
\tikzset{myparula55/.style={color=myParula05Green,dashdotdotted,mark=square,mark options={solid}}}
\tikzset{myparula61/.style={color=myParula06LightBlue,solid,mark=+,mark options={solid}}}
\tikzset{myparula62/.style={color=myParula06LightBlue,densely dashed,mark=x,mark options={solid}}}
\tikzset{myparula63/.style={color=myParula06LightBlue,densely dotted,mark=o,mark options={solid}}}
\tikzset{myparula64/.style={color=myParula06LightBlue,dashdotted,mark=triangle,mark options={solid}}}
\tikzset{myparula65/.style={color=myParula06LightBlue,dashdotdotted,mark=square,mark options={solid}}}
\tikzset{myparula71/.style={color=myParula07Red,solid,mark=+,mark options={solid}}}
\tikzset{myparula72/.style={color=myParula07Red,densely dashed,mark=x,mark options={solid}}}
\tikzset{myparula73/.style={color=myParula07Red,densely dotted,mark=o,mark options={solid}}}
\tikzset{myparula74/.style={color=myParula07Red,dashdotted,mark=triangle,mark options={solid}}}
\tikzset{myparula75/.style={color=myParula07Red,dashdotdotted,mark=square,mark options={solid}}}
\theoremstyle{plain}
\newtheorem{theorem}{Theorem}
\newtheorem*{theorem*}{Theorem}
\newtheorem{corollary}{Corollary}
\newtheorem{lemma}{Lemma}
\newtheorem{property}{Property}
\newtheorem{constraint}{Constraint}
\crefname{constraint}{constraint}{constraints}
\crefname{eqconstraint}{constraint}{constraints}
\crefname{indhypothesis}{Inductive Hypothesis}{Inductive Hypotheses}
\Crefname{theorem}{Theorem}{Theorems}
\newcounter{countersetupcrefforlines}
  \edef\temp{\noexpand\crefname{linecounter\arabic{countersetupcrefforlines}}{line}{lines}}
  \ifnum\value{countersetupcrefforlines}<11
\theoremstyle{definition}
\newtheorem{definition}{Definition}
\newcommand{\customlabel}[3][]{%
   \protected@write \@auxout {}{\string \newlabel {#2}{{#3}{\thepage}{#3}{#2}{}} }%
   \protected@write \@auxout {}{\string \newlabel {#2@cref}{{[#1][#3][]#3}{[1][\thepage][]\thepage}}}%
   \hypertarget{#2}{}%
}
\titleclass{\subsubsubsection}{straight}[\subsubsection]
\newcounter{subsubsubsection}[subsubsection]
\renewcommand\thesubsubsubsection{\thesubsubsection.\arabic{subsubsubsection}}
\titlespacing*{\subsubsubsection}{0pt}{1.5ex plus .2ex}{0.5em}
\newcommand{\ie}[0]{\emph{i.e.}\xspace}
\newcommand{\eg}[0]{\emph{e.g.}\xspace}
\newcommand{\wlogen}[0]{w.l.o.g.\xspace}
 \newcommand{\fra}[1]{}
 \newcommand{\rs}[1]{}
  \newcommand{\lz}[1]{}
 \newcommand{\rsnotfn}[1]{}
 \newcommand{\tht}[1]{}
 \newcommand{\thtnotfn}[1]{}
\crefname{algo}{algorithm}{algorithms}
\crefname{g@linecounter}{line}{lines}
\crefname{condition}{condition}{conditions}
\tikzset{blockchain/.style={
        x=1.25cm,
        y=1.25cm,
        node distance=0.5cm,
        block/.style = {
            minimum width=0.75cm,
            minimum height=0.75cm,
            draw,
            shade,
            top color=white,
            bottom color=black!10,
        },
        block-adv/.style = {
            block,
            bottom color=myParula07Red!50,
            draw=myParula07Red!50!black,
        },
        block-hon/.style = {
            block,
            bottom color=myParula05Green!50,
            draw=myParula05Green!50!black,
        },
        link/.style = {
            -latex,
        },
        link-adv/.style = {
            link,
        },
        link-hon/.style = {
            link,
        },
        %
    }
}
\newcommand{\txpool}[0]{\ensuremath{\mathit{txpool}}}
\newcommand{\tx}[0]{\ensuremath{\mathit{tx}}}
\newcommand{\chain}[0]{\ensuremath{\mathsf{ch}}}
\newcommand{\chainfrozen}[1][]{\ensuremath{\mathsf{ch}^{\mathrm{frozen}\ifthenelse{\equal{#1}{}}{}{,#1}}}}
\newcommand{\Chain}[0]{\ensuremath{\mathsf{Ch}}}
\newcommand{\chainava}[0]{\ensuremath{\mathsf{chAva}}}
\newcommand{\chainfin}[0]{\ensuremath{\mathsf{chFin}}}
\newcommand{\chaincanmfc}[0]{{\chain^{\mathsf{MFC}}}}
\newcommand{\chaincanrlmd}[0]{{\chain^{\mathsf{RLMD}}}}
\newcommand{\fastcands}[0]{\ensuremath{\mathsf{fast}^\mathrm{cands}}}
\newcommand{\fastcand}[0]{\ensuremath{\mathsf{fast}^\mathrm{cand}}}
\newcommand{\rlmdghost}[0]{\ensuremath{\operatorname{\textsc{RLMD-GHOST}}}}
\newcommand{\rlmd}[0]{\ensuremath{\operatorname{\textsc{RLMD}}}}
\newcommand{\hfc}[0]{\ensuremath{\operatorname{\textsc{HFC}}}}
\newcommand{\mfc}[0]{\ensuremath{\operatorname{\textsc{MFC}}}}
\newcommand{\mfcvote}[1]{\ensuremath{\mfc^{\voting{#1}}}}
\newcommand{\mfcpropose}[1]{\ensuremath{\mfc^{\proposing{#1}}}}
\newcommand{\specifyExec}[2]{\overset{#1}{#2}{}}
\newcommand{\mfcFFG}{\specifyExec{\FFGExec}{\mfc}{}}
\newcommand{\mfcNoFFG}{\specifyExec{\NoFFGExec}{\mfc}{}}
\newcommand{\mfcvoteFFG}[1]{\ensuremath{\mfcFFG^{\voting{#1}}}}
\newcommand{\mfcproposeFFG}[1]{\ensuremath{\mfcFFG^{\proposing{#1}}}}
\newcommand{\mfcvoteNoFFG}[1]{\ensuremath{\mfcNoFFG^{\voting{#1}}}}
\newcommand{\mfcproposeNoFFG}[1]{\ensuremath{\mfcNoFFG{}^{\proposing{#1}}}}
\newcommand{\mfcproposeExec}[2]{\ensuremath{\specifyExec{#1}{\mfc}^{\proposing{#2}}}}
\newcommand{\rlmdvote}[1]{\ensuremath{\rlmd^{\voting{#1}}}}
\newcommand{\rlmdpropose}[1]{\ensuremath{\rlmd^{\proposing{#1}}}}
\newcommand{\hfcvp}[0]{\rlmdghost}
\newcommand{\hfcvote}[1]{\ensuremath{\hfcvp^{\voting{#1}}}}
\newcommand{\hfcpropose}[1]{\ensuremath{\hfcvp^{\proposing{#1}}}}
\newcommand{\hfcFFG}{\specifyExec{\FFGExec}{\rlmdghost}{}}
\newcommand{\rlmdNoFFG}{\specifyExec{\NoFFGExec}{\rlmdghost}{}}
\newcommand{\hfcvoteFFG}[1]{\ensuremath{\hfcFFG^{\mathsf{vote},#1}}}
\newcommand{\hfcproposeFFG}[1]{\ensuremath{\hfcFFG^{\mathsf{propose},#1}}}
\newcommand{\rlmdvoteNoFFG}[1]{\ensuremath{\rlmdNoFFG^{\mathsf{vote},#1}}}
\newcommand{\rlmdproposeNoFFG}[1]{\ensuremath{\rlmdNoFFG{}^{\mathsf{propose},#1}}}
\newcommand{\VFFG}{\overset{\FFGExec}{\V}{}}
\newcommand{\VNoFFG}{\overset{\NoFFGExec}{\V}{}}
\newcommand{\makeNoFFG}[2][]{\specifyExec{\NoFFGExec}{\ifthenelse{\equal{#1}{}}{#2}{#2_{#1}}}{}}
\newcommand{\makeFFG}[2][]{\specifyExec{\FFGExec}{\ifthenelse{\equal{#1}{}}{#2}{#2_{#1}}}{}}
\NewDocumentCommand{\removeparens}{m}
{
  \tl_if_eq:eeTF {\tl_head:n {#1}} { ( } {\tl_if_eq:eeTF {\tl_range:nnn {#1} {-1} {-1}} { ) } {\tl_range:nnn {#1} {2} {-2}} {#1}} {#1}
}
\newcommand{\proposing}[1]{{\ensuremath{\mathsf{propose}(\removeparens{#1})}}}
\newcommand{\voting}[1]{{\ensuremath{\mathsf{vote}(\removeparens{#1})}}}
\newcommand{\fastconfirming}[1]{{\ensuremath{\mathsf{fconf}(\removeparens{#1})}}}
\newcommand{\merging}[1]{{\ensuremath{\mathsf{merge}(\removeparens{#1})}}}
\newcommand{\ghost}[0]{\ensuremath{\operatorname{\textsc{GHOST}}}}
\newcommand{\C}[0]{\ensuremath{\mathcal{C}}}
\newcommand{\T}[0]{\ensuremath{\mathcal{T}}}
\newcommand{\J}[0]{\ensuremath{\mathcal{J}}}
\newcommand{\calS}[0]{\ensuremath{\mathcal{S}}}
\newcommand{\GJ}[0]{\ensuremath{\mathcal{GJ}}}
\newcommand{\GJfrozen}[1][]{\ensuremath{\GJ^{\mathrm{frozen}\ifthenelse{\equal{#1}{}}{}{,#1}}}}
\newcommand{\GF}[0]{\ensuremath{\mathcal{GF}}}
\newcommand{\slot}[0]{\ensuremath{\operatorname{\mathrm{slot}}}}
\long\def\blockcomment#1\endblockcomment{}
\newcommand{\GAT}[0]{\ensuremath{\mathsf{GAT}}}
\newcommand{\GST}[0]{\ensuremath{\mathsf{GST}}}
\newcommand{\V}[0]{\ensuremath{\mathcal{V}}}
\newcommand{\Vglobal}[0]{\ensuremath{\V_\mathsf{G}}}
\newcommand{\Vfrozen}[1][]{\ensuremath{\mathcal{V}^{\mathrm{frozen}\ifthenelse{\equal{#1}{}}{}{,#1}}}}
\newcommand{\X}[0]{\ensuremath{\mathcal{X}}}
\newcommand{\M}[0]{\ensuremath{\mathcal{M}}}
\newcommand{\FC}[0]{\ensuremath{\mathsf{FC}}}
\newcommand{\negl}[0]{\ensuremath{\operatorname{negl}}}
\newcommand{\Tconf}[0]{\ensuremath{T_\mathsf{conf}}}
\newcommand{\Tafter}[0]{\ensuremath{T_\mathsf{sec}}}
\newcommand{\Tdyn}[0]{\ensuremath{T_\mathsf{dyn}}}
\newcommand{\treorg}[0]{\ensuremath{{t_\mathsf{reorg}}}}
\newcommand{\Treorg}[0]{\ensuremath{{T_\mathsf{reorg}}}}
\newcommand{\theal}[0]{\ensuremath{{t_\mathsf{heal}}}}
\newcommand{\node}[0]{\ensuremath{\mathcal{P}}}
\newcommand{\genesis}[0]{\ensuremath{B_\mathrm{genesis}}\xspace}
\newcommand{\GHOST}[0]{\textsf{GHOST}\xspace}
\newcommand{\Goldfish}[0]{\textsf{Goldfish}\xspace}
\newcommand{\LMDGHOST}[0]{\textsf{LMD-GHOST}\xspace}
\newcommand{\RLMDGHOST}[0]{\textsf{RLMD-GHOST}\xspace}
\newcommand{\TOBSVD}[0]{\textsf{TOB-SVD}\xspace}
\newcommand{\FIL}[0]{\textsf{FIL}\xspace}
\newcommand{\true}[0]{\ensuremath{\mathit{true}}}
\newcommand{\LOGbft}[2]{%
    \ifthenelse{\equal{#1}{}}{%
        \ensuremath{\mathsf{LOG}_{\mathrm{bft}}^{#2}}%
    }{%
        \ensuremath{\mathsf{LOG}_{\mathrm{bft},#1}^{#2}}%
    }%
}
\newcommand{\ld}[1]{%
    \ifthenelse{\equal{#1}{}}{%
        \ensuremath{\mathrm{L}^{(c)}}%
    }{%
        \ensuremath{\mathrm{L}^{(#1)}}%
    }%
}
\newcommand{\bprop}[1]{%
    \ifthenelse{\equal{#1}{}}{%
        \ensuremath{\Hat{b}}%
    }{%
        \ensuremath{\Hat{b}_{#1}}%
    }%
}
\title{3-Slot-Finality Protocol for Ethereum\footnote{This work combines multiple preliminary publications which appeared at ESORICS-CBT~2023~\cite{DBLP:conf/esorics/DAmatoZ23},
  PODC~2024~\protect\cite{DBLP:conf/podc/DAmatoLZ24},
  CSF~2024~\protect\cite{rlmd}, and
  \emph{under submission}~\protect\cite{streamliningSBFT}.}
  }
\author{
  Francesco D'Amato\\
  Ethereum Foundation\\
  \and Roberto Saltini\\
  Independent\thanks{Work done while at Consensys. Supported by Ethereum Foundation grant FY24-1433.}
  \and Thanh-Hai Tran\\
  Independent\footnotemark[2] 
  \and Luca Zanolini\\
  Ethereum Foundation
}
\date{}
\begin{document}

\maketitle

\begin{abstract}
{Gasper, the consensus protocol currently employed by Ethereum, typically requires 64 to 95 slots -- {the units of time during which a new \emph{chain} extending the previous one by one block is proposed} and voted -- to finalize. This means that under ideal conditions -- where the network is synchronous, and all chain proposers, along with more than two-thirds of the validators, behave as dictated by the protocol -- proposers construct blocks on a non-finalized chain that extends at least 64 blocks. 
This exposes a significant portion of the blockchain to potential reorganizations during changes in network conditions, such as periods of asynchrony.
Specifically, this finalization delay heightens the network's exposure to Maximum Extractable Value (MEV) exploits, which could undermine the network's integrity. Furthermore, the extended finalization period forces users to balance the trade-off between economic security and transaction speed. 

To address these issues and speed up finality, we introduce a partially synchronous finality gadget, which we combine with two dynamically available consensus protocols -- synchronous protocols that ensure safety and liveness even with fluctuating validator participation levels. This integration results in secure ebb-and-flow protocols [SP 2021], achieving finality within three slots after a proposal and realizing \emph{3-slot finality}}.

\end{abstract}

\section{Introduction and Related Work}
\label{sec:introduction}
Traditional Byzantine consensus protocols, such as PBFT~\cite{DBLP:conf/osdi/CastroL99} or HotStuff, are designed for distributed systems where participants are fixed, known in advance, and cannot go \emph{offline} without being considered faulty.

Recently, dynamic participation has become a critical requirement for developing permissionless consensus protocols. This concept, initially formalized by Pass and Shi through their \emph{sleepy model}~\cite{sleepy}, encapsulates the ability of a system to handle honest participants who may go offline and come back online. A consensus protocol that maintains safety and liveness while accommodating dynamic participation is called \emph{dynamically-available}.

One problem of such dynamically-available protocols is that they do not tolerate network partitions~\cite{DBLP:journals/corr/abs-2304-14701}; no consensus protocols can satisfy both liveness (under dynamic participation) and safety (under network partitions). Simply put, a consensus protocol cannot produce a single chain\footnote{In this context, we are extending the traditional notion of consensus, typically understood as a one-shot primitive. Technically, this should be referred to as total-order broadcast or atomic broadcast. However, for the sake of a general audience, we have adopted the term ``consensus." Consequently, we consider the output of these protocols to be a sequence of transactions, batched in blocks, forming a chain. This will be formalized in Section~\ref{sec:model}.} that concurrently offers Dynamic Availability and guarantees transaction finality in case of asynchronous periods or network partitions. Because of that, dynamically-available protocols studied so far are devised in synchronous settings~\cite{goldfish, DBLP:journals/iacr/MalkhiMR22,DBLP:conf/ccs/Momose022, rlmd, streamliningSBFT}.

Working around this impossibility result, Neu, Tas, and Tse~\cite{DBLP:conf/sp/NeuTT21} introduced a family of protocols referred to as \emph{ebb-and-flow} protocols. An ebb-and-flow protocol comprises two sub-protocols, each with its own \emph{confirmation rule}, and each outputting a chain, with one serving as a prefix of the other. The first confirmation rule defines what is known as the \emph{available chain}, which provides liveness under dynamic participation (and synchrony). The second confirmation rule defines the \emph{finalized chain}, and provides safety even under network partitions, but loses liveness either under asynchrony or in case of fluctuation in the participation level. Interestingly, such family of protocols also captures the nature of the Ethereum consensus protocol, Gasper~\cite{gasper}, in which the available chain is output by (the confirmation rule of) \LMDGHOST~\cite{zamfir} protocol, and the finalized chain by the (confirmation rule of the) \emph{finality gadget} Casper~FFG~\cite{casper}. However, the original version of \LMDGHOST is not secure even in a context of full participation {and synchrony}. Potential attack vectors have been identified~\cite{DBLP:conf/sp/NeuTT21, DBLP:conf/fc/Schwarz-Schilling22} that undermine the protocol's safety and liveness.

Motivated by finding a more secure alternative to \LMDGHOST, and following the ebb-and-flow approach, D'Amato \emph{et al.}~\cite{goldfish} devise a synchronous dynamically-available consensus protocol, \Goldfish, that, combined with a generic (partially synchronous) finality gadget, implements a secure ebb-and-flow protocol. Moreover, \Goldfish is \emph{Reorg Resilient}: chains proposed by honest validators are guaranteed to not be reorganized. However, \Goldfish is brittle to temporary asynchrony~\cite{rlmd}, in the sense that even a single violation of the bound of network delay can lead to a catastrophic failure, jeopardizing the safety of \emph{any} previously confirmed chain, resulting in a protocol that is not practically viable to replace \LMDGHOST in Ethereum. In other words, \Goldfish is not \emph{Asynchrony Resilient}.

To cope with the limitation of \Goldfish with asynchrony, D'Amato and Zanolini~\cite{rlmd} propose \RLMDGHOST, a provably secure synchronous consensus protocol that does not lose safety during \emph{bounded} periods of asynchrony and which tolerates a weaker form of dynamic participation, offering a trade-off between Dynamic Availability and Asynchrony Resilience. Their protocol results appealing for practical systems, where strict synchrony assumptions might not always hold, contrary to what is generally assumed with standard synchronous dynamically-available protocols. 

The family of protocols to which \Goldfish and \RLMDGHOST belong are consensus protocols that are \emph{probabilistically} safe, guaranteeing safety with overwhelming probability. In contrast, Momose and Ren's research~\cite{DBLP:conf/ccs/Momose022} laid the foundation for \emph{deterministically} safe, dynamically-available consensus protocols, sparking a wave of subsequent research~\cite{DBLP:journals/iacr/MalkhiMR22, DBLP:conf/ccs/MalkhiM023, DBLP:conf/wdag/GafniL23, DBLP:conf/podc/DAmatoLZ24}. Unlike \Goldfish and \RLMDGHOST these protocols achieve deterministic safety by employing the notion of quorums. Traditional quorums, defined by a fixed number of actively engaging participants, are not suitable in a dynamic participation context. {By leveraging \emph{Graded Agreement}\footnote{In Graded Agreement, each decision is assigned a grade, which intuitively reflects the strength of the agreement. Notably, different formulations of graded agreement exist, each with slighly different properties. In this work, we focus on the properties defined by D'Amato \emph{et al.}~\cite{streamliningSBFT}.},} Momose and Ren~\cite{DBLP:conf/ccs/Momose022} redefined quorums dynamically, according to current participation levels, while maintaining critical properties of traditional quorums.

This development has led to the creation of various consensus protocols {based on Graded Agreement}, each with unique properties and varying levels of adversarial tolerance. The initial protocol by Momose and Ren~\cite{DBLP:conf/ccs/Momose022} accommodates up to 1/2 adversarial participants but is limited by a high latency of $16\Delta$, with $\Delta$ being the message delay bound. The subsequent study by Malkhi, Momose, and Ren~\cite{DBLP:journals/iacr/MalkhiMR22} introduced two protocols that reduce the latency to $3\Delta$ and $2\Delta$, respectively, but at the expense of lower adversarial tolerance to 1/3 and 1/4. Later enhancements~\cite{DBLP:conf/ccs/MalkhiM023} managed to revert to tolerating minority corruption while maintaining a comparable latency of $4\Delta$. Another concurrent and independent work~\cite{DBLP:conf/wdag/GafniL23} achieves 1/2 adversarial resilience, with a latency of $6\Delta$.

D'Amato \emph{et al.}~\cite{streamliningSBFT}, aiming to enhance the practicality of deterministically safe, dynamically-available consensus protocols, particularly in systems with many participants, introduced a consensus protocol, named \TOBSVD{}, that tolerates up to 1/2 adversarial participants and achieves latency comparable to \cite{DBLP:conf/ccs/MalkhiM023} -- slightly better {in expectation} and slightly worse in the best case. Crucially, it requires only a single vote round per decision {in the best case}, in contrast to the nine rounds required by \cite{DBLP:conf/ccs/MalkhiM023}.
{Moreover, D'Amato, Losa, and Zanolini~\cite{DBLP:conf/podc/DAmatoLZ24}, explored mechanisms to make dynamically-available consensus protocols based on Graded Agreement resilient to \emph{bounded} periods of asynchrony. }

\sloppy{Previous work by D'Amato and Zanolini~\cite{DBLP:conf/esorics/DAmatoZ23} proposed an ebb-and-flow protocol by combining the dynamically-available protocol \RLMDGHOST with a finality gadget similar to Casper FFG, hereafter referred to as the \emph{SSF} protocol.}
In their work, as in ours, time is divided into \emph{slots} and at the beginning of each slot a new chain is proposed by an elected proposer.
Importantly, the SSF protocol ensures that, under synchrony and at least 2/3 of the participants being honest and online, any chain proposed by an honest participant is finalized within the same slot, \ie, before the next proposer's turn.
This single-slot finality is achieved through three vote rounds within a slot.
Specifically, the SSF protocol operates in slots of duration $4\Delta$ rounds each. During the first round, a proposal is made.
In the second round, validators cast votes for what they perceive as the tip of the chain, ideally the proposal just made. In the third round, if a quorum of votes for the same head block is observed, validators cast a \emph{finality} vote for that block.
In the final round, if a validator sees a quorum of finality votes for a block, it broadcasts an acknowledgment message for external observers. If an observer sees a quorum of acknowledgments for a block by the end of the slot, it can declare the block finalized.

While theoretically sound, the efficiency and usability of this protocol for large-scale blockchain networks are questionable due to the number of vote {phases} required for each slot.
In fact, large-scale blockchain networks such as Ethereum, due to the large number of participants, to reduce the bandwidth requirements, employ an aggregation process by which votes are first sent to aggregators who then distribute the aggregated signatures.
As a consequence of this, each vote  {phase} requires double the normal network latency, increasing the slot time and decreasing the transaction throughput.
This means that in practice, in the SSF protocol a slot is $6\Delta$ as the second and third vote  {phase} need to wait for the first and second ones to complete, respectively, but the third vote  {phase} can proceed in parallel with the next slot.
Moreover, it is unclear whether the messages cast during the third  {phase} should be included on-chain or kept off-chain and potentially delaying the slot's finalization in practice.

In this work, we propose a finality gadget that can be composed with dynamically-available protocols to obtain an ebb-and-flow protocol with only one vote {phase} per slot and slot length of $5\Delta$ if we consider a vote {phase} taking $2\Delta$.
All things equals, this represents a 20\% improvement in practical network throughout compared to the protocol by D'Amato and Zanolini~\cite{DBLP:conf/esorics/DAmatoZ23}\footnote{{This assuming the same block size for both the SSF protocol and our proposed protocol. In practice, throughput can be independent of slot duration, as desired per-second throughput can be set independently, with higher per-slot throughput achievable by using longer slots.}}.

The trade-off that we make is in delaying the finalization of a chain proposed by an honest validator to occur two slots later, \ie, in our protocol a chain proposed by an honest validator in slot $t$ is finalized\footnote{Here, we consider the finalization time of a chain $\chain$ to be the time after which no chain conflicting with $\chain$ can ever by finalized, which can occur earlier than when some honest node finalizes chain $\chain$ in their view.} by the end of slot $t+2$, as long as synchrony holds, and at least 2/3 of the participants are honest and active till the vote round of slot $t+2$.
Importantly, this is ensured regardless of whether the proposers of slots $t+1$ and $t+2$ are honest.
In practice, this means that, assuming that each vote {phase} takes $2\Delta$, we require the synchronous and participation assumptions to hold for $11\Delta$ (in our protocol the vote round occurs $\Delta$ time after the beginning of a slot), rather than $5\Delta$ as in the SSF protocol.
However, {like SSF,} we offer a method by which, under these assumptions and considering that vote phases take $2\Delta$, a chain proposed by an honest proposer is \emph{confirmed} by the dynamically-available protocol of any honest validator at time $3\Delta$ of the same slot, which ensures that such chain will then be finalized in slot $t+2$.
This offers an avenue for users to know in advance that, as long as the assumptions on synchrony and participation holds, such chain will be finalized.

Also, we can integrate the third round of voting from the SSF protocol into ours to obtain a protocol that has two vote {phases} per slot, but still retain slot length of $5\Delta$ --
when assuming that each vote {phase} takes $2\Delta$ because, as explained above, the third vote {phase} from the SSF protocol can proceed in parallel with the rest of the protocol and therefore does not increase the practical slot length -- but that reduces finalization from $11\Delta$ down to $8\Delta$.

In practice, for networks where the periods of synchrony and at least 2/3 of the participants being honest and online typically last much longer than $11\Delta$, our protocol presents no real drawback as confirmation still happens within the same slot which then leads to finalization, while attaining a higher transaction throughput and, arguably, a simpler protocol.
{
Moreover, users are concerned with the time that a transaction takes to be either confirmed or finalized, which is the time that it takes for this transaction to be included in a proposed chain and for this chain to be either confirmed or finalized.
This is not the same as the time taken to confirm or finalize a block proposed by an honest validator
because transactions are not necessarily submitted just before an honest proposer proposes a chain.
They are submitted whenever the user needs to interact with the blockchain.
So, they could be submitted at any point of a slot.
Also, they could be submitted when the proposer of the next slot is Byzantine who therefore might not include them in the chain that they proposer, if any.
For this reason, from a user perspective, it makes sense to consider the expected confirmation and finalization times under the assumption that a transaction submission time is uniformly distributed.
Then, the expected confirmation and finalization times correspond to the time taken to confirm or finalize a chain proposed by an honest proposer, depending on which of the two measures we interested in, plus  $\frac{(1+\beta) \cdot \text{slot-time}}{2(1 - \beta)}$ where \(\beta\) represents the adversarial power in the network.

Let us compare the expected confirmation time of SSF and 3SF first.
For $\beta = \frac{1}{3}$\footnote{Liveness can only be guaranteed if $\beta < \frac{1}{3}$}, the expected confirmation time for SSF is $9\Delta$ whereas for 3SF is $8\Delta$ meaning an $\approx 11$\% improvement.
For $\beta = 0$, the expected confirmation time for SSF is $6\Delta$ whereas for 3SF is $5.5\Delta$ meaning an $\approx 8$\% improvement.

Moving to the expected finalization time, for $\beta = \frac{1}{3}$%
,
for SSF it is $11\Delta$, for 3SF it is $16\Delta$ and for the two-slot variant of 3SF it is $13\Delta$ meaning that the expected finalization time for 3SF  $\approx 46$\% higher than the one of SSF, but this reduces to $\approx 18$\% for the its two-slot variant.
For $\beta = 0$, the expected finalization time for SSF is $8\Delta$, for 3SF is $13.5\Delta$ and for the two-slot variant of 3SF is $10.5\Delta$ meaning that the expected finalization time for 3SF  $\approx 69$\% higher than the one of SSF, but this reduces to $\approx 31$\% for the its two-slot variant.

}

Finally, slot time has been shown\cite{milionis2024automatedmarketmakinglossversusrebalancing} to be an important parameter in determining the \emph{economic leakage} of on-chain automated market makers (AMMs) due to arbitrage. For instance, arbitrage profits (and equivalently liquidity providers (LP) losses) are proportional to the square root of {slot} time, so that a lower {slot} time is very desirable by financial applications built on top of a smart contract blockchain.

{
Overall, our protocol achieves a balance by trading a higher expected finalization time for a shorter expected confirmation time, which could be sufficient for most users. At the same time, it offers shorter slot time and improved throughput as discussed above.
}

{Additionally,} we show how to integrate our finality gadget
with two dynamically-available protocols to obtain a secure ebb-and-flow.
The first dynamically-available protocol that we consider is a probabilistically-safe and bounded-asynchrony-period-resilient variant of the deterministically-safe protocol \TOBSVD{}~\cite{streamliningSBFT}, the second is \RLMDGHOST~\cite{rlmd}.
The resulting protocols, in addition to the standard ebb-and-flow properties, ensure Safety and Reorg Resilience of the available chain even in the face of a subsequent bounded period of asynchrony.
This makes our protocols particularly attractive for blockchain networks, such as Ethereum, where safety failures in the available chain can be exploited by dishonest participants to steal honest participant's Maximum Extractable Value (MEV)~\cite{DBLP:conf/sp/DaianGKLZBBJ20} without being punished for it.
{Moreover, critical to the practical application in large-scale blockchain networks, both of our resulting protocols ensure that the dynamically-available component can \emph{heal} from any arbitrarily long period of asynchrony as long as at least 2/3 of participants are honest and online for a sufficient amount of time.
This is not a property mentioned in the original work that introduced ebb-and-flow protocol~\cite{ebbandflow} where the dynamically-available protocol is required to ensure safety only if synchrony holds from the beginning.
However, any real system is bound to experience some period of asynchrony of arbitrary length at some point.
Ensuring that the dynamically-available protocol can recover is then important from a practical point of view.}

The remainder of this work is structured as follows. In Section~\ref{sec:model}, we present our system model along with all the necessary background notions {and property definitions}. Common notions for both the protocols we introduce in this work are detailed in Section~\ref{sec:common-notions}.
Section~\ref{sec:ffg} introduces and proves the correctness of the finality gadget.
Notably, the finality gadget, or \emph{FFG-component}, is common to both presented protocols. Therefore, the properties and results discussed in Section~\ref{sec:ffg} apply to both protocols.
Our first faster finality protocol is presented in Section~\ref{sec:ga-based}, where we provide the pseudo-code of the {first} protocol, the one based upon \TOBSVD~\cite{streamliningSBFT},  and prove its properties, demonstrating that it is a secure ebb-and-flow protocol (as defined in Section~\ref{sec:model}).
The second protocol, the one based upon \RLMDGHOST~\cite{rlmd}, is introduced in Section~\ref{sec:rlmd-based}, and we conduct a similar analysis to that in Section~\ref{sec:ga-based}.
In \Cref{sec:practical-consideration}, we discuss how to resolve some of the main challenges presented in implementing either protocol, and examine their communication complexity.
Then, in \Cref{sec:healing}, we detail the conditions required for Dynamic Availability and Reorg Resilience to hold even in partially synchronous settings and provide the intuition underpinning this leaving, the detailed proof to \Cref{sec:healing-detailed}.
Also, in \Cref{sec:two-slot-finality} we explore a modification to our protocols that allow finalizing chains within two slots only by adding one vote round but without this affecting the practical length of slots when aggregation is used to reduce bandwidth requirements.
Finally, conclusions are drawn in Section~\ref{sec:conclusion}.

\section{Model and Preliminary Notions}
\label{sec:model}

\subsection{System model}

\paragraph*{Validators.} We consider a set of $n$ \emph{validators} $v_1, \dots, v_n$ that communicate with each other by exchanging messages. Every validator is identified by a unique cryptographic identity and the public keys are common knowledge. Validators are assigned a protocol to follow, consisting of a collection of programs with instructions for all validators. 
We assume the existence of a probabilistic poly-time adversary $\mathcal{A}$ that can choose up to $f$ validators to corrupt over an entire protocol execution. Any validator is \emph{honest} until, if ever, it is corrupted by the adversary at which point it stays corrupted for the remaining duration of the protocol execution, and is thereafter called \emph{adversarial}. The adversary $\mathcal{A}$ knows the the internal state of adversarial validators. The adversary is \emph{adaptive}: it chooses the corruption schedule dynamically, during the protocol execution.
Honest validators faithfully follow the assigned protocol while adversarial validators may deviate from it arbitrarily.
Each validator has a \emph{stake}, which we assume to be the same for every validator.
If a validator $v_i$ misbehaves and a proof of this misbehavior is given, it loses a part of its stake ($v_i$ gets \emph{slashed}). 

\paragraph*{Time.} Time is divided into discrete \emph{rounds}. 
We define the notion of \emph{slot} as a collection of $4\Delta$ rounds.
Given round $r$ we define $\slot(r)$ as the slot of round $r$, \ie, $\slot(r) := \lfloor \frac{r}{4\Delta} \rfloor$.

\paragraph*{View.} A \emph{view} (at a given round $r$), denoted by {$\V^r$}, is a subset of all the messages that a validator has received until $r$. The notion of view is \emph{local} for the validators. For this reason, when we want to focus the attention on a specific view of a validator $v_i$, we denote with {$\V_i^r$} the view of $v_i$ (at round $r$).

\paragraph*{Links.} We assume that a best-effort gossip primitive that will reach all validators is available. In a protocol, this primitive is accessed through the events “sending a message through gossip” and “receiving a gossiped message.” We also use the term \emph{casts} to describe when a validator $v_i$ sends a message through gossip. Moreover, we assume that messages from honest validator to honest validator are eventually received and cannot be forged. This includes messages sent by adversarial validators, once they have been received by some honest validator $v_i$ and gossiped around by $v_i$. 

\paragraph*{Network Model.} We consider a partially synchronous model in which validators have synchronized clocks but there is no a priori bound on message delays. However, there exists a time (not known by the validators), called \emph{global stabilization time} (\GST), after which message delays are bounded by $\Delta$ rounds with $\Delta$ known to validators.

{We also allow for a \emph{single short asynchronous period} after $\GST$, starting at slot $t_a+1$ and extending for at most $\pi$ slots, \ie, including at most rounds $[4\Delta (t_a+1), 4\Delta (t_a+\pi+1))$.
The value of $t_a$ is unbeknownst to the validators, but we assume that the validators know the value of $\pi$.}

\paragraph*{Sleepiness.}
The adversary $\mathcal{A}$ can decide for each round $r$ which honest validator is \emph{awake} or \emph{asleep} at round $r$. Asleep validators do not execute the protocol and {messages received in round $r$} are queued and delivered in the first round in which the validator is awake again. Honest validators that become awake at round~$r$, before starting to participate in the protocol, must first execute (and terminate) a \emph{joining protocol} (see Section~\ref{sec:joining-protocol}), after which they become \emph{active}~\cite{goldfish}.
All adversarial validators are always awake, and are not prescribed to follow any protocol. Therefore, we always use active, awake, and asleep to refer to honest validators. As for corruptions, the adversary is adaptive also for sleepiness, \ie, the sleepiness schedule is also chosen dynamically by the adversary. Note that awake and active validators coincide in the sleepy model~\cite{sleepy}. Finally, there is a time (not known by the validators), called \emph{global awake time} (\GAT), after which all validators are always~awake.

{
\paragraph*{Transaction Pool.}
We assume the existence of an ever growing set of transaction, called \emph{transaction pool} and denoted by $\txpool$, that any
 every validator has read access to.
Consistently with previous notation, we use $\txpool^r$ to refer to the content of $\txpool$ at time $r$.
}

\paragraph*{Blocks and chains.}
A \emph{block} is a pair of elements, denoted as \( B = (b,p) \). Here, \( b \) represents the \emph{block body} -- essentially, the main content of the block which contains a batch of transactions grouped together\footnote{We often use the dot notation to refer to the elements of a tuple. For instance, $B.b$ represents the block body of $B$.}.
{We assume that a block body can contain an unbounded number of transactions.}
For the sake of simplicity, we are not including details about the block's parent in this definition. In practice, each block body contains a reference pointing to another block. 
The second element of the pair, \( p \geq 0 \), indicates the \emph{slot} where the block \( B \) is proposed.
By definition, if $B_p$ is the parent of $B$, then $B_p.p < B.p$.
We denote with {$\genesis=(b_{-1},-1)$} the \emph{genesis block}, which is the only block that does not have a parent {and has a negative slot}.
Given the definition above, each different block $B$ implicitly identifies a different finite \emph{chain} of blocks starting from block $B$, down to the genesis block, by recursively moving from a block to its parent.
Hence, we make no real distinction between a block and the chain that it identifies.
So, by chain $\chain$, we mean the chain identified by the block $\chain$.
Let us consider two chains, $\chain_1$ and $\chain_2$. 
We write $\chain_1 \prec \chain_2$ if and only if $\chain_1$ is a strict ancestor of $\chain_2$.
In this case, we also say that $\chain_1$ is a strict prefix of $\chain_2$ or, conversely, that $\chain_2$ is a strict extension of $\chain_1$.
We say that $\chain_1$ \emph{conflicts} with $\chain_2$ if and only if neither $\chain_1 \preceq \chain_2$ nor $\chain_2 \preceq \chain_1$ holds. 
Given a chain \(\chain\), we define the \(\kappa\)-deep prefix of \(\chain\) at slot \(t\) (denoted by \(\chain^{\lceil\kappa,t}\)) as the longest prefix \(\chain'\) of \(\chain\) such that \(\chain'.p \leq t - \kappa\).
When slot $t$ is clear from the context, we just write $\chain^{\lceil \kappa}$ to mean $\chain^{\lceil \kappa,t}$.
We also define the following total pre-order between chains. 
{We write $\chain_1 \leq \chain_2$ if and only if $\chain_1.p \leq \chain_2.p$.
However, when we write $\chain_1 = \chain_2$, we mean that the two chains are the same, not that just $\chain_1.p =\chain_2.p$.} 
Also, we sometimes also call $\chain.p$ the \emph{height} or \emph{length} of chain $\chain$.
Finally, we let $\tx \in \chain$ to mean that transaction $\tx$ is included in a block of chain $\chain$, \ie, $\exists \chain' \preceq \chain,\, \tx \in \chain'.b$.
{Naturally, if $\mathit{TX}$ is a set of transactions, then we write $\mathit{TX} \subseteq \chain$ to mean $\forall \mathit{tx}\in \mathit{TX}, \mathit{tx} \in \chain$.}

{\paragraph*{Blockchain Protocol.} 
In this work, we only consider protocols that output one or more chains.
If $\chain$ is the chain output by a protocol $\Pi$, then we use the notation $\chain^r_i$ to indicate the chain output by validator $v_i$ at round $r$ when executing protocol $\Pi$ (we do not include $\Pi$ in the notation as that will always be clear from the context).
If $v_i$ is asleep in round $r$ and $r_a < r$ is the highest round during which $v_i$ was awake, then we assume that $\chain^r_i = \chain^{r_a}_i$. }

\paragraph*{Proposer election mechanism.}
In each slot $t$, a validator~$v_p$ is selected to be a \emph{proposer} for~$t$, \ie, to extend the chain with new blocks. Observe that, when we want to highlight the fact that $v_p$ is a proposer for a specific slot~$t$, we use the notation~$v_{p}^t$. Otherwise, when it is clear from the context, we just drop the slot $t$, to make the notation simpler. As the specification of a proposal selection mechanism is not within the goals of this work, we assume the existence of a proposer selection mechanism satisfying the requirements of \emph{uniqueness, unpredictability, and fairness}: $v_p$ is unique, the identity of~$v_p$ is only known to other validators once~$v_p$ reveals itself, and any validator has probability~$\frac{1}{n}$ of being elected to be a proposer at any slot.

\paragraph*{Proposing.} In this work, we only consider protocols that define the action of a validator \emph{proposing} a chain. The specifics of this actions are protocol dependant.
This generalizes the common notion of proposing a block.
For any of the two protocols presented in this work, proposing happens in the first round of a slot.
{Hence, for readability purpose, we often use $\proposing{t} := 4\Delta t$.}

{\paragraph*{Voting.} Either of the two protocols presented in this work include a vote round where validators vote on the proposed chain.
Again for sake of readability, we often use $\voting{t} := 4\Delta t + \Delta$.}


\subsection{Adversary resiliency}
\label{sec:adversary-resiliency}
Let $H_r$, $A_r$, and $H_{r, r'}$ be the set of active validators at round $r$, the set of adversarial validators at round $r$, and the set of  validators that are active \emph{at some point} in rounds $[r,r']$, \ie, $H_{r,r'} = \bigcup_{i=r}^{r'} H_i$ (if $i < 0$ then $H_i \coloneqq \emptyset$), respectively.
We let $A_\infty = \lim_{t \to \infty} A_{\voting{t}}$.
Note that $f = |A_\infty|$. Also, we say that validator $v_i$ is \emph{always-honest} if and only if $v_i \notin A_\infty$, \ie, it is never corrupted.

Except when explicitly mentioned, throughout this work we assume $f<\frac{n}{3}$.
Each protocol presented in this work extends this assumption in slightly different ways.

\subsubsection{Assumptions for the protocol of Section~\ref{sec:ga-based}}
For the protocol in Section~\ref{sec:ga-based} we assume that for every slot~$t$ after $\GST$:




\begin{align}
  &|H_{\voting{t}} \setminus {A_{\voting{t+1}}}| > |A_{\voting{t+1}} \cup \left(H_{\voting{(t-\eta+1)},\voting{(t-1)}}\setminus H_{\voting{t}}\right)|\label[eqconstraint]{eq:pi-sleepiness}
\end{align}
where
$
\eta := \begin{cases}
  1, &\text{if $\pi = 0$}\\
  \pi + 2, &\text{otherwise}
\end{cases}
$.

In other words, we require the number of active validators at round $\voting{(t-1)}$ which are not corrupted by round $\voting{t}$ to be greater than the number of adversarial validators at round $\voting{t}$, together with the number of validators that were active {at some point} between rounds $\voting{(t-\eta)}$ and $\voting{(t-2)}$, but not at round $\voting{(t-1)}$, with $\eta = \pi +2$ except if $\pi = 0$ (\ie, we admit no asynchronous period after $\GST$), in which case $\eta = 1$ and Constraint~\eqref{eq:pi-sleepiness} simplifies to $|H_{\voting{(t-1)}} \setminus {A_{\voting{t}}}| > |A_{\voting{t}}|$.
Intuitively, $\eta$ corresponds to \emph{the expiration period} applied to some messages that are discarded by the protocol if they were sent more than $\eta$ slots ago. 
This will become clearer later in \Cref{sec:ga-based}.

We also assume that all of the following constraints hold: 
\begin{align}
  &\forall t' \in \{t_a+1, \dots, t_a + \pi+2\},\; |H_{\voting{t_a}} \setminus A_{\voting{t}}| > | A_{\voting{t'}} \cup \left(H_{\voting{(t'-\eta)},\voting{(t'-1)}} \setminus H_{\voting{t_a}}\right)|\label[eqconstraint]{eq:async-condition}\\
  &H_{\voting{t_a}} \setminus A_{\voting{(t_a+1)}} \subseteq H_{\voting{t_a} + \Delta}\label[eqconstraint]{eq:async-condition3}\\
  &|(H_{\voting{(t_a+1,t_a+\pi+1)}} \setminus H_{\voting{(t_a)}})\cup A_\infty|<\frac{2}{3}\label[eqconstraint]{eq:async-condition2}
\end{align}

The three constraints above only deal with the short asynchronous period $\pi$.
\Cref{eq:async-condition} limits how many honest validators that were not active in round $\voting{t_a}$ may awake during the short asynchronous period or the two following slots.
This is because the adversary can take advantage of the short asynchronous period to manipulate them to send messages that would jeopardize the safety of the protocol.
Constraint~\eqref{eq:async-condition3} just requires that all the validators active in round $\voting{t_a}$ and not corrupted by round $\voting{t_a+1}$ are also active $\Delta$ rounds after $\voting{t_a}$.
{\Cref{eq:async-condition2} is required to prevent the adversary from leveraging the messages not subject to expiration to affect the safety of the protocol. We will provide a more detailed explanation of such a scenario in due course in \Cref{sec:analysis-tob-sync}.}

\subsubsection{Assumptions for the Protocol of Section~\ref{sec:rlmd-based}}
For the protocol in Section~\ref{sec:rlmd-based} we instead require that the following condition holds for any slot $t$ after $\GST$, in addition to \Cref{eq:async-condition,eq:async-condition3,eq:async-condition2} defined above:
\begin{equation}
\label[eqconstraint]{eq:sleepy-req}
  |H_{\voting{t}}| > |A_{\voting{(t+1)}} \cup (H_{\voting{(t-\eta+1)}, \voting{(t-1)}}\setminus H_{\voting{t}})|
\end{equation}

Observe that Constraint~\eqref{eq:sleepy-req} is less restrictive than Constraint~\eqref{eq:pi-sleepiness}, as it just imposes limits on $|H_{\voting{(t-1)}}|$ rather than on $|H_{\voting{(t-1)}} \setminus {A_{\voting{t}}}|$. The reason for this diminished restrictiveness will become clear when we present our protocol in Section~\ref{sec:rlmd-based}.

We refer to the adversary model just described as the \emph{generalized partially synchronous $\eta$-sleepy model} (or \wlogen, when the context is clear, as the \emph{$\eta$-sleepy model} for short). Finally, we say that an execution in the generalized partially synchronous sleepy model is \emph{$\eta$-compliant} if and only if it satisfies $\eta$-sleepiness, \ie, if it satisfies either \Cref{eq:pi-sleepiness,eq:async-condition,eq:async-condition3,eq:async-condition2} or \Cref{eq:sleepy-req,eq:async-condition,eq:async-condition3,eq:async-condition2}, depending on the protocol.

\subsection{Security}
\label{sec:security}

\paragraph*{Security Parameters.}
In this work we treat $\lambda$ and $\kappa$\footnote{{In this work, the value $\kappa$ represents the number of slots that are required in order to be certain, except for a negligible probability, that at least one of the proposers in these slots is honest.}} as the security parameters related to the cryptographic components utilized by the protocol and the protocol's own security parameter, respectively. We also account for a finite time horizon, represented as $\Tconf$, which is polynomial in relation to $\kappa$. An event is said to occur with \emph{overwhelming probability} if it happens except with probability which is $\negl(\kappa) + \negl(\lambda)$. The properties of cryptographic primitives hold true with a probability of $\negl(\lambda)$, signifying an overwhelming probability, although we will not explicitly mention this in the subsequent sections of this work.
{We also assume that $\kappa > 1$.} 

\begin{definition}[Safe protocol]
  \label{def:safety}
 We say that a protocol outputting a confirmed chain $\Chain$
 \tht{When we rewrite the introduction, it is good to provide the intuition of a confirmed chain and an available chain.}
 is \emph{safe} after time $\Tafter$ in executions $\mathcal{E}$, if and only if for any execution in $\mathcal{E}$, two rounds $r, r' \geq \Tafter$, and any two honest validators $v_i$ and $v_j$ (possibly $i=j$) at rounds $r$ and $r'$ respectively, either $\Chain_i^r \preceq \Chain_{j}^{r'}$ or $\Chain_j^{r'} \preceq \Chain_i^r$.

 A protocol satisfies \emph{$\eta$ Safety} after time $\Tafter$ if it is safe after time $\Tafter$ in any $\eta$-compliant execution.
 We say that a protocol always satisfies $\eta$ Safety if it satisfies $\eta$ Safety after time $0$.
 If $\Tafter = 0$ and $\mathcal{E}$ includes all partially synchronous executions, then we say that a protocol is \emph{always safe}.
\end{definition}

\begin{definition}[Live protocol]
  \label{def:liveness}
 {We say that a protocol outputting a confirmed chain $\Chain$ is \emph{live} after time $\Tafter$ in executions $\mathcal{E}$, and has confirmation time $\Tconf$, if and only if for any execution in $\mathcal{E}$, any rounds $r \geq \Tafter$ and $r_i \geq r+\Tconf$, any transaction $\tx$ in the transaction pool at time $r$, and any validator $v_i$ active in round $r_i$,  $\tx \in \Chain^{r_i}_i$.}

  A protocol satisfies \emph{$\eta$ Liveness} after time $\Tafter$ with confirmation time $\Tconf$ if it is live after time $\Tafter$ in any $\eta$-compliant execution and has confirmation time $\Tconf$.
  We say that a protocol always satisfies $\eta$ Liveness with confirmation time $\Tconf$ if it satisfies $\eta$ Safety after time $0$ and has confirmation time $\Tconf$.
  If $\Tafter = 0$ and $\mathcal{E}$ includes all partially synchronous executions, then we say that a protocol is \emph{always live}.
\end{definition}

\begin{definition}[Secure protocol~\cite{goldfish}]
 \label{def:security}
We say that a protocol outputting a confirmed chain $\Chain$ is \emph{secure} after time $\Tafter$, and has confirmation time $\Tconf$, if it is
  \begin{itemize}
    \item safe after time $\Tafter$ and
    \item live after time $\Tafter$ with confirmation time $\Tconf$.
  \end{itemize}
A protocol (always) satisfies $\eta$ Security with confirmation time $\Tconf$ if it (always) satisfies both $\eta$ Safety and $\eta$ Liveness with confirmation time $\Tconf$.
A protocol is always secure if it always both safe and live
\end{definition}

We now recall the definitions of \emph{Dynamic Availability} and \emph{Reorg Resilience} from~\cite{rlmd}.

\begin{definition}[Dynamic Availability]
 \label{def:dyn-ava}
We say that a protocol is $\eta$-\emph{dynamically-available} after time $\Tdyn$ if and only if it satisfies $\eta$ Security after time $\Tdyn$ with confirmation time $\Tconf = O(\kappa)$.

We simply say that a protocol is $\eta$-\emph{dynamically-available} if and only if it always satisfies $\eta$ Security when $\GST = 0$.

Moreover, we say that a protocol is \emph{dynamically-available} if it is $1$-dynamically-available, as this corresponds to the usual notion of Dynamic Availability.
\end{definition}

Note that we define the concept of Dynamic Availability with the underlying assumption of a certain level of adversarial resilience. In the first protocol, as discussed in Section~\ref{sec:ga-based}, we attain Dynamic Availability provided that Constraint~\eqref{eq:pi-sleepiness} is satisfied. Following this, in Section~\ref{sec:rlmd-based}, Dynamic Availability is achieved on the satisfaction of Constraint~\eqref{eq:sleepy-req}.

\begin{definition}[Reorg Resilience]
  \label{def:reorg-resilience}
  We say that a protocol is \emph{reorg-resilient} after slot $t_{\mathsf{reorg}}$ and time $T_{\mathsf{reorg}}$\footnote{{In this work, we always have $\Treorg \geq 4\Delta\treorg$.}} in executions $\mathcal{E}$ if and only if, 
  for any execution in $\mathcal{E}$, round $r \geq T_{\mathsf{reorg}}$ and validator $v_i$ honest in $r$, {any chain} proposed in any slot $t\geq t_{\mathsf{reorg}}$
  by a validator honest in round $\proposing{t}$ does not conflict with $\Chain^r_i$.

  We simply say that a protocol is \emph{reorg-resilient} if and only if it is reorg-resilient after slot 0 and time 0.
  
  A protocol is \emph{$\eta$-reorg-resilient} if it is reorg-resilient in any $\eta$-compliant execution.
\end{definition}



Together with Dynamic Availability, we want our protocol to be \emph{accountably safe}.

\begin{definition}[Accountable Safety]
    \label{def:acc-safety}
We say that a protocol has \emph{Accountable Safety} with resilience $f^\mathrm{acc} > 0$, or that it is $f^\mathrm{acc}$-accountable, if, upon a safety violation, {by having access to all messages sent,} it is possible to identify at least $f$ responsible participants. In particular, {by having access to all messages sent,} it is possible to collect evidence from sufficiently many honest participants and generate a cryptographic proof that identifies $f^\mathrm{acc}$ adversarial participants as protocol violators. Such proof cannot falsely accuse any honest participant that followed the protocol correctly.
\end{definition}

As a consequence of the CAP theorem~\cite{DBLP:journals/corr/abs-2304-14701}, no consensus protocols can satisfy both liveness (under dynamic participation) and safety (under temporary network partitions). Simply put, a consensus protocol (for state-machine replication) cannot produce a single chain that concurrently offers Dynamic Availability and guarantees transaction finality in case of asynchronous periods or network partitions. To overcome this impossibility result, Neu, Tas, and Tse~\cite{DBLP:conf/sp/NeuTT21} introduce a family of protocols, referred to as \emph{ebb-and-flow} protocols. Neu \emph{et al.}~\cite{DBLP:conf/sp/NeuTT21} propose a protocol that outputs two chains, one that provides liveness under dynamic participation (and synchrony), and one that provides Accountable Safety even under network partitions. This protocol is called \emph{ebb-and-flow} protocol. We present a generalization of it, in the $\eta$-sleepy model.

\rs{Should we parametrize by the Accountable Safety threshold as well? \eg, $(f, \eta)$-secure}
\begin{definition}[$\eta$-secure ebb-and-flow protocol]
A $\eta$-secure \emph{ebb-and-flow protocol} outputs an available chain $\chainava$ that is $\eta$-dynamically-available\footnote{When we say that a chain produced by a protocol $\Pi$ satisfies property $P$ (or \emph{the chain is $P$}), we mean that the protocol $\Pi$ itself satisfies property $P$ (or \emph{the protocol is $P$}).}, and a finalized (and accountable) chain $\chainfin$ that is always safe and is live after $\max(\GST,\GAT){+O(\Delta)}$ with $\Tconf = O(\kappa)$\footnote{The reason for this is that we need $\kappa$ slots to find a slot with an honest proposal, after which the proposal will be finalized in $12\Delta$ rounds.}, therefore is secure after $\max(\GST,\GAT){+O(\Delta)}$ with $\Tconf = O(\kappa)$. Moreover, for each honest validator $v_i$ and for every round $r$, $\chainfin_i^r$ is a prefix of $\chainava_i^r$.
\end{definition}

Both our faster finality protocols adopt the ebb-and-flow methodology~\cite{ebbandflow}: the available chain $\chainava$ is output by the $\eta$-dynamically-available protocol (or \emph{component}), while the finalized chain $\chainfin$, is output by a PBFT-style protocol whose behavior is similar to that of FFG Casper~\cite{casper}.

\paragraph{Asynchrony Resilience.}
We define the set $W_r$ of to be the set of active honest validator in round $r$, but restricted to also be in $H_{\voting{t_a}}$ if $\slot(r)$ in $[t_a, t_a+\pi+1]$, \ie,
\[
W_r := \begin{cases}
  H_r \cap H_{\voting{t_a}},&\text{if $\slot(r)$ in $[t_a, t_a+\pi+1]$}\\
  H_r,&\text{otherwise}.
\end{cases}
\]
{Informally, we say that any validator in such a set is an \emph{aware} validator.}
Then, we define $\eta$ Asynchrony Reorg Resilience and $\eta$ Asynchrony Safety Resilience as follows~\cite{rlmd}.
\begin{definition}[$\eta$ Asynchrony Reorg Resilience]
  \label{def:async-resilience}
  If $\pi > 0$,
  we say that a protocol is \emph{$\eta$-asynchrony-reorg-resilient} after slot $t_{\mathsf{reorg}}$ and time $T_{\mathsf{reorg}}$ if and only if the following holds for any $\eta$-compliant execution.
  For any slot $r_i\geq T_{\mathsf{reorg}}$ and validator $v_i \in W_{r_i}$, a {chain} proposed in a slot $t \in [t_{\mathsf{reorg}},t_a]$ by a validator honest in round $\proposing{t}$ does not conflict with $\Chain^r_i$.

  We simply say that a protocol is \emph{$\eta$-asynchrony-reorg-resilient} if and only if it is $\eta$-asynchrony-reorg-resilient after slot $0$ and time $0$.
\end{definition}

\begin{definition}[$\eta$ Asynchrony Safety Resilience]
  \label{def:async-safety-resilience}
  If $\pi > 0$,
  we say that a protocol is \emph{$\eta$-asynchrony-safety-resilient} after time $\Tafter$ if and only if the following holds for any $\eta$-compliant execution.
  Let $r_i$ be any round such that $r_i \in [\Tafter,4 \Delta t_a + \Delta]$ and $v_i$ any validator honest in $r_i$.
  For any slot $r_j$ and validator $v_j$ in $W_{r_j}$, $\Chain^{r_j}_j$ does not conflict with $\Chain^{r_i}_i$.

  We simply say that a protocol is \emph{$\eta$-asynchrony-safety-resilient} if and only if it is $\eta$-asynchrony-safety-resilient after time 0.
\end{definition}


\section{Common Notions}
\label{sec:common-notions}

\paragraph*{Checkpoints.} 

A \emph{checkpoint} is a kind of tuple, represented as \( \C = ({\chain}, c) \).
In this tuple, \(\chain \) is a chain, \( c \) signifies the slot where \( \chain \) is proposed for justification (this concept is introduced and explained below).
Observe that, as a consequence of our protocol design, the slot \( c \) for the checkpoint will always occur after the slot \( \chain.p \) where the chain was proposed, i.e., \( c {\ge} \chain.p \)
.
We refer to \( c \) as the \emph{checkpoint slot} of~\( \C \).

\paragraph*{Votes.}
Validators cast two main types of votes: \textsc{ffg-vote}s and \textsc{vote}s. 
Each \textsc{vote} include an \textsc{ffg-vote}.
Specifically, an \textsc{ffg-vote} is represented as a tuple \([\textsc{ffg-vote}, \C_1, \C_2, v_i]\), where {$v_i$ is the validator sending the \textsc{ffg-vote}, while} \(\C_1\) and \(\C_2\) are checkpoints.
These checkpoints are respectively referred to as the \emph{source} (\(\C_1\)) and the \emph{target} (\(\C_2\)) of the \textsc{ffg-vote}.
{Such an \textsc{ffg-vote} is \emph{valid} (\ie, $\mathrm{valid}(\C_1 \to \C_2$)) if and only if}
\(\C_1.c < \C_2.c\) and  \(\C_1.\chain \preceq \C_2.\chain\). 
\textsc{ffg-vote}s effectively act as \emph{links} connecting the source and target checkpoints. We sometimes denote the whole \textsc{ffg-vote} simply as \(\C_1 \to \C_2\). For instance, we might say that a validator \(v_i\) casts a \(\C_1 \to \C_2\) vote. On the other hand, a \textsc{vote} cast by a validator~$v_i$ is a tuple \([\textsc{vote}, \chain, \C_1 \to \C_2, t, v_i]\), where \(\chain\) represents a chain, \(\C_1 \to \C_2\) encapsulates the associated \textsc{ffg-vote} \([\textsc{ffg-vote}, \C_1, \C_2, v_i]\), and \(t\) is the slot in which such vote is cast.
{In this case, we might say that $v_i$ casts a \textsc{vote} message for chain $\chain$.
We say that a validator $v_i$ \emph{equivocates} if and only if it sends two \textsc{vote} messages $[\textsc{vote}, \chain, \cdot, t, v_i]$ and $[\textsc{vote}, \chain', \cdot, t, v_i]$ with $\chain \neq \chain'$, \ie, it casts two \textsc{vote} messages for the same slot but different chains.

\paragraph*{Proposals.}
This work contains protocols using two types of \textsc{proposal} messages. For the first protocol (Section~\ref{sec:ga-based}), a proposal is a tuple {$[\textsc{propose}, \chain_p, \chain^C, Q^C, \C, t,v_k]$ where $\chain_p$ is the \textsc{propose}d chain,} $Q^C$ is a \emph{quorum certificate} for the \emph{fast confirmed} $\chain^C$, $\mathcal{C}$ is a checkpoint, and $t = \chain_p.p$\footnote{\label{fn:redund}Note that in this context, the parameter $t$ is redundant in the \textsc{propose} message since it is already incorporated within $\chain_p$. However, for the sake of clarity, we have chosen to include it explicitly in the \textsc{propose} message.} is the slot in which this proposal is cast.
In this case, we say that validator $v_k$ \textsc{propose}s chain $\chain_p$ in slot $t$.
The notion of fast confirmed chain will be defined in Section~\ref{sec:revisiting-tob}.

Subsequently, in Section~\ref{sec:rlmd-based}, a proposal is a tuple [\textsc{propose}, $\chain_p$, $\V$, $t$,$v_k$] where $\chain_p$ is a chain (as above), $\V$ a view, and $t=B.p$. We refer to $\V$ as the \emph{proposed view}.

\paragraph*{Gossip behavior.}
Votes and blocks are gossiped at any time, regardless of whether they are received directly or as part of another message. For example, a validator receiving a vote also gossips the block that it contains, and a validator receiving a proposal also gossips the blocks and votes contained in the proposed view. Finally, a proposal from slot $t$ is gossiped only during the first $\Delta$ rounds of slot~$t$.

\paragraph*{Joining protocol.}
\label{sec:joining-protocol}

Honest validators that become awake at round $r$, before starting to participate in the protocol, must first execute (and terminate) a \emph{joining protocol}, after which they become \emph{active}. 
Given slot~$t$, when an honest validator $v_i$ wakes up at some round $r \in (\voting{(t-2)}+\Delta,\voting{(t-1)}+\Delta]$, as per our system model, all the messages that it received while being asleep are immediately delivered to it.
Then, validator $v_i$ executes the protocol, but without sending any message, up until round $\voting{t}$ at which point it becomes active, until either corrupted or put to sleep by the adversary.
This also implies that if a validator $v_i$ is elected to be the leader in a given slot, but by the propose time of that slot it is not active yet, then $v_i$ will not send any \textsc{propose} message in that slot.

\paragraph{Fork-choice function.} A \emph{fork-choice function} is a deterministic function, denoted as $\FC$. This function, when given a set of views\footnote{The number of views used as input varies depending on the protocol. This distinction will become evident when discussing the two different fork-choice functions in the following sections.}, a chain, and a slot $t$ as inputs, produces a chain $\chain$. 
In this work we will focus our attention on two types of fork-choice functions. For the first protocol  (Section~\ref{sec:ga-based}), we consider a \emph{majority} fork-choice function, i.e., the outputted chain is the highest chain supported by a majority of the voting weight (\Cref{alg:mfc}).  
Subsequently,  in Section~\ref{sec:rlmd-based}, we consider a fork-choice function based on $\ghost$~\cite{ghost} (\Cref{alg:rlmd-fc}).

\paragraph*{Confirmation rule.}
A confirmation rule allows validators to identify a \emph{confirmed prefix} of the chain outputted by the fork-choice function, for which safety properties hold, and which is therefore used to define the output of the protocol. Since the protocol we are going to present outputs two chains, the {available chain} $\chainava$ and the {finalized chain} $\chainfin$, output by the finality component, we have two confirmation rules. One is \emph{finality}, which we introduced in Section~\ref{sec:ffg}, and defines $\chainfin$. The other confirmation rule, defining $\chainava$, is itself essentially split in two rules, a \emph{slow} $\kappa$-deep confirmation rule, which is live also under dynamic participation, and a \emph{fast (optimistic) confirmation rule}, requiring $\frac{2}{3}n$ honest validators to be awake, \ie, a stronger assumption than just $\eta-$compliance. $\chainava$ is updated to the chain confirmed by either one, so that liveness of $\chainava$ only necessitates liveness of one of the two rules. In particular, $\eta$-compliance is sufficient for liveness. On the other end, safety of $\chainava$ requires both rules to be safe.

\section{FFG component}
\label{sec:ffg}

In \Cref{sec:common-notions}, we described two kind of votes that a validator $v_i$ casts in the protocol. In particular, the \textsc{ffg-vote}, encapsulated within the generic \textsc{vote}, is used by the \emph{FFG-component} of our protocol\footnote{The component of our protocol that outputs $\chainfin$ is almost identical to Casper~\cite{casper}, the \emph{friendly} finality gadget (FFG) adopted by the Ethereum consensus protocol Gasper~\cite{gasper}. This is the reason why we decided to use the \emph{FFG} terminology already accepted within the Ethereum ecosystem.}. The FFG component of our protocol aims at finalizing{, in each slot, a chain that extends the one finalized in the previous slot}.

\paragraph*{Justification.}
We say that a set of \textsc{ffg-vote}s is a \emph{supermajority set} if it contains valid \textsc{ffg-vote}s from at least \(\frac{2}{3}n\) distinct validators.
A checkpoint \(\C\) is considered \emph{justified} if it either corresponds to the genesis block, i.e., \(\C = (B_\text{genesis}, 0)\), or if there exists a supermajority set of links \(\{\C_i \to \C_j\}\) satisfying the following conditions. First,  for each link $\C_i \to \C_j$ in the set, {\(\C_i \to \C_j\) is valid and} \(\C_i.\chain \preceq \C.\chain \preceq \C_j.\chain\). Moreover, all source checkpoints \(\C_i\) in these links need to be already justified, and the checkpoint slot of \(\C_j\) needs to be the same as that of \(\C\) (\(\C_j.c=\C.c\)), for every \(j\). It is important to note that the source and target chain may vary across different votes.
This justification rule is formalized by the binary function $\mathsf{J}(\V,\C)$ in \Cref{alg:justification-finalization} which outputs \true{} if and only if checkpoint $\C$ is justified according to the set of messages $\V$.
Lastly, we say that a {chain \(\chain\)} \emph{is justified} if and only if there exists a justified checkpoint \(\C\) for which {\(\C.\chain = \chain\)}.

\begin{algo}[t!]
  \caption{Justification and Finalization}
  \label{alg:justification-finalization}
  \vbox{
  \small
  \begin{numbertabbing}\reset
    xxxx\=xxxx\=xxxx\=xxxx\=xxxx\=xxxx\=MMMMMMMMMMMMMMMMMMM\=\kill
  {\textbf{function} $\mathrm{valid}(\C_1 \to \C_2)$ }\label{}\\
  \> {\textbf{return} }\label{}\\
  \>\>{$\land\; \C_1.\chain \preceq \C_2.\chain$}\label{}\\
  \>\>{$\land\; \C_1.c < \C_2.c$}\label{}\\
  \\
  \textbf{function} $\mathsf{J}(\C,\V)$ \label{}\\

    \> \textbf{return}\label{}\\
    \>\> $\lor\; \C = (\genesis,0)$\label{}\\
    \>\> $\lor\; \exists \M \subseteq \V: \land\; |\{v_k : [\textsc{vote},\cdot,\cdot,\cdot,v_k] \in \M\}|\geq \frac{2}{3}n$
    \label{}\\
    \>\> $\hphantom{\lor\; \exists \M \in \V:}\land\; \forall\; [\textsc{vote},\cdot,\calS \to \T,\cdot,\cdot] \in \M: \land\;\mathrm{valid}(\calS\to\T)$\label{}\\
    \>\> $\hphantom{\lor\; \exists \M \in \V:\land\; \forall\; [\textsc{vote},\cdot,\calS \to \T,\cdot,\cdot] \in \M:}\, \land\;\mathsf{J}(\calS,\V)$\label{}\\
    \>\> $\hphantom{\lor\; \exists \M \in \V:\land\; \forall\; [\textsc{vote},\cdot,\calS \to \T,\cdot,\cdot] \in \M:}\, \land\;\calS.\chain \preceq \C.\chain\preceq \T.\chain$\label{}\\
    \>\> $\hphantom{\lor\; \exists \M \in \V:\land\; \forall\; [\textsc{vote},\cdot,\calS \to \T,\cdot,\cdot] \in \M:}\, \land\;\T.c = \C.c$\label{}\\
    \\

  \textbf{function} $\mathsf{F}(\C,\V)$ \label{}\\
    \> \textbf{return}\label{}\\
    \>\> $\lor\; \C = (\genesis,0)$\label{}\\
    \>\> $\lor\; \land\; \mathsf{J}(\C,\V)$\label{}\\
    \>\> $\hphantom{\lor\;}\land\; \exists \M \subseteq \V: \land\; |\{v_k : [\textsc{vote},\cdot,\cdot,\cdot,v_k] \in \M\}|\geq \frac{2}{3}n$\label{}\\
    \>\> $\hphantom{\lor\;\land\; \exists \M \in \V:}\land\; \forall\; [\textsc{vote},\cdot,\C \to \T,\cdot,\cdot] \in \M: \land\;\mathrm{valid}(\C\to\T)$\label{}\\
    \>\> $\hphantom{\lor\;\land\; \exists \M \in \V:\land\; \forall\; [\textsc{vote},\cdot,\C \to \T,\cdot,\cdot] \in \M:}\, \land\;\T.c = \C.c+1$\label{}
    \\[-5ex]        
  \end{numbertabbing}
  }
\end{algo}

\begin{figure}[h]
\centering
\includegraphics[width=0.7\textwidth]{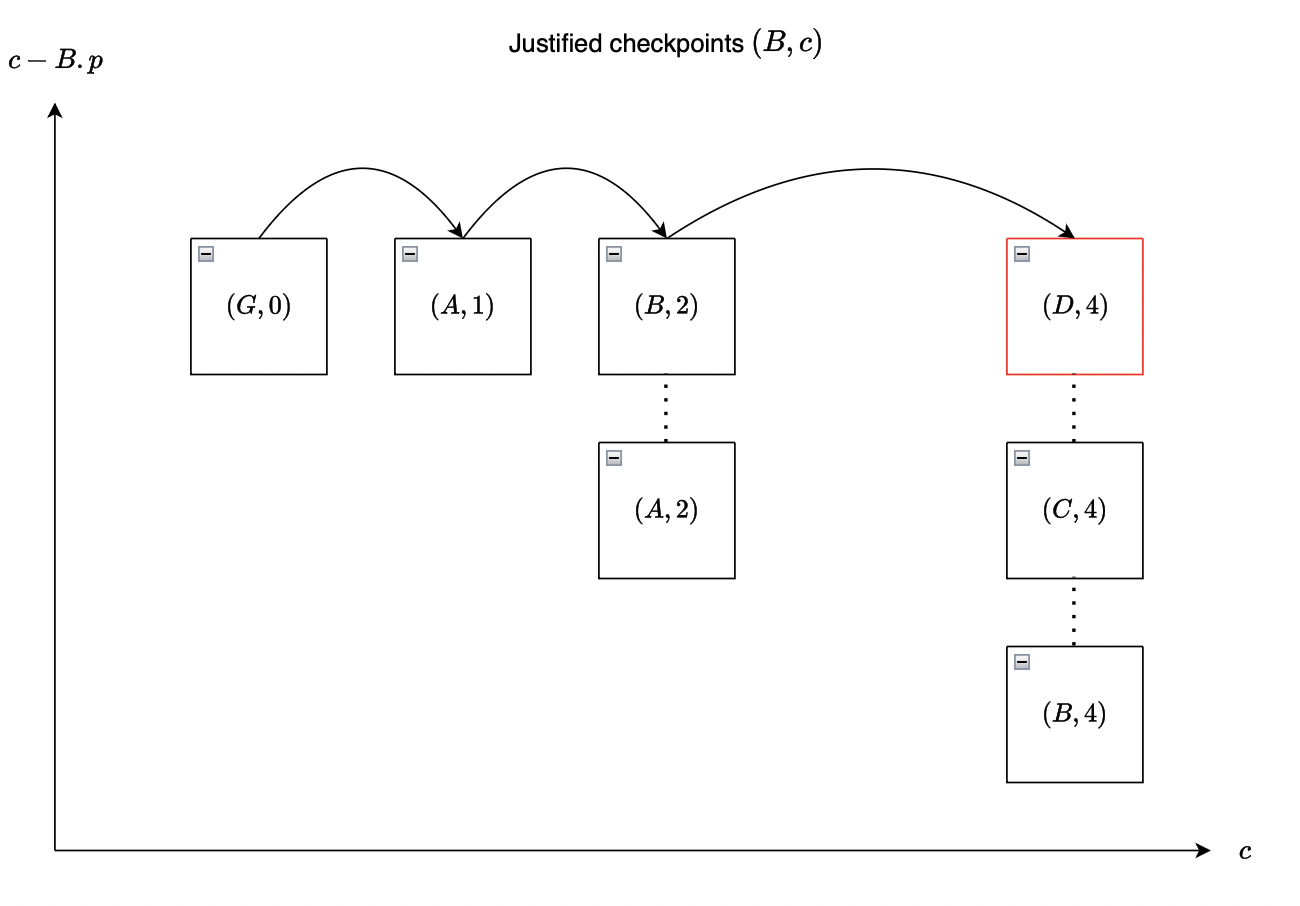}
\caption{\rsnotfn{Improve pictures.} Example of justification. The figure illustrates the progression of justified checkpoints over four slots. The checkpoints are presented from left to right, indicating an increase in the checkpoint slot~$c$. Within each checkpoint slot, the proposal slot~$p$ decreases from top to bottom. This visualization is based on the assumption that all justifications occur through votes with consistent source and target, as shown by the arrows. {Chains} labeled $A, B, C,$ and $D$ are \textsc{propose}d in slots $0, 1, 2,$ and $3$, respectively, {with each chain extending the previous one}. Specifically, at slot 2, votes originating from the source checkpoint $(A,1)$ and targeting checkpoint $(B,2)$ result in the justification of both checkpoints $(B,2)$ and $(A,2)$. Slot 3 does not see any checkpoint justification. However, in slot 4, checkpoints containing chains $B, C,$ and $D$ are justified. This justification is achieved through votes with the source $(B,2)$ and the target $(D,4)$.}
\label{fig:your_label}
\end{figure}

\paragraph*{Ordering among checkpoints.}
We define a {total pre-order} among checkpoints \(\C\) and \(\C'\) by lexicographic ordering on checkpoint and proposal slots. Specifically,  {$\C \leq \C'$ if and only if either \(\C.c < \C'.c\) or, in the case where \(\C.c = \C'.c\), then \(\C.\chain.p \leq \C'.\chain.p\). 
However, when we write $\C = \C'$ we mean that the two checkpoints are the same, not just that $\C.c = \C'.c \land \C.\chain.p=\C'.\chain.p$.}

\paragraph*{Greatest justified checkpoint and chain.}
A checkpoint is considered justified in a view \(\V\) if \(\V\) contains a supermajority set of links justifying it. A \emph{justified} checkpoint which is maximal in \(\V\) with respect to the previously defined lexicographic ordering is referred to as the \emph{greatest justified checkpoint} in \(\V\), denoted as \(\GJ(\V)\). In the event of ties, they are broken arbitrarily. Chain \(\GJ(\V).\chain\) is referred to as the \emph{greatest justified chain}.

\paragraph*{Finality.}
A checkpoint \(\C\) is \emph{finalized} if it is justified and there exists a supermajority link with source \(\C\) and potentially different targets \(\C_i\) where \(\C_i.c = \C.c + 1\). A chain \(\chain\) is finalized if there exists a finalized checkpoint \(\C\) with \(\chain = \C.\chain\). The checkpoint \(\C = (B_\text{genesis}, 0)\) is finalized by definition.
This finalization rule is formalized by the binary function $\mathsf{F}(\V,\C)$ in \Cref{alg:justification-finalization} which outputs \true{} if and only if checkpoint $\C$ is finalized according to the set of messages $\V$.
{Given a view $\V$, a finalized checkpoint which is maximal in \(\V\) with respect to the previously defined lexicographic ordering is referred to as the \emph{greatest finalized checkpoint} in \(\V\), denoted as \(\GF(\V)\).
In the event of ties, they are broken arbitrarily.
We say that a chain $\chain$ is finalized according to a view $\V$ if and only if $\chain \preceq \GF(\V).\chain$.

\paragraph*{Slashing.}
A validator \(v_i\) is subject to slashing (as introduced in Section~\ref{sec:model}) for two \emph{distinct} \textsc{ffg-vote}s \(\C_1 \to \C_2\) and \(\C_3 \to \C_4\) if either of the following conditions holds: {\(\mathbf{E_1}\) (Double voting)} if \(\C_2.c = \C_4.c\), implying that a validator must not cast distinct \textsc{ffg-vote}s for the same checkpoint slot; or {\(\mathbf{E_2}\) (Surround voting)} if \(\C_3 < \C_1\)
according to the lexicographic ordering on checkpoint and proposal slots, and \(\C_2.c < \C_4.c\), indicating that a validator must not vote using a lower checkpoint as source and must avoid voting within the span of its other votes\footnote{In the implementation, for efficiency purposes, \textsc{ffg-vote}s do not include  a chain, but only its hash.
Then, given that we want to be able to determine whether a  validator committed a slashable offense just by looking at the \textsc{ffg-vote}s that it sent, in the implementation, checkpoints are represented by triples $\mathcal{C}=(H, c, p)$ where $H$ is the hash of a chain $\chain$ and $p=\chain.p$.
A checkpoint is valid only if $\chain.p = p$, and an \textsc{ffg-vote} is valid only if both checkpoints are valid.}.

\begin{figure}[h]
\centering
\includegraphics[width=0.7\textwidth]{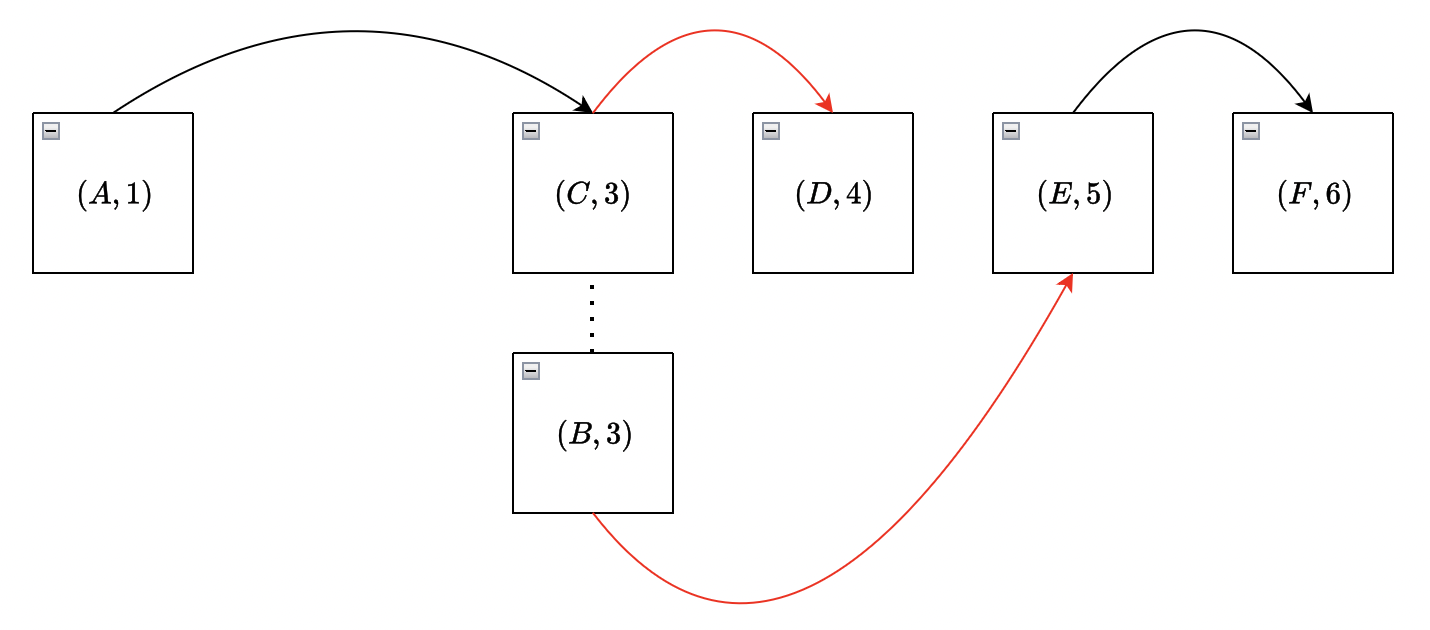}
\caption{The figure illustrates an example of surround voting, where the slashable votes are highlighted in red. In this scenario, from slot 4 to slot 3, the source checkpoint shifts ``backwards" from $(C,3)$ to a lower checkpoint $(B,3)$. This instance of surround voting is unique to this protocol and does not occur in Casper-FFG~\cite{casper}, due to the two source checkpoints sharing the same checkpoint slot.}
\label{fig:your_label}
\end{figure}

\subsection{Analysis}
\label{sec:ffg-analysis}
In this section, first we show that the finalized chain is accountably safe, exactly as done in Casper~\cite{casper}.
Then, we provide a set of conditions that, if satisfied, ensure that chains \textsc{propose}d by honest validators are finalized within two slots.
In \Cref{sec:rlmd-based,sec:ga-based}, we will show that the algorithms presented in each of these sections ensure such properties after $\max(\GST, \GAT) + 4\Delta$.


\begin{lemma}
\label{lem:accountable-jutification}
Assume $f \in [0,n]$\lz{is it misleading to use $f$ here as adversarial resilience?} and let
$\C_\mathsf{f}=(\chain_\mathsf{f},c_\mathsf{f})$ be a finalized checkpoint and $\C_\mathsf{j}=(\chain_\mathsf{j},c_\mathsf{j})$ be a justified checkpoint with $c_\mathsf{j} \geq c_\mathsf{f}$ {according to any two views}.
Either $\chain_\mathsf{j} \succeq \chain_\mathsf{f}$ or at least $\frac{n}{3}$ validators can be detected to have violated either $\mathbf{E_1}$ or $\mathbf{E_2}$.
\end{lemma}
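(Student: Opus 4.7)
My plan is to argue the contrapositive by strong induction on the gap $c_\mathsf{j} - c_\mathsf{f} \geq 0$: assuming $\chain_\mathsf{j} \not\succeq \chain_\mathsf{f}$, I will exhibit at least $\frac{n}{3}$ validators whose cast \textsc{ffg-vote}s can be publicly detected as violating either $\mathbf{E_1}$ or $\mathbf{E_2}$. This mirrors the standard accountable-safety argument for Casper FFG, adapted to the lexicographic (checkpoint-slot, proposal-slot) ordering on checkpoints used here.

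The two premises supply two overlapping supermajorities of \textsc{ffg-vote}s. From $\mathsf{F}(\C_\mathsf{f},\cdot)=\true$ I extract a supermajority $V_\mathsf{f}'$ of validators each of whom cast an \textsc{ffg-vote} $\C_\mathsf{f}\to\C_t$ with $\C_t.c=c_\mathsf{f}+1$. From $\mathsf{J}(\C_\mathsf{j},\cdot)=\true$ I extract a supermajority $V_\mathsf{j}$ of validators each of whom cast an \textsc{ffg-vote} $\C_s\to\C_{t'}$ with $\C_s.\chain\preceq\chain_\mathsf{j}\preceq\C_{t'}.\chain$ and $\C_{t'}.c=c_\mathsf{j}$, where furthermore $\C_s$ is itself already justified (so the induction can be applied to it). Since $|V_\mathsf{f}'|,|V_\mathsf{j}|\geq\frac{2n}{3}$, the intersection satisfies $|V_\mathsf{f}'\cap V_\mathsf{j}|\geq\frac{n}{3}$.

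For the base case $c_\mathsf{j}=c_\mathsf{f}$ I replace $V_\mathsf{f}'$ by the justification supermajority $V_\mathsf{f}$ of $\C_\mathsf{f}$ itself (all with target slot $c_\mathsf{f}$ and target chain $\succeq\chain_\mathsf{f}$); if $\chain_\mathsf{j}$ and $\chain_\mathsf{f}$ conflict then no single chain can be $\succeq$ both, so each of the $\geq\frac{n}{3}$ validators in $V_\mathsf{f}\cap V_\mathsf{j}$ must have cast two distinct \textsc{ffg-vote}s sharing target slot $c_\mathsf{f}$, yielding $\mathbf{E_1}$. For the inductive step $c_\mathsf{j}>c_\mathsf{f}$, each $v\in V_\mathsf{f}'\cap V_\mathsf{j}$ owns two votes $A=\C_\mathsf{f}\to\C_t$ (target slot $c_\mathsf{f}+1$) and $B=\C_s\to\C_{t'}$ (target slot $c_\mathsf{j}$); a case split on the lex order between $\C_s$ and $\C_\mathsf{f}$ then gives: (i) if $c_\mathsf{j}=c_\mathsf{f}+1$, the two target slots coincide, and $\C_s.\chain\preceq\chain_\mathsf{j}\not\succeq\chain_\mathsf{f}$ forces $\C_s\neq\C_\mathsf{f}$, so the votes are distinct and $\mathbf{E_1}$ fires; (ii) if $c_\mathsf{j}>c_\mathsf{f}+1$ and $\C_s<\C_\mathsf{f}$ lexicographically, then $B$ strictly surrounds $A$ in both coordinates, triggering $\mathbf{E_2}$; (iii) otherwise $\C_s\geq\C_\mathsf{f}$ with $c_s\in[c_\mathsf{f},c_\mathsf{j})$, and applying the induction hypothesis to the pair $(\C_\mathsf{f},\C_s)$ either directly produces the slashable set or forces $\C_s.\chain\succeq\chain_\mathsf{f}$, so that $\chain_\mathsf{j}\succeq\C_s.\chain\succeq\chain_\mathsf{f}$, contradicting the standing assumption.

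The technical heart of the argument — and the step I expect to be most delicate — is the subcase in which $\chain_\mathsf{j}$ is a \emph{strict prefix} of $\chain_\mathsf{f}$ rather than conflicting with it, which can surface both in the base case and whenever the recursion lands on $c_s=c_\mathsf{f}$: there, a single \textsc{ffg-vote} whose source chain is $\preceq\chain_\mathsf{j}$ and whose target chain is $\succeq\chain_\mathsf{f}$ can simultaneously witness both justifications, so the bare intersection does not immediately expose double voting, and the argument must exploit the lex ordering together with the slot-$(c_\mathsf{f}+1)$ finalization supermajority to force a surround. Care must also be taken to keep the $\frac{n}{3}$ lower bound on the identified slashable set intact across the recursion rather than letting it erode at each step, which is standard but must be tracked carefully because the recursive call reuses the same finalized $\C_\mathsf{f}$.
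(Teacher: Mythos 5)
Your argument is essentially the paper's: the paper proves the same statement by taking the \emph{smallest} slot $c'_\mathsf{j}>c_\mathsf{f}$ carrying a justified checkpoint whose chain does not extend $\chain_\mathsf{f}$, and your strong induction on $c_\mathsf{j}-c_\mathsf{f}$ is just the recursive phrasing of that minimal-counterexample argument; your cases (i), (ii), (iii) correspond exactly to the paper's Case 1 ($c'_\mathsf{j}=c_\mathsf{f}+1$, double vote $\mathbf{E_1}$), Case 2.2/2.3 (source lexicographically below $\C_\mathsf{f}$, surround $\mathbf{E_2}$), and Case 2.1 (source at a higher slot, handled by minimality where you invoke the induction hypothesis). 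Your worry about the $\frac{n}{3}$ bound eroding is unfounded in either formulation: whichever level of the recursion produces the slashable set produces it from a single intersection of two supermajorities, so the bound is never degraded.

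On the subcase you flag as delicate: your own recursion actually never hits it, because in case (iii) the lex ordering with $c_s=c_\mathsf{f}$ and $\C_s\geq\C_\mathsf{f}$ forces $\C_s.\chain.p\geq\chain_\mathsf{f}.p$, which rules out $\C_s.\chain$ being a strict prefix of $\chain_\mathsf{f}$; and when instead $\C_s<\C_\mathsf{f}$ (in particular a strict prefix at slot $c_\mathsf{f}$), your case (ii) surround is exactly the paper's resolution: the finalization vote $\C_\mathsf{f}\to(\cdot,c_\mathsf{f}+1)$ is surrounded by $\C_s\to(\cdot,c'_\mathsf{j})$ with $c'_\mathsf{j}>c_\mathsf{f}+1$, triggering $\mathbf{E_2}$. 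The only instance that genuinely resists is the top-level base case with $c_\mathsf{j}=c_\mathsf{f}$ and $\chain_\mathsf{j}$ a \emph{strict prefix} of $\chain_\mathsf{f}$: there no surround can be forced, because any honest vote whose target chain extends $\chain_\mathsf{f}$ also justifies every prefix checkpoint at slot $c_\mathsf{f}$, so such a $\C_\mathsf{j}$ can be justified with no equivocation at all. You should not try to prove slashability there; note that the paper's proof quietly sidesteps the same corner (it asserts a justified checkpoint at a slot strictly above $c_\mathsf{f}$ with chain not extending $\chain_\mathsf{f}$ exists, which fails only in that instance), and the corner is immaterial for how the lemma is used (Lemma~\ref{lem:accountable-safety} applies it to conflicting chains, where your base case's $\mathbf{E_1}$ argument, identical to the paper's opening paragraph, suffices). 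Apart from making that restriction explicit rather than promising a surround, your proof goes through and matches the paper's.
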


\begin{proof}
{First, we show} that no checkpoint $\C'=(\chain',c_\mathsf{f})$, with $\chain'$ conflicting with $\chain_\mathsf{f}$, {can ever be justified.}
If that is not the case, clearly $\geq \frac{n}{3}$ validators are slashable for violating $\mathbf{E_1}$: For the justification of \((\chain_\mathsf{f},c_\mathsf{f})\), {it is required to have a supermajority of \textsc{ffg-vote}s with the chain of the target checkpoint being an extension of $\chain_\mathsf{f}$}. Similarly, the justification of \((\chain',c_\mathsf{f})\) {requires to have a supermajority of \textsc{ffg-vote}s with the chain of the target checkpoint being an extension of $\chain'$}. Given that {any descendant of $\chain_\mathsf{f}$ cannot be a descendant of $\chain'$, and vice versa,} and that we need a supermajority set of links for justification, the intersection of the sets of voters contributing to the justification of $\C$ and $\C'$ include at least $\frac{n}{3}$ validators {which have sent two \textsc{ffg-vote}s $\calS_\mathsf{f} \to \T_\mathsf{f}$ and $\calS' \to \T'$ with $\T_\mathsf{f}.c = \T'.c = c_\mathsf{f}$ and $\T_\mathsf{f}.\chain \neq \T'.\chain$ thereby violating condition $\mathbf{E_1}$}.

Now, by contradiction, assume that $\chain_\mathsf{j} \not\succeq \chain_\mathsf{f}$ and that there does not exist a set of at least $\frac{n}{3}$ validators that can be detected to have violated either $\mathbf{E_1}$ or $\mathbf{E_2}$.
Let \(c'_\mathsf{j} > c_\mathsf{f}\) be the smallest slot for which a checkpoint \(\C'_\mathsf{j} = (\chain'_\mathsf{j}, c'_\mathsf{j})\) is justified with $\chain'_\mathsf{j} \not\succeq \chain_\mathsf{f}$, \ie, either $\chain'_\mathsf{j}$ conflicts with $\chain_\mathsf{f}$ or $\chain'_\mathsf{j}$ is a strict prefix of $\chain_\mathsf{f}$. Given our assumptions we know that one such a checkpoint exists.

Let $(A_i,c_i) \to (B_i,c'_\mathsf{j})$ and $(\chain_\mathsf{f},c_\mathsf{f}) \to (C,c_\mathsf{f}+1)$ be the \textsc{ffg-vote}s involved in the justification of $(\chain'_\mathsf{j},c'_\mathsf{j})$ and in the finalization of $\chain_\mathsf{f}$, respectively, cast by a validator~$v_i$.
By definition of justification we have that~$A_i \preceq \chain'_\mathsf{j} \preceq B_i$. We observe two cases:

\begin{description}
    \item[Case 1: $c'_\mathsf{j}=c_\mathsf{f}+1$.] If $\chain'_\mathsf{j}$ conflicts with $\chain_\mathsf{f}$, then $A_i \preceq \chain'_\mathsf{j}$ implies $A_i \neq \chain_\mathsf{f}$, and thus the two votes are different. Conversely, if $\chain'_\mathsf{j}$ is a strict prefix of $\chain_\mathsf{f}$, then $A_i \preceq \chain'_\mathsf{j} {\prec} \chain_\mathsf{f}$, and thus $A_i \neq \chain_\mathsf{f}${, which implies that the two votes are different in this case as well}. {Hence, in} both cases, validator~$v_i$ {violates condition $\mathbf{E_1}$ and therefore} is slashable.

    \item[Case 2: $c'_\mathsf{j}>c_\mathsf{f}+1$.] We have to consider three cases:

\begin{description}
    \item[Case 2.1: $c_i > c_\mathsf{f}$.] 
    {By the definition of justification, if the checkpoint $(\chain'_\mathsf{j},c'_\mathsf{j})$ is justified, then the checkpoint $(A_i, c_i)$ must also be justified.} However, {given that $c_i < c'_\mathsf{j}$,} under the assumption of the minimality of $c'_\mathsf{j}$, we have $\chain_\mathsf{f} \preceq A_i$. Given that $A_i \preceq \chain'_\mathsf{j}$, this leads to the contradiction $\chain_\mathsf{f} \preceq \chain'_\mathsf{j}$.

    \item[Case 2.2: $c_i = c_\mathsf{f}$.] 
    {As argued in the proof of the case above, the checkpoint $(A_i, c_i)$ is justified.} 
    {As proved at the beginning of this proof, there cannot exist any justified checkpoint $\C'=(\chain',c_\mathsf{f})$, with $\chain'$ conflicting with $\chain_\mathsf{f}$.}
    {Consequently, either $\chain_\mathsf{f} \preceq A_i$ holds, which implies either $\chain_\mathsf{f}  \preceq \chain'_\mathsf{j}$,  contradicting our assumption that $\chain_\mathsf{f} \npreceq \chain'_\mathsf{j}$, or $A_i \prec \chain_\mathsf{f}$ which implies  $A_i.p < \chain_\mathsf{f}.p$ from which we have that $(c_i,A_i.p)=(c_\mathsf{f},A_i.p) < (c_\mathsf{f},\chain_\mathsf{f}.p)$ and $c_\mathsf{f}+1 < c'_\mathsf{j}$, which violates condition $\mathbf{E_2}$.}

    \item[Case 2.3: $c_i < c_\mathsf{f}$.] 
    Considering $c_i < c_\mathsf{f} < c_\mathsf{f}+1 < c'_\mathsf{j}$, this situation constitutes a violation of $\mathbf{E_2}$ due to the existence of surrounding voting.
\end{description}
\end{description}
{Given that justifications require supermajority link, the intersection of the set of voters involved in the finalization of $\C_\mathsf{f}$ and justification of $\C'_\mathsf{j}$ includes at least $\frac{n}{3}$ validators which, by the reasoning outlined above, would violated either $\mathbf{E_1}$ or $\mathbf{E_2}$.}
This leads to contradicting our assumptions which concludes the proof.
\end{proof}

\begin{lemma}
\label{lem:accountable-safety}
Assume $f \in [0,n]$.
If two conflicting chains are finalized according to any two respective views, then at least $\frac{n}{3}$ validators can be detected to have violated either $\mathbf{E_1}$ or $\mathbf{E_2}$.
\end{lemma}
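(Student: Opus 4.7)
The plan is to derive this as a near-immediate corollary of \Cref{lem:accountable-jutification}, by recognizing that every finalized checkpoint is also a justified checkpoint (this is built directly into the definition of $\mathsf{F}$ in \Cref{alg:justification-finalization}, which requires $\mathsf{J}(\C,\V)$ among its conjuncts) and then handling a trivial asymmetry in the slot order.

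Concretely, let the two conflicting finalized chains be $\chain_\mathsf{f}$ and $\chain'_\mathsf{f}$, with associated finalized checkpoints $\C_\mathsf{f} = (\chain_\mathsf{f}, c_\mathsf{f})$ and $\C'_\mathsf{f} = (\chain'_\mathsf{f}, c'_\mathsf{f})$ (according to the respective views). Without loss of generality, assume $c_\mathsf{f} \leq c'_\mathsf{f}$; if not, swap the roles of the two checkpoints. Since $\C'_\mathsf{f}$ is finalized, it is in particular justified, so the hypotheses of \Cref{lem:accountable-jutification} are satisfied by taking $\C_\mathsf{j} := \C'_\mathsf{f}$ with $c_\mathsf{j} = c'_\mathsf{f} \geq c_\mathsf{f}$.

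Applying \Cref{lem:accountable-jutification} yields that either $\chain'_\mathsf{f} \succeq \chain_\mathsf{f}$, or at least $\frac{n}{3}$ validators can be detected to have violated $\mathbf{E_1}$ or $\mathbf{E_2}$. By the assumption that $\chain_\mathsf{f}$ and $\chain'_\mathsf{f}$ conflict, neither is an ancestor of the other, so in particular $\chain'_\mathsf{f} \not\succeq \chain_\mathsf{f}$. Hence the slashing alternative must hold, which is exactly the conclusion of the lemma.

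There is essentially no obstacle here: the only subtlety is making sure to invoke \Cref{lem:accountable-jutification} with the finalized checkpoint of lower checkpoint slot playing the role of $\C_\mathsf{f}$ in the earlier lemma, and to explicitly observe that finalization implies justification so that the other conflicting finalized checkpoint qualifies as the ``justified'' checkpoint in the hypothesis. No new counting arguments or case analyses on $\mathbf{E_1}$ versus $\mathbf{E_2}$ are needed beyond those already carried out inside the proof of \Cref{lem:accountable-jutification}.
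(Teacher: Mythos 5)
Your proof takes the same route as the paper's: invoke \Cref{lem:accountable-jutification} with the finalized checkpoint of larger checkpoint slot playing the role of the justified checkpoint (using the fact that finalization implies justification, which is indeed baked into $\mathsf{F}$ in \Cref{alg:justification-finalization}), and then rule out the non-slashing alternative by the conflict hypothesis. That is exactly the paper's argument.

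The only small gap is at the very start. You write ``let the two conflicting finalized chains be $\chain_\mathsf{f}$ and $\chain'_\mathsf{f}$, with associated finalized checkpoints $\C_\mathsf{f} = (\chain_\mathsf{f}, c_\mathsf{f})$ and $\C'_\mathsf{f} = (\chain'_\mathsf{f}, c'_\mathsf{f})$,'' i.e.\ you assume that each finalized chain is literally the chain of some finalized checkpoint. Under the definition used in the statement (``finalized according to a view $\V$'' means $\chain \preceq \GF(\V).\chain$), the finalized chain could be a \emph{strict} prefix of $\GF(\V).\chain$, in which case no finalized checkpoint has chain exactly $\chain_\mathsf{f}$. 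The paper handles this with a one-line lifting step: from conflicting finalized chains $\chain_1, \chain_2$ it extracts finalized checkpoints $\C_1, \C_2$ with $\C_1.\chain \succeq \chain_1$ and $\C_2.\chain \succeq \chain_2$, and observes that $\C_1.\chain$ and $\C_2.\chain$ must also conflict (otherwise $\chain_1$ and $\chain_2$ would both be prefixes of the longer of the two and hence comparable). You should add this reduction; once it is in place, the rest of your argument goes through verbatim.
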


\begin{proof}
Let $\chain_1$ and $\chain_2$ be two conflicting chains according to view $\V_1$ and $\V_2$ respectively.
By the definition of finalized chains, this implies that there exists two checkpoints $\C_1 = (\chain'_1, c_1)$ and $\C_2 = (\chain'_2, c_2)$ that are finalized according to view $\V_1$ and $\V_2$ respectively, such that $\chain'_1 \succeq \chain_1$ conflicts with $\chain'_2 \succeq \chain_2$.
Assume without loss of generality that $c_2 \geq c_1$.
Given that finalization implies justification, we can apply \Cref{lem:accountable-jutification} to conclude that if $\chain'_2 \not\succeq \chain'_1$, then at least $\frac{n}{3}$ validators can be detected to have violated either $\mathbf{E_1}$ or $\mathbf{E_2}$.
\end{proof}

To complete the proof of Accountable Safety, we also need to show that up until a validator follows the protocol, it never gets slashed.
This property cannot be concluded by the finalization and justifications rule alone as it depends on the algorithm employed to cast \textsc{ffg-vote}s.
Then, to keep the treatment of Accountable Safety as decoupled as possible from this, below we detail a constraint 
to be guaranteed algorithm employed to cast \textsc{ffg-vote}s  that then we prove to be sufficient to ensure that honest validators never get slashed.

\begin{constraint} \label{prop:never-slashed}
  The algorithm employed to cast \textsc{ffg-vote}s must guarantee the following conditions.
  \begin{enumerate}[label=\ref*{prop:never-slashed}.\arabic*.,ref=\ref*{prop:never-slashed}.\arabic*]
  \item At most one \textsc{ffg-vote} is sent in a slot. 
  \customlabel[constraint]{prop:never-slashed-1}{\theenumi}
  \item If $\T$ is the target checkpoint in the \textsc{ffg-vote} sent in slot $t$, then $\T.c = t$.
  \customlabel[constraint]{prop:never-slashed-2}{\theenumi}
  \item Let $\calS$ and $\mathcal{S'}$ be the two source checkpoints  in the \textsc{ffg-vote}s sent in slot $t$ and $t'$. If $t \leq t'$, then $\calS \leq \mathcal{S'}$.\customlabel[indhypothesis]{ind:1}{\theindhypothesis.\theenumi}
  \customlabel[constraint]{prop:never-slashed-3}{\theenumi}
  \end{enumerate}
\end{constraint}
  
\begin{lemma} \label{lem:never-slashed-generalized}
  If \Cref{prop:never-slashed} holds, then honest validators are never slashed. 
\end{lemma}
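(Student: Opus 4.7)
The plan is a short case analysis showing directly that the three conditions of \Cref{prop:never-slashed} rule out each slashable offense. Let $v_i$ be an honest validator and suppose, for contradiction, that $v_i$ sent two distinct \textsc{ffg-vote}s $\C_1 \to \C_2$ and $\C_3 \to \C_4$ violating $\mathbf{E_1}$ or $\mathbf{E_2}$. Let $t$ and $t'$ be the slots in which these \textsc{ffg-vote}s were cast respectively, and assume without loss of generality that $t \leq t'$.

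First, I would handle the case $t = t'$: by \Cref{prop:never-slashed-1}, $v_i$ sends at most one \textsc{ffg-vote} per slot, hence the two votes must coincide, contradicting distinctness. Therefore $t < t'$, and by \Cref{prop:never-slashed-2} we have $\C_2.c = t < t' = \C_4.c$. This already rules out $\mathbf{E_1}$, since $\mathbf{E_1}$ requires $\C_2.c = \C_4.c$.

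For $\mathbf{E_2}$, recall the offense requires both $\C_3 < \C_1$ (in the lexicographic order on checkpoint and proposal slots) and $\C_2.c < \C_4.c$. The second inequality does hold by the observation above, so the only way to prevent $\mathbf{E_2}$ is via the first. But \Cref{prop:never-slashed-3}, applied with $t \leq t'$, yields $\C_1 \leq \C_3$ — the source of the later \textsc{ffg-vote} is at least as large as that of the earlier one. This contradicts $\C_3 < \C_1$, so $\mathbf{E_2}$ also cannot be violated.

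There is no real obstacle here; the argument is essentially a direct translation of the three conditions into the negations of the two slashing predicates. The only subtlety worth being careful about is that the lexicographic order used for source comparison in \Cref{prop:never-slashed-3} is the same order in which $\mathbf{E_2}$ is stated, so that $\C_1 \leq \C_3$ genuinely negates $\C_3 < \C_1$; this is just the total pre-order defined earlier in the common-notions section.
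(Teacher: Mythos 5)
Your proof is correct and follows essentially the same route as the paper's: Constraints~\ref{prop:never-slashed-1} and \ref{prop:never-slashed-2} force the two target checkpoint slots to differ (ruling out $\mathbf{E_1}$), and after ordering the votes by slot, Constraint~\ref{prop:never-slashed-3} gives $\C_1 \leq \C_3$, which negates the surround condition of $\mathbf{E_2}$. The only cosmetic difference is that you make the $t = t'$ case and the WLOG on time explicit, whereas the paper phrases the same step as $\C_2.c \neq \C_4.c$ and a WLOG on $\C_2.c < \C_4.c$; the substance is identical.
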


\begin{proof}
  Take any round $r$ and any validator $v_i$ honest in round $r$.
  Take also any two different \textsc{ffg-vote}s $\C_1 \rightarrow \C_2$ and $\C_3 \rightarrow \C_4$ that $v_i$ has sent by round $r$.
  If no such \textsc{ffg-vote}s exists, then clearly $v_i$ cannot be slashed.
  By Constraints~\ref{prop:never-slashed-1} and \ref{prop:never-slashed-2}, we have that $\C_2.c \neq \C_4.c$.
  Hence, $\textbf{E}_\textbf{1}$ is not violated.
  Then, without loss of generality, assume $\C_2.c < \C_4.c$.
  By \Cref{prop:never-slashed-3}, we know that $\C_1  \leq  \C_3$.
  Given that $\C_2.c < \C_4.c$, $\textbf{E}_\textbf{2}$ cannot be violated.
  Therefore, until $v_i$ is honest, it cannot be slashed.
\end{proof}

Again with the aim to keep this analysis as general as possible, we define a second Constraint that is sufficient to ensure Accountable Safety.
\begin{constraint}\label{prop:chfin}
  {Let $\Vglobal^r$ be the global view at time $r$, \ie, the set of all messages sent up to time $r$,
  The chain $\chainfin^r_i$ output by a validator $v_i$ honest in round $r$ is any finalized chain according to view $\Vglobal^r$.}
\end{constraint}

\begin{theorem}[Accountable Safety]
  \label{thm:accountable-safety}
  Let $f \in [0,n]$.
  If \Cref{prop:never-slashed,prop:chfin} hold, then the finalized chain $\chainfin$ is {$\frac{n}{3}$-accountable}.
\end{theorem}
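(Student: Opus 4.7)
The plan is to reduce the theorem to a direct combination of \Cref{lem:accountable-safety,lem:never-slashed-generalized}, together with \Cref{prop:chfin}. Accountable Safety (as per \Cref{def:acc-safety}) has two parts: (i) upon a safety violation we must identify at least $\frac{n}{3}$ adversarial validators via cryptographic evidence, and (ii) no honest validator who followed the protocol may be accused. The strategy is to handle these two parts separately, using \Cref{lem:accountable-safety} for the identification side and \Cref{lem:never-slashed-generalized} for the non-incrimination side, with \Cref{prop:chfin} serving as the bridge between the finalized chain produced by the protocol and the notion of a chain finalized according to a view used by \Cref{lem:accountable-safety}.

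First I would instantiate the hypothesis of a safety violation: assume there exist honest validators $v_i, v_j$ at rounds $r, r'$ such that $\chainfin^r_i$ and $\chainfin^{r'}_j$ conflict. Invoking \Cref{prop:chfin}, both chains are finalized according to the respective global views $\Vglobal^r$ and $\Vglobal^{r'}$. Since finalization is monotone in the view (a supermajority set of \textsc{ffg-vote}s witnessing finalization of some checkpoint in $\Vglobal^r$ is also present in any superset of messages ever sent), this reduces the situation precisely to the hypothesis of \Cref{lem:accountable-safety}: two conflicting finalized chains. Applying that lemma yields a set $S$ of at least $\frac{n}{3}$ validators each of whom has cast a pair of \textsc{ffg-vote}s violating $\mathbf{E_1}$ or $\mathbf{E_2}$. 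Because \textsc{ffg-vote}s are signed messages, these two \textsc{ffg-vote}s form a cryptographic proof of slashable behavior that cannot be forged for a validator who did not actually send them.

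Next I would establish non-incrimination: by \Cref{lem:never-slashed-generalized}, under \Cref{prop:never-slashed} no honest validator ever sends a pair of \textsc{ffg-vote}s that violate $\mathbf{E_1}$ or $\mathbf{E_2}$. Consequently, every validator in $S$ must be adversarial, so the proof generated from their conflicting votes accuses only adversarial participants. Combining the two directions, the $\frac{n}{3}$ validators identified from the finalization evidence satisfy both requirements of \Cref{def:acc-safety}.

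The step I expect to be the only non-routine one is the view-monotonicity argument connecting \Cref{prop:chfin} to \Cref{lem:accountable-safety}: one must be careful that a chain finalized in a local honest view is also finalized according to the global view, so that the lemma's assumption ``finalized according to any two respective views'' is actually met. This follows immediately from the fact that $\chainfin^r_i$ is defined by \Cref{prop:chfin} directly in terms of the global view $\Vglobal^r$, so there is essentially nothing to argue; everything else is a mechanical chaining of the two lemmas.
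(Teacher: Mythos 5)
Your proposal is correct and follows essentially the same route as the paper: the paper's proof is simply "Follows from \Cref{lem:accountable-safety,lem:never-slashed-generalized}", and your argument is a faithful unpacking of that—using \Cref{prop:chfin} to place the honest validators' finalized chains within the hypotheses of \Cref{lem:accountable-safety}, and \Cref{lem:never-slashed-generalized} to guarantee that only adversarial validators are accused. Nothing further is needed.
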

\begin{proof}
  Follows from \Cref{lem:accountable-safety,lem:never-slashed-generalized}.
\end{proof}

Let us now move to discuss Liveness.
Given that also Liveness depends on the protocol employed to propose chains and cast \textsc{ffg-vote}s, like we did above, we provide a constraint to be guaranteed algorithm employed to cast \textsc{ffg-vote}s  that then we prove to be sufficient to guarantee liveness of the FFG component.

\begin{constraint}\label{prop:succ-for-ffg-liveness}
  The algorithm employed to propose chains and cast \textsc{ffg-vote}s must guarantee the following conditions (assuming $f<\frac{n}{3}$).
  \begin{enumerate}[label=\ref*{prop:succ-for-ffg-liveness}.\arabic*.,ref=\ref*{prop:succ-for-ffg-liveness}.\arabic*]
    \item At any slot $t$, if an always
    honest validator $v_i$ casts an \textsc{ffg-vote} $\calS_i \to \T_i$ during round $r$, then:
    \begin{enumerate}[label*=\arabic*.,ref=\theenumi.\arabic*]
      \item\customlabel[constraint]{prop:succ-for-ffg-liveness-1-1}{\theenumii} the vote is valid;
      \item\customlabel[constraint]{prop:succ-for-ffg-liveness-1-2}{\theenumii} There exists a set of messages $\V^{\mathsf{FFGvote},t}_i$ such that $\calS_i = \GJ(\V^{\mathsf{FFGvote},t}_i)$, i.e., the source checkpoint corresponds to the greatest justified checkpoint according to some set of messages $V^{\mathsf{vote},t}_i \subseteq \V^r_i$ in the view of validator $v_i$ at round $r$; and
      \item\customlabel[constraint]{prop:succ-for-ffg-liveness-1-3}{\theenumii} $\T_i.c = t$.
    \end{enumerate}
    \item If a \textsc{propose} message for chain $\chain_p$ is sent in a round $\proposing{t}{\geq \max(\GST,\GAT)+\Delta}$ by a proposer honest in that round, then:
    \begin{enumerate}[label*=\arabic*.,ref=\theenumi.\arabic*]
      \item $chain_p$ includes all the transactions in $\txpool^\proposing{t}$;\customlabel[constraint]{prop:succ-for-ffg-liveness-2-1}{\theenumii}
      \item in slot $t$, any always-honest validator $v_i$ casts a valid \textsc{ffg-vote} $\calS \to \T_i$ with $\T_i.\chain \preceq \chain_p$, \ie, all always-honest validators send \textsc{ffg-vote}s with the same source checkpoint and with a target checkpoint (potentially different for each validator) prefix of chain $\chain_p$;\customlabel[constraint]{prop:succ-for-ffg-liveness-2-2}{\theenumii}
      \item in slot $t+1$:\customlabel[constraint]{prop:succ-for-ffg-liveness-2-3}{\theenumii}
      \begin{enumerate}[label*=\arabic*.,ref=\theenumii.\arabic*]
        \item\customlabel[constraint]{prop:succ-for-ffg-liveness-2-3-2}{\theenumiii} for any always-honest validator $v_i$, $\V^{\mathsf{FFGvote},t+1}_i$ includes $\V^{\mathsf{FFGvote},t}_j$ of any always-honest validator $v_j$, and all the \textsc{ffg-vote} messages sent by always-honest validators in slot $t$; and
        \item\customlabel[constraint]{prop:succ-for-ffg-liveness-2-3-1}{\theenumiii}  any always-honest validator $v_i$ sends an \textsc{ffg-vote} $\calS_i \to \T_i$ with $\T_i.\chain = \chain_p$.
      \end{enumerate}
      \item in slot $t+2$,
      \begin{enumerate}[label*=\arabic*.,ref=\theenumii.\arabic*]
        \item\customlabel[constraint]{prop:succ-for-ffg-liveness-2-4-1}{\theenumiii} 
        for any always-honest validator $v_i$, $\V^{\mathsf{FFGvote},t+2}_i$ includes $\V^{\mathsf{FFGvote},t+1}_j$ of any always-honest validator $v_j$, and all the \textsc{ffg-vote} messages sent by always-honest validators in slot $t+1$; and
        \item\customlabel[constraint]{prop:succ-for-ffg-liveness-2-4-2}{\theenumiii} for any validator $v_i$ honest in round $4\Delta (t+2) + 2 \Delta$, (i) $\V^{4\Delta (t+2) + 2 \Delta}_i$ includes all the \textsc{ffg-vote}s cast by always-honest validators in slot $t+2$ and (ii) $\chainfin^{4\Delta (t+2) + 2 \Delta}_i = \max(\{\chain \colon \chain \preceq \GF(\V^{4\Delta (t+2) + 2 \Delta}_i).\chain \land \chain.p \preceq \chain_p.p\})$.
      \end{enumerate}
    \end{enumerate}
  \end{enumerate}
\end{constraint}

\begin{theorem}[Liveness]
  \label{thm:liveness-ffg-general}
  Assume that \Cref{prop:succ-for-ffg-liveness} holds (and $f< \frac{n}{3}$).
  Let $v_p$ be any validator honest by the time it \textsc{propose}s chain $\chain_p$ in a slot $t$ such that $\proposing{t} \geq \max(\GST, \GAT) + 4\Delta$.
  Then, chain $\chain_p$ is justified at slot~$t+1$, and finalized at slot~$t+2$.
  In particular, $\chain_p \preceq \chainfin^{4\Delta (t+2) + 2 \Delta}_i${ and $\txpool^\proposing{t} \subseteq \chainfin^{4\Delta (t+2) + 2 \Delta}_i$} for any validator $v_i \in H_{4\Delta (t+2) + 2 \Delta}$.
\end{theorem}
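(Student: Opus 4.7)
The plan is to chain together the sub-clauses of \Cref{prop:succ-for-ffg-liveness} with a standard counting argument using $f < \frac{n}{3}$ and, at the very end, \Cref{lem:accountable-safety}, in order to first justify $(\chain_p, t+1)$, then finalize it via a supermajority of slot-$(t+2)$ ffg-votes rooted at this checkpoint, and finally conclude that it coincides with the greatest finalized checkpoint in any honest view at round $4\Delta(t+2)+2\Delta$. The main obstacle will be showing that the greatest justified checkpoint that each always-honest validator sees when casting in slot $t+2$ is exactly $(\chain_p, t+1)$, rather than something strictly greater; this is what forces them to use $(\chain_p, t+1)$ as the source of their slot-$(t+2)$ ffg-votes, which is the essential ingredient for finalization.

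First I would apply \Cref{prop:succ-for-ffg-liveness-2-3-1}: every always-honest validator casts, in slot $t+1$, an ffg-vote $\calS_i \to \T_i$ with $\T_i.\chain = \chain_p$, from a justified source $\calS_i$ (by \Cref{prop:succ-for-ffg-liveness-1-2}), with the link being valid (\Cref{prop:succ-for-ffg-liveness-1-1}) and $\T_i.c = t+1$ (\Cref{prop:succ-for-ffg-liveness-1-3}). Since $f < \frac{n}{3}$ there are strictly more than $\frac{2n}{3}$ always-honest validators, so these links form a supermajority set whose target slot is $t+1$ and whose target chains all extend $\chain_p$. By \Cref{prop:succ-for-ffg-liveness-2-3-2}, all of these links (together with the prior justifications of their sources) lie in $\V^{\mathsf{FFGvote},t+2}_i$ of every always-honest $v_i$, so $(\chain_p, t+1)$ is justified in that view.

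Next I would argue $\GJ(\V^{\mathsf{FFGvote},t+2}_i) = (\chain_p, t+1)$. A strictly greater justified checkpoint could only be either (a) some $(\chain', t+1)$ with $\chain'.p > \chain_p.p$, or (b) some checkpoint with $c \geq t+2$. Case (a) would need a supermajority of slot-$(t+1)$ ffg-votes whose target chains extend $\chain'$; by \Cref{prop:never-slashed-1} each always-honest validator casts at most one slot-$(t+1)$ ffg-vote, and by \Cref{prop:succ-for-ffg-liveness-2-3-1} that vote targets $\chain_p \not\succeq \chain'$, so only the fewer-than-$\frac{n}{3}$ non-always-honest validators can contribute, well below $\frac{2n}{3}$. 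Case (b) requires a supermajority of ffg-votes with target slot $\geq t+2$; by \Cref{prop:succ-for-ffg-liveness-1-3} always-honest validators do not cast such votes until slot $t+2$ itself and by \Cref{prop:never-slashed-1} at most one each, so again the adversary alone cannot reach $\frac{2n}{3}$. Hence \Cref{prop:succ-for-ffg-liveness-1-2,prop:succ-for-ffg-liveness-1-3} force each always-honest validator to cast in slot $t+2$ an ffg-vote $(\chain_p, t+1) \to \T_i$ with $\T_i.c = t+2$, producing the supermajority of links required by the finalization rule applied to $(\chain_p, t+1)$.

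Finally, fix any $v_i \in H_{4\Delta(t+2)+2\Delta}$. By \Cref{prop:succ-for-ffg-liveness-2-4-2}(i), $\V^{4\Delta(t+2)+2\Delta}_i$ contains all the slot-$(t+2)$ ffg-votes just described, so $(\chain_p, t+1)$ is finalized according to this view. To conclude $\chain_p \preceq \GF(\V^{4\Delta(t+2)+2\Delta}_i).\chain$, I invoke \Cref{lem:accountable-safety}: with $f < \frac{n}{3}$, two conflicting finalized chains would expose at least $\frac{n}{3}$ slashable validators, which is impossible, so $\GF.\chain$ must be comparable with $\chain_p$; and by repeating the counting argument above, no checkpoint with $c \geq t+2$ can yet be finalized in this view, while any justified $(\chain'', t+1) \neq (\chain_p, t+1)$ is excluded as in step (a). Therefore $\GF = (\chain_p, t+1)$ and $\GF.\chain \succeq \chain_p$. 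Applying the formula in \Cref{prop:succ-for-ffg-liveness-2-4-2}(ii), $\chainfin^{4\Delta(t+2)+2\Delta}_i$ is the longest prefix of $\GF.\chain$ of length at most $\chain_p.p$, which is exactly $\chain_p$. The transaction-inclusion statement then follows immediately from \Cref{prop:succ-for-ffg-liveness-2-1}, which guarantees $\txpool^{\proposing{t}} \subseteq \chain_p \preceq \chainfin^{4\Delta(t+2)+2\Delta}_i$.
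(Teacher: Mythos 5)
Your proposal follows essentially the same route as the paper's proof: use \Cref{prop:succ-for-ffg-liveness-2-3-1,prop:succ-for-ffg-liveness-2-3-2,prop:succ-for-ffg-liveness-1-2,prop:succ-for-ffg-liveness-1-3} to pin down the slot-$(t+1)$ votes and hence justify $(\chain_p,t+1)$; then pin down the source of the slot-$(t+2)$ votes via \Cref{prop:succ-for-ffg-liveness-1-2}; then use \Cref{prop:succ-for-ffg-liveness-2-4-1,prop:succ-for-ffg-liveness-2-4-2} to get finalization and the $\chainfin$ conclusion. You are more explicit than the paper at the step where one must argue $\GJ(\V^{\mathsf{FFGvote},t+2}_i)=(\chain_p,t+1)$ rather than something strictly greater (the paper simply asserts it), and at the final step where one must show $\chain_p\preceq\GF(\cdot).\chain$ (the paper compresses this into ``Because $f<\frac{n}{3}$'').

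Two small points worth fixing. First, the constraint you want for ``all slot-$(t+1)$ links and their source justifications are in $\V^{\mathsf{FFGvote},t+2}_i$'' is \Cref{prop:succ-for-ffg-liveness-2-4-1}, not \Cref{prop:succ-for-ffg-liveness-2-3-2}; the latter controls the slot-$(t+1)$ view, not the slot-$(t+2)$ one. Second, your argument leans on \Cref{prop:never-slashed-1} (uniqueness of the per-slot ffg-vote) and on \Cref{lem:accountable-safety} (to conclude $\GF.\chain$ cannot conflict with $\chain_p$), but the theorem's stated hypothesis is only \Cref{prop:succ-for-ffg-liveness}, and \Cref{lem:accountable-safety} alone does not say that the $\frac{n}{3}$ detectable violators are adversarial. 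The paper's own proof elides this: it treats \Cref{prop:succ-for-ffg-liveness-2-3-1} as implying that the honest slot-$(t+1)$ votes target $(\chain_p,t+1)$ \emph{exclusively}, and derives $\chain_p\preceq\GF.\chain$ by a direct counting argument analogous to \Cref{lem:accountable-jutification} under $f<\frac{n}{3}$ rather than by invoking a separate no-slashing constraint. You could match the paper's intent (and keep the theorem self-contained) by replacing the \Cref{prop:never-slashed-1} appeal with the reading that 2.3.1 characterizes \emph{the} ffg-vote cast by an always-honest validator in slot $t+1$, and by replacing the \Cref{lem:accountable-safety} appeal with the observation that any checkpoint with $c\in\{t+1,t+2\}$ that is justified in $\V^{4\Delta(t+2)+2\Delta}_i$ must have a chain comparable with $\chain_p$ because the $>\frac{2n}{3}$ always-honest votes for those slots all target prefixes/extensions of $\chain_p$, and no checkpoint with $c>t+2$ can be justified in that view since always-honest validators have not yet voted for those slots.
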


\begin{proof}
  Let $t := \slot(r)$ and
  consider any honest validator $v_i$ and the set of messages $\V^{\mathsf{FFGvote},t+1}_i$.
  By $f<\frac{n}{3}$, \Cref{prop:succ-for-ffg-liveness-2-2} and \Cref{prop:succ-for-ffg-liveness-2-3-2}, we know that $\V^{\mathsf{FFGvote},t+1}_i$ includes a set of \textsc{ffg-vote}s from at least $\frac{2}{3}n$ of the validators such that each vote in this set has the same source checkpoint $\calS$ and a target checkpoint prefix of chain $\chain_p$.
  Also, by \Cref{prop:succ-for-ffg-liveness-1-2} and \Cref{prop:succ-for-ffg-liveness-2-3-2}, we know that $\calS$ is justified according to $\V^{\mathsf{FFGvote},t+1}_i$.
  Then, $f<\frac{n}{3}$ and \Cref{prop:succ-for-ffg-liveness-1-3} imply that $\GJ(\V^{\mathsf{FFGvote},t+1}_i).c = t \land \GJ(\V^{\mathsf{FFGvote},t+1}_i).\chain \preceq \chain_p$. 
  Also, by \Cref{prop:succ-for-ffg-liveness-1-2}, $\GJ(\V^{\mathsf{FFGvote},t+1}_i)$ is the source checkpoint of the \textsc{ffg-vote} sent by $v_i$ in slot $t+1$.

  Then, given that by \Cref{prop:succ-for-ffg-liveness-1-3} and \Cref{prop:succ-for-ffg-liveness-2-3} all honest validators in slot $t+1$ send \textsc{ffg-vote}s with target $(\chain_p,t+1)$ and that, by \Cref{prop:succ-for-ffg-liveness-2-4-1}, all of these votes are included in $\V^{\mathsf{FFGvote},t+2}_i$, we have that $\GJ(\V^{\mathsf{FFGvote},t+2}_i) = (\chain_p, t+1)$.

  Again by \Cref{prop:succ-for-ffg-liveness-1-2}, in slot $t+2$, all honest validators then send an \textsc{ffg-vote} with source $(\chain_p, t+1)$.
  By \Cref{prop:succ-for-ffg-liveness-2-4-2}, by round ${4 \Delta (t+2) + 2 \Delta}$, all of these \textsc{ffg-vote}s are in the view of any validator $v_i$ honest at such round.
  Because $f<\frac{n}{3}$, then $\chain_p \preceq \GF(\V^{4 \Delta (t+2) + 2 \Delta}_i).\chain$.
  Again by \Cref{prop:succ-for-ffg-liveness-2-4-2}, $\chain_p \preceq \chainfin^{4 \Delta (t+2) + 2 \Delta}_i$.
  {\Cref{prop:succ-for-ffg-liveness-2-1} then implies that $\txpool^\proposing{t} \subseteq \chainfin^{4\Delta (t+2) + 2 \Delta}_i$.}
\end{proof}

{By the Theorem above, we know that, as long as all the stated assumptions hold, then at time $2\Delta$ into slot $t+2$ every honest validator has finalized the chain $\chain_p$ proposed by an honest proposer in slot $t$.
However, already starting $\Delta$ time earlier, \ie, from $\Delta$ time into slot $t+2$, 
no honest validator will ever finalize a chain conflicting with the chain $\chain_p$ proposed in slot $t$.
This is formalized in the following Corollary.

\begin{corollary}\label{cor:liveness-ffg-general}
  Assume that \Cref{prop:succ-for-ffg-liveness} holds (and $f< \frac{n}{3}$).
  Let
  $\chainfin^r_\mathsf{G}$ be the longest finalized chain according to view $\V^r_\mathsf{G}$ and
  $v_p$ be any validator honest by the time it \textsc{propose}s chain $\chain_p$ in a slot $t$ such that $\proposing{t} \geq \max(\GST, \GAT) + 4\Delta$.
  Then, $\chain_p \preceq \chainfin^{4\Delta (t+2)+\Delta}_\mathsf{G}$.
\end{corollary}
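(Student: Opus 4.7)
The plan is to re-run the argument of \Cref{thm:liveness-ffg-general} but evaluated in the global view $\V^r_\mathsf{G}$, exploiting the fact that $\V^r_\mathsf{G}$ by definition contains every message sent on or before round $r$, so no extra $\Delta$ rounds of network delay need to be paid to wait for messages to arrive. All the heavy lifting has already been done in the proof of \Cref{thm:liveness-ffg-general}; the corollary is essentially a book-keeping statement stripping out one $\Delta$ of propagation delay from the bound proved there.

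First, I would invoke \Cref{prop:succ-for-ffg-liveness-2-2,prop:succ-for-ffg-liveness-2-3,prop:succ-for-ffg-liveness-1-3,prop:succ-for-ffg-liveness-1-2} to conclude, exactly as in \Cref{thm:liveness-ffg-general}, that (i) every always-honest validator sends in slot $t+1$ an \textsc{ffg-vote} with target $(\chain_p,t+1)$ from a common justified source, and (ii) every always-honest validator sends in slot $t+2$ an \textsc{ffg-vote} with source $(\chain_p,t+1)$ and target checkpoint-slot $t+2$. Since $f<\frac{n}{3}$, both collections contain \textsc{ffg-vote}s from more than $\frac{2}{3}n$ distinct senders and hence form supermajorities.

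Second, the slot-$t+1$ \textsc{ffg-vote}s are cast at round $\voting{t+1}=4\Delta(t+1)+\Delta$, which is strictly before $4\Delta(t+2)+\Delta$, and the slot-$t+2$ \textsc{ffg-vote}s are cast precisely at round $\voting{t+2}=4\Delta(t+2)+\Delta$. By the definition of $\V^r_\mathsf{G}$ as the set of all messages sent up to time $r$, both collections lie in $\V^{4\Delta(t+2)+\Delta}_\mathsf{G}$. The slot-$t+1$ supermajority, combined with the justification chain already established in \Cref{thm:liveness-ffg-general} (which depends only on the existence of the required \textsc{ffg-vote}s and not on whose view contains them), yields $\mathsf{J}((\chain_p,t+1),\V^{4\Delta(t+2)+\Delta}_\mathsf{G})=\true$. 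The slot-$t+2$ supermajority then supplies the links with source $(\chain_p,t+1)$ and target checkpoint-slot $t+2=(t+1)+1$ required by \Cref{alg:justification-finalization}, so $\mathsf{F}((\chain_p,t+1),\V^{4\Delta(t+2)+\Delta}_\mathsf{G})=\true$ and consequently $\chain_p \preceq \chainfin^{4\Delta(t+2)+\Delta}_\mathsf{G}$.

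There is no genuine mathematical obstacle; the only subtle point is verifying that a message sent at round $r$ is already in $\V^r_\mathsf{G}$ (i.e., the ``up to time $r$'' in the definition of the global view is inclusive), which is what makes the $\Delta$ saving over \Cref{thm:liveness-ffg-general} possible. If that convention were instead exclusive, the statement would need to read $4\Delta(t+2)+\Delta+1$; either way the argument is identical.
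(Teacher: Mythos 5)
Your proof follows the same strategy as the paper's: re-run the derivation in the proof of \Cref{thm:liveness-ffg-general}, observing that the relevant supermajorities of slot-$(t+1)$ and slot-$(t+2)$ \textsc{ffg-vote}s are already present in the (inclusive) global view $\V^{4\Delta(t+2)+\Delta}_\mathsf{G}$, so the extra $\Delta$ of propagation delay paid in the Theorem's per-validator statement can be shaved off. The justification/finalization bookkeeping via \Cref{alg:justification-finalization} is handled correctly, and your remark that the ``up to time $r$'' inclusivity of $\V^r_\mathsf{G}$ is what makes the saving possible is on point.

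The one place where you cut a corner is the assertion that ``the slot-$t+2$ \textsc{ffg-vote}s are cast precisely at round $\voting{t+2}=4\Delta(t+2)+\Delta$.'' The corollary is stated in the abstract setting of \Cref{prop:succ-for-ffg-liveness}, which never fixes the round within a slot at which an \textsc{ffg-vote} is emitted — \Cref{prop:succ-for-ffg-liveness} only says ``if $v_i$ casts an \textsc{ffg-vote} during round $r$'' for some unspecified $r$ in the slot. You are implicitly importing a protocol-level fact (Algorithms \ref{alg:3sf-tob-noga} and \ref{alg:rlmd-ffg} vote at $\voting{t}$) into the abstract FFG analysis. The paper instead derives the needed bound directly from the constraints: \Cref{prop:succ-for-ffg-liveness-2-4-2}(i) requires every slot-$(t+2)$ \textsc{ffg-vote} from an always-honest validator to already be in $\V^{4\Delta(t+2)+2\Delta}_i$ for any validator $v_i$ honest at that round, which under $\Delta$-bounded delay after $\max(\GST,\GAT)$ forces those votes to have been sent no later than $4\Delta(t+2)+\Delta$. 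Replacing your appeal to the vote-round schedule with that inference from \Cref{prop:succ-for-ffg-liveness-2-4-2} closes the gap and makes the proof self-contained within the abstract constraints framework; everything else you wrote stands.
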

\begin{proof}
  Given \Cref{prop:succ-for-ffg-liveness-2-4-2}, any honest validator send its \textsc{ffg-vote} for slot $t+2$ no later than time $4\Delta (t+2) + \Delta$.
  Then, the proof follows from the proof of \Cref{thm:liveness-ffg-general}.
\end{proof}}

\section{{\TOBSVD}-Based Faster Finality Protocol}
\label{sec:ga-based}

In this section, our goal is to introduce a faster finality protocol for Ethereum that builds upon a dynamically-available protocol based on a modified version of the protocol {\TOBSVD{} by} D'Amato \emph{et al.}~\cite{streamliningSBFT}. 
We start by recalling the deterministically safe, dynamically-available consensus protocol {\TOBSVD} in \Cref{sec:revisiting-tob}. 
We refer the reader to the original paper for technical details. 
Our focus here is to briefly show how the protocol works, the guarantees it provides, and modifications to achieve a probabilistically safe, dynamically-available consensus protocol.
Next, we show how to introduce fast confirmations in the modified consensus protocol and prove its $\eta$ Dynamic Availability which enables its use in a majority-based faster finality protocol in \Cref{sec:prob-tob}. 
Finally, our majority-based faster finality protocol is presented in \Cref{sec:tob-execution}.

\subsection{Revisiting {\TOBSVD}}
\label{sec:revisiting-tob}


{\TOBSVD}~\cite{streamliningSBFT} proceeds as follows.
During each slot~$t$, a proposal (\(B\)) is made in the first round through a [\textsc{propose}, $B$, $t$, $v_i$] message, and a decision is taken in the third round. During the second round, every active validator \(v_i\) casts a \([\textsc{vote}, \chain, \cdot, t, v_i]\) message for a chain \(\chain\)\footnote{Note that we use $\cdot$ for the third component of a \textsc{vote} message as this component is not necessary for this part of the paper. Specifically, this refers to the FFG component in \Cref{alg:3sf-tob-noga}, similar to \Cref{alg:rlmd-ffg}. Since this section only describes a variant of the dynamically-available protocol \TOBSVD by D'Amato \emph{et al.}, we omit the FFG component here and will reintroduce it later when presenting the final algorithm in Section~\ref{sec:ga-based}.}. Let $\V$ be any view and \(t\) be the current slot.
We define:
{
\[
\V^{\chain,t} := \{[\textsc{vote}, \chain', \cdot, \cdot, v_k]: [\textsc{vote}, \chain', \cdot, \cdot, v_k] \in  \V' \land \V' = \mathsf{FIL}_{\text{lmd}}(\mathsf{FIL}_{\text{$\eta$-exp}}(\mathsf{FIL}_{\text{eq}}(\V),t)) \land \chain \preceq \chain'\}
,
\]
}
with 
 \(\mathsf{FIL}_{\text{$\eta$-exp}}\), \(\mathsf{FIL}_{\text{lmd}}\), and \(\mathsf{FIL}_{\text{eq}}\) defined as in \Cref{alg:mfc}. 
Specifically, $\FIL_{\text{eq}}$ removes any  \textsc{vote} message sent by a validator that has equivocated, $\FIL_{\text{$\eta$-exp}}$ retains only the unexpired \textsc{vote} messages, and $\FIL_{\text{lmd}}$ selects only the latest \textsc{vote} messages cast by each validator.
So, intuitively, $\V^{\chain,t}$ corresponds to the set of all latest \textsc{vote} messages in $\V$ that (i) are for a chain extending $\chain$, (ii) are non-expired with respect to slot $t$, and (iii) are sent by validators that have never equivocated.

From here on, we adopt a \emph{majority fork-choice function}, denoted as $\mfc$ and presented in \Cref{alg:mfc}. 
The function \(\mfc(\V, \V', \chain^C, t)\) starts from a chain \(\chain^C\) and returns the longest chain $\chain \succeq \chain^C$ such that \(\left|\V^{\chain,t} \cap (\V')^{\chain,t} \right|> \frac{\left|\mathsf{S}(\V',t)\right|}{2}\), where $\mathsf{S}(\V',t)$
corresponds to the set of validators that, according to view $\V'$, have sent a non-expired \textsc{vote} message in slot $t$, 
\ie, 
the majority of the validators in $\mathsf{S}(\V',t)$ are validators that have never equivocated according to either view $\V'$, and that, according to both views, their latest non-expired \textsc{vote} message is for a chain extending $\chain$.
If such chain does not exist, then it just returns $\chain^C$.

\begin{algo}[h!]
  \caption{$\mfc$, the majority fork-choice function}
  \label{alg:mfc}
  \vbox{
  \small
  \begin{numbertabbing}\reset
    xxxx\=xxxx\=xxxx\=xxxx\=xxxx\=xxxx\=MMMMMMMMMMMMMMMMMMM\=\kill     
  \textbf{function} $\mfc(\V,\V', \chain^C, t)$ \label{}\\
  \> \textbf{return} the longest chain $\chain \succeq \chain^C$ such that $\chain = \chain^C \lor \left|\V^{\chain,t}\cap (\V')^{\chain,t} \right|> \frac{\left|\mathsf{S}(\V',t)\right|}{2}$\label{}\\
  \\
  
  \textbf{function} $\mathsf{S}(\V,t)$\label{}\\
      \> \textbf{return} $ \{v_k: [\textsc{vote}, \cdot, \cdot, \cdot, v_k] \in \V' \land {\V'} = \mathsf{FIL}_{\eta\text{-exp}}(\V,t)\}$ \label{}\\
  \\
  \textbf{function} $\FIL_{\text{lmd}}(\V)$ \label{}\\
  \> \textbf{let} $ \V' := \V \setminus \{[\textsc{vote},\cdot,\cdot,t',v_k] \in \V : \exists [\textsc{vote},\cdot,\cdot,t'',v_k] \in \V: t'' > t'\}$\\
  \> // $\V' := \V$ without all but the most recent (\emph{i.e., latest}) votes of each validator\\
  \> \textbf{return} $\V'$ \label{}\\
  \\
      \textbf{function} $\FIL_{\eta\text{-exp}}(\V, t)$ \label{}\\
    \> \textbf{let} $ \V' := \V \setminus \{[\textsc{vote},\cdot,\cdot,t',\cdot] \in \V: t' < t - \eta - 1\}$\label{}\\
  \> // $\V' := \V$ without all votes from slots $ < t-\eta - 1$\\
  \> \textbf{return} $\V'$ \label{}\\
  \\
  \textbf{function} $\FIL_{\text{eq}}(\V)$ \label{}\\
  \> \textbf{let} $ \V' := \V \setminus \{[\textsc{vote},\cdot,\cdot,\cdot,v_k] \in \V:\exists t_\text{eq},[\textsc{vote},B',\cdot,t_\text{eq},v_k]\in\V,[\textsc{vote},B'',\cdot,t_\text{eq},v_k]\in\V: B' \neq B''\}$\label{}\\
  \> // $\V' := \V$ without all votes by equivocators in $\V$\\
  \> \textbf{return} $\V'$ \label{}\\[-5ex]     
  \end{numbertabbing}
  }
  \end{algo}

The protocol executed by validator $v_i$ works as follows:
\begin{enumerate}
    \item \textbf{Propose, $4\Delta t$:} If $v_p$ is the proposer for slot $t$, send message [\textsc{propose}, $\chain_p$, $t$, $v_p$] through gossip, with $\chain_p \succeq \mfc(\V_p,\V_p,\genesis,t)$ { and $\chain_p.p = t$}\footnote{Note that, in practice, $\mfc(\V_i,\V_i,\genesis,t)$ is the parent of $\chain_p$. However, in our treatment we consider the more general case where $\chain_p$ can be any extension of $\mfc(\V_i,\V_i,\genesis,t)$, as long as $\chain_p.p =t$.}.
    \item \textbf{Vote, $4\Delta t + \Delta$:} Let $\chaincanmfc := \mfc(\V_i^\mathrm{frozen},\V_i,\genesis,t)$ and
    send message [\textsc{vote}, $\chain'$, $\cdot$, $t$, $v_i$] through gossip, with $\chain'$ a chain \textsc{propose}d in slot $t$ extending $ \chaincanmfc$ and $\chain'.p = t$, or  $\chain' = \chaincanmfc$ if no such proposal exists. 
    \item \textbf{Decide, $4\Delta t + 2\Delta$:} {Set $\Chain_i = \mfc(\V''_i,\V_i,\genesis,t)$.} Set $\V''_i = \V_i$, and store $\V''_i$.
    \item \textbf{$4\Delta t + 3\Delta$:} Set $\Vfrozen_i = \V_i$, and store $\Vfrozen_i$.
\end{enumerate}

D'Amato \emph{et al.}~\cite{streamliningSBFT} have demonstrated that {\TOBSVD} satisfies both Safety and Liveness as defined in Definition~\ref{def:security} assuming $|H_{\voting{(t-1)}} \setminus {A_{4\Delta t+2\Delta}}| > |A_{4\Delta t+2\Delta}|$
\footnote{Note that we consider the adversary at round $4\Delta t + 2\Delta$ due to the decision phase. This constraint differs from the one used in Constraint~\eqref{eq:pi-sleepiness}, which was intended for the modified version of the protocol that we will present shortly, which does not include a decision phase.}. While the original proofs by D'Amato \emph{et al.} assume $\eta = 1$, the underlying arguments can readily be extended to scenarios where $\eta > 1$. This extension is feasible assuming a variant of Constraint~\eqref{eq:pi-sleepiness} that considers the set $A_{\voting{t}}$ as the set of adversarial validators active at round $4\Delta t + 2\Delta$, and by considering only the latest messages cast by each validator. We do not directly prove this claim in the context of the current protocol. Instead, we demonstrate it in a modified version of the protocol, which we will present in the following.

The construction of a probabilistically safe total-order broadcast can be achieved by leveraging the protocol just presented, with a critical understanding that the validators' behavior is unaffected by decisions. Decisions primarily serve as a confirmation rule for the protocol and to ensure deterministic safety. In particular, the decide phase of slot $t$ encapsulates both the decision-making and the storage of $\V''$, which is then used in the subsequent slot for further decisions. This indicates that all activities within the decide phase are pertinent only to making decisions, and nothing else. By omitting the decide phase, the protocol can be simplified, retaining only the necessary components to maintain probabilistic safety. We have then the following modified protocol.

\begin{enumerate}
    \item \textbf{Propose, $3\Delta t$:} If $v_i$ is the proposer for slot $t$, send message [\textsc{propose}, $\chain_p$, $t$, $v_i$] through gossip, with $\chain_p \succeq \mfc(\V_i,\V_i,\genesis,t)$ { and $\chain_p.p = t$}.
    \item \textbf{Vote, $3\Delta t + \Delta$:} Let $\chaincanmfc := \mfc(\V_i^\mathrm{frozen},\V_i,\genesis,t)$ and
    send message [\textsc{vote}, $\chain'$, $\cdot,$ $t$, $v_i$] through gossip, with $\chain'$ a \textsc{propose}d chain {in slot $t$} extending {$ \chaincanmfc$ and $\chain'.p = t$}, or $\chain' = \chaincanmfc$
    if no such proposal exists. {Set $\Chain_i = \left(\chaincanmfc \right)^{\lceil\kappa}$.}
    \item \textbf{$3\Delta t + 2\Delta$:} Set $\Vfrozen_i = \V$, and store $\Vfrozen_i$. Set $t = t+1$.
\end{enumerate}

It is possible to show that when there is synchrony, Constraint~\eqref{eq:pi-sleepiness} (with slots of duration $3\Delta$) holds, and assuming an honest proposer for slot~$t$, it is guaranteed that the $B$ will extend the lock (\ie, $\chain^{3\Delta t + \Delta}_i$). As a result, all honest validators of slot $t$ will vote for the honest proposal $B$ of slot $t$. This property ensures that the validators' votes are consistently aligned with the honest proposal. This property allows us to obtain $\eta$ Reorg Resilience. In turn, $\eta$ Reorg Resilience gives us probabilistic safety for the $\kappa$-deep confirmation rule. Due to the similarity in the arguments presented in the proofs, we will demonstrate these statements on a final modification of the initially discussed protocol.

{\subsection{Probabilistically safe total-order broadcast with fast confirmations} \label{sec:prob-tob}}

This ultimate version of the protocol, presented in \Cref{algo:prob-ga-fast}, incorporates \emph{fast confirmations}. The integration of these fast confirmations will serve as a building block for our majority-based faster finality protocol. 

To this extent, \Cref{algo:prob-ga-fast} defines the function $\texttt{fastconfirmsimple}(\V,t)$ to returns the tuple $(\chain^C, Q^C)$ where $\chain^C$ is a chain that has garnered support from more than $\frac{2}{3}n$ of the validators that have \textsc{vote}d in slot~$t$ according to the view $\V$, and $Q^C${, called \emph{quorum certificate,}} is a proof of this {consisting of \textsc{vote} messages for chains extending $\chain^C$}.
If such chain does not exist, then it just returns  the tuple $(\genesis,\emptyset)$.

{
Also, \Cref{algo:prob-ga-fast} relies on the  following external function.
\begin{itemize}
  \item \textbf{$\texttt{Extend}(\chain,t)$:} If $\chain$'s length is less than $t$, then $\texttt{Extend}(\chain,t)$ produces a chain $\chain'$ of length $t$ that extends $\chain$ and that includes all of the transactions in the transaction pool at time $\proposing{t}$.
  The behavior is left unspecified for the case that $\chain$'s length is $t$ or higher.
  This will not cause us any problem because, as we will show, this case can never happen.
  Formally,
  \begin{property}\label{prop:extend}
    \begin{align*}
      \chain.p < t \implies &\land \texttt{Extend}(\chain,t) \succ \chain\\
      &\land \forall \tx \in \txpool^\proposing{t}, \tx \in \texttt{Extend}(\chain,t) 
    \end{align*}
  \end{property}
\end{itemize}
}

Moreover, as per \Cref{algo:prob-ga-fast}, each honest validator $v_i$ maintains the following state variables:

\begin{itemize}
    \item \textbf{Message Set \(\Vfrozen_i\)}: Each validator stores in \(\Vfrozen_i\) a snapshot of its view $\V_i$ at time \(3\Delta\) in each slot.
    
    \item \textbf{Chain variable \(\chainfrozen_i\)}: Each validator also stores in \(\chain^{\text{frozen}}_i\) the greatest fast confirmed chain at time $3\Delta$ in each slot $t$.
    The variable $\chainfrozen_i$ is then updated between time $0$ and $\Delta$ of slot $t+1$ incorporating the information received via the \textsc{propose} message for slot $t+1$.
    This can be seen as effectively executing the view-merge logic (largely employed in the protocol presented in \Cref{sec:rlmd-based}) on this variable.
\end{itemize}

{
Additionally, \Cref{algo:prob-ga-fast} outputs the following chain.
\begin{itemize}
  \item \textbf{Confirmed Chain $\Chain_i$:} The confirmed chain $\Chain_i$ is updated at time $\Delta$ in each slot and also potentially at time $2\Delta$.
\end{itemize}
}


{In the remainder of this work, when we want to refer to the value of a variable $\mathsf{var}$ \emph{after} the execution of the code for round $r$, we write $\mathsf{var}^r$.
If the variable already has a superscript, like $\Vfrozen[]_i$, we simply write $\Vfrozen[r]$.}

Then, \Cref{algo:prob-ga-fast} proceeds as follows.

\begin{description}
    \item[Propose:] At round $4\Delta t$, there is the propose phase. 
    Here, the proposer $v_p$ of slot $t$ broadcasts the proposal $[\textsc{propose}, \chain_p, \chain^C, Q^C, \cdot, t, v_p]$.
    {Here, $\chain_p$ is the chain returned by the function $\texttt{Extend}(\chaincanmfc,t)$ {where $\chaincanmfc$ corresponds to the output of the fork-choice $\mfc(\V_i,\V_i,\genesis,t)$.}
    Whereas,} $\chain^C$ and $Q^C$ are the fast confirmed chain and relative quorum certificate, respectively, according to the current view of $v_p$ as per output of $\texttt{fastconfirmsimple}$. Note that the proposer inputs in the fork-choice function $\mfc$ the current view $\V_p$ and fast confirmed chain $\chain^C$.
    
    \item[Vote:] \sloppy{During the interval \( [4\Delta t, 4\Delta t + \Delta] \), a validator \( v_i \), upon receiving a proposal  $[\textsc{propose}, \chain_p, \chain^C, Q^C, \cdot, t', v_p]$ from the proposer \( v_p \) for slot~\( t'=t \), first verifies whether such proposal is \emph{valid}.}
    This corresponds to checking whether \( Q^C \) is a valid certificate for \( \chain^C \) and $\chain_p.p = t$. If the \textsc{propose} message is valid, then validator~\( v_i \) updates its local variable $\chain^{\text{frozen}}_i$ according to the received proposal.

    {During the voting phase, each validator \(v_i\) computes the fork-choice function \(\mfc\) to continue building upon its output.
    The inputs to $\mfc$ are $v_i$'s view at time $3\Delta$ in the previous slot ($\V^{4\Delta(t-1)+3\Delta}_i = \Vfrozen[\voting{t}]_i$), the current view ($\V^\voting{t}_i$), and the chain $\chainfrozen_i$ updated after receiving the \textsc{propose} message for slot $t$, if any.

    Observe that by the definition of $\mfc$ (\Cref{alg:mfc}), when computing the fork-choice during the voting phase, validator $v_i$ only considers chains for which it received \textsc{vote} messages by round $4\Delta (t-1) + 3\Delta$, extending the chain from validators who have not equivocated up to the current round ($\voting{t}$), according to $v_i$'s view.
    In contrast, when computing the fork-choice during the propose phase, the proposer considers any chain for which it received \textsc{vote} messages by the proposing time ($\proposing{t}$), extending the chain from validators that have never equivocated according to the proposer’s view at that time ($\proposing{t}$).
    This approach ensures that the chain output by the fork-choice function in the voting phase is the prefix of the chain output by the fork-choice function in the propose phase of the same slot.
    This property is also known as Graded Delivery~\cite{streamliningSBFT}.
 
    If validator $v_i$ receives a valid \textsc{propose} message for a chain that extends the output of the fork-choice function, it casts a \textsc{vote} for that chain. Otherwise, it casts a \textsc{vote} for the current output of the fork-choice function. 

    In the voting phase, validator $v_i$ also sets the confirmed chain $\Chain_i$ to the highest chain among the $\kappa$-deep prefix of the chain output by $\mfc$ and the previous confirmed chain, filtering out this last chain if it is not a prefix of the chain output by $\mfc$.}

    \item[Fast Confirm:] Validator \(v_i\) checks if a chain $\chain$ can be fast confirmed, i.e., if at least $2/3n$ of the validators cast a \textsc{vote} message for a chain $\chain^\mathrm{cand}$. If that is the case, it sets $\Chain_i$ to this chain.

    \item[Merge:] At round $4\Delta t + 3\Delta$,  validator $v_i$ updates its variables $\Vfrozen_i$ and $\chainfrozen_i$ to be used in the following slot.
\end{description}

For readability purpose, we often use $\fastconfirming{t} := 4\Delta t + 2 \Delta$ and $\merging{t} := 4\Delta t + 3 \Delta$.

\begin{algo}[th!]
  \fontsize{8}{10}\selectfont
  \caption{Probabilistically safe variant of the total-order broadcast protocol of D'Amato \emph{et al.}~\cite{streamliningSBFT} with fast confirmations - code for $v_i$}
  \label{algo:prob-ga-fast}
  \begin{numbertabbing}\reset
  xxxx\=xxxx\=xxxx\=xxxx\=xxxx\=xxxx\=MMMMMMMMMMMMMMMMMMM\=\kill
    {\textbf{Output}} \label{}\\
    \>{$\Chain_i \gets \genesis$: confirmed chain of validator $v_i$}\label{}\\
    \textbf{State} \label{}\\
    \> $\V_i^\text{frozen}  \gets \{\genesis\}$: snapshot of $\V$ at time $4\Delta t + 3\Delta$ \label{}\\
    \> $\chain^\text{frozen}_i \gets \genesis $: snapshot of the fast confirmed chain at time $4\Delta t + 3\Delta$ \label{} \\
    \textbf{function} $\texttt{fastconfirmsimple}(\V,t)$\label{}\\
    \> \textbf{let} $\fastcands :=\{\chain \colon |\{v_j\colon \exists \chain' \succeq \chain : \ [\textsc{vote}, \chain', \cdot, t,\cdot] \in \V \}| \geq \frac{2}{3}n\}$\label{}\\
    \> \textbf{if} $\fastcands \neq \emptyset$ \textbf{ then}\label{}\\
    \>\>\textbf{let} $\fastcand := \max\left(\fastcands\right)$\label{}\\
    \>\> \textbf{let} $Q := \{[\textsc{vote},\chain',\cdot,t,\cdot]\in \V : \chain' \succeq \fastcand \}$\label{}\\
    \>\> \textbf{return} $(\fastcand,Q)$ \label{}\\
    \> \textbf{else}\label{}\\
    \>\> \textbf{return} $(\genesis,\emptyset)$\label{}\\    
    \textsc{Propose}\\
    \textbf{at round} $4\Delta t$ \textbf{do} \label{}\\
    \> \textbf{if} $v_i = v_p^t$ \textbf{then} \label{}\\
    \>\> \textbf{let} $ (\chain^C,Q^C) := \texttt{fastconfirmsimple}(\V_i,t-1)$\label{}\\
    \>\> \textbf{let} $\chaincanmfc := \mfc(\V_i, \V_i, \chain^C, t)$ \label{}\\
    \>\> \textbf{let} $ \chain_p := {\mathsf{Extend}(\chaincanmfc,t)}$ \label{line:algga-no-ffg-new-block}\\
    \>\> send message $[\textsc{propose}, \chain_p, \chain^C, Q^C, \cdot, t, v_i]$ through gossip \label{}\\
    \textsc{Vote}\\
    \textbf{at round} $4\Delta t + \Delta$ \textbf{do} \label{}\\
    \> \textbf{let} $\chaincanmfc := \mfc(\Vfrozen_i, \V_i, \chain^\text{frozen}_i, t)$ \label{}\\
    \> $\Chain_i \gets \max(\{\chain \in \{\Chain_i,{(\chaincanmfc)^{\lceil\kappa}}\}: \chain \preceq {\chaincanmfc}\})$\label{line:algga-no-ffg-vote-chainava}\\
    \> {\textbf{let} $ \chain := $ \textsc{propose}d chain {from slot $t$} extending ${\chaincanmfc}$ and with $\chain.p =t$, if there is one, or ${\chaincanmfc}$ otherwise}\label{line:algga-no-ffg-vote-comm}\\
    \>  send message $[\textsc{vote}, \chain, \cdot, t, v_i]$ through gossip \label{line:algga-no-ffg-vote}\\
    \textsc{Fast Confirm}\\
    \textbf{at round} $4\Delta t + 2\Delta$ \textbf{do} \label{line:algga-no-ffg-on-confirm}\\
    \> \textbf{let} $ (\fastcand,Q) := \texttt{fastconfirmsimple}(\V_i,t)$\label{}\\
    \> \textbf{if} $Q \neq \emptyset$ \textbf{then}\label{line:algga-no-ffg-if-set-chaava-to-bcand}\\
    \>\> $\Chain_i \gets \fastcand$\label{line:algga-no-ffg-set-chaava-to-bcand}\\
    \textsc{Merge}\\
    \textbf{at round} $4\Delta t + 3\Delta$ \textbf{do} \label{}\\
    \>  $\V^\text{frozen}_i \gets \V_i$\label{}\\ 
    \> $(\chainfrozen_i,\cdot) \gets \texttt{fastconfirmsimple}(\V_i,t)$\label{line:algga-no-ffg-merge-ch-frozen}\\
    \\
    \textbf{upon} receiving a gossiped message $[\textsc{propose}, {\chain_p}, \chain^C_p, Q^C_p, \cdot, t, v_p]$ \textbf{at any round in} $[4\Delta t, 4\Delta t + \Delta]$ \textbf{do}\label{}\\
      \> \textbf{if} $\mathrm{valid}([\textsc{propose}, {\chain_p}, \chain^C_p, Q^C_p, \cdot, t, v_p] ) \land \chain^\text{frozen}_i \preceq \chain^C_p$ \textbf{then} \label{line:algga-no-ffg-upon}\\  
    \>\> $\chain^\text{frozen}_i \gets \chain^C_p$ \label{algo:3sf-ga-setchainfrozen-to-chainc}\\[-5ex]
  \end{numbertabbing}
\end{algo}

\subsubsection{Analysis}\label{sec:ga-prob-analysis}

Now, we proceed with proving that \Cref{algo:prob-ga-fast} satisfied $\eta$ Reorg Resilience {(\Cref{thm:reorg-res-prop-tob})}, 
$\eta$ Dynamic Availability (\Cref{thm:dyn-avail-fast-conf-tob}), 
$\eta$ Asynchronous Reorg Resilience (\Cref{thm:async-resilience-tob}) 
and $\eta$ Asynchronous Safety {Resilience} (\Cref{{thm:async-safety-resilience-tob}}).
Remember that throughout this work we assume that less than one-third of the entire validator set {is ever controlled by the adversary (\ie, $f<\frac{n}{3}$)}.
In all the following results, we also assume $\GST=0$, , and that Constraint~\eqref{eq:pi-sleepiness} holds.

\sloppy{In the remaining part of this section, given any slot $t$, we let 
$\mfcpropose{t}_i := \mfc(\V^{\proposing{t}}_i,\V^{\proposing{t}}_i,\chain^C,t)$ with 
$(\chain^C,\cdot) = \texttt{fastconfirmsimple}(\V^\proposing{t}_i,t-1)$, and  $\mfcvote{t}_i := \mfc(\Vfrozen[\voting{t}]_i,\V^{\voting{t}}_i,\chainfrozen[\voting{t}]_i,t)$}.

{At high level, our analysis proceeds as follows.
\Cref{lem:keep-voting-tob-fast-conf,lem:one-fast-confirm-all-vote-fast-conf,lem:vote-proposal-fast-conf} set the base argument.
Specifically, in \Cref{lem:keep-voting-tob-fast-conf}, we show that if all active validators in a given slot $t$ \textsc{vote} for a chain extending $\chain$, then, for any validator active at the following slot $t+1$, the output of the fork-choice function $\mfc$ will be an extension of $\chain$ which, importantly, also implies that any validator active in slot $t+1$ \textsc{vote} for an extension of $\chain$. 
Then, \Cref{lem:one-fast-confirm-all-vote-fast-conf} leverages  \Cref{lem:keep-voting-tob-fast-conf} to show that if any active validator fast confirms a chain $\chain$ in a given slot $t$, then for any validator active at any subsequent slot, the output of the fork-choice function $\mfc$ will be an extension of $\chain$.
Finally, in \Cref{lem:vote-proposal-fast-conf} we show similar results for any chain proposed by an honest validator.
Thereafter, \Cref{lem:ga-confirmed-always-canonical} makes use of these initial results to prove that the chain confirmed by any validator honest in a given round $r_i$, regardless of whether they are active in that round, is a prefix of the chain output by the fork-choice $\mfc$ by any validator active at subsequent rounds.
This allows us to then prove $\eta$ Reorg Resilience in \Cref{thm:reorg-res-prop-tob}.
Moving on, \Cref{lem:ga-mfc-proposer-shorter-than-t} shows that the premise of \Cref{prop:extend} is always satisfied which implies that honest proposers are always able to include all the transactions in the transaction pool in the chain that they propose.
This, \Cref{lem:vote-proposal-fast-conf} and \Cref{lem:ga-confirmed-always-canonical} are then put to use in \Cref{thm:dyn-avail-fast-conf-tob} to show that \Cref{algo:prob-ga-fast} is an $\eta$-dynamically-available protocol.
Thereafter, \Cref{thm:fast-liveness-tob} proves that chains can be fast confirmed which is not listed among the security properties but very relevant to \Cref{algo:prob-ga-fast}.
Finally, the remaining Lemmas and Theorems deal with asynchrony resilience and we leave their high-level overview up to that point.
}

\begin{lemma}
    \label{lem:keep-voting-tob-fast-conf}
    If, in slot $t$, all validators in $H_{\voting{t}}$ cast \textsc{vote} messages for chains extending chain \(\chain\),
    then, for any validator $v_i\in H_{\voting{(t+1)}}$, $\mfcpropose{(t+1)}_i \succeq \chain$ and $\mfcvote{(t+1)}_i \succeq \chain$, which implies that, in slot $t+1$, all validators in $H_{\voting{(t+1)}}$ cast \textsc{vote} messages for chains extending \(\chain\).
\end{lemma}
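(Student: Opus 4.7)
The plan is to show that for any $v_i \in H_{\voting{(t+1)}}$ the fork-choice $\mfc$ computed at either $\proposing{(t+1)}$ or $\voting{(t+1)}$ returns a chain extending $\chain$; by the vote rule of \Cref{algo:prob-ga-fast}, this forces $v_i$'s slot-$(t+1)$ \textsc{vote} to also extend $\chain$. I would combine three facts: (i) every slot-$t$ \textsc{vote} of each $v_j \in H_{\voting{t}}$ is present in $\Vfrozen[\voting{(t+1)}]_i$, $\V^{\proposing{(t+1)}}_i$, and $\V^{\voting{(t+1)}}_i$; (ii) the starting point handed to $\mfc$ (namely $\chainfrozen[\voting{(t+1)}]_i$ in the vote step and the $\chain^C$ extracted from any valid slot-$(t+1)$ \textsc{propose}) lies on the chain-path of $\chain$, i.e., is a prefix or an extension of $\chain$; and (iii) the majority condition in the definition of $\mfc$ is met at $\chain$ itself by virtue of Constraint~\eqref{eq:pi-sleepiness}.

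For (i), the \textsc{vote}s in question are cast at $\voting{t} = 4\Delta t + \Delta$ and, since $\GST = 0$, delivered by $\voting{t} + \Delta < \merging{t} < \proposing{(t+1)} < \voting{(t+1)}$. Any $v_i \in H_{\voting{(t+1)}}$ is either awake throughout this interval or wakes up inside the joining window $(\voting{(t-1)}+\Delta, \voting{t}+\Delta]$; in the latter case the queued-delivery semantics together with the ``execute-without-sending'' rule of the joining protocol ensure that by round $\merging{t}$ these votes have been processed and stored into $\Vfrozen[\voting{(t+1)}]_i$, hence a fortiori into $\V^{\proposing{(t+1)}}_i$ and $\V^{\voting{(t+1)}}_i$.

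For (ii), inspection of \Cref{algo:prob-ga-fast} shows that $\chainfrozen_i$ is only (re)assigned from outputs of $\texttt{fastconfirmsimple}$, either locally at $\merging{t}$ or from the $\chain^C_p$ carried by a validated slot-$(t+1)$ \textsc{propose} (and then only when the current $\chainfrozen_i$ is already a prefix of $\chain^C_p$, see \cref{algo:3sf-ga-setchainfrozen-to-chainc}). Any such chain is supported by a set of slot-$t$ \textsc{vote}s from at least $\frac{2}{3}n$ validators; because $f < \frac{n}{3}$, this set intersects $H_{\voting{t}}$ in more than $\frac{n}{3}$ validators, each of whom by hypothesis voted for a chain extending $\chain$. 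Since each honest validator casts a single slot-$t$ \textsc{vote}, the supported chain and $\chain$ must be comparable. The identical argument applies to the $\chain^C$ that the proposer feeds into $\mfc$ at $\proposing{(t+1)}$.

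For (iii), the $\eta$-expiration filter restricts $\mathsf{S}(\V^{\voting{(t+1)}}_i, t+1)$ to validators that have sent some non-expired vote, and any $v_j \in H_{\voting{(t+1)}}$ that has not yet voted in the permitted window is not represented, so $\mathsf{S} \subseteq A_{\voting{(t+1)}} \cup H_{\voting{(t-\eta+1)}, \voting{t}}$. By (i) and honest non-equivocation, every $v_j \in H_{\voting{t}} \setminus A_{\voting{(t+1)}}$ contributes, in both $\Vfrozen[\voting{(t+1)}]_i$ and $\V^{\voting{(t+1)}}_i$, a latest, non-equivocated slot-$t$ vote for a chain extending $\chain$. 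Constraint~\eqref{eq:pi-sleepiness} then yields $|H_{\voting{t}} \setminus A_{\voting{(t+1)}}| > |A_{\voting{(t+1)}} \cup (H_{\voting{(t-\eta+1)}, \voting{(t-1)}} \setminus H_{\voting{t}})| \geq |\mathsf{S}| - |H_{\voting{t}} \setminus A_{\voting{(t+1)}}|$, i.e., the strict majority required by the definition of $\mfc$. Combined with (ii), this gives $\mfcvote{(t+1)}_i \succeq \chain$; substituting $\V^{\proposing{(t+1)}}_i$ for both view arguments of $\mfc$ yields $\mfcpropose{(t+1)}_i \succeq \chain$ by the same counting. The vote rule of \Cref{algo:prob-ga-fast} then forces $v_i$'s slot-$(t+1)$ \textsc{vote} to extend $\mfcvote{(t+1)}_i$ and hence $\chain$. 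I expect the main obstacle to be the joining-protocol bookkeeping in (i), together with the precise enumeration of $\mathsf{S}$ needed to match the exact set difference appearing in Constraint~\eqref{eq:pi-sleepiness}; the remainder is standard quorum-intersection reasoning.
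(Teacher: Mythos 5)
Your proof is correct and follows essentially the same route as the paper's: synchrony plus the joining protocol put all slot-$t$ honest \textsc{vote}s into the relevant views, the fork-choice starting chain ($\chainfrozen$ or the proposer-supplied $\chain^C$) is shown compatible with $\chain$, and Constraint~\eqref{eq:pi-sleepiness} supplies the strict-majority count at $\chain$, after which the vote rule of \Cref{algo:prob-ga-fast} closes the argument. The only deviation is that where the paper case-splits on the starting chain being $\genesis$ or an extension of $\chain$, you argue via quorum intersection that it is merely comparable with $\chain$ — in fact the slightly more careful statement, since it also covers a non-genesis proper prefix, and it is absorbed by the same majority argument.
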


\begin{proof}
Let $v_i$ be any honest validator in \(H_{\voting{(t+1)}}\).
Due to the joining protocol, this implies that $v_i$ is awake at round $\merging{t} = 4\Delta (t+1)-\Delta$.
Clearly the same holds at time $\proposing{(t+1)}$: $\chain^{\text{frozen},\proposing{(t+1)}}_i = \genesis \lor \chain^{\text{frozen},\proposing{(t+1)}}_i\succeq \chain$.
Now we consider each case separately and for each of them prove that $\mfcpropose{(t+1)}_i \succeq \chain$.
\begin{description}
  \item[Case 1: {$\chainfrozen[\proposing{t+1}]_i = \genesis$}.]
  Note that due to the Lemma's and synchrony assumptions, all the \textsc{vote} messages sent by validators in \(H_{\voting{t}} \setminus A_{\voting{(t+1)}}\) are included in $\V_i^{\proposing{(t+1)}}$.
   Hence, at time $\proposing{(t+1)}$, \(\left|\V_i^{\chain,t+1} \right| \geq \left|H_{\voting{t}}    \setminus A_{\voting{(t+1)}}\right|\).
  Similarly, $H_{\voting{t}} \setminus A_{\voting{(t+1)}} \subseteq \mathsf{S}(\V_i,t+1)$.
  Also, $\mathsf{S}(\V_i,t+1)\setminus \left(H_{\voting{t}} \setminus A_{\voting{(t+1)}}\right) \subseteq A_{\voting{(t+1)}} \cup \left(H_{\voting{(t-\eta+1)},\voting{(t-1)}}\setminus H_{\voting{t}}\right)$ as any \textsc{vote} in $\mathsf{S}(\V_i,t+1)$ that is not from a validator in $H_{\voting{t}} \setminus A_{\voting{(t+1)}}$ must be either from a validator Byzantine at time $\voting{(t+1)}$ or from a validator active at some point between $\voting{(t-\eta+1)}$ and $\voting{(t-1)}$ but not active at time $\voting{t}$.
  Then, we have 
  $ \left|\mathsf{S}(\V_i,t+1)\right| =
  \left|H_{\voting{t}} \setminus A_{\voting{(t+1)}} \right| + \left|\left(\mathsf{S}(\V_i,t+1) \setminus \left(H_{\voting{t}} \setminus A_{\voting{(t+1)}}\right)\right)\right|
  \leq
  \left|H_{\voting{t}} \setminus A_{\voting{(t+1)}} \right| 
  +
  \left|A_{\voting{(t+1)}} \cup \left(H_{\voting{(t-\eta+1)},\voting{(t-1)}}\setminus H_{\voting{t}}\right)\right|
  <
  2\left|H_{\voting{t}} \setminus A_{\voting{(t+1)}} \right| 
  $ where the last inequality comes from \Cref{eq:pi-sleepiness}.

  {Hence, at time $\proposing{(t+1)}$, \(\left|\V_i^{\chain,t+1} \right|\)\(\geq \left|H_{\voting{t}} \setminus A_{\voting{(t+1)}}\right|  > \frac{\left|\mathsf{S}(\V_i,t+1)\right|}{2}\).}
  
  Therefore, $\mfcpropose{(t+1)}_i \succeq \chain$.  
  \item[Case 2: {$\chainfrozen[\proposing{t+1}]_i \succeq \chain$}.] 
  \sloppy{{Given that by definition, $\mfcpropose{(t+1)}_i \succeq \chainfrozen[\proposing{t+1}]_i $, we have that $\mfcpropose{(t+1)}_i \succeq \chain$.}}
\end{description} 

Now let us move to proving that $\mfcvote{(t+1)}_i  \succeq \chain$.
Note that if a $[\textsc{propose}, B, \chain_p^C, Q^C_p, t+1, v_p]$ message is valid, then $\chain_p^C = \genesis \lor \chain_p^C \succeq \chain$.
\Cref{algo:3sf-ga-setchainfrozen-to-chainc} implies that $\chainfrozen[\voting{(t+1)}]_i = \chainfrozen[\merging{t}]_i \lor \chainfrozen[\voting{(t+1)}]_i = \chain_p^C$.
Then, from $\chain^{\text{frozen},\merging{t}}_i = \genesis \lor \chain^{\text{frozen},\merging{t}}_i\succeq \chain$ and \Cref{algo:3sf-ga-setchainfrozen-to-chainc}, it follows that, 
at time $\voting{(t+1)}$, $\chain^{\text{frozen},\voting{(t+1)}}_i = \genesis \lor \chain^{\text{frozen},\voting{(t+1)}}_i \succeq \chain$.
Now we consider each case separately. 

\begin{description}
  \item[Case 1: {$\chainfrozen[\voting{t+1}]_i = \genesis$}.]
  Note that due to the Lemma's and synchrony assumptions, all the \textsc{vote} messages sent by validators in \(H_{\voting{t}} \setminus A_{\voting{(t+1)}}\) are included in $\V^{\text{frozen},\voting{(t+1)}}_i\cap \V_i^{\voting{(t+1)}}$.
  From here the proof is effectively the same as the one for Case 1 for $\mfcpropose{(t+1)}_i$, just with $\V_i^{\proposing{(t+1)}}$ replaced by $\V^{\text{frozen},\voting{(t+1)}}_i\cap \V_i^{\voting{(t+1)}}$.
  \item[Case 2: {$\chainfrozen[\voting{t+1}]_i \succeq \chain$}.] 
  Similarly to proof of Case 2 for $\mfcpropose{(t+1)}_i$, given that by definition, $\mfcvote{(t+1)}_i \succeq \chainfrozen[\voting{t+1}]_i $, we have that $\mfcvote{(t+1)}_i \succeq \chain$.l
\end{description} 

From $\mfcvote{(t+1)}_i  \succeq \chain$ follows that, in slot $t+1$,  $v_i$ casts a \textsc{vote} message for a chain extending~$\chain$.
\end{proof}

\begin{lemma}
    \label{lem:one-fast-confirm-all-vote-fast-conf}
    \sloppy{If an honest validator fast confirms a chain $\chain$ in slot $t$ ($\ie$, there exists $Q$ such that $(\chain, Q) = \texttt{fastconfirmsimple}(\V^{\fastconfirming{t}}_i,t) \land Q \neq \emptyset$), then, for any slot $t'>t$ and validator $v_i \in H_{\voting{(t')}}$, $\mfcpropose{t'}_i \succeq \chain$ and $\mfcvote{t'}_i \succeq \chain$, which implies that, all validators in $H_{\voting{(t')}}$ cast a \textsc{vote} message for a chain extending $\chain$.}
\end{lemma}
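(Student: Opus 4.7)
The plan is to decompose the claim into a base case at $t'=t+1$ and an induction for $t'\geq t+2$ via \Cref{lem:keep-voting-tob-fast-conf}. The hypothesis supplies a set $V$ of slot-$t$ \textsc{vote} messages for chains extending $\chain$, with $|V|\geq\frac{2}{3}n$, present in the fast-confirming honest validator's view at round $\fastconfirming{t}$. The goal of the base case is to show $\chainfrozen[\voting{(t+1)}]_i\succeq\chain$ and $V\subseteq \V_i^{\proposing{(t+1)}}$ for every $v_i\in H_{\voting{(t+1)}}$; since $\mfc$ always returns a chain extending its third argument, both $\mfcpropose{(t+1)}_i\succeq\chain$ and $\mfcvote{(t+1)}_i\succeq\chain$ will follow once the relevant starting points are shown to extend $\chain$.

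For the propagation step, the joining protocol guarantees that every $v_i\in H_{\voting{(t+1)}}$ is awake by round $\voting{t}+\Delta=\fastconfirming{t}$, so by round $\fastconfirming{t}+\Delta=\merging{t}$ every message in the fast-confirming validator's view at $\fastconfirming{t}$ — and in particular the $\geq\frac{2}{3}n$ votes of $V$ — has reached $v_i$ via gossip and synchrony. The call to $\texttt{fastconfirmsimple}(\V_i^{\merging{t}},t)$ on \Cref{line:algga-no-ffg-merge-ch-frozen} therefore considers a set $\fastcands$ that contains $\chain$. The key observation is that $\max(\fastcands)$ in $v_i$'s view cannot conflict with $\chain$: if a chain $\chain'$ conflicting with $\chain$ also gathered $\geq\frac{2}{3}n$ slot-$t$ votes, an inclusion-exclusion argument would force $\geq\frac{n}{3}$ validators to have voted twice in slot $t$ for incompatible chains; since honest validators vote at most once per slot, these $\geq\frac{n}{3}$ validators would all be Byzantine, contradicting $f<\frac{n}{3}$. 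Hence $\chainfrozen[\merging{t}]_i\succeq\chain$.

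The \textsc{upon receiving} handler on \Cref{algo:3sf-ga-setchainfrozen-to-chainc} only overwrites $\chainfrozen_i$ with a $\chain^C_p$ that extends the current value, so $\chainfrozen_i$ evolves monotonically on $[\merging{t},\voting{(t+1)}]$, preserving $\chainfrozen[\voting{(t+1)}]_i\succeq\chain$. Applying the same fast-confirmation argument to $\V_i^{\proposing{(t+1)}}$ gives $(\chain^C,\cdot):=\texttt{fastconfirmsimple}(\V_i^{\proposing{(t+1)}},t)$ with $\chain^C\succeq\chain$, so $\mfcpropose{(t+1)}_i\succeq\chain^C\succeq\chain$ and $\mfcvote{(t+1)}_i\succeq\chainfrozen[\voting{(t+1)}]_i\succeq\chain$. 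In particular every $v_i\in H_{\voting{(t+1)}}$ casts its slot-$(t+1)$ \textsc{vote} for a chain extending $\chain$; the inductive step for $t'\geq t+2$ then follows by applying \Cref{lem:keep-voting-tob-fast-conf} repeatedly. The main subtlety is that $\texttt{fastconfirmsimple}$ does not filter equivocators, so ruling out a fast-confirmation candidate conflicting with $\chain$ relies on the quantitative bound $f<\frac{n}{3}$ rather than on honest non-equivocation alone.
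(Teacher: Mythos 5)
Your proof is correct and follows essentially the same route as the paper's: the base case $t'=t+1$ uses the joining protocol, synchrony, and the impossibility of conflicting slot-$t$ quorums under $f<\frac{n}{3}$ to show $\chainfrozen[\voting{(t+1)}]_i \succeq \chain$ (and hence the fork-choice outputs extend $\chain$), and the inductive step is discharged by \Cref{lem:keep-voting-tob-fast-conf}. If anything you are slightly more careful than the paper's own base case, which asserts $\chainfrozen_i = \chain$ and leaves $\mfcpropose{(t+1)}_i$ implicit, whereas you correctly allow the fast-confirmed candidate to be an extension of $\chain$ and handle the propose-time fork-choice explicitly via the guard on \Cref{line:algga-no-ffg-upon}.
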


\begin{proof}
The proof is by induction on $t'$.

\begin{description}
  \item[Base Case: $t'=t+1$.] 
  Assume that at round $\fastconfirming{t}$, an honest validator in \(H_{\voting{t}}\) fast confirms a chain $\chain$.
  Given that we assume $f < \frac{n}{3}$, conflicting quorum certificates cannot form in the same slot.
  Then, due to the joining protocol and the synchrony assumption, at time $\merging{t} = 4\Delta (t+1) - \Delta$, $\chain^\text{frozen, \merging{t}}_i = \chain$ for any validator $v_i \in H_{\voting{(t+1)}}$.

  Similarly, if a $[\textsc{propose}, \chain_p, \chain_p^C, Q^C_p, t+1, v_p]$ message is valid, then $\chain_p^C = \genesis \lor \chain_p^C = \chain$.
  This and \Crefrange{line:algga-no-ffg-upon}{algo:3sf-ga-setchainfrozen-to-chainc} imply that, at time $\voting{(t+1)}$, $\chain^{\text{frozen},\voting{(t+1)}}_i =  \chain$.

  Hence, because $\mfcvote{(t+1)}_i \succeq \chainfrozen[\voting{t+1}]_i $, $\mfcvote{(t+1)}_i \succeq \chain$, implying that all validators in $H_{\voting{(t+1)}}$ cast \textsc{vote} messages extending chain $\chain$.
  \item[Inductive Step: $t' > t+1.$] 
  Here we can just apply \Cref{lem:keep-voting-tob-fast-conf} to conclude the proof.\qedhere
\end{description}
\end{proof}

\begin{lemma}
\label{lem:vote-proposal-fast-conf}
Let $t$ be a slot with an honest proposer $v_p$ and assume that~$v_p$ casts a $[\textsc{propose}, \chain_p, \chain^C_p, Q^C_p, t, v_p]$ message. 
Then, for any slot $t'\geq t$, all validators in $H_{\voting{(t')}}$ cast a \textsc{vote} message for a chain extending $\chain_p$.
Additionally, for any slot $t'' > t$ and any validator $v_i\in H_{\voting{(t'')}}$, $\mfcpropose{t''}_i\succeq \chain_p$ and $\mfcvote{t''}_i\succeq \chain_p$.
\end{lemma}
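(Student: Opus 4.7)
The plan is to prove the lemma in two stages corresponding to the two claims. I will first handle the case $t' = t$ (the slot of the honest proposer), and then propagate the conclusion forward in time by induction, invoking \Cref{lem:keep-voting-tob-fast-conf} to handle both the remaining voting claim and the fork-choice claims simultaneously.

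For the case $t' = t$, the core step is to show that $\mfcvote{t}_i \preceq \chain_p$ for every $v_i \in H_{\voting{t}}$; this is the Graded Delivery property discussed in the text preceding the lemma. Concretely, I would argue the following. By synchrony and the joining-protocol rules, every $v_i \in H_{\voting{t}}$ is awake at round $\merging{(t-1)} = \proposing{t} - \Delta$, so the snapshot $\Vfrozen[\voting{t}]_i$ consists of messages available to the honest proposer by round $\proposing{t}$, hence $\Vfrozen[\voting{t}]_i \subseteq \V_p^{\proposing{t}}$. Moreover, by honesty of $v_p$, the certificate $(\chain^C_p, Q^C_p) = \texttt{fastconfirmsimple}(\V_p^{\proposing{t}}, t-1)$ satisfies $\chain^C_p \succeq \chainfrozen[\proposing{t}]_i$ (both are fast-confirmed chains in slot $t-1$, which by the $f < n/3$ assumption cannot conflict), so the upon-handler on \Crefrange{line:algga-no-ffg-upon}{algo:3sf-ga-setchainfrozen-to-chainc} either updates $\chainfrozen[\voting{t}]_i \gets \chain^C_p$ or leaves it at $\chain^C_p$ already. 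Finally, a monotonicity argument on $\mfc$ — its output is a prefix of $\mfc$ evaluated on a larger first and second argument starting from the same or a smaller base chain — yields $\mfcvote{t}_i \preceq \mfcpropose{t}_p$. Since $\chain_p = \mathsf{Extend}(\mfcpropose{t}_p, t) \succeq \mfcpropose{t}_p$, we obtain $\chain_p \succeq \mfcvote{t}_i$. Combined with the fact that by synchrony each $v_i \in H_{\voting{t}}$ receives the valid $\textsc{propose}$ message by round $\voting{t}$, the logic of \Cref{line:algga-no-ffg-vote-comm,line:algga-no-ffg-vote} forces $v_i$ to cast a \textsc{vote} for $\chain_p$, which trivially extends $\chain_p$.

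For $t' > t$ and for the additional claims on $\mfcpropose{t''}_i, \mfcvote{t''}_i$, I would proceed by induction on the slot index. The base case $t' = t + 1$ follows immediately by applying \Cref{lem:keep-voting-tob-fast-conf} with $\chain := \chain_p$ to slot $t$: the hypothesis of that lemma is exactly what we established above, and its conclusion delivers both $\mfcpropose{(t+1)}_i \succeq \chain_p$ and $\mfcvote{(t+1)}_i \succeq \chain_p$ for every $v_i \in H_{\voting{(t+1)}}$, along with the fact that every such $v_i$ casts a \textsc{vote} for a chain extending $\chain_p$. The inductive step for $t' > t+1$ is a direct re-application of \Cref{lem:keep-voting-tob-fast-conf} using the inductive hypothesis.

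The main obstacle is the Graded Delivery step in the base case: even though the proposer $v_p$ is honest, the second argument to $\mfc$ during the voting phase is the validator's own current view $\V^{\voting{t}}_i$, which is not straightforwardly contained in nor contains $\V_p^{\proposing{t}}$. Care is therefore required to argue that any chain $\chain \succeq \chainfrozen[\voting{t}]_i$ supported by a majority in $\Vfrozen[\voting{t}]_i \cap \V^{\voting{t}}_i$ is also supported by a majority in $\V_p^{\proposing{t}}$ with respect to the denominator $|\mathsf{S}(\V_p^{\proposing{t}}, t)|$; the non-equivocation and $\eta$-expiration filters behave monotonically under view inclusion, which is exactly the content of the Graded Delivery theorem of \cite{streamliningSBFT} that I would invoke rather than reprove here.
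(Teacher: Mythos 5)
Your proposal is correct and follows essentially the same route as the paper's proof: synchrony plus $f<\frac{n}{3}$ to establish $\chainfrozen[\voting{t}]_i = \chain^C_p$ after the upon-handler, the Graded Delivery property of~\cite{streamliningSBFT} to conclude that every $v_i \in H_{\voting{t}}$ votes for $\chain_p$, and induction via \Cref{lem:keep-voting-tob-fast-conf} for all later slots. The paper merely makes explicit the case distinction you fold into the claim $\mfcvote{t}_i \preceq \mfcpropose{t}_p$ (namely the degenerate case $\mfcvote{t}_i = \chainfrozen_i$, where $\chain_p \succeq \chain^C_p$ suffices directly), so no substantive difference remains.
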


\begin{proof}
The proof is by induction on $t'$.

\begin{description}
  \item[Base Case: $t' = t$.] 
  Suppose that in slot \(t\), an honest proposer $v_p$ sends a $[\textsc{propose}, \chain_p, \chain^C_p, Q^C_p t, v_p]$ message.
  Consider an honest validator \(v_i \in H_{\voting{t}}\).
  Note that due to the synchrony assumption, $\V^{\merging{(t-1)}}_i \subseteq \V^{\proposing{t}}_p$.
  Given that we assume $f < \frac{n}{3}$, this further implies that $\chain^{\text{frozen},\merging{(t-1)}}_i \neq \genesis \implies \chain^C_p  = \chain^{\text{frozen},\merging{(t-1)}}_i $.
  Hence, clearly $\chain^C_p \succeq \chain^{\text{frozen},\merging{(t-1)}}_i$.
  Therefore, due to \Crefrange{line:algga-no-ffg-upon}{algo:3sf-ga-setchainfrozen-to-chainc}, $\chain^{\text{frozen},\voting{t}}_i = \chain^C_p$.
  
  We know that either \(\left|\left(\Vfrozen_i\right)^{\mfcvote{t}_i,t} \cap \V_i^{\mfcvote{t}_i,t} \right| > \frac{\left|\mathsf{S}(\V_i,t)\right|}{2}\) or $\mfcvote{t}_i = \chain^\text{frozen}_i$.
  Let us consider each case separately.

  \begin{description}
    \item[Case 1: \(\left|\left(\Vfrozen\right)^{\mfcvote{t}_i,t} \cap \V_i^{\mfcvote{t}_i,t} \right| > \frac{\left|\mathsf{S}(\V_i,t)\right|}{2}\).]  
    By the Graded Delivery property~\cite{streamliningSBFT}, this implies that at time $\proposing{t}$, $\left|\V_p^{\mfcvote{t}_i,t}\right|>\frac{\left|\mathsf{S}(\V_p,t)\right|}{2}$ meaning that, due to \Cref{line:algga-no-ffg-new-block}, $\chain_p\succeq \mfcvote{t}_i$ and hence, due to \Cref{line:algga-no-ffg-vote,line:algga-no-ffg-vote-comm}, in slot $t$, $v_i$ casts a \textsc{vote} message for $\chain_p$.
  
    \item[Case 2: $\mfcvote{t}_i = \chain^\text{frozen}_i$.]
    Due to \Cref{line:algga-no-ffg-vote,line:algga-no-ffg-vote-comm}, $v_i$ still casts a \textsc{vote} for $\chain_p$ as $\chain_p \succeq \chain^C_p = \chain^\text{frozen}_i = \mfcvote{t}_i$.
  \end{description}
  \item[Inductive Step: $t' > t$.]
  Here we can just apply \Cref{lem:keep-voting-tob-fast-conf} to conclude the proof.\qedhere
\end{description}
\end{proof}

\begin{lemma}
  \label{lem:ga-confirmed-always-canonical}
  Let $r_i$ be any round and $r_j$ be any round such that $r_j\ge r_i$ and $r_j \in \{\proposing{\slot(r_j)}, \voting{\slot(r_j)}\}$. Then, for any validator~$v_i$ honest in round $r_i$ and any validator $v_j \in H_{r_j}$, $\Chain^{r_i}_i \preceq \mfc^{r_j}_j$.
\end{lemma}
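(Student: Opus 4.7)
The plan is to prove the claim by strong induction on $r_i$, combined with a case analysis on the most recent round at which $v_i$ actually modified $\Chain_i$. The base case ($v_i$ has never updated $\Chain_i$, so $\Chain_i^{r_i} = \genesis$) is immediate since $\genesis$ is a prefix of every chain. The stale case ($\Chain_i^{r_i} = \Chain_i^{r_i - 1}$) will follow from monotonicity of honesty (if $v_i$ is honest at $r_i$ it is honest at $r_i - 1$) together with the induction hypothesis at $r_i - 1$ applied with the same $r_j \geq r_i > r_i - 1$ and the same $v_j$. Otherwise $r_i \in \{\voting{t_0}, \fastconfirming{t_0}\}$ with $t_0 := \slot(r_i)$, and I will split according to which of the two update rules in \Cref{algo:prob-ga-fast} actually fired.

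For the fast-confirmation branch at $r_i = \fastconfirming{t_0}$, I will set $\Chain_i^{r_i} = \fastcand$ as the chain that $v_i$ fast-confirmed in slot $t_0$, observe that $r_j \geq \fastconfirming{t_0}$ together with $r_j \in \{\proposing{\slot(r_j)}, \voting{\slot(r_j)}\}$ forces $\slot(r_j) \geq t_0 + 1 > t_0$, and invoke \Cref{lem:one-fast-confirm-all-vote-fast-conf} with $t = t_0$ and $t' = \slot(r_j)$ to conclude $\mfc^{r_j}_j \succeq \fastcand = \Chain_i^{r_i}$. This part is essentially routine arithmetic plus one lemma application.

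For the $\kappa$-deep branch at $r_i = \voting{t_0}$, the induction hypothesis applied with $r_j = r_i$ and $v_j = v_i$ (valid because $v_i \in H_{r_i}$, since updating $\Chain_i$ requires being active) already gives $\Chain_i^{r_i-1} \preceq \mfcvote{t_0}_i$; this rules out the branch of the $\max$ at \Cref{line:algga-no-ffg-vote-chainava} where $\Chain_i^{r_i-1}$ is not a prefix of $\chaincanmfc$, and forces $\Chain_i^{r_i} = (\mfcvote{t_0}_i)^{\lceil \kappa, t_0}$. From here the argument is probabilistic: by the unpredictability and fairness of the proposer election together with $f < \frac{n}{3}$, with overwhelming probability in $\kappa$ some slot $t^* \in [t_0 - \kappa + 1, t_0 - 1]$ admits a proposer $v_p^*$ that is honest and active at round $\proposing{t^*}$ and therefore broadcasts a valid \textsc{propose} for a chain $\chain_p^*$ with $\chain_p^*.p = t^*$. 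Applying \Cref{lem:vote-proposal-fast-conf} with $t = t^*$ and $t'' = t_0 > t^*$ to $v_i$ gives $\chain_p^* \preceq \mfcvote{t_0}_i$; since $\chain_p^*.p = t^* > t_0 - \kappa$, this forces $\chain_p^* \succeq (\mfcvote{t_0}_i)^{\lceil \kappa, t_0} = \Chain_i^{r_i}$. A second application of the same lemma with $t'' = \slot(r_j) \geq t_0 > t^*$ and $v_j \in H_{r_j}$ then yields $\mfc^{r_j}_j \succeq \chain_p^* \succeq \Chain_i^{r_i}$.

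The main obstacle is precisely this $\kappa$-deep case: the window $[t_0 - \kappa + 1, t_0 - 1]$ has to be sized so that the honest proposal $\chain_p^*$ is simultaneously already reflected in $\mfcvote{t_0}_i$ (requiring $t^* < t_0$ in order to invoke \Cref{lem:vote-proposal-fast-conf} with $t'' = t_0$) and strictly deeper than the $\kappa$-deep prefix (requiring $t^* > t_0 - \kappa$). The per-round failure probability is negligible in $\kappa$, and because the time horizon $\Tconf$ is polynomial in $\kappa$, a single union bound over the rounds at which this argument is invoked keeps the overall failure probability negligible, in line with the probabilistic-safety framework of \Cref{sec:security}.
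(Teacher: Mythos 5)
Your proposal is correct and follows essentially the same route as the paper's proof: the paper argues via a minimal counterexample round (equivalent to your strong induction), reduces the non-trivial cases to exactly the same two situations — an actual update at a vote round, handled by finding an honest proposer in a $\kappa$-slot window and invoking \Cref{lem:vote-proposal-fast-conf}, and an actual fast confirmation, handled by \Cref{lem:one-fast-confirm-all-vote-fast-conf} — and dismisses unchanged-chain rounds by minimality just as you do with the induction hypothesis. The only differences (your slightly smaller proposer window and the explicit union bound over rounds) are cosmetic.
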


\begin{proof}

We proceed by contradiction.
Let $r_i$ be the smallest round such that there exist two honest validators $v_i$ and $v_j$, and round $r_j$ such that $r_j\ge r_i$ and $r_j \in \{\proposing{\slot(r_j)}, \voting{\slot(r_j)}\}$ and 
$\Chain^{r_i}_i \npreceq \mfc^{r_j}_j$, 
that is, $r_i$ is the first round where the chain confirmed by an honest validator conflicts with the output of $\mfc$ of (another) honest validator at a propose or vote round $r_j \geq r_i$.
Given the minimality of $r_i$, $\Chain^{r_i-1}_i \neq \Chain^{r_i}_i$ which then implies that $v_i \in H_{r_i}$.
This can only happen if $r_i$ is either a voting or a fast confirmation round.
Let $t_i= \mathrm{slot}(r_i)$ and proceed by analyzing each case separately. 
\begin{description}
\item[Case 1: $r_i$ is a vote round.] 
Due to \Cref{line:algga-no-ffg-vote-chainava}, $\Chain^{r_i}_i \succeq \left(\mfcvote{t_i}_i\right)^{\lceil \kappa}$.
Let us now consider two sub cases.
  \begin{description}
    \item[Case 1.1: $\Chain^{r_i}_i = \left(\mfcvote{t_i}_i\right)^{\lceil \kappa}$.] 
    With overwhelming probability (Lemma 2~\cite{rlmd}), there exists at least one slot \(t_p\) in the interval \([t_i - \kappa, t_i)\) with an honest proposer $v_p$.
    Let $\chain_p$ be the chain \textsc{propose}d by $v_p$ in slot $t_p$.
    Given that $t_p < t_i$, \Cref{lem:vote-proposal-fast-conf} implies that 
    $\mfcvote{t_i}_i \succeq \chain_p$.
    Then, because $t_p\geq t_i-\kappa$, we have that $\chain_p \succeq  \left(\mfcvote{t_i}_i\right)^{\lceil \kappa} = \Chain^{r_i}_i$.
    Because $t_p < \slot(r_j)$, \Cref{lem:vote-proposal-fast-conf} also implies that $\chain^{r_j}_j \succeq \chain_p \succeq \Chain^{r_i}_i$ leading to a contradiction.
    \item[Case 1.2: $\Chain^{r_i}_i \succ \left(\mfcvote{t_i}_i\right)^{\lceil \kappa}$.] 
    This case implies that $\Chain^{r_i}_i  = \Chain^{r_i-1}_i$.
    From the minimality of $r_i$ we reach a contradiction.
  \end{description}
  \item[Case 2: $r_i$ is a fast confirmation round.]
  Note that this implies that $t_i < \slot(r_j)$.
  Because of the minimality of $r_i$, we only need to consider the case that $(\Chain^{r_i}_i, Q) = \texttt{fastconfirmsimple}(\V^{r_i}_i,t_i) \land Q \neq \emptyset$.
  Therefore, we can apply \Cref{lem:one-fast-confirm-all-vote-fast-conf} to conclude that $\chain^{r_j}_j \succeq \Chain^{r_i}_i$ reaching a contradiction.\qedhere
\end{description}
\end{proof}

\begin{theorem}[Reorg Resilience]\label{thm:reorg-res-prop-tob}
  \Cref{algo:prob-ga-fast} is $\eta$-reorg-resilient. 
\end{theorem}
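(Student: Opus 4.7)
The plan is to derive \Cref{thm:reorg-res-prop-tob} essentially as a corollary of \Cref{lem:vote-proposal-fast-conf} and \Cref{lem:ga-confirmed-always-canonical}. The strategy is to find, for any candidate reorg scenario, a single witness validator $v_j$ active at a common later vote round $r^\ast$ whose fork-choice output $\mfcvote{t^\ast}_j$ dominates both the proposed chain and the confirmed chain, forcing them to lie on a common prefix path.

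Concretely, fix any round $r \geq 0$, any validator $v_i$ honest in round $r$, any slot $t \geq 0$, and any chain $\chain_p$ proposed in slot $t$ by a proposer $v_p$ honest at round $\proposing{t}$. I would choose $t^\ast := \max(\slot(r)+1,\, t+1)$ and set $r^\ast := \voting{t^\ast}$. By construction $t^\ast > t$ and $r^\ast = 4\Delta t^\ast + \Delta > r$, and $r^\ast$ is a vote round. Next, I would invoke \Cref{eq:pi-sleepiness} at slot $t^\ast$ (which applies since $\GST=0$) to conclude $|H_{\voting{t^\ast}} \setminus A_{\voting{(t^\ast+1)}}| > 0$, and thus there exists a validator $v_j \in H_{r^\ast}$.

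With the witness in hand, I would apply \Cref{lem:ga-confirmed-always-canonical} with $r_i = r$, $r_j = r^\ast \in \{\proposing{\slot(r^\ast)}, \voting{\slot(r^\ast)}\}$, $v_i$ as the honest validator at $r$, and $v_j$ as the honest validator active at $r^\ast$, yielding
\[
\Chain^r_i \preceq \mfcvote{t^\ast}_j.
\]
Then I would apply \Cref{lem:vote-proposal-fast-conf} with $t'' = t^\ast > t$ and the same $v_j \in H_{\voting{t^\ast}}$, yielding
\[
\chain_p \preceq \mfcvote{t^\ast}_j.
\]
Since both $\Chain^r_i$ and $\chain_p$ are prefixes of the same chain $\mfcvote{t^\ast}_j$, they lie on a common chain and hence cannot conflict. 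Since $r$, $v_i$, $t$, and $\chain_p$ were arbitrary, this establishes $\eta$-reorg-resilience after slot $0$ and time $0$.

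There is no genuine obstacle beyond bookkeeping: the heavy lifting was already done by \Cref{lem:keep-voting-tob-fast-conf,lem:vote-proposal-fast-conf,lem:ga-confirmed-always-canonical}. The only subtlety is ensuring the witness round $r^\ast$ is strictly later than $r$ \emph{and} is of the right type (a propose/vote round) so that \Cref{lem:ga-confirmed-always-canonical} applies, and simultaneously that $t^\ast > t$ so that \Cref{lem:vote-proposal-fast-conf} applies; the definition $t^\ast := \max(\slot(r)+1, t+1)$ handles both, and $\eta$-compliance via \Cref{eq:pi-sleepiness} guarantees the witness validator $v_j$ exists.
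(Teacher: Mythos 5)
Your proposal is correct and follows essentially the same route as the paper's own proof: pick a witness validator active at a vote round of a slot strictly later than both $\slot(r)$ and $t$ (the paper takes any $t_j > \max(t_p,\slot(r_i))$, you take $t^\ast=\max(\slot(r)+1,t+1)$), get $\mfcvote{t^\ast}_j \succeq \chain_p$ from \Cref{lem:vote-proposal-fast-conf} and $\mfcvote{t^\ast}_j \succeq \Chain^{r}_i$ from \Cref{lem:ga-confirmed-always-canonical}, and conclude non-conflict. The only cosmetic difference is that you justify non-emptiness of $H_{\voting{t^\ast}}$ via \Cref{eq:pi-sleepiness} while the paper cites \Cref{eq:async-condition}; either constraint suffices for that step.
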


\begin{proof}
  Take a slot $t_p$ with an honest proposer $v_p$ who sends a \textsc{propose} message for chain $\chain_p$.
  Take also any round $r_i$ and validator $v_i$ honest in round $r_i$.
  Now let $t_j$ be any slot such that $t_j > \max(t_p,\slot(r_i))$.
  By Constraint~\eqref{eq:async-condition} we know that $H_{\voting{(t_j)}}$ is not empty.
  So, pick any validator $v_j \in H_{\voting{(t_j)}}$.
  \Cref{lem:vote-proposal-fast-conf} implies that $\mfcvote{t_j}_j \succeq \chain_p$.
  \Cref{lem:ga-confirmed-always-canonical} implies that $\mfcvote{t_j}_j \succeq \Chain^{r_i}_i$.
  Hence, $\chain_p$ does not conflict with $\Chain^{r_i}_i$.
\end{proof}


{
\begin{lemma}\label{lem:ga-mfc-proposer-shorter-than-t}
  For any slot $t$ and  validator $v_i \in H_\proposing{t}$, $\mfcpropose{t}_i.p < t$.
\end{lemma}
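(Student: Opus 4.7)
The plan is strong induction on $t$, establishing simultaneously the companion fact that, for every $t' \leq t$ and every $v_k \in H_{\voting{t'}}$, $\mfcvote{t'}_k.p < t'$, which by inspection of the Vote code forces $v_k$'s slot-$t'$ \textsc{vote} to be for a chain of height at most $t'$. The base case $t=0$ is immediate: at round $\proposing{0}$ no \textsc{vote} message has been cast, so $\chain^C = \genesis$ and $\mfcpropose{0}_i = \genesis$ with $p = -1$. The key structural observation used throughout is that at $\proposing{t}$ the view $\V^{\proposing{t}}_i$ contains only votes whose slot field is $< t$ (slot-$t$ voting happens at $\voting{t}$), and by the inductive hypothesis every honest such vote is for a chain of height at most $t-1$.

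By definition, $\mfcpropose{t}_i$ is either $\chain^C$ or a strict extension of $\chain^C$ selected via the $\mfc$ majority rule, so I would bound each case separately. If $\chain^C \neq \genesis$, then at least $\frac{2n}{3}$ validators cast slot-$(t-1)$ \textsc{vote} messages for chains extending $\chain^C$; since $f < \frac{n}{3}$, more than $\frac{n}{3}$ of them are honest, and by induction they voted for chains of height $\leq t-1$, whence $\chain^C.p \leq t-1 < t$. For the extension case, I would argue by contradiction: suppose $\chain \succ \chain^C$ with $\chain.p \geq t$ satisfies the majority condition at $\proposing{t}$. Because every honest vote in $\V^{\proposing{t}}_i$ is for a chain of height $< t$, none of these extends $\chain$, so only Byzantine votes can populate $\V^{\chain,t} \cap (\V^{\proposing{t}}_i)^{\chain,t}$; after $\FIL_{\text{eq}}$ and $\FIL_{\text{lmd}}$ each Byzantine contributes at most one, giving cardinality at most $|A_{\proposing{t}}|$. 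Meanwhile, \Cref{eq:pi-sleepiness} applied at slot $t-1$ yields $|H_{\voting{(t-1)}} \setminus A_{\voting{t}}| > |A_{\voting{t}}| \geq |A_{\proposing{t}}|$ honest validators whose slot-$(t-1)$ votes, by synchrony ($\GST = 0$), are in $\V^{\proposing{t}}_i$, non-expired, and non-equivocating; hence they sit in $\mathsf{S}(\V^{\proposing{t}}_i, t)$ but not in $\V^{\chain,t}$. This forces $|\mathsf{S}(\V^{\proposing{t}}_i, t)| > 2|A_{\proposing{t}}| \geq 2|\V^{\chain,t} \cap (\V^{\proposing{t}}_i)^{\chain,t}|$, contradicting majority. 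The identical counting argument run at $\voting{t}$, using $\Vfrozen[\voting{t}]_i$ and $\V^{\voting{t}}_i$ in place of $\V^{\proposing{t}}_i$, yields $\mfcvote{t}_k.p < t$ and closes the companion induction.

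The main obstacle is orchestrating the joint induction rather than the counting itself: bounding $\chain^C$ and ruling out the extension case both rely on honest slot-$(t-1)$ votes being for chains of bounded height, which is precisely the inductive claim for $\mfcvote{(t-1)}_k$. Once this coordination is in place, everything reduces to the standard dominance of honest participation over the adversary granted by \Cref{eq:pi-sleepiness}, combined with the structural fact that honest proposals and fork-choice outputs never exceed their own slot index.
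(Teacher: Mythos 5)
Your proof is correct in substance, but it takes a different route from the paper's. The paper's argument is a two-line induction that outsources the key step to the Validity property of Graded Agreement cited from~\cite{streamliningSBFT}: it suffices to show no honest validator ever \textsc{vote}s for a chain of height $\geq t$, and this follows inductively because $\mfcvote{(t-1)}.p < t-1$ and \Cref{line:algga-no-ffg-vote-comm} cap honest slot-$(t-1)$ votes at height $t-1$. You keep the same inductive skeleton (honest votes have height at most their slot, hence fork-choice outputs stay below the current slot) but re-derive the ``anchored by honest votes'' fact from first principles: the fast-confirmed start chain $\chain^C$ is bounded via quorum intersection with $f<\frac{n}{3}$, and a too-high chain winning the majority test is excluded by the counting under \Cref{eq:pi-sleepiness}, exactly the count used in \Cref{lem:keep-voting-tob-fast-conf}. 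What your version buys is self-containment (no appeal to the GA Validity black box) at the cost of length; what the paper's buys is brevity and reuse of an already-proved property.

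Two small points. First, your intermediate claim $|\mathsf{S}(\V^{\proposing{t}}_i,t)| > 2|A_{\proposing{t}}|$ does not follow from \Cref{eq:pi-sleepiness} alone (if few adversarial validators ever voted, $|\mathsf{S}|$ is only guaranteed to exceed $|A_{\voting{t}}|$); the correct count is $|\mathsf{S}| \geq |H_{\voting{(t-1)}}\setminus A_{\voting{t}}| + a > |A_{\voting{t}}| + a \geq 2a$, where $a$ is the number of adversarial validators contributing unexpired votes, and the intersection has size at most $a$ — so the needed conclusion $|\text{intersection}| < \frac{|\mathsf{S}|}{2}$ still holds; this is a fixable slip, not a gap. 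Second, at vote time the fork choice starts from $\chainfrozen_i$ (possibly overwritten by the proposal's $\chain^C_p$), not from a locally computed fast confirmation, so your companion claim needs the same quorum-intersection bound applied to that start chain; the argument is identical to your Case 1 and your base case is in the same spirit as the paper's (both implicitly assume no votes are in view at $\proposing{0}$).
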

\begin{proof}
  Due to the Validity property of Graded Agreement~\cite{streamliningSBFT} and the fact that $\genesis.p <0$,
  it is sufficient to prove that no validator honest in round $\proposing{t}$ has ever sent a \textsc{vote} message for a chain $\chain$ with $\chain.p \geq t$.
  The proof is by induction.
  \begin{description}
    \item[Base Case: $t=0$.] Obvious as no validator in $H_\proposing{t}$ has ever sent any message.
    \item[Inductive Step: $t >0$.] 
    Due to the inductive hypothesis and the fact that honest nodes do not send any \textsc{vote} message in round $\proposing{(t-1)}$, by the Validity property of Graded Agreement, we know that $\mfcvote{t-1}.p < t-1$.
    Then, the proof follows from \Cref{line:algga-no-ffg-vote-comm}.\qedhere
  \end{description}
\end{proof}
}

\begin{theorem}
    \label{thm:dyn-avail-fast-conf-tob}
\Cref{algo:prob-ga-fast} is $\eta$-dynamically-available.
\end{theorem}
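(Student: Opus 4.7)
The plan is to prove the two ingredients of $\eta$ Dynamic Availability separately, namely $\eta$ Safety after time $0$ and $\eta$ Liveness after time $0$ with $\Tconf = O(\kappa)$, relying on the machinery already developed (especially \Cref{lem:ga-confirmed-always-canonical}, \Cref{lem:vote-proposal-fast-conf}, \Cref{lem:ga-mfc-proposer-shorter-than-t}, and \Cref{prop:extend}), together with the standard ``honest-proposer-in-$\kappa$-slots'' argument (Lemma~2 of~\cite{rlmd}). Since $\GST = 0$ by the definition, \Cref{eq:pi-sleepiness} applies from slot $0$ onward, and in particular $H_{\voting{t}}$ is non-empty in every slot $t$, which will be used repeatedly.

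For $\eta$ Safety, I would take any two validators $v_i, v_j$ honest at rounds $r \leq r'$, and pick a slot $t^\star$ with $\voting{t^\star} > r'$. By \Cref{eq:pi-sleepiness} there exists some $v_k \in H_{\voting{t^\star}}$. Applying \Cref{lem:ga-confirmed-always-canonical} with $r_j = \voting{t^\star}$ once for $(v_i, r)$ and once for $(v_j, r')$ yields $\Chain_i^r \preceq \mfcvote{t^\star}_k$ and $\Chain_j^{r'} \preceq \mfcvote{t^\star}_k$. Two prefixes of the same chain are totally ordered by $\preceq$, so safety follows immediately.

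For $\eta$ Liveness, fix $r \geq 0$, $\tx \in \txpool^r$, and a validator $v_i \in H_{r_i}$ for $r_i \geq r + \Tconf$ with $\Tconf = 8\Delta\kappa + O(\Delta)$. By Lemma~2 of~\cite{rlmd}, with overwhelming probability there exists a slot $t_p \in [\slot(r)+1,\, \slot(r)+\kappa]$ with a proposer $v_p$ honest at round $\proposing{t_p}$. By \Cref{lem:ga-mfc-proposer-shorter-than-t}, $\mfcpropose{t_p}_p.p < t_p$, so by \Cref{prop:extend} the proposal $\chain_p = \mathsf{Extend}(\mfcpropose{t_p}_p, t_p)$ satisfies $\chain_p.p = t_p$ and $\tx \in \txpool^{\proposing{t_p}} \subseteq \chain_p$ (since $\txpool$ is monotone and $\proposing{t_p} \geq r$). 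For every slot $t \geq t_p + \kappa$ and every $v_k \in H_{\voting{t}}$, \Cref{lem:vote-proposal-fast-conf} gives $\mfcvote{t}_k \succeq \chain_p$, and since $\chain_p.p = t_p \leq t - \kappa$ we have $(\mfcvote{t}_k)^{\lceil\kappa,t} \succeq \chain_p$. Moreover, by \Cref{lem:ga-confirmed-always-canonical} applied to $v_k$'s previous confirmed chain, $\Chain_k \preceq \mfcvote{t}_k$ right before the update on \Cref{line:algga-no-ffg-vote-chainava}, so the $\max$ in that line returns a chain $\succeq (\mfcvote{t}_k)^{\lceil\kappa,t} \succeq \chain_p$; hence $\tx \in \Chain_k^{\voting{t}}$.

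The main obstacle will be bridging from ``some active validator confirms $\chain_p$ at some $\voting{t}$'' to ``the specific $v_i$ active at round $r_i$ has $\tx \in \Chain_i^{r_i}$,'' given the sleepy model's wake/sleep dynamics and the joining protocol. I plan to resolve it by choosing $t \geq t_p + \kappa$ with $\voting{t} \leq r_i$ and applying the argument above to $v_i$ itself at any such $\voting{t}$ at which $v_i$ is active (the budgeted $\Tconf$ slack accommodates the one-slot joining delay). For rounds strictly between $\voting{t}$ and $r_i$ in which $v_i$ may be asleep, $\Chain_i$ is unchanged by the sleepy-model convention; for vote/fast-confirm rounds at which $v_i$ is active, \Cref{lem:ga-confirmed-always-canonical} (applied to $v_i$'s own previous $\Chain_i$) ensures the new $\mfcvote{}_i$ (resp.\ fast-confirmed chain, via \Cref{lem:one-fast-confirm-all-vote-fast-conf}) extends the previous $\Chain_i$, so the $\max$ in \Cref{line:algga-no-ffg-vote-chainava} (resp.\ the assignment in \Cref{line:algga-no-ffg-set-chaava-to-bcand}) never shrinks $\Chain_i$, preserving $\chain_p \preceq \Chain_i^{r_i}$ and therefore $\tx \in \Chain_i^{r_i}$. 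Combining with the safety proved above gives $\eta$ Security with $\Tconf = O(\kappa)$, i.e., $\eta$-Dynamic Availability.
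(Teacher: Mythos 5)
Your proposal follows essentially the same route as the paper's proof: the same decomposition into $\eta$ Safety and $\eta$ Liveness, the same safety argument (pick a future slot $t^\star$, use \Cref{eq:pi-sleepiness} to get $H_{\voting{t^\star}}\neq\emptyset$, and apply \Cref{lem:ga-confirmed-always-canonical} twice so both confirmed chains are prefixes of $\mfcvote{t^\star}_k$), and the same liveness skeleton (honest proposer within $\kappa$ slots via Lemma~2 of~\cite{rlmd}, transaction inclusion via \Cref{lem:ga-mfc-proposer-shorter-than-t} and \Cref{prop:extend}, then \Cref{lem:vote-proposal-fast-conf} plus \Cref{line:algga-no-ffg-vote-chainava} to get $\chain_p\preceq(\mfcvote{t_i}_i)^{\lceil\kappa,t_i}\preceq\Chain_i$). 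The paper handles the bridge to the specific round $r_i$ slightly more directly, using the joining protocol to note that $v_i$'s first active round at or after $\voting{(t+2\kappa)}$ is a vote round, rather than your explicit ``carry the property forward from an earlier vote round'' step, but this is a cosmetic difference.

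One caveat on your treatment of fast-confirmation rounds: \Cref{lem:one-fast-confirm-all-vote-fast-conf} is forward-looking (a fast-confirmed chain is extended by all \emph{later} fork-choice outputs); it does not say the newly fast-confirmed chain extends the validator's \emph{previous} confirmed chain, so it cannot by itself justify your ``never shrinks $\Chain_i$'' claim. The paper closes this step differently, and this is what your argument actually needs: since every honest \textsc{vote} of that slot is for an extension of $\chain_p$ (\Cref{lem:vote-proposal-fast-conf}) and $f<\frac{n}{3}$, any chain returned by $\texttt{fastconfirmsimple}$ in that slot extends $\chain_p$, so the assignment in \Cref{line:algga-no-ffg-set-chaava-to-bcand} preserves $\chain_p\preceq\Chain_i$. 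Replace the citation accordingly and the proof matches the paper's.
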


\begin{proof}
We begin by proving the $\eta$ Liveness of the protocol with a confirmation time of {$T_{\text{conf}} = 8\kappa\Delta + \Delta$}.
We prove liveness by first considering the $\kappa$-deep rule only.
Take a round \( r \) at slot \( t = \text{slot}(r) \), another round {\( r_i \geq r + 8\kappa\Delta + \Delta \geq  4\Delta(t+2\kappa)+\Delta = \voting{(t+2\kappa)}\)}, and an honest validator \( v_i \in H_{r_i} \).
Let $t_i = \mathrm{slot}(r_i)$.
Due to the joining protocol, we know that the first active round for $v_i$, at or after $\voting{(t+2\kappa)}$, is a vote round.
There is a high probability of finding a slot \( t_p  \in [t + 1, t + \kappa]\) hosted by an honest proposer (Lemma 2~\cite{rlmd}).
Let $\chain_p$ be the chain \textsc{propose}d by the honest proposer $v_p$ in slot $t_p$.
{Due to \Cref{lem:ga-mfc-proposer-shorter-than-t} and \Cref{prop:extend}, we know that $\chain_p$ includes all of the transaction in $\txpool^\proposing{t_p}$.
Given that $\proposing{t_p} \geq r$, $\txpool^r \subseteq \txpool^\proposing{t_p}$, which implies that $\chain_p$ includes all of the transactions in $\txpool^r$.
}
Given that $\mathrm{slot}(r_i)>t_p$, as a consequence of \Cref{lem:vote-proposal-fast-conf}, $\mfcvote{t_i}_i \succeq \chain_p$.
Note that \Cref{line:algga-no-ffg-vote-chainava} implies that $\Chain^{r_i}_i \succeq \left(\mfcvote{t_i}_i\right)^{\lceil \kappa,t_i}$.
Then, because \( t_p \leq t + \kappa \leq t_i - \kappa \), \( \chain_p \preceq \left(\mfcvote{t_i}_i\right)^{\lceil \kappa,t_i}\) and hence $\chain_p \preceq \Chain_i^{r_i}${, which implies that $\Chain_i^{r_i}$ includes any transaction in $\txpool^r$}.

We now want to show that fast confirmation does not interfere.
Given that $t_i \geq t_j$, from \Cref{lem:vote-proposal-fast-conf}, we know that any \textsc{vote} cast in slot $t_i$ by honest validators are for chains extending $\chain_p$, given that we assume $f< \frac{n}{3}$, if $r_i$ is a fast confirmation round and  $v_i$  sets $\Chain^{r_i}_i$, then $\Chain^{r_i}_i\succeq\chain_p$ still holds.

We now show $\eta-$safety.
Take any two rounds $r_i$ and $r_j$ and validators $v_i$ and $v_j$ honest in round $r_i$ and $r_j$ respectively.
Now let $t_k$ be any slot such that $t_k > \max(\slot(r_j),\slot(r_i))$.
By Constraint~\eqref{eq:pi-sleepiness} we know that $H_{\voting{(t_k)}}$ is not empty.
So, pick any validator $v_k \in H_{\voting{(t_k)}}$.
\Cref{lem:ga-confirmed-always-canonical} implies that $\mfcvote{t_k}_k \succeq \Chain^{r_i}_i$ and $\mfcvote{t_k}_k \succeq \Chain^{r_j}_j$.
Hence, $\Chain^{r_i}_i$ does not conflict with $\Chain^{r_j}_j$.
\end{proof}

\begin{lemma}[Liveness of fast confirmations]
\label{thm:fast-liveness-tob}
Take a slot $t$ in which \(|H_{\voting{t}}| \geq \frac{2}{3}n\).
If in slot $t$ an honest validator sends a \textsc{propose} message for chain $\chain_p$, then, for any validator $ v_i \in H_{\fastconfirming{t}}$, $\Chain^{\fastconfirming{t}}_i \succeq \chain_p$.
\end{lemma}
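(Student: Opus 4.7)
The plan is to combine \Cref{lem:vote-proposal-fast-conf} (instantiated at $t' = t$) with the participation assumption $|H_{\voting{t}}| \geq \frac{2}{3}n$ and the synchrony bound. The former guarantees that every validator in $H_{\voting{t}}$ gossips, at round $\voting{t}$, a \textsc{vote} message for some chain extending $\chain_p$; the participation assumption then yields at least $\frac{2}{3}n$ such \textsc{vote} messages being cast. Since $\GST = 0$ and $\fastconfirming{t} = \voting{t} + \Delta$, every one of these messages reaches, and therefore lies in, the view $\V_i^{\fastconfirming{t}}$ of every validator $v_i \in H_{\fastconfirming{t}}$.

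From this I would deduce that when $v_i$ evaluates $\texttt{fastconfirmsimple}(\V_i^{\fastconfirming{t}}, t)$ the chain $\chain_p$ belongs to $\fastcands$ and the associated certificate $Q$ is non-empty, so the guard at \Cref{line:algga-no-ffg-if-set-chaava-to-bcand} is satisfied and \Cref{line:algga-no-ffg-set-chaava-to-bcand} assigns $\Chain_i^{\fastconfirming{t}} \gets \fastcand := \max(\fastcands)$. The remaining task is to show $\fastcand \succeq \chain_p$ rather than merely $\fastcand.p \geq \chain_p.p$.

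The main obstacle, and the only non-trivial step, is to rule out that the chosen $\fastcand$ conflicts with $\chain_p$ while being of comparable or greater length. I would argue this using the fact that an honest validator casts at most one \textsc{vote} per slot together with $f < \frac{n}{3}$: if two mutually conflicting chains $\chain_1, \chain_2 \in \fastcands$ existed, their two supporter sets (each of size $\geq \frac{2}{3}n$) would have to intersect in at least $\frac{n}{3}$ validators, each of whom would have cast \textsc{vote} messages for two conflicting chains in slot $t$, forcing the number of equivocating (hence adversarial) validators to be at least $\frac{n}{3}$, contradicting $f < \frac{n}{3}$. Hence every element of $\fastcands$ is comparable with $\chain_p$ under $\preceq$, the maximum by height is an extension of $\chain_p$, and thus $\Chain_i^{\fastconfirming{t}} = \fastcand \succeq \chain_p$, completing the proof.
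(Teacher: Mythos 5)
Your proof is correct and follows essentially the same route as the paper's: invoke \Cref{lem:vote-proposal-fast-conf} at $t'=t$, use $|H_{\voting{t}}|\geq\frac{2}{3}n$ together with synchrony to place a quorum of slot-$t$ \textsc{vote} messages for (extensions of) $\chain_p$ in every view at round $\fastconfirming{t}$, and conclude via \Cref{line:algga-no-ffg-if-set-chaava-to-bcand,line:algga-no-ffg-set-chaava-to-bcand}. The only difference is that you make explicit the $f<\frac{n}{3}$ quorum-intersection (no conflicting quorum without $\frac{n}{3}$ equivocators) argument showing $\max(\fastcands) \succeq \chain_p$, a step the paper's proof leaves implicit.
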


\begin{proof}
By assumption we have that \(|H_{\voting{t}}| \geq \frac{2}{3}n\), implying from \Cref{lem:vote-proposal-fast-conf} that every honest validator in \(H_{\voting{t}} \) casts a \textsc{vote} message for $\chain_p$. It follows that any validator $ v_i \in H_{\fastconfirming{t}}$ receives a quorum of \textsc{vote} messages for $\chain_p$. 
Hence, due to \Cref{line:algga-no-ffg-if-set-chaava-to-bcand,line:algga-no-ffg-set-chaava-to-bcand},$\Chain^{\fastconfirming{t}}_i \succeq \chain_p$.
\end{proof}

\medskip

Finally, we move to 
demonstrating that \Cref{algo:prob-ga-fast} also provides $\eta$ Asynchrony Reorg Resilience and $\eta$ Asynchrony Safety Resilience, as defined in Section~\ref{sec:security}.
{
\Cref{lem:asyn-induction} shows that if from slot $t_a$ till a slot $t'$ falling within the short period of asynchrony, all aware validators \textsc{vote} for chains extending $\chain$, then so will do all aware validators at the next slot $t'+1$.
Next, \Cref{lem:asyn-induction3} leverages this result proving that, if in a slot before the short period of asynchrony starts, all active validators \textsc{vote} for chains extending $\chain$, then the chain confirmed by any aware validator at any subsequent round does not conflict with $\chain$.
This and \Cref{thm:reorg-res-prop-tob} allow then concluding in \Cref{thm:async-resilience-tob} that \Cref{algo:prob-ga-fast} is $\eta$-asynchrony-reorg-resilient.
Finally, \Cref{thm:async-safety-resilience-tob} makes use of \Cref{lem:ga-confirmed-always-canonical,lem:asyn-induction3}, and \Cref{thm:dyn-avail-fast-conf-tob} to prove that \Cref{algo:prob-ga-fast} is also $\eta$-asynchrony-safety-resilient.
}

\begin{lemma}
  \label{lem:asyn-induction}
    Let $t$ be any slot in $[t_a, t_a + \pi]$ with $\pi > 0$.
    Assume that in any slot $t' \in [t_a,t]$, all validators in $W_{\voting{t'}}$ cast \textsc{vote} messages for chains extending chain $\chain$.
    Then, in slot $t+1$, all validators in $W_\voting{(t+1)}$ also cast \textsc{vote} messages for chains extending chain $\chain$.
\end{lemma}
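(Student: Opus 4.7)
The plan is to mirror the counting argument from the proof of \Cref{lem:keep-voting-tob-fast-conf}, but adapted to exploit the asynchrony-specific constraints \Cref{eq:async-condition,eq:async-condition3} in place of \Cref{eq:pi-sleepiness}. I would fix an arbitrary $v_i \in W_\voting{(t+1)}$; since $t+1 \in [t_a+1, t_a+\pi+1]$, this set equals $H_\voting{(t+1)} \cap H_\voting{t_a}$. It suffices to establish $\mfcvote{(t+1)}_i \succeq \chain$, for then $v_i$ either casts a \textsc{vote} for $\mfcvote{(t+1)}_i$ itself or for a \textsc{propose}d chain extending it (\Cref{line:algga-no-ffg-vote-comm,line:algga-no-ffg-vote}), both of which extend $\chain$.

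I would then split on $\chainfrozen[\voting{(t+1)}]_i$. If it already extends $\chain$ we are done, since $\mfc(\cdot,\cdot,\chainfrozen[\voting{(t+1)}]_i,\cdot) \succeq \chainfrozen[\voting{(t+1)}]_i$. The interesting case is when it does not, and here the goal is to exhibit a chain $\chain' \succeq \chain$ that satisfies the strict majority condition in $\mfc$. The central intermediate claim I would prove is that the intersection $\Vfrozen[\voting{(t+1)}]_i \cap \V^\voting{(t+1)}_i$ contains the latest \textsc{vote} of every validator in $H_\voting{t_a} \setminus A_\voting{(t+1)}$, and that each such latest \textsc{vote} is for a chain extending $\chain$. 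For the first part: since honesty is monotonic and $v_i$ is honest at $\voting{(t+1)}$, we have $v_i \in H_\voting{t_a} \setminus A_\voting{(t_a+1)}$, so \Cref{eq:async-condition3} places $v_i$ in $H_{\voting{t_a}+\Delta}$, and synchrony at $\voting{t_a}$ (pre-asynchrony) ensures $v_i$ has received every slot-$t_a$ \textsc{vote} from $W_\voting{t_a}=H_\voting{t_a}$ by round $\voting{t_a}+\Delta \le \merging{t_a}$; these gossiped messages persist in $v_i$'s view and appear in $\Vfrozen[\voting{(t+1)}]_i$ (captured at $\merging{t}$ when $v_i$ is active, or otherwise during a joining-protocol catch-up). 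For the second part: by the induction hypothesis applied at every slot in $[t_a, t]$ in which such a validator was active, its latest \textsc{vote} is on a chain extending $\chain$; non-expiration at slot $t+1$ holds because $\eta = \pi + 2$ and $t \leq t_a + \pi$ together give $t_a \geq t - \eta$; and the equivocation filter leaves these votes intact, since honest validators cast a single \textsc{vote} per slot by the protocol.

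To finish, I would bound $|\mathsf{S}(\V^\voting{(t+1)}_i, t+1)|$ by splitting its elements into those in $H_\voting{t_a} \setminus A_\voting{(t+1)}$ and the remainder, and observing that any voter of the second kind is either adversarial at $\voting{(t+1)}$ or was active at some round in the expiration window but not at $\voting{t_a}$, hence lies in $A_\voting{(t+1)} \cup (H_{\voting{(t+1-\eta)},\voting{t}} \setminus H_\voting{t_a})$. \Cref{eq:async-condition} instantiated at $t' = t+1 \in [t_a+1, t_a+\pi+1]$ then bounds this second set strictly below $|H_\voting{t_a} \setminus A_\voting{(t+1)}|$, which together with the intermediate claim gives the strict majority required by $\mfc$ at some chain extending $\chain$, hence $\mfcvote{(t+1)}_i \succeq \chain$. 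The main obstacle I anticipate is the bookkeeping around the joining protocol: since $v_i$ may have slept for arbitrary stretches within $[\voting{t_a}, \voting{(t+1)})$, one must argue carefully that whichever freeze ultimately defines $\Vfrozen[\voting{(t+1)}]_i$ is taken from a view that already captures the slot-$t_a$ votes of $W_\voting{t_a}$, and likewise that these votes survive into $\V^\voting{(t+1)}_i$ so that they lie in the intersection that $\mfc$ actually inspects.
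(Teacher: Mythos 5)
Your counting core is sound and coincides with the paper's argument: the expiration period $\eta=\pi+2$ keeps the slot-$t_a$ votes alive at slot $t+1$, \Cref{eq:async-condition3} guarantees that every validator in $H_{\voting{t_a}}\setminus A_{\voting{(t_a+1)}}$ received all slot-$t_a$ votes before the asynchrony begins, the lemma's assumption makes the latest votes of all validators in $H_{\voting{t_a}}\setminus A_{\voting{(t+1)}}$ extend $\chain$, and \Cref{eq:async-condition} instantiated at $t'=t+1$ turns these validators into a strict majority of $\mathsf{S}(\V_i,t+1)$. That is exactly the paper's route.

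However, your case analysis on the fork-choice starting point has a genuine gap. You split into ``$\chainfrozen[\voting{(t+1)}]_i$ extends $\chain$'' and ``it does not'', and in the second case you only exhibit a majority-supported chain extending $\chain$. But $\mfc(\Vfrozen_i,\V_i,\chainfrozen_i,t+1)$ by definition returns only chains extending its third argument, so if $\chainfrozen[\voting{(t+1)}]_i$ \emph{conflicts} with $\chain$ (rather than being $\genesis$ or a strict prefix of $\chain$), no amount of majority support for extensions of $\chain$ yields $\mfcvote{(t+1)}_i\succeq\chain$. This sub-case is not vacuous: $\chainfrozen$ can be set at $\merging{t}$ to a chain output by $\texttt{fastconfirmsimple}(\V_i,t)$, or overwritten during $[\proposing{(t+1)},\voting{(t+1)}]$ by the $\chain^C_p$ carried in a (possibly adversarial) slot-$(t+1)$ \textsc{propose} message, and slot $t$ lies inside the asynchronous period, so message delivery is adversarial. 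One must therefore argue that any slot-$t$ quorum-certified chain cannot conflict with $\chain$ -- the paper does this with the remark that, since $f<\frac{n}{3}$ (and, for $t>t_a$, effectively relying on the fact that validators able to cast slot-$t$ \textsc{vote}s not extending $\chain$ are confined to $A_\infty\cup(H_{\voting{t}}\setminus H_{\voting{t_a}})$, cf.\ \Cref{eq:async-condition2}), every quorum of slot-$t$ \textsc{vote}s must contain a vote from an aware honest validator, whose vote extends $\chain$, so the fast-confirmed chain is comparable with $\chain$. Your proof never establishes this, so the conclusion does not follow in that branch; note also that the obstacle you flag as the main difficulty (the joining-protocol bookkeeping for $\Vfrozen_i$) is the routine part, while this start-point issue is the piece the paper's proof explicitly adds beyond the synchronous \Cref{lem:keep-voting-tob-fast-conf}.
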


\begin{proof}
  The lemma makes use of two main ingredients: the expiration period is $\eta = \pi + 2$, and Constraint~\eqref{eq:async-condition}. Firstly, the votes which influence the voting phase in slot $t + 1$ are those from slots $[t - \eta, t]$.
  These include votes from slot $t_a$ since $t+1-\eta \le t_a + \pi  + 1 - \eta < t_a$. Since by \Cref{eq:async-condition3} 
  all validators in $H_{\voting{(t_a)}}\setminus A_{\voting{(t_a+1)}}$ were awake at $4\Delta t_{a} + 2\Delta$, they have all received each other's votes from slot~$t_a$.
  Then, from the Lemma's assumption, for any validator $v_i \in W_{\voting{(t+1)}}$ and any validator $v_a \in H_{\voting{(t_a)}} \setminus A_{\voting{(t+1)}}$, the latest message from $v_a$ in $\V^\voting{(t+1)}_i$ is for a chain extending $\chain$.
  By Constraint~\eqref{eq:async-condition}, validators in $H_{\voting{(t_a)}}\setminus A_{\voting{(t+1)}}$ constitute a majority of all unexpired votes in slot $t+1$. This implies that any such validator $v_i$ would again have $\left|\left(\Vfrozen_i\right)^{\chain,t+1}\cap \V_i^{\chain,t+1} \right|> \frac{\left|\mathsf{S}(\V_i,t+1)\right|}{2}$.
  {Note also that because we assume $f<\frac{n}{3}$, for any validator $ v_i \in H_{\fastconfirming{t}}$, if $(\chain', Q ) = \texttt{fastconfirmsimple}(\V^\fastconfirming{t}_i,t)$ and $Q \neq \emptyset$, then $\chain' \succeq \chain$.}
  Therefore, any such validator would in slot $t+1$ cast a  \textsc{vote} for an extension of~$\chain$. 
\end{proof}

\begin{lemma}\label{lem:asyn-induction2}
  Assume $\pi > 0$ and
  take a slot $t \leq t_a$ such that, in slot $t$, any validator in $H_{\voting{t}}$ casts a \textsc{vote} message for a chain extending $\chain$.
  Then, for any slot $t_i \geq t$, any validator in $W_{\voting{t_i}}$ casts a \textsc{vote} message for a chain extending~$\chain$.
\end{lemma}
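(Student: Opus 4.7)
The plan is to proceed by strong induction on $t_i$, with inductive hypothesis that every validator in $W_\voting{t'}$ casts a \textsc{vote} for a chain extending $\chain$ for all $t' \in [t, t_i]$. The base case $t_i = t$ is immediate because the lemma's hypothesis supplies the conclusion for $H_\voting{t}$ and $W_\voting{t} \subseteq H_\voting{t}$. For the inductive step, I would split on where the next slot $t_i + 1$ falls relative to $[t_a, t_a + \pi + 1]$, the window in which $W$ may be strictly smaller than $H$.

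When $t_i + 1 \le t_a$, synchrony holds in the transition and $W_\voting{t_i} = H_\voting{t_i}$, so the inductive hypothesis feeds \Cref{lem:keep-voting-tob-fast-conf}, which yields the conclusion for $H_\voting{t_i + 1} \supseteq W_\voting{t_i + 1}$. When $t_a < t_i + 1 \le t_a + \pi + 1$, the inductive hypothesis restricted to $[t_a, t_i]$ is exactly the premise of \Cref{lem:asyn-induction}, and that lemma delivers the conclusion for $t_i + 1$. When $t_i + 1 \ge t_a + \pi + 3$, we have $t_i \ge t_a + \pi + 2$, so $W_\voting{t_i} = H_\voting{t_i}$ again and \Cref{lem:keep-voting-tob-fast-conf} closes the case.

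The main obstacle is the exit transition $t_i + 1 = t_a + \pi + 2$, the first slot that \Cref{lem:asyn-induction} does not cover (since it would need $t_i = t_a + \pi + 1 \notin [t_a, t_a + \pi]$) and at which \Cref{lem:keep-voting-tob-fast-conf} does not apply either, because the inductive hypothesis only supplies voting for $W_\voting{t_i} = H_\voting{t_i} \cap H_\voting{t_a}$, which may be strictly smaller than $H_\voting{t_i}$. I would resolve this by rerunning the proof strategy of \Cref{lem:asyn-induction}, exploiting the fact that Constraint~\eqref{eq:async-condition} is quantified over $t' \in \{t_a + 1, \dots, t_a + \pi + 2\}$ and so explicitly includes $t' = t_a + \pi + 2$. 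Since $\eta = \pi + 2$, votes from slot $t_a$ are still unexpired at slot $t_a + \pi + 2$; by Constraint~\eqref{eq:async-condition3} and the restored synchrony, every aware validator at $\voting{(t_a + \pi + 2)}$ has received the slot-$t_a$ votes of all validators in $H_\voting{t_a} \setminus A_\voting{(t_a + \pi + 2)}$; and Constraint~\eqref{eq:async-condition} makes this set a strict majority of the unexpired voters in slot $t_a + \pi + 2$. Each such validator's latest unexpired vote extends $\chain$: either it is the slot-$t_a$ vote, which does so by assumption, or it was cast in some $t'' \in (t_a, t_a + \pi + 1]$ while the validator was in both $H_\voting{t''}$ and $H_\voting{t_a}$, hence in $W_\voting{t''}$, to which the inductive hypothesis applies. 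Thus $\mfc$ returns an extension of $\chain$, any fast-confirmed chain is itself an extension of $\chain$ (by $f < \frac{n}{3}$), and every aware validator at slot $t_a + \pi + 2$ votes for an extension of $\chain$, closing the induction.
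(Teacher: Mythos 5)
Your proof is correct and takes essentially the same route as the paper: strong induction on $t_i$, with the transition handled by \Cref{lem:keep-voting-tob-fast-conf} while synchrony and $W=H$ hold, by \Cref{lem:asyn-induction} inside $[t_a+1, t_a+\pi+1]$, and by a bespoke argument at the exit slot $t_a+\pi+2$ before returning to \Cref{lem:keep-voting-tob-fast-conf} thereafter. The one place worth noting is your treatment of the exit case: the paper's Case~2 states that the slot-$(t_a+\pi+1)$ votes are in view and unexpired, and then says ``we can use the same reasoning applied in \Cref{lem:asyn-induction}''; you instead anchor explicitly on the slot-$t_a$ votes and the set $H_{\voting{t_a}}\setminus A_{\voting{(t_a+\pi+2)}}$, walking through why each such validator's latest unexpired vote extends $\chain$ (either from slot $t_a$ directly, or from some $t''\in(t_a,t_a+\pi+1]$ in which the validator was in $W_{\voting{t''}}$). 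That is exactly the reasoning the paper is deferring to, just unfolded, and it makes the role of Constraint~\eqref{eq:async-condition} at $t'=t_a+\pi+2$ and of the expiration bound $\eta=\pi+2$ clearer than the paper's terser phrasing.
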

\begin{proof}
  The proof is by induction on $t_i$.
  \begin{description}
    \item[Base Case: {$t_i \in [t,t_a]$}.] From \Cref{lem:keep-voting-tob-fast-conf}.
    \item[Inductive Step: $t_i > t_a$.]
    We assume that the Lemma holds for slot $t_i-1$ and prove that it holds also for slot $t_i$.
    Let us proceed by cases.
    \begin{description}
      \item[Case 1: {$t_i \in [t_a+1, t_a+\pi+1]$}.] 
      From \Cref{lem:asyn-induction}.
      \item[Case 2: $t_i = t_a+\pi+2$.]
      From the induction hypothesis, we know that in slot $t_a + \pi + 1$, any validator in $W_\voting{(t_a + \pi + 1)}$ casts a \textsc{vote} message for a chain extending $\chain$. 
      Since slot $t_a + \pi + 1$ is synchronous, such \textsc{vote} messages are all in the view $\V^{\voting{(t_a + \pi + 2)}}_i$ of any validator $v_i \in W_\voting{(t_a + \pi + 2)} = H_{\voting{(t_a + \pi + 2)}}$. 
      Also, given that $\eta = \pi+2$, such \textsc{vote} messages are not expired in slot $t_a + \pi + 2$ as $t_a + \pi + 2 -\eta  \leq t_a$.
      Given that Constraint~\eqref{eq:async-condition} holds for $t_a+\pi+2$, we can use the same reasoning applied in \Cref{lem:asyn-induction} to conclude that validator $v_i \in H_{\voting{(t_a+\pi+2)}}$ casts a \textsc{vote} message for a chain extending $\chain$.
      \item[Case 3: $t_a \geq t_a+\pi+3$.]
      Given that for any slot $t_j \geq t_a + \pi + 2$, $W_{\voting{t_j}} = H_{\voting{(t_j)}}$, here we can just apply \Cref{lem:keep-voting-tob-fast-conf}.\qedhere
      \end{description}
  \end{description}
\end{proof}

\begin{lemma}\label{lem:asyn-induction3}
  Assume $\pi > 0$ and
  take a slot $t \leq t_a$ such that, in slot $t$, any validator in $H_{\voting{t}}$ casts a \textsc{vote} message for a chain extending $\chain$.
  Then, for any round $r_i \geq \voting{t}$ and validator $v_i \in W_{r_i}$, $\Chain^{r_i}_i$ does not conflict with $\chain$.
\end{lemma}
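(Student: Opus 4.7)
The plan is to combine \Cref{lem:asyn-induction2} with a contradiction argument that exploits both the \emph{reset} semantics of the vote-round update rule for $\Chain_i$ and \Cref{eq:async-condition2} to rule out spurious fast confirmations.

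First, I invoke \Cref{lem:asyn-induction2}: for every slot $t_i \geq t$ and every $v_j \in W_{\voting{t_i}}$, the \textsc{vote} cast by $v_j$ in slot $t_i$ is for some chain $\chain''$ with $\chain'' \succeq \chain$. By \Cref{line:algga-no-ffg-vote-comm} that same $\chain''$ also satisfies $\chain'' \succeq \mfcvote{t_i}_j$, so $\chain$ and $\mfcvote{t_i}_j$ are both prefixes of $\chain''$, hence comparable. In particular, $\mfcvote{t_i}_j$ does not conflict with $\chain$, and neither does any of its prefixes.

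Proceeding by contradiction, take the smallest $r_i \geq \voting{t}$ for which some $v_i \in W_{r_i}$ has $\Chain^{r_i}_i$ conflicting with $\chain$. Since $\Chain_i$ is updated only when $v_i$ is active during a vote or fast-confirmation round, the decisive case is when $\Chain^{r_i}_i$ is produced by the code executed at $r_i$ (otherwise $\Chain^{r_i}_i = \Chain^{r_i-1}_i$, which by minimality forces $v_i \notin W_{r_i-1}$; this boundary situation is handled by tracing back to the most recent round $r^*$ where $v_i$ updated $\Chain_i$ and applying the same two-case analysis to $r^*$). \textbf{Vote-round case} $r_i = \voting{t_i}$: the update rule forces $\Chain^{r_i}_i$ to be a prefix of $\mfcvote{t_i}_i$, which by the first step is comparable with $\chain$; hence $\Chain^{r_i}_i$ is comparable with $\chain$, contradicting that they conflict. \textbf{Fast-confirmation case} $r_i = \fastconfirming{t_i}$: if $\Chain^{r_i}_i = \fastcand$ conflicts with $\chain$, then no validator in $W_{\voting{t_i}}$ can have voted for an extension of $\fastcand$ in slot $t_i$, since such a vote would have to extend both $\fastcand$ and $\chain$, impossible for conflicting chains. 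Hence the $\geq \tfrac{2}{3}n$ supporters of $\fastcand$ all lie in $A_{\voting{t_i}} \cup (H_{\voting{t_i}} \setminus W_{\voting{t_i}})$. Splitting on whether $t_i$ precedes, lies in, or follows the asynchronous window $[t_a+1, t_a+\pi+1]$, this set is bounded either by $|A_\infty| < \tfrac{n}{3}$ (outside the window, using $f < n/3$) or by \Cref{eq:async-condition2} (inside), strictly below $\tfrac{2}{3}n$ in every case, giving the contradiction.

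The principal obstacle is the bookkeeping around a validator $v_i$ that updates $\Chain_i$ while merely honest but not aware, and then becomes aware later: the stale $\Chain_i$ value could in principle conflict with $\chain$. The key observation is that at the first subsequent vote round in which $v_i$ is both active and aware, the reset branch $\Chain_i \gets (\mfcvote{t_i}_i)^{\lceil\kappa}$ automatically cleans such stale values up, since that branch triggers precisely when $\Chain_i \not\preceq \mfcvote{t_i}_i$; and any intermediate fast-confirmation round cannot introduce a conflicting value, by the fast-confirmation case above.
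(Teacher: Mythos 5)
Your proof is correct and takes essentially the same route as the paper's: \Cref{lem:asyn-induction2} settles the vote-round case (the confirmed chain is a prefix of the $\mfc$ output, which is a prefix of the cast vote, itself an extension of $\chain$), and a quorum-counting argument rules out a fast confirmation of a conflicting chain. The only differences are cosmetic: you frame it as a minimal counterexample with a trace-back to the last update round (the paper instead reduces the ``unchanged at the fast-confirmation round'' case to the same slot's vote round via the joining protocol), and you bound the potential supporters of a conflicting candidate by Constraint~\eqref{eq:async-condition2} together with $f<\frac{n}{3}$, whereas the paper invokes Constraint~\eqref{eq:async-condition}; both bounds suffice.
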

\begin{proof}
  Take any round $r_i \geq \voting{t}$ and validator $v_i \in W_{r_i}$.
  Given that the confirmed chain is updated only either in a voting or fast confirmation round, let us consider the two following cases.
  \begin{description}
    \item[Case 1: $r_i = \voting{\slot(r_i)}$.] From \Cref{lem:asyn-induction2} we know that any validator $v_i \in W_\voting{\slot(r_i)}$ casts a \textsc{vote} messages for some chain $\chain' \succeq \chain$.
    Due to \Crefrange{line:algga-no-ffg-vote-chainava}{line:algga-no-ffg-vote-comm}, $\chain' \succeq \mfcvote{\slot(r_i)} \succeq \Chain^{r_i}_i$.
    Hence, $\Chain^{r_i}_i$ does not conflict with $\chain$.
    \item[Case 2: $r_i = \fastconfirming{\slot(r_i)}$.] Let us consider two sub cases.
    \begin{description}
      \item[Case 2.1: $\Chain^{r_i}_i = \Chain^{r_i-1}_i$.] This implies that $\Chain^{r_i}_i = \Chain^{\voting{\slot(r_i)}}_i$.
      Due to the joining protocol, $v_i \in W_{\fastconfirming{\slot(r_i)}}$ implies $v_i \in W_{\voting{\slot(r_i)}}$.
      Therefore, this case is proved by Case 1 above.
      \item[Case 2.2: $\Chain^{r_i}_i \neq \Chain^{r_i-1}_i$.]
      This implies $(\Chain^{r_i}_i, Q) = \texttt{fastconfirmsimple}(\V^{r_i}_i,\slot(r_i)) \land Q \neq \emptyset$.
      Given that, as established above, all validators in $W_\voting{\slot(r_i)}$ cast \textsc{vote} messages for chains extending $\chain$, Constraint~\eqref{eq:async-condition} implies that the number of validators that might cast a \textsc{vote} message for a chain non extending $\chain$ is less than half the total number of validators.
      This implies that any chain non extending $\chain$ cannot receive a quorum of \textsc{vote} message for it.
      Therefore, $\Chain^{r_i}_i \succeq \chain$ meaning that $\Chain^{r_i}_i$ does not conflict with $\chain$.\qedhere
    \end{description}
  \end{description}
\end{proof}
  
\begin{theorem}[Asynchrony Reorg Resilience]
  \label{thm:async-resilience-tob}
  \Cref{algo:prob-ga-fast} is $\eta$-asynchrony-reorg-resilient.
\end{theorem}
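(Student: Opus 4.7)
The plan is to combine the synchronous \Cref{lem:ga-confirmed-always-canonical} (for rounds before the proposal) with the asynchronous \Cref{lem:vote-proposal-fast-conf} and \Cref{lem:asyn-induction3} (for rounds at or after the vote round of the proposal slot). After fixing an honest proposer $v_p$ in a slot $t \le t_a$ with \textsc{propose}d chain $\chain_p$, I would take an arbitrary round $r_i$ and validator $v_i \in W_{r_i}$ and show that $\chain_p$ does not conflict with $\Chain_i^{r_i}$. First, I would dispatch the case $\pi = 0$: there \Cref{def:async-resilience} is vacuous, and in any case the conclusion follows directly from \Cref{thm:reorg-res-prop-tob}. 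So from now on we assume $\pi > 0$.

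The main argument then splits on whether $r_i < \voting{t}$ or $r_i \ge \voting{t}$. In the first case, $\Chain_i$ is only updated at vote and fast-confirmation rounds, so there is some $r' \le \fastconfirming{t-1} < \proposing{t}$ with $\Chain_i^{r_i} = \Chain_i^{r'}$. Because $t \le t_a$, the round $\proposing{t}$ still lies in the synchronous regime and $v_p \in H_{\proposing{t}}$ by hypothesis, so I can invoke \Cref{lem:ga-confirmed-always-canonical} with $v_j := v_p$ at $r_j := \proposing{t} > r'$, obtaining $\Chain_i^{r'} \preceq \mfcpropose{t}_p$. Since an honest proposer extends the output of its fork-choice, $\chain_p \succeq \mfcpropose{t}_p$, which gives the required non-conflict.

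In the second case, $r_i \ge \voting{t}$, I would apply \Cref{lem:vote-proposal-fast-conf} to conclude that every validator in $H_{\voting{t}}$ casts a \textsc{vote} message for a chain extending $\chain_p$. Since $t \le t_a$ and $\pi > 0$, the hypotheses of \Cref{lem:asyn-induction3} are met with $\chain := \chain_p$, and the lemma yields directly that $\Chain_i^{r_i}$ does not conflict with $\chain_p$ for any $v_i \in W_{r_i}$, closing the proof.

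I expect the bulk of the difficulty to already be absorbed by \Cref{lem:asyn-induction3}, whose proof carries all the aware-view bookkeeping across the asynchronous window of length $\pi$ and the two slots beyond it. The only subtlety left in the theorem proof itself is covering pre-proposal rounds (including the propose round and the interval $[\proposing{t},\voting{t})$), which is handled cleanly by observing that the confirmed chain is piecewise-constant between update rounds and then reducing to the canonical-prefix lemma applied at $\proposing{t}$.
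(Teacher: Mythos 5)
Your proof is correct and follows essentially the same route as the paper: for the main case $r_i \ge \voting{t}$ the paper likewise concludes via \Cref{lem:vote-proposal-fast-conf} together with \Cref{lem:asyn-induction3}. The only (harmless) deviation is the pre-vote-round case, where the paper simply invokes \Cref{thm:reorg-res-prop-tob}, while you re-derive it directly by applying \Cref{lem:ga-confirmed-always-canonical} at round $\proposing{t}$ with $v_j := v_p$ and using that the honest proposal extends $\mfcpropose{t}_p$ — an equally valid (and if anything more self-contained) argument.
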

\begin{proof}
  Assume $\pi > 0$ and
  take a slot $t_p \leq t_a$ with an honest proposer, any round $r_i$ and any validator $v_i \in W_{r_i}$.
  If $r_i < \voting{t_p}$, we can apply \Cref{thm:reorg-res-prop-tob}.
  Otherwise, we can apply \Cref{lem:vote-proposal-fast-conf,lem:asyn-induction3}.
\end{proof}

\begin{theorem}[Asynchrony Safety Resilience]
  \label{thm:async-safety-resilience-tob}
  \Cref{algo:prob-ga-fast} $\eta$-asynchrony-safety-resilient.
\end{theorem}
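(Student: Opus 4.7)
The plan is to fix a round $r_i \in [\Tafter, \voting{t_a}]$, a validator $v_i$ honest at $r_i$, together with another pair $(r_j, v_j)$ where $v_j \in W_{r_j}$, and then to split the argument into two cases according to whether $r_j \le \voting{t_a}$ or $r_j > \voting{t_a}$. A useful preliminary observation is that $W_{r_j} \subseteq H_{r_j}$ by definition, so $v_j$ is always honest at $r_j$, regardless of which case we are in.

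If $r_j \le \voting{t_a}$, both reference rounds lie in the pre-asynchrony region, and the plan is simply to invoke $\eta$ Safety---already established as part of $\eta$ Dynamic Availability in \Cref{thm:dyn-avail-fast-conf-tob}---to conclude that $\Chain^{r_i}_i$ and $\Chain^{r_j}_j$ do not conflict.

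The interesting case is $r_j > \voting{t_a}$. Here the plan is to reduce to \Cref{lem:asyn-induction3} instantiated with $t := t_a$ and $\chain := \Chain^{r_i}_i$. Discharging its premise amounts to showing that every validator in $H_{\voting{t_a}}$ casts, in slot $t_a$, a \textsc{vote} for some chain that extends $\Chain^{r_i}_i$. Non-emptiness of $H_{\voting{t_a}}$ follows from $f < \frac{n}{3}$ together with \Cref{eq:pi-sleepiness} at slot $t_a$. Then, for any such $v_k$, applying \Cref{lem:ga-confirmed-always-canonical} to the fixed $(r_i, v_i)$ with the lemma's future round set to $\voting{t_a}$ (which is a vote round satisfying $\voting{t_a} \ge r_i$) yields $\Chain^{r_i}_i \preceq \mfcvote{t_a}_k$. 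The voting rule on \Cref{line:algga-no-ffg-vote-comm} then guarantees that the chain $v_k$ \textsc{vote}s for in slot $t_a$ extends $\mfcvote{t_a}_k$, and therefore extends $\Chain^{r_i}_i$. With the premise in place, \Cref{lem:asyn-induction3} immediately delivers that $\Chain^{r_j}_j$ does not conflict with $\Chain^{r_i}_i$, closing the case.

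The hard part is essentially already encapsulated inside \Cref{lem:asyn-induction3}, which propagates the vote-extension property through the short asynchronous window and across its boundary into the subsequent synchronous slots. The only genuinely new work here is to establish the premise of that lemma at the boundary slot $t_a$, and this verification takes place entirely in the pre-asynchrony region where \Cref{lem:ga-confirmed-always-canonical} is in force.
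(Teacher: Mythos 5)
Your proof is correct and follows essentially the same route as the paper: split on whether $r_j$ falls before or after $\voting{t_a}$, invoke $\eta$ Safety (from \Cref{thm:dyn-avail-fast-conf-tob}) in the first case, and in the second case apply \Cref{lem:ga-confirmed-always-canonical} at the boundary slot $t_a$ to discharge the premise of \Cref{lem:asyn-induction3}. The only (immaterial) difference is that you place the boundary round $r_j = \voting{t_a}$ in the $\eta$-Safety case, whereas the paper handles it through \Cref{lem:asyn-induction3}; both assignments are valid.
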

\begin{proof}
  Assume $\pi > 0$ and
  pick any round $r_i\leq 4 \Delta t_a + \Delta$, any round $r_j$, any validator $v_i$ honest in $r_i$ and any validator $v_j  \in W_{r_j}$.
  Let us now proceed by cases.
  \begin{description}
    \item[Case 1: $r_j < \voting{t_a}$.]  The proof for this case follows from $\eta$ Safety (\Cref{thm:dyn-avail-fast-conf-tob}).
    \item[Case 2: $r_j \geq \voting{t_a}$.] \Cref{lem:ga-confirmed-always-canonical} implies that for any validator $v_k \in H_{\voting{(t_a)}}$, $\mfcvote{t_a}_k \succeq  \Chain^{r_i}_i$.
    This implies that, in slot $t_a$, $v_k$ casts a \textsc{vote} message for a chain extending $\Chain^{r_i}_i$.
    Hence, we can apply \Cref{lem:asyn-induction3} to conclude the proof for this case.\qedhere
  \end{description}
\end{proof}



We are now prepared to develop our faster-finality protocol, which integrates the dynamically-available protocol \Cref{algo:prob-ga-fast} with the FFG components introduced in \Cref{sec:ffg} to obtain an $\eta$-secure ebb-and-flow protocol.

\subsection{Faster finality protocol execution}
\label{sec:tob-execution}

\begin{algo}[htb!]
\fontsize{8}{10}\selectfont
\caption{Faster finality protocol -- code for validator $v_i$}
\label{alg:3sf-tob-noga}
\begin{numbertabbing}\reset
xxxx\=xxxx\=xxxx\=xxxx\=xxxx\=xxxx\=MMMMMMMMMMMMMMMMMMM\=\kill
  {\textbf{Output}} \label{}\\
  \> {$\chainava_i \gets \genesis$: available chain}\label{}\\
  \> {$\chainfin_i \gets \genesis$: finalized chain}\label{line:algotb-set-chfin-init}\\
  \textbf{State} \label{}\\
  \> $\V_i^\text{frozen}  \gets \{\genesis\}$: snapshot of $\V$ at time $4\Delta t + 3\Delta$ \label{}\\
  \> $\chain^\text{frozen}_i \gets \genesis $: snapshot of the fast confirmed chain at time $4\Delta t + 3\Delta$ \label{} \\
  \> $\GJ_i^\text{frozen} \gets (\genesis, 0)$: latest frozen greatest justified checkpoint \label{}\\
  \textbf{function} $\texttt{fastconfirm}(\V,t)$\label{}\\
  \> \textbf{let} $ (\chain^C, Q) := \texttt{fastconfirmsimple}(\V, t)$\label{}\\
  \> \textbf{if} $\chain^C \succeq \GJ(\V).\chain$ \textbf{then}\label{}\\
  \>\> \textbf{return} $(\chain^C, Q)$\label{}\\
  \> \textbf{else}\label{}\\
  \>\> \textbf{return} $(\GJ(\V).\chain,\emptyset)$\label{}\\
  \textsc{Propose}\\
  \textbf{at round} $4\Delta t$ \textbf{do} \label{}\\
  \> \textbf{if} $v_i = v_p^t$ \textbf{then} \label{}\\
  \>\> \textbf{let} $ (\chain^C,Q^C) := \texttt{fastconfirm}(\V_i,t-1)$\label{}\\
  \>\> \textbf{let} $\chaincanmfc := \mfc(\V_i, \V_i, \chain^C, t)$ \label{}\\
  \>\> \textbf{let} $ \chain_p := {\mathsf{Extend}(\chaincanmfc, t)}$ \label{}\\
  \>\> send message [\textsc{propose}, $\chain_p$, $\chain^C$, $Q^C$, $\GJ(\V_i)$,  $t$, $v_i$] through gossip \label{}\\
  \textsc{Vote}\\
  \textbf{at round} $4\Delta t + \Delta$ \textbf{do} \label{}\\
  \> \textbf{let} $\chaincanmfc := \mfc(\Vfrozen_i, \V_i, \chain^\text{frozen}_i, t)$ \label{line:algtob-set-mfc}\\
  \> $\chainava_i \gets \max(\{\chain \in \{\chainava_i,{(\chaincanmfc)^{\lceil\kappa}},\GJ_i^\text{frozen}.\chain\}: \chain \preceq {\chaincanmfc}\})$\label{line:algtob-vote-chainava}\\
  \> {$\chainfin_i \gets \max(\{\chain \colon \chain \preceq \chainava_i \land \chain \preceq \GF(\V_i).\chain\})$}\label{line:algotb-set-chfin-vote} \\
  \> \textbf{let} $ \T := (\chainava_i,t)$\label{line:algtob-set-target-checkpoint} \\
  \> \textbf{let} $ \chain := $ \textsc{propose}d chain {from slot $t$} extending ${\chaincanmfc}$ and with $\chain.p =t$, if there is one, or ${\chaincanmfc}$ otherwise\label{line:algtob-vote-comm}\\  
  \>  send message [\textsc{vote}, $\chain$, $\GJ_i^\text{frozen} \to \T$, $t$, $v_i$] through gossip \label{line:algtob-vote}\\
  \textsc{Fast Confirm}\\
  \textbf{at round} $4\Delta t + 2\Delta$ \textbf{do} \label{line:algotb-at-confirm}\\
  \> \textbf{let} $ (\fastcand,\cdot) := \texttt{fastconfirm}(\V_i,t)$\label{line:algtob-set-fastcand-fconf}\\
  \> \textbf{if} $\chainava_i \nsucceq \fastcand$ \textbf{then}\label{}\\
  \>\> $\chainava_i \gets \fastcand$\label{line:algtob-set-chaava-to-bcand}\\
  \> {$\chainfin_i \gets \GF(\V_i).\chain$}\label{line:algotb-set-chfin-fast}\\  
  \textsc{Merge}\\
  \textbf{at round} $4\Delta t + 3\Delta$ \textbf{do} \label{}\\
  \>  $\V^\text{frozen}_i \gets \V_i$\label{}\\ 
  \> $(\chainfrozen_i,\cdot) \gets \texttt{fastconfirm}(\V_i,t)$\label{line:algtob-merge-ch-frozen}\\
  \> $\GJfrozen_i \gets \GJ(\V_i)$\label{line:algtob-merge-gj-frozen} \\
  \\
  \textbf{upon} receiving a gossiped message
    $[\textsc{propose}, \chain_p, \chain^C_p, Q^C_p, \GJ_p, t, v_p]$   \textbf{at any round in} $[4\Delta t, 4\Delta t + \Delta]$ \textbf{do} \label{line:algtob-upon}\\  
  \> \textbf{when round} $4\Delta t + \Delta$ \textbf{do}\label{}\\
  \>\> \textbf{if} $\text{valid}([\textsc{propose},\chain_p, \chain^C_p, Q^C_p, \GJ_p, t, v_p]) \land \mathsf{J}(\GJ_p,\V_i)  \land \GJ_p \ge \GJ_i^\text{frozen}$ \textbf{then} \label{line:algtob-prop-if}\\
  \>\>\> $\GJ_i^\text{frozen} \gets \GJ_p$ \label{line:algtob-prop-merge-gj}\\
  \>\>\> \textbf{if} $\chain^\text{frozen}_i \not \succeq \GJ_p.\chain$ \textbf{then}\label{line:algtob-prop-check-chfrozen-gj}\\
  \>\>\>\> $\chain^\text{frozen}_i \gets \GJ_p.\chain$ \label{line:algtob-prop-set-ch-frozen-to-gj}\\
  \>\>\> \textbf{if} $\chain^\text{frozen}_i \preceq \chain^C_p$ \textbf{then}
    \label{line:algtob-if-setchainfrozen-to-chainc}\\
  \>\>\>\> $\chain^\text{frozen}_i \gets \chain^C_p$ \label{algo:3sf-noga-setchainfrozen-to-chainc}\\[-5ex]
\end{numbertabbing}
\end{algo}

In this section, we present \Cref{alg:3sf-tob-noga} which integrates the $\eta$-dynamically-available and reorg-resilient protocol of \Cref{algo:prob-ga-fast} with the FFG component introduced in \Cref{sec:ffg} to obtain a $\eta$-secure ebb-and-flow protocol.
We describe \Cref{alg:3sf-tob-noga} by discussing the main differences compared to \Cref{algo:prob-ga-fast}.

\medskip

The first difference is in the logic employed by \Cref{alg:3sf-tob-noga} to determine fast confirmed chains.
This is encoded in the function $\texttt{fastconfirm}(\V, t)$ which returns the same output of $\texttt{fastconfirmsimple}(\V,t)$ from \Cref{alg:3sf-tob-noga} as long as the returned fast confirmed chain extends $\GJ(\V).\chain$.
Otherwise, it just returns $(\GJ(\V).\chain,\emptyset)$.
The reason underpinning this change is to allow $\eta$-dynamical-availability and Reorg Resilience to be ensured again after long periods of asynchrony once the network becomes synchronous (see \Cref{sec:healing} for more details).

\medskip

Second, compared to \Cref{algo:prob-ga-fast}, \Cref{alg:3sf-tob-noga} maintains the following additional state variable:
\begin{itemize}    
    \item \textbf{Checkpoint Variable \(\GJfrozen_i\)}: Each validator  stores in \(\GJfrozen_i\) the greatest justified checkpoint according to their view at time $3\Delta$ in slot $t-1$.
    Like $\chainfrozen_i$, this variable is then updated between time $0$ and $\Delta$ in slot $t$ incorporating the information received via the \textsc{propose} message for slot $t$.
    Also like $\chainfrozen_i$, this can be seen as effectively executing a view-merge on this variable.
\end{itemize}

{Next, \Cref{alg:3sf-tob-noga} outputs two chains.
\begin{itemize}
  \item \textbf{Available Chain $\chainava_i$:} This roughly corresponds to the confirmed chain $\Chain_i$ of \Cref{algo:prob-ga-fast}.
  \item \textbf{Finalized Chain $\chainfin_i$:}
  {
  At any fast confirmation round, $\chainfin$ is set to $\GF(\V_i).\chain$, \ie, the chain of the greatest justified checkpoint.
    
  Also, at any vote round, the finalized chain $\chainfin_i$ is set to 
  the longest chain such that of $\chainfin_i \preceq \GF(\V_i).\chain \land \chainfin_i \preceq \chainava_i$%
  \footnote{{This is to ensure that $\chainfin_i$ is always a prefix of $\chainava_i$, as required by $\eta$-secure ebb-and-flow protocols.
  Simply, setting $\chainfin_i$ to $\GF(\V_i)$ in vote rounds would not ensure such properties during asynchrony.}}%
  \footnote{
    {Setting $\chainfin_i$ in vote rounds is not strictly required to ensure any of the theorems or lemmas of this work, even though, if one did remove setting $\chainfin$ in vote rounds, then one would need to tweak some of the proofs.
    However, we decided to also set $\chainfin$ in vote round as, there are scenarios under asynchrony whereby setting $\chainfin$ in the vote round of slot $t$ results in a chain longer than the chain $\chainfin$ that was set in the fast confirmation round of slot $t-1$.
    Also, we believe, but have no formal proof for it yet, that updating $\chainfin$ at vote and fast confirmation rounds provides the highest frequency at which $\chainfin$ can be updated without breaking any of the theorems and lemmas of this work.
    }
  }
  }%
  .
\end{itemize}}

\medskip

Finally, compared to \Cref{algo:prob-ga-fast}, \Cref{alg:3sf-tob-noga} proceeds as follows.
\begin{description}
    \item[Propose:] In the propose phase, the proposer for slot $t$ includes in the \textsc{propose} message that it sends also the greatest justified checkpoint according to its view at the time of proposing.

    \item[Vote:] The validity condition of a $[\textsc{propose}, \chain_p, \chain^C, Q^C, \GJ_p, t, v_p]$ message received by an honest validator $v_i$ is extended to also check that $\GJ_p$ is indeed a justified checkpoint according to the $v_i$'s current view.
    To give enough time for \textsc{ffg-vote}s that justify $\GJ_p$ to be received by $v_i$, $v_i$ postpones to time $\voting{t}$ acting upon  any \textsc{propose} messages received in the interval \( [4\Delta t, 4\Delta t + \Delta] \).
    Then, only if at that time the validity condition passes and $\GJ_p$ is greater than $v_i$'s frozen latest greatest justified checkpoint $\GJfrozen_i$, then $v_i$  updates its local variables $\GJ^{\text{frozen}}$ and $\chain^{\text{frozen}}$ according with the received proposal always ensuring that $\chainfrozen_i \succeq \GJfrozen_i$.

    In \Cref{alg:3sf-tob-noga}, honest validators also include an \textsc{ffg-vote} in the \textsc{vote} messages that they send. 
    Specifically, the source checkpoint of such \textsc{ffg-vote} sent corresponds to $\GJfrozen_i$.
    The target checkpoint is determined by selecting the highest chain among 
    the previous available chain, the $\kappa$-deep prefix of the chain output by the fork-choice function $\mfc$, and $\GJfrozen_i.\chain$, filtering out any chain if it is not a prefix of the chain output by the fork-choice function $\mfc$.

    Validator $v_i$ also sets the available chain output, $\chainava_i$ to such chain.
    So, compared to the confirmed chain in \Cref{algo:prob-ga-fast}, when setting $\chainava_i$, \Cref{alg:3sf-tob-noga} weighs the previous available chain also against $\GJfrozen_i.\chain$\footnote{While, this change is not strictly required for correctness, under some conditions it ends up producing a longer available chain which is better from a responsiveness point of view.
    For example, this is the case if we have a slot $t'$ where the proposer is honest, $\frac{2}{3}$ of the validators are active and honest  up to the voting phase of slot $t'+1$, but then some of these are corrupted immediately after, and $v_i$ is asleep in both slots and it becomes active in slot $t'+2$ only.
    In such a scenario, with the modification discussed in this paragraph, in the voting phase of slot $t'+2$, $v_i$ sets its available chain to to the chain \textsc{propose}d by the proposer of slot $t'$.
    Without such modification, $v_i$'s would not update its available chain during the voting phase of slot $t'+2$.}.

    \item[Fast Confirm:] {Validator $v_i$ checks whether $\chainava_i$ is a prefix of the chain $\mathrm{fast}^{\mathrm{cand}}$ output by $\texttt{fastconfirm}(\V_i, t)$ or it conflicts with it.
    In either case, $v_i$ updates $\chainava_i$ to $\mathrm{fast}^{\mathrm{cand}}$\footnote{It is possible to design an algorithm where, if $\chainava_i$  conflicts with $\fastcand$, then it is left to the specific implementation decide whether to update $\chainava_i$  to $\fastcand$ or leave it unchanged. However, to keep the algorithm straightforward, we chose not to incorporate this additional decision layer.}}.

    \item[Merge:] At round $4\Delta t + 3\Delta$, every validator $v_i$ also updates $\GJfrozen_i$ to the greatest justified checkpoint according to its view at that time.
\end{description}

\subsection{Faster finality protocol analysis}
\label{sec:tob-analysis}
\Cref{alg:3sf-tob-noga} works in the generalized partially synchronous sleepy model, and is in particular a $\eta$-secure ebb-and-flow protocol.
In \Cref{sec:ffg-analysis} we have already shown that the the finalized chain $\chainfin$ is $\frac{n}{3}$-accountable, and thus always safe if $f < \frac{n}{3}$.
For $\GST = 0$, we show in Section~\ref{sec:analysis-tob-sync} that, if the execution is $\eta$-compliant in this stronger sense, then all the properties of \Cref{algo:prob-ga-fast}, i.e., $\eta$ Dynamic Availability and $\eta$ Reorg Resilience, keep holding. 
Finally, in \Cref{sec:analysis-tob-psync} we show that, after $\max(\GST, \GAT) + 4\Delta$, \Cref{alg:3sf-tob-noga} guarantees the required conditions listed in Property~\ref{prop:succ-for-ffg-liveness} to ensure that the finalized chain is live.

\subsubsection{Synchrony}\label{sec:analysis-tob-sync}

\def\FFGExec{\ensuremath{{e_{\textsc{ffg}}}}}
\def\NoFFGExec{\ensuremath{{e_{\textsc{no-ffg}}}}}
In this section, we prove that chain 
$\chainava$ of \Cref{alg:3sf-tob-noga} is $\eta$-dynamically-available, $\eta$-reorg-resilient, $\eta$-asynchrony-reorg-resilient and $\eta$-asynchrony-safety-resilient.
Throughout this part of the analysis, we assume $\GST=0$, that less than one-third of the entire validator set is ever controlled by the adversary (\ie, $f<\frac{n}{3}$), and that Constraint~\eqref{eq:pi-sleepiness} holds.
At a high level, our proof strategy is to show that, under these assumptions, the integration of the FFG component into \Cref{algo:prob-ga-fast} does not affect the protocol behavior in any way that could compromise any of the properties already proven for \Cref{algo:prob-ga-fast}.
To do so, we take any execution of \Cref{alg:3sf-tob-noga} and show that there exists an adversary that induces an execution of \Cref{algo:prob-ga-fast} where (i) the messages sent by any honest validator in the two executions match except only for the FFG component of messages, (ii) the \mfc{} fork-choice outputs of any honest validator in the two executions match and (iii) any available chain output by \Cref{alg:3sf-tob-noga} is also a confirmed chain of an honest validator in the execution of \Cref{algo:prob-ga-fast}.
This then allows us to show that the result of the various Lemmas and Theorems presented in \Cref{sec:ga-prob-analysis} for \Cref{algo:prob-ga-fast} also hold for \Cref{alg:3sf-tob-noga}.

We start by formalizing the concept of validators sending the same messages except only for their FFG component.
Moreover, we define the equivalence between an execution in \Cref{algo:prob-ga-fast} and one in \Cref{alg:3sf-tob-noga}.
These definitions will be leveraged upon in the subsequent Lemmas and Theorems.

\begin{definition}\label{def:dyn-equiv}
  We say that two messages are \emph{dynamically-equivalent} if and only if they differ in at most their FFG-related components, \ie, they either are \textsc{propose} messages and differ at most in the greatest justified checkpoint component or they are \textsc{vote} messages and differ at most in the \textsc{ffg-vote} component.
  
  We say that two sets of messages are dynamically-equivalent if and only if they have the same set of equivalence classes under type-equivalence, \ie,
  for any message $m$ in any of the two sets, there exists a message $m'$ in the other set such that $m$ and $m'$ are dynamically-equivalent.

  Given two executions $e$ and $e'$ and round $r$, we say that the two   executions are \emph{honest-output-dynamically-equivalent up to round $r$} if and only if for any round $r_j<r$ and validator $v_j$ honest in round $r_j$ in both executions, the set of messages sent by $v_j$ in round $r_j$ in execution $e$ is dynamically-equivalent to the set of messages sent by $v_j$ in round $r_j$ in execution $e'$.
\end{definition}

In the remainder of this section, we use the notation $\specifyExec{e}{\X}$, where $\X$ is any variable or definition, 
to explicitly indicate the value of $\X$ in execution $e$.
Due the difference in the fast confirmation function employed by the two algorithms, we do need to explicitly specify the meaning of $\mfcproposeExec{e}{t}_i$ with $t$ being any slot.
If $e$ is an execution of \Cref{algo:prob-ga-fast}, then $\mfcproposeExec{e}{t}_i$ corresponds the definition provided in \Cref{sec:ga-prob-analysis}.
If $e$ is an execution of \Cref{alg:3sf-tob-noga}, then $\mfcproposeExec{e}{t}_i := \specifyExec{e}{\mfc}(\V^{\proposing{t}}_i,\V^{\proposing{t}}_i,\chain^C,t)$ with $(\chain^C,\cdot) = \texttt{fastconfirm}(\V^\proposing{t}_i,t-1)$.

{
\begin{definition}\label{def:exec-equiv}
    Let $\FFGExec$ by an $\eta$-compliant execution of \Cref{alg:3sf-tob-noga} and $\NoFFGExec$ be an $\eta$-compliant execution of \Cref{algo:prob-ga-fast}.
    We say that $\FFGExec$ and $\NoFFGExec$ are \emph{\mfc-equivalent} if and only if the following constraints hold:
  \begin{enumerate}

    \item\label[condition]{cond:0b-2-t} $\FFGExec$ and $\NoFFGExec$ are honest-output-dynamically-equivalent up to any round
    \item for any slot $t_i$,
    \begin{enumerate}[label*=\arabic*]
      \item \label[condition]{cond:3-2} $\mfcproposeNoFFG{t_i}_i = \mfcproposeFFG{t_i}_i$
      \item\label[condition]{cond:4-2} $\mfcvoteNoFFG{t_i}_i = \mfcvoteFFG{t_i}_i$
      \item\label[condition]{cond:5-2} for any slot $t_j \leq t_i$ and validator $v_j \in H_{\voting{(t_j)}}$, there exists a slot $t_k \leq t_j$ and a validator $v_k \in H_{\voting{(t_k)}}$ such that $\Chain^{\voting{t_k}}_k \succeq \chainava^{\voting{t_j}}_j$.
      \item\label[condition]{cond:6-2} for any slot $t_j \leq t_i$ and validator $v_j \in H_{\voting{(t_j)}}$, there exists a slot $t_k \leq t_j$ and a validator $v_k \in H_{\voting{(t_k)}}$ such that $\Chain^{\voting{t_k}}_k \succeq \chainava^{\fastconfirming{t_j}}_j$.
    \end{enumerate}
    where chain $\chainava_j$ is in \FFGExec{}, and chain $\Chain_i$ is in \NoFFGExec{}.
  \end{enumerate}
\end{definition}
}

{
\begin{lemma}\label{lem:equiv-ga2}
    Let $\FFGExec$ by any $\eta$-compliant execution of \Cref{alg:3sf-tob-noga}.  
    There exists an $\eta$-compliant execution $\NoFFGExec$ of \Cref{algo:prob-ga-fast} 
    such that $\FFGExec$ and $\NoFFGExec$ are \mfc-equivalent.
\end{lemma}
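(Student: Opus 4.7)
The plan is to construct $\NoFFGExec$ explicitly by coupling it to $\FFGExec$: we give the adversary in $\NoFFGExec$ the same corruption schedule, sleepiness schedule, message-delivery schedule, and adversarial-message contents (modulo FFG components) as in $\FFGExec$. Honest validators in $\NoFFGExec$ follow \Cref{algo:prob-ga-fast}. Concretely, whenever an honest $v_i$ in $\FFGExec$ sends $[\textsc{propose}, \chain_p, \chain^C, Q^C, \GJ_p, t, v_i]$ or $[\textsc{vote}, \chain, \GJ_i^{\mathrm{frozen}}\!\to\!\T, t, v_i]$, the same $v_i$ in $\NoFFGExec$ will (we claim) send the dynamically-equivalent message $[\textsc{propose}, \chain_p, \chain^C, Q^C, \cdot, t, v_i]$ or $[\textsc{vote}, \chain, \cdot, t, v_i]$. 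Since Constraint~\eqref{eq:pi-sleepiness} and the other assumptions only concern the corruption/sleepiness schedule, $\NoFFGExec$ is automatically $\eta$-compliant.

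The verification is an induction on rounds $r$, maintaining simultaneously (i) honest-output-dynamic-equivalence up to $r$, (ii) equality of $\chainfrozen_i$ and $\Vfrozen_i$ between the two executions for every $v_i$ honest through the last \textsc{Merge} round before $r$, (iii) $\mfcproposeFFG{t_i}_i = \mfcproposeNoFFG{t_i}_i$ and $\mfcvoteFFG{t_i}_i = \mfcvoteNoFFG{t_i}_i$ for all relevant slots, and (iv) \Cref{cond:5-2,cond:6-2}. Given (ii), the views entering $\mfc$ differ only in FFG components, which $\mfc$ ignores (it only inspects \textsc{vote} chains/slots and equivocation patterns), so (iii) follows. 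Given (iii) and the fact that the chain extension and \textsc{vote}-chain selection in \Cref{line:algga-no-ffg-vote-chainava,line:algga-no-ffg-vote-comm,line:algga-no-ffg-vote} depend only on $\mfc$ outputs, the honest messages sent in round $r$ match modulo FFG components, giving (i). For \Cref{cond:5-2,cond:6-2} we take $v_k=v_j$ and $t_k=t_j$ for the cases where $\chainava_j^{\voting{t_j}} = (\mfcvoteFFG{t_j}_j)^{\lceil\kappa}$, which coincides with $\Chain_j^{\voting{t_j}}$ in $\NoFFGExec$ thanks to (iii); the other cases (where $\chainava_j$ retains an older value, or is set to $\GJfrozen_j.\chain$, or to $\fastcand$) are handled by transitively invoking the inductive hypothesis at an earlier slot where the corresponding $\Chain$ was already set in $\NoFFGExec$.

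The main obstacle is reconciling the two differences introduced by the FFG component: (a) the propose handler in \Cref{alg:3sf-tob-noga} (\Crefrange{line:algtob-prop-check-chfrozen-gj}{algo:3sf-noga-setchainfrozen-to-chainc}) can bump $\chainfrozen_i$ up to $\GJ_p.\chain$, which has no analogue in \Cref{algo:prob-ga-fast}; and (b) $\texttt{fastconfirm}$ replaces $\fastcand$ by $\GJ(\V).\chain$ whenever $\fastcand\not\succeq\GJ(\V).\chain$. I plan to handle (a) by showing that whenever the guard in \Cref{line:algtob-prop-if} passes with $\GJ_p > \GJfrozen_i$, the resulting $\GJ_p.\chain$ is already a prefix of $\chainfrozen_i$ under the inductive hypothesis, so \Cref{line:algtob-prop-set-ch-frozen-to-gj} either is a no-op or is superseded by \Cref{algo:3sf-noga-setchainfrozen-to-chainc}, which also fires in the dynamically-equivalent \textsc{propose} message of \NoFFGExec{} (up to the check being trivially true, since the guard there only compares $\chainfrozen_i$ to $\chain^C_p$). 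For (b), I will show that whenever $\fastcand\not\succeq\GJ(\V).\chain$, no honest validator actually acts on $\GJ(\V).\chain$ in a way that changes $\chainava$ beyond what a validator in $\NoFFGExec$ would set $\Chain$ to at some earlier slot, so \Cref{cond:5-2,cond:6-2} still witness the required $v_k,t_k$. The coupling then propagates cleanly, and the induction closes.
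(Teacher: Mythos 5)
Your construction of $\NoFFGExec$ (same corruption, sleepiness, proposer and delivery schedule, adversarial messages stripped of FFG components) is the same coupling the paper uses, and your outer induction maintaining honest-output-dynamic-equivalence together with \Cref{cond:3-2,cond:4-2,cond:5-2,cond:6-2} matches the paper's structure. The problem is your invariant (ii): equality of $\chainfrozen_i$ (and hence of the starting chain fed to $\mfc$) across the two executions is simply false, and it is the load-bearing step from which you derive (iii). At a \textsc{Merge} round with no voting quorum, \Cref{alg:3sf-tob-noga}'s $\texttt{fastconfirm}$ sets $\chainfrozen_i \gets \GJ(\V_i).\chain$, whereas \Cref{algo:prob-ga-fast}'s $\texttt{fastconfirmsimple}$ sets $\chainfrozen_i \gets \genesis$; as soon as any non-genesis checkpoint is justified, the two frozen chains diverge, so "the views entering $\mfc$ differ only in FFG components" does not by itself give equal fork-choice outputs — the third argument of $\mfc$ differs. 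Likewise your fix for obstacle (a), the claim that whenever the guard in \Cref{line:algtob-prop-if} passes one already has $\GJ_p.\chain \preceq \chainfrozen_i$, is unsubstantiated: \Crefrange{line:algtob-prop-check-chfrozen-gj}{line:algtob-prop-set-ch-frozen-to-gj} exist precisely to handle the case $\chainfrozen_i \not\succeq \GJ_p.\chain$, and nothing in your inductive hypothesis rules it out.

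The missing idea is to replace equality of frozen state by the weaker invariant the paper carries through the induction (its \Cref{cond:1-2,cond:2-2}): for every checkpoint $\J$ justified according to the FFG view of an honest validator at $\proposing{t_i}$ or $\voting{t_i}$, the chain $\J.\chain$ is a \emph{prefix} of the corresponding NoFFG fork-choice output $\mfcproposeNoFFG{t_i}_i$ resp.\ $\mfcvoteNoFFG{t_i}_i$. This is what makes the differing starting chains harmless: starting $\mfc$ from a justified chain (or from a $\chainfrozen$ bumped to $\GJ_p.\chain$) cannot change an output that already extends it. Proving it requires the chain of reasoning you never articulate: justification needs $\geq \frac{2}{3}n$ \textsc{ffg-vote}s whose targets are the voters' available chains, so with $f<\frac{n}{3}$ some honest $v_k$ had $\chainava^{\voting{t_k}}_k \succeq \J.\chain$ at an earlier slot; by the inductive \Cref{cond:5-2} some honest confirmed chain $\Chain^{\voting{t_m}}_m$ in $\NoFFGExec$ extends it; and by \Cref{lem:ga-confirmed-always-canonical} (applied to $\NoFFGExec$, which is an execution of \Cref{algo:prob-ga-fast}) every later NoFFG fork-choice output extends $\Chain^{\voting{t_m}}_m$. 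Your treatment of \Cref{cond:5-2,cond:6-2} in the $\GJfrozen_j.\chain$ and $\GJ(\V)$-fallback cases implicitly needs the same argument, so as written the proposal has a genuine gap at its central step rather than a merely different route.
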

}
\def\ANoFFG{\ensuremath{A^{\NoFFGExec}}}
\def\AFFG{\ensuremath{A^{\FFGExec}}}
\begin{proof}
  Let $\FFGExec$ be any $\eta$-compliant execution of \Cref{alg:3sf-tob-noga} and let $\AFFG$ be the adversary in such an execution.
  Below, we specify the decisions made by an adversary $\ANoFFG$ in an execution $\NoFFGExec$ of \Cref{algo:prob-ga-fast}.
  Later we show that this leads to $\NoFFGExec$ satisfying all the conditions in the Lemma.
  \begin{enumerate}[label=(\roman*)]
    \item For any round $r$, the set of validators corrupted by $\ANoFFG$ in round $r$ corresponds to set of validators corrupted by $\AFFG$ in the same round $r$. 
    \item For any round $r$, the set of validators put to sleep by $\ANoFFG$ in round $r$ corresponds to set of validators put to sleep by $\AFFG$ in the same round $r$. 
    \item For any slot $t$, if an honest validator is the proposer for slot $t$ in $\FFGExec$, then it is also the proposer for slot $t$ in $\NoFFGExec$.
    \item\label{cond:adv-decision-4} For any round $r$, if $\FFGExec$ and $\NoFFGExec$ are honest-output-dynamically-equivalent up to round $r$, then $\ANoFFG$ schedules the  delivery of messages to the validators honest in round $r$ such that, for any validator $v_i$ honest in round $r_i$, the set of messages received by $v_i$ in execution $\FFGExec$ is dynamically-equivalent to the set of messages received by $v_i$ in execution $\NoFFGExec$.
  \end{enumerate}  

  Now, we prove by induction on $t_i$ that the execution $\NoFFGExec$ induced by $\ANoFFG$ satisfies all \Cref{def:exec-equiv}'s conditions.
  To do so, we add the following conditions to the inductively hypothesis.
  \begin{enumerate}[start=3]
    \item[]%
    \begin{enumerate}[start=5,label*=\arabic*]
      \item \label[condition]{cond:1-2} For any $\J$ such that $\mathsf{J}(\J,\VFFG^{\proposing{t_i}}_i)$, $\mfcproposeNoFFG{t_i}_i \succeq \J.\chain$.
      \item\label[condition]{cond:2-2} For any $\J$ such that $\mathsf{J}(\J,\VFFG^{\voting{t_i}}_i)$, $\mfcvoteNoFFG{t_i}_i \succeq \J.\chain$.
    \end{enumerate}
  \end{enumerate} 
  and rephrase \Cref{cond:0b-2-t} as follows
  \begin{enumerate}[start=3]
    \item[]%
    \begin{enumerate}[start=7,label*=\arabic*]
      \item\label[condition]{cond:0bt-2}  $\FFGExec$ and $\NoFFGExec$ are honest-output-dynamically-equivalent up to round $4\Delta (t_i+1)$.
    \end{enumerate}
  \end{enumerate}

  Note that \Cref{cond:0bt-2} holding for any $t_i$ implies \Cref{cond:0b-2-t}.

  To reduce repetitions in the proof, we treat the base case of the induction within the inductive step.

  Then, take any $t_i \geq 0$ and assume that, if $t_i > 0$, then the Lemma and the additional conditions \Cref{cond:1-2,cond:2-2,cond:0bt-2} hold for slot $t_i-1$.
  We prove that they also hold for slot $t_i$.
  We start by proving \Cref{cond:1-2,cond:2-2} for slot $t_i$, 
  then we move to \Cref{cond:3-2,cond:4-2,cond:5-2,cond:6-2} in this order and conclude with proving \Cref{cond:0bt-2}.

 \begin{description}
  \item[\Cref{cond:1-2,cond:2-2}.]
  Let $\J$ be any checkpoint such that  $\mathsf{J}(\J,\VFFG^{\voting{t_i}}_i)$.
  Because the chain of the target checkpoint of an \textsc{ffg-vote} cast by an honest validator $v_\ell$ in round $r_{\ell}$ corresponds to $(\makeFFG{\chainava}^{r_{\ell}}_\ell.\chain,\cdot)$ and we assume $f<\frac{n}{3}$,
  there exists a slot $t_k \in [0,t_i)$ and a validator $v_k \in H_{\voting{(t_k)}}$ such that $\chainava^{\voting{t_k}}_k \succeq \J.\chain$.
  By the inductive hypothesis, \Cref{cond:5-2} implies that there exists a slot $t_m \in [0,t_k]$ and validator $v_m \in H_{\voting{(t_m)}}$ such that $\Chain^{\voting{t_m}}_m \succeq \chainava^{\voting{t_k}}_k$.
  Given that $t_m \leq t_k < t_i$, from \Cref{lem:ga-confirmed-always-canonical}, we know that $\mfcproposeNoFFG{t_i} \succeq \Chain^{\voting{t_m}}_m \succeq \chainava^{\voting{t_k}}_k \succeq \J.\chain$ and that $\mfcvoteNoFFG{t_i} \succeq \Chain^{\voting{t_m}}_m \succeq \chainava^{\voting{t_k}}_k \succeq \J.\chain$.
  Given that the set of justified checkpoints in $\VFFG^{\voting{t_i}}_i$ is a superset of the set of justified checkpoints in $\VFFG^{\proposing{t_i}}_i$, both \Cref{cond:1-2} and \Cref{cond:2-2} are proven.
  \item[\Cref{cond:3-2,cond:4-2}.]
  We know that $\FFGExec$ and $\NoFFGExec$ are honest-output-dynamically-equivalent up to round $\proposing{t_i}$.
  If $t_i > 0$, then, by the inductive hypothesis, \Cref{cond:0bt-2} implies this, if $t=0$, then this is vacuously true.
  Hence, due to \cref{cond:adv-decision-4} of $\ANoFFG$'s set of decisions, $\VFFG^{\proposing{t_i}}_i$ and $\VNoFFG^{\proposing{t_i}}_i$ are dynamically-equivalent.
  From this follows that $\FFGExec$ and $\NoFFGExec$ are also honest-output-dynamically-equivalent up to round $\voting{t_i}$ and, therefore, $\VFFG^{\voting{t_i}}_i$ and $\VNoFFG^{\voting{t_i}}_i$ are dynamically-equivalent as well.
  This and \Cref{cond:1-2,cond:2-2} imply \Cref{cond:3-2,cond:4-2}.
  \item[\Cref{cond:5-2}.]
  By \Cref{line:algtob-vote-chainava}, 
  $\chainava^{\voting{t_i}}_i \in \{\chainava^{\fastconfirming{(t_i-1)}}_i,\GJfrozen[\voting{t_i}]_i.\chain,(\mfcvoteFFG{t_i}_i)^{\lceil \kappa}\}$.
  Let us consider each case.
  \begin{description}
    \item[Case 1: $\chainava^{\voting{t_i}}_i = \chainava^{\fastconfirming{(t_i-1)}}_i$.] 
    This case implies $t_i > 0$.
    Hence, we can apply the inductive hypothesis.
    Specifically, \Cref{cond:6-2} for slot $t_i-1$ implies 
    \Cref{cond:5-2} for slot $t_i$.
    \item[Case 2: {$\chainava^{\voting{t_i}}_i = \GJfrozen[t_i]_i.\chain$}]
    Due to \Cref{line:algtob-merge-ch-frozen,line:algtob-prop-if,line:algtob-prop-merge-gj}, $\mathsf{J}(\GJfrozen[t_i]_i, \V^{\voting{t_i}})$. 
    Hence, by following the reasoning applied when discussing \Cref{cond:1-2,cond:2-2},
    there exists a slot $t_k \in [0,t_i)$ and validator $v_k \in H_{\voting{(t_k)}}$ such that  $\chainava^{\voting{t_k}}_k\succeq \chainava^{\voting{t_i}}_i$.
    Therefore, we can apply the inductive hypothesis.
    Specifically, \Cref{cond:5-2} for slot $t_k$ implies 
    \Cref{cond:5-2} for slot $t_i$.
    \item[Case 3: {\normalfont $\chainava^{\voting{t_i}}_i = (\mfcvoteFFG{t_i}_i)^{\lceil \kappa}$}.] 
    \Cref{line:algga-no-ffg-vote-chainava} of \Cref{algo:prob-ga-fast} implies that $\Chain^{\voting{t_i}}_i \succeq \mfcvoteNoFFG{t_i}_i$.
    Then, 
    \Cref{cond:4-2} implies that $\Chain^{\voting{t_i}}_i \succeq \chainava^{\voting{t_i}}_i$.
  \end{description}
  \item[\Cref{cond:6-2}.]
  By \Crefrange{line:algotb-at-confirm}{line:algtob-set-chaava-to-bcand}, $\chainava^{\fastconfirming{t_i}}_i \in \{\chainava^{\voting{t_i}}_i, \GJ(\VFFG^{\fastconfirming{t_i}}).\chain, \chain^C\}$ with $(\chain^C, Q) = \texttt{fastconfirm}(\VFFG^{\fastconfirming{t_i}}_i,t_i) \land Q\neq \emptyset$.
  Let us consider each case.
  \begin{description}
    \item[Case 1: $\chainava^{\fastconfirming{t_i}}_i = \chainava^{\voting{t_i}}_i$.] In this case, \Cref{cond:5-2} implies \Cref{cond:6-2}.
    \item[Case 2: \normalfont$\chainava^{\fastconfirming{t_i}}_i = \GJ(\VFFG^{\fastconfirming{t_i}}_i).\chain$.] 
    By following the reasoning applied when discussing \Cref{cond:1-2,cond:2-2}, this means that there exists a slot $t_k \in [0,t_i]$ and validator $v_k \in H_{\voting{(t_ k)}}$ such that $\chainava^{\voting{t_k}}_k \succeq \chainava^{\fastconfirming{t_i}}_i$.
    Then, \Cref{cond:5-2} for slot $t_k$ implies 
    \Cref{cond:6-2} for slot $t_i$.
    \item[Case 3: $\chainava^{\fastconfirming{t_i}}_i = \chain^C$ with $(\chain^C, Q) = \texttt{fastconfirm}(\VFFG^{\fastconfirming{t_i}}_i,t_i) \land Q\neq \emptyset$.]
    This implies that $\VFFG^{\fastconfirming{t_i}}_i$ includes a quorum of \textsc{vote} message for chain $\chain^C$.
    Given \Cref{cond:4-2},  $\VNoFFG^{\fastconfirming{t_i}}_i$ also includes a quorum of \textsc{vote} message for chain $\chain^C$.
    Hence, $\Chain^{\fastconfirming{t_i}}_i = \chainava^{\fastconfirming{t_i}}_i$. 
    \end{description}
  \item[\Cref{cond:0bt-2}.]
  As argued in the proof of \Cref{cond:3-2,cond:4-2}, we know that $\FFGExec$ and $\NoFFGExec$ are honest-output-dynamically-equivalent up to round $\proposing{t_i}$.
  This, \Cref{cond:3-2,cond:4-2} and \cref{cond:adv-decision-4} of $\ANoFFG$'s set of decisions clearly imply \Cref{cond:0bt-2} up to round $\proposing{(t_i+1)}$.\qedhere
 \end{description}
\end{proof}

In the following Lemmas and Theorems, unless specified, we refer to executions of \Cref{alg:3sf-tob-noga}.

\begin{lemma}[Analogous of \Cref{lem:keep-voting-tob-fast-conf}]
  \label{lem:keep-voting-tob-fast-conf-ffg}
  If, in slot $t$, all validators in $H_{\voting{t}}$ cast \textsc{vote} messages for chains extending chain \(\chain\),
  then, for any validator $v_i\in H_{\voting{(t+1)}}$, $\mfcpropose{(t+1)}_i \succeq \chain$ and $\mfcvote{(t+1)}_i \succeq \chain$, which implies that, in slot $t+1$, all validators in $H_{\voting{(t+1)}}$ cast \textsc{vote} messages for chains extending \(\chain\).
\end{lemma}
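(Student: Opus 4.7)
The plan is to reduce this statement to its non-FFG counterpart, \Cref{lem:keep-voting-tob-fast-conf}, by invoking the equivalence machinery built in \Cref{lem:equiv-ga2}. Concretely, given the execution $\FFGExec$ of \Cref{alg:3sf-tob-noga} in which the hypothesis of the lemma holds, I would first apply \Cref{lem:equiv-ga2} to obtain an $\mfc$-equivalent $\eta$-compliant execution $\NoFFGExec$ of \Cref{algo:prob-ga-fast}. Since by \Cref{def:dyn-equiv} dynamic-equivalence of \textsc{vote} messages only allows a difference in the encapsulated \textsc{ffg-vote} (the chain body of the \textsc{vote} is preserved), the hypothesis transfers verbatim: every validator in $H_{\voting{t}}$ also casts, in $\NoFFGExec$, a \textsc{vote} for a chain extending $\chain$.

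Next, I would apply \Cref{lem:keep-voting-tob-fast-conf} inside $\NoFFGExec$ to conclude that, for every $v_i\in H_{\voting{(t+1)}}$, both $\mfcproposeNoFFG{(t+1)}_i\succeq \chain$ and $\mfcvoteNoFFG{(t+1)}_i\succeq \chain$ hold. Now I would lift the conclusion back to $\FFGExec$ using \Cref{cond:3-2,cond:4-2} of \Cref{def:exec-equiv}, which give $\mfcproposeFFG{(t+1)}_i = \mfcproposeNoFFG{(t+1)}_i$ and $\mfcvoteFFG{(t+1)}_i = \mfcvoteNoFFG{(t+1)}_i$, yielding the desired $\mfcpropose{(t+1)}_i\succeq \chain$ and $\mfcvote{(t+1)}_i\succeq \chain$ in the FFG execution.

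To close the lemma, I would then observe that \Cref{line:algtob-vote-comm,line:algtob-vote} of \Cref{alg:3sf-tob-noga} force each $v_i\in H_{\voting{(t+1)}}$ to cast a \textsc{vote} for either a chain $\chain'$ with $\chain' \succeq \mfcvote{(t+1)}_i$ (if a suitable proposal exists) or for $\mfcvote{(t+1)}_i$ itself, both of which extend $\chain$. The only subtle point, and the part that requires care, is checking that the hypothesis of \Cref{lem:equiv-ga2} is applicable to the given execution (which is by assumption $\eta$-compliant) and that the dynamic-equivalence between message sets truly preserves the chain component of \textsc{vote} messages; this is immediate from \Cref{def:dyn-equiv}, so no additional argument is required. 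The rest of the proof is just bookkeeping through the correspondence given by \Cref{def:exec-equiv}.
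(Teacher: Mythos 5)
Your proposal is correct and follows essentially the same route as the paper's proof: invoke \Cref{lem:equiv-ga2} to obtain an \mfc-equivalent execution of \Cref{algo:prob-ga-fast}, transfer the hypothesis (since dynamic-equivalence preserves the chain component of \textsc{vote} messages), apply \Cref{lem:keep-voting-tob-fast-conf} there, and lift the fork-choice conclusions back via \Cref{cond:3-2,cond:4-2}. Your closing observation about \Cref{line:algtob-vote-comm,line:algtob-vote} simply spells out the final implication that the paper leaves implicit.
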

\begin{proof}
  Take any $\eta$-compliant execution $\FFGExec$ of \Cref{alg:3sf-tob-noga} where in slot $t$, all validators in $H_{\voting{t}}$ cast \textsc{vote} messages for chains extending chain \(\chain\).
  Let $\NoFFGExec$ be an \mfc-equivalent execution of \Cref{algo:prob-ga-fast} which \Cref{lem:equiv-ga2} proves to exist.
  It is clear from the definition of $\mfc$-equivalent honest-output-dynamically-equivalent executions that in slot $t$ of execution $\NoFFGExec$ all validators in $H_{\voting{t}}$ cast \textsc{vote} messages for chains extending chain \(\chain\) as well.
  Hence, by \Cref{lem:keep-voting-tob-fast-conf}, in execution $\NoFFGExec$, for any validator $v_i\in H_{\voting{(t+1)}}$, $\mfcpropose{(t+1)}_i \succeq \chain$ and $\mfcvote{(t+1)}_i \succeq \chain$.
  From \Cref{cond:3-2,cond:4-2} of \Cref{lem:equiv-ga2} then it follows that this holds for execution $\FFGExec$ as well, which implies that, in slot $t+1$, all validators in $H_{\voting{(t+1)}}$ cast \textsc{vote} messages for chains extending \(\chain\).
\end{proof}

\begin{lemma}[Analogous of \Cref{lem:vote-proposal-fast-conf}]
  \label{lem:vote-proposal-fast-conf-ffg}
  Let $t$ be a slot with an honest proposer $v_p$ and assume that~$v_p$ casts a $[\textsc{propose}, \chain_p, \chain^C_p, Q^C_p, t, v_p]$ message. 
  Then, for any slot $t'\geq t$, all validators in $H_{\voting{(t')}}$ cast a \textsc{vote} message for a chain extending $\chain_p$.
  Additionally, for any slot $t'' > t$ and any validator $v_i\in H_{\voting{(t'')}}$, $\mfcpropose{t''}_i\succeq \chain_p$ and $\mfcvote{t''}_i\succeq \chain_p$.
\end{lemma}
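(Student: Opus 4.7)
The plan is to mirror the strategy used in the proof of \Cref{lem:keep-voting-tob-fast-conf-ffg}: reduce the statement for \Cref{alg:3sf-tob-noga} to the analogous statement for \Cref{algo:prob-ga-fast} by way of the \mfc-equivalent execution provided by \Cref{lem:equiv-ga2}. Concretely, I would start by fixing an arbitrary $\eta$-compliant execution $\FFGExec$ of \Cref{alg:3sf-tob-noga} in which, during slot $t$, an honest proposer $v_p$ sends $[\textsc{propose}, \chain_p, \chain^C_p, Q^C_p, \GJ_p, t, v_p]$. By \Cref{lem:equiv-ga2}, there exists an $\eta$-compliant execution $\NoFFGExec$ of \Cref{algo:prob-ga-fast} that is \mfc-equivalent to $\FFGExec$.

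Next I would use the honest-output-dynamic-equivalence of the two executions (\Cref{cond:0b-2-t}) to identify the corresponding proposal in $\NoFFGExec$. Since the two \textsc{propose} messages differ at most in the greatest-justified-checkpoint component, in $\NoFFGExec$ the same validator $v_p$, which is honest at round $\proposing{t}$ by the construction of the adversary $\ANoFFG$ in \Cref{lem:equiv-ga2}, casts a proposal for the same chain $\chain_p$. Applying \Cref{lem:vote-proposal-fast-conf} to $\NoFFGExec$ then yields, for every slot $t' \geq t$, that all validators in $H_{\voting{t'}}$ cast a \textsc{vote} for a chain extending $\chain_p$ in $\NoFFGExec$, and, for every slot $t'' > t$ and $v_i \in H_{\voting{t''}}$, that $\mfcproposeNoFFG{t''}_i \succeq \chain_p$ and $\mfcvoteNoFFG{t''}_i \succeq \chain_p$.

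Finally, I would translate these conclusions back to $\FFGExec$. For the \textsc{vote}-message claim, dynamic equivalence (\Cref{cond:0b-2-t}) directly gives that any honest \textsc{vote} from slot $t'$ in $\FFGExec$ is for a chain extending $\chain_p$, because the chain component of a \textsc{vote} message is preserved under dynamic equivalence. For the fork-choice claims, \Cref{cond:3-2,cond:4-2} of \mfc-equivalence give $\mfcproposeFFG{t''}_i = \mfcproposeNoFFG{t''}_i \succeq \chain_p$ and $\mfcvoteFFG{t''}_i = \mfcvoteNoFFG{t''}_i \succeq \chain_p$, completing the proof.

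The only mildly delicate point is checking that the definition of dynamic equivalence preserves enough information for the reduction: \textsc{propose} messages may differ in their $\GJ$ component but not in the proposed chain, and \textsc{vote} messages may differ in their \textsc{ffg-vote} component but not in the chain voted on. This matches \Cref{def:dyn-equiv} precisely, so no further work is required beyond invoking the equivalence lemma and the synchronous-case analogue.
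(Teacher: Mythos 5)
Your proposal is correct and follows essentially the same route as the paper's own proof: invoke \Cref{lem:equiv-ga2} to obtain an \mfc-equivalent execution of \Cref{algo:prob-ga-fast}, note that dynamic equivalence forces the same chain $\chain_p$ to be proposed there, apply \Cref{lem:vote-proposal-fast-conf}, and pull the conclusions back via \Cref{cond:0b-2-t,cond:3-2,cond:4-2}. Your extra remark that \Cref{def:dyn-equiv} preserves the proposed chain and the voted chain is exactly the observation the paper treats as immediate, so nothing further is needed.
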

\begin{proof}
  Take any $\eta$-compliant execution $\FFGExec$ of \Cref{alg:3sf-tob-noga} where $v_p$, the proposer of slot $t$, is honest and casts a $[\textsc{propose}, \chain_p, \chain^C_p, Q^C_p t, v_p]$ message.
  Let $\NoFFGExec$ be an $\mfc$-equivalent execution of \Cref{algo:prob-ga-fast} which \Cref{lem:equiv-ga2} proves to exist.
  It is clear from the definition of $\mfc$-equivalent honest-output-dynamically-equivalent executions that $v_p$ sends the same \textsc{propose} message in slot $t$ of execution $\NoFFGExec$.
  Hence, by \Cref{lem:vote-proposal-fast-conf}, in execution $\NoFFGExec$, for any slot $t'\geq t$, all validators in $H_{\voting{(t')}}$ cast a \textsc{vote} message for a chain extending $\chain_p$, and, for any slot $t'' > t$ and any validator $v_i\in H_{\voting{(t'')}}$, $\mfcpropose{t''}_i\succeq \chain_p$ and $\mfcvote{t''}_i\succeq \chain_p$.
  From \Cref{cond:0b-2-t,cond:3-2,cond:4-2} of \Cref{lem:equiv-ga2} then it follows that this holds for execution $\FFGExec$ as well.
\end{proof}

\begin{lemma}[Analogous of \Cref{lem:ga-confirmed-always-canonical}]\label{lem:ga-confirmed-always-canonical-ffg}
  Let $r_i$ be any round and $r_j$ be any round such that $r_j\ge r_i$ and $r_j \in \{\proposing{r_j},\voting{r_j}\}$. Then, for any validator~$v_i$ honest in round $r_i$ and any validator $v_j \in H_{r_j}$, 
  $\chainava^{r_i}_i \preceq \chain^{r_j}_j$.
\end{lemma}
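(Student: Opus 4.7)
The plan is to reduce this lemma to the already-proved \Cref{lem:ga-confirmed-always-canonical} by invoking the $\mfc$-equivalent execution of \Cref{algo:prob-ga-fast} guaranteed by \Cref{lem:equiv-ga2}, exactly as was done for the FFG-analogues of \Cref{lem:keep-voting-tob-fast-conf,lem:vote-proposal-fast-conf} just above. So I would fix an $\eta$-compliant execution $\FFGExec$ of \Cref{alg:3sf-tob-noga}, take an $\mfc$-equivalent execution $\NoFFGExec$ of \Cref{algo:prob-ga-fast}, and transfer the statement across the equivalence using the conditions of \Cref{def:exec-equiv}.

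The key observation I would use first is that $\chainava_i$ in \Cref{alg:3sf-tob-noga} is only modified at voting rounds (\Cref{line:algtob-vote-chainava}) or fast-confirmation rounds (\Cref{line:algtob-set-chaava-to-bcand}), and only while $v_i$ is active. Hence either $\chainava^{r_i}_i = \genesis$, in which case the claim is immediate, or there is a greatest round $r' \leq r_i$ at which $v_i$ was active and updated $\chainava_i$, so that $\chainava^{r_i}_i = \chainava^{r'}_i$, $v_i \in H_{r'}$, and $r' \in \{\voting{t'}, \fastconfirming{t'}\}$ for some slot $t' \leq \slot(r_i)$.

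Next, depending on whether $r'$ is a voting round or a fast-confirmation round, I would invoke \Cref{cond:5-2} or \Cref{cond:6-2} of \Cref{def:exec-equiv} to obtain, in $\NoFFGExec$, a slot $t_k \leq t'$ and a validator $v_k \in H_{\voting{t_k}}$ with $\Chain^{\voting{t_k}}_k \succeq \chainava^{r'}_i$. Since $\voting{t_k} \leq \voting{t'} \leq r' \leq r_i \leq r_j$, \Cref{lem:ga-confirmed-always-canonical} applied inside $\NoFFGExec$ yields $\Chain^{\voting{t_k}}_k \preceq \mfcNoFFG^{r_j}_j$, and hence $\chainava^{r_i}_i \preceq \mfcNoFFG^{r_j}_j$. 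Finally, \Cref{cond:3-2} or \Cref{cond:4-2}, depending on whether $r_j$ is a propose or a vote round, lets me replace $\mfcNoFFG^{r_j}_j$ by $\mfcFFG^{r_j}_j$, concluding the proof.

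The only real obstacle is the bookkeeping around slot and round indices: one has to be careful that, in both sub-cases for $r'$, the inequality $\voting{t_k} \leq r_j$ needed to apply \Cref{lem:ga-confirmed-always-canonical} actually holds, and that the appropriate equivalence condition (\Cref{cond:3-2} vs.\ \Cref{cond:4-2}) is matched to the form of $r_j$. These are routine once the chain $t_k \leq t' \leq \slot(r_i) \leq \slot(r_j)$ is written out, so I do not expect any genuine technical difficulty beyond this.
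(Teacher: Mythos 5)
Your proposal is correct and follows essentially the same route as the paper's own proof: reduce to \Cref{lem:ga-confirmed-always-canonical} via the $\mfc$-equivalent execution from \Cref{lem:equiv-ga2}, locate the round where $\chainava_i$ was last set, transfer the available chain to a confirmed chain in the no-FFG execution using \Cref{cond:5-2,cond:6-2}, and transfer back through \Cref{cond:3-2,cond:4-2}. The only difference is cosmetic (you handle the $\chainava^{r_i}_i=\genesis$ case explicitly, which the paper leaves implicit).
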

\begin{proof}
  Take any $\eta$-compliant execution $\FFGExec$ of \Cref{alg:3sf-tob-noga}.
  Let $\NoFFGExec$ be an equivalent execution of \Cref{algo:prob-ga-fast} which \Cref{lem:equiv-ga2} proves to exist.
  Let $r_i$ be any round and $r_j$ be any round such that $r_j\ge r_i$ and $r_j \in \{\proposing{r_j},\voting{r_j}\}$. 
  We know that $\chainava^{r_i}_i$ is first set by $v_i$ in round $r_i' \leq r_i$ such that $r_i' \in \{\voting{\slot(r_i')},\fastconfirming{\slot(r_i')}\}$.
  From \Cref{cond:5-2,cond:6-2} of \Cref{lem:equiv-ga2}, we know that there exits a slot $t_m \leq r_k$ and a validator $v_m \in H_{\voting{(t_m)}}$ such that $\Chain^{\voting{t_m}}_m \succeq \chainava^{r_i'}_i = \chainava^{r_i}_i$.
  \Cref{lem:ga-confirmed-always-canonical} implies that $\mfcNoFFG^{r_j}_j \succeq \Chain^{\voting{t_m}}_m$.
  From \Cref{lem:equiv-ga2}, we know that $\mfcFFG^{r_j}_j= \mfcNoFFG^{r_j}_j$.
  Hence,  $\mfcFFG^{r_j}_j= \mfcNoFFG^{r_j}_j \succeq \Chain^{\voting{t_m}}_m\succeq \chainava^{r_i'}_i = \chainava^{r_i}_i$.
\end{proof}

\begin{theorem}[Reorg Resilience - Analogous of \Cref{thm:reorg-res-prop-tob}]\label{thm:reorg-res-prop-tob-ffg}
  \Cref{alg:3sf-tob-noga} is $\eta$-reorg-resilient.
\end{theorem}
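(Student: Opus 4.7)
The plan is to mirror the proof of \Cref{thm:reorg-res-prop-tob} almost verbatim, substituting the ffg-variants of the supporting lemmas that have just been established. The key observation is that the mfc-equivalence given by \Cref{lem:equiv-ga2}, together with \Cref{lem:vote-proposal-fast-conf-ffg} and \Cref{lem:ga-confirmed-always-canonical-ffg}, provides exactly the two ingredients used in the proof of \Cref{thm:reorg-res-prop-tob}: a guarantee that chains proposed by honest validators persist in the vote-time fork-choice output of every honest validator at later slots, and a guarantee that every available chain confirmed in the past is a prefix of that fork-choice output.

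Concretely, I would fix a slot $t_p$ with an honest proposer $v_p$ that casts a \textsc{propose} message for chain $\chain_p$, a round $r_i$, and a validator $v_i$ honest in $r_i$. Then I would pick any slot $t_j > \max(t_p, \slot(r_i))$; by \Cref{eq:async-condition} (or more precisely by the $\eta$-compliance assumption which guarantees $H_{\voting{t_j}} \neq \emptyset$) I can choose some $v_j \in H_{\voting{t_j}}$. Applying \Cref{lem:vote-proposal-fast-conf-ffg} to $v_j$ yields $\mfcvote{t_j}_j \succeq \chain_p$, and applying \Cref{lem:ga-confirmed-always-canonical-ffg} to $v_i$ and $v_j$ yields $\mfcvote{t_j}_j \succeq \chainava^{r_i}_i$. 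Since two chains that are both prefixes of the same chain $\mfcvote{t_j}_j$ cannot conflict, we conclude that $\chain_p$ does not conflict with $\chainava^{r_i}_i$, which is exactly the definition of $\eta$-reorg-resilience.

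There is essentially no real obstacle here: the genuine work has already been paid for by proving the ffg-analogues of the two key lemmas via the mfc-equivalence reduction in \Cref{lem:equiv-ga2}. The only subtlety worth double-checking is that the available chain $\chainava$ in \Cref{alg:3sf-tob-noga} plays the role formerly played by the confirmed chain $\Chain$ in \Cref{algo:prob-ga-fast}, which is already handled in the statement of \Cref{lem:ga-confirmed-always-canonical-ffg}; hence the reduction goes through cleanly and the proof is a three-line rewriting of the earlier one.

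\begin{proof}
Take a slot $t_p$ with an honest proposer $v_p$ who sends a \textsc{propose} message for chain $\chain_p$. Take also any round $r_i$ and validator $v_i$ honest in round $r_i$. Let $t_j$ be any slot such that $t_j > \max(t_p, \slot(r_i))$. By Constraint~\eqref{eq:async-condition} we know that $H_{\voting{t_j}}$ is not empty, so pick any validator $v_j \in H_{\voting{t_j}}$. \Cref{lem:vote-proposal-fast-conf-ffg} implies that $\mfcvote{t_j}_j \succeq \chain_p$, and \Cref{lem:ga-confirmed-always-canonical-ffg} implies that $\mfcvote{t_j}_j \succeq \chainava^{r_i}_i$. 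Hence, $\chain_p$ does not conflict with $\chainava^{r_i}_i$.
\end{proof}
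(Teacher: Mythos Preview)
Your proof is correct and essentially identical to the paper's own proof, which simply instructs the reader to rerun the argument of \Cref{thm:reorg-res-prop-tob} with $\Chain$ replaced by $\chainava$, \Cref{lem:vote-proposal-fast-conf} replaced by \Cref{lem:vote-proposal-fast-conf-ffg}, and \Cref{lem:ga-confirmed-always-canonical} replaced by \Cref{lem:ga-confirmed-always-canonical-ffg}. You have carried out exactly those substitutions.
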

\begin{proof}
  The proof follows the one for \Cref{thm:reorg-res-prop-tob} with the following changes only.
  \begin{enumerate}
    \item Replace $\Chain$ with $\chainava$,
    \item Replace \Cref{lem:vote-proposal-fast-conf} with \Cref{lem:vote-proposal-fast-conf-ffg}, and
    \item Replace \Cref{lem:ga-confirmed-always-canonical} with \Cref{lem:ga-confirmed-always-canonical-ffg}.\qedhere
  \end{enumerate}
\end{proof}


\begin{theorem}[$\eta$-dynamic-availability - Analogous of \Cref{thm:dyn-avail-fast-conf-tob}]
  \label{thm:dyn-avail-fast-conf-tob-ffg}
\Cref{alg:3sf-tob-noga} is $\eta$-dynamically-available.
\end{theorem}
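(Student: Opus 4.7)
The plan is to mirror the proof of \Cref{thm:dyn-avail-fast-conf-tob} by replacing each tool used there with its FFG-lifted analogue that we have already established, and to handle the few new wrinkles introduced by the FFG component of \Cref{alg:3sf-tob-noga}. Concretely, I will separately establish $\eta$ Liveness with $T_{\mathsf{conf}} = O(\kappa)$ and $\eta$ Safety of $\chainava$.

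For $\eta$ Liveness, I would proceed exactly as in the proof of \Cref{thm:dyn-avail-fast-conf-tob}: fix a round $r$, a round $r_i \geq r + 8\kappa\Delta + \Delta$, a validator $v_i \in H_{r_i}$, and use Lemma~2 of~\cite{rlmd} to find, with overwhelming probability, a slot $t_p \in [t+1, t+\kappa]$ (where $t = \slot(r)$) with an honest proposer $v_p$. I would first need a short analogue of \Cref{lem:ga-mfc-proposer-shorter-than-t} for \Cref{alg:3sf-tob-noga}, showing $\mfcpropose{t_p}_p.p < t_p$ (the proof is identical, by induction on the slot, since honest voting still respects Validity of Graded Agreement; the addition of the FFG component does not change which chain is \textsc{vote}d for). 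Then \Cref{prop:extend} guarantees that the chain $\chain_p$ proposed by $v_p$ includes every transaction in $\txpool^{\proposing{t_p}} \supseteq \txpool^r$. Applying \Cref{lem:vote-proposal-fast-conf-ffg} and \Cref{line:algtob-vote-chainava} of \Cref{alg:3sf-tob-noga}, I obtain $\chain_p \preceq (\mfcvote{t_i}_i)^{\lceil\kappa,t_i}$ and hence $\chain_p \preceq \chainava^{r_i}_i$. To handle a possible fast confirmation update in $r_i$, I would use \Cref{lem:vote-proposal-fast-conf-ffg} again (all honest \textsc{vote}s in slot $t_i$ extend $\chain_p$) together with $f < \frac{n}{3}$ to see that any $\fastcand$ produced at the fast-confirm step must also extend $\chain_p$; and a possible update to $\GJ(\V_i).\chain$ inside \texttt{fastconfirm} likewise cannot roll back $\chainava$ past $\chain_p$ because any such justified chain is itself $\succeq \chain_p$ by the majority reasoning already used in the proof of \Cref{lem:equiv-ga2}.

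For $\eta$ Safety, I would take any two honest validators $v_i$ at round $r_i$ and $v_j$ at round $r_j$, pick a witness slot $t_k > \max(\slot(r_i), \slot(r_j))$, which is guaranteed to have some $v_k \in H_{\voting{t_k}}$ by \Cref{eq:pi-sleepiness}, and invoke \Cref{lem:ga-confirmed-always-canonical-ffg} twice to conclude $\chainava^{r_i}_i \preceq \mfcvote{t_k}_k$ and $\chainava^{r_j}_j \preceq \mfcvote{t_k}_k$, so the two available chains cannot conflict. This is a direct rewrite of the safety part of the proof of \Cref{thm:dyn-avail-fast-conf-tob} with $\Chain$ replaced by $\chainava$ and the non-FFG lemma replaced by its FFG analogue.

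The main obstacle is verifying that the new FFG-driven updates to $\chainava$ in \Cref{alg:3sf-tob-noga} (the $\GJfrozen_i.\chain$ and $\GJ(\V_i).\chain$ branches on \Cref{line:algtob-vote-chainava,line:algotb-at-confirm}) cannot silently violate liveness by moving $\chainava_i$ off an honest proposal chain $\chain_p$. I would resolve this by reusing the argument already embedded in the proof of \Cref{lem:equiv-ga2}: any justified checkpoint whose chain is ever considered at these lines came from the target of some honest validator's \textsc{ffg-vote}, which in turn equals a past $\chainava$ value; by \Cref{cond:5-2}/\Cref{cond:6-2} of \mfc-equivalence this is dominated by some $\Chain^{\voting{t_m}}_m$ for an honest $v_m$, and by \Cref{lem:vote-proposal-fast-conf-ffg} applied in the reverse direction (or by a direct invocation of \Cref{lem:ga-confirmed-always-canonical-ffg}) the honest proposal $\chain_p$ is a prefix of any subsequently confirmed chain, so such updates can only grow $\chainava$ past $\chain_p$, never backwards away from it.
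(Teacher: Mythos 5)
Your proposal matches the paper's proof in both structure and substance: it reuses the proof of \Cref{thm:dyn-avail-fast-conf-tob} with $\Chain$ replaced by $\chainava$, the FFG-analogous lemmas \Cref{lem:vote-proposal-fast-conf-ffg} and \Cref{lem:ga-confirmed-always-canonical-ffg}, the observation that \Cref{lem:ga-mfc-proposer-shorter-than-t} carries over, and the one genuinely new liveness case -- $\chainava_i$ being set to $\GJ(\V_i^{r_i}).\chain$ at a fast-confirmation round via \Cref{line:algtob-set-chaava-to-bcand} -- handled essentially as the paper does, by tracing the justified chain back to an earlier honest validator's $\chainava$ (the target of its \textsc{ffg-vote}) and invoking \Cref{lem:ga-confirmed-always-canonical-ffg} to rule out any conflict with $\chain_p$. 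One small imprecision: your interim claim that any justified chain considered at these lines is itself $\succeq \chain_p$ is too strong (it may be a prefix of $\chain_p$, in which case the update simply does not fire); your closing formulation -- non-conflict plus the fact that when the update does fire the new value strictly extends the current $\chainava \succeq \chain_p$ -- is the correct one and is exactly the paper's argument.
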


\begin{proof}
  {Note that \Cref{lem:ga-mfc-proposer-shorter-than-t} holds for \Cref{alg:3sf-tob-noga} as well.}
  Then, the proof follows the one for \Cref{thm:dyn-avail-fast-conf-tob} with the following changes only.
  \begin{enumerate}
    \item Replace {$\Chain$ with $\chainava$},
    \item Replace the reference to \Cref{line:algga-no-ffg-vote-chainava} of \Cref{algo:prob-ga-fast} with \Cref{line:algtob-vote-chainava} of \Cref{alg:3sf-tob-noga},
    \item Replace \Cref{lem:vote-proposal-fast-conf} with \Cref{lem:vote-proposal-fast-conf-ffg},
    \item Replace \Cref{lem:ga-confirmed-always-canonical} by \Cref{lem:ga-confirmed-always-canonical-ffg}, and
    \item In the proof of liveness, consider the following additional case.
    Assume that $r_i \geq \voting{(t+2\kappa)}$ is a fast confirmation round and that $\chainava^{r_i}_i$ is set to $\GJ(\V^{r_i}_i).\chain$ via \Cref{line:algtob-set-chaava-to-bcand}.
    Because the chain of the target checkpoint of an \textsc{ffg-vote} cast by an honest validator $v_\ell$ in round $r_{\ell}$ corresponds to $(\makeFFG{\chainava}^{r_{\ell}}_\ell.\chain,\cdot)$ and we assume $f<\frac{n}{3}$, there exists a slot $t_k \in [0,\slot(r_i)]$ and a validator $v_k \in H_{\voting{(t_k)}}$ such that $\chainava^{\voting{t_k}}_k \succeq \GJ(\V^{r_i}_i).\chain = \chainava^{r_i}_i$.
    We have already established that $\chainava^{\voting{\slot(r_i)}}_i \succeq \chain_p$.
    Then, \Cref{lem:ga-confirmed-always-canonical-ffg} implies that $\mfcvote{\slot(r_i)}_i \succeq \GJ(\V^{r_i}_i).\chain =  \chainava^{r_i}_i $ and $\mfcvote{\slot(r_i)}_i \succeq \chainava^{\voting{\slot(r_i)}}_i \succeq \chain_p$.
    This implies that $\chainava^{\voting{\slot(r_i)}}_i$ and $\GJ(\V^{r_i}_i).\chain$ do not conflict.
    Hence, given that we assume that \Cref{line:algtob-set-chaava-to-bcand} is executed, $\chainava^{\voting{\slot(r_i)}}_i \prec \GJ(\V^{r_i}_i).\chain$.
    Because, $\chainava^{\voting{\slot(r_i)}}_i \succeq \chain_p$, we can conclude that 
    $\chainava^{r_i}_i = \GJ(\V^{r_i}_i).\chain \succ \chainava^{\voting{\slot(r_i)}}_i\succeq \chain_p$.\qedhere
  \end{enumerate}
\end{proof}

{
\begin{lemma}[Analogous of \Cref{thm:fast-liveness-tob}]
\label{thm:fast-liveness-modified-tob}
Take a slot $t$ in which \(|H_{\voting{t}}| \geq \frac{2}{3}n\).
If in slot $t$ an honest validator sends a \textsc{propose} message for chain $\chain_p$, then, for any validator $ v_i \in H_{\fastconfirming{t}}$, $\chainava^{\fastconfirming{t}}_i \succeq \chain_p$.
\end{lemma}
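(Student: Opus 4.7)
The plan is to mirror the strategy of \Cref{thm:fast-liveness-tob} while accounting for the additional greatest-justified-checkpoint filter that $\texttt{fastconfirm}$ imposes on top of $\texttt{fastconfirmsimple}$. First, I would invoke \Cref{lem:vote-proposal-fast-conf-ffg} to conclude that every $v_k\in H_{\voting{t}}$ casts a \textsc{vote} message for a chain extending $\chain_p$ in slot $t$. Since $|H_{\voting{t}}|\geq \tfrac{2}{3}n$ and $\GST=0$, synchrony then guarantees that by round $\fastconfirming{t}$ the view $\V_i^{\fastconfirming{t}}$ of any $v_i\in H_{\fastconfirming{t}}$ contains at least $\tfrac{2}{3}n$ such \textsc{vote} messages, so $\texttt{fastconfirmsimple}(\V_i^{\fastconfirming{t}},t)=(\chain^*,Q)$ with $\chain^*\succeq \chain_p$ and $Q\neq\emptyset$.

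The core step is to show that $\GJ(\V_i^{\fastconfirming{t}}).\chain \preceq \chain_p$, so that the filter inside $\texttt{fastconfirm}$ does not replace $\chain^*$ with a conflicting chain. Writing $(\chain_j^*,c_j^*):=\GJ(\V_i^{\fastconfirming{t}})$, justification of this checkpoint requires $\geq \tfrac{2}{3}n$ \textsc{ffg-vote}s whose target has checkpoint slot $c_j^*$ and chain extending $\chain_j^*$; since $f<\tfrac{n}{3}$, strictly more than $\tfrac{n}{3}$ of the contributing validators are honest, and \Cref{prop:never-slashed-2} forces each such honest $v_k$ to have sent its \textsc{ffg-vote} in slot $c_j^*$, so $v_k\in H_{\voting{c_j^*}}$ with $\chainava^{\voting{c_j^*}}_k\succeq \chain_j^*$. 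Fixing one such $v_k$, I would apply \Cref{cond:5-2} of \Cref{lem:equiv-ga2} to obtain a slot $t_{k'}\leq c_j^*$ and a validator $v_{k'}\in H_{\voting{t_{k'}}}$ satisfying $\Chain^{\voting{t_{k'}}}_{k'}\succeq \chain_j^*$ in an $\mfc$-equivalent execution of \Cref{algo:prob-ga-fast}. If $t_{k'}<t$, \Cref{lem:ga-confirmed-always-canonical} applied at round $\proposing{t}$ with the honest proposer $v_p$ gives $\Chain^{\voting{t_{k'}}}_{k'}\preceq \mfcproposeNoFFG{t}_p$, and since $v_p$ sets $\chain_p\succeq \mfcproposeNoFFG{t}_p=\mfcproposeFFG{t}_p$ via \Cref{cond:3-2} of \Cref{lem:equiv-ga2}, this yields $\chain_p\succeq \chain_j^*$. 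The corner case $t_{k'}=t$, which forces $c_j^*=t$, requires a direct argument on $v_{k'}$'s \textsc{vote} action in slot $t$ of the no-FFG execution, reusing the base-case reasoning behind \Cref{lem:vote-proposal-fast-conf} to conclude $\chain_p\succeq \mfcvoteNoFFG{t}_{k'}\succeq \Chain^{\voting{t}}_{k'}\succeq \chain_j^*$.

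With $\chain_j^*\preceq \chain_p\preceq \chain^*$ established, $\texttt{fastconfirm}(\V_i^{\fastconfirming{t}},t)$ returns $(\chain^*,Q)$ and hence $\fastcand=\chain^*\succeq \chain_p$. The fast-confirm update of \Cref{alg:3sf-tob-noga} at \Cref{line:algtob-set-chaava-to-bcand} then either leaves $\chainava_i^{\voting{t}}\succeq \fastcand\succeq \chain_p$ unchanged or promotes $\chainava_i$ to $\fastcand\succeq \chain_p$, so in either case $\chainava_i^{\fastconfirming{t}}\succeq \chain_p$. I expect the main obstacle to be handling the $t_{k'}=t$ corner case cleanly, since there \Cref{lem:ga-confirmed-always-canonical} is not directly applicable and one must instead reopen how an honest validator's \textsc{vote} aligns with the honest proposal in the very slot of the proposal.
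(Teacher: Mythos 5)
Your proof is correct, and its skeleton is the same as the paper's: the paper disposes of this lemma in one line, saying the argument of \Cref{thm:fast-liveness-tob} carries over verbatim with $\Chain_i$ replaced by $\chainava_i$ (i.e., invoke \Cref{lem:vote-proposal-fast-conf-ffg} to get unanimous honest \textsc{vote}s for $\chain_p$, observe the quorum at round $\fastconfirming{t}$, and conclude via the fast-confirm update). Where you go beyond the paper is in explicitly verifying that the $\GJ$-filter inside $\texttt{fastconfirm}$ cannot divert the output away from $\chain_p$ and in doing the case analysis on the guard $\chainava_i \nsucceq \fastcand$ at \Cref{line:algtob-set-chaava-to-bcand}; the paper's terse proof leaves both points implicit. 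Your way of closing that gap --- mapping the target checkpoint of each honest \textsc{ffg-vote} contributing to $\GJ(\V_i^{\fastconfirming{t}})$ back to some honest $\chainava^{\voting{c^*_j}}_k$, then transporting it into the \mfc-equivalent execution via \Cref{cond:5-2} of \Cref{lem:equiv-ga2} and pushing it forward with \Cref{lem:ga-confirmed-always-canonical} --- is exactly the technique the paper itself uses when proving \Cref{cond:1-2,cond:2-2} in \Cref{lem:equiv-ga2}, so the extra work is sound and consistent with the paper's toolkit; your handling of the corner case $c^*_j = t$ via $\chain^*_j \preceq \Chain^{\voting{t}}_{k'} \preceq \mfcvote{t}_{k'} \preceq \chain_p$ is also fine. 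A slightly cheaper route for that step is to note directly that every honest \textsc{ffg-vote} of slot $t$ has target chain $\chainava^{\voting{t}}_k \preceq \mfcvote{t}_k \preceq \chain_p$, so with $f<\frac{n}{3}$ any checkpoint of slot $t$ justified in $\V_i^{\fastconfirming{t}}$ already has its chain a prefix of $\chain_p$, avoiding the detour through the equivalent execution for that case. In short: same approach as the paper, with the implicit $\GJ$-compatibility step made explicit and argued correctly.
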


\begin{proof}
    The proof follows the one for \Cref{thm:fast-liveness-tob} {with the only change being replacing $\Chain_i$ with $\chainava_i$}.
\end{proof}
}

\paragraph{Asynchrony Resilience.}
We now move to proving that \Cref{alg:3sf-tob-noga} also satisfies Asynchrony Resilience.
However, compared to \Cref{algo:prob-ga-fast}, due to the FFG component, to prove Asynchrony Resilience for \Cref{alg:3sf-tob-noga} we also need to require that also \Cref{eq:async-condition2} holds.
The reason for this is that we want to avoid a situation where \textsc{ffg-vote}s sent by validators in 
$H_{\voting{(t_a+1)},\voting{(t_a+\pi+1)}} \setminus H_{\voting{(t_a)}}$
during the asynchronous period, together with \textsc{ffg-vote}s for a slot in $[t_a+1,t_a+\pi]$ but that are sent during a later slot $t_k  >t_a+\pi+2$ by validators that are adversarial in $t_k$ but that are honest in $t_a+\pi+2$ and are in $H_{\voting{(t_a)}} \setminus A_{\voting{(t_a+\pi+2)}}$ could justify a checkpoint conflicting with either the chain \textsc{propose}d or the chain confirmed by an honest validator at or before slot $t_a$.

\begin{lemma}[Analogous of \Cref{lem:asyn-induction2}]\label{lem:asyn-induction2-ffg}
  Assume $\pi > 0$ and
  take a slot $t \leq t_a$ such that, in slot $t$, any validator in $H_{\voting{t}}$ casts a \textsc{vote} message for a chain extending $\chain$.
  Then, for any slot $t_i \geq t$, any validator in $W_{\voting{t_i}}$ casts a \textsc{vote} message for a chain extending $\chain$.
\end{lemma}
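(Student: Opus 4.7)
The plan is to mirror the induction on $t_i$ used in the proof of \Cref{lem:asyn-induction2}, substituting the FFG-analogs of the supporting lemmas and carefully handling the additional FFG state. First I would establish an FFG-analog of \Cref{lem:asyn-induction}: for $t \in [t_a, t_a+\pi]$, if every validator in $W_{\voting{t'}}$ votes for an extension of $\chain$ in every $t' \in [t_a,t]$, then so does every validator in $W_{\voting{(t+1)}}$. The argument is essentially unchanged from the original: the votes cast in slot $t_a$ by validators in $H_{\voting{t_a}} \setminus A_{\voting{(t+1)}}$ reach every aware validator (by \Cref{eq:async-condition3}) and remain unexpired because $\eta = \pi+2$, so by \Cref{eq:async-condition} they constitute a strict majority of the unexpired votes in any aware validator's view at slot $t+1$; and $f < \tfrac{n}{3}$ rules out any conflicting fast-confirmation quorum, so $\chainfrozen_i$ cannot jump onto a conflicting chain.

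With this building block in place, the induction on $t_i$ proceeds in the same four cases as in \Cref{lem:asyn-induction2}: for $t_i \in [t, t_a]$, apply \Cref{lem:keep-voting-tob-fast-conf-ffg}; for $t_i \in [t_a+1, t_a+\pi+1]$, apply the FFG-analog of \Cref{lem:asyn-induction} above; for $t_i = t_a+\pi+2$, note that slot $t_a+\pi+1$ is synchronous, so all its honest \textsc{vote}s sit in every honest view at slot $t_i$ and are still unexpired, allowing the same majority argument used in the original Case 2; and for $t_i \geq t_a+\pi+3$ we have $W_{\voting{t_i}} = H_{\voting{t_i}}$ and can appeal again to \Cref{lem:keep-voting-tob-fast-conf-ffg}.

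The essential new difficulty introduced by the FFG component is that a validator's vote depends not only on the majority fork-choice output $\mfcvote{t_i}_i$ but also on $\chainfrozen_i$, which is overwritten by $\GJ_p.\chain$ whenever an incoming \textsc{propose} message carries a checkpoint $\GJ_p \geq \GJfrozen_i$ that is justified in the receiver's view (\Crefrange{line:algtob-prop-check-chfrozen-gj}{line:algtob-prop-set-ch-frozen-to-gj}). To keep the inductive invariant alive I need to strengthen it to also claim: no checkpoint whose chain conflicts with $\chain$ is ever justified in the view of any aware validator. Since, by the inductive hypothesis, every aware validator targets a checkpoint extending $\chain$ in its \textsc{ffg-vote}, the only \textsc{ffg-vote}s that could target a conflicting checkpoint come from validators in $A_\infty \cup (H_{\voting{(t_a+1,t_a+\pi+1)}} \setminus H_{\voting{t_a}})$; \Cref{eq:async-condition2} bounds the size of this set strictly below $\tfrac{2n}{3}$, which prevents any supermajority link from forming on a conflicting target. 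Hence any update to $\chainfrozen_i$ via a valid \textsc{propose} can only move it to a chain that does not conflict with $\chain$, and the inductive step for voting behavior goes through.

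The main obstacle will be this last bookkeeping: one has to jointly track (i) validators Byzantine at the moment they sign an \textsc{ffg-vote}, (ii) validators that awaken during the short asynchrony window but were not active at $\voting{t_a}$ and can be manipulated into malicious \textsc{ffg-vote}s, and (iii) validators in $H_{\voting{t_a}}$ that become Byzantine after slot $t_a+\pi+2$ and could retroactively cast surround-style \textsc{ffg-vote}s about slots in $[t_a+1, t_a+\pi]$. The unexpiration of \textsc{ffg-vote}s (unlike \textsc{vote}s) is precisely why the bound \Cref{eq:async-condition2} is stated without any expiration filter; combining it with the unslashability constraint on the slot-$t_a$ honest behavior and the absence of supermajority on conflicting targets is the delicate step. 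Once that invariant is secured, the rest of the argument is a mechanical transfer of the four-case analysis of \Cref{lem:asyn-induction2}.
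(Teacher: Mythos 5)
Your proposal is correct and follows essentially the same route as the paper: an induction on $t_i$ with the same case split as \Cref{lem:asyn-induction2}, strengthened by an invariant that no checkpoint conflicting with $\chain$ can become justified in an aware validator's view, established via \Cref{eq:async-condition2} bounding the potential conflicting \textsc{ffg}-voters (the paper phrases this as $|\X^{t'',\chain}\cup A_\infty|<\frac{2}{3}n$ with $\X^{t'',\chain}$ the honest validators whose $\chainava$ conflicts with $\chain$, but it is the same mechanism), and with the same inline transfer of \Cref{lem:asyn-induction} and \Cref{lem:keep-voting-tob-fast-conf} to \Cref{alg:3sf-tob-noga}. One cosmetic note: honest \textsc{ffg} targets need only \emph{not conflict} with $\chain$ (they may be prefixes), not necessarily extend it, but this does not affect your argument.
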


\begin{proof}
  We proceed by induction on $t_i$ and add the following condition to the inductive hypothesis.
  \begin{enumerate}
    \item\label[condition]{cond:ffg1} For any slot $t'$, let $\X^{t',\chain}$ be the set of validators $v_i$ in $H_{\voting{(t')}}$ such that $\chainava^\voting{t'}_i$ conflicts with $\chain$. Then, for any $t'' \leq t_i$, $\left|X^{t'',\chain} \cup A_\infty\right|<\frac{2}{3}n$.
  \end{enumerate}

  \begin{description}
    \item[Base Case: {$t_i \in [t,t_a]$}.] From \Cref{lem:keep-voting-tob-fast-conf-ffg} which, given that we assume $f<\frac{n}{3}$, also implies \Cref{cond:ffg1}.
    \item[Inductive Step: $t_i > t_a$.]
    We assume that the Lemma holds for any slot $t_i-1$ and prove that it holds also for slot $t_i$.
    Note that \Cref{cond:ffg1} holding for slot $t_i-1$ means that for any validator $v_i \in H_{\voting{(t_i)}}$ and justified checkpoint $\J$ according to $\V^{\voting{t_i}}_i$, $\J$ does not conflict with $\chain$.
    Let us proceed by cases and keep in mind that due to \Cref{line:algtob-vote-chainava}, if in slot $t_i$ a validator $v_i \in H_{\voting{(t_i)}}$ casts a \textsc{vote} message for a chain extending $\chain$, then $\chainava^{\voting{t_i}}_i$ does not conflict with $\chain$.
    \begin{description}
      \item[Case 1: {$t_i \in [t_a+1, t_a+\pi+1]$}.] 
      It should be easy to see that, assuming that \Cref{cond:ffg1} holds for slot $t_i-1$, \Cref{lem:asyn-induction} can be applied to \Cref{alg:3sf-tob-noga} as well.
      This implies that all validators in $W_\voting{t_i}$ cast \textsc{vote} messages for chains extending $\chain$.
      Given Constraint~\eqref{eq:async-condition2}, then \Cref{cond:ffg1} holds for slot $t_i$ as well. 
      \item[Case 2: $t_i = t_a+\pi+2$.]
      Given that we assume that \Cref{cond:ffg1} holds for slot $t_i-1$, it should be easy to see that we can apply here the same reasoning used for this same case in \Cref{lem:asyn-induction2}.
      Then, given that we assume $f<\frac{n}{3}$,  \Cref{cond:ffg1} holds for slot $t_i$ as well.
      \item[Case 3: $t_a \geq t_a+\pi+3$.]
      It should be easy to see that, assuming that \Cref{cond:ffg1} holds for slot $t_i-1$, \Cref{lem:keep-voting-tob-fast-conf} can be applied to \Cref{alg:3sf-tob-noga} as well.
      This implies that all validators in $H_{\voting{t}}$ casts \textsc{vote} messages for chains extending $\chain$.
      Then the proof proceeds as per the same case in \Cref{lem:asyn-induction2}.
      Also, given that we assume $f<\frac{n}{3}$,  \Cref{cond:ffg1} holds for slot $t_i$ as well.\qedhere
    \end{description}
  \end{description} 
\end{proof}

\begin{lemma}[Analogous of \Cref{lem:asyn-induction3}]\label{lem:asyn-induction3-ffg}
  {Assume $\pi > 0$ and} 
  take a slot $t \leq t_a$ such that, in slot $t$, any validator in $H_{\voting{t}}$ casts a \textsc{vote} message for a chain extending $\chain$.
  Then, for any round $r_i \geq \voting{t}$ and validator $v_i \in W_{r_i}$, $\Chain^{r_i}_i$ does not conflict with $\chain$.
\end{lemma}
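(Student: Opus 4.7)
The plan is to mirror the structure of the original proof of \Cref{lem:asyn-induction3}, case-splitting on whether $r_i$ is a vote round or a fast confirmation round, but with the additional work needed to rule out the cases where $\chainava^{r_i}_i$ is set from a justified checkpoint (either $\GJfrozen_i.\chain$ in the vote phase on \Cref{line:algtob-vote-chainava}, or $\GJ(\V_i).\chain$ in the fast confirmation phase via $\texttt{fastconfirm}$ on \Cref{line:algtob-set-chaava-to-bcand}). The first step is to invoke \Cref{lem:asyn-induction2-ffg}, which not only guarantees that in every slot $t_i \geq t$ every validator in $W_{\voting{t_i}}$ \textsc{vote}s for a chain extending $\chain$, but, crucially, whose inductive invariant (\Cref{cond:ffg1}) bounds the set $X^{t',\chain} \cup A_\infty$ by $\frac{2}{3}n$. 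Together with the fact that the target of any honest \textsc{ffg-vote} in slot $t'$ corresponds to that validator's $\chainava^{\voting{t'}}$ (\Cref{line:algtob-set-target-checkpoint}), this supermajority argument—analogous to the one used in \Cref{lem:accountable-jutification}—shows that no checkpoint whose chain conflicts with $\chain$ can ever be justified in any honest view.

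For the vote-round case, \Cref{line:algtob-vote-chainava} sets $\chainava^{r_i}_i$ to the maximum, among $\{\chainava^{r_i-1}_i, (\mfcvote{\slot(r_i)}_i)^{\lceil\kappa}, \GJfrozen[\voting{\slot(r_i)}]_i.\chain\}$, of the chains that are prefixes of $\mfcvote{\slot(r_i)}_i$. Because $v_i \in W_{\voting{\slot(r_i)}}$ \textsc{vote}s for a chain extending $\chain$ (by the first step), and by the Graded Delivery-style argument used within \Cref{alg:mfc}, $\mfcvote{\slot(r_i)}_i \succeq \chain$, so its $\kappa$-deep prefix is non-conflicting with $\chain$. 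The previous available chain $\chainava^{r_i-1}_i$ is handled by induction on $r_i$ (or by observing that it was set in a previous vote or fast confirmation round). Finally, $\GJfrozen[\voting{\slot(r_i)}]_i$ is justified (by \Cref{line:algtob-merge-gj-frozen,line:algtob-prop-if}), hence by the non-justification property established above its chain does not conflict with $\chain$, and since it must be a prefix of $\mfcvote{\slot(r_i)}_i \succeq \chain$, we obtain $\GJfrozen[\voting{\slot(r_i)}]_i.\chain \preceq \chain$.

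For the fast-confirmation case, split on whether $\chainava^{r_i}_i = \chainava^{r_i-1}_i$, in which case the claim reduces to the vote-round case via the joining protocol (as in the original Case~2.1 of \Cref{lem:asyn-induction3}), or $\chainava^{r_i}_i$ is updated on \Cref{line:algtob-set-chaava-to-bcand} to $\fastcand$ produced by $\texttt{fastconfirm}(\V^{r_i}_i, \slot(r_i))$. In the latter case, $\fastcand$ is either (i) the output of $\texttt{fastconfirmsimple}$, which requires a quorum of \textsc{vote}s for a chain extending $\fastcand$; by \Cref{cond:ffg1} of \Cref{lem:asyn-induction2-ffg} together with $\eqref{eq:async-condition2}$, the set of validators that could have voted for a chain conflicting with $\chain$ is smaller than $\frac{2}{3}n$, so $\fastcand$ cannot conflict with $\chain$; or (ii) it is $\GJ(\V^{r_i}_i).\chain$, which by the first-step argument again does not conflict with $\chain$.

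The main obstacle is the FFG-induced case where $\chainava$ gets pulled up to a justified chain that is not otherwise supported by the current local view. The cleanest way to discharge it is to lean on the invariant carried by \Cref{lem:asyn-induction2-ffg}: a supermajority of validators would have had to \textsc{vote} (and hence \textsc{ffg-vote}) for a conflicting chain at some slot $\leq t_i$ for such a justified checkpoint to exist, which is ruled out by \Cref{cond:ffg1} and \Cref{eq:async-condition2}. Once this non-justification fact is in hand, every branch of the case analysis collapses to either the standard fork-choice argument or a direct prefix/extension comparison, yielding $\chainava^{r_i}_i \preceq \chain$ or $\chainava^{r_i}_i \succeq \chain$ in all sub-cases.
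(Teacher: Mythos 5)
Your proposal takes essentially the same route as the paper's proof. The paper proves this lemma by replaying the proof of \Cref{lem:asyn-induction3} with $\Chain$ replaced by $\chainava$ and \Cref{lem:asyn-induction2} replaced by \Cref{lem:asyn-induction2-ffg}, and then adds exactly the sub-case you single out as the new obstacle, namely $\chainava^{r_i}_i = \GJ(\V^{\fastconfirming{\slot(r_i)}}_i).\chain$; it discharges it by noting that, since the FFG target of an honest validator is its $\chainava$ at a vote round and $f<\frac{n}{3}$, any justified checkpoint is a prefix of some honest $\chainava^{\voting{t_k}}_k$, which by the vote-round case is comparable with $\chain$ --- the same supermajority idea you package as ``no checkpoint conflicting with $\chain$ can be justified in an honest view,'' leaning on the internal invariant \Cref{cond:ffg1} of the proof of \Cref{lem:asyn-induction2-ffg} rather than re-deriving it. Two local points should be tightened. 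First, in the vote-round case, from ``$v_i$ \textsc{vote}s for a chain extending $\chain$'' you only get that $\mfcvote{\slot(r_i)}_i$ and $\chain$ are comparable (both are prefixes of the chain voted for), not that $\mfcvote{\slot(r_i)}_i \succeq \chain$; comparability is all you need, and it is what the paper's Case~1 actually uses. Second, you neither need nor can safely rely on an induction on $r_i$ to handle the candidate $\chainava^{r_i-1}_i$: the filter in \Cref{line:algtob-vote-chainava} only ever selects a prefix of $\mfcvote{\slot(r_i)}_i$, so whichever candidate wins is automatically comparable with $\chain$, whereas your induction, read literally, breaks down when the previous value was last set at a round before $\voting{t}$ or at a round where $v_i$ was active but not in $W$, where the lemma's statement gives you nothing. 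With these two fixes your argument coincides with the paper's; the only remaining cosmetic difference is that in the quorum sub-case you count via Constraint~\eqref{eq:async-condition2} and \Cref{cond:ffg1}, while the paper's inherited Case~2.2 counts via Constraint~\eqref{eq:async-condition}.
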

\begin{proof}
  We can follow the same reasoning used in the proof of \Cref{lem:asyn-induction3} with the following changes only.
  \begin{enumerate}
    \item Replace {$\Chain$ with $\chainava$},
    \item Replace \Cref{lem:asyn-induction2} with  \Cref{lem:asyn-induction2-ffg},
    \item In the proof of Case 1, refer to lines \Cref{line:algtob-vote-chainava,line:algtob-vote,line:algtob-vote-comm} of \Cref{alg:3sf-tob-noga} rather than \Crefrange{line:algga-no-ffg-vote-chainava}{line:algga-no-ffg-vote-comm} of \Cref{algo:prob-ga-fast}.
    \item Consider the following additional sub case for Case 2.
    \begin{description}
      \item[Case 2.3 $\chainava^{r_i}_i = \GJ(\V^{\fastconfirming{\slot(r_i)}}_i).\chain$.]
      Because the chain of the target checkpoint of an \textsc{ffg-vote} cast by an honest validator $v_\ell$ in round $r_{\ell}$ corresponds to $(\makeFFG{\chainava}^{r_{\ell}}_\ell.\chain,\cdot)$ and we assume $f<\frac{n}{3}$, that there exists a slot $t_k \in [0,\slot(r_i)]$ and validator $v_k \in H_{\voting{(t_ k)}}$ such that $\chainava^{\voting{t_k}}_k \succeq \chainava^{\fastconfirming{\slot(r_i)}}_i$.
      From Case 1 (see \Cref{lem:asyn-induction3} for actual the proof of Case1) we have that $\chain \preceq \chainava^{\voting{t_k}}_k \lor \chain \succeq \chainava^{\voting{t_k}}_k$.
      In either case, $\chain$ does not conflict with $\chainava^{\fastconfirming{\slot(r_i)}}_i$.\qedhere
    \end{description}
  \end{enumerate}
\end{proof}

\begin{theorem}[Asynchrony Reorg Resilience - Analogous of \Cref{thm:async-resilience-tob}]
  \label{thm:async-resilience-tob-ffg}
  \Cref{alg:3sf-tob-noga} is $\eta$-asynchrony-reorg-resilient.
\end{theorem}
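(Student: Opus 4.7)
The plan is to mirror the proof of \Cref{thm:async-resilience-tob} essentially verbatim, substituting each lemma about \Cref{algo:prob-ga-fast} with its FFG-aware analogue proven just above for \Cref{alg:3sf-tob-noga}. Concretely, I would begin by fixing $\pi > 0$, a slot $t_p \leq t_a$ whose proposer $v_p$ is honest by $\proposing{t_p}$, and letting $\chain_p$ denote the chain that $v_p$ \textsc{propose}s in that slot; then pick an arbitrary round $r_i$ and an arbitrary validator $v_i \in W_{r_i}$. The goal is to show that $\chain_p$ does not conflict with $\chainava^{r_i}_i$.

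The split would then be on whether $r_i$ falls before or after $\voting{t_p}$. If $r_i < \voting{t_p}$, then $v_i$ is in particular honest at a round strictly earlier than when the honest proposal takes place, so the synchronous $\eta$ Reorg Resilience statement \Cref{thm:reorg-res-prop-tob-ffg} applies and yields $\chain_p \not\perp \chainava^{r_i}_i$. (Note that in this regime the restriction $v_i \in W_{r_i}$ is only a strengthening of $v_i$ being honest in $r_i$, so the synchronous result suffices.) Otherwise $r_i \geq \voting{t_p}$, and I invoke \Cref{lem:vote-proposal-fast-conf-ffg} to conclude that in slot $t_p$ every validator in $H_{\voting{t_p}}$ casts a \textsc{vote} message for a chain extending $\chain_p$; since $t_p \leq t_a$ by hypothesis, the assumption of \Cref{lem:asyn-induction3-ffg} with $\chain := \chain_p$ is satisfied, and that lemma then gives $\chainava^{r_i}_i \not\perp \chain_p$ for every $v_i \in W_{r_i}$ with $r_i \geq \voting{t_p}$.

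There is essentially no hard part left in the theorem itself: all of the delicate accounting, namely that sleeping/awake validators constrained to $W_{r_i}$ keep voting consistently across the short asynchronous window, and that the FFG component does not inject spurious confirmations via $\GJ$-based updates to $\chainava$, has already been absorbed into \Cref{lem:asyn-induction2-ffg,lem:asyn-induction3-ffg}; and the handling of fast-confirmation-induced updates to $\chainava$ (including the $\GJ(\V).\chain$ branch via \Cref{line:algtob-set-chaava-to-bcand}) is precisely the content of the added Case 2.3 in the proof of \Cref{lem:asyn-induction3-ffg}. The only subtlety to keep in mind while writing is that I must apply \Cref{thm:reorg-res-prop-tob-ffg} (not the non-FFG version) in the first branch, since $v_i$'s $\chainava$ in \Cref{alg:3sf-tob-noga} can diverge from the confirmed chain of \Cref{algo:prob-ga-fast} due to the $\GJ^{\text{frozen}}$ term in \Cref{line:algtob-vote-chainava}; the FFG-level Reorg Resilience statement already accounts for this via the $\mfc$-equivalence machinery of \Cref{lem:equiv-ga2}.
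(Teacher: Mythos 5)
Your proposal is correct and follows exactly the paper's own route: the paper proves this theorem by repeating the argument of \Cref{thm:async-resilience-tob} (case split on $r_i < \voting{t_p}$ versus $r_i \geq \voting{t_p}$) with \Cref{thm:reorg-res-prop-tob}, \Cref{lem:vote-proposal-fast-conf}, and \Cref{lem:asyn-induction3} replaced by \Cref{thm:reorg-res-prop-tob-ffg}, \Cref{lem:vote-proposal-fast-conf-ffg}, and \Cref{lem:asyn-induction3-ffg}, which is precisely what you do. Your closing remark about needing the FFG-level Reorg Resilience statement in the first branch is consistent with the paper's substitution list.
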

\begin{proof}
  We can follow the same reasoning used in the proof of \Cref{thm:async-resilience-tob} with the following changes only.
  \begin{enumerate}
    \item Replace \Cref{thm:reorg-res-prop-tob} with \Cref{thm:reorg-res-prop-tob-ffg},
    \item Replace \Cref{lem:vote-proposal-fast-conf} with \Cref{lem:vote-proposal-fast-conf-ffg}, and
    \item Replace \Cref{lem:asyn-induction3} with \Cref{lem:asyn-induction3-ffg}.\qedhere
  \end{enumerate}
\end{proof}

\begin{theorem}[Asynchrony Safety Resilience - Analogous of \Cref{thm:async-safety-resilience-tob}]
  \label{thm:async-safety-resilience-tob-ffg}
  \Cref{alg:3sf-tob-noga} $\eta$-asynchrony-safety-resilient.
\end{theorem}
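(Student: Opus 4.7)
The plan is to follow the same structure as the proof of \Cref{thm:async-safety-resilience-tob}, but substitute the FFG-aware analogues for each of the key results invoked there. The theorem is exactly analogous to \Cref{thm:async-safety-resilience-tob}, and following the pattern established in the preceding analogous theorems (\Cref{thm:reorg-res-prop-tob-ffg,thm:dyn-avail-fast-conf-tob-ffg,thm:async-resilience-tob-ffg}), the goal is simply to thread the FFG-component's behavior through the argument by invoking the already-proven FFG versions of the underlying lemmas.

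Concretely, first I would fix $\pi > 0$ and pick any round $r_i \leq 4\Delta t_a + \Delta$, any round $r_j$, any validator $v_i$ honest in $r_i$, and any validator $v_j \in W_{r_j}$, and proceed by splitting on whether $r_j < \voting{t_a}$ or $r_j \geq \voting{t_a}$. In the first case, the result will follow directly from $\eta$ Safety as given by \Cref{thm:dyn-avail-fast-conf-tob-ffg} (which replaces the appeal to \Cref{thm:dyn-avail-fast-conf-tob} in the original proof). In the second case, I would invoke \Cref{lem:ga-confirmed-always-canonical-ffg} (replacing \Cref{lem:ga-confirmed-always-canonical}) to conclude that any $v_k \in H_{\voting{t_a}}$ has $\mfcvote{t_a}_k \succeq \chainava^{r_i}_i$ (with $\chainava$ replacing $\Chain$ throughout), so that in slot $t_a$ every such $v_k$ casts a \textsc{vote} for a chain extending $\chainava^{r_i}_i$; then \Cref{lem:asyn-induction3-ffg} (replacing \Cref{lem:asyn-induction3}) is applied with $\chain := \chainava^{r_i}_i$ and $t := t_a$ to conclude that $\chainava^{r_j}_j$ does not conflict with $\chainava^{r_i}_i$.

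In short, the proof is obtained from that of \Cref{thm:async-safety-resilience-tob} by the following substitutions only: (i) replace $\Chain$ with $\chainava$; (ii) replace \Cref{thm:dyn-avail-fast-conf-tob} with \Cref{thm:dyn-avail-fast-conf-tob-ffg}; (iii) replace \Cref{lem:ga-confirmed-always-canonical} with \Cref{lem:ga-confirmed-always-canonical-ffg}; and (iv) replace \Cref{lem:asyn-induction3} with \Cref{lem:asyn-induction3-ffg}. All substantive work has already been done in establishing those FFG-analogous lemmas, so no new case analysis is required here.

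The main (very mild) obstacle is simply verifying that the replacement chain of invocations composes correctly: in particular, that \Cref{lem:ga-confirmed-always-canonical-ffg} is stated strongly enough to give $\mfcvote{t_a}_k \succeq \chainava^{r_i}_i$ for $r_i \leq 4\Delta t_a + \Delta$ and $v_k \in H_{\voting{t_a}}$ (which it does, since it applies to any $r_j \geq r_i$ with $r_j$ a propose or vote round), and that \Cref{lem:asyn-induction3-ffg} delivers its conclusion for the combination of a vote round and a fast-confirmation round (which is exactly the case analysis already embedded in its statement and proof). Hence no new obstacles arise beyond bookkeeping the references.
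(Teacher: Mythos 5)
Your proposal is correct and follows essentially the same route as the paper, which also proves this theorem by rerunning the proof of \Cref{thm:async-safety-resilience-tob} with $\Chain$ replaced by $\chainava$, \Cref{thm:dyn-avail-fast-conf-tob} by \Cref{thm:dyn-avail-fast-conf-tob-ffg}, and \Cref{lem:asyn-induction3} by \Cref{lem:asyn-induction3-ffg}. Your additional explicit substitution of \Cref{lem:ga-confirmed-always-canonical} by \Cref{lem:ga-confirmed-always-canonical-ffg} in Case~2 is in fact needed for the conclusion about $\chainava^{r_i}_i$ and makes the argument slightly more complete than the paper's own substitution list.
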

\begin{proof}
  We can follow the same reasoning used in the proof of \Cref{thm:async-safety-resilience-tob} with the following changes only.
  \begin{enumerate}
    \item Replace {$\Chain$ with $\chainava$},
    \item Replace \Cref{thm:dyn-avail-fast-conf-tob} with \Cref{thm:dyn-avail-fast-conf-tob-ffg}, and 
    \item Replace \Cref{lem:asyn-induction3} with \Cref{lem:asyn-induction3-ffg}.\qedhere
  \end{enumerate}
\end{proof}

\subsubsection{Partial synchrony}
\label{sec:analysis-tob-psync}
In this section, we show that \Cref{alg:3sf-tob-noga} ensures \Cref{prop:never-slashed,prop:chfin,prop:succ-for-ffg-liveness} and hence ensures that the chain $\chainfin$ is always Accountably Safe and is live after time $\max(\GST, \GAT) + {\Delta}$, meaning that \Cref{alg:3sf-tob-noga} is an $\eta$-secure ebb-and-flow protocol {under the assumption that $f<\frac{n}{3}$}.
 
\begin{lemma} \label{lem:never-slashed-3sf-tob-noga}
{\Cref{alg:3sf-tob-noga} satisfies \Cref{prop:never-slashed}.}
\end{lemma}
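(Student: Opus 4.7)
The plan is to verify each of the three sub-conditions of \Cref{prop:never-slashed} directly from the pseudo-code of \Cref{alg:3sf-tob-noga}, exploiting the fact that the source checkpoint used by an honest validator is a frozen snapshot of its greatest justified checkpoint, and that this snapshot is monotone in slot index.

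First, for \Cref{prop:never-slashed-1}, I would observe that the only line in \Cref{alg:3sf-tob-noga} where an honest validator $v_i$ sends a message containing an \textsc{ffg-vote} is \Cref{line:algtob-vote} inside the \textsc{Vote} block triggered at round $\voting{t}=4\Delta t +\Delta$. Since this block is entered exactly once per slot, at most one \textsc{ffg-vote} is emitted by $v_i$ per slot.

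Second, for \Cref{prop:never-slashed-2}, the target checkpoint of the \textsc{ffg-vote} sent in slot $t$ is, by \Cref{line:algtob-set-target-checkpoint,line:algtob-vote}, $\T = (\chainava_i, t)$, and hence $\T.c = t$ by definition.

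Third, for \Cref{prop:never-slashed-3}, the source checkpoint of the \textsc{ffg-vote} sent in slot $t$ is the value of $\GJfrozen_i$ at round $\voting{t}$. I would show that, for any honest validator $v_i$, the value of $\GJfrozen_i$ is monotonically non-decreasing in the total pre-order on checkpoints, across all rounds (and thus across slots). There are only two places where $\GJfrozen_i$ is reassigned: \Cref{line:algtob-merge-gj-frozen} at round $\merging{t}$, where it is set to $\GJ(\V_i)$, and \Cref{line:algtob-prop-merge-gj} in the upon-handler, where it is set to $\GJ_p$ only after the guard $\GJ_p \ge \GJfrozen_i$ at \Cref{line:algtob-prop-if}. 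The upon-handler update is thus non-decreasing by construction. For the merge update, I would argue that $\GJ(\V_i)$ itself is non-decreasing over time: the view $\V_i$ only grows (honest validators never discard messages), so the set of justified checkpoints according to $\V_i$ is non-decreasing, and so is its maximum in the lexicographic order; moreover, at round $\merging{t}$ the view $\V_i$ is a superset of the view used at the preceding $\voting{t}$ and also contains $\GJfrozen_i$ itself (this was the result of a previous merge or upon-handler from an already-justified checkpoint), so $\GJ(\V_i) \ge \GJfrozen_i$. Combining these two facts gives monotonicity across rounds, and therefore across voting rounds of any two slots $t \le t'$.

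The main subtlety, and the only place that requires a little care, is verifying that the merge update $\GJfrozen_i \gets \GJ(\V_i)$ cannot decrease $\GJfrozen_i$. This needs the invariant that $\GJfrozen_i$ is always itself a justified checkpoint according to $\V_i$ at any round after it is set, which follows because both the merge rule and the upon-handler assign to $\GJfrozen_i$ a checkpoint justified according to the then-current view (the upon-handler explicitly checks $\mathsf{J}(\GJ_p,\V_i)$ at \Cref{line:algtob-prop-if}), and because views grow monotonically. With this invariant in hand, the three conditions of \Cref{prop:never-slashed} follow, and the lemma is established.
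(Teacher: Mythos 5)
Your proposal is correct and follows essentially the same approach as the paper: \Cref{prop:never-slashed-1} and \Cref{prop:never-slashed-2} are read off \Cref{line:algtob-vote} and \Cref{line:algtob-set-target-checkpoint} exactly as in the paper's proof, and \Cref{prop:never-slashed-3} rests on the same two facts the paper uses, namely the guard $\GJ_p \ge \GJfrozen_i$ at \Cref{line:algtob-prop-if} and the observation that $\GJfrozen_i$ is always justified with respect to the current (only-growing) view, so the merge assignment $\GJfrozen_i \gets \GJ(\V_i)$ at \Cref{line:algtob-merge-gj-frozen} can only increase it. The only difference is packaging: you phrase this as a round-level monotonicity invariant of the state variable $\GJfrozen_i$, whereas the paper runs an induction over consecutive active slots and invokes the joining protocol to chain through the merge round of the preceding slot; both yield $\calS_t \le \calS_{t'}$.
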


\begin{proof}

  {Let us prove that Algorithm \ref{alg:3sf-tob-noga} ensures all the conditions listed in \Cref{prop:never-slashed}.}

  \begin{description}
    \item[\Cref{prop:never-slashed-1}.] By  \Cref{line:algtob-vote}, an honest active validator sends only one \textsc{ffg-vote} in any slot, and therefore, \Cref{prop:never-slashed-1} holds.
    \item[\Cref{prop:never-slashed-2}.] Direct consequence of \Cref{line:algtob-set-target-checkpoint}.
    \item[\Cref{prop:never-slashed-3}.] 
    Take any two slots $t$ and $t'$ with $t< t'$.
    Let $\calS_k$ be the source checkpoint of the \textsc{ffg-vote} that validator $v_i$ sent in any slot $k$.
    Now, we prove by induction that $\calS_{\voting{t}} \leq \calS_{t'}$.
    To do so, take any two slots $r$ and $r'$ such that $t \leq r < r' \leq t'$, $v_i \in H_{r}$ and $r'$ is the first slot after $r$ such that $v_i \in H_{r'}$.
    By \Cref{line:algtob-vote}, we know that $S_r = \GJfrozen[\voting{r}]_i$ and by \Cref{line:algtob-prop-if,line:algtob-merge-ch-frozen}, we have that $\mathsf{J}( \GJfrozen[\voting{r}]_i,\V^\voting{r}_i)$.
    Then, note that $r \leq r'-1$ and that because of the joining protocol, we know that $v_i$ is awake at round $\merging{(r'-1)}$.
    Hence, we have that  $\GJfrozen[\voting{r}]_i \leq \GJ(\V^\merging{(r'-1)}_i) = \GJfrozen[\merging{(r'-1)}]_i$.
    Finally, by \Cref{line:algtob-prop-if,line:algtob-prop-merge-gj}, $\GJfrozen[\merging{(r'-1)}]_i \leq \GJfrozen[\voting{r'}]_i = \calS_{r'}$ showing that $\calS_{r} \leq \calS_{r'}$.
    \qedhere
  \end{description}
\end{proof}

{The proof of \Cref{prop:chfin} is trivial so we skip it for now.
Before being able to prove that \Cref{alg:3sf-tob-noga} satisfies \Cref{prop:succ-for-ffg-liveness}, we need to show that the base case of \Cref{lem:vote-proposal-fast-conf} holds even when $\GST >0$.
}

\begin{lemma}
  \label{lem:vote-proposal-fast-conf-ffg-only-base}
  Let $t$ be a slot with an honest proposer $v_p$  such that $\proposing{t} \geq \GST + \Delta$ and assume that~$v_p$ casts a $[\textsc{propose}, \chain_p, \chain^C_p, Q^C_p, \GJ_p, t, v_p]$ message.
  Then, for any validator $v_i \in H_{\voting{t}}$, $v_i$ casts a \textsc{vote} message for chain $\chain_p$ and $\GJfrozen[\voting{t}]_i = \GJ_p$.
\end{lemma}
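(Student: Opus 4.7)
My plan is to leverage synchrony after $\GST$---via $\proposing{t} \geq \GST + \Delta$---to show that $v_p$'s view at the moment of proposing already contains every message in $v_i$'s frozen view, so that when $v_i$ processes the propose message at $\voting{t}$ the guard at Line~\ref{line:algtob-prop-if} succeeds and the update rule sets $\GJfrozen[\voting{t}]_i = \GJ_p$ and $\chainfrozen[\voting{t}]_i = \chain^C_p$; the vote for $\chain_p$ then follows by the Graded Delivery argument used in the proof of \Cref{lem:vote-proposal-fast-conf}.

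I begin with the view-containment setup. Since $\merging{(t-1)} = \proposing{t} - \Delta \geq \GST$, every message received by any honest validator by round $\merging{(t-1)}$ has been gossiped for at least $\Delta$ rounds under synchrony and therefore lies in $\V_p^{\proposing{t}}$. By the joining protocol, any $v_i \in H_{\voting{t}}$ was active at $\merging{(t-1)}$, so the snapshot $\Vfrozen[\voting{t}]_i = \V_i^{\merging{(t-1)}}$ is well-defined and satisfies $\Vfrozen[\voting{t}]_i \subseteq \V_p^{\proposing{t}}$. Symmetrically, the \textsc{ffg-vote}s witnessing $\mathsf{J}(\GJ_p, \V_p^{\proposing{t}})$ and the \textsc{vote}s in $Q^C_p$, all held by $v_p$ at round $\proposing{t} \geq \GST$, are gossiped and delivered to $v_i$ by $\proposing{t} + \Delta = \voting{t}$, yielding both $\mathsf{J}(\GJ_p, \V_i^{\voting{t}})$ and the syntactic validity of the proposal.

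Second, I establish $\GJfrozen[\voting{t}]_i = \GJ_p$ and trace $\chainfrozen_i$ to $\chain^C_p$. Since $\GJfrozen[\merging{(t-1)}]_i$ was assigned at Line~\ref{line:algtob-merge-gj-frozen} and equals $\GJ(\V_i^{\merging{(t-1)}})$, view containment preserves justified checkpoints and gives $\GJ_p = \GJ(\V_p^{\proposing{t}}) \geq \GJfrozen[\merging{(t-1)}]_i$; all three conjuncts of the guard at Line~\ref{line:algtob-prop-if} therefore hold, Line~\ref{line:algtob-prop-merge-gj} fires and $\GJfrozen_i$ is updated to $\GJ_p$. For $\chainfrozen_i$, a short case split on whether $\chainfrozen[\merging{(t-1)}]_i \succeq \GJ_p.\chain$ combines two facts: $\chain^C_p \succeq \GJ_p.\chain$ by construction of $\texttt{fastconfirm}$, and under $f < \tfrac{n}{3}$ any two fast-confirmed chains at slot $t-1$ are comparable under $\preceq$ since their $\tfrac{2}{3}n$-quorums must intersect in more than $\tfrac{n}{3}$ non-equivocating validators. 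Together with $\Vfrozen[\voting{t}]_i \subseteq \V_p^{\proposing{t}}$, this forces $\chainfrozen_i$ to be $\preceq \chain^C_p$ after Line~\ref{line:algtob-prop-set-ch-frozen-to-gj}, so the test at Line~\ref{line:algtob-if-setchainfrozen-to-chainc} succeeds and $\chainfrozen[\voting{t}]_i = \chain^C_p$.

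Finally, I apply the Graded Delivery property of $\mfc$~\cite{streamliningSBFT} to the two fork-choice calls with the shared starting chain $\chain^C_p$. View containment gives $\mfcvote{t}_i \preceq \mfcpropose{t}_p$, and since $\chain_p = \mathsf{Extend}(\mfcpropose{t}_p, t) \succeq \mfcpropose{t}_p \succeq \mfcvote{t}_i$ with $\chain_p.p = t$, the proposed chain qualifies for selection at Line~\ref{line:algtob-vote-comm} and $v_i$ broadcasts the required \textsc{vote} at Line~\ref{line:algtob-vote}. The main obstacle is the sub-case of the case split where $\chainfrozen[\merging{(t-1)}]_i = \GJ(\V_i^{\merging{(t-1)}}).\chain$ strictly extends $\GJ_p.\chain$---a scenario that the lexicographic order on $\GJ$ permits whenever $\GJ_p.c > \GJ(\V_i^{\merging{(t-1)}}).c$ with incomparable chains---because there one cannot invoke the fast-confirmed-chain comparability directly; the resolution relies on the fact that $\GJ(\V_i^{\merging{(t-1)}})$ is also justified in $\V_p^{\proposing{t}}$ and that $\chain^C_p$ is by construction the longest fast-confirmed chain extending $\GJ_p.\chain$, from which $\chain^C_p \succeq \chainfrozen[\merging{(t-1)}]_i$ still follows under $f<\tfrac{n}{3}$.
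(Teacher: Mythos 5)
Your proposal follows the paper's own route almost step for step: synchrony gives $\Vfrozen[\voting{t}]_i = \V^{\merging{(t-1)}}_i \subseteq \V^{\proposing{t}}_p$ (plus delivery of the \textsc{ffg-vote}s justifying $\GJ_p$ and of $Q^C_p$ by $\voting{t}$), hence $\GJ_p \ge \GJfrozen[\merging{(t-1)}]_i$, the guard at \Cref{line:algtob-prop-if} passes and $\GJfrozen[\voting{t}]_i = \GJ_p$; a case split then pins $\chainfrozen[\voting{t}]_i$ to $\chain^C_p$, and Graded Delivery yields the \textsc{vote} for $\chain_p$. Your quorum-intersection comparability of two fast-confirmed slot-$(t-1)$ chains is exactly the paper's Case~1.2, and your default-case handling of the vote matches the paper's final step.

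The gap is in the sub-case you yourself flag as the main obstacle, namely $\chainfrozen[\merging{(t-1)}]_i = \GJfrozen[\merging{(t-1)}]_i.\chain$ extending $\GJ_p.\chain$. Your resolution asserts that $\chain^C_p \succeq \chainfrozen[\merging{(t-1)}]_i$ ``still follows'' from the facts that $\GJfrozen[\merging{(t-1)}]_i$ is justified in $\V^{\proposing{t}}_p$ and that $\chain^C_p$ is the longest fast-confirmed chain extending $\GJ_p.\chain$. That inference does not hold: justification of $\GJfrozen[\merging{(t-1)}]_i$ in $v_p$'s view only re-delivers the lexicographic fact $\GJ_p \ge \GJfrozen[\merging{(t-1)}]_i$, which you already used, and it does not make $\GJfrozen[\merging{(t-1)}]_i.\chain$ fast-confirmed in $v_p$'s view. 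Concretely, when $\texttt{fastconfirm}(\V^{\proposing{t}}_p, t-1)$ finds no quorum it returns exactly $\chain^C_p = \GJ_p.\chain$, which in this sub-case is a \emph{strict prefix} of $\chainfrozen[\merging{(t-1)}]_i$, so the claimed inequality is false and the test at \Cref{line:algtob-if-setchainfrozen-to-chainc} is not validated by it. Moreover, in this configuration $\chainfrozen[\merging{(t-1)}]_i$ was set to the $\GJ$-chain at \Cref{line:algtob-merge-ch-frozen} precisely because no quorum certificate backs it, so the quorum-intersection argument is indeed unavailable and your patch replaces it with an unsupported claim. The paper disposes of this configuration (its Case~2) without ever asserting a prefix relation between $\chain^C_p$ and the old $\chainfrozen_i$: it argues from $\chain^C_p \succeq \GJ_p.\chain$, $\GJ_p \ge \GJfrozen[\merging{(t-1)}]_i$, and the reset at \Crefrange{line:algtob-prop-check-chfrozen-gj}{algo:3sf-noga-setchainfrozen-to-chainc}, i.e.\ if $\chainfrozen_i$ does not already extend $\GJ_p.\chain$ it is overwritten by $\GJ_p.\chain$ and then by $\chain^C_p$. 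You need to rework your obstacle sub-case along those lines (relating $\GJfrozen[\merging{(t-1)}]_i.\chain$ to $\GJ_p.\chain$, or arguing directly that the voting step still selects $\chain_p$ when $\chainfrozen[\voting{t}]_i$ extends $\GJ_p.\chain$), rather than relying on $\chain^C_p \succeq \chainfrozen[\merging{(t-1)}]_i$.
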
  
\begin{proof}
  Suppose that in slot \(t\), an honest proposer sends a $[\textsc{propose}, \chain_p, \chain^C_p, Q^C_p,\GJ_p, t, v_p]$ message.
  Consider an honest validator \(v_i\) in \(H_{\voting{t}}\).
  Note that due to the synchrony assumption, $\V^{\merging{(t-1)}}_i \subseteq \V^{\proposing{t}}_p$.
  Hence, $\GJ_p \geq \GJfrozen[\merging{(t-1)}]_i$.
  Now, we want to show that  $\chain^{\text{frozen},\voting{t}}_i = \chain^C_p$.
  Let us consider two case.
  \begin{description}
    \item[Case 1: {$\chainfrozen[\merging{(t-1)}]_i\succ \GJfrozen[\merging{(t-1)}]_i.\chain$}.]
    This implies that at time $\merging{(t-1)}$, $v_i$ has received a  quorum of \textsc{vote} messages for $\chainfrozen[\merging{(t-1)}]_i$.
    Let us consider three sub cases.
    \begin{description}
      \item[Case 1.1: {$\chainfrozen[\merging{(t-1)}]_i = \GJ_p.\chain$}.] Given that $\chain^C_p \succeq \GJ_p.\chain$, \Cref{line:algtob-if-setchainfrozen-to-chainc,algo:3sf-noga-setchainfrozen-to-chainc} set  $\chain^{\text{frozen},\voting{t}}_i = \chain^C_p$.
      \item[Case 1.2: {$\chainfrozen[\merging{(t-1)}]_i \succ \GJ_p.\chain$}.] 
      This implies that $\V^{\proposing{t}}_p$ include a quorum of \textsc{vote} messages for $\chain^C_p$.
      Due to the synchrony assumption, the quorum of \textsc{vote} messages for $\chainfrozen[\merging{(t-1)}]_i$ are in the view of validator $v_p$ at time $\proposing{t}$.
      Given that $f<\frac{n}{3}$, this case implies that $\chainfrozen[\merging{(t-1)}]_i = \chain^C_p$, and therefore
      $\chainfrozen[\voting{t}]_i = \chain^C_p$.
      \item[Case 1.3: {$\chainfrozen[\merging{(t-1)}]_i \nsucceq \GJ_p.\chain$}.] In this case, \Crefrange{line:algtob-prop-check-chfrozen-gj}{algo:3sf-noga-setchainfrozen-to-chainc} set $\chain^{\text{frozen},\voting{t}}_i = \chain^C_p$.
    \end{description}
    \item[Case 2: {$\chainfrozen[\merging{(t-1)}]_i = \GJfrozen[\merging{(t-1)}]_i.\chain$}.] Given that $\chain^C_p \succeq \GJ_p$ and $\GJ_p \geq \GJfrozen[\merging{(t-1)}]_i$, \Crefrange{line:algtob-prop-if}{algo:3sf-noga-setchainfrozen-to-chainc} imply that $\chainfrozen[\voting{t}]_i = \chain^C_p$.
  \end{description}
  
  \sloppy{Now, first suppose that at round $\voting{t}$ there exists a chain $\chain \succeq \chain^{\text{frozen}}_i = \chain^C_p$ such that \(\left|\left(\Vfrozen\right)^{\chain,t} \cap \V_i^{\chain,t} \right| > \frac{\left|\mathsf{S}(\V_i,t)\right|}{2}\).}
  By the Graded Delivery property~\cite{streamliningSBFT}, this implies that at time $\proposing{t}$, $\left|\V_p^{\chain,t}\right|>\frac{\left|\mathsf{S}(\V_p,t)\right|}{2}$ meaning that $\chain_p\succeq \chain$ and hence, in slot $t$, $v_i$ casts a \textsc{vote} message for $\chain_p$.
  
  If no such a chain exists, $v_i$ still casts a \textsc{vote} for $\chain_p$ as $\chain_p \succeq \chain^C_p = \chain^\text{frozen}_i$.

  \medskip

  Also, given that $\GJ_p \geq \GJfrozen[\merging{(t-1)}]_i$, due to \Cref{line:algtob-prop-if,line:algtob-prop-merge-gj}, $\GJfrozen[\voting{t}] = \GJ_p$.
\end{proof}

\begin{lemma}
  \label{thm:liveness-tob-no-ga}
 {\Cref{alg:3sf-tob-noga} satisfies \Cref{prop:succ-for-ffg-liveness}.}
\end{lemma}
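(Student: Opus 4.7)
The plan is to verify each sub-condition of \Cref{prop:succ-for-ffg-liveness} in turn, reducing most of them to the partial-synchrony lemmas already established for \Cref{alg:3sf-tob-noga} (especially \Cref{lem:vote-proposal-fast-conf-ffg-only-base} and \Cref{thm:fast-liveness-modified-tob}). Conditions 1.1--1.3 read off the voting code: every FFG vote sent by an always-honest $v_i$ has source $\GJfrozen_i$ and target $(\chainava_i,t)$ (Lines \ref{line:algtob-vote} and \ref{line:algtob-set-target-checkpoint}), so 1.3 is immediate and 1.2 holds by taking $\V^{\mathsf{FFGvote},t}_i$ to be the view from which $\GJfrozen_i$ was computed; validity (1.1) follows because \Cref{line:algtob-vote-chainava} keeps $\chainava_i\succeq\GJfrozen_i.\chain$ and because $\GJfrozen_i.c<t$ by the freeze/merge semantics. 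For 2.1, I would combine \Cref{prop:extend} with the natural analogue of \Cref{lem:ga-mfc-proposer-shorter-than-t} for \Cref{alg:3sf-tob-noga}: the proposer's $\mfc$ output has height $<t$, so $\texttt{Extend}(\cdot,t)$ absorbs every transaction of $\txpool^{\proposing{t}}$. Condition 2.2 follows from \Cref{lem:vote-proposal-fast-conf-ffg-only-base} together with Graded Delivery, which force $\chainava_i\preceq\mfcvote{t}_i\preceq\chain_p$. Conditions 2.3.2 and 2.4.1 are then standard synchrony arguments: any message dispatched by round $r$ sits in every honest view by round $r+\Delta$, which covers the inter-slot gaps.

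The heart of the proof is 2.3.1. Because $\GAT\leq\proposing{t}$ we have $|H_{\voting{t}}|\geq\frac{2}{3}n$, so \Cref{thm:fast-liveness-modified-tob} yields $\chainava^{\fastconfirming{t}}_i=\chain_p$ for every honest $v_i$, and fast-confirmation of any chain strictly longer than $\chain_p$ is impossible because no block of height $>t$ is proposed in slot $t$. Synchrony then carries the frozen state into slot $t+1$, so after the slot-$t+1$ \textbf{upon} handler fires we have $\chainfrozen_i\preceq\chain_p$ and $\GJfrozen_i.\chain\preceq\chain_p$ (using $f<\frac{n}{3}$ and the slot-$t$ target-chain structure to rule out any justified checkpoint conflicting with $\chain_p$, even one injected through an adversarial proposer's $\GJ_p$). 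The slot-$t$ votes form a supermajority supporting $\chain_p$ and nothing longer inside $\Vfrozen_i\cap\V_i$, so $\chaincanmfc=\chain_p$ in slot $t+1$'s vote phase and the $\max$ in \Cref{line:algtob-vote-chainava} is forced to be $\chain_p$; hence $\T_i=(\chain_p,t+1)$.

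For 2.4.2, synchrony gives (i); for (ii), the slot-$t+1$ FFG votes targeting $(\chain_p,t+1)$ (from 2.3.1) justify that checkpoint, and the slot-$t+2$ FFG votes, all sourced at $(\chain_p,t+1)$ by 1.2 and the update rule for $\GJfrozen$, supply the supermajority link that finalizes it. I would then argue $(\chain_p,t+1)$ is the \emph{greatest} finalized checkpoint at round $4\Delta(t+2)+2\Delta$: no $\C$ with $\C.c=t+2$ can be finalized yet (its link would live in slot $t+3$), and no $\C$ with $\C.\chain.p>t$ can be justified by slot-$t+1$ votes without a supermajority targeting a chain conflicting with $\chain_p$, contradicting 2.3.1 under $f<\frac{n}{3}$. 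Consequently $\GF=(\chain_p,t+1)$, and since $\chain_p.p=t$, \Cref{line:algotb-set-chfin-fast} sets $\chainfin_i=\chain_p$, matching the $\max$ required by the constraint. The main obstacle I anticipate is bookkeeping: tracking every update to $\chainfrozen$, $\GJfrozen$, and $\chainava$ between slot $t$'s merge and slot $t+1$'s vote---including updates triggered by a possibly adversarial slot-$t+1$ proposer through the \textbf{upon} handler---to confirm that none of them can push past $\chain_p$, together with the ``no larger finalized checkpoint'' argument for 2.4.2(ii).
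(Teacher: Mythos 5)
Your proposal is correct and follows essentially the same route as the paper's proof: a condition-by-condition verification that reduces 2.1 to the height lemma plus \texttt{Extend}, 2.2 and 2.3.1 to \Cref{lem:vote-proposal-fast-conf-ffg-only-base} together with fast confirmation of $\chain_p$ in slot $t$ under $f<\frac{n}{3}$ and full participation, 2.3.2/2.4.1 to synchrony, and 2.4.2 to the slot-$(t+1)$ justification and slot-$(t+2)$ supermajority link, exactly as the paper does (the paper handles your anticipated ``bookkeeping'' worry via the invariant $\chainfrozen_i\succeq\GJfrozen_i.\chain$ and the observation that $\GJ(\Vfrozen[\voting{(t+1)}]_i)$ has checkpoint slot $t$ and chain a prefix of $\chain_p$, while $\kappa>1$ keeps the $\kappa$-deep candidate from interfering). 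Your explicit argument that $(\chain_p,t+1)$ is the greatest finalized checkpoint in 2.4.2(ii) is a slightly more detailed rendering of a step the paper leaves implicit, but it is not a different approach.
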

  
\begin{proof}
  {Let us prove that \Cref{alg:3sf-tob-noga} ensures each condition listed in \Cref{prop:succ-for-ffg-liveness}.}
  \begin{description}
    \item[\Cref{prop:succ-for-ffg-liveness-1-1}.] 
    Let $[\textsc{vote}, \chain, \calS \to \T,t,v_i]$ be the \textsc{vote} message cast by validator $v_i$ in slot $t$.
    We need to show that $\calS.\chain \preceq \T.\chain \preceq \chain$.

    \Cref{line:algtob-vote-chainava,line:algtob-set-target-checkpoint,line:algtob-vote,line:algtob-vote-comm} imply that $\T.\chain \preceq \mfcvote{t}_i$ and $\mfcvote{t}_i \preceq \chain$, hence $\T.\chain \preceq \chain$.

    Then, we are left with showing that $\calS.\chain \preceq \T.\chain$.
    Due to \Cref{line:algtob-set-target-checkpoint,line:algtob-vote}, this amounts to proving that $\GJfrozen[\voting{t}]_i.\chain \preceq \chainava^\voting{t}_i$.
    First, note that \Cref{alg:3sf-tob-noga} ensures that $\chain^{\text{frozen},4 \Delta t}_i \succeq \GJ^{\text{frozen},4 \Delta t}_i$.
    This can be proven by induction on slot $t'$.
    For $t'=0$, the base case, this is obvious from the initialization code.
    For $t'>0$, assume that the statement holds for $t'$.
    Given that $\texttt{fastconfirm}(\V,\cdot)$ always outputs a chain descendant of $\GJ(\V).\chain$, \Cref{line:algtob-merge-ch-frozen,line:algtob-merge-gj-frozen} ensure that $\chain^{\text{frozen},4 \Delta t' + 3 \Delta}_i \succeq \GJ^{\text{frozen},\merging{t'}}_i.\chain$.
    Then, \Cref{line:algtob-prop-check-chfrozen-gj,line:algtob-prop-merge-gj,line:algtob-prop-set-ch-frozen-to-gj} ensure that this relation is maintained for any round until $\voting{(t'+1)}$.
    
    Hence, $\mfc(\cdot, \cdot, \chain^{\text{frozen},\voting{t}}, \cdot) \succeq \GJ^{\text{frozen},4 \Delta t +\Delta}_i.\chain$.
    Therefore, \Cref{line:algtob-vote-chainava} ensures that $\chainava_i^\voting{t} \succeq \GJ(\V^{\text{frozen},4 \Delta t +\Delta}_i).\chain$.
    \item[\Cref{prop:succ-for-ffg-liveness-1-2}.] Due to \Cref{line:algtob-merge-gj-frozen,line:algtob-prop-if}, there exists a set $\V^{\mathsf{FFGvote},t}_i \supseteq \V^{\merging{(t-1)}}_i$ such that $\GJ^{\text{frozen},\voting{t}}_i = \GJ(\V^{\mathsf{FFGvote},t}_i)$.
    \item[\Cref{prop:succ-for-ffg-liveness-1-3}.] \Cref{line:algtob-vote} implies that any \textsc{ffg-vote} $\calS \to \T$ sent by an honest validator during slot $t$ is such that $\T_i.c=t$.
    \item[{\Cref{prop:succ-for-ffg-liveness-2-1}.}] {Follows from \Cref{lem:ga-mfc-proposer-shorter-than-t}.}
    \item[\Cref{prop:succ-for-ffg-liveness-2-2}.]
    {Let the greatest justified checkpoint in the view of~$v_p$ at round $\proposing{t}$ be $\GJ_p$ and $v_i$ be any always-honest validator.
    By \Cref{lem:vote-proposal-fast-conf-ffg-only-base}, $\GJfrozen[\voting{t}]_i = \GJ_p$.
    Hence, by \Cref{line:algtob-vote}, at time $\voting{t}$, validator $v_i$ sends an \textsc{ffg-vote} $\calS \to \T_i$ with the same source checkpoint, namely $\calS = \GJ_p$.
    Finally, as argued when proving \Cref{prop:succ-for-ffg-liveness-1-1}, the chain that $v_i$ \textsc{vote}s for is a descendant of $\T_i.\chain$.
    Then, from \Cref{lem:vote-proposal-fast-conf-ffg-only-base} and \Cref{line:algtob-vote-chainava}, we can conclude that $\T_i.\chain \preceq \chain_p$.}

    \item[\Cref{prop:succ-for-ffg-liveness-2-3-2}.]
    Given that $\voting{t} \geq \GST$, and that \textsc{ffg-vote}s for slot $t$ are sent at round $\voting{t}$, they are received by any always-honest validator $v_i$ at time  $\merging{t}$ and, hence, included in $\V^{\text{frozen},\merging{t}}_i$.
    From \Cref{prop:succ-for-ffg-liveness-1-2}, we then have that $\V^{\text{frozen},\merging{t}}_i \subseteq \V^{\mathsf{FFGvote},t+1}_i$.

    \item[\Cref{prop:succ-for-ffg-liveness-2-3-1}.]
    {Let $v_i$ be any always-honest validator.
    By \Cref{lem:vote-proposal-fast-conf-ffg-only-base}, at time $\voting{t}$, every always-honest validator casts a \textsc{vote} message for chain~$\chain_p$.
    Hence, given that $\fastconfirming{t} \geq \GAT$ and $f < \frac{n}{3}$, for any always-honest validator, $\texttt{fastconfirm}(\V^\fastconfirming{t}_i,t) = (\chain_p, \cdot)$.
    Given that, as established in the proof of \Cref{prop:succ-for-ffg-liveness-2-2}, $\T_i.\chain = \chainava^\voting{t}_i \preceq \chain_p$, $\chainava^\fastconfirming{t}_i$ is set to $\chain_p$ (\Cref{line:algtob-set-chaava-to-bcand}).
    By \Cref{prop:succ-for-ffg-liveness-2-2}, at time $\voting{t}$, $v_i$  sends an \textsc{ffg-vote} with the same source checkpoint and with a target checkpoint $\T_i$ such that $\T_i.\chain \preceq \chain_p$.
    Hence, given $f <\frac{n}{3}$  and \Cref{prop:succ-for-ffg-liveness-1-3}, at time $\voting{(t+1)}$, $\GJ(\V^{\text{frozen},\voting{(t+1)}}_i).c = t \land \GJ(\V^{\text{frozen},\voting{(t+1)}}_i).\chain \preceq \chain_p$.
    Given that $\kappa >1$, $t+1-\kappa < t$.
    Therefore, given $\chainava^\fastconfirming{t}_i = \chain_p$ and $\chain_p.p = t$, in round $\voting{(t+1)}$, due to \Cref{line:algtob-vote-chainava}, $\chainava_i$  is unchanged, \ie, $\chainava_i = \chain_p$.
    Then, due to \Cref{line:algtob-vote,line:algtob-set-target-checkpoint}, the \textsc{ffg-vote} sent by $v_i$ in slot $t+1$ has a target $\T$ such that $\T.\chain =\chain_p$.}

    \item[\Cref{prop:succ-for-ffg-liveness-2-4-1}.] Same reasoning as per \Cref{prop:succ-for-ffg-liveness-2-3-2}.
    \item[\Cref{prop:succ-for-ffg-liveness-2-4-2}.]  Condition (i) follows from the synchrony assumption.
    {As shown in the proof of \Cref{prop:succ-for-ffg-liveness-2-3-1}, for any always-honest validator $v_i$, $\chainava^\voting{(t+1)}_i = \chain_p$.
    By \Cref{line:algtob-set-mfc,line:algtob-vote-chainava,line:algtob-vote-comm,line:algtob-vote}, this implies that $v_i$ \textsc{vote}s for a chain extending $\chain_p$.
    Then, by following the same argument outlined in the proof of \Cref{prop:succ-for-ffg-liveness-2-3-1} we can conclude the following.
    At time $\fastconfirming{(t+1)}$, $\chainava^\fastconfirming{(t+1)}_i \succeq \chain_p$.
    At time $H_\voting{(t+2)}$, $\GJ(\V^{\text{frozen},\voting{(t+2)}}_i).c = t+1 \land \GJ(\V^{\text{frozen},\voting{(t+2)}}_i).\chain = \chain_p$.
    \Cref{line:algtob-vote-chainava} ensures then 
    that $\chainava^\voting{(t+2)}_i \succeq \chain_p$ and, hence, by \Cref{line:algtob-vote-chainava}, any always-honest validator  \textsc{vote}s for a chain extending $\chain_p$.
    Hence, $\chainava^\fastconfirming{(t+2)}_i \succeq \chain_p$.
    This and \Cref{line:algotb-set-chfin-fast} prove condition (ii).}
    \qedhere
  \end{description}
\end{proof}

\begin{theorem}\label{thm:ga-ebb-and-flow}
  \Cref{alg:3sf-tob-noga} is a $\eta$-secure ebb-and-flow protocol.
\end{theorem}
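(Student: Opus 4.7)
The plan is to verify that Algorithm~\ref{alg:3sf-tob-noga} satisfies each of the four requirements in the definition of $\eta$-secure ebb-and-flow protocol: (i) $\chainava$ is $\eta$-dynamically-available; (ii) $\chainfin$ is always (accountably) safe; (iii) $\chainfin$ is live after $\max(\GST,\GAT)+O(\Delta)$ with $\Tconf = O(\kappa)$; and (iv) $\chainfin^r_i \preceq \chainava^r_i$ for every honest $v_i$ and round $r$. Requirement (i) is already given by \Cref{thm:dyn-avail-fast-conf-tob-ffg}, so the bulk of the proof is just invoking the right previously-established results and checking the prefix invariant.

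For (ii), accountable safety follows immediately from \Cref{thm:accountable-safety} once we verify its hypotheses: \Cref{prop:never-slashed} is discharged by \Cref{lem:never-slashed-3sf-tob-noga}, and \Cref{prop:chfin} is satisfied by direct inspection of \Cref{line:algotb-set-chfin-init,line:algotb-set-chfin-vote,line:algotb-set-chfin-fast}, since $\chainfin_i$ is always set to a prefix of $\GF(\V_i).\chain$, which is a finalized chain according to $\V_i \subseteq \Vglobal$. Since we assume $f < \frac{n}{3}$, accountable safety with resilience $\frac{n}{3}$ directly yields always-safety of $\chainfin$ in any execution of interest.

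For (iii), I will combine \Cref{thm:liveness-ffg-general} with \Cref{thm:liveness-tob-no-ga}, which establishes that \Cref{alg:3sf-tob-noga} satisfies \Cref{prop:succ-for-ffg-liveness}. Take any round $r \geq \max(\GST,\GAT)+4\Delta$; by Lemma~2 of~\cite{rlmd} there exists, with overwhelming probability, a slot $t_p \in [\slot(r)+1, \slot(r)+\kappa]$ with an honest proposer, so that $\proposing{t_p} \geq \max(\GST,\GAT)+4\Delta$. \Cref{thm:liveness-ffg-general} then gives $\chain_p \preceq \chainfin_i^{4\Delta(t_p+2)+2\Delta}$ together with $\txpool^r \subseteq \txpool^{\proposing{t_p}} \subseteq \chainfin_i$, for every active validator $v_i$ at that time. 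This yields $\Tconf = O(\kappa\Delta) = O(\kappa)$, giving liveness of $\chainfin$ after $\max(\GST,\GAT)+O(\Delta)$.

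The most delicate piece is (iv), the prefix property. At initialization both chains equal $\genesis$. At a vote round \Cref{line:algotb-set-chfin-vote} sets $\chainfin_i$ to the longest prefix of both $\chainava_i$ and $\GF(\V_i).\chain$, so the invariant is explicit. At a fast-confirmation round, \Cref{line:algotb-set-chfin-fast} sets $\chainfin_i \gets \GF(\V_i).\chain$, and I need to show this is a prefix of the just-updated $\chainava_i^{\fastconfirming{t}}$. By definition of $\texttt{fastconfirm}$, the candidate returned satisfies $\fastcand \succeq \GJ(\V_i).\chain$, and after \Cref{line:algtob-set-fastcand-fconf}--\Cref{line:algtob-set-chaava-to-bcand} we have $\chainava_i^{\fastconfirming{t}} \succeq \GJ(\V_i).\chain$ (either it was already an extension, or it was overwritten by $\fastcand$). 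Since $\GF(\V_i) \leq \GJ(\V_i)$ in the lexicographic ordering and $f < \frac{n}{3}$ precludes the safety-violation branch of \Cref{lem:accountable-jutification} (honest validators never get slashed, by \Cref{lem:never-slashed-3sf-tob-noga}), we obtain $\GF(\V_i).\chain \preceq \GJ(\V_i).\chain \preceq \chainava_i^{\fastconfirming{t}}$, closing the invariant. Between setting rounds $\chainfin$ is unchanged while $\chainava$ only moves forward along its current extension, so the invariant is preserved. The main obstacle is precisely this last step, i.e.\ ruling out that $\GF$ sits on a chain conflicting with $\chainava$ during fast-confirmation updates; the argument above reduces it to already-proved accountability plus the structural guarantee of $\texttt{fastconfirm}$.
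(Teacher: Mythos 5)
Your proposal is correct and follows essentially the same route as the paper's proof: it discharges $\eta$-dynamic availability via \Cref{thm:dyn-avail-fast-conf-tob-ffg}, safety via \Cref{thm:accountable-safety} with \Cref{lem:never-slashed-3sf-tob-noga} and a direct check of \Cref{prop:chfin}, liveness via \Cref{thm:liveness-ffg-general} together with \Cref{thm:liveness-tob-no-ga} and the honest-proposer-in-$\kappa$-slots argument, and the prefix property $\chainfin \preceq \chainava$ by the same case split on vote versus fast-confirmation rounds, using $\texttt{fastconfirm}$'s guarantee that $\chainava \succeq \GJ(\V_i).\chain$, the ordering $\GF(\V_i) \leq \GJ(\V_i)$, and \Cref{lem:accountable-jutification} with $f < \frac{n}{3}$. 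No gaps; the argument matches the paper's proof in both decomposition and the key lemmas invoked.
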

\begin{proof}
  {\Cref{line:algotb-set-chfin-init,line:algotb-set-chfin-vote,line:algotb-set-chfin-fast} and the definition of $\GF$ prove that \Cref{alg:3sf-tob-noga} satisfies \Cref{prop:chfin}.
  Safety of $\chainava$ and $\chainfin$, and  Liveness of $\chainava$ follow from this, \Cref{thm:accountable-safety,thm:dyn-avail-fast-conf-tob-ffg}, and \Cref{lem:never-slashed-3sf-tob-noga}.

  Then, we need to show that $\chainfin$ is live after $\max(\GST,\GAT) + O(\Delta)$ with confirmation time $O(\kappa)$.
  Let $t$ be any slot such that $\proposing{t} \geq \max(\GST,\GAT) + \Delta$.
  There is a high probability of finding a slot \( t_p  \in [t, t + \kappa)\) hosted by an honest proposer (Lemma 2~\cite{rlmd}).
  By \Cref{thm:liveness-ffg-general} and \Cref{thm:liveness-tob-no-ga}, $\txpool^\proposing{t}  \subseteq \txpool^\proposing{t_p} \subseteq \chainfin^\fastconfirming{(t_p+2)}_i$ for any validator honest in $\fastconfirming{(t_p+2)}$ and, clearly, $\fastconfirming{(t_p+2)}-\proposing{t} \in O(\kappa)$.} 

  {
    Finally, we need to show that, for any round $r$ and validator $v_i \in H_r$, $\chainfin^r_i \preceq \chainava^r_i$.
    \Cref{line:algotb-set-chfin-vote} clearly ensures that this is the case when $r$ is a vote round.
    Note that the only other times at which either $\chainava$ or $\chainfin$ are potentially modified are fast confirmation rounds.
    Then, let $r$ be any fast confirmation round.
    \Crefrange{line:algtob-set-fastcand-fconf}{line:algtob-set-chaava-to-bcand} ensure that $\GJ(\V^r_i).\chain \preceq \chainava^r_i$.
    By \Cref{alg:justification-finalization}, 
    $\GJ(\V^r_i)\geq\GF(\V^r_i)$ 
    which, due to \Cref{lem:accountable-jutification} and the assumption that $f < \frac{n}{3}$, implies $\chainfin  \preceq \GJ(\V^r_i).\chain \preceq \chainava^r_i$.
  }
\end{proof}

We conclude this section by showing that the finalized chain of an honest validator grows monotonically. Although this property is not explicitly defined in \Cref{sec:security}, it is frequently desired by Ethereum protocol implementers~\cite{ethereum-properties}.

\begin{lemma}\label{lem:chfin-ga-always-grows}
  For any two round $r' \geq r$ and validator $v_i$ honest in round $r'$, $\chainfin^{r'}_i \succeq \chainfin^r_i$.
\end{lemma}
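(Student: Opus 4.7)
The plan is to reduce the claim to monotonicity across any two consecutive rounds at which $\chainfin_i$ is actually updated. From the pseudocode, these are only the vote round (\Cref{line:algotb-set-chfin-vote}) and the fast confirmation round (\Cref{line:algotb-set-chfin-fast}), so two cases arise: successive updates within the same slot (vote then fast confirmation) and successive updates across adjacent slots (fast confirmation of slot $t$, then vote of slot $t+1$).

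The first preliminary I would establish is that, for an honest validator's monotonically growing view $\V$, the greatest finalized chain $\GF(\V).\chain$ is monotonic in the chain order. This should follow from \Cref{lem:accountable-jutification} and the assumption $f<\frac{n}{3}$: any previously finalized checkpoint is still finalized in the larger view, so by maximality the new $\GF$ is lex-greater than the old; the subcase where the new checkpoint slot is strictly greater is handled by \Cref{lem:accountable-jutification} applied with the old $\GF$ as the finalized checkpoint, and the subcase of equal checkpoint slot by reusing the first paragraph of that lemma's proof (no two conflicting checkpoints can be justified at the same slot without $\frac{n}{3}$ slashable validators).

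Given this, the same-slot case is immediate: the previous value $\max\{\chain: \chain\preceq\chainava^{\voting{t}}_i \land \chain\preceq\GF(\V^{\voting{t}}_i).\chain\}$ is a prefix of $\GF(\V^{\voting{t}}_i).\chain$, which by monotonicity of $\GF$ is a prefix of the new value $\GF(\V^{\fastconfirming{t}}_i).\chain$. For the cross-slot case, the new $\chainfin^{\voting{(t+1)}}_i$ is the longest common prefix of $\chainava^{\voting{(t+1)}}_i$ and $\GF(\V^{\voting{(t+1)}}_i).\chain$; monotonicity of $\GF$ handles the second, so what remains is to show that $\chainava^{\voting{(t+1)}}_i \succeq \GF(\V^{\fastconfirming{t}}_i).\chain$.

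This cross-slot $\chainava$ bound will be the main obstacle, and I plan to handle it in two stages. First, the invariant $\chainfrozen_i \succeq \GJfrozen_i.\chain$ established in the proof of \Cref{prop:succ-for-ffg-liveness-1-1}, together with $\chaincanmfc \succeq \chainfrozen_i$ from the definition of $\mfc$, ensures that at \Cref{line:algtob-vote-chainava} the candidate $\GJfrozen[\voting{(t+1)}]_i.\chain$ passes the filter $\chain \preceq \chaincanmfc$, whence $\chainava^{\voting{(t+1)}}_i \succeq \GJfrozen[\voting{(t+1)}]_i.\chain$. Second, tracing $\GJfrozen_i$ through the merge round (\Cref{line:algtob-merge-gj-frozen}) and the propose round (\Cref{line:algtob-prop-merge-gj}), $\GJfrozen[\voting{(t+1)}]_i$ is a justified checkpoint in $v_i$'s view with lex-value at least $\GJ(\V^{\merging{t}}_i) \ge \GF(\V^{\merging{t}}_i) \ge \GF(\V^{\fastconfirming{t}}_i)$, so a final application of \Cref{lem:accountable-jutification} under $f<\frac{n}{3}$, again split into equal-slot and strictly-greater-slot subcases, delivers $\GJfrozen[\voting{(t+1)}]_i.\chain \succeq \GF(\V^{\fastconfirming{t}}_i).\chain$ and closes the argument.
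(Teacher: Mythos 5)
Your proof is correct and essentially the paper's own argument: it rests on the same three ingredients --- $\chainava_i \succeq \GJfrozen_i.\chain$ at vote rounds (via the invariant $\chainfrozen_i \succeq \GJfrozen_i.\chain$ and \Cref{line:algtob-vote-chainava}), monotonicity of $\GF$/$\GJ$ over a growing view, and \Cref{lem:accountable-jutification} with $f<\frac{n}{3}$ to convert lexicographic comparisons into prefix relations --- the only structural difference being that you induct over consecutive update rounds whereas the paper argues by contradiction on a minimal pair of rounds $(r,r')$, splitting on whether $r'$ is a vote or a fast-confirmation round. One minor imprecision: because of sleepiness, consecutive updates need not follow your two patterns (e.g., a vote-round update can be followed directly by a vote-round update several slots later, with the intervening fast-confirmation updates skipped), but since your cross-slot argument only uses that the previous update round precedes $\merging{(t'-1)}$, it covers those cases verbatim, so this is a matter of phrasing rather than a gap.
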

\begin{proof}
  Let us prove this Lemma by contradiction.
  Let $r$ be the smallest round such that there exist round $r' \geq r$ and a validator $v_i$ honest in round $r'$ such that $\chainfin^{r'}_i \nsucceq \chainfin^r_i$.
  {Clearly $r' > r$.}
  Assume $r'$ to be the smallest such round.
  First, we want to show that there exists a checkpoint $\J$ such that $\chainava^{r'}_i \succeq \J.\chain \land \mathsf{J}(\J, \V^{r'}_i) \land \J \geq \GF(\V^r_i)$.
  Note that both $r$ are $r'$ are either a vote or a fast confirmation round.
  This is due to the minimality of $r$ and $r'$ and the fact that $\chainfin_i$ is only set in these types of rounds.
  Then, let us proceed by cases.

  \begin{description}
    \item[Case 1: $r'$ is a vote round.]
    In this case $\J = \GJfrozen[r']_i$.
    {Note that due to \Cref{line:algtob-merge-ch-frozen,line:algtob-merge-gj-frozen,line:algtob-prop-check-chfrozen-gj,line:algtob-prop-set-ch-frozen-to-gj}, $\chainfrozen[r']_i \succeq \GJfrozen[r']_i.\chain$.
    Then, \Cref{line:algtob-vote-chainava} implies that $\chainava^{r'}_i \succeq \GJfrozen[r']_i.\chain$.
    Also, \Cref{line:algtob-merge-gj-frozen,line:algtob-prop-if,line:algtob-prop-merge-gj} imply that $\mathsf{J}(\GJfrozen[r']_i,\V^{r'}_i)$.
    Finally, \Cref{line:algtob-prop-if,line:algtob-prop-merge-gj,line:algtob-merge-gj-frozen}, and $r' > r$ imply that $\GJfrozen[r']_i \geq \GJ(\V^{r}_i)$.}
    \item[Case 2: $r'$ is a fast confirmation round.]
    In this case, $\J = \GJ(\V^{r'}_i)$.
    Clearly, $\mathsf{J}(\GJ(\V^{r'}_i),\V^{r'}_i)$.
    {\Crefrange{line:algtob-set-fastcand-fconf}{line:algtob-set-chaava-to-bcand} ensure that $\chainava^{r'}_i \succeq \GJ(\V^{r'}_i).\chain$.}
    Then, from \Cref{alg:justification-finalization} and the fact that $\V^r_i\subseteq \V^{r'}_i$, we can conclude that $\GJ(\V^{r'}_i)\geq \GF(\V^{r'}_i) \geq \GF(\V^{r}_i)$.
  \end{description}  
  From \Cref{alg:justification-finalization} and $\V^{r}_i \subseteq \V^{r'}_i$, we also have that $\GF(\V^{r'}_i)\geq \GF(\V^r_i)$.
  Then, \Cref{lem:accountable-jutification} and $f<\frac{n}{3}$ imply $\chainava^{r'}_i \succeq \J.\chain \succeq \GF(\V^r_i).\chain \succeq \chainfin^r_i$ and
  $\GF(\V^{r'}_i).\chain\succeq \GF(\V^r_i).\chain$, which further imply
  $\chainfin^{r'}_i \succeq \GF(\V^r_i).\chain$.
  Hence, $\chainfin^{r'}_i  \succeq \GF(\V^r_i).\chain \succeq \chainfin^r_i$ reaching a contradiction.
\end{proof}

\section{\RLMDGHOST-Based Faster Finality Protocol}
\label{sec:rlmd-based}
In this section, we present a faster finality protocol that builds upon the dynamically-available consensus protocol of D'Amato and Zanolini~\cite{rlmd}, \RLMDGHOST. Specifically, \RLMDGHOST, which serves as the available component outputting $\chainava$, is combined with the FFG component described in \Cref{sec:ffg}, which {finalizes chains prefix of $\chainava$} and outputs $\chainfin$, achieving an $\eta$-secure ebb-and-flow protocol. We start by briefly recalling \RLMDGHOST~\cite{rlmd} and state its most relevant properties.

\subsection{Recalling \RLMDGHOST}
\label{sec:recalling-rlmd}

Before introducing \RLMDGHOST\rs{Is it confusing that we name the protocol and the fork-choice function in the same way?} with fast confirmation in \Cref{alg:rlmd}, 
we define a new fork-choice function in \Cref{alg:rlmd-fc}, 
which will be used similarly to how the fork-choice function $\mfc$ is utilized in Algorithm~\ref{algo:prob-ga-fast}.

The function $\rlmdghost(\V, {B_{\mathrm{start}}}, t)$ in \Cref{alg:rlmd-fc} starts with refining the view $\V$ by calling the filter function $\FIL_{\text{rlmd}}$ that keeps
only the latest, non-equivocating \textsc{vote} messages that fall within the expiry period $[t-\eta, t)$ for the current slot $t$. 
Formally, $\FIL_{\text{rlmd}}$ is defined as a composition of the filters already defined in \Cref{alg:mfc}: $\FIL_{\text{rlmd}}(\V,t) = \FIL_{\text{lmd}}(\FIL_{\eta\text{-exp}}(\FIL_{eq}(\V),t))$.
After this pruning, it proceeds as follows starting from {$B_{\mathrm{start}}$}.
From a given block, it selects the next block by identifying the child block that has the most weight, measured by the number of latest, unexpired, and non-equivocating \textsc{vote} messages for its descendants.
{One it reaches a block with no children, it returns the chain identified by this block.}
This is implemented as the function \GHOST in \Cref{alg:rlmd-fc} .
\footnote{{The $\rlmdghost$ fork-choice function defined in~\cite{rlmd} corresponds to the $\rlmdghost$ fork-choice function of \Cref{alg:rlmd-fc} with $B_\mathrm{start} = \genesis$.}}

\begin{algo}[h!]
\caption{$\rlmdghost$ fork-choice function}
\label{alg:rlmd-fc}
\vbox{
\small
\begin{numbertabbing}\reset
  xxxx\=xxxx\=xxxx\=xxxx\=xxxx\=xxxx\=MMMMMMMMMMMMMMMMMMM\=\kill
\textbf{function} $\rlmdghost(\V,{B_{\mathrm{start}}},t)$ \label{}\\
  \> \textbf{return} $\ghost(\FIL_{\text{rlmd}}(\V, t), {B_{\mathrm{start}}}, t)$ \label{}\\\\

    \textbf{function} $\ghost(\V,{B_{\mathrm{start}}}, t)$ \label{}\\
  \> $B \gets {B_{\mathrm{start}}}$ \label{hfc:root}\\
  \> \textbf{while} $B$ has at least one {child} block $B'$ in $\V$ with $B'.p \leq t$ \textbf{do}\label{}\\
  \>\> $B \gets \underset{B'\in \V, \text{ child of } B}{\text{arg max}} w(B',{\V})$ 
   \` // $w(B,{\V})$ outputs the number of votes in ${\V}$ for $B'\succeq B$ with $B'.p \leq t$\label{line:rlmd-argmax} \\
  \>\> // ties are broken according to a deterministic rule \label{}\\
  \> \textbf{return} $B$ \label{}\\
\\

  \textbf{function} $\FIL_{\text{rlmd}}(\V, t)$ \label{}\\
  \> \textbf{return} $\FIL_{\text{lmd}}(\FIL_{\eta\text{-exp}}(\FIL_{eq}(\V),t))$ \label{}\\[-5ex]
\end{numbertabbing}
}
\end{algo}

Similarly to \Cref{algo:prob-ga-fast} in Section~\ref{sec:ga-based}, 
\Cref{alg:rlmd}
defines the function $\texttt{fastconfirmsimple}(\V,t)$ to returns a chain $\fastcand$ that has garnered support from more than $\frac{2}{3}n$ of the validators that have \textsc{vote}d in slot $t$ according to the view $\V$. 
Specifically, each validator \(v_i\), {via the function $\texttt{fastconfirmsimple}$,} identifies the set $\fastcands$ of all the chains \(\chain\) from \textsc{vote} messages of slot \(t\) that meet a specific voting criterion, \ie, when there are at least \(\frac{2}{3}n\) validators sending \textsc{vote} messages of the form \([\textsc{vote}, \chain', \cdot, t, \cdot]\) in $\V_i$ with \(\chain \preceq \chain'\), and then returns the maximum chain $\fastcand$ in this set.
If such chain does not exist, then it just returns $\genesis$.

Also, {similarly to \Cref{algo:prob-ga-fast},} each honest validator $v_i$ maintains the following state variable:

\begin{itemize}
    \item \textbf{Message Set \(\Vfrozen_i\)}: Each validator stores in \(\Vfrozen_i\) a snapshot of its view $\V_i$ at time \(3\Delta\) in each slot, which is then updated between time 0 and time $\Delta$ of the next slot, if a \textsc{propose} message is received, by merging it with the view included in it.
\end{itemize}
{However, compared to \Cref{algo:prob-ga-fast}, \Cref{alg:rlmd} does not require to the state variable $\chainfrozen_i$.}

{Like \Cref{algo:prob-ga-fast}, each honest validator $v_i$ outputs the following chain.}
\begin{itemize}
  \item {\textbf{Confirmed Chain $\Chain_i$:} The confirmed chain $\Chain_i$ is updated at time $\Delta$ in each slot and also potentially at time $2\Delta$.}
\end{itemize}

{Overall, \Cref{alg:rlmd} proceeds following the structure of \Cref{algo:prob-ga-fast}.}

\begin{description}
    \item[Propose:] At round $4\Delta t$, the propose phase occurs. 
    {During this phase, the proposer $v_p$ for slot $t$ broadcasts the message $[\textsc{propose}, \chain_p, \V_p \cup \{\chain_p\}, t, v_p]$. 
    Here, $\chain_p$ is the chain returned by the function $\texttt{Extend}(\chaincanrlmd,t)$ where $\texttt{Extend}$ is the same function utilized by \Cref{algo:prob-ga-fast} and therefore ensures \Cref{prop:extend}, and $\chaincanrlmd$ corresponds to the output of the $\rlmdghost$ fork-choice $\rlmdghost(\V_i, \genesis,t)$ after removing any block with slot higher or equal to $t$, \ie, we take its $1$-deep prefix\footnote{The reason for taking the $1$-deep prefix of the output of the \rlmdghost{} fork-choice is that, as per our protocol's definition, it is possible for a Byzantine validator to get the \rlmdghost{} fork-choice at propose time $\proposing{t}$ to output a chain with height $t$ by sending a block with slot $t$ and relatives \textsc{vote}s for it. Alternatively, one could avoid taking the $1$-deep prefix by requiring that blocks are signed by the proposer and we only add to the view blocks that are correctly singed by the expected proposer for their slot. We decided not to take this approach to keep the algorithms as compact as possible while maximizing the aspects of the protocols relevant to their properties that are captured by pseudo-code.}.
    Whereas, $\V_p \cup \{\chain_p\}$ denotes the view of $v_p$ combined with the newly \textsc{propose}d chain $\chain_p$.}

    \item[Vote:] During the interval \( [4\Delta t, 4\Delta t + \Delta] \), a validator \( v_i \), upon receiving a proposal $[\textsc{propose}, \chain_p, \V_p, t', v_p]$ from the proposer \( v_p \) for slot~\( t'=t \), sets $\Vfrozen_i \gets \Vfrozen_i \cup \V_p$.

    {During the voting phase, each validator \(v_i\) computes the fork-choice function \(\rlmdghost\). The inputs to $\rlmdghost$ are $v_i$'s {frozen} view $\Vfrozen_i$, the genesis block $\genesis$, and the current slot~$t$.
    
    In the voting phase, validator $v_i$ also sets the confirmed chain $\Chain_i$ to the highest chain among the $\kappa$-deep prefix of the chain output by $\rlmdghost$ and the previous confirmed chain, filtering out this last chain if it is not a prefix of the chain output by $\rlmdghost$.}

    Finally, $v_i$ casts a \textsc{vote} for the output of $\rlmdghost$.

    \item[Fast Confirm:] {Like in \Cref{algo:prob-ga-fast},} validator \(v_i\) checks if a chain $\chain$ can be fast confirmed, i.e., if at least $2/3n$ of the validators cast a \textsc{vote} message for a chain $\chain^\mathrm{cand}$. If that is the case, it sets $\Chain_i$ to this chain.

    \item[Merge:] {Like in \Cref{algo:prob-ga-fast},} at round $4\Delta t + 3\Delta$,  validator $v_i$ updates its variable $\Vfrozen_i$ to be used in the following slot.
\end{description}

\begin{algo}[htb!]
\caption{\RLMDGHOST with fast confirmation -- code for validator $v_i$}
\label{alg:rlmd}
\small
\begin{numbertabbing}\reset
xxxx\=xxxx\=xxxx\=xxxx\=xxxx\=xxxx\=MMMMMMMMMMMMMMMMMMM\=\kill
  {\textbf{Output}} \label{}\\
  \> {\(\Chain_i \gets \genesis\): confirmed chain of validator $v_i$}\label{}\\
  \textbf{State} \label{}\\
  \> \(\V_i^\text{frozen} \gets \{\genesis\} \): frozen view of validator $v_i$ \label{}\\
  {\textbf{function} $\texttt{fastconfirmsimple}(\V, t)$}\label{}\\
  \> {\textbf{let} $\fastcands := \{\chain \colon |\{v_j\colon \exists \chain' \succeq \chain : \ [\textsc{vote}, \chain', \cdot, t,v_j] \in \V_i \}| \geq \frac{2}{3}n\}$}\\
  \> {\textbf{if} $\fastcands \neq \emptyset$ \textbf{then}}\label{}\\
  \>\> {\textbf{return} $\max(\fastcands)$}\label{line:alg2-bcand}\\
  \> {\textbf{else}}\label{}\\
  \>\> {\textbf{return} {$\genesis$}}\label{}\\
  \textsc{Propose}\\
  \textbf{at round} $4\Delta t$ \textbf{do} \label{} \\
  \> \textbf{if} $v_i = v_p^t$ \textbf{then} \label{}\\
  \>\>{\textbf{let} $ \chaincanrlmd := \rlmdghost(\V_i, \genesis, t)^{\lceil 1,t}$} \label{line:alg2-fc1}\\
  \>\> \textbf{let} $ \chain_p := \mathsf{Extend}(\chaincanrlmd,t)$\label{}\\
  \>\> send message [\textsc{propose}, $\chain_p$, $\V_i\,\cup \{\chain_p\}$, $t$, $v_i$] through gossip \label{line:send-propose-rlmd}\\
  \textsc{Vote}\\
  \textbf{at round} $4\Delta t + \Delta$ \textbf{do} \label{}\\
  \> \textbf{let} $ \chaincanrlmd := \rlmdghost(\V_i^{\text{frozen}}, \genesis,t)$ \label{line:alg2-fc2}\\            
  \> $\Chain_i \gets \max(\{\chain \in \{\Chain_i,{\left(\chaincanrlmd\right)^{\lceil\kappa}}\}: \chain \preceq {\chaincanrlmd}\})$\label{line:alg2-vote-chainava-rlmd}\\
  \> send message [\textsc{vote}, {$\chaincanrlmd$}, $\cdot$, $t$, $v_i$] through gossip \label{line:alg2-vote}\\
  \textsc{Fast Confirm}\\
  \textbf{at round} $4\Delta t + 2\Delta$ \textbf{do} \label{}\\
   \> {\textbf{let} $\fastcand := \texttt{fastconfirmsimple}(\V_i,t)$}
  \label{}\\  
  \> \textbf{if} $\fastcand \neq \bot$ \textbf{then}\label{}\\
  \>\> {$\Chain_i \gets \fastcand$}\label{line:alg2-bcand}\\
  \textsc{merge}\\
  \textbf{at round} $4\Delta t + 3\Delta$ \textbf{do} \label{}\\
  \> {$\V_i^{\text{frozen}} \gets \V_i$}\label{}\\
  \\
  \textbf{upon} receiving a gossiped message
  [\textsc{propose}, $\chain_p$, $\V_p$, $t'$, $v_p^t$] \textbf{at any round in} $[4\Delta t, 4\Delta t + \Delta]$ \textbf{do} \label{}\\
  \> $\V_i^{\text{frozen}} \gets \V_i^{\text{frozen}} \cup \V_p$\label{}\\
  [-5ex]
\end{numbertabbing}
\end{algo}

\subsubsection{Analysis}\label{sec:analysis-rlmd}

In this section, we show that \Cref{alg:rlmd}, like \Cref{algo:prob-ga-fast}, satisfied $\eta$ Reorg Resilience {(\Cref{thm:reorg-res-rlmd})}, 
$\eta$ Dynamic Availability (\Cref{thm:dyn-avail-rlmd}), 
$\eta$ Asynchronous Reorg Resilience (\Cref{thm:async-resilience-rlmd}) 
and $\eta$ Asynchronous Safety {Resilience} (\Cref{{thm:async-safety-resilience-rlmd}}).
The analysis presented in this section follows the same structure as the analysis of \Cref{algo:prob-ga-fast} in \Cref{sec:ga-prob-analysis}.
As we will see shortly, the analogous of \Cref{lem:keep-voting-tob-fast-conf,lem:vote-proposal-fast-conf,lem:one-fast-confirm-all-vote-fast-conf} are already proved in \cite{rlmd}.
With these Lemmas, then we can prove that other Lemmas and Theorems by essentially recalling the proofs of the analogous Lemmas and Theorems in \Cref{sec:ga-prob-analysis} with minor modifications.
In the rest of this section, given any slot $t$, we let $\rlmdpropose{t}_i := \rlmdghost(\V^\proposing{t}_i,\genesis,t)^{\lceil 1,t}$ and $\rlmdvote{t}_i := \rlmdghost(\V^\voting{t}_i,\genesis,t)$.

\begin{lemma}
    \label{lem:keep-voting-rlmd-fast-conf}
    If, in slot $t$, all validators in $H_{\voting{t}}$ cast \textsc{vote} messages for chains extending chain \(\chain\),
    then, for any validator $v_i\in H_{\voting{(t+1)}}$, $\rlmdpropose{(t+1)}_i \succeq \chain$ and $\rlmdvote{(t+1)}_i \succeq \chain$, which implies that, in slot $t+1$, all validators in $H_{\voting{(t+1)}}$ cast \textsc{vote} messages for chains extending \(\chain\).
\end{lemma}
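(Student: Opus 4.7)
The plan is to show that the output of the $\rlmdghost$ fork-choice at slot $t+1$, computed from either the view $\V^{\proposing{(t+1)}}_i$ or $\Vfrozen[\voting{(t+1)}]_i$ of any $v_i \in H_{\voting{(t+1)}}$, is a descendant of (or equal to) $\chain$. The overall structure will parallel that of \Cref{lem:keep-voting-tob-fast-conf}, but with a weight-based GHOST argument in place of the majority-fork-choice argument, and using \Cref{eq:sleepy-req} rather than \Cref{eq:pi-sleepiness}.

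First, I would observe that by synchrony (with $\GST=0$) and the gossip primitive, every \textsc{vote} message cast by a validator in $H_{\voting{t}}$ at round $\voting{t}$ is delivered by round $\voting{t}+\Delta = \fastconfirming{t}$, and therefore belongs to $\Vfrozen[\merging{t}]_i \subseteq \Vfrozen[\voting{(t+1)}]_i$, as well as to $\V^{\proposing{(t+1)}}_i$. Thus both relevant views contain, for every $v_j \in H_{\voting{t}}$, the slot-$t$ \textsc{vote} of $v_j$ for some chain $\chain_j \succeq \chain$.

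Next, I would analyze how the filter $\FIL_{\mathrm{rlmd}}(\cdot, t+1) = \FIL_{\mathrm{lmd}} \circ \FIL_{\eta\text{-exp}} \circ \FIL_{\mathrm{eq}}$ behaves on these views. For any $v_j \in H_{\voting{t}} \setminus A_{\voting{(t+1)}}$: $v_j$ is honest at both $\voting{t}$ and $\voting{(t+1)}$, so it has not equivocated, survives $\FIL_{\mathrm{eq}}$, and its latest non-expired \textsc{vote} in either view is precisely its slot-$t$ vote for $\chain_j \succeq \chain$. In the worst case for $\chain$, the adversary equivocates every $v_j \in H_{\voting{t}} \cap A_{\voting{(t+1)}}$ by forging a second slot-$t$ vote after corruption; this removes their support for $\chain$ but also deprives the adversary of those votes. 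The remaining validators that can contribute any surviving vote to the filtered view are contained in $A_{\voting{(t+1)}} \cup (H_{\voting{(t-\eta+1)},\voting{(t-1)}} \setminus H_{\voting{t}})$ (all others either are expired, equivocated, or not yet corrupted and thus only able to have contributed $\chain$-supporting votes already counted), each contributing at most one effective vote after $\FIL_{\mathrm{lmd}}$.

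The key counting step uses \Cref{eq:sleepy-req}. Let $N_{\chain}$ and $N_{\neg\chain}$ denote, respectively, the number of effective filtered votes whose chain extends $\chain$ and whose chain does not. Combining the bounds above gives
\begin{equation*}
N_{\chain} \;\geq\; |H_{\voting{t}} \setminus A_{\voting{(t+1)}}|, \qquad
N_{\neg\chain} \;\leq\; |A_{\voting{(t+1)}} \setminus H_{\voting{t}}| + |H_{\voting{(t-\eta+1)},\voting{(t-1)}} \setminus H_{\voting{t}} \setminus A_{\voting{(t+1)}}|.
\end{equation*}
A routine manipulation shows that $N_{\chain} > N_{\neg\chain}$ is equivalent to $|H_{\voting{t}}| > |A_{\voting{(t+1)}} \cup (H_{\voting{(t-\eta+1)},\voting{(t-1)}} \setminus H_{\voting{t}})|$, which is exactly \Cref{eq:sleepy-req}. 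At every ancestor $B$ of $\chain$ with multiple children, the subtree rooted at $B$'s child on the path to $\chain$ has weight at least $N_{\chain}$ (since every $\chain$-supporting vote is for a descendant of that child), while each sibling subtree has weight at most $N_{\neg\chain}$. Hence, at each such branching, \GHOST's $\arg\max$ strictly selects the child on the $\chain$-path, and by induction along the path from genesis to $\chain$ it outputs a block in $\chain$ or a descendant thereof. Applying this to $\V^{\proposing{(t+1)}}_i$ yields $\rlmdpropose{(t+1)}_i \succeq \chain$ (the $\lceil 1, t+1$ truncation is vacuous for $\chain$, which has slot $\leq t$), and applying it to $\Vfrozen[\voting{(t+1)}]_i$ yields $\rlmdvote{(t+1)}_i \succeq \chain$. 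The vote cast at \Cref{line:alg2-vote} of \Cref{alg:rlmd} is exactly $\rlmdvote{(t+1)}_i$, establishing the final claim.

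The main obstacle I anticipate is the bookkeeping for the filtered views: carefully showing that the worst-case adversarial strategy is indeed the equivocation-all strategy I described, and that no clever use of view-merge in the \textsc{propose} message (which can only add messages to $\Vfrozen[\voting{(t+1)}]_i$) can create new effective non-$\chain$-supporting votes beyond what $A_{\voting{(t+1)}} \cup (H_{\voting{(t-\eta+1)},\voting{(t-1)}} \setminus H_{\voting{t}})$ already accounts for. Once this bookkeeping is pinned down, the GHOST weight comparison is immediate from \Cref{eq:sleepy-req}.
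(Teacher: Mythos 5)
The paper's own ``proof'' of this lemma is simply a citation to Lemma 4 of the external \RLMDGHOST{} paper~\cite{rlmd}, so your attempt to reconstruct a self-contained argument is a genuinely different undertaking. Most of the scaffolding you set up --- synchrony of delivery, the filter decomposition, the GHOST weight comparison at each branch --- is the right shape. The problem is the step where you bound $N_{\neg\chain}$.

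Your bound $N_{\neg\chain} \leq |A_{\voting{(t+1)}} \setminus H_{\voting{t}}| + |(H_{\voting{(t-\eta+1)},\voting{(t-1)}} \setminus H_{\voting{t}}) \setminus A_{\voting{(t+1)}}|$ excludes $H_{\voting{t}} \cap A_{\voting{(t+1)}}$ on the grounds that the adversary's best move for those validators is to equivocate at slot $t$, which removes them from both sides via $\FIL_{\text{eq}}$. But that is not the adversary's best strategy. For a validator $v_j \in H_{\voting{t}} \cap A_{\voting{(t+1)}}$ that honestly voted for $\chain_j \succeq \chain$ at $\voting{t}$ and was then corrupted before $\voting{(t+1)}$, the adversary can instead forge a single $[\textsc{vote}, \chain', \cdot, t+1, v_j]$ message for some $\chain' \nsucceq \chain$ with $\chain'.p \leq t$, gossip it immediately, and never equivocate. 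This message is for a distinct slot, so it survives $\FIL_{\text{eq}}$; it is not expired; and under $\FIL_{\text{lmd}}$ it displaces $v_j$'s honest slot-$t$ vote as the latest. So $v_j$ contributes $+1$ to $N_{\neg\chain}$ and $-1$ to $N_{\chain}$, strictly better for the adversary than equivocation, which gives only $-1$ on $N_{\chain}$. Nothing in \Cref{alg:rlmd,alg:rlmd-fc} rejects a vote stamped with slot $t+1$ on the grounds that it was delivered before $\voting{(t+1)}$; and for $\rlmdpropose{(t+1)}_i$ the relevant view $\V_i^{\proposing{(t+1)}}$ is already at the start of slot $t+1$, so even a future-vote sanity check would not help. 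With this strategy the correct bound is $N_{\neg\chain} \leq |A_{\voting{(t+1)}} \cup (H_{\voting{(t-\eta+1)},\voting{(t-1)}} \setminus H_{\voting{t}})|$, and pairing it with $N_{\chain} \geq |H_{\voting{t}} \setminus A_{\voting{(t+1)}}|$ requires the stronger \Cref{eq:pi-sleepiness} to conclude $N_{\chain} > N_{\neg\chain}$ --- not \Cref{eq:sleepy-req}, which is exactly what the paper advertises as being \emph{weaker}. Your ``routine manipulation'' showing equivalence to \Cref{eq:sleepy-req} is algebraically correct only because it starts from the too-optimistic $N_{\neg\chain}$ bound. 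To repair the argument you would need to justify why the later-vote strategy is excluded (e.g., an explicit validity rule discarding votes with slot greater than the current slot at reception time, plus a separate argument for the proposer view), or else invoke the stronger constraint.
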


\begin{proof}
    See Lemma 4 of~\cite{rlmd}.
\end{proof}

\begin{lemma}
    \label{lem:one-fast-confirm-all-vote-fast-conf-rlmd}
    If an honest validator fast confirms a chain $\chain$ in slot $t$, then, for any slot $t'>t$ and validator $v_i \in H_{\voting{(t')}}$, $\rlmdpropose{t'}_i \succeq \chain$ and $\rlmdvote{t'}_i \succeq \chain$, which implies that, all validators in $H_{\voting{(t')}}$ cast a \textsc{vote} message for a chain extending $\chain$.
\end{lemma}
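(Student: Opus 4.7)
The plan is to proceed by induction on $t'$, mirroring the structure of the analogous \Cref{lem:one-fast-confirm-all-vote-fast-conf} for the \TOBSVD-based protocol.

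For the base case $t' = t+1$, I would fix any $v_i \in H_{\voting{(t+1)}}$. Fast confirmation at round $\fastconfirming{t}$ by some honest validator means there exist at least $\tfrac{2}{3}n$ slot-$t$ \textsc{vote} messages for chains extending $\chain$. By the joining protocol $v_i$ is awake from at least round $\voting{t}+\Delta = \fastconfirming{t}$, so synchrony together with best-effort gossip ensures all such slot-$t$ \textsc{vote} messages are in $\V_i$ by round $\merging{t}$, and therefore in $\Vfrozen_i$ at round $\voting{(t+1)}$ (the view-merge performed during the propose phase of slot $t+1$ only grows $\Vfrozen_i$). I then claim $\rlmdghost(\Vfrozen_i, \genesis, t+1) \succeq \chain$: since $f<\tfrac{n}{3}$, at least $\tfrac{2}{3}n - f > \tfrac{n}{3}$ voters of the quorum are non-equivocating, and because they have not yet voted in slot $t+1$ at the time of fork-choice computation their slot-$t$ \textsc{vote} survives $\FIL_{\text{rlmd}}$ and contributes weight to $\chain$'s branch at every ancestor. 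Combining this with \Cref{eq:sleepy-req} in a weight argument analogous to the one underlying \Cref{lem:keep-voting-rlmd-fast-conf} shows that the $\chain$-branch strictly dominates every alternative child at every ancestor, so GHOST returns an extension of $\chain$ and $v_i$ \textsc{vote}s accordingly. The same reasoning applied to $\V_i^{\proposing{(t+1)}}$ yields $\rlmdpropose{(t+1)}_i \succeq \chain$.

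For the inductive step $t' > t+1$, by the inductive hypothesis every $v_j \in H_{\voting{(t'-1)}}$ cast a \textsc{vote} message for a chain extending $\chain$ in slot $t'-1$. Applying \Cref{lem:keep-voting-rlmd-fast-conf} with $t'-1$ in the role of $t$ directly yields $\rlmdpropose{t'}_i \succeq \chain$ and $\rlmdvote{t'}_i \succeq \chain$ for every $v_i \in H_{\voting{(t')}}$, and hence that every such validator \textsc{vote}s for an extension of $\chain$.

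The main obstacle is the base-case weight comparison. In the \TOBSVD{} analog (\Cref{lem:one-fast-confirm-all-vote-fast-conf}) the fast-confirmed chain can simply be copied into $\chainfrozen_i$ via the view-merge on \Cref{algo:3sf-ga-setchainfrozen-to-chainc}, which by the definition of $\mfc$ forces the majority fork-choice to extend it; in \Cref{alg:rlmd} there is no such variable, so the argument must go through GHOST weights at every ancestor of $\chain$. The delicate point is that, unlike the premise of \Cref{lem:keep-voting-rlmd-fast-conf}, which assumes \emph{all} of $H_{\voting{t}}$ voted for extensions of $\chain$, here some honest validators in $H_{\voting{t}}$ may have voted for conflicting chains; I must show that their contribution to any alternative branch, added to Byzantine and earlier-slot voters bounded via \Cref{eq:sleepy-req}, still falls strictly below the non-equivocating supporters of $\chain$ guaranteed by the quorum.
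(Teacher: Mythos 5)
Your overall strategy matches the paper's: induction on $t'$, with the inductive step handled by \Cref{lem:keep-voting-rlmd-fast-conf}, exactly as the paper does. Where you diverge is the base case $t' = t+1$. The paper discharges it in a single stroke by citing Lemma~5 of~\cite{rlmd}, which is precisely the result that a chain fast confirmed in slot $t$ remains the output of the \rlmdghost{} fork-choice for every honest validator in slot $t+1$. You instead try to reconstruct that result from scratch via a GHOST weight comparison, and — to your credit — you explicitly flag the "delicate point" yourself. But flagging the gap is not closing it, so as written the base case is incomplete.

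The gap is real, not cosmetic. Your sketch appeals to a "weight argument analogous to the one underlying \Cref{lem:keep-voting-rlmd-fast-conf}," but the hypothesis of that lemma is that \emph{all} of $H_{\voting{t}}$ voted for extensions of $\chain$, whereas a fast confirmation only guarantees a $\frac{2}{3}n$ quorum that may include adversaries and may exclude some honest voters. Constraint~\eqref{eq:sleepy-req} compares $|H_{\voting{t}}|$ against adversaries and stragglers, not the $\frac{2}{3}n - f$ non-equivocating quorum members against the complement of the quorum; the two inequalities are not interchangeable. In the \TOBSVD{} analog (\Cref{lem:one-fast-confirm-all-vote-fast-conf}) the base case is made easy precisely because the fast-confirmed chain is stored in $\chainfrozen_i$ and $\mfc$ is anchored at it, side-stepping any weight comparison; \Cref{alg:rlmd} has no such anchor, which is exactly why [rlmd] proves the GHOST-weight version separately as its Lemma~5. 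The safe move — and what the paper does — is to cite that lemma for the base case rather than re-derive it inline.
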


\begin{proof}
  The proof follows the one for \Cref{lem:one-fast-confirm-all-vote-fast-conf} with the following changes only.
  \begin{enumerate}
    \item The base case follows from Lemma 5 of~\cite{rlmd},
    \item The inductive step follows from \Cref{lem:keep-voting-rlmd-fast-conf}.\qedhere
  \end{enumerate}
\end{proof}

\begin{lemma}
\label{lem:vote-proposal-fast-conf-rlmd}
Let $t$ be a slot with an honest proposer $v_p$ and assume that~$v_p$ casts a $[\textsc{propose}, \chain_p, \V_p, t, v_p]$ message. 
Then, for any slot $t'\geq t$, all validators in $H_{\voting{(t')}}$ cast a \textsc{vote} message for a chain extending {$\chain_p$}.
Additionally, for any slot $t'' > t$ and any validator $v_i\in H_{\voting{(t'')}}$, $\rlmdpropose{t''}_i\succeq \chain_p$ and $\rlmdvote{t''}_i\succeq \chain_p$.
\end{lemma}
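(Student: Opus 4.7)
The plan is to prove this by induction on $t'$, mirroring the structure used for the analogous \Cref{lem:vote-proposal-fast-conf} and piggybacking on \Cref{lem:keep-voting-rlmd-fast-conf} for the inductive step. Throughout, I would fix an honest proposer $v_p$ for slot $t$ that casts the message $[\textsc{propose}, \chain_p, \V_p, t, v_p]$, with $\chain_p = \mathsf{Extend}(\rlmdghost(\V_p,\genesis,t)^{\lceil 1,t}, t)$ as per \Cref{line:alg2-fc1,line:send-propose-rlmd}.

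For the base case $t' = t$, I would show that every $v_i \in H_{\voting{t}}$ casts a \textsc{vote} for $\chain_p$. The key observation is the view-merge step: upon receiving the \textsc{propose} message in $[4\Delta t, 4\Delta t + \Delta]$, $v_i$ sets $\Vfrozen_i \gets \Vfrozen_i \cup \V_p$, so that at the vote round $\Vfrozen_i \supseteq \V_p \cup \{\chain_p\}$. I would then argue that $\rlmdghost(\Vfrozen_i,\genesis,t) = \chain_p$. The argument follows the standard \RLMDGHOST{} view-merge reasoning (Lemma 5 of \cite{rlmd}): any \textsc{vote} in $\Vfrozen_i \setminus \V_p$ that is not already reflected in $\V_p$ is either expired (hence pruned by $\FIL_{\eta\text{-exp}}$), from an equivocator (hence pruned by $\FIL_{\mathrm{eq}}$), or from an honest validator in a prior slot which by the inductive structure of the argument cast a vote consistent with $\chain_p$'s ancestry. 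Thus the filtered view, when fed to $\ghost$, traces the same path as the proposer's computation and ends at $\chain_p$. By \Cref{line:alg2-vote}, $v_i$ casts a \textsc{vote} for $\chain_p$.

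For the inductive step $t'' > t$, I would apply \Cref{lem:keep-voting-rlmd-fast-conf} directly. Assuming the statement for $t'' - 1$, every validator in $H_{\voting{t''-1}}$ cast a \textsc{vote} for some chain extending $\chain_p$. \Cref{lem:keep-voting-rlmd-fast-conf} then gives $\rlmdpropose{t''}_i \succeq \chain_p$ and $\rlmdvote{t''}_i \succeq \chain_p$ for every $v_i \in H_{\voting{t''}}$, which by \Cref{line:alg2-vote} implies each such validator casts a \textsc{vote} for a chain extending $\chain_p$, completing the induction.

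The main obstacle is the base case, where one must rule out that stray votes accumulated in $\Vfrozen_i$ but not in $\V_p$ can divert the fork-choice away from the proposer's intended branch. This is precisely the view-merge property of \RLMDGHOST, and I expect the cleanest approach is to reduce it to the corresponding lemma in \cite{rlmd} rather than reprove it from scratch; the reduction should be immediate once we note that the $\FIL_{\mathrm{rlmd}}$ pipeline renders the filtered frozen view identical (for purposes of $\ghost$) to the filtered proposer's view augmented with $\chain_p$. The inductive step is essentially free given \Cref{lem:keep-voting-rlmd-fast-conf}.
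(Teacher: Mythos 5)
Your proposal is correct and follows essentially the same route as the paper: induction on the slot, with the base case delegated to the view-merge/reorg-resilience result of~\cite{rlmd} and the inductive step supplied by \Cref{lem:keep-voting-rlmd-fast-conf}. The only minor discrepancy is the citation: the paper invokes Lemma~1 of~\cite{rlmd} for the base case (honest proposal), whereas Lemma~5 of~\cite{rlmd} is the one used elsewhere for fast confirmations.
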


\begin{proof}
  The proof follows the one for \Cref{lem:vote-proposal-fast-conf} with the following changes only.
  \begin{enumerate}
    \item The base case follows from Lemma 1 of~\cite{rlmd},
    \item The inductive step follows from \Cref{lem:keep-voting-rlmd-fast-conf}.\qedhere
  \end{enumerate}
\end{proof}

\begin{lemma}
\label{lem:rlmd-confirmed-always-canonical}
    Let $r_i$ be any round and $r_j$ be any round such that $r_j\ge r_i$ and $r_j \in \{\proposing{\slot(r_j)}, \voting{\slot(r_j)}\}$.
    Then, for any validator~$v_i$ honest in round $r_i$ and any validator $v_j \in H_{\slot(r_j)}$, $\Chain^{r_i}_i \preceq \chain^{r_j}_j$.
\end{lemma}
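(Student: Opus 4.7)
The plan is to mirror the proof of \Cref{lem:ga-confirmed-always-canonical} almost verbatim, substituting the $\rlmdghost$ fork-choice for $\mfc$ and invoking the RLMD-specific versions of the supporting lemmas. Specifically, I will proceed by contradiction: let $r_i$ be the smallest round such that there exist a validator $v_i$ honest in $r_i$, a round $r_j\ge r_i$ with $r_j\in\{\proposing{\slot(r_j)},\voting{\slot(r_j)}\}$, and a validator $v_j\in H_{\slot(r_j)}$ with $\Chain^{r_i}_i\npreceq \chain^{r_j}_j$. Minimality of $r_i$ forces $\Chain^{r_i-1}_i\neq\Chain^{r_i}_i$, hence $v_i\in H_{r_i}$ and $r_i$ is either a vote round or a fast-confirmation round of slot $t_i:=\slot(r_i)$.

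If $r_i$ is a vote round, then by \Cref{line:alg2-vote-chainava-rlmd} of \Cref{alg:rlmd} we have $\Chain^{r_i}_i\succeq (\rlmdvote{t_i}_i)^{\lceil\kappa}$. The case $\Chain^{r_i}_i\succ (\rlmdvote{t_i}_i)^{\lceil\kappa}$ gives $\Chain^{r_i}_i=\Chain^{r_i-1}_i$, contradicting the minimality of $r_i$. In the case $\Chain^{r_i}_i=(\rlmdvote{t_i}_i)^{\lceil\kappa}$, Lemma 2 of~\cite{rlmd} gives, with overwhelming probability, an honest proposer $v_p$ in some slot $t_p\in[t_i-\kappa,t_i)$ producing chain $\chain_p$; by \Cref{lem:vote-proposal-fast-conf-rlmd}, $\rlmdvote{t_i}_i\succeq\chain_p$, and since $t_p\ge t_i-\kappa$ we get $\chain_p\succeq(\rlmdvote{t_i}_i)^{\lceil\kappa}=\Chain^{r_i}_i$. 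Applying \Cref{lem:vote-proposal-fast-conf-rlmd} again at $r_j$ (which satisfies $\slot(r_j)>t_p$) yields $\chain^{r_j}_j\succeq\chain_p\succeq\Chain^{r_i}_i$, a contradiction.

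If $r_i$ is a fast-confirmation round, minimality forces $\Chain^{r_i}_i=\texttt{fastconfirmsimple}(\V^{r_i}_i,t_i)$ with a supporting quorum, so some honest validator fast-confirms $\Chain^{r_i}_i$ in slot $t_i$. Then \Cref{lem:one-fast-confirm-all-vote-fast-conf-rlmd} applied at $r_j$ gives $\chain^{r_j}_j\succeq\Chain^{r_i}_i$, again a contradiction.

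The structural mapping to \Cref{lem:ga-confirmed-always-canonical} is direct, so the main thing to verify carefully is that the RLMD analogues of the propagation lemmas, namely \Cref{lem:keep-voting-rlmd-fast-conf,lem:vote-proposal-fast-conf-rlmd,lem:one-fast-confirm-all-vote-fast-conf-rlmd}, apply at the round $r_j$ even when $r_j$ is only a propose round rather than a vote round; this should be immediate since those lemmas produce statements about $\rlmdpropose{t}$ and $\rlmdvote{t}$ for every active honest validator. I do not anticipate any genuinely new obstacle beyond matching indices, since all the heavy lifting has already been done by the supporting lemmas.
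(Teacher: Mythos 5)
Your proposal is correct and follows essentially the same route as the paper: the paper's proof of this lemma is literally "follow the proof of \Cref{lem:ga-confirmed-always-canonical}" with $\mfc$ replaced by $\rlmdghost$, \Cref{line:algga-no-ffg-vote-chainava} replaced by \Cref{line:alg2-vote-chainava-rlmd}, and \Cref{lem:vote-proposal-fast-conf,lem:one-fast-confirm-all-vote-fast-conf} replaced by \Cref{lem:vote-proposal-fast-conf-rlmd,lem:one-fast-confirm-all-vote-fast-conf-rlmd}, which is exactly the substitution you carried out, with the same contradiction/minimality argument and the same case split on vote versus fast-confirmation rounds.
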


\begin{proof}
  The proof follows the one for \Cref{lem:ga-confirmed-always-canonical} with the following changes only.
  \begin{enumerate}
    \item {Replace $\mfc$ with $\rlmd$,}
    \item Replace \Cref{line:algga-no-ffg-vote-chainava}  of \Cref{algo:prob-ga-fast} with \Cref{line:alg2-vote-chainava-rlmd} of \Cref{alg:rlmd},
    \item Replace \Cref{lem:vote-proposal-fast-conf} with \Cref{lem:vote-proposal-fast-conf-rlmd}, and
    \item Replace \Cref{lem:one-fast-confirm-all-vote-fast-conf} with \Cref{lem:one-fast-confirm-all-vote-fast-conf-rlmd}.\qedhere
  \end{enumerate}
\end{proof}

\begin{theorem}[Reorg Resilience]\label{thm:reorg-res-rlmd}
  \Cref{alg:rlmd} is $\eta$-reorg-resilient.
\end{theorem}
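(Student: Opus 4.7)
The plan is to mirror the proof of \Cref{thm:reorg-res-prop-tob} very closely, since the analysis pattern of \Cref{sec:analysis-rlmd} has been explicitly set up to parallel that of \Cref{sec:ga-prob-analysis}. In particular, the two workhorse lemmas used in the $\mfc$-based proof have already been re-proved in the $\rlmd$ setting: \Cref{lem:vote-proposal-fast-conf-rlmd} (chains proposed by honest proposers become dominant in the $\rlmdghost$ fork-choice of all subsequently active honest validators) and \Cref{lem:rlmd-confirmed-always-canonical} (the confirmed chain of any honest validator is a prefix of the $\rlmdghost$ fork-choice output of any honest validator active at a later propose or vote round). So the proof is essentially a three-line chaining argument.

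Concretely, I would proceed as follows. Fix a slot $t_p$ whose proposer $v_p$ is honest at propose time and \textsc{propose}s a chain $\chain_p$, and fix any round $r_i$ and any validator $v_i$ honest in round $r_i$. Pick some slot $t_j$ strictly greater than $\max(t_p, \slot(r_i))$. Then use \Cref{eq:async-condition} (which applies since the protocol of \Cref{sec:rlmd-based} is assumed to satisfy \Cref{eq:sleepy-req,eq:async-condition,eq:async-condition3,eq:async-condition2}) to guarantee that $H_{\voting{t_j}}$ is non-empty, and pick any $v_j \in H_{\voting{t_j}}$. By \Cref{lem:vote-proposal-fast-conf-rlmd} applied at slot $t_j > t_p$, we obtain $\rlmdvote{t_j}_j \succeq \chain_p$. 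By \Cref{lem:rlmd-confirmed-always-canonical} applied with $r_j = \voting{t_j} \geq r_i$, we obtain $\rlmdvote{t_j}_j \succeq \Chain^{r_i}_i$. Since two chains that are both prefixes of $\rlmdvote{t_j}_j$ cannot conflict, we conclude that $\chain_p$ does not conflict with $\Chain^{r_i}_i$, which is exactly the conclusion of \Cref{def:reorg-resilience}.

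There is really no obstacle here: the hard work was already done in stating and proving \Cref{lem:vote-proposal-fast-conf-rlmd,lem:rlmd-confirmed-always-canonical}, and the non-emptiness of $H_{\voting{t_j}}$ for some $t_j$ exceeding both $t_p$ and $\slot(r_i)$ follows directly from \Cref{eq:sleepy-req} (whose right-hand side must be nonnegative, hence $|H_{\voting{t}}|>0$ for every $t$ after $\GST$). Thus the proof is essentially identical to that of \Cref{thm:reorg-res-prop-tob}, with the only substantive substitutions being $\mfc \leadsto \rlmd$, \Cref{lem:vote-proposal-fast-conf} $\leadsto$ \Cref{lem:vote-proposal-fast-conf-rlmd}, and \Cref{lem:ga-confirmed-always-canonical} $\leadsto$ \Cref{lem:rlmd-confirmed-always-canonical}. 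It can therefore be written as a short three-sentence argument, or even as a pointer noting that the proof of \Cref{thm:reorg-res-prop-tob} applies verbatim modulo these replacements.
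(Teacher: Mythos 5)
Your proposal is correct and matches the paper's own proof, which likewise reduces Theorem~\ref{thm:reorg-res-rlmd} to the argument of Theorem~\ref{thm:reorg-res-prop-tob} with exactly the substitutions you list ($\mfc \to \rlmd$, Lemma~\ref{lem:vote-proposal-fast-conf} $\to$ Lemma~\ref{lem:vote-proposal-fast-conf-rlmd}, Lemma~\ref{lem:ga-confirmed-always-canonical} $\to$ Lemma~\ref{lem:rlmd-confirmed-always-canonical}). Your explicit chaining argument is just the spelled-out version of that same reduction, so there is nothing to add.
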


\begin{proof}
  {
  The proof follows the one for \Cref{thm:reorg-res-prop-tob} with the following changes only.
  \begin{enumerate}
    \item Replace $\mfc$ with $\rlmd$,
    \item Replace \Cref{lem:vote-proposal-fast-conf} with \Cref{lem:vote-proposal-fast-conf-rlmd}, and
    \item Replace \Cref{lem:ga-confirmed-always-canonical} with \Cref{lem:rlmd-confirmed-always-canonical}.\qedhere
  \end{enumerate}
  }
\end{proof}

{
\begin{lemma}\label{lem:rlmd-proposer-shorter-than-t}
  For any slot $t \geq 1$ and  validator $v_i \in H_\proposing{t}$, $\rlmdpropose{t}_i.p < t$.
\end{lemma}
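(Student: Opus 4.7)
The proof plan is essentially a direct unfolding of the definitions. Recall that the lemma's target quantity is defined as
\[
\rlmdpropose{t}_i := \rlmdghost(\V^{\proposing{t}}_i, \genesis, t)^{\lceil 1, t},
\]
so the $1$-deep prefix operator is baked into the definition of $\rlmdpropose{t}_i$. By the earlier definition of the $\kappa$-deep prefix at slot $t$ (``the longest prefix $\chain'$ of $\chain$ such that $\chain'.p \leq t - \kappa$''), this immediately forces $\rlmdpropose{t}_i.p \leq t - 1 < t$, provided that a valid prefix with proposal slot $\leq t-1$ actually exists in $\rlmdghost(\V^{\proposing{t}}_i, \genesis, t)$.

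Hence the only thing I would need to check is that the deep prefix is well-defined, i.e., that the set of prefixes with proposal slot at most $t-1$ is nonempty. This is immediate: the output of $\rlmdghost$ always starts from $\genesis$ (the function is invoked with $B_{\mathrm{start}}=\genesis$, and the \textsc{ghost}-style traversal in \Cref{alg:rlmd-fc} sets $B \gets \genesis$ initially and only ever walks down the tree), so $\genesis$ is always a prefix of the returned chain. Since $\genesis.p = -1 \leq t - 1$ for every $t \geq 1$, the collection of prefixes whose proposal slot is at most $t-1$ is nonempty, and the $1$-deep prefix is well-defined.

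Putting these two observations together, $\rlmdpropose{t}_i.p \leq t - 1 < t$, which concludes the proof. Note that this is markedly simpler than the analogous \Cref{lem:ga-mfc-proposer-shorter-than-t} for the $\mfc$-based protocol, where an inductive argument via the Validity property of Graded Agreement was required because $\mfcpropose{t}_i$ was not wrapped in a deep-prefix operator. Here, the single line ``$\rlmdghost(\V_i, \genesis, t)^{\lceil 1, t}$'' on \Cref{line:alg2-fc1} of \Cref{alg:rlmd} is precisely what makes the claim trivial and lets us bypass any induction. The only mild subtlety to guard against is a possible misreading of the deep-prefix convention, so I would explicitly cite the definition of $\chain^{\lceil\kappa,t}$ from \Cref{sec:model} in the write-up to keep the argument self-contained.
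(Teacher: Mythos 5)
Your proof is correct and follows the same route as the paper, which simply declares the claim ``Obvious by the definition of $\rlmdpropose{t}_i$'' — your unpacking of the $1$-deep prefix operator and the well-definedness via $\genesis.p = -1$ is exactly the reasoning that definition appeals to.
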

\begin{proof}
  Obvious by the definition of $\rlmdpropose{t}_i$.
\end{proof}
}

\begin{theorem}
    \label{thm:dyn-avail-rlmd}
\Cref{alg:rlmd} is $\eta$-dynamically-available.
\end{theorem}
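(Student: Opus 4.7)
The plan is to follow the exact same strategy used in the proof of \Cref{thm:dyn-avail-fast-conf-tob}, substituting the $\rlmd$-specific lemmas for their $\mfc$-based counterparts. The theorem requires establishing both $\eta$ Liveness (with confirmation time $\Tconf = O(\kappa)$) and $\eta$ Safety of the confirmed chain $\Chain$ output by \Cref{alg:rlmd}.

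For $\eta$ Liveness, I would target a confirmation time of $\Tconf = 8\kappa\Delta + \Delta$. Pick a round $r$ at slot $t$, a round $r_i \geq r + \Tconf$ so that $r_i \geq \voting{(t+2\kappa)}$, and a validator $v_i \in H_{r_i}$. The joining-protocol argument gives that the first active round of $v_i$ at or after $\voting{(t+2\kappa)}$ is a vote round. Invoking Lemma 2 of~\cite{rlmd}, with overwhelming probability there is a slot $t_p \in [t+1, t+\kappa]$ with an honest proposer $v_p$ \textsc{propose}ing some chain $\chain_p$. By \Cref{lem:rlmd-proposer-shorter-than-t} and \Cref{prop:extend}, $\chain_p$ extends the fork-choice output strictly and contains every transaction in $\txpool^{\proposing{t_p}} \supseteq \txpool^r$. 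Then \Cref{lem:vote-proposal-fast-conf-rlmd} yields $\rlmdvote{t_i}_i \succeq \chain_p$, and \Cref{line:alg2-vote-chainava-rlmd} gives $\Chain^{r_i}_i \succeq (\rlmdvote{t_i}_i)^{\lceil\kappa, t_i}$. Since $t_p \leq t + \kappa \leq t_i - \kappa$, we obtain $\chain_p \preceq (\rlmdvote{t_i}_i)^{\lceil\kappa, t_i} \preceq \Chain^{r_i}_i$, so $\Chain^{r_i}_i$ includes all transactions in $\txpool^r$.

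I then need to confirm that interference from fast confirmation cannot break this conclusion: if $r_i$ happens to be a fast-confirmation round and $\Chain^{r_i}_i$ is reset via \Cref{line:alg2-bcand}, the chain to which it is set still extends $\chain_p$, because by \Cref{lem:vote-proposal-fast-conf-rlmd} all honest \textsc{vote}s in slot $t_i$ are for chains extending $\chain_p$ and $f < n/3$ precludes any conflicting fast-confirmation candidate.

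For $\eta$ Safety, the argument is a direct transcription of the safety part of \Cref{thm:dyn-avail-fast-conf-tob}. Given any two rounds $r_i, r_j$ and validators $v_i, v_j$ honest at those rounds, pick any slot $t_k > \max(\slot(r_i), \slot(r_j))$. Constraint~\eqref{eq:sleepy-req} guarantees $H_{\voting{t_k}}$ is non-empty, so take $v_k \in H_{\voting{t_k}}$. Applying \Cref{lem:rlmd-confirmed-always-canonical} twice yields $\Chain^{r_i}_i \preceq \rlmdvote{t_k}_k$ and $\Chain^{r_j}_j \preceq \rlmdvote{t_k}_k$, so $\Chain^{r_i}_i$ and $\Chain^{r_j}_j$ cannot conflict.

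There is no genuine obstacle here: all the substantive work (the fork-choice stability under sleepiness, the proposer-extends-honest-voting argument, and the confirmed-chain-is-canonical lemma) has already been established in \Cref{lem:keep-voting-rlmd-fast-conf}, \Cref{lem:vote-proposal-fast-conf-rlmd}, \Cref{lem:one-fast-confirm-all-vote-fast-conf-rlmd}, and \Cref{lem:rlmd-confirmed-always-canonical}. The only minor bookkeeping subtlety is ensuring the fast-confirmation branch does not violate liveness when $r_i$ happens to coincide with $\fastconfirming{\slot(r_i)}$, which is handled exactly as in the $\mfc$-based proof.
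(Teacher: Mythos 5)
Your proposal is correct and follows essentially the same route as the paper, which proves this theorem precisely by repeating the proof of \Cref{thm:dyn-avail-fast-conf-tob} with $\mfc$ replaced by $\rlmd$, \Cref{lem:vote-proposal-fast-conf} by \Cref{lem:vote-proposal-fast-conf-rlmd}, \Cref{lem:ga-confirmed-always-canonical} by \Cref{lem:rlmd-confirmed-always-canonical}, and \Cref{lem:ga-mfc-proposer-shorter-than-t} by \Cref{lem:rlmd-proposer-shorter-than-t}; your explicit write-out of the liveness, fast-confirmation, and safety steps matches that argument.
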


\begin{proof}
  {
    The proof follows the one for \Cref{thm:dyn-avail-fast-conf-tob} with the following changes only.
    \begin{enumerate}
      \item Replace $\mfc$ with $\rlmd$,
      \item Replace \Cref{algo:prob-ga-fast} with \Cref{alg:rlmd},
      \item Replace \Cref{lem:vote-proposal-fast-conf} with \Cref{lem:vote-proposal-fast-conf-rlmd},
      \item Replace \Cref{lem:ga-confirmed-always-canonical} with \Cref{lem:rlmd-confirmed-always-canonical}, and
      \item {Replace \Cref{lem:ga-mfc-proposer-shorter-than-t} with \Cref{lem:rlmd-proposer-shorter-than-t}.}\qedhere
    \end{enumerate}
  }
\end{proof}

{
\begin{lemma}
\label{thm:fast-liveness-rlmd}
Take a slot $t$ in which \(|H_{\voting{t}}| \geq \frac{2}{3}n\).
If in slot $t$ an honest validator sends a \textsc{propose} message for chain $\chain_p$, then, for any validator $ v_i \in H_{\fastconfirming{t}}$, $\chainava^{\fastconfirming{t}}_i \succeq \chain_p$.
\end{lemma}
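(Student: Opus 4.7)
The plan is to mirror almost verbatim the proof of \Cref{thm:fast-liveness-tob} (and its FFG-variant \Cref{thm:fast-liveness-modified-tob}), with the only substitutions being the \RLMDGHOST-specific fast-confirmation function and the lemma that guarantees honest validators vote for an honestly proposed chain. The whole argument is a short chain of three implications: honest proposal $\Rightarrow$ quorum of honest \textsc{vote}s for $\chain_p$ in slot $t$ $\Rightarrow$ every active validator at round $\fastconfirming{t}$ sees that quorum $\Rightarrow$ the fast-confirmation phase sets the output chain to an extension of $\chain_p$.

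First I would apply \Cref{lem:vote-proposal-fast-conf-rlmd} to the honest proposer $v_p$ of slot $t$: every validator in $H_{\voting{t}}$ casts a \textsc{vote} message $[\textsc{vote},\chain',\cdot,t,v_j]$ with $\chain'\succeq \chain_p$. Because $|H_{\voting{t}}|\geq\frac{2}{3}n$, this already produces a quorum of votes for descendants of $\chain_p$. Next, by the synchrony assumption and the joining protocol (which guarantees that any validator in $H_{\fastconfirming{t}}$ was awake at round $\voting{t}$), every $v_i\in H_{\fastconfirming{t}}$ has all these votes in its view $\V_i^{\fastconfirming{t}}$ by round $\fastconfirming{t}=\voting{t}+\Delta$. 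Therefore $\chain_p\in\fastcands$ in the evaluation of $\texttt{fastconfirmsimple}(\V_i^{\fastconfirming{t}},t)$.

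Finally I would argue that $\max(\fastcands)\succeq \chain_p$. Since we assume $f<\frac{n}{3}$ and $|H_{\voting{t}}|\geq\frac{2}{3}n$, two conflicting chains cannot both attract a quorum of slot-$t$ votes (their honest voter sets would have to intersect, contradicting that honest validators vote only once per slot). Thus every chain in $\fastcands$ is comparable to $\chain_p$, and in particular the maximum is either $\chain_p$ or a strict extension of it; either way $\fastcand\succeq\chain_p$. The \textsc{Fast Confirm} phase of \Cref{alg:rlmd} then assigns $\Chain_i\gets\fastcand$, yielding the claimed bound (after the notational identification of $\Chain_i$ with $\chainava_i$ used in the lemma statement).

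There is no real obstacle here — the only subtlety is the justification that $\max$ does pick an extension of $\chain_p$ rather than some sibling of it, which is handled by the quorum-intersection argument above. Everything else is a direct transcription of the TOB-based proof, with \Cref{lem:vote-proposal-fast-conf} replaced by \Cref{lem:vote-proposal-fast-conf-rlmd} and the lines of \Cref{algo:prob-ga-fast} replaced by the corresponding \textsc{Fast Confirm} block of \Cref{alg:rlmd}.
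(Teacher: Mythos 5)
Your proof is correct and follows essentially the same route as the paper, which simply replays the proof of \Cref{thm:fast-liveness-tob} (honest proposal $\Rightarrow$ quorum of slot-$t$ \textsc{vote}s extending $\chain_p$ by \Cref{lem:vote-proposal-fast-conf-rlmd} $\Rightarrow$ every validator active at $\fastconfirming{t}$ sees the quorum $\Rightarrow$ the \textsc{Fast Confirm} code yields a chain extending $\chain_p$). Your explicit quorum-intersection argument that $\max(\fastcands)\succeq\chain_p$ only spells out a step the paper leaves implicit (it invokes the same ``no conflicting quorums when $f<\frac{n}{3}$'' fact elsewhere), and your remark on the $\Chain_i$/$\chainava_i$ notation matches the paper's own usage.
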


\begin{proof}
    The proof follows the one for \Cref{thm:fast-liveness-tob} with the only change {being replacing \Cref{lem:vote-proposal-fast-conf} with \Cref{lem:vote-proposal-fast-conf-rlmd}}.
\end{proof}
}

\paragraph{Asynchrony Resilience.}
Here, we demonstrate that \Cref{alg:rlmd} also provides Asynchrony Reorg Resilience and Asynchrony Safety Resilience, as defined in Section~\ref{sec:security}.

\begin{lemma}
  \label{lem:asyn-induction-rlmd}
  Assume $\pi > 0$ and
    let $t$ be any slot in $[t_a, t_a + \pi]$.
    Assume that in any slot $t' \in [t_a,t]$, all validators in $W_{\voting{t'}}$ cast \textsc{vote} messages for chains extending chain $\chain$.
    Then, in slot $t+1$, all validators in $W_\voting{(t+1)}$ also cast \textsc{vote} messages for chains extending chain $\chain$.
\end{lemma}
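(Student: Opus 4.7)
The plan is to mirror the proof strategy of \Cref{lem:asyn-induction} (the \mfc-based analog), adapting the weight-based argument to the \rlmdghost{} fork-choice. The three crucial ingredients are the same: first, the expiration period is $\eta = \pi + 2$, so since $t+1 - \eta \leq (t_a+\pi+1) - (\pi+2) < t_a$, \textsc{vote} messages from slot $t_a$ are still unexpired when any validator computes $\rlmdvote{(t+1)}$; second, \Cref{eq:async-condition3} guarantees that every validator in $H_{\voting{(t_a)}} \setminus A_{\voting{(t_a+1)}}$ was awake at round $\voting{t_a}+\Delta$, so all their slot-$t_a$ votes are mutually delivered; third, \Cref{eq:async-condition} bounds how many unexpired slot-$k$ votes, for $k\in [t+1-\eta, t]$, can come from validators outside $H_{\voting{(t_a)}}$ or from adversarial validators.

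First I would fix an arbitrary $v_i \in W_{\voting{(t+1)}} = H_{\voting{(t+1)}}\cap H_{\voting{(t_a)}}$ and examine $\FIL_{\text{rlmd}}(\V_i^{\text{frozen},\voting{(t+1)}}, t+1)$. Take any $v_a \in H_{\voting{(t_a)}} \setminus A_{\voting{(t+1)}}$. By the Lemma's hypothesis applied inductively to every slot $t' \in [t_a, t]$ during which $v_a$ was active (noting that $v_a \in W_{\voting{(t')}}$ in this range because $v_a \in H_{\voting{(t_a)}}$), every \textsc{vote} $v_a$ cast in $[t_a, t]$ is for a chain extending $\chain$; and $v_a$'s slot-$t_a$ vote is in $v_i$'s frozen view thanks to \Cref{eq:async-condition3} and the joining protocol. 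Therefore the latest unexpired non-equivocating \textsc{vote} of every such $v_a$ present in $\FIL_{\text{rlmd}}(\V_i^{\text{frozen},\voting{(t+1)}},t+1)$ is for a chain extending $\chain$. By \Cref{eq:async-condition}, the number of such $v_a$ strictly exceeds the total number of all other validators whose unexpired latest non-equivocating votes can appear in the filtered view (namely $A_{\voting{(t+1)}}$ together with $H_{\voting{(t-\eta+1)},\voting{(t-1)}}\setminus H_{\voting{t_a}}$).

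The key remaining step is to promote this strict majority of weight supporting extensions of $\chain$ into the conclusion that $\ghost(\FIL_{\text{rlmd}}(\V_i^{\text{frozen},\voting{(t+1)}},t+1), \genesis, t+1) \succeq \chain$. This is the standard \ghost{} argument: at every ancestor block $B$ of $\chain$ on the walk from $\genesis$, the weight $w(B_\chain, \V)$ of the child $B_\chain$ of $B$ lying on $\chain$ includes all of the votes in the above strict majority, while the weight of any sibling child is bounded by the complementary minority, so $B_\chain$ is selected at \Cref{line:rlmd-argmax}. Hence $\rlmdvote{(t+1)}_i \succeq \chain$, and by \Cref{line:alg2-vote} $v_i$ casts a \textsc{vote} message for a chain extending $\chain$ in slot $t+1$.

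The main obstacle will be the bookkeeping for the set of validators whose votes survive the $\FIL_{\text{rlmd}}$ filter: one must verify that none of the $v_a \in H_{\voting{(t_a)}} \setminus A_{\voting{(t+1)}}$ are filtered out as equivocators (honest validators do not equivocate by Constraint~\ref{prop:never-slashed-1}, valid for their honest rounds), that their latest vote in $\V_i^{\text{frozen},\voting{(t+1)}}$ lies in the unexpired window (which requires identifying the latest slot $\leq t$ in which each $v_a$ was active, using the Lemma's hypothesis), and that $\eta = \pi + 2$ together with \Cref{eq:async-condition} indeed yields the strict majority after the filter is applied. Once this accounting is in place, the \ghost{} selection argument is immediate.
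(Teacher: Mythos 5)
Your proposal is correct and follows essentially the same route as the paper: the paper's proof simply invokes the argument of \Cref{lem:asyn-induction} and observes that the strict-majority condition $\left|\left(\Vfrozen_i\right)^{\chain,t+1}\cap \V_i^{\chain,t+1}\right| > \frac{\left|\mathsf{S}(\V_i,t+1)\right|}{2}$ established there is stronger than what $\rlmdghost$ needs to output an extension of $\chain$, which is exactly the $\ghost$-walk step you spell out explicitly. (Minor note: honest non-equivocation in \Cref{alg:rlmd} follows directly from each honest validator casting a single \textsc{vote} per slot, not from \Cref{prop:never-slashed-1}, which concerns \textsc{ffg-vote}s.)
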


\begin{proof}
The proof follows the same reasoning of \Cref{lem:asyn-induction}.
\sloppy{{Specifically, note that the condition $\left|\left(\Vfrozen_i\right)^{\chain,t+1}\cap \V_i^{\chain,t+1} \right|> \frac{\left|\mathsf{S}(\V_i,t+1)\right|}{2}$ proved in the proof of \Cref{lem:asyn-induction} is stronger than the condition need for $\rlmdghost(\V_i,\genesis,t+1)$ to output a chain extending $\chain$.}}
\end{proof}

\begin{lemma}\label{lem:asyn-induction2-rlmd}
  {Assume $\pi > 0$ and}
  take a slot $t \leq t_a$ such that, in slot $t$, any validator in $H_{\voting{t}}$ casts a \textsc{vote} message for a chain extending $\chain$.
  Then, for any slot $t_i \geq t$, any validator in $W_{\voting{t_i}}$ casts a \textsc{vote} message for a chain extending $\chain$.
\end{lemma}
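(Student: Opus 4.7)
The plan is to mirror the proof of \Cref{lem:asyn-induction2} essentially verbatim, substituting the RLMD-specific lemmas for their \mfc-based counterparts. The argument is a straightforward induction on $t_i$, where the ingredients come from the three "keep voting" results already established for the RLMD protocol: \Cref{lem:keep-voting-rlmd-fast-conf} (which handles synchronous slots), \Cref{lem:asyn-induction-rlmd} (which propagates votes through the short asynchronous window), and Constraint~\eqref{eq:async-condition} (which bounds the influence of validators awakening during the asynchrony).

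First, for the base case $t_i \in [t, t_a]$, since the whole interval lies before the asynchronous period starts we have $W_{\voting{t_j}} = H_{\voting{t_j}}$ throughout, so I would just iterate \Cref{lem:keep-voting-rlmd-fast-conf} from slot $t$ up to slot $t_i$ to conclude that every honest validator active at $\voting{t_i}$ \textsc{vote}s for a chain extending $\chain$. For the inductive step I would split into the same three sub-cases as \Cref{lem:asyn-induction2}. Case 1, $t_i \in [t_a+1, t_a+\pi+1]$, is immediate from \Cref{lem:asyn-induction-rlmd}. Case 3, $t_i \geq t_a+\pi+3$, is again immediate from \Cref{lem:keep-voting-rlmd-fast-conf} since $W_{\voting{t_j}} = H_{\voting{t_j}}$ once we are past the asynchronous window and its one-slot buffer.

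The only non-trivial step is Case 2, $t_i = t_a+\pi+2$, which is the slot immediately after the asynchrony-plus-buffer interval. I would reproduce the argument of \Cref{lem:asyn-induction2}: by the inductive hypothesis, every validator in $W_{\voting{(t_a+\pi+1)}}$ \textsc{vote}d for a chain extending $\chain$ in slot $t_a+\pi+1$; since $\voting{(t_a+\pi+1)}$ lies in a synchronous slot these votes are delivered by $\voting{(t_a+\pi+2)}$ to every $v_i \in W_{\voting{(t_a+\pi+2)}} = H_{\voting{(t_a+\pi+2)}}$; since $\eta = \pi+2$, votes from slot $t_a$ are still not expired at slot $t_a+\pi+2$, so by \Cref{eq:async-condition} (invoked at $t' = t_a+\pi+2$) the honest, unexpired, non-equivocating support for extensions of $\chain$ strictly dominates any support the adversary can muster for conflicting chains. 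In the RLMD setting this dominance in weight is exactly what $\rlmdghost$ requires to keep its \Cref{line:rlmd-argmax} selection inside the subtree rooted at $\chain$, so $\rlmdvote{t_i}_i \succeq \chain$ and hence $v_i$'s \textsc{vote} in slot $t_i$ extends $\chain$.

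The main obstacle, as in the \mfc-based analogue, is verifying that the weight comparison used by $\rlmdghost$ in Case 2 is governed by the same inequality that Constraint~\eqref{eq:async-condition} provides. The observation already noted in the proof of \Cref{lem:asyn-induction-rlmd}, that the majority condition needed for $\mfc$ is strictly stronger than the weight-majority condition needed for \ghost{} on any subtree, makes this essentially automatic, so I expect the Case~2 step to go through cleanly without new machinery.
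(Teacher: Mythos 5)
Your proposal is correct and follows essentially the same route as the paper: an induction on $t_i$ with the identical three-case split, reducing the base case and Case 3 to \Cref{lem:keep-voting-rlmd-fast-conf}, Case 1 to \Cref{lem:asyn-induction-rlmd}, and Case 2 to the expiry-plus-Constraint~\eqref{eq:async-condition} argument, using the observation (already recorded in the proof of \Cref{lem:asyn-induction-rlmd}) that the $\mfc$ majority condition is strictly stronger than the weight-majority condition $\rlmdghost$ needs. The paper's own proof is merely a pointer to the proof of \Cref{lem:asyn-induction2} with the two lemma substitutions you identified, so your expansion is exactly what it implicitly contains.
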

\begin{proof}

  The proof follows the one for \Cref{lem:asyn-induction2} with the following changes only.
  \begin{enumerate}
    \item {Replace \Cref{lem:keep-voting-tob-fast-conf} with }\Cref{lem:keep-voting-rlmd-fast-conf}, and 
    \item {Replace \Cref{lem:asyn-induction} with}  \Cref{lem:asyn-induction-rlmd}.\qedhere
  \end{enumerate}
\end{proof}

\begin{lemma}\label{lem:asyn-induction3-rlmd}
  Assume $\pi > 0$ and
  take a slot $t \leq t_a$ such that, in slot $t$, any validator in $H_{\voting{t}}$ casts a \textsc{vote} message for a chain extending $\chain$.
  Then, for any round $r_i \geq \voting{t}$ and validator $v_i \in W_{r_i}$, $\Chain^{r_i}_i$ does not conflict with $\chain$.
\end{lemma}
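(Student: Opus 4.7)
The plan is to mirror the structure of the proof of \Cref{lem:asyn-induction3}, replacing the ingredients specific to \Cref{algo:prob-ga-fast} with their \RLMDGHOST{} counterparts in \Cref{alg:rlmd}. First, I will apply \Cref{lem:asyn-induction2-rlmd} to conclude that, for every slot $t_i \geq t$, every validator in $W_{\voting{t_i}}$ casts a \textsc{vote} message for some chain extending $\chain$. Since $\Chain_i$ is only updated in vote rounds (\Cref{line:alg2-vote-chainava-rlmd}) and fast confirmation rounds (\Cref{line:alg2-bcand}), it suffices to split on the type of $r_i$.

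In the case $r_i = \voting{\slot(r_i)}$, by \Cref{lem:asyn-induction2-rlmd} every validator in $W_{\voting{\slot(r_i)}}$ (and in particular $v_i$) votes for some chain $\chain' \succeq \chain$. Since this vote is precisely the output of $\rlmdghost(\V_i^{\text{frozen}}, \genesis, \slot(r_i))$ as per \Cref{line:alg2-fc2,line:alg2-vote} of \Cref{alg:rlmd}, and \Cref{line:alg2-vote-chainava-rlmd} guarantees that $\Chain^{r_i}_i$ is a prefix of that output, we conclude $\Chain^{r_i}_i \preceq \chain'$, hence $\Chain^{r_i}_i$ does not conflict with $\chain$.

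In the case $r_i = \fastconfirming{\slot(r_i)}$, I distinguish two sub-cases as in the proof of \Cref{lem:asyn-induction3}. If $\Chain^{r_i}_i = \Chain^{r_i - 1}_i$ then $\Chain^{r_i}_i = \Chain^{\voting{\slot(r_i)}}_i$ and, since the joining protocol ensures $v_i \in W_{\fastconfirming{\slot(r_i)}}$ implies $v_i \in W_{\voting{\slot(r_i)}}$, the previous case applies. Otherwise $\Chain^{r_i}_i = \fastcand = \texttt{fastconfirmsimple}(\V^{r_i}_i, \slot(r_i))$ with $\fastcand \neq \genesis$, so $\V^{r_i}_i$ contains a quorum of at least $\frac{2}{3}n$ \textsc{vote} messages for chains extending $\fastcand$. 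Since all validators in $W_{\voting{\slot(r_i)}}$ vote for chains extending $\chain$, the bound from \Cref{eq:async-condition} on the size of $A_{\voting{\slot(r_i)}} \cup (H_{\voting{(\slot(r_i)-\eta)},\voting{(\slot(r_i)-1)}} \setminus H_{\voting{t_a}})$ implies that fewer than $\frac{n}{3}$ validators can have voted for chains not extending $\chain$, ruling out a quorum for any such chain. Therefore $\fastcand \succeq \chain$ and $\Chain^{r_i}_i$ does not conflict with $\chain$.

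The main subtlety, as in the \mfc{} case, is the fast-confirmation sub-case: one must carefully invoke \Cref{eq:async-condition} to argue that no quorum exists for a chain conflicting with $\chain$, even when including adversarial validators and validators outside $H_{\voting{t_a}}$ that may have become active during the short asynchronous window. Once this bookkeeping is in place, the rest is a routine transcription of the proof of \Cref{lem:asyn-induction3} with $\mfc$ replaced by $\rlmdghost$ and the relevant line references updated to \Cref{alg:rlmd}.
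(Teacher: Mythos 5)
Your proof is correct and takes essentially the same route as the paper, which just re-runs the proof of \Cref{lem:asyn-induction3} with \Cref{lem:asyn-induction2-rlmd}, \Cref{line:alg2-vote-chainava-rlmd}, and the $\texttt{fastconfirmsimple}$ quorum condition of \Cref{alg:rlmd} substituted in (the same case split on vote vs.\ fast-confirmation rounds). The only nit is that \Cref{eq:async-condition} directly bounds the potential voters for chains conflicting with $\chain$ by fewer than half of the validators, not fewer than $\frac{n}{3}$ as you claim, but either bound rules out a $\frac{2}{3}n$ quorum, which is exactly the paper's argument for Case 2.2.
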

\begin{proof}

  The proof follows the one for \Cref{lem:asyn-induction3} with the following changes only.
  \begin{enumerate}
    \item Replace \Cref{line:algga-no-ffg-vote-chainava} of \Cref{algo:prob-ga-fast} with \Cref{line:alg2-vote-chainava-rlmd} of \Cref{alg:rlmd},
    \item Replace \Cref{lem:asyn-induction2} with \Cref{lem:asyn-induction2-rlmd}, and
    \item \sloppy{{Note that when considering \Cref{alg:rlmd}, Case 2.2 implies $\Chain_i = \max(\{\chain \colon |\{v_j\colon \exists \chain' \succeq \chain : \ [\textsc{vote}, \chain', \cdot, t,v_j] \in \V_i \}| \geq \frac{2}{3}n\})$, from which the rest of the proof for this case follows unaltered.}}\qedhere
  \end{enumerate}
\end{proof}

\begin{theorem}[Asynchrony Reorg Resilience]
  \label{thm:async-resilience-rlmd}
  \Cref{alg:rlmd} is $\eta$-asynchrony-reorg-resilient.
\end{theorem}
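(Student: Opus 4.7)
The plan is to follow the same structure used to establish \Cref{thm:async-resilience-tob} for \Cref{algo:prob-ga-fast}, substituting the \RLMDGHOST-specific analogs that have already been proved earlier in this section. Concretely, assume $\pi > 0$ (otherwise the statement is vacuous), fix a slot $t_p \in [0, t_a]$ whose proposer $v_p$ is honest at round $\proposing{t_p}$, pick an arbitrary round $r_i$, and an arbitrary validator $v_i \in W_{r_i}$. The goal is to show that the chain $\chain_p$ proposed by $v_p$ in slot $t_p$ does not conflict with $\Chain^{r_i}_i$.

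The argument splits naturally into two cases depending on the temporal relation between $r_i$ and slot $t_p$. In the first case, when $r_i < \voting{t_p}$, the conclusion is immediate from $\eta$ Reorg Resilience, which is \Cref{thm:reorg-res-rlmd}. In the second case, $r_i \geq \voting{t_p}$, and so we can apply \Cref{lem:vote-proposal-fast-conf-rlmd} to conclude that in slot $t_p$ every validator in $H_{\voting{t_p}}$ casts a \textsc{vote} message for a chain extending $\chain_p$. Since $t_p \leq t_a$, the hypothesis of \Cref{lem:asyn-induction3-rlmd} with $\chain := \chain_p$ is satisfied, and that lemma yields that $\Chain^{r_i}_i$ does not conflict with $\chain_p$, as required.

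I do not anticipate any real obstacle here: all the heavy lifting has been factored into \Cref{lem:vote-proposal-fast-conf-rlmd,lem:asyn-induction3-rlmd,thm:reorg-res-rlmd}, which are the exact \RLMDGHOST analogs of \Cref{lem:vote-proposal-fast-conf,lem:asyn-induction3,thm:reorg-res-prop-tob}. Consequently, the proof is essentially a one-line reference that mirrors the proof of \Cref{thm:async-resilience-tob} with the \TOBSVD-specific citations replaced by the corresponding \RLMDGHOST-specific ones.
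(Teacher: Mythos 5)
Your proposal is correct and follows essentially the same route as the paper: the paper's proof is precisely the proof of \Cref{thm:async-resilience-tob} (case split on $r_i < \voting{t_p}$ versus $r_i \geq \voting{t_p}$) with \Cref{thm:reorg-res-prop-tob}, \Cref{lem:vote-proposal-fast-conf}, and \Cref{lem:asyn-induction3} replaced by \Cref{thm:reorg-res-rlmd}, \Cref{lem:vote-proposal-fast-conf-rlmd}, and \Cref{lem:asyn-induction3-rlmd}, exactly as you do.
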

\begin{proof}
  {
    Follows from the proof of \Cref{thm:async-resilience-tob} with the following modifications only.
    \begin{enumerate}
      \item Replace \Cref{thm:reorg-res-prop-tob} with \Cref{thm:reorg-res-rlmd},
      \item Replace \Cref{lem:vote-proposal-fast-conf} with \Cref{lem:vote-proposal-fast-conf-rlmd}, and
      \item Replace \Cref{lem:asyn-induction3} with \Cref{lem:asyn-induction3-rlmd}.\qedhere
    \end{enumerate}
  }
\end{proof}

\begin{theorem}[Asynchrony Safety Resilience]
  \label{thm:async-safety-resilience-rlmd}
  \Cref{alg:rlmd} is $\eta$-asynchrony-safety-resilient.
\end{theorem}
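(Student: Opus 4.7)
The plan is to follow the same structure as the proof of \Cref{thm:async-safety-resilience-tob}, replacing each lemma/theorem invoked with its \rlmd-based analogue that has already been established earlier in \Cref{sec:analysis-rlmd}. Concretely, I would assume $\pi > 0$ and pick any round $r_i \leq 4\Delta t_a + \Delta$, any round $r_j$, any validator $v_i$ honest in $r_i$, and any validator $v_j \in W_{r_j}$, and then split on whether $r_j$ falls before or after the voting phase of slot $t_a$.

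For the case $r_j < \voting{t_a}$, both $r_i$ and $r_j$ lie in a synchronous, fully-active window (up to $4\Delta t_a + \Delta$), so the desired non-conflict between $\Chain^{r_i}_i$ and $\Chain^{r_j}_j$ is an immediate consequence of $\eta$ Safety, which is guaranteed by \Cref{thm:dyn-avail-rlmd} (the analogue of \Cref{thm:dyn-avail-fast-conf-tob}). This mirrors Case~1 of the \TOBSVD-based proof.

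For the case $r_j \geq \voting{t_a}$, I would first use \Cref{lem:rlmd-confirmed-always-canonical} (the analogue of \Cref{lem:ga-confirmed-always-canonical}) to show that, for any validator $v_k \in H_{\voting{t_a}}$, $\rlmdvote{t_a}_k \succeq \Chain^{r_i}_i$. By \Cref{line:alg2-vote} of \Cref{alg:rlmd}, $v_k$ therefore casts a \textsc{vote} in slot $t_a$ for a chain extending $\Chain^{r_i}_i$. At this point the hypothesis of \Cref{lem:asyn-induction3-rlmd} (instantiated with $t = t_a$ and $\chain = \Chain^{r_i}_i$) is satisfied, which gives that $\Chain^{r_j}_j$ does not conflict with $\Chain^{r_i}_i$, concluding the proof.

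I do not expect any serious obstacle here: all the heavy lifting has already been done in \Cref{lem:rlmd-confirmed-always-canonical,lem:asyn-induction3-rlmd} and \Cref{thm:dyn-avail-rlmd}, and the case split is the same one used in the \TOBSVD-based proof. The only subtlety to double-check is that $r_j$ in Case~2 is used only as a generic round (not necessarily a vote/fast-confirm round), so the application of \Cref{lem:asyn-induction3-rlmd} should be done directly at round $r_j$, and the applications of \Cref{lem:rlmd-confirmed-always-canonical} should be done at the propose or vote round $\voting{t_a}$ with $\voting{t_a} \geq r_i$.
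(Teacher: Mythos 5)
Your proposal is correct and follows essentially the same route as the paper, which proves the theorem by replaying the proof of \Cref{thm:async-safety-resilience-tob} with exactly the substitutions you make ($\mfc \to \rlmd$, \Cref{thm:dyn-avail-fast-conf-tob} $\to$ \Cref{thm:dyn-avail-rlmd}, \Cref{lem:ga-confirmed-always-canonical} $\to$ \Cref{lem:rlmd-confirmed-always-canonical}, \Cref{lem:asyn-induction3} $\to$ \Cref{lem:asyn-induction3-rlmd}). The case split and the way you invoke the analogous lemmas match the paper's argument.
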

\begin{proof}
  {
    Follows from the proof of \Cref{thm:async-safety-resilience-tob} with the following modifications only.
    \begin{enumerate}
      \item Replace $\mfc$ with $\rlmd$,
      \item Replace \Cref{thm:dyn-avail-fast-conf-tob} with \Cref{thm:dyn-avail-rlmd},
      \item Replace \Cref{lem:ga-confirmed-always-canonical} with \Cref{lem:rlmd-confirmed-always-canonical}, and
      \item Replace \Cref{lem:asyn-induction3} with \Cref{lem:asyn-induction3-rlmd}.\qedhere
    \end{enumerate}
  }
\end{proof}

\subsection{Faster finality protocol execution}

In this section, we present \Cref{alg:rlmd-ffg} which integrates the $\eta$-dynamically-available and reorg-resilient protocol of \Cref{alg:rlmd} with the FFG component introduced in \Cref{sec:ffg} to obtain a $\eta$-secure ebb-and-flow protocol.

We describe \Cref{alg:rlmd-ffg} by discussing the main differences compared to \Cref{alg:rlmd}{, which are quite similar to the differences between \Cref{alg:3sf-tob-noga} and \Cref{algo:prob-ga-fast}.}

\medskip

{To start with, like in \Cref{alg:3sf-tob-noga},} the logic employed by \Cref{alg:rlmd-ffg} to determine fast confirmed chains is encoded in the function $\texttt{fastconfirm}(\V, t)$ which returns the same output of $\texttt{fastconfirmsimple}(\V,t)$ from \Cref{alg:rlmd} as long as the returned fast confirmed chain extends $\GJ(\V).\chain$.
Otherwise, it just returns $\GJ(\V).\chain$.

\medskip

Subsequently, \Cref{alg:rlmd-ffg} maintains the same state variables as those defined in \Cref{alg:rlmd}.

\medskip

{However, compared to \Cref{alg:rlmd},  like  \Cref{alg:rlmd-ffg}, \Cref{alg:rlmd-ffg}outputs two chains.
\begin{itemize}
  \item \textbf{Available Chain $\chainava_i$:} This roughly corresponds to the confirmed chain $\Chain_i$ of \Cref{algo:prob-ga-fast}.
  \item \textbf{Finalized Chain $\chainfin_i$:}
At any fast confirmation round, $\chainfin$ is set to $\GF(\V_i).\chain$, \ie, the chain of the greatest justified checkpoint.
    
  Also, at any vote round, the finalized chain $\chainfin_i$ is set to 
  the longest chain such that of $\chainfin_i \preceq \GF(\V_i).\chain \land \chainfin_i \preceq \chainava_i$.
\end{itemize}}

\medskip

Next, compared to \Cref{alg:rlmd}, \Cref{alg:rlmd-ffg} proceeds as follows.


\begin{description}
    \item[Propose:] The propose phase remains identical to that in \Cref{alg:rlmd}, with the only difference being the input to the fork-choice function $\rlmdghost$. Specifically, instead of using the genesis block as input, validator $v_i$ inputs $GJ(\V_i).\chain$\footnote{{The output of $\rlmdghost(\V_i,\GJ(V_i).\chain,t)$ corresponds to the output of $\hfc(\V_i, t)$ with $\hfc$ being the fork-choice defined in~\cite{DBLP:conf/esorics/DAmatoZ23}.}}.

    \item[Vote:] 
    {First, similarly to  the propose phase, instead of using the genesis block as input to the fork-choice function, validator $v_i$ inputs $GJ(\Vfrozen_i).\chain$.}

    {Second,} in \Cref{alg:rlmd-ffg}, honest validators also include an \textsc{ffg-vote} in the \textsc{vote} messages that they send. 
    Specifically, the source checkpoint of such \textsc{ffg-vote} sent corresponds to $\GJ(\Vfrozen_i)$.
    The target checkpoint is determined by selecting the highest chain among 
    the previous available chain, the $\kappa$-deep prefix of the chain output by the fork-choice function $\rlmdghost$, and $\GJ(\Vfrozen_i).\chain$, filtering out any chain if it is not a prefix of the chain output by the fork-choice function $\rlmdghost$.

    Validator $v_i$ also sets the available chain output, $\chainava_i$ to such chain.

    \item[Fast Confirm:] {Exactly like in \Cref{alg:3sf-tob-noga}, validator $v_i$ checks whether $\chainava_i$ is a prefix of the chain $\mathrm{fast}^{\mathrm{cand}}$ output by $\texttt{fastconfirm}(\V_i, t)$ or it conflicts with it, and,
    in either case, $v_i$ updates $\chainava_i$ to $\mathrm{fast}^{\mathrm{cand}}$.}

    \item[Merge:] The propose phase remains identical to that in \Cref{alg:rlmd}.
\end{description}

\begin{algo}[htb!]
\caption{Faster finality protocol -- code for validator $v_i$}
\label{alg:rlmd-ffg}
\small
\begin{numbertabbing}\reset
xxxx\=xxxx\=xxxx\=xxxx\=xxxx\=xxxx\=MMMMMMMMMMMMMMMMMMM\=\kill
  {\textbf{Output}} \label{}\\
  \> {\(\chainava_i \gets \genesis\): available chain of validator $v_i$}\label{}\\
  \> {\(\chainfin_i \gets \genesis\): finalized chain of validator $v_i$}\label{line:alg2-set-chfin-init}\\
  \textbf{State} \label{}\\
  \> \(\V_i^\text{frozen} \gets \{\genesis\} \): frozen view of validator $v_i$ \label{}\\ 
  {\textbf{function} $\texttt{fastconfirm}(\V,t)$}\label{}\\
  \> {\textbf{let} $\fastcand := \texttt{fastconfirmsimple}(\V,t)$}\label{}\\
  \> {\textbf{if} $\fastcand \neq \bot \land \fastcand \succeq \GJ(\V).\chain$}\label{}\\
  \>\> {\textbf{return} $\fastcand$}\label{}\\
  \> {\textbf{else}}\label{}\\
  \>\> {\textbf{return} $\GJ(\V).\chain$}\label{}\\
  \textsc{Propose}\\
  \textbf{at round} $4\Delta t$ \textbf{do} \label{} \\
  \> \textbf{if} $v_i = v_p^t$ \textbf{then} \label{}\\
  \>\>{\textbf{let} $ \chaincanrlmd := {\rlmdghost}(\V_i, GJ(\V_i).\chain, t)^{\lceil 1,t}$} \label{line:alg2ffg-fc1}\\
  \>\> {\textbf{let} $ \chain_p := \mathsf{Extend}(\chaincanrlmd,t)$}\label{}\\
  \>\> send message [\textsc{propose}, {$\chain_p$}, $\V_i\,\cup \{\chain_p\}$, $t$, $v_i$] through gossip \label{line:send-propose-rlmd-ffg}\\
  \textsc{Vote}\\
  \textbf{at round} $4\Delta t + \Delta$ \textbf{do} \label{}\\
  \> \textbf{let} $ \chaincanrlmd := {\rlmdghost}(\V_i^{\text{frozen}}, \GJ(\V_i^{\text{frozen}}).\chain,t)$ \label{line:alg2-fc2ffg}\\            
  \> $\chainava_i \gets \max(\{\chain \in \{\chainava_i,{\left(\chaincanrlmd\right)^{\lceil\kappa}},\GJ(\V_i^\text{frozen}).\chain\}: \chain \preceq {\chaincanrlmd}\})$\label{line:alg2-vote-chainava}\\
  \> {$\chainfin_i \gets \max(\{\chain \colon \chain \preceq \chainava_i \land \chain \preceq \GF(\V_i).\chain\})$}\label{line:alg2-set-chfin-vote} \\
  \>  \textbf{let} $ \T := (\chainava_i,t)$\label{line:set-target-checkpoint}\\
  \> send message [\textsc{vote}, {$\chaincanrlmd$}, $\GJ(\V_i^\text{frozen}) \to \T$, $t$, $v_i$] through gossip \label{line:alg2ffg-vote}\\
  \textsc{Fast Confirm}\\
  \textbf{at round} $4\Delta t + 2\Delta$ \textbf{do} \label{line:rlmd-at-confirm}\\
   \> {\textbf{let} $\fastcand := \texttt{fastconfirm}(\V_i,t)$}
  \label{line:rlmdffg-set-fastcand-fconf}\\  
  \> \textbf{if} $\chainava_i \nsucceq \fastcand$ \textbf{then}\label{}\\
  \>\> $\chainava_i \gets \fastcand$\label{line:alg2ffg-set-chaava-to-bcand}\\
  \> {$\chainfin_i \gets \max(\{\chain \colon \chain \preceq \chainava_i \land \chain \preceq \GF(\V_i).\chain\})$}\label{line:alg2-set-chfin-fast}\\
  \textsc{merge}\\
  \textbf{at round} $4\Delta t + 3\Delta$ \textbf{do} \label{}\\
  \> {$\V_i^{\text{frozen}} \gets \V_i$}\label{}\\
  \\
  \textbf{upon} receiving a gossiped message
  [\textsc{propose}, $\chain_p$, $\V_p$, $t$, $v_p^t$] \textbf{at any round in} $ [4\Delta t, 4\Delta t + \Delta]$ \textbf{do} \label{}\\
  \> $\V_i^{\text{frozen}} \gets \V_i^{\text{frozen}} \cup \V_p$\label{}\\
  [-5ex]
\end{numbertabbing}
\end{algo}

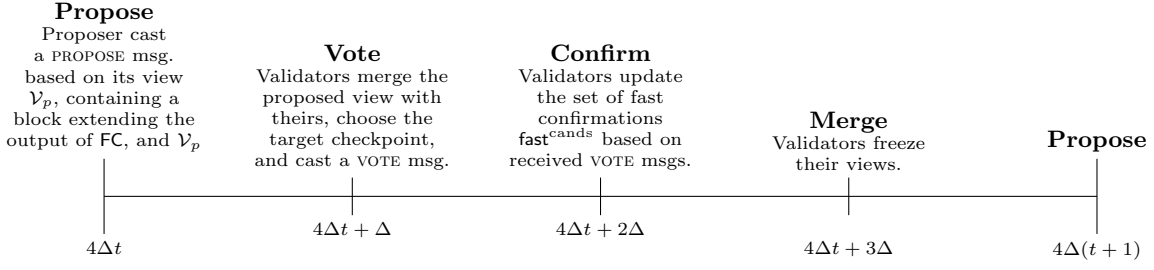
\begin{figure}
    \centering
\begin{tikzpicture}
\scriptsize
\pgfmathsetmacro{\segwidth}{3.3}

\draw (0,0) -- (4*\segwidth,0);

\draw (0,-0.5) -- (0,0.5);
\draw (\segwidth,-0.25) -- (\segwidth,0.25);
\draw (2*\segwidth,-0.25) -- (2*\segwidth,0.25);
\draw (3*\segwidth,-0.25) -- (3*\segwidth,0.25);
\draw (4*\segwidth,-0.5) -- (4*\segwidth,0.5);

\node[above, align=center, text width=2.7cm] at (0,0.5) {{\small \textbf{Propose}} \\ Proposer cast a \textsc{propose} msg. based on its view $\V_p$, containing a block extending the output of $\FC$, and $\V_p$};
\node[above, align=center, text width=2.7cm] at (\segwidth,0.25) {{\small \textbf{Vote}} \\ Validators merge the proposed view with theirs, choose the target checkpoint, and cast a \textsc{vote} msg.};
\node[above, align=center, text width=2.7cm] at (2*\segwidth,0.25) {{\small \textbf{Confirm}} \\ Validators update the set of fast confirmations $\fastcands$ based on received \textsc{vote} msgs.};
\node[above, align=center, text width=2.3cm] at (3*\segwidth,0.25) {{\small \textbf{Merge}} \\ Validators freeze their views.};
\node[above, align=center, text width=2.3cm] at (4*\segwidth,0.5) {{\small \textbf{Propose}}};

\node[below, align=center, text width=2cm] at (0,-0.5) {\textbf{$4\Delta t$}};
\node[below] at (\segwidth,-0.25) {\textbf{$4\Delta t + \Delta$}};
\node[below] at (2*\segwidth,-0.25) {\textbf{$4\Delta t + 2 \Delta$}};
\node[below] at (3*\segwidth,-0.5) {\textbf{$4\Delta t + 3\Delta$}};
\node[below] at (4*\segwidth,-0.5) {\textbf{$4\Delta (t+1)$}};

\end{tikzpicture}
\caption{Slot $t$ of the protocol, with its four phases.}
\label{fig:pvm}
\end{figure}

\subsection{Faster finality protocol analysis}
\label{sec:analysis-rlmd-ffg}

\Cref{alg:rlmd-ffg}, as \Cref{alg:3sf-tob-noga}, works in the generalized partially synchronous sleepy model, and is in particular a $\eta$-secure ebb-and-flow protocol.
In \Cref{sec:ffg-analysis} we have already shown that the the finalized chain $\chainfin$ is $\frac{n}{3}$-accountable, and thus always safe if $f < \frac{n}{3}$.
Now, for $\GST = 0$, we show in Section~\ref{sec:sync-rlmd} that, if the execution is $\eta$-compliant in this stronger sense, then all the properties of {\Cref{alg:rlmd}}, i.e., $\eta$ Dynamic Availability and $\eta$ Reorg Resilience, keep holding. 
Finally, in \Cref{sec:psync-rlmd} we show that, after $\max(\GST, \GAT) + 4\Delta$, \Cref{alg:rlmd-ffg} guarantees the required conditions listed in Property~\ref{prop:succ-for-ffg-liveness} to ensure that the finalized chain is live.






\subsubsection{Synchrony}
\label{sec:sync-rlmd}
In this section, we prove that chain {$\chainava$} of \Cref{alg:rlmd-ffg} is $\eta$-dynamically-available, $\eta$-reorg-resilient, $\eta$-asynchrony-reorg-resilient and $\eta$-asynchrony-safety-resilient.
Throughout this part of the analysis, we assume $\GST=0$, that less than one-third of the entire validator set is ever controlled by the adversary (\ie, $f<\frac{n}{3}$), and that Constraint~\eqref{eq:pi-sleepiness} holds.
%
Observe that, in \RLMDGHOST with fast confirmations (Appendix B~\cite{rlmd}), the assumption $f<\frac{n}{3}$ is strictly needed for safety (and only for clients which use fast confirmations), but for example not for Reorg Resilience or liveness, because fast confirmations do not affect the chain output by the fork-choice function. On the other hand, \Cref{alg:rlmd-ffg} utilizes confirmations as a prerequisite for justification, and justification does affect the chain output by the fork-choice function. 
{This is because in \Cref{alg:rlmd-ffg}, $\rlmdghost$ takes in input the chain of the greatest justified checkpoint and  filters out any branch conflicting with it.} 
Therefore, we require that less than $\frac{n}{3}$ validators are ever {controlled by the adversary, and hence slashable,} for all of the properties which we are going to prove.

Our faster finality protocol implemented in \Cref{alg:rlmd-ffg} uses the $\rlmdghost$ fork-choice function, dealing with checkpoints and justifications. However, one could implement it using also different fork-choice functions. 

We will now adopt a similar approach to the one used in Section~\ref{sec:tob-analysis} to demonstrate that the integration of the FFG component into \Cref{alg:rlmd} does not alter the protocol's behavior in any way that could compromise the properties previously established for \Cref{alg:rlmd} in Section~\ref{sec:recalling-rlmd}. 
To do so, as we did in Section~\ref{sec:tob-analysis}, we take any execution of \Cref{alg:rlmd-ffg} and show that there exists an adversary that induce an execution of \Cref{alg:rlmd} where 
(i) the messages sent by any honest validator in the two executions match except only for the FFG component, 
(ii) the fork-choice outputs of any honest validator in the two executions match and 
(iii) any available chain output by \Cref{alg:rlmd-ffg} is also a confirmed chain of an honest validator in the execution of \Cref{alg:rlmd}.
This then allows us to show that the result of the various Lemmas and Theorems presented in \Cref{sec:recalling-rlmd} for \Cref{alg:rlmd} also hold for \Cref{alg:rlmd-ffg}. 
In the following, if $e$ is an execution of \Cref{alg:rlmd}, then $\specifyExec{e}{\rlmdpropose{t}}$ corresponds to the definition provided in \Cref{sec:analysis-rlmd}. If $e$ is an execution of \Cref{alg:3sf-tob-noga}, then we define $\specifyExec{e}{\rlmdpropose{t}_i} := \specifyExec{e}{\rlmdghost}(\V^{\proposing{t}}_i, \GJ(\V^{\proposing{t}}_i).\chain, t)^{\lceil 1, t}$ and $\specifyExec{e}{\rlmdvote{t}_i} := \specifyExec{e}{\rlmdghost}(\V^{\voting{t}}_i, \GJ(\V^{\voting{t}}_i).\chain, t)$.

{
\begin{definition}
    Let $\FFGExec$ be any $\eta$-compliant execution of \Cref{alg:rlmd-ffg} and $\NoFFGExec$ be an $\eta$-compliant execution  of \Cref{alg:rlmd}.
    We say that $\FFGExec$ and $\NoFFGExec$ are \emph{\rlmdghost-equivalent} if and only if the following constraints hold:
    \begin{enumerate}
    
        \item\label[condition]{cond:0b-2-t-rlmd} $\FFGExec$ and $\NoFFGExec$ are honest-output-dynamically-equivalent up to any round
        \item for any slot $t_i$,
        \begin{enumerate}[label*=\arabic*]
        \item \label[condition]{cond:3-2-rlmd} $\rlmdproposeNoFFG{t_i}_i = \hfcproposeFFG{t_i}_i$
        \item\label[condition]{cond:4-2-rlmd} $\rlmdvoteNoFFG{t_i}_i = \hfcvoteFFG{t_i}_i$
        \item\label[condition]{cond:5-2-rlmd} for any slot $t_j \leq t_i$ and validator $v_j \in H_{\voting{(t_j)}}$, there exists a slot $t_k \leq t_j$ and a validator $v_k \in H_{\voting{(t_k)}}$ such that $\Chain^{\voting{t_k}}_k \succeq \chainava^{\voting{t_j}}_j$.
        \item\label[condition]{cond:6-2-rlmd} for any slot $t_j \leq t_i$ and validator $v_j \in H_{\voting{(t_j)}}$, there exists a slot $t_k \leq t_j$ and a validator $v_k \in H_{\voting{(t_k)}}$ such that $\Chain^{\voting{t_k}}_k \succeq \chainava^{\fastconfirming{t_j}}_j$.
        \end{enumerate}
    \end{enumerate}
\end{definition}

\begin{lemma}\label{lem:equiv-rlmd}
    Let $\FFGExec$ be any $\eta$-compliant execution of \Cref{alg:rlmd-ffg}.
    There exists an $\eta$-compliant execution $\NoFFGExec$ of \Cref{alg:rlmd} such that $\FFGExec$ and $\NoFFGExec$ \rlmdghost-equivalent.
\end{lemma}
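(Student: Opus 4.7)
The plan is to mirror the strategy used for \Cref{lem:equiv-ga2}, constructing an adversary $\ANoFFG$ for \Cref{alg:rlmd} from the adversary $\AFFG$ of the given $\FFGExec$ execution of \Cref{alg:rlmd-ffg}. Specifically, I would let $\ANoFFG$ inherit from $\AFFG$ (i) the corruption schedule, (ii) the sleepiness schedule, (iii) the proposer assignments for slots with honest proposers, and (iv) a message-delivery schedule that, whenever $\FFGExec$ and $\NoFFGExec$ are honest-output-dynamically-equivalent up to round $r$, delivers to each honest validator at round $r$ in $\NoFFGExec$ a set of messages dynamically-equivalent to the one delivered in $\FFGExec$. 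This is well-defined because Byzantine validators in $\NoFFGExec$ can freely simulate stripped (FFG-component-removed) copies of whatever their counterparts sent in $\FFGExec$.

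Next I would carry out an induction on $t_i$, strengthening the target invariant \Cref{cond:3-2-rlmd,cond:4-2-rlmd,cond:5-2-rlmd,cond:6-2-rlmd,cond:0b-2-t-rlmd} with two auxiliary conditions analogous to \Cref{cond:1-2,cond:2-2} of the $\mfc$ case: for every checkpoint $\J$ with $\mathsf{J}(\J,\VFFG^{\proposing{t_i}}_i)$ we require $\rlmdproposeNoFFG{t_i}_i \succeq \J.\chain$, and likewise for the voting round. These auxiliary conditions are exactly what is needed to reconcile the two fork-choice calls, since in \Cref{alg:rlmd-ffg} the $\rlmdghost$ function is rooted at $\GJ(\V).\chain$ whereas in \Cref{alg:rlmd} it is rooted at $\genesis$: if $\GJ(\VFFG).\chain$ is already a prefix of the $\genesis$-rooted output in $\NoFFGExec$, then restricting the subtree cannot change the ultimate leaf selected (the argmax weights in \Cref{line:rlmd-argmax} only count descendants, so ancestors of the chosen tip are irrelevant to the choice below $\GJ(\V).\chain$).

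To prove the auxiliary conditions I would follow exactly the argument used in \Cref{lem:equiv-ga2}: any justified target checkpoint in an honest validator's view corresponds, modulo $f<\frac{n}{3}$, to some chain $\chainava^{\voting{t_k}}_k$ previously adopted by an honest $v_k$ at slot $t_k<t_i$; the inductive hypothesis (\Cref{cond:5-2-rlmd}) then yields a $\Chain^{\voting{t_m}}_m$ extending it in $\NoFFGExec$, and an application of \Cref{lem:rlmd-confirmed-always-canonical} bumps this up to the required $\rlmdproposeNoFFG{t_i}_i$ and $\rlmdvoteNoFFG{t_i}_i$. \Cref{cond:3-2-rlmd} and \Cref{cond:4-2-rlmd} then follow by combining the auxiliary conditions with the fact that $\VFFG^{\proposing{t_i}}_i$ and $\VNoFFG^{\proposing{t_i}}_i$ (respectively the voting-round views) are dynamically-equivalent, as guaranteed by the construction of $\ANoFFG$. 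Finally, \Cref{cond:5-2-rlmd} and \Cref{cond:6-2-rlmd} are discharged by a case analysis on how $\chainava^{\voting{t_i}}_i$ and $\chainava^{\fastconfirming{t_i}}_i$ are set at \Cref{line:alg2-vote-chainava,line:alg2ffg-set-chaava-to-bcand} (namely: kept from the previous slot, promoted to $\GJ(\Vfrozen).\chain$, advanced to the $\kappa$-deep fork-choice prefix, or promoted by fast confirmation), with each case reduced to the inductive hypothesis or to the fact that a quorum of matching \textsc{vote} messages in $\VFFG$ is also a quorum in $\VNoFFG$ by dynamic-equivalence.

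The main obstacle I expect is making precise the claim that restricting $\rlmdghost$'s root to $\GJ(\V).\chain$ produces the same block as calling it from $\genesis$, under the inductive hypothesis. In the $\mfc$ case this is immediate because $\mfc$ is chain-extension based and the majority property is preserved; for $\rlmdghost$ one has to argue that the greedy walk of \GHOST{}, when applied to $\FIL_{\mathrm{rlmd}}(\V,t)$, passes through $\GJ(\V).\chain$ before diverging, so that rooting at $\GJ(\V).\chain$ cannot change the descent. This uses the auxiliary conditions in an essential way: if the $\genesis$-rooted walk did fork away from $\GJ(\V).\chain$ at some ancestor, then a conflicting sibling of some block in $\GJ(\V).\chain$ would have strictly greater weight, contradicting the fact that a supermajority of unexpired non-equivocating latest votes must extend $\GJ(\V).\chain$ (which itself follows from the auxiliary conditions and \Cref{lem:rlmd-confirmed-always-canonical}, since every non-equivocating honest latest vote is, by \Cref{line:alg2ffg-vote}, for a chain extending $\chainava_j^{\voting{t}}$ and hence extending the greatest justified chain).
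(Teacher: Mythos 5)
Your overall plan is the same as the paper's: the paper proves this lemma by re-running the proof of \Cref{lem:equiv-ga2} with a list of substitutions (replace $\mfc$ by $\rlmdghost$, \Cref{lem:ga-confirmed-always-canonical} by \Cref{lem:rlmd-confirmed-always-canonical}, $\GJfrozen$ by $\GJ(\Vfrozen_i)$, and the corresponding line references), i.e.\ exactly the adversary construction, the induction on $t_i$ with the two auxiliary conditions about justified checkpoints being prefixes of the no-FFG fork-choice outputs, and the case analysis on how $\chainava$ is set in the vote and fast-confirm rounds. Up to and including your second and third paragraphs, this matches the intended argument.

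The weak spot is your last paragraph. The claim that ``a supermajority of unexpired non-equivocating latest votes must extend $\GJ(\V).\chain$'' is not sound in this model, and the justification you give for it does not go through: by \Cref{line:alg2ffg-vote} an honest voter $v_j$ votes for a chain extending $\GJ(\Vfrozen_j).\chain$, i.e.\ the greatest justified checkpoint in \emph{its own} frozen view at its voting time, which need not extend (or even be comparable in the right direction with) $\GJ(\V_i)$ in the view of the validator $v_i$ whose fork-choice you are analysing; moreover, under \Cref{eq:sleepy-req} the honest unexpired latest votes are only guaranteed to be a majority-type fraction of $\mathsf{S}(\V,t)$, not a supermajority of $n$. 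Fortunately this detour is unnecessary: the correct bridge is the one you already state in your second paragraph. The auxiliary conditions give $\rlmdproposeNoFFG{t_i}_i \succeq \J.\chain$ (and likewise at the vote round) for every $\J$ justified in $\VFFG^{\proposing{t_i}}_i$; dynamic-equivalence of the two views means the \ghost{} weights are identical (the FFG components are ignored by $\FIL_{\text{rlmd}}$ and by $w(\cdot,\cdot)$), so the $\genesis$-rooted output computed on $\VFFG^{\proposing{t_i}}_i$ coincides with $\rlmdproposeNoFFG{t_i}_i$ and hence extends $\GJ(\VFFG^{\proposing{t_i}}_i).\chain$; and since the greedy walk of \ghost{} from $\genesis$ visits exactly the ancestors of its output, re-rooting the walk at $\GJ(\VFFG^{\proposing{t_i}}_i).\chain$ cannot change the leaf, which is precisely \Cref{cond:3-2-rlmd} (and analogously \Cref{cond:4-2-rlmd}). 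With that fix your proposal is a faithful reconstruction of the paper's proof.
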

}
\begin{proof}
  {
    Follows from the proof of \Cref{lem:equiv-ga2} with the following modifications only.
    \begin{enumerate}
      \item Replace $\mfc$ with $\rlmd$,
      \item Replace \Cref{lem:ga-confirmed-always-canonical} with \Cref{lem:rlmd-confirmed-always-canonical},
      \item Replace \Cref{line:algtob-vote-chainava} of \Cref{alg:3sf-tob-noga} with \Cref{line:alg2-vote-chainava} of \Cref{alg:rlmd-ffg},
      \item Replace $\GJfrozen[\voting{t_i}]_i$ with $\GJ(\V^{\voting{t_i}}_i)$,
      \item In the treatment of Case 2 of \Cref{cond:5-2}, note that, due to the change above, we do not need to refer to any line to show that $\mathsf{J}(\GJ(\V^{\voting{t_i}}_i),\V^{\voting{t_i}}_i)$,
      \item Replace \Cref{line:algga-no-ffg-vote-chainava} of \Cref{algo:prob-ga-fast} with \Cref{line:alg2-vote-chainava-rlmd} of \Cref{alg:rlmd},
      \item Replace the reference to \Crefrange{line:algotb-at-confirm}{line:algtob-set-chaava-to-bcand} of \Cref{alg:3sf-tob-noga} with reference to \Crefrange{line:rlmd-at-confirm}{line:alg2ffg-set-chaava-to-bcand} of \Cref{alg:rlmd-ffg}, and 
      \item In the treatment of Case 3 of \Cref{cond:6-2}, replace $(\chain^C, Q) = \texttt{fastconfirm}(\VFFG^{\fastconfirming{t_i}}_i,t_i) \land Q\neq \emptyset$ with $\chain^C = \texttt{fastconfirm}(\VFFG^{\fastconfirming{t_i}}_i,t_i) \land \chain^C\neq \GJ(\VFFG^\merging{t_i}_i.\chain)$ and let $\texttt{fastconfirm}$ be as defined in \Cref{alg:rlmd-ffg}.\qedhere
    \end{enumerate}
  }
\end{proof}

As a result of \Cref{lem:equiv-rlmd}, we can derive analogous lemmas to \Cref{lem:keep-voting-rlmd-fast-conf}, \Cref{lem:one-fast-confirm-all-vote-fast-conf-rlmd}, and \Cref{lem:vote-proposal-fast-conf-rlmd}, similar to the process followed with \Cref{alg:3sf-tob-noga}. 

In the following Lemmas and Theorems, unless specified, we refer to executions of \Cref{alg:rlmd-ffg}.

\begin{lemma}[Analogous of \Cref{lem:keep-voting-rlmd-fast-conf}]
  \label{lem:keep-voting-tob-fast-conf-ffg-rlmd}
  If, in slot $t$, all validators in $H_{\voting{t}}$ cast \textsc{vote} messages for chains extending chain \(\chain\),
  then, for any validator $v_i\in H_{\voting{(t+1)}}$, $\hfcpropose{(t+1)}_i \succeq \chain$ and $\hfcvote{(t+1)}_i \succeq \chain$, which implies that, in slot $t+1$, all validators in $H_{\voting{(t+1)}}$ cast \textsc{vote} messages for chains extending \(\chain\).
\end{lemma}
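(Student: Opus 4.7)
The plan is to mirror exactly the strategy used in \Cref{lem:keep-voting-tob-fast-conf-ffg}, leveraging the equivalence between executions of \Cref{alg:rlmd-ffg} and executions of \Cref{alg:rlmd} established in \Cref{lem:equiv-rlmd}. The idea is that the FFG component only piggybacks on \textsc{vote} messages (via the \textsc{ffg-vote}) and restricts the fork-choice to branches extending the greatest justified chain, but under the hypothesis that all honest validators have just voted for descendants of $\chain$, the fork-choice output in both executions should coincide.

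Concretely, I would start by taking any $\eta$-compliant execution $\FFGExec$ of \Cref{alg:rlmd-ffg} in which, in slot $t$, every validator in $H_{\voting{t}}$ casts a \textsc{vote} message for a chain extending $\chain$. By \Cref{lem:equiv-rlmd}, there exists an $\eta$-compliant execution $\NoFFGExec$ of \Cref{alg:rlmd} which is \rlmdghost-equivalent to $\FFGExec$. From the definition of \rlmdghost-equivalence (and in particular honest-output-dynamical-equivalence up to any round, \Cref{cond:0b-2-t-rlmd}), the set of messages sent by any validator honest at any round in $\FFGExec$ differs from those sent in $\NoFFGExec$ only in the FFG component. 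In particular, the chain component of each \textsc{vote} sent in slot $t$ by validators in $H_{\voting{t}}$ matches across the two executions, so the hypothesis of \Cref{lem:keep-voting-rlmd-fast-conf} holds in $\NoFFGExec$ as well.

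Applying \Cref{lem:keep-voting-rlmd-fast-conf} to $\NoFFGExec$, for any $v_i \in H_{\voting{(t+1)}}$ we obtain $\rlmdproposeNoFFG{(t+1)}_i \succeq \chain$ and $\rlmdvoteNoFFG{(t+1)}_i \succeq \chain$, and consequently every validator in $H_{\voting{(t+1)}}$ casts in $\NoFFGExec$ a \textsc{vote} message for a chain extending $\chain$. Then \Cref{cond:3-2-rlmd} and \Cref{cond:4-2-rlmd} of \Cref{lem:equiv-rlmd} transfer these fork-choice identities to $\FFGExec$, yielding $\hfcproposeFFG{(t+1)}_i \succeq \chain$ and $\hfcvoteFFG{(t+1)}_i \succeq \chain$. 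The conclusion that every validator in $H_{\voting{(t+1)}}$ casts, in $\FFGExec$, a \textsc{vote} message for a chain extending $\chain$ then follows, once again, from honest-output-dynamical-equivalence.

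I do not expect a real obstacle here: all the conceptual work has already been done in proving \Cref{lem:equiv-rlmd}, which packages precisely the invariants needed to move between $\FFGExec$ and $\NoFFGExec$. The only thing to keep in mind when writing the proof formally is to be careful that the honest-output-dynamical-equivalence preserves the chain component of each \textsc{vote} (not just its existence), so that the hypothesis of \Cref{lem:keep-voting-rlmd-fast-conf} truly carries over; this is immediate from \Cref{def:dyn-equiv}, which only allows the FFG-related components to differ.
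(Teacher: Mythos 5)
Your proposal is correct and follows essentially the same route as the paper: the paper proves this lemma by repeating the proof of \Cref{lem:keep-voting-tob-fast-conf-ffg} with \Cref{lem:equiv-ga2} replaced by \Cref{lem:equiv-rlmd} and the \mfc{} machinery replaced by \rlmdghost{}, which is exactly the argument you give (equivalence to a no-FFG execution, apply \Cref{lem:keep-voting-rlmd-fast-conf} there, then transfer back via the fork-choice equality conditions).
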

\begin{proof}
  {
    Follows from the proof of \Cref{lem:keep-voting-tob-fast-conf-ffg} with the following modifications only.
    \begin{enumerate}
      \item Replace $\mfc$ with $\hfcvp$,
      \item Replace \Cref{algo:prob-ga-fast} with \Cref{alg:rlmd},
      \item Replace \Cref{alg:3sf-tob-noga} with \Cref{alg:rlmd-ffg}, and
      \item Replace \Cref{lem:equiv-ga2} with \Cref{lem:equiv-rlmd}
      \qedhere
    \end{enumerate}
  }
\end{proof}

\begin{lemma}[Analogous of \Cref{lem:vote-proposal-fast-conf-rlmd}]
  \label{lem:vote-proposal-fast-conf-ffg-rlmd}
  Let $t$ be a slot with an honest proposer $v_p$ and assume that~$v_p$ casts a $[\textsc{propose}, \chain_p, \V_p, t, v_p]$ message. 
  Then, for any slot $t'\geq t$, all validators in $H_{\voting{(t')}}$ cast a \textsc{vote} message for a chain extending {$\chain_p$}.
\end{lemma}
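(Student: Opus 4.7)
The plan is to follow the same template used for the analogous lemma \Cref{lem:vote-proposal-fast-conf-ffg} in the \TOBSVD-based setting, transferring the statement from \Cref{alg:rlmd-ffg} to \Cref{alg:rlmd} via the equivalence result \Cref{lem:equiv-rlmd}, and then appealing to \Cref{lem:vote-proposal-fast-conf-rlmd} in the simpler protocol.

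First, I would take any $\eta$-compliant execution $\FFGExec$ of \Cref{alg:rlmd-ffg} in which the honest proposer $v_p$ of slot $t$ casts a $[\textsc{propose}, \chain_p, \V_p, t, v_p]$ message. By \Cref{lem:equiv-rlmd}, there exists an $\eta$-compliant execution $\NoFFGExec$ of \Cref{alg:rlmd} which is $\rlmdghost$-equivalent to $\FFGExec$, meaning in particular that $\FFGExec$ and $\NoFFGExec$ are honest-output-dynamically-equivalent at every round (\Cref{cond:0b-2-t-rlmd}). In both executions the proposer $v_p$ is honest in round $\proposing{t}$ (this is guaranteed by condition (iii) of the adversary's decisions used in the proof of \Cref{lem:equiv-rlmd}), and the two \textsc{propose} messages that $v_p$ sends differ only in their FFG-related components, which for \textsc{propose} messages is limited to the greatest justified checkpoint (a field absent in \Cref{alg:rlmd}). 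In particular, the \textsc{propose}d chain $\chain_p$ is identical in both executions.

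Next, I would apply \Cref{lem:vote-proposal-fast-conf-rlmd} to the execution $\NoFFGExec$: for any slot $t' \geq t$, every validator in $H_{\voting{(t')}}$ casts, in $\NoFFGExec$, a \textsc{vote} message for a chain extending $\chain_p$. Using once more that the two executions are honest-output-dynamically-equivalent up to any round (\Cref{cond:0b-2-t-rlmd}), the \textsc{vote} messages cast by the same honest validators in $\FFGExec$ differ only in their \textsc{ffg-vote} component, so they are still \textsc{vote} messages for chains extending $\chain_p$. This yields the desired conclusion in $\FFGExec$.

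There is no real obstacle here once \Cref{lem:equiv-rlmd} is in hand, because dynamic-equivalence preserves exactly the information about which chain a \textsc{vote} is for. The only subtlety worth stating explicitly is that the definition of dynamically-equivalent messages (\Cref{def:dyn-equiv}) was set up precisely so that the chain carried by a \textsc{vote} and the \textsc{propose}d chain are common to both executions, so the transfer of the statement is immediate and the proof reduces to a few lines recording the substitutions: replace $\rlmd$ with $\hfcvp$, replace \Cref{alg:rlmd} with \Cref{alg:rlmd-ffg}, and replace \Cref{lem:equiv-ga2} with \Cref{lem:equiv-rlmd}, mirroring the bulleted list used in the proof of \Cref{lem:keep-voting-tob-fast-conf-ffg-rlmd}.
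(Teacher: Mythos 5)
Your proof is correct and follows essentially the same route as the paper: the paper's proof is a substitution-based reference to the proof of \Cref{lem:vote-proposal-fast-conf-ffg} (swap \Cref{algo:prob-ga-fast}/\Cref{alg:3sf-tob-noga}/\Cref{lem:vote-proposal-fast-conf}/\Cref{lem:equiv-ga2} for \Cref{alg:rlmd}/\Cref{alg:rlmd-ffg}/\Cref{lem:vote-proposal-fast-conf-rlmd}/\Cref{lem:equiv-rlmd}), which is exactly the transfer-via-equivalence argument you spell out. One cosmetic slip in your closing parenthetical: the substitution is ``replace $\mfc$ with $\hfcvp$'', not ``replace $\rlmd$ with $\hfcvp$'', but since the \RLMDGHOST{}-version of the lemma drops the fork-choice clauses, this has no bearing on the substance of your argument.
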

\begin{proof}
  {
    Follows from the proof of \Cref{lem:vote-proposal-fast-conf-ffg} with the following modifications only.
    \begin{enumerate}
      \item Replace $\mfc$ with $\hfcvp$,
      \item Replace \Cref{algo:prob-ga-fast} with \Cref{alg:rlmd},
      \item Replace \Cref{alg:3sf-tob-noga} with \Cref{alg:rlmd-ffg},
      \item Replace \Cref{lem:vote-proposal-fast-conf} with \Cref{lem:vote-proposal-fast-conf-rlmd}, and
      \item Replace \Cref{lem:equiv-ga2} with \Cref{lem:equiv-rlmd}.
      \qedhere
    \end{enumerate}
  }
\end{proof}

We now prove $\eta$ Reorg Resilience. Observe that D'Amato and Zanolini define $\eta$-reorg-resilient in a slightly different, albeit related, manner. The authors in~\cite{rlmd} state that $\chainava$ is $\eta$-reorg-resilient if any honest proposal $B$ from a slot $t$ is always included in the chain outputted by the fork-choice function of all active validators at any propose and vote rounds equal or higher than $\voting{t}$ in all $\eta$ compliant executions. This definition differs from ours, as we define Reorg Resilience with respect to the confirmed chain, rather than the chain outputted by the fork-choice function. However, we will now demonstrate how the definition provided by D'Amato and Zanolini~\cite{rlmd} implies our definition, obtaining the following result.

We begin by demonstrating that the confirmed chain at round \( r \), denoted as \( \chainava^r \), is always a prefix of the chain outputted by the fork-choice function of any active validator in any round \( r' \ge r \). Specifically, we show that if \( r \) is a propose or vote round, then the confirmed chain is always a prefix of the chain outputted by the fork-choice function in every round of slot \(\slot(r)\). Conversely, if \( r \) is a confirm or merge round, then the confirmed chain is always a prefix of the chain outputted by the fork-choice function in every round of slot \( t' > \slot(r) \). This result will then be used in the proof of $\eta$ Reorg Resilience.

\begin{lemma}[Analogous of \Cref{lem:rlmd-confirmed-always-canonical}]\label{lemma:chava-prefix-canonical-ffg}
  Let $r_i$ be any round and $r_j$ be any round such that $r_j\ge r_i$ and $r_j \in \{\proposing{\slot(r_j)},\voting{\slot(r_j)}\}$. Then, for any validator~$v_i$ honest in round $r_i$ and any validator $v_j \in H_{\slot(r_j)}$, 
  {$\chainava^{r_i}_i \preceq \rlmd^{r_j}_j$}.
\end{lemma}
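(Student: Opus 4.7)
The plan is to mirror the proof of \Cref{lem:ga-confirmed-always-canonical-ffg}, replacing the \mfc{} machinery with its \rlmdghost{} counterpart. The overall strategy is to lift the already-proven statement for the purely dynamically-available protocol (\Cref{lem:rlmd-confirmed-always-canonical} for \Cref{alg:rlmd}) to the FFG-augmented variant (\Cref{alg:rlmd-ffg}) via the equivalence Lemma \Cref{lem:equiv-rlmd}.

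First, I would fix an arbitrary $\eta$-compliant execution $\FFGExec$ of \Cref{alg:rlmd-ffg} and, using \Cref{lem:equiv-rlmd}, pick a \rlmdghost-equivalent $\eta$-compliant execution $\NoFFGExec$ of \Cref{alg:rlmd}. Then, given the round $r_i$ and honest validator $v_i$ from the statement, I would observe that $\chainava^{r_i}_i$ was first assigned its present value in some round $r_i' \le r_i$ which is either a vote round or a fast-confirmation round (these are the only rounds where $\chainava_i$ is updated, via \Cref{line:alg2-vote-chainava,line:alg2ffg-set-chaava-to-bcand} of \Cref{alg:rlmd-ffg}).

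Next, I would invoke the appropriate condition of \rlmdghost-equivalence: \Cref{cond:5-2-rlmd} if $r_i'$ is a vote round, or \Cref{cond:6-2-rlmd} if $r_i'$ is a fast-confirmation round. Either way, this yields a slot $t_m \le \slot(r_i')$ and a validator $v_m \in H_{\voting{t_m}}$ such that, in $\NoFFGExec$, $\Chain^{\voting{t_m}}_m \succeq \chainava^{r_i'}_i = \chainava^{r_i}_i$. Applying \Cref{lem:rlmd-confirmed-always-canonical} inside $\NoFFGExec$ to the pair of rounds $\voting{t_m}$ and $r_j$ (which satisfies the hypotheses, since $r_j \in \{\proposing{\slot(r_j)}, \voting{\slot(r_j)}\}$ and $r_j \ge r_i \ge \voting{t_m}$ after noting that $\Chain_m$ is set in the vote round of slot $t_m$) gives $\rlmdNoFFG^{r_j}_j \succeq \Chain^{\voting{t_m}}_m \succeq \chainava^{r_i}_i$. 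Finally, by \Cref{cond:3-2-rlmd} or \Cref{cond:4-2-rlmd} of the \rlmdghost-equivalence (depending on whether $r_j$ is a propose or vote round), $\hfcvp^{r_j}_j$ in $\FFGExec$ coincides with $\rlmdNoFFG^{r_j}_j$, concluding $\rlmd^{r_j}_j \succeq \chainava^{r_i}_i$ in $\FFGExec$.

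The main (and only real) obstacle is bookkeeping: making sure the conditions exported by \Cref{lem:equiv-rlmd} cover both kinds of rounds in which $\chainava_i$ can be assigned, and that the ordering $t_m \le \slot(r_i') \le \slot(r_j)$ together with $\voting{t_m} \le r_j$ lets us legitimately invoke \Cref{lem:rlmd-confirmed-always-canonical}. Everything else is a mechanical translation of the TOB-based proof of \Cref{lem:ga-confirmed-always-canonical-ffg} to the RLMD-GHOST setting, using $\rlmdghost$ in place of $\mfc$.
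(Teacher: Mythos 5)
Your proposal is correct and follows essentially the same route as the paper: the paper proves this lemma by repeating the proof of \Cref{lem:ga-confirmed-always-canonical-ffg} with \Cref{alg:rlmd}, \Cref{alg:rlmd-ffg}, \Cref{lem:equiv-rlmd}, and \Cref{lem:rlmd-confirmed-always-canonical} substituted in, which is exactly the argument you give — locate the round $r_i'$ where $\chainava_i$ was last set, invoke \Cref{cond:5-2-rlmd} or \Cref{cond:6-2-rlmd} to obtain $v_m$ with $\Chain^{\voting{t_m}}_m \succeq \chainava^{r_i}_i$, apply \Cref{lem:rlmd-confirmed-always-canonical} in the equivalent execution, and transfer the fork-choice output back via \Cref{cond:3-2-rlmd}/\Cref{cond:4-2-rlmd}.
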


\begin{proof}

  The proof follows the one for \Cref{lem:ga-confirmed-always-canonical-ffg} with the following changes only.
  \begin{enumerate}
    \item {Replace \Cref{algo:prob-ga-fast} with \Cref{alg:rlmd},}
    \item {Replace \Cref{alg:3sf-tob-noga} with \Cref{alg:rlmd-ffg},}
    \item Replace \Cref{lem:equiv-ga2} with \Cref{lem:equiv-rlmd}, and
    \item Replace \Cref{lem:ga-confirmed-always-canonical} with \Cref{lem:rlmd-confirmed-always-canonical}. \qedhere
  \end{enumerate}
\end{proof}

\begin{theorem}[Reorg Resilience]\label{thm:reorg-res-prop-rlmd-ffg}
  The chain $\chainava$ output by \Cref{alg:rlmd-ffg} is $\eta$-reorg-resilient.
\end{theorem}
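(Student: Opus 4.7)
The plan is to mirror the proof of \Cref{thm:reorg-res-prop-tob-ffg}, since the reorg-resilience argument has exactly the same shape once the correct RLMD-based replacements are substituted. I would begin by fixing a slot $t_p$ with an honest proposer $v_p$ casting a \textsc{propose} message for chain $\chain_p$, together with an arbitrary round $r_i$ and validator $v_i$ honest in round $r_i$. The goal is then to show that $\chain_p$ does not conflict with $\chainava^{r_i}_i$.

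Next, I would pick any slot $t_j > \max(t_p, \slot(r_i))$ and invoke Constraint~\eqref{eq:sleepy-req} to conclude that $H_{\voting{t_j}}$ is nonempty, and take some $v_j \in H_{\voting{t_j}}$. Applying \Cref{lem:vote-proposal-fast-conf-ffg-rlmd} to the honest proposal in slot $t_p$ then yields $\hfcvote{t_j}_j \succeq \chain_p$, while \Cref{lemma:chava-prefix-canonical-ffg}, instantiated with $r_j = \voting{t_j}$ (which is a vote round satisfying $r_j \ge r_i$), yields $\chainava^{r_i}_i \preceq \hfcvote{t_j}_j$. Composing these two facts shows that both $\chain_p$ and $\chainava^{r_i}_i$ are prefixes of the single chain $\hfcvote{t_j}_j$, hence one is a prefix of the other, so they do not conflict.

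I do not expect any real obstacle, since the heavy lifting has already been done in \Cref{lem:vote-proposal-fast-conf-ffg-rlmd} and \Cref{lemma:chava-prefix-canonical-ffg}; both are obtained through the \rlmdghost-equivalence of \Cref{lem:equiv-rlmd} to their FFG-free analogues \Cref{lem:vote-proposal-fast-conf-rlmd} and \Cref{lem:rlmd-confirmed-always-canonical}. The only point requiring care is to verify that the witness slot $t_j$ admits an active honest validator at $\voting{t_j}$, which is immediate from the sleepiness constraint in force throughout this section, and to check that the hypothesis of \Cref{lemma:chava-prefix-canonical-ffg} on $r_j$ being a propose or vote round is satisfied — which it is by construction.
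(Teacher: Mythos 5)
Your proof is correct and mirrors the paper's own argument: the paper proves this theorem by referring back to the proof of \Cref{thm:reorg-res-prop-tob}, making precisely the three substitutions you made ($\Chain \to \chainava$, \Cref{lem:vote-proposal-fast-conf} $\to$ \Cref{lem:vote-proposal-fast-conf-ffg-rlmd}, \Cref{lem:ga-confirmed-always-canonical} $\to$ \Cref{lemma:chava-prefix-canonical-ffg}), and you have simply spelled out that substituted argument explicitly. Your use of Constraint~\eqref{eq:sleepy-req} to witness non-emptiness of $H_{\voting{t_j}}$ is the appropriate constraint for the RLMD-based section (the paper's template proof cites a different constraint from the other section, but the intended non-emptiness fact is the same), and your reading of $\hfcvote{t_j}_j \succeq \chain_p$ from \Cref{lem:vote-proposal-fast-conf-ffg-rlmd} is justified because in \Cref{alg:rlmd-ffg} the chain a validator \textsc{vote}s for is exactly the output of $\rlmdghost$ at the vote round.
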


\begin{proof}
  {
    As in the proof of \Cref{thm:reorg-res-prop-tob-ffg}, we can follow the proof of \Cref{thm:reorg-res-prop-tob} with the following changes only.
    \begin{enumerate}
      \item Replace {$\Chain$ with $\chainava$},
      \item Replace \Cref{lem:vote-proposal-fast-conf} with \Cref{lem:vote-proposal-fast-conf-ffg-rlmd}, and
      \item Replace \Cref{lem:ga-confirmed-always-canonical} with \Cref{lemma:chava-prefix-canonical-ffg}.\qedhere
    \end{enumerate}
  }
\end{proof}

\begin{theorem}[$\eta$-dynamic-availability]
  \label{thm:dyn-avail-rlmd-ffg}
\Cref{alg:rlmd-ffg} is $\eta$-dynamically-available.
\end{theorem}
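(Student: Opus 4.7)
The plan is to mirror the proof of \Cref{thm:dyn-avail-fast-conf-tob-ffg} almost verbatim, since the overall structure of \Cref{alg:rlmd-ffg} relative to \Cref{alg:rlmd} parallels that of \Cref{alg:3sf-tob-noga} relative to \Cref{algo:prob-ga-fast}. Concretely, I would start from the proof of \Cref{thm:dyn-avail-rlmd} (the $\eta$-dynamic-availability of the underlying \RLMDGHOST variant) and adapt it by replacing $\Chain$ with $\chainava$, replacing \Cref{line:alg2-vote-chainava-rlmd} of \Cref{alg:rlmd} with \Cref{line:alg2-vote-chainava} of \Cref{alg:rlmd-ffg}, swapping in \Cref{lem:vote-proposal-fast-conf-ffg-rlmd} for \Cref{lem:vote-proposal-fast-conf-rlmd}, and swapping in \Cref{lemma:chava-prefix-canonical-ffg} for \Cref{lem:rlmd-confirmed-always-canonical}. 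The role played by \Cref{lem:ga-mfc-proposer-shorter-than-t} in the GA-based proof is taken over here by \Cref{lem:rlmd-proposer-shorter-than-t}, which ensures that honest proposers can rely on \Cref{prop:extend} to include the current transaction pool.

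For liveness, I would pick a round $r$, a target round $r_i \geq r + 8\kappa\Delta + \Delta$, and an honest validator $v_i \in H_{r_i}$. Using Lemma~2 of~\cite{rlmd} I find with overwhelming probability a slot $t_p \in [t+1, t+\kappa]$ with an honest proposer, whose proposal $\chain_p$ then, by \Cref{lem:rlmd-proposer-shorter-than-t} and \Cref{prop:extend}, contains all transactions from $\txpool^{\proposing{t_p}} \supseteq \txpool^r$. Then \Cref{lem:vote-proposal-fast-conf-ffg-rlmd} yields $\hfcvote{t_i}_i \succeq \chain_p$, and \Cref{line:alg2-vote-chainava} of \Cref{alg:rlmd-ffg} gives $\chainava^{r_i}_i \succeq (\hfcvote{t_i}_i)^{\lceil\kappa,t_i}\succeq \chain_p$ whenever $r_i$ is a vote round. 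The extra subtlety compared to \Cref{thm:dyn-avail-rlmd} is that \Cref{alg:rlmd-ffg} may also update $\chainava$ at a fast confirmation round via \Crefrange{line:rlmdffg-set-fastcand-fconf}{line:alg2ffg-set-chaava-to-bcand}, potentially replacing $\chainava$ by $\GJ(\V^{r_i}_i).\chain$ (because $\texttt{fastconfirm}$ may fall back to that). I handle this by the same additional case analysis used in \Cref{thm:dyn-avail-fast-conf-tob-ffg}: the chain of the target of any honest \textsc{ffg-vote} is itself a previously-set $\chainava$ of some honest validator (by \Cref{line:set-target-checkpoint,line:alg2ffg-vote}), so $\GJ(\V^{r_i}_i).\chain$ coincides with the $\chainava$ of some validator $v_k\in H_{\voting{t_k}}$, $t_k\le \slot(r_i)$. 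Combining this with \Cref{lemma:chava-prefix-canonical-ffg} and $\chainava^{\voting{\slot(r_i)}}_i \succeq \chain_p$ already established forces $\GJ(\V^{r_i}_i).\chain$ to be comparable to $\chain_p$, whence any fast-confirmation update can only extend $\chainava$ past $\chain_p$, not retract it.

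For safety, I pick two honest validators $v_i, v_j$ at rounds $r_i,r_j$ and any slot $t_k>\max(\slot(r_i),\slot(r_j))$; Constraint~\eqref{eq:pi-sleepiness} gives $v_k\in H_{\voting{t_k}}$, and \Cref{lemma:chava-prefix-canonical-ffg} yields $\hfcvote{t_k}_k \succeq \chainava^{r_i}_i$ and $\hfcvote{t_k}_k \succeq \chainava^{r_j}_j$, so $\chainava^{r_i}_i$ and $\chainava^{r_j}_j$ do not conflict.

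The main obstacle I anticipate is bookkeeping around the fast-confirmation update at line \Cref{line:alg2ffg-set-chaava-to-bcand}, specifically the possibility that the $\texttt{fastconfirm}$ branch returns the fallback value $\GJ(\V_i).\chain$ rather than an actual $\frac{2}{3}n$-supported chain. I need to argue carefully that, because honest \textsc{ffg-vote} targets are always previously-observed $\chainava$ values of honest validators, the greatest justified chain in any honest view is itself sandwiched between the guarantees of \Cref{lem:vote-proposal-fast-conf-ffg-rlmd} and \Cref{lemma:chava-prefix-canonical-ffg}, and hence compatible with $\chain_p$. This is precisely the additional case distinction added in \Cref{thm:dyn-avail-fast-conf-tob-ffg}, and the same pattern should transfer cleanly here.
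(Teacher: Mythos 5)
Your proposal matches the paper's proof: the paper establishes this theorem by re-running the base dynamic-availability argument with exactly the substitutions you list (\Cref{lem:rlmd-proposer-shorter-than-t}, \Cref{lem:vote-proposal-fast-conf-ffg-rlmd}, \Cref{lemma:chava-prefix-canonical-ffg}, and the line-reference swap to \Cref{line:alg2-vote-chainava}), plus the same additional liveness case for a fast-confirmation round in which $\chainava_i$ is set to $\GJ(\V_i).\chain$, handled exactly as you describe via honest \textsc{ffg-vote} targets being earlier $\chainava$ values and the canonical-prefix lemma. The only nitpick is that, because justification only requires target chains to extend the justified chain, $\GJ(\V^{r_i}_i).\chain$ is a \emph{prefix} of some honest validator's earlier $\chainava$ rather than coinciding with it — which is what the paper uses and what your comparability argument actually needs.
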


\begin{proof}
      {Note that \Cref{lem:rlmd-proposer-shorter-than-t} holds for \Cref{alg:rlmd-ffg} as well.}
      Then, as in the proof of \Cref{thm:dyn-avail-fast-conf-tob-ffg}, we can follow the proof of \Cref{thm:dyn-avail-fast-conf-tob} with the following changes only.
      \begin{enumerate}
        \item Replace {$\Chain$ with $\chainava$},
        \item Replace the reference to \Cref{line:algga-no-ffg-vote-chainava} of \Cref{algo:prob-ga-fast} with \Cref{line:alg2-vote-chainava} of \Cref{alg:rlmd-ffg},
        \item Replace \Cref{lem:vote-proposal-fast-conf} with \Cref{lem:vote-proposal-fast-conf-ffg-rlmd},
        \item Replace \Cref{lem:ga-confirmed-always-canonical} with \Cref{lemma:chava-prefix-canonical-ffg},
        \item {Replace \Cref{lem:ga-mfc-proposer-shorter-than-t} with \Cref{lem:rlmd-proposer-shorter-than-t}, and}
        \item In the proof of liveness, consider the additional case mentioned ad point 5 of the proof of \Cref{thm:dyn-avail-fast-conf-tob-ffg} with the following changes only.
        \begin{enumerate}[label*=\arabic*]
          \item Replace $\mfc$ with $\hfcvp$ and
          \item Replace \Cref{lem:ga-confirmed-always-canonical-ffg} with \Cref{lemma:chava-prefix-canonical-ffg}.\qedhere
        \end{enumerate}
      \end{enumerate}
\end{proof}

\paragraph{Asynchrony Resilience.}

We now proceed to demonstrate that \Cref{alg:rlmd-ffg} also satisfies Asynchrony Resilience, similar to the proof provided for \Cref{alg:3sf-tob-noga} in Section~\ref{sec:ga-based}. 

Due to the FFG component, establishing Asynchrony Resilience for \Cref{alg:rlmd-ffg} necessitates that Constraint~\eqref{eq:async-condition2} holds, just as it was required for \Cref{alg:3sf-tob-noga}. 

\begin{lemma}[Analogous of \Cref{lem:asyn-induction2-rlmd}]\label{lem:asyn-induction2-rlmd-ffg}
  Assume $\pi > 0$ and
  take a slot $t \leq t_a$ such that, in slot $t$, any validator in $H_{\voting{t}}$ casts a \textsc{vote} message for a chain extending $\chain$.
  Then, for any slot $t_i \geq t$, any validator in $W_{\voting{t_i}}$ casts a \textsc{vote} message for a chain extending $\chain$.
\end{lemma}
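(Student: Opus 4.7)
The plan is to mirror the transition from \Cref{lem:asyn-induction2} to \Cref{lem:asyn-induction2-ffg} (which introduces FFG-related bookkeeping) and apply it on top of the transition from \Cref{lem:asyn-induction2} to \Cref{lem:asyn-induction2-rlmd} (which ports the argument from \TOBSVD{} to \RLMDGHOST{}). Concretely, I would proceed by induction on $t_i$, strengthening the statement with the same auxiliary invariant used in \Cref{lem:asyn-induction2-ffg}: for every $t'' \leq t_i$, letting $X^{t'',\chain}$ denote the set of validators $v_j \in H_{\voting{t''}}$ such that $\chainava^{\voting{t''}}_j$ conflicts with $\chain$, we have $|X^{t'',\chain} \cup A_\infty| < \frac{2}{3}n$. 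This invariant is precisely what ensures, via Constraint~\eqref{eq:async-condition2}, that no justified checkpoint conflicting with $\chain$ can enter the view of an aware validator at a subsequent propose or vote round.

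For the base case $t_i \in [t, t_a]$, I would appeal to \Cref{lem:keep-voting-tob-fast-conf-ffg-rlmd}, the synchronous keep-voting lemma for \Cref{alg:rlmd-ffg}, to conclude that every $v_j \in H_{\voting{t_i}}$ votes for an extension of $\chain$; since $f < \frac{n}{3}$, the invariant holds. For the inductive step, I would split into three cases as in \Cref{lem:asyn-induction2-ffg}. When $t_i \in [t_a+1, t_a+\pi+1]$, I would replay the argument of \Cref{lem:asyn-induction-rlmd}, but applied to \Cref{alg:rlmd-ffg}: the inductive invariant guarantees that $\GJ(\Vfrozen_i).\chain$ does not conflict with $\chain$ for any $v_i \in W_{\voting{t_i}}$, so the output of $\rlmdghost(\Vfrozen_i, \GJ(\Vfrozen_i).\chain, t_i)$ is not forced off $\chain$, and the $\FIL_{\text{rlmd}}$-filtered weight argument then yields a child of $\chain$ at every branching point. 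When $t_i = t_a + \pi + 2$, I would exploit $\eta = \pi + 2$ to keep slot-$t_a$ votes unexpired, combine this with synchrony of slot $t_a + \pi + 1$ and Constraint~\eqref{eq:async-condition}, and re-use the reasoning of \Cref{lem:asyn-induction}. When $t_i \geq t_a + \pi + 3$, \Cref{lem:keep-voting-rlmd-fast-conf} applies directly; in all three cases, $f < \frac{n}{3}$ together with Constraint~\eqref{eq:async-condition2} preserves the invariant.

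The main obstacle, as in \Cref{lem:asyn-induction2-ffg}, lies in maintaining the auxiliary invariant across the asynchronous window: unlike the TOB-based analysis, here the $\rlmdghost$ fork-choice is rooted at the greatest justified checkpoint, so if some aware validator's view contained a justified checkpoint conflicting with $\chain$, the fork-choice would immediately produce a conflicting chain regardless of vote weight on $\chain$. The inductive invariant, together with the fact that the target of an honest \textsc{ffg-vote} at slot $t''$ equals $(\chainava^{\voting{t''}}_j, t'')$ (\Cref{line:set-target-checkpoint,line:alg2ffg-vote}) and that supermajority intersections with $A_\infty$ yield at least $\frac{n}{3}$ honest validators, is precisely the tool that rules out such a conflicting justification; this is exactly why Constraint~\eqref{eq:async-condition2} is assumed.
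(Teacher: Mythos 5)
Your proposal is correct and takes essentially the same route as the paper: the paper proves this lemma as a direct adaptation of \Cref{lem:asyn-induction2-ffg}, substituting \Cref{lem:keep-voting-tob-fast-conf-ffg-rlmd}, \Cref{lem:asyn-induction-rlmd}, \Cref{lem:asyn-induction2-rlmd}, and \Cref{lem:keep-voting-rlmd-fast-conf} for their \TOBSVD{} counterparts, which is exactly the induction with the $|X^{t'',\chain}\cup A_\infty|<\frac{2}{3}n$ invariant and three-case split (asynchronous window, slot $t_a+\pi+2$, and later slots) that you spell out. Your additional remarks on why the invariant is needed for the $\GJ$-rooted \rlmdghost{} fork-choice and on the role of Constraint~\eqref{eq:async-condition2} match the paper's intent.
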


\begin{proof}
  The proof follows the one for \Cref{lem:asyn-induction2-ffg} with the following changes only.
  \begin{enumerate}
    \item {Replace \Cref{alg:3sf-tob-noga} with \Cref{alg:rlmd-ffg}},
    \item Replace the reference to \Cref{line:algtob-vote-chainava} of \Cref{alg:3sf-tob-noga} with the reference to \Cref{line:alg2-vote-chainava} of \Cref{alg:rlmd-ffg},
    \item Replace \Cref{lem:keep-voting-tob-fast-conf-ffg} with \Cref{lem:keep-voting-tob-fast-conf-ffg-rlmd},
    \item Replace \Cref{lem:asyn-induction} with \Cref{lem:asyn-induction-rlmd},
    \item Replace \Cref{lem:asyn-induction2} with \Cref{lem:asyn-induction2-rlmd}, and
    \item Replace \Cref{lem:keep-voting-tob-fast-conf} with \Cref{lem:keep-voting-rlmd-fast-conf}. \qedhere
  \end{enumerate}
\end{proof}

\begin{lemma}[Analogous of \Cref{lem:asyn-induction3-rlmd}]\label{lem:asyn-induction3-rlmd-ffg}
  Assume $\pi > 0$ and
  take a slot $t \leq t_a$ such that, in slot $t$, any validator in $H_{\voting{t}}$ casts a \textsc{vote} message for a chain extending $\chain$.
  Then, for any round $r_i \geq \voting{t}$ and validator $v_i \in W_{r_i}$, $\Chain^{r_i}_i$ does not conflict with $\chain$.
\end{lemma}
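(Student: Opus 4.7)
The plan is to mirror the proof of \Cref{lem:asyn-induction3-ffg} (the FFG-analogous of \Cref{lem:asyn-induction3} used for \Cref{alg:3sf-tob-noga}), adapting it to \Cref{alg:rlmd-ffg}. The structure of \Cref{alg:rlmd-ffg} closely parallels that of \Cref{alg:3sf-tob-noga}, the only substantive differences being the fork-choice function ($\rlmdghost$ in place of $\mfc$) and the fact that $\GJ$ is recomputed from $\V^{\text{frozen}}$ on the fly rather than tracked via a dedicated $\GJ^{\text{frozen}}$ variable. Therefore I expect the same case analysis on the type of round $r_i$ (vote vs.\ fast-confirmation) to go through almost verbatim, using \Cref{lem:asyn-induction2-rlmd-ffg} in place of \Cref{lem:asyn-induction2-ffg} and \Cref{lem:asyn-induction2-rlmd} in place of \Cref{lem:asyn-induction2}.

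Concretely, I would first dispatch the case $r_i = \voting{\slot(r_i)}$: by \Cref{lem:asyn-induction2-rlmd-ffg}, $v_i$ casts a \textsc{vote} for some $\chain' \succeq \chain$, and \Cref{line:alg2-vote-chainava} of \Cref{alg:rlmd-ffg} forces $\chainava^{r_i}_i \preceq \chain'$, so $\chainava^{r_i}_i$ cannot conflict with $\chain$. For $r_i = \fastconfirming{\slot(r_i)}$, I would split into two subcases. In the first, $\chainava^{r_i}_i = \chainava^{r_i - 1}_i$; the joining protocol ensures $v_i \in W_{\voting{\slot(r_i)}}$, and the vote-round case applies. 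In the second, $\chainava^{r_i}_i$ is updated by \Cref{line:alg2ffg-set-chaava-to-bcand} to the output $\fastcand$ of $\texttt{fastconfirm}(\V^{r_i}_i,\slot(r_i))$; here I split further on whether $\fastcand$ is a fast-confirmed chain (obtained via $\texttt{fastconfirmsimple}$) or is $\GJ(\V^{r_i}_i).\chain$.

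In the fast-confirmed subcase, the argument of Case 2.2 of \Cref{lem:asyn-induction3-rlmd} applies directly: by \Cref{lem:asyn-induction2-rlmd-ffg} all votes of validators in $W_{\voting{\slot(r_i)}}$ extend $\chain$, and Constraint~\eqref{eq:async-condition} together with Constraint~\eqref{eq:async-condition2} bound the non-conforming voting weight below $\tfrac{n}{3}$, so no quorum of \textsc{vote}s for a chain conflicting with $\chain$ can exist, forcing $\fastcand \succeq \chain$. For the remaining subcase $\fastcand = \GJ(\V^{r_i}_i).\chain$, I would reuse the trick from Case 2.3 of \Cref{lem:asyn-induction3-ffg}: since \textsc{ffg-vote}s of honest validators target $(\chainava^{\voting{\cdot}}_{\cdot},\cdot)$ and $f < \tfrac{n}{3}$, any justified checkpoint must have a chain that was the $\chainava$ of some honest validator at some earlier vote round $t_k \leq \slot(r_i)$; applying the already-established vote-round case at $t_k$ shows that $\GJ(\V^{r_i}_i).\chain$ does not conflict with $\chain$.

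The main obstacle I anticipate is the last subcase, where the fast-confirm function falls back to $\GJ(\V^{r_i}_i).\chain$. One has to rule out the possibility that the adversary justifies, during or after the asynchronous window, a checkpoint whose chain conflicts with $\chain$. This is exactly why Constraint~\eqref{eq:async-condition2} was introduced (as explained in the paragraph preceding \Cref{lem:asyn-induction2-ffg}), and it should again be sufficient here: it bounds the combined weight of validators awakened during the asynchronous period plus $A_\infty$ below $\tfrac{2}{3}$, preventing any conflicting supermajority justification from ever forming. The resulting proof will then conclude exactly as in \Cref{lem:asyn-induction3-ffg} via the enumerated list of substitutions (\Cref{alg:3sf-tob-noga} $\to$ \Cref{alg:rlmd-ffg}, \Cref{line:algtob-vote-chainava} $\to$ \Cref{line:alg2-vote-chainava}, \Cref{lem:asyn-induction2-ffg} $\to$ \Cref{lem:asyn-induction2-rlmd-ffg}, etc.), with the Case 2.3 argument carried over unchanged modulo the renaming $\mfc \to \rlmdghost$.
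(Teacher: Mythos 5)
Your proposal is correct and takes essentially the same route as the paper, whose proof is precisely a reduction to the case analysis of \Cref{lem:asyn-induction3} with the substitutions $\Chain \to \chainava$, \Cref{lem:asyn-induction2} $\to$ \Cref{lem:asyn-induction2-rlmd-ffg}, the line references of \Cref{alg:rlmd-ffg}, plus the extra fast-confirmation subcase $\fastcand = \GJ(\V^{r_i}_i).\chain$ handled exactly as in point 4 of \Cref{lem:asyn-induction3-ffg} (honest \textsc{ffg-vote} targets are earlier $\chainava$ values, so the vote-round case applies). The only nit is quantitative: in the quorum subcase the paper only argues that the non-conforming voters are too few to form a $\frac{2}{3}n$ quorum (fewer than half of the validators), not that they are below $\frac{n}{3}$ as you state, but this does not affect the conclusion.
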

\begin{proof}
 {
  As in the proof of \Cref{lem:asyn-induction3-ffg}, we can follow the same reasoning used in the proof of \Cref{lem:asyn-induction3} with the following changes only.
  \begin{enumerate}
    \item Replace {$\Chain$ with $\chainava$},
    \item Replace \Cref{lem:asyn-induction2} with  \Cref{lem:asyn-induction2-rlmd-ffg},
    \item In the proof of Case 1, refer to lines \Cref{line:alg2-vote-chainava,line:set-target-checkpoint,line:alg2ffg-vote} of \Cref{alg:rlmd-ffg} rather than \Crefrange{line:algga-no-ffg-vote-chainava}{line:algga-no-ffg-vote-comm} of \Cref{algo:prob-ga-fast}.
    \item For Case 2, consider the additional sub case and related proof as per point 4 of \Cref{lem:asyn-induction3-ffg}.\qedhere
  \end{enumerate}
 }
\end{proof}

\begin{theorem}[Asynchrony Reorg Resilience - Analogous of \Cref{thm:async-resilience-rlmd}]
  \label{thm:async-resilience-rlmd-ffg}
  \Cref{alg:rlmd-ffg} is $\eta$-asynchrony-reorg-resilient.
\end{theorem}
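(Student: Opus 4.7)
The plan is to mirror the proof strategy already used for the analogous theorem in the \TOBSVD-based protocol (namely \Cref{thm:async-resilience-tob-ffg}), since the structural skeleton of the argument is identical: the FFG component has already been absorbed into the dynamically-available analysis by the \rlmdghost-equivalence machinery of \Cref{lem:equiv-rlmd}, so what remains is to stitch together Reorg Resilience (in the synchronous phase) with the asynchrony-resilience induction (during and after the short asynchronous window).

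First I would fix an $\eta$-compliant execution with $\pi > 0$, pick a slot $t_p \leq t_a$ whose proposer is honest at round $\proposing{t_p}$ and casts a \textsc{propose} message for some chain $\chain_p$, and then take an arbitrary round $r_i \geq T_\mathsf{reorg}$ with $v_i \in W_{r_i}$. The proof then splits into the two natural cases that parallel \Cref{thm:async-resilience-tob,thm:async-resilience-rlmd}: either $r_i < \voting{t_p}$, in which case Reorg Resilience from \Cref{thm:reorg-res-prop-rlmd-ffg} directly gives $\chainava^{r_i}_i$ not conflicting with $\chain_p$ (here $W_{r_i} \subseteq H_{r_i}$ outside the asynchronous window, and during the window the restriction in $W_{r_i}$ only strengthens the applicability); or $r_i \geq \voting{t_p}$, in which case I invoke \Cref{lem:vote-proposal-fast-conf-ffg-rlmd} to conclude that every validator in $H_{\voting{t_p}}$ casts a \textsc{vote} for a chain extending $\chain_p$, and then feed this directly into \Cref{lem:asyn-induction3-rlmd-ffg} with the choice $\chain := \chain_p$ and $t := t_p$, which yields $\chainava^{r_i}_i$ not conflicting with $\chain_p$ for any $v_i \in W_{r_i}$ with $r_i \geq \voting{t_p}$.

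Concretely, the proof reduces to the following short bullet: apply \Cref{thm:reorg-res-prop-rlmd-ffg} when $r_i < \voting{t_p}$, and apply \Cref{lem:vote-proposal-fast-conf-ffg-rlmd,lem:asyn-induction3-rlmd-ffg} when $r_i \geq \voting{t_p}$. Because this is literally the same pattern as \Cref{thm:async-resilience-tob-ffg}, I expect the write-up to be a ``follows the proof of \Cref{thm:async-resilience-tob} with the following changes only''-style chain of substitutions: (i) replace $\mfc$ by $\hfcvp$; (ii) replace \Cref{thm:reorg-res-prop-tob-ffg} by \Cref{thm:reorg-res-prop-rlmd-ffg}; (iii) replace \Cref{lem:vote-proposal-fast-conf-ffg} by \Cref{lem:vote-proposal-fast-conf-ffg-rlmd}; (iv) replace \Cref{lem:asyn-induction3-ffg} by \Cref{lem:asyn-induction3-rlmd-ffg}.

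The main obstacle, which has in fact already been discharged upstream rather than in this theorem itself, is ensuring that \Cref{lem:asyn-induction3-rlmd-ffg} is genuinely applicable: its hypothesis requires that \emph{every} validator in $H_{\voting{t_p}}$ \textsc{vote}s for a chain extending $\chain_p$, not merely those in $W_{\voting{t_p}}$. That step is exactly what \Cref{lem:vote-proposal-fast-conf-ffg-rlmd} gives us for $t' = t_p$, so no extra work is needed at this point; the genuinely delicate reasoning about \textsc{ffg-vote}s surviving the short asynchronous window (via \Cref{eq:async-condition2} and the $\eta = \pi + 2$ expiration bound) has already been absorbed into the induction behind \Cref{lem:asyn-induction2-rlmd-ffg,lem:asyn-induction3-rlmd-ffg}. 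Consequently the proof itself should be a three- or four-line citation chain.
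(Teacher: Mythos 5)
Your proposal is correct and follows essentially the same route as the paper: the paper's proof of \Cref{thm:async-resilience-rlmd-ffg} likewise defers to the reasoning of \Cref{thm:async-resilience-tob-ffg} (and hence, ultimately, to the case split of \Cref{thm:async-resilience-tob}: apply Reorg Resilience when $r_i < \voting{t_p}$, and apply the vote-propagation lemma plus the asynchrony induction otherwise), with exactly the substitutions you list, namely \Cref{thm:reorg-res-prop-rlmd-ffg}, \Cref{lem:vote-proposal-fast-conf-ffg-rlmd}, and \Cref{lem:asyn-induction3-rlmd-ffg}. Your observation that \Cref{lem:vote-proposal-fast-conf-ffg-rlmd} is what discharges the ``all of $H_{\voting{t_p}}$'' hypothesis of \Cref{lem:asyn-induction3-rlmd-ffg} matches the paper's intent precisely.
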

\begin{proof}
  We can follow the same reasoning used in the proof of \Cref{thm:async-resilience-tob-ffg}.
  {
    As in the proof of \Cref{thm:async-resilience-tob-ffg}, we can follow the same reasoning used in the proof of \Cref{thm:async-safety-resilience-tob} with the following changes only.
    \begin{enumerate}
      \item Replace {$\Chain$ with $\chainava$},
      \item Replace \Cref{thm:dyn-avail-fast-conf-tob} with \Cref{thm:dyn-avail-rlmd-ffg}, and 
      \item Replace \Cref{lem:asyn-induction3} with \Cref{lem:asyn-induction3-rlmd-ffg}.\qedhere
    \end{enumerate}
  }
\end{proof}

\begin{theorem}[Asynchrony Safety Resilience - Analogous of \Cref{thm:async-safety-resilience-rlmd}]
  \label{thm:async-safety-resilience-rlmd-ffg}
  \Cref{alg:rlmd-ffg} is $\eta$-asynchrony-safety-resilient.
\end{theorem}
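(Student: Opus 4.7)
The plan is to mirror the pattern used for \Cref{thm:async-safety-resilience-tob-ffg}, namely reducing to the proof of the corresponding theorem for the bare dynamically-available protocol \Cref{thm:async-safety-resilience-rlmd} with a small list of symbol/lemma replacements. Concretely, I would start from any $\eta$-compliant execution with $\pi>0$, fix any round $r_i \leq 4\Delta t_a + \Delta$, any round $r_j$, any validator $v_i$ honest in $r_i$, and any validator $v_j \in W_{r_j}$, and split on whether $r_j < \voting{t_a}$ or $r_j \geq \voting{t_a}$.

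For the first case ($r_j<\voting{t_a}$), both $r_i$ and $r_j$ lie before the asynchronous period begins, so the statement reduces to ordinary $\eta$ Safety of the available chain, which is supplied by \Cref{thm:dyn-avail-rlmd-ffg}. That is, under the stronger $\eta$-compliance we assume, $\chainava^{r_i}_i$ and $\chainava^{r_j}_j$ are already comparable, hence non-conflicting.

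For the second case ($r_j\geq \voting{t_a}$), the idea is to push $\chainava^{r_i}_i$ into the votes cast at the last fully synchronous slot and then propagate through the asynchronous interval. I would apply \Cref{lemma:chava-prefix-canonical-ffg} (the FFG-aware analogue of \Cref{lem:rlmd-confirmed-always-canonical}) to conclude that for any $v_k \in H_{\voting{t_a}}$, $\hfcvote{t_a}_k \succeq \chainava^{r_i}_i$, so in slot $t_a$ every such $v_k$ casts a \textsc{vote} for a chain extending $\chainava^{r_i}_i$. Then \Cref{lem:asyn-induction3-rlmd-ffg} directly yields that $\chainava^{r_j}_j$ does not conflict with $\chainava^{r_i}_i$, closing the case.

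The substantive technical content has already been isolated in the supporting lemmas (\Cref{lemma:chava-prefix-canonical-ffg} bridging the confirmed chain to future fork-choice outputs, and \Cref{lem:asyn-induction3-rlmd-ffg} carrying agreement through the asynchronous window, including the FFG-specific care about justified checkpoints via Constraint~\eqref{eq:async-condition2}), so the only obstacle I anticipate in actually writing the proof is bookkeeping: checking that every replacement of $\Chain$ by $\chainava$, of \Cref{thm:dyn-avail-rlmd} by \Cref{thm:dyn-avail-rlmd-ffg}, of \Cref{lem:rlmd-confirmed-always-canonical} by \Cref{lemma:chava-prefix-canonical-ffg}, and of \Cref{lem:asyn-induction3-rlmd} by \Cref{lem:asyn-induction3-rlmd-ffg} remains valid in context (in particular that \Cref{lemma:chava-prefix-canonical-ffg} is applicable for the pair $(r_i, \voting{t_a})$, which requires $r_i \leq 4\Delta t_a + \Delta \leq \voting{t_a}$, as granted by the hypothesis $r_i \leq 4\Delta t_a + \Delta$).
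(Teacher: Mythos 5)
Your proof follows exactly the same route as the paper's: the two-case split on $r_j < \voting{t_a}$ versus $r_j \geq \voting{t_a}$, reduction of Case~1 to $\eta$ Safety via \Cref{thm:dyn-avail-rlmd-ffg}, using the FFG-aware ``confirmed chain is a prefix of later fork-choice'' lemma to push $\chainava^{r_i}_i$ into the slot-$t_a$ votes, and then \Cref{lem:asyn-induction3-rlmd-ffg} to carry agreement across the asynchronous window. If anything you are slightly more explicit than the paper, whose terse three-item replacement list (\Cref{thm:dyn-avail-fast-conf-tob} $\to$ \Cref{thm:dyn-avail-rlmd-ffg}, \Cref{lem:asyn-induction3} $\to$ \Cref{lem:asyn-induction3-rlmd-ffg}, $\Chain \to \chainava$) omits the needed substitution of \Cref{lem:ga-confirmed-always-canonical} by \Cref{lemma:chava-prefix-canonical-ffg} in Case~2 — a gap you correctly fill, along with the applicability check $r_i \leq 4\Delta t_a + \Delta = \voting{t_a}$.
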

\begin{proof}
  {
    As in the proof of \Cref{thm:async-safety-resilience-tob-ffg}, we can follow the same reasoning used in the proof of \Cref{thm:async-safety-resilience-tob} with the following changes only.
    \begin{enumerate}
      \item Replace {$\Chain$ with $\chainava$},
      \item Replace \Cref{thm:dyn-avail-fast-conf-tob} with \Cref{thm:dyn-avail-rlmd-ffg}, and 
      \item Replace \Cref{lem:asyn-induction3} with \Cref{lem:asyn-induction3-rlmd-ffg}.\qedhere
    \end{enumerate}
  }
\end{proof}

\subsubsection{Partial synchrony}
\label{sec:psync-rlmd}
{
In this section, we show that \Cref{alg:rlmd-ffg}, like \Cref{alg:3sf-tob-noga}, ensures \Cref{prop:never-slashed,prop:chfin,prop:succ-for-ffg-liveness} and hence ensures that the chain $\chainfin$ is always Accountably Safe and is live after time $\max(\GST, \GAT) + {\Delta}$, meaning that \Cref{alg:rlmd-ffg} is an $\eta$-secure ebb-and-flow protocol.}

\begin{lemma} \label{thm:new-slashed-3sf}
  {\Cref{alg:rlmd-ffg} satisfies \Cref{prop:never-slashed}.}
\end{lemma}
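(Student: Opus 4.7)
The plan is to follow the same three-part structure as \Cref{lem:never-slashed-3sf-tob-noga}, verifying each sub-condition of \Cref{prop:never-slashed} for \Cref{alg:rlmd-ffg} in turn. The first two parts are one-line verifications by inspection of the code: an honest active validator emits a \textsc{vote} (and thus an \textsc{ffg-vote}) exactly once per slot at \Cref{line:alg2ffg-vote}, giving \Cref{prop:never-slashed-1}; and the target checkpoint constructed in \Cref{line:set-target-checkpoint} is $\T := (\chainava_i, t)$, so $\T.c = t$, giving \Cref{prop:never-slashed-2}.

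The substantive step is \Cref{prop:never-slashed-3}. Unlike in \Cref{alg:3sf-tob-noga}, \Cref{alg:rlmd-ffg} does not maintain an explicit $\GJfrozen$ variable; the source of the \textsc{ffg-vote} sent in slot $t$ is instead $\GJ(\V_i^\text{frozen})$ evaluated at round $\voting{t}$. My plan is therefore to reduce the monotonicity of the source to the monotonicity of the underlying frozen view together with the monotonicity of $\GJ$ as a function of its view argument.

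For the frozen view, I will argue that $\V_i^\text{frozen}$ is non-decreasing over time, restricted to rounds in which $v_i$ is honest. The only two assignments to $\V_i^\text{frozen}$ in \Cref{alg:rlmd-ffg} are (i) at the merge round, where $\V_i^\text{frozen} \gets \V_i$, and (ii) upon receiving a valid \textsc{propose} message, where $\V_i^\text{frozen} \gets \V_i^\text{frozen} \cup \V_p$. Both operations can only grow $\V_i^\text{frozen}$: the second is trivially a union, and the first replaces $\V_i^\text{frozen}$ by the current view $\V_i$ which, because it contains all previously received messages (and hence all messages in the prior $\V_i^\text{frozen}$, which was itself built from such messages), is a superset of the preceding $\V_i^\text{frozen}$. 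Sleep-wake transitions are handled by the joining protocol: upon waking, all queued messages are delivered to $\V_i$, and the validator executes the merge code in subsequent merge rounds (even while its output is suppressed before becoming active again), so no merge-round update can ever shrink $\V_i^\text{frozen}$.

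For the monotonicity of $\GJ$, I will observe from \Cref{alg:justification-finalization} that if $\V \subseteq \V'$ then every checkpoint justified in $\V$ is also justified in $\V'$, hence $\GJ(\V) \leq \GJ(\V')$ under the lexicographic ordering on checkpoints. Combining this with the monotonicity of the frozen view yields $\calS_{t} = \GJ(\V_i^{\text{frozen},\voting{t}}) \leq \GJ(\V_i^{\text{frozen},\voting{t'}}) = \calS_{t'}$ whenever $t \leq t'$ and $v_i$ is honest in both voting rounds, establishing \Cref{prop:never-slashed-3}. The main obstacle is being careful about sleep-wake transitions, but the joining protocol directly provides the needed guarantee, so this reduces to an inductive observation on the two update sites of $\V_i^\text{frozen}$.
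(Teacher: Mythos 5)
Your proposal is correct and follows essentially the same route as the paper: the first two constraints are verified by inspection of \Cref{line:alg2ffg-vote} and \Cref{line:set-target-checkpoint}, and \Cref{prop:never-slashed-3} is reduced to the observation that the source is $\GJ(\Vfrozen_i)$ evaluated at the voting round together with monotonic growth of $\Vfrozen_i$. The only difference is one of detail: the paper simply asserts that ``$\Vfrozen_i$ only keeps growing from $t$ to $t'$'' and that this forces $\GJ(\Vfrozen[\voting{t}]_i) \leq \GJ(\Vfrozen[\voting{t'}]_i)$, whereas you explicitly check both update sites of $\Vfrozen_i$ and the monotonicity of $\GJ$ in the view argument, which is a sound (and slightly more careful) elaboration of the same argument.
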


\begin{proof}

  {Let us prove that Algorithm \ref{alg:rlmd-ffg} ensures all the conditions listed in \Cref{prop:never-slashed}.}

  \begin{description}
    \item[\Cref{prop:never-slashed-1}.] By \Cref{line:alg2-vote}, an honest active validator sends only one \textsc{ffg-vote} in any slot.
    \item[\Cref{prop:never-slashed-2}.] Direct consequence of \Cref{line:set-target-checkpoint}.
    \item[\Cref{prop:never-slashed-3}.] Take any two slots $t$ and $t'$ with $t<t'$.
    Due to \Cref{line:alg2-vote},the source checkpoints of \textsc{ffg-vote} sent by an honest validator $v_i$ in slots $t$ and $t'$ are $\GJ(\Vfrozen[\voting{t}]_i)$ and $\GJ(\Vfrozen[\voting{t'}]_i)$, respectively.
    Since $\Vfrozen_i$ only keeps growing from $t$ to $t'$, we have $\GJ(\Vfrozen[\voting{t}]_i) \leq \GJ(\Vfrozen[\voting{t'}]_i)$. 
    \qedhere
  \end{description}
\end{proof}

\begin{lemma}
  \label{lem:vote-proposal-fast-conf-ffg-only-base-rlmd}
  Let $t$ be a slot with an honest proposer $v_p$  such that $\proposing{t} \geq \GST + \Delta$ and assume that~$v_p$ casts a $[\textsc{propose}, \chain_p, \V_p, \GJ_p, t, v_p]$ message.
  Then, for any validator $v_i \in H_{\voting{t}}$, $v_i$ casts a \textsc{vote} message for chain $\chain_p$ and $\GJfrozen[\voting{t}]_i = \GJ_p$.
\end{lemma}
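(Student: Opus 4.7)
The plan is to adapt the view-merge argument standard for \RLMDGHOST\ analyses so that it also tracks the greatest justified checkpoint. First I would establish two inclusions that together imply $\V^{\text{frozen},\voting{t}}_i$ is essentially $\V_p \cup \{\chain_p\}$. The easy direction comes from synchrony: since $\proposing{t} \geq \GST + \Delta$, the $[\textsc{propose}, \chain_p, \V_p, t, v_p]$ message reaches $v_i$ by round $\voting{t}$, and the \textbf{upon} clause sets $\V_i^\text{frozen} \gets \V_i^\text{frozen} \cup \V_p$, so $\V_p \cup \{\chain_p\} \subseteq \V^{\text{frozen},\voting{t}}_i$. The opposite direction uses that at round $\merging{(t-1)} = \proposing{t} - \Delta$, $v_i$ freezes $\V_i^{\merging{(t-1)}}$; every message therein was already received (hence gossiped) by $v_i$ at or before round $\proposing{t} - \Delta$, and by the synchronous gossip assumption reaches $v_p$ by round $\proposing{t}$. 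Hence $\V^{\text{frozen},\merging{(t-1)}}_i \subseteq \V_p$. Since no other updates to $\V^\text{frozen}_i$ occur in $[\proposing{t}, \voting{t}]$, this yields $\V^{\text{frozen},\voting{t}}_i = \V_p \cup \{\chain_p\}$ (up to blocks already present).

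Next, I would deduce the two conclusions of the lemma. For the justification equality, observe that $\chain_p$ contributes only blocks and no \textsc{ffg-vote}s, so the justification predicate $\mathsf{J}$ of \Cref{alg:justification-finalization} yields the same set of justified checkpoints in $\V^{\text{frozen},\voting{t}}_i$ as in $\V_p$. Therefore $\GJ(\V^{\text{frozen},\voting{t}}_i) = \GJ(\V_p) = \GJ_p$, which is precisely the $\GJfrozen[\voting{t}]_i = \GJ_p$ claim. For the voting conclusion, I would show $\chaincanrlmd_i := \rlmdghost(\V^{\text{frozen},\voting{t}}_i, \GJ_p.\chain, t) = \chain_p$, from which \Cref{line:alg2ffg-vote} yields the desired \textsc{vote} for $\chain_p$. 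Both the starting block $\GJ_p.\chain$ and the set of \textsc{vote}s available to compute the weights $w(\cdot,\cdot)$ coincide with those used by $v_p$ when computing $\rlmdghost(\V_p, \GJ_p.\chain, t)$ at propose time, so the traversal follows exactly the same descent path as $v_p$'s internal fork choice until it reaches $\chaincanrlmd_p = \rlmdghost(\V_p, \GJ_p.\chain, t)^{\lceil 1, t}$. At that node, $\chain_p$ is a child of $\chaincanrlmd_p$ in $v_i$'s view with $\chain_p.p = t$.

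The main obstacle is carefully handling the case where Byzantine validators have injected alternative slot-$t$ children of $\chaincanrlmd_p$ into $\V_p$: these siblings of $\chain_p$ all have weight zero (no valid \textsc{vote}s for slot $t$ or later have been cast yet), so the arg-max in $\rlmdghost$ is a tie. The argument is closed by invoking the standard tie-breaking convention that favors the honest proposer's block for the current slot, mirroring the analogous step in Lemma~1 of~\cite{rlmd}, after which the traversal terminates because $\chain_p$ has no children in $v_i$'s view. Putting these pieces together, $\chaincanrlmd_i = \chain_p$, which combined with the justification equality proved above yields the lemma. The whole argument essentially lifts the established \RLMDGHOST\ view-merge property from the plain-block-tree setting to the justified-chain-restricted setting by observing that restricting $\rlmdghost$ to descend from $\GJ_p.\chain$ does not disturb any of the relevant equalities, since $\GJ_p.\chain$ is a common ancestor of both $\chaincanrlmd_p$ and $\chain_p$.
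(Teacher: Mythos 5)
The paper's own ``proof'' of this lemma is a single line citing Lemma~2 of~\cite{DBLP:conf/esorics/DAmatoZ23} (with an editorial note that the cited lemma ``is not an exact match''). You instead supply a full inline argument, so the two are not taking the ``same approach'' in any meaningful sense: you are actually reconstructing the argument the citation merely gestures at. Your decomposition into (i) a view-merge equality $\V^{\mathrm{frozen},\voting{t}}_i = \V_p \cup \{\chain_p\}$ (using $\merging{(t-1)} = \proposing{t}-\Delta \geq \GST$ for the inclusion $\Vfrozen[\merging{(t-1)}]_i \subseteq \V_p$ and $\proposing{t}\geq\GST$ for the reverse), (ii) equality of greatest justified checkpoints because $\chain_p$ adds no \textsc{ffg-vote}s, and (iii) equality of the two $\rlmdghost$ traversals, is exactly the right skeleton, and it also correctly interprets the (misleadingly copy-pasted) notation $\GJfrozen[\voting{t}]_i$ of the statement as $\GJ(\V^{\mathrm{frozen},\voting{t}}_i)$, since \Cref{alg:rlmd-ffg} has no $\GJ^{\mathrm{frozen}}$ state variable and its \textsc{propose} message has no $\GJ_p$ field.

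The place where your argument is genuinely thin is step (iii), specifically the sentence ``these siblings of $\chain_p$ all have weight zero (no valid \textsc{vote}s for slot~$t$ or later have been cast yet).'' First, the traversal does not automatically ``reach'' $\chaincanrlmd_p$: $\rlmdghost(\V_p,\GJ_p.\chain,t)$ may descend strictly past $\chaincanrlmd_p$ into a Byzantine slot-$t$ subtree, and $\chaincanrlmd_p$ is only obtained as the $\lceil 1,t$ truncation \emph{after} the traversal. So at node $\chaincanrlmd_p$, the set of $\V^{\mathrm{frozen},\voting{t}}_i$-children with slot $\leq t$ can include a Byzantine slot-$t$ block $B''$ on the old path in addition to $\chain_p$; the proposer discarded $B''$ only because of the $\lceil 1,t$ cut, but voters in \Cref{alg:rlmd-ffg} run $\rlmdghost$ \emph{without} that cut. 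Second, the weight-zero claim requires that no slot-$t$ \textsc{vote} message is counted at round $\voting{t}$; but nothing in $\FIL_{\mathrm{rlmd}}$ excludes a slot-$t$ \textsc{vote} that an adversarial validator pre-sends before $\proposing{t}$, so that it lands in $\V_p$ and hence in $\V^{\mathrm{frozen},\voting{t}}_i$. If such a vote supports $B''$, $B''$ beats $\chain_p$ outright (no tie), and the arg-max does not select $\chain_p$. So the tie-breaking convention you invoke, while indeed needed, is not by itself sufficient; you would also need to argue, as the cited~\cite{rlmd} presumably does for its own vote-validity semantics, either that slot-$t$ \textsc{vote}s are not considered valid before round $\voting{t}$, or (per the footnote on \Cref{line:alg2ffg-fc1}) that blocks with slot $t$ not signed by the slot-$t$ proposer are excluded from views. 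Absent one of those hypotheses your ``the arg-max is a tie'' step has a genuine gap.
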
  
\begin{proof}
  Follows from Lemma 2 of~\cite{DBLP:conf/esorics/DAmatoZ23}.
  \rs{Lemma 2 of~\cite{DBLP:conf/esorics/DAmatoZ23} is not an exact match so we may need to rewrite this proof, but I think this is good enough for now.}
\end{proof}

\begin{lemma}
  \label{thm:liveness-rlmd-ghost}
  {\Cref{alg:rlmd-ffg} satisfies \Cref{prop:succ-for-ffg-liveness}.}
\end{lemma}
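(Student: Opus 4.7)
My plan is to proceed exactly as in the analogous \Cref{thm:liveness-tob-no-ga}, verifying each sub-condition of \Cref{prop:succ-for-ffg-liveness} for \Cref{alg:rlmd-ffg} in turn. The structural conditions \ref{prop:succ-for-ffg-liveness-1-1}--\ref{prop:succ-for-ffg-liveness-1-3} can be read off directly from \Cref{line:alg2ffg-vote}: each honest validator sends exactly one \textsc{ffg-vote} per slot, with source $\GJ(\V_i^\text{frozen})$ and target $(\chainava_i, t)$, giving \ref{prop:succ-for-ffg-liveness-1-3} trivially and \ref{prop:succ-for-ffg-liveness-1-2} by taking $\V^{\mathsf{FFGvote}, t}_i := \V_i^{\text{frozen},\voting{t}}$; validity (\ref{prop:succ-for-ffg-liveness-1-1}) then follows because \Cref{line:alg2-fc2ffg} invokes $\rlmdghost$ with $B_{\mathrm{start}} = \GJ(\V_i^\text{frozen}).\chain$ and \Cref{line:alg2-vote-chainava} constrains $\chainava_i$ to lie between $\GJ(\V_i^\text{frozen}).\chain$ and $\chaincanrlmd$. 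Condition \ref{prop:succ-for-ffg-liveness-2-1} is immediate from \Cref{lem:rlmd-proposer-shorter-than-t} combined with \Cref{prop:extend}. Condition \ref{prop:succ-for-ffg-liveness-2-2} follows by invoking \Cref{lem:vote-proposal-fast-conf-ffg-only-base-rlmd}, which already provides both that every always-honest validator \textsc{vote}s for $\chain_p$ in slot $t$ and that $\GJ(\V_i^{\text{frozen},\voting{t}}) = \GJ_p$, and then appealing to \Cref{line:alg2-vote-chainava} to pin the target chain within $\chain_p$. Conditions \ref{prop:succ-for-ffg-liveness-2-3-2} and \ref{prop:succ-for-ffg-liveness-2-4-1} follow from synchrony together with the $\proposing{t}\geq\GST+\Delta$ hypothesis, since honest FFG-votes cast at $\voting{t}$ (resp. $\voting{(t+1)}$) are delivered to every honest validator by $\merging{t}$ (resp. $\merging{(t+1)}$) and hence sit in $\V^\text{frozen}$ at the subsequent vote round.

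The central step is condition \ref{prop:succ-for-ffg-liveness-2-3-1}. My plan is first to show that $\GJ(\V_i^{\text{frozen},\voting{(t+1)}}) = (\chain_p, t)$ for every always-honest $v_i$: the $\geq \frac{2}{3}n$ honest \textsc{ffg-vote}s $\GJ_p \to (\chain_p,t)$ of slot $t$ are delivered by $\merging{t}$ and hence lie in $\V_i^{\text{frozen},\voting{(t+1)}}$, so $(\chain_p,t)$ is justified there; maximality then requires ruling out any strictly greater justified checkpoint $\C$. By \Cref{prop:never-slashed-2}, every FFG-vote cast by an honest validator by round $\voting{(t+1)}$ has target-slot $\leq t$, so any competing $\C$ with $\C.c>t$ would require honest FFG-votes with target-slot $>t$, which do not yet exist; a competing $\C$ with $\C.c=t$ and $\C.\chain.p>\chain_p.p=t$ would need a supermajority of slot-$t$ target chains strictly longer than $\chain_p$, ruled out by $f<\frac{n}{3}$. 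Once $\GJ(\V^\text{frozen}) = (\chain_p, t)$ is pinned, a short case analysis of \Cref{line:alg2-vote-chainava} (using $\kappa>1$ to discard the $(\chaincanrlmd)^{\lceil\kappa}$ branch and $\chainava_i^{\fastconfirming{t}} = \chain_p$ coming from a fast-confirm quorum in slot $t$) yields $\chainava_i^{\voting{(t+1)}} = \chain_p$, so the emitted target is exactly $(\chain_p, t+1)$.

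Finally, for \ref{prop:succ-for-ffg-liveness-2-4-2}, part (i) is a consequence of synchrony, and for part (ii) I would iterate the slot-$(t+1)$ argument one more step: every always-honest validator in slot $t+2$ sends an \textsc{ffg-vote} with source $(\chain_p, t+1)$ and a target chain extending $\chain_p$, producing a supermajority link out of $(\chain_p, t+1)$ by round $4\Delta(t+2)+2\Delta$, so $(\chain_p, t+1)$ is finalized and $\chain_p \preceq \GF(\V_i).\chain$. Combined with $\chainava_i^{\fastconfirming{(t+2)}} \succeq \chain_p$, \Cref{line:alg2-set-chfin-fast} then yields $\chain_p \preceq \chainfin_i^{4\Delta(t+2)+2\Delta}$, which is the form of condition (ii) actually consumed by \Cref{thm:liveness-ffg-general}. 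The hardest part of the proof is the justification-maximality argument of the previous paragraph: because \Cref{alg:rlmd-ffg} has no explicit $\GJfrozen$ variable and relies on view-merge rather than an explicit propagation of the greatest justified checkpoint, I will need to track carefully which FFG-votes can possibly sit in $\V^\text{frozen}$ at each round in order to exclude spurious higher justifications.
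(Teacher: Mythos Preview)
Your plan is correct and mirrors the paper's own proof, which explicitly reduces each sub-condition of \Cref{prop:succ-for-ffg-liveness} to the corresponding case in \Cref{thm:liveness-tob-no-ga} with the obvious line-for-line and lemma-for-lemma substitutions (\Cref{lem:vote-proposal-fast-conf-ffg-only-base-rlmd} for \Cref{lem:vote-proposal-fast-conf-ffg-only-base}, \Cref{lem:rlmd-proposer-shorter-than-t} for \Cref{lem:ga-mfc-proposer-shorter-than-t}, etc.).

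One small imprecision worth tightening when you write it out: in your treatment of \Cref{prop:succ-for-ffg-liveness-2-3-1} you assert that the honest slot-$t$ \textsc{ffg-vote}s are all $\GJ_p \to (\chain_p,t)$ and conclude $\GJ(\V_i^{\text{frozen},\voting{(t+1)}}) = (\chain_p,t)$. In fact \Cref{prop:succ-for-ffg-liveness-2-2} only gives targets $\T_i$ with $\T_i.\chain \preceq \chain_p$, so $(\chain_p,t)$ need not be justified; what you actually get (and what the paper uses) is the weaker $\GJ(\V_i^{\text{frozen},\voting{(t+1)}}).c = t$ and $\GJ(\V_i^{\text{frozen},\voting{(t+1)}}).\chain \preceq \chain_p$. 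This weaker statement is still sufficient for your case analysis of \Cref{line:alg2-vote-chainava}, since the previous $\chainava_i = \chain_p$ from fast confirmation already dominates both $\GJ.\chain$ and the $\kappa$-deep prefix, and $\chaincanrlmd \succeq \chain_p$ follows from the $\geq \tfrac{2}{3}n$ honest slot-$t$ votes for $\chain_p$ sitting in $\Vfrozen_i$.
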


\begin{proof}
  {Let us prove that \Cref{alg:rlmd-ffg} ensures each condition listed in \Cref{prop:succ-for-ffg-liveness}.}

  \begin{description}
    \item[\Cref{prop:succ-for-ffg-liveness-1-1}.]
    Let $[\textsc{vote}, \chain, \calS \to \T,t,v_i]$ be the \textsc{vote} message cast by validator $v_i$ in slot $t$.
    We need to show that $\calS.\chain \preceq \T.\chain \preceq \chain$.

    \Cref{line:alg2-vote-chainava,line:set-target-checkpoint,line:alg2ffg-vote} imply that $\T.\chain \preceq \hfcvote{t}_i = \chain$.

    Then, we are left with showing that $\calS.\chain \preceq \T.\chain$.
    Due to \Cref{line:alg2-vote-chainava,line:set-target-checkpoint,line:alg2ffg-vote}, this amounts to showing that $\chainava_i \succeq \GJ(\Vfrozen_i).\chain$ which is implied by the fact that
    $\rlmdghost(\Vfrozen_i,GJ(\Vfrozen_i).\chain,t)\succeq \GJ(\Vfrozen_i).\chain$

    \item[\Cref{prop:succ-for-ffg-liveness-1-2}.]
    Obvious from \Cref{line:alg2ffg-vote}.
    
    \item[\Cref{prop:succ-for-ffg-liveness-1-3}.] 
    \Cref{line:set-target-checkpoint,line:alg2ffg-vote} imply that any \textsc{ffg-vote} $\calS_i \to \T_i$ sent by an honest validator during a slot $t'$ is such that $\T_i.c=t'$.

    \item[{\Cref{prop:succ-for-ffg-liveness-2-1}.}] {Follows from \Cref{lem:rlmd-proposer-shorter-than-t}.}
    
    \item[\Cref{prop:succ-for-ffg-liveness-2-2}.]
    Follows the same reasoning used in the proof of \Cref{prop:succ-for-ffg-liveness-2-2} in \Cref{thm:liveness-tob-no-ga} with the following changes only.
    \begin{enumerate}
      \item Replace \Cref{lem:vote-proposal-fast-conf-ffg-only-base} with \Cref{lem:vote-proposal-fast-conf-ffg-only-base-rlmd},
      \item Replace \Cref{line:algtob-vote-chainava} of \Cref{alg:3sf-tob-noga} with \Cref{line:alg2-vote-chainava} of \Cref{alg:rlmd-ffg}, and
      \item Replace \Cref{line:algtob-vote}  of \Cref{alg:3sf-tob-noga}  with \Cref{line:alg2ffg-vote} of \Cref{alg:rlmd-ffg}.
    \end{enumerate}

    \item[\Cref{prop:succ-for-ffg-liveness-2-3-2}.] {As per proof of \Cref{prop:succ-for-ffg-liveness-2-3-2} in \Cref{thm:liveness-tob-no-ga}.}

    \item[\Cref{prop:succ-for-ffg-liveness-2-3-1}.]
    Follows the same reasoning used in the proof of \Cref{prop:succ-for-ffg-liveness-2-3-1} in \Cref{thm:liveness-tob-no-ga} with the following changes only.
    \begin{enumerate}
      \item Replace $\texttt{fastconfirm}(\V^\fastconfirming{t}_i,t) = (\chain_p, \cdot)$ with $\texttt{fastconfirm}(\V^\fastconfirming{t}_i,t) = \chain_p$,
      \item Replace \Cref{lem:vote-proposal-fast-conf-ffg-only-base} with \Cref{lem:vote-proposal-fast-conf-ffg-only-base-rlmd},

      \item Replace \Cref{line:algtob-vote-chainava} of \Cref{alg:3sf-tob-noga} with \Cref{line:alg2-vote-chainava} of \Cref{alg:rlmd-ffg},
      \item Replace \Cref{line:algtob-set-target-checkpoint} of \Cref{alg:3sf-tob-noga} with \Cref{line:set-target-checkpoint} of \Cref{alg:rlmd-ffg},
      
      \item Replace \Cref{line:algtob-vote} of \Cref{alg:3sf-tob-noga} with \Cref{line:alg2ffg-vote} of \Cref{alg:rlmd-ffg}, and
      \item Replace \Cref{line:algtob-set-chaava-to-bcand} of \Cref{alg:3sf-tob-noga} with \Cref{line:alg2ffg-set-chaava-to-bcand} of \Cref{alg:rlmd-ffg}.     
    \end{enumerate} 
    \item[\Cref{prop:succ-for-ffg-liveness-2-4-1}.] Same reasoning as per proof of \Cref{prop:succ-for-ffg-liveness-2-3-2}.
    \item[\Cref{prop:succ-for-ffg-liveness-2-4-2}]  
    Follows the same reasoning used in the proof of \Cref{prop:succ-for-ffg-liveness-2-4-2} in \Cref{thm:liveness-tob-no-ga} with the following changes only.
    \begin{enumerate}
      \item Replace \Cref{line:algtob-set-mfc} of \Cref{alg:3sf-tob-noga} with \Cref{line:alg2-fc2ffg} of \Cref{alg:rlmd-ffg},
      \item Replace \Cref{line:algtob-vote-chainava} of \Cref{alg:3sf-tob-noga} with \Cref{line:alg2-vote-chainava} of \Cref{alg:rlmd-ffg},
      \item Remove \Cref{line:algtob-vote-comm},
      \item Replace \Cref{line:algtob-vote} of \Cref{alg:3sf-tob-noga} with \Cref{line:alg2ffg-vote} of \Cref{alg:rlmd-ffg}, and
      \item Replace \Cref{line:algotb-set-chfin-fast} of \Cref{alg:3sf-tob-noga} with \Cref{line:alg2-set-chfin-fast} of \Cref{alg:rlmd-ffg}.\qedhere
    \end{enumerate}
  \end{description}
\end{proof}

\begin{theorem}\label{thm:rlmd-ffg-ebb-and-flow}
  \Cref{alg:rlmd-ffg} is a $\eta$-secure ebb-and-flow protocol.
\end{theorem}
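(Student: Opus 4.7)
The plan is to mirror the proof of \Cref{thm:ga-ebb-and-flow} essentially line-by-line, since all the ingredients for \Cref{alg:rlmd-ffg} have already been proved earlier in this section in exact analogy to their \Cref{alg:3sf-tob-noga} counterparts. Concretely, I would first observe that \Cref{line:alg2-set-chfin-init,line:alg2-set-chfin-vote,line:alg2-set-chfin-fast} of \Cref{alg:rlmd-ffg}, together with the definition of $\GF$, immediately imply that the chain output $\chainfin$ is, in every round at which it is updated by an honest validator, a finalized chain according to the global view, so \Cref{prop:chfin} holds. Combined with \Cref{thm:new-slashed-3sf} (which gives \Cref{prop:never-slashed}) and the generic \Cref{thm:accountable-safety}, this yields that $\chainfin$ is $\frac{n}{3}$-accountable and hence always safe under $f<\frac{n}{3}$.

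Next, I would invoke \Cref{thm:dyn-avail-rlmd-ffg} to conclude that the available chain $\chainava$ is $\eta$-dynamically-available, which takes care of the Liveness and Safety requirements on $\chainava$ listed in the definition of an $\eta$-secure ebb-and-flow protocol. For Liveness of $\chainfin$, I would pick any slot $t$ with $\proposing{t}\geq\max(\GST,\GAT)+\Delta$; by Lemma~2 of~\cite{rlmd} there exists, with overwhelming probability, an honest proposer in some slot $t_p\in[t,t+\kappa)$. Applying \Cref{thm:liveness-rlmd-ghost} (which verifies that \Cref{alg:rlmd-ffg} satisfies \Cref{prop:succ-for-ffg-liveness}) together with \Cref{thm:liveness-ffg-general} then gives that the chain $\chain_p$ proposed in slot $t_p$, which includes $\txpool^\proposing{t_p}\supseteq\txpool^\proposing{t}$, is a prefix of $\chainfin^{4\Delta(t_p+2)+2\Delta}_i$ for every validator honest at that round; the elapsed time $4\Delta(t_p+2)+2\Delta-\proposing{t}$ is in $O(\kappa)$, which is the required confirmation time.

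The only step that needs a small bespoke argument is the prefix property $\chainfin^r_i\preceq\chainava^r_i$ for every honest $v_i$ and every round $r$. For vote rounds, \Cref{line:alg2-set-chfin-vote} sets $\chainfin_i$ as the maximal chain that is simultaneously a prefix of $\chainava_i$ and of $\GF(\V_i).\chain$, so the property holds by construction; the same holds at fast confirmation rounds via \Cref{line:alg2-set-chfin-fast}. The only delicate case is to verify that, at a fast confirmation round, $\GF(\V_i).\chain$ does not conflict with the freshly updated $\chainava_i$: this follows because \Crefrange{line:rlmdffg-set-fastcand-fconf}{line:alg2ffg-set-chaava-to-bcand} guarantee $\chainava^r_i\succeq\GJ(\V^r_i).\chain$, and since $\GJ(\V^r_i)\geq\GF(\V^r_i)$ by \Cref{alg:justification-finalization}, \Cref{lem:accountable-jutification} (using $f<\frac{n}{3}$) gives $\GF(\V^r_i).\chain\preceq\GJ(\V^r_i).\chain\preceq\chainava^r_i$.

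I expect no real obstacle: every step above is either a direct invocation of a previously established theorem (\Cref{thm:accountable-safety,thm:dyn-avail-rlmd-ffg,thm:liveness-ffg-general,thm:liveness-rlmd-ghost,thm:new-slashed-3sf,lem:accountable-jutification}) or a routine bookkeeping check on the lines of \Cref{alg:rlmd-ffg} that set $\chainfin$ and $\chainava$. The mildest subtlety is the prefix relation $\chainfin\preceq\chainava$ at fast confirmation rounds, but this is resolved exactly as in the proof of \Cref{thm:ga-ebb-and-flow} by routing through $\GJ(\V_i).\chain$ and invoking $f<\frac{n}{3}$.
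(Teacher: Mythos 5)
Your proposal is correct and follows essentially the same route as the paper, which proves this theorem by repeating the proof of \Cref{thm:ga-ebb-and-flow} with the corresponding substitutions (\Cref{thm:dyn-avail-rlmd-ffg} for \Cref{thm:dyn-avail-fast-conf-tob-ffg}, \Cref{thm:new-slashed-3sf} for \Cref{lem:never-slashed-3sf-tob-noga}, \Cref{thm:liveness-rlmd-ghost} for \Cref{thm:liveness-tob-no-ga}, and the analogous lines of \Cref{alg:rlmd-ffg}); your detailed spelling-out of the prefix property $\chainfin\preceq\chainava$ via $\GJ(\V_i).\chain$, $\GJ\geq\GF$, and \Cref{lem:accountable-jutification} is exactly the argument the paper reuses from \Cref{thm:ga-ebb-and-flow}.
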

\begin{proof}
{  Follows from the proof of \Cref{thm:ga-ebb-and-flow} with the following modifications only.
  \begin{enumerate}
    \item Replace \Cref{alg:3sf-tob-noga} with \Cref{alg:rlmd-ffg},
    \item Replace \Cref{line:algotb-set-chfin-init,line:algotb-set-chfin-vote,line:algotb-set-chfin-fast} of \Cref{alg:3sf-tob-noga} with \Cref{line:alg2-set-chfin-init,line:alg2-set-chfin-vote,line:alg2-set-chfin-fast} of \Cref{alg:rlmd-ffg} respectively,
    \item Replace \Crefrange{line:algtob-set-fastcand-fconf}{line:algtob-set-chaava-to-bcand} of \Cref{alg:3sf-tob-noga} with \Crefrange{line:rlmdffg-set-fastcand-fconf}{line:alg2ffg-set-chaava-to-bcand} of \Cref{alg:rlmd-ffg},    
    \item Replace \Cref{thm:dyn-avail-fast-conf-tob-ffg} with \Cref{thm:dyn-avail-rlmd-ffg},
    \item Replace \Cref{lem:never-slashed-3sf-tob-noga} with \Cref{thm:new-slashed-3sf}, and
    \item Replace \Cref{thm:liveness-tob-no-ga} with \Cref{thm:liveness-rlmd-ghost}.\qedhere
  \end{enumerate}}
\end{proof}

We conclude this section showing that \Cref{alg:rlmd-ffg} also ensures that the finalized chain of an honest validator grows monotonically.

\begin{lemma}\label{lem:chfin-rlmd-always-grows}
  For any two round $r' \geq r$ and validator $v_i$ honest in round $r'$, $\chainfin^{r'}_i \succeq \chainfin^r_i$.
\end{lemma}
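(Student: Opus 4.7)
The plan is to follow the proof of Lemma \ref{lem:chfin-ga-always-grows}, the analogous result established for \Cref{alg:3sf-tob-noga}, adapting it to the differences in \Cref{alg:rlmd-ffg}. The key simplification here is that in \Cref{alg:rlmd-ffg} both \Cref{line:alg2-set-chfin-vote} (vote round) and \Cref{line:alg2-set-chfin-fast} (fast confirmation round) update $\chainfin_i$ by the same rule, namely $\chainfin_i \gets \max(\{\chain : \chain \preceq \chainava_i \land \chain \preceq \GF(\V_i).\chain\})$. So to argue $\chainfin^{r'}_i \succeq \chainfin^r_i$ it suffices to prove that $\chainfin^r_i$ is a prefix of both $\chainava^{r'}_i$ and $\GF(\V^{r'}_i).\chain$.

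I would proceed by contradiction: let $r$ be the smallest round violating the claim, and $r'>r$ the smallest round such that some honest $v_i$ has $\chainfin^{r'}_i \nsucceq \chainfin^r_i$. By the minimality of $r$ and the fact that $\chainfin_i$ is only written at vote or fast-confirmation rounds, both $r$ and $r'$ fall in these categories. The first step is to exhibit a checkpoint $\J$ with $\chainava^{r'}_i \succeq \J.\chain$, $\mathsf{J}(\J,\V^{r'}_i)$, and $\J \geq \GF(\V^r_i)$, splitting on the nature of $r'$: if $r'$ is a vote round take $\J := \GJ(\Vfrozen[r']_i)$, noting that by \Cref{line:alg2-fc2ffg,line:alg2-vote-chainava} we have $\chainava^{r'}_i \succeq \rlmdghost(\Vfrozen[r']_i,\GJ(\Vfrozen[r']_i).\chain,\slot(r')) \succeq \GJ(\Vfrozen[r']_i).\chain$; if $r'$ is a fast-confirmation round take $\J := \GJ(\V^{r'}_i)$, and by \Cref{line:rlmdffg-set-fastcand-fconf,line:alg2ffg-set-chaava-to-bcand} together with the definition of $\texttt{fastconfirm}$ we obtain $\chainava^{r'}_i \succeq \GJ(\V^{r'}_i).\chain$. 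In both cases $\J \geq \GF(\V^r_i)$ follows from the fact that $\V^r_i$ is contained in the relevant view at round $r'$ (for the vote case, $r' > r$ together with the merge rule ensures $\V^r_i \subseteq \Vfrozen[r']_i$; for the fast-confirmation case it is immediate).

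From here the argument closes in the same way as for Lemma \ref{lem:chfin-ga-always-grows}: applying Lemma \ref{lem:accountable-jutification} to $\J$ and $\GF(\V^r_i)$ (which is justified as well as finalized) under the hypothesis $f<\tfrac{n}{3}$, we conclude $\J.\chain \succeq \GF(\V^r_i).\chain$, hence $\chainava^{r'}_i \succeq \GF(\V^r_i).\chain \succeq \chainfin^r_i$. Similarly, view monotonicity gives $\GF(\V^{r'}_i) \geq \GF(\V^r_i)$, and another application of Lemma \ref{lem:accountable-jutification} yields $\GF(\V^{r'}_i).\chain \succeq \GF(\V^r_i).\chain \succeq \chainfin^r_i$. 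Since $\chainfin^r_i$ is thus a prefix of both chains that define $\chainfin^{r'}_i$ via \Cref{line:alg2-set-chfin-vote,line:alg2-set-chfin-fast}, we get $\chainfin^{r'}_i \succeq \chainfin^r_i$, a contradiction.

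I do not expect substantial technical obstacles here, as the entire structure mirrors the TOB-based proof. The only mild subtlety is making sure the $\J$ chosen in the vote-round case is indeed justified in $\V^{r'}_i$ and dominates $\GF(\V^r_i)$; this follows from $\Vfrozen[r']_i \subseteq \V^{r'}_i$ plus the monotonicity of $\GJ$ under view inclusion, together with $\V^r_i \subseteq \Vfrozen[r']_i$, which uses only the merge behavior of \Cref{alg:rlmd-ffg} and $r' > r$.
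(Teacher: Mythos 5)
Your proof is correct and follows essentially the same route as the paper's: the paper proves this lemma by replaying the proof of \Cref{lem:chfin-ga-always-grows} (contradiction with minimal $r,r'$, case split on $r'$ being a vote or fast-confirmation round, exhibiting a justified checkpoint $\J$ with $\chainava^{r'}_i \succeq \J.\chain$ and $\J \geq \GF(\V^r_i)$, then two applications of \Cref{lem:accountable-jutification}), with exactly the substitutions you identify ($\GJfrozen$ replaced by $\GJ(\Vfrozen_i)$, the inequality $\chainava^{r'}_i \succeq \GJ(\Vfrozen[r']_i).\chain$ obtained from \Cref{line:alg2-fc2ffg,line:alg2-vote-chainava}, justification in $\V^{r'}_i$ from $\Vfrozen[r']_i \subseteq \V^{r'}_i$, and the \texttt{fastconfirm} return-value adjustment). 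Your additional remark that both update sites in \Cref{alg:rlmd-ffg} use the same rule for $\chainfin_i$ is only a minor presentational simplification, not a different argument.
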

\begin{proof}
  Follows the same reasoning used in the proof of \Cref{lem:chfin-ga-always-grows} with the following changes only.
  \begin{enumerate}
    \item Replace $\GJfrozen[r']_i$ with $\GJ(\Vfrozen[r']_i)$,
    \item \sloppy{In the treatment of Case 1, $\chainava^{r'}_i \succeq \GJ(\Vfrozen[r']_i).\chain$ is derived from \Cref{line:alg2-fc2ffg,line:alg2-vote-chainava} and $\mathsf{J}(\GJ(\Vfrozen[r']_i, \V^{r'}_i))$ is derived from $\Vfrozen[r']_i \subseteq \V^{r'}_i$, and}
    \item Replace $(\chainava^{r'}_i,\cdot) = \texttt{fastconfirm}(\V^{r'}_i,\slot(r')) $ with $\chainava^{r'}_i= \texttt{fastconfirm}(\V^{r'}_i,\slot(r'))$.\qedhere
  \end{enumerate}
\end{proof}
  

\section{Practical considerations}
\label{sec:practical-consideration}

The protocols we just presented in \Cref{sec:ga-based,sec:rlmd-based}, although theoretically sound as demonstrated by the theorems, presents practical challenges. 
In this section, we discuss implementation techniques that can be employed to mitigate such challenges.
We proceed by discussing peculiarities of \Cref{alg:3sf-tob-noga} first, then we move to addressing aspects unique to \Cref{alg:rlmd-ffg}, before discussing mitigations that apply to both protocols.
{We conclude by analyzing message complexity of both protocols.}

\subsection{Optimizations} \label{sec:optimizations}

\begin{enumerate}
    \item \textbf{Optimizations only for \Cref{alg:3sf-tob-noga}:}
    Every validator participating in \Cref{alg:3sf-tob-noga} tracks all the received \textsc{vote} messages, including those that equivocate.
    This approach is necessary due to the way the fork-choice function $\mfc$ operates. 

    The fork-choice function $\mfc$ takes two views, $\V$ and $\V'$ (with the possibility that $\V = \V'$ for the proposer), as input. It then considers the latest and unexpired \textsc{vote} messages sent by validators that never equivocated within each of these views.
    The function checks if the intersection of these filtered views contains more than half of all the senders within $\V'$ as
    Consequently, a generic view $\V$ must contain information regarding \textsc{vote} messages, equivocations, and the senders of these messages.
    This information is crucial for the fork-choice function to determine its output chain.

    However, there is no need to retain unexpired messages within $\V$, as these will be filtered out during the evaluation of the fork-choice function.
    Therefore, at each slot, every validator \(v_i\) can prune their local view \(\V_i\) and \(\Vfrozen_i\) by removing all expired messages and retaining only the unexpired ones.
    This approach streamlines the process by reducing the amount of data that validators need to store and process, enhancing the overall efficiency of the protocol.
    However, we need to note that for a given chain $\chain$, this optimization leads to the set $\V^{\chain,t}$ having potentially more elements than in the original protocol.
    This is because the original definition of $\V^{\chain,t}$ removes first the \textsc{vote}s of any validator that has ever equivocated in $\V$, regardless of the age of their \textsc{vote} messages.
    However, it should be easy to see that this change does not invalidate any of the proofs conducted with respect to the original protocol.

    \item \textbf{Optimizations only for \Cref{alg:rlmd-ffg}:} The way in which views are used in \Cref{alg:rlmd-ffg} is not optimally implementable.
    Recall that every validator executing \Cref{alg:rlmd-ffg} keeps track of a view that is constantly updated with each message the validator receives. Moreover, the proposer for slot \( t \) broadcasts the entire view, containing all the messages ever received, to all the other validators (\Cref{line:send-propose-rlmd}, \Cref{alg:rlmd-ffg}). This is done to achieve key properties such as ensuring that under synchrony with an honest proposer, the view of validators at the the vote round is a subset of the proposer's view at the propose round, leading the output of the fork-choice function ${\rlmdghost}$ at the vote round to match the chain \textsc{propose}d by the proposer in that slot.
    
    Clearly, from a practical standpoint, having a constantly growing set of messages that is forwarded at each slot poses significant issues. This approach leads to inefficiencies and increased overhead.
    
    However, these issues can be greatly mitigated with some considerations. To start, observe that the fork choice function $\rlmdghost$ outputs the chain with the accumulated most weight from unexpired latest messages. This means that a proposer could potentially send only the messages that contribute to this chain as far as the fork choice function is concerned, along with all the \textsc{vote} messages carrying \textsc{ffg-vote}s that justify the greatest justified checkpoint used in the fork choice function. This way, a large number of messages that do not contribute to the output of the fork-choice function are excluded, reducing the message overhead.
    
    The intuition behind this is as follows.
    Take a view $\V$ and a slot $t$, and remove from $\V$ all the \textsc{vote} messages that are not in support of prefixes of the chain ${\hfcvp}(\V,\GJ(\V).\chain,t)$.
    Let $\V'$ be the resulting view.
    Clearly, ${\hfcvp}(\V,\GJ(\V).\chain,t) ={\hfcvp}(\V',GJ(\V').\chain,t)$.
    Therefore, by implementing the optimization above, we still have the property that, under synchrony, the \hfcvp output by honest validators in a given slot matches the chain \textsc{propose}d by an honest validator is that same slot, which is all that we effectively need to ensure all the various properties holding under synchrony.
    Additionally, in protocols like {the} Ethereum's consensus protocols, votes are included in blocks.
    This means that validators can derive the proposer's view -- or more precisely, the useful messages used by the proposer to output its chain from the fork-choice function -- from the chain it proposes.
    
    Thus, \Cref{alg:rlmd-ffg} can be made more practical by removing the overall message overhead of sending all messages all the time. While we used the notion of view to clarify the properties and behavior of our protocol, in practice, this can be avoided.

    \item \textbf{Optimizations for both \Cref{alg:3sf-tob-noga,alg:rlmd-ffg}:} 
    In both \Cref{alg:3sf-tob-noga,alg:rlmd-ffg}, proposers send chains.
    However, this does not translate with the proposer having to send all the blocks in that chain in practice.
    In fact, in both protocols, the proposer extends the chain output by the respective fork-choice function at the time of proposing.
    Hence, due to gossiping and the synchrony assumptions leveraged upon, it is sufficient for the proposer to send only the blocks added to the chain output by the fork-choice function.
    While theoretically this can be more than one block (in the case that the output of the fork-choice function has length strictly lower than the previous slot number), 
    in practice, given that this is a quite unlikely scenario, most probably the proposing {action will be} coded to just add one block to the chain output by the fork-choice function.
    {In addition, some local optimizations in the use of views can still be made in \Cref{alg:3sf-tob-noga,alg:rlmd-ffg}. 
    For instance, there is no need to retain unexpired messages within $\V$, as these will be filtered out during the evaluation of the fork-choice function. 
    Therefore, at each slot, every validator \(v_i\) can prune their local view \(\V_i\) and \(\Vfrozen\) by removing all expired messages and retaining only the unexpired ones. 
    This approach streamlines the process by reducing the amount of data that validators need to store and process, enhancing the overall efficiency of the protocol.}
\end{enumerate}

\subsection{Message complexity}
\label{sec:mess-comp}

{Regarding message complexity, both Algorithm~\ref{alg:3sf-tob-noga} and Algorithm~\ref{alg:rlmd-ffg} exhibit the same expected communication complexity, namely \(\mathit{O}(Ln^3)\), where \(L\) denotes the block size. This matches the complexity of the underlying dynamically available protocols, \TOBSVD and \RLMDGHOST. This result arises because, similar to \TOBSVD and \RLMDGHOST, our protocols require honest validators to forward all received messages in every round. Consequently, each validator’s forwarding of messages they receive contributes to the \(\mathit{O}(Ln^3)\) complexity. The finality gadget layered atop the dynamically available protocol does not alter this complexity, as it is encapsulated within the \textsc{vote} message, cast at round \(4\Delta t + \Delta\). Therefore, the communication complexity of both protocols remains equivalent to that of their underlying dynamically available protocols, with the finality gadget introducing no additional complexity.}

\section{Healing process}\label{sec:healing}
So far, we have proved that, if $\GST = 0$, then both \Cref{alg:3sf-tob-noga} and \Cref{alg:rlmd-ffg} are $\eta$-dynamically-available, $\eta$-reorg-resilient, $\eta$-asynchrony-reorg-resilient and $\eta$-asynchrony-safety-resilient.
In the reminder of this section, we use the term \emph{synchronous properties} to refer to these properties.
We chose to study the synchronous properties assuming $\GST=0$ because this is the usual setting employed to analyze dynamically-available protocols.

However, the assumption that $\GST = 0$ is too strong in practice.
In fact, this would mean that as soon as we have some period of asynchrony, none of the synchronous properties is guaranteed to hold anymore\footnote{Note that the asynchrony-reorg and safety-resiliency properties only guarantee limited resiliency to periods of asynchrony, even for those lasting less than $\eta$.
Specifically, they only offer resiliency towards periods of asynchrony for events, either a chain being \textsc{propose}d by an honest validator or an honest validator confirming a chain, that happened before the period of asynchrony.}.
{This is quite undesirable in practice as} it is reasonable to expect that a protocol meant to run uninterruptedly for decades will experience periods of asynchrony at different points in time.

In this section, we show that, even if $\GST > 0$, once the network becomes synchronous and a set of additional conditions is met, then both protocols start to guarantee all the synchronous properties from that point onward.

We formalize this list of additional conditions below by letting $\theal$ be the first slot satisfying them.

\begin{definition}[Slot $\theal$]\label{def:theal}
  Let $\theal$ be the first slot such that \begin{enumerate}
    \item {$\proposing{\theal}\geq \GST + \Delta$}, 
    \item all validators that are honest in round $\fastconfirming{\theal}$ are also active in round $\fastconfirming{\theal}$ and 
    \item the proposer of slot $\theal$ is honest.
  \end{enumerate}
\end{definition}

Provided that such slot $\theal$ exists, then we have the following results.

\begin{theorem}[Reorg Resilience]\label{thm:reorg-res-prop-tob-ffg-heal}
  Both \Cref{alg:3sf-tob-noga,alg:rlmd-ffg} are $\eta$-reorg-resilient after slot $\theal$ and time $\fastconfirming{\theal}$.
\end{theorem}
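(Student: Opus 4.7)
The plan is to treat slot $\theal$ as a ``fresh start'' for the standard synchronous reorg-resilience argument, since by \Cref{def:theal} synchrony has set in by $\proposing{\theal}$, the proposer of $\theal$ is honest, and every validator honest at $\fastconfirming{\theal}$ is active. These are precisely the hypotheses that the existing base-case Lemmas in \Cref{sec:ga-based,sec:rlmd-based} require.

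First, I would invoke \Cref{lem:vote-proposal-fast-conf-ffg-only-base} (respectively its RLMDGHOST counterpart \Cref{lem:vote-proposal-fast-conf-ffg-only-base-rlmd}) at slot $\theal$ to conclude that every $v_i \in H_{\voting{\theal}}$ \textsc{vote}s for an extension of the chain $\chain_p$ proposed by the honest proposer. Since active honest validators remain active until corrupted or put to sleep and honest validators are never un-corrupted, the activity assumption of \Cref{def:theal} gives $|H_{\voting{\theal}}| \geq |H_{\fastconfirming{\theal}}| > \frac{2n}{3}$. Applying \Cref{thm:fast-liveness-modified-tob} (respectively \Cref{thm:fast-liveness-rlmd}) then yields $\chainava^{\fastconfirming{\theal}}_i \succeq \chain_p$ for every $v_i \in H_{\fastconfirming{\theal}}$. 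Next, using \Cref{lem:keep-voting-tob-fast-conf-ffg} (respectively \Cref{lem:keep-voting-tob-fast-conf-ffg-rlmd}) inductively from slot $\theal$, I propagate the property ``every $v_j \in H_{\voting{t'}}$ \textsc{vote}s for, and has fork-choice outputs extending, $\chain_p$'' to all $t' > \theal$. For any chain $\chain'_p$ proposed by an honest validator in a slot $t \geq \theal$, an analogous application of \Cref{lem:vote-proposal-fast-conf-ffg} (respectively \Cref{lem:vote-proposal-fast-conf-ffg-rlmd}) with base slot $t$ -- whose hypotheses hold because $\proposing{t} \geq \proposing{\theal} \geq \GST + \Delta$ -- yields $\mfcvote{t_j}_j \succeq \chain'_p$ for every $t_j > t$ and $v_j \in H_{\voting{t_j}}$.

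Finally, I would adapt the proof of \Cref{thm:reorg-res-prop-tob-ffg} (respectively \Cref{thm:reorg-res-prop-rlmd-ffg}): given $r \geq \fastconfirming{\theal}$, an honest $v_i$ at round $r$, a slot $t \geq \theal$ with honest proposer of chain $\chain'_p$, and a slot $t_j > \max(t,\slot(r))$ with $v_j \in H_{\voting{t_j}}$, the previous step provides $\mfcvote{t_j}_j \succeq \chain'_p$. It remains to show $\mfcvote{t_j}_j \succeq \chainava^r_i$ via an analog of \Cref{lem:ga-confirmed-always-canonical-ffg} (respectively \Cref{lemma:chava-prefix-canonical-ffg}) restricted to $r \geq \fastconfirming{\theal}$. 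This follows by the same minimality argument as the original Lemma, except that whenever the $\kappa$-deep honest-proposer search would reach a slot before $\theal$, slot $\theal$ itself serves as the anchor via the fast-confirmation argument established above. The main obstacle is verifying that the inductive arguments in the cited Lemmas, originally stated under $\GST = 0$, depend only on synchrony within the slot range considered rather than on global synchrony from the start of the execution; a close reading confirms that only bounded-delay delivery of messages from one slot to the next is needed, so the arguments transplant cleanly with the base case shifted to slot $\theal$.
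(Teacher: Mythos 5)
Your skeleton matches the paper's strategy at the level of the underlying dynamically-available protocols: shift the base case to slot $\theal$ (using \Cref{lem:vote-proposal-fast-conf-ffg-only-base}), use the activity condition of \Cref{def:theal} to get a fast confirmation of $\chain_p$ at $\fastconfirming{\theal}$, and re-prove the ``confirmed chain is under every later fork-choice output'' lemma with slot $\theal$ as the anchor whenever the $\kappa$-deep honest-proposer lookback would fall before $\theal$. The paper does exactly this for \Cref{algo:prob-ga-fast} and \Cref{alg:rlmd} via the ``heal'' variants (\Cref{lem:keep-voting-tob-fast-conf-heal}, \Cref{lem:vote-proposal-fast-conf-heal}, \Cref{lem:ga-confirmed-always-canonical-heal}), and for those protocols your closing claim --- that the inductive arguments only need bounded delay from the relevant merge round onward --- is correct.

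The gap is in the step where you say the FFG-integrated lemmas (\Cref{lem:keep-voting-tob-fast-conf-ffg}, \Cref{lem:vote-proposal-fast-conf-ffg}, \Cref{lem:ga-confirmed-always-canonical-ffg}) ``transplant cleanly.'' They do not: their proofs go through the equivalence construction of \Cref{lem:equiv-ga2}, whose inductive invariant requires that \emph{every} justified checkpoint's chain lies under the honest $\mfc$ output --- a property that presupposes synchrony from slot $0$ and is false when there was asynchrony before $\theal$. Concretely, \textsc{ffg-vote}s never expire for justification purposes, so the adversary can withhold its share of votes and, at any slot after $\theal$, complete the justification of a checkpoint $\C$ conflicting with $\chain_p$ using honest \textsc{ffg-vote}s cast during the asynchronous period; since the fork choice, $\chainfrozen$, $\chainava$, and the \textsc{ffg-vote} source in \Cref{alg:3sf-tob-noga,alg:rlmd-ffg} all defer to the greatest justified checkpoint, such a late justification would reorg $\chain_p$ unless it is explicitly ruled out. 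This is precisely the scenario the paper flags in \Cref{sec:healing}, and it is why the healing proof introduces the $\theal$-restricted equivalence (\Cref{def:exec-equiv-teal}, \Cref{lem:equiv-ga3}) with the weakened invariant that only checkpoints with checkpoint slot $\geq \theal$ must lie under the fork choice, together with the new invariants that from $\fastconfirming{\theal}$ onward every honest validator's greatest justified checkpoint and $\GJfrozen$ have checkpoint slot $\geq \theal$ and are consistent with $\chain_p$ (so, by the lexicographic ordering on checkpoint slot, a pre-$\theal$ justification emerging later can never become \emph{greatest}, and no conflicting checkpoint with slot $\geq \theal$ can ever gather a supermajority because more than $\frac{2}{3}n$ validators vote consistently with $\chain_p$ from slot $\theal$ on). Your proposal never argues any of this --- it never mentions checkpoint slots, $\GJ$/$\GJfrozen$, or late-completed justifications --- so the induction from slot $\theal$ onward, as you have set it up, does not go through for \Cref{alg:3sf-tob-noga} or \Cref{alg:rlmd-ffg}.
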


\begin{theorem}[$\eta$ Dynamic Availability]
  \label{thm:dyn-avail-fast-conf-tob-ffg-heal}
Both \Cref{alg:3sf-tob-noga,alg:rlmd-ffg} are $\eta$-dynamically-available after time $\fastconfirming{\theal}$.
\end{theorem}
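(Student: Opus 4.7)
The plan is to prove both theorems by establishing a \emph{post-healing invariant} at round $\fastconfirming{\theal}$ that effectively reduces the state of each protocol to one equivalent to the $\GST = 0$ setting already analyzed, with $\chain_p$ (the chain proposed by the honest proposer at slot $\theal$) playing the role that $\genesis$ plays there. First, I would show that at slot $\theal$ the honest proposer $v_p^\theal$ sends a \textsc{propose} message for a chain $\chain_p$ and that every validator in $H_{\voting{\theal}}$ casts a \textsc{vote} message for $\chain_p$. Because $\proposing{\theal} \geq \GST + \Delta$, \Cref{lem:vote-proposal-fast-conf-ffg-only-base} (for \Cref{alg:3sf-tob-noga}) and \Cref{lem:vote-proposal-fast-conf-ffg-only-base-rlmd} (for \Cref{alg:rlmd-ffg}) apply verbatim and give this vote alignment. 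Since the definition of $\theal$ forces $H_{\voting{\theal}} = H_{\fastconfirming{\theal}}$ and $|H_{\voting{\theal}}| \geq \frac{2}{3}n$, \Cref{thm:fast-liveness-modified-tob} and \Cref{thm:fast-liveness-rlmd} then yield $\chainava^{\fastconfirming{\theal}}_i \succeq \chain_p$ for every $v_i \in H_{\fastconfirming{\theal}}$, and every such $v_i$ receives a quorum certificate for $\chain_p$.

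Second, using this base case, I would establish by induction on $t' > \theal$ that, for every validator $v_j \in H_{\voting{t'}}$, both $\mfcpropose{t'}_j$ and $\mfcvote{t'}_j$ (resp.\ $\hfcpropose{t'}_j$ and $\hfcvote{t'}_j$) extend $\chain_p$, and that $v_j$ \textsc{vote}s for a chain extending $\chain_p$. The inductive step is essentially a re-run of \Cref{lem:keep-voting-tob-fast-conf-ffg} (resp.\ \Cref{lem:keep-voting-tob-fast-conf-ffg-rlmd}): the only input used in those proofs beyond the vote invariant from the previous slot is the $\Delta$-synchrony of message delivery, which now holds for every round after $\proposing{\theal}$. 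With the post-healing vote invariant in hand, Reorg Resilience is immediate: any chain proposed by an honest validator at some slot $t \geq \theal$ extends the respective fork-choice output at $\proposing{t}$, which extends $\chain_p$; and the available chain of any $v_i$ honest in round $r \geq \fastconfirming{\theal}$ extends $\chain_p$ by the same induction (applied once more through an analog of \Cref{lem:ga-confirmed-always-canonical-ffg} / \Cref{lemma:chava-prefix-canonical-ffg} restricted to rounds $\geq \fastconfirming{\theal}$).

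With Reorg Resilience in place, Dynamic Availability follows by the same two-part recipe as in \Cref{thm:dyn-avail-fast-conf-tob-ffg} and \Cref{thm:dyn-avail-rlmd-ffg}. For safety at any two rounds $r_i, r_j \geq \fastconfirming{\theal}$, I would pick any slot $t_k > \max(\slot(r_i), \slot(r_j))$ with a non-empty $H_{\voting{t_k}}$ (guaranteed by \Cref{eq:pi-sleepiness}) and use the post-healing version of \Cref{lem:ga-confirmed-always-canonical-ffg}/\Cref{lemma:chava-prefix-canonical-ffg} to conclude that both $\chainava^{r_i}_i$ and $\chainava^{r_j}_j$ are prefixes of the fork-choice output of any $v_k \in H_{\voting{t_k}}$, hence compatible. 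For liveness with $\Tconf = 8\kappa\Delta + \Delta$, Lemma~2 of~\cite{rlmd} still provides an honest proposer within the next $\kappa$ slots after any round $r \geq \fastconfirming{\theal}$, and the remainder of the original liveness argument carries over unchanged because it only relies on the honest-proposer vote-propagation lemmas and the $\kappa$-deep confirmation rule.

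The hard part will be ruling out residual interference from pre-$\theal$ state. In particular, during the asynchronous period before $\GST$, Byzantine activity could have left honest validators with stale values of $\chain^{\text{frozen}}_i$, $\GJ^{\text{frozen}}_i$, or even a previously fast-confirmed $\chainava_i$ that is not a prefix of $\chain_p$. For \Cref{alg:3sf-tob-noga}, the argument has to verify that the view-merge at round $\voting{\theal}$ (\Cref{line:algtob-upon,line:algtob-prop-if,line:algtob-prop-check-chfrozen-gj,algo:3sf-noga-setchainfrozen-to-chainc}) correctly overrides any such stale $\chain^{\text{frozen}}_i$ via the proposer's $\chain^C_p$ and $\GJ_p$; for \Cref{alg:rlmd-ffg}, that the proposed view $\V_p$ forces $\GJ(\Vfrozen_i)$ at voting time to match $\GJ(\V^{\proposing{\theal}}_p)$. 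Both reduce to a careful case analysis analogous to the one inside \Cref{lem:vote-proposal-fast-conf-ffg-only-base}, exploiting $f < \frac{n}{3}$ to exclude conflicting justifications or fast-confirmations from ever having formed, so that whatever stale prefix a validator holds is comparable with $\chain_p$.
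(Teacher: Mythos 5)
Your overall skeleton (align everyone on $\chain_p$ at slot $\theal$, fast-confirm it at $\fastconfirming{\theal}$, then induct forward) matches the paper's intent, but the step where you dismiss the FFG interference is where the proposal breaks. You claim that $f<\frac{n}{3}$ lets you ``exclude conflicting justifications or fast-confirmations from ever having formed, so that whatever stale prefix a validator holds is comparable with $\chain_p$.'' That is false, and it is precisely the hazard the healing analysis exists to neutralize: as \Cref{sec:healing} explains, during the pre-$\GST$ asynchrony honest validators may have cast \textsc{ffg-vote}s toward a checkpoint $\C$ conflicting with $\chain_p$ that becomes justified only later, when the adversary releases its complementary \textsc{ffg-vote}s --- possibly at an arbitrary slot $t>\theal$ whose proposer is Byzantine, so there is no honest \textsc{propose} message to ``override'' anything at that point. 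With $f<\frac{n}{3}$ you can only rule out two conflicting justified checkpoints at the \emph{same} checkpoint slot, not conflicting justified checkpoints at different slots. Since $\texttt{fastconfirm}$ falls back to $\GJ(\V).\chain$, $\chainfrozen_i$ can be reset to $\GJ_p.\chain$, and $\chainava_i$ can be set to $\GJfrozen_i.\chain$, such a late-justified conflicting checkpoint would derail your induction unless you prove it can never be adopted. The argument you actually need --- and which the paper carries as an explicit inductive invariant --- is that (i) from $\fastconfirming{\theal}$ onward every honest validator's greatest justified checkpoint (and $\GJfrozen_i$) has checkpoint slot $\geq \theal$, and (ii) no checkpoint with slot $\geq \theal$ conflicting with $\chain_p$ can ever be justified, because from slot $\theal$ on all honest \textsc{ffg-vote} targets are consistent with $\chain_p$ and any justification needs more than $\frac{n}{3}$ honest votes at that target slot; together with the monotone, lexicographic use of $\GJ$ (e.g., the check $\GJ_p \geq \GJfrozen_i$ in \Cref{line:algtob-prop-if}), this is what makes the old conflicting checkpoints harmless. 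These are exactly \Cref{cond:7-3,cond:8-3} and the $\J.c \geq \theal$-restricted versions of \Cref{cond:1-3,cond:2-3} inside \Cref{lem:equiv-ga3}; your sketch has no counterpart to them.

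Relatedly, your inductive step understates what \Cref{lem:keep-voting-tob-fast-conf-ffg} needs: in the paper that lemma is not proved by a direct re-run of the no-FFG argument but via the execution-equivalence \Cref{lem:equiv-ga2}, whose proof uses synchrony from round $0$ (through \Cref{lem:ga-confirmed-always-canonical} applied to all past slots) to control justified checkpoints; after a pre-$\GST$ asynchronous period that reasoning fails for checkpoints with slot $<\theal$, which is why the paper re-does the simulation with the $\theal$-restricted equivalence (\Cref{def:exec-equiv-teal}, \Cref{lem:equiv-ga3}, with a tailored adversary before $\merging{(\theal-1)}$) on top of the ``heal'' lemmas for the plain protocols (\Cref{lem:vote-proposal-fast-conf-heal,lem:ga-confirmed-always-canonical-heal}), and only then concludes the theorem as in \Cref{thm:dyn-avail-fast-conf-tob-ffg-heal}. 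A direct induction in the style you propose could in principle replace the simulation, but only if you add the FFG invariant above to the induction hypothesis; as written, the proposal replaces that missing piece with an incorrect claim, so it does not go through.
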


\begin{theorem}[Asynchrony Reorg Resilience]
  \label{thm:async-resilience-tob-ffg-heal}
  Both \Cref{alg:3sf-tob-noga,alg:rlmd-ffg} are asynchrony-reorg-resilient after slot $\theal$ and time $\fastconfirming{\theal}$.
\end{theorem}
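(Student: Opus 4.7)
The plan is to mirror the structure of the $\GST=0$ asynchrony-reorg-resilience arguments (\Cref{thm:async-resilience-tob-ffg,thm:async-resilience-rlmd-ffg}), shifted to start from slot $\theal$ rather than slot $0$, and to leverage the healed Reorg Resilience already guaranteed by \Cref{thm:reorg-res-prop-tob-ffg-heal}. Fix an $\eta$-compliant execution; the claim is vacuous unless $\pi>0$ and $t_a\geq\theal$, so I would assume both. Pick an honest proposal $\chain_p$ in some slot $t_p\in[\theal,t_a]$, any round $r_i\geq\fastconfirming{\theal}$, and any $v_i\in W_{r_i}$, with the goal of showing that $\chain_p$ does not conflict with $\chainava^{r_i}_i$ (respectively $\Chain^{r_i}_i$).

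I would then split into the same two cases as in \Cref{thm:async-resilience-tob-ffg}. If $r_i<\voting{t_p}$, then since $W_{r_i}\subseteq H_{r_i}$ and $t_p\geq\theal$, an immediate application of \Cref{thm:reorg-res-prop-tob-ffg-heal} yields the desired non-conflict. If instead $r_i\geq\voting{t_p}$, I would exploit that $\proposing{t_p}\geq\proposing{\theal}\geq\GST+\Delta$ and invoke the synchronous base case, namely \Cref{lem:vote-proposal-fast-conf-ffg-only-base} for \Cref{alg:3sf-tob-noga} and \Cref{lem:vote-proposal-fast-conf-ffg-only-base-rlmd} for \Cref{alg:rlmd-ffg}, concluding that every validator in $H_{\voting{t_p}}$ casts a vote for a chain extending $\chain_p$. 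This is precisely the hypothesis required to invoke \Cref{lem:asyn-induction3-ffg} (respectively \Cref{lem:asyn-induction3-rlmd-ffg}) with anchor slot $t=t_p\leq t_a$ and anchor chain $\chain=\chain_p$, which then delivers the non-conflict for every $r_i\geq\voting{t_p}$ and every $v_i\in W_{r_i}$, as required.

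The main obstacle, and the step that deserves the most care, is verifying that the asynchrony-induction machinery (\Cref{lem:keep-voting-tob-fast-conf-ffg,lem:asyn-induction2-ffg,lem:asyn-induction3-ffg} and their \RLMDGHOST{} counterparts) remains applicable when re-anchored at $t_p\geq\theal$ under $\GST>0$, since these lemmas were originally stated assuming $\GST=0$. Their inductions rely on synchronous delivery of votes between consecutive voting rounds outside the short asynchronous window. Since $\voting{t}\geq\voting{\theal}>\GST$ for every slot $t\geq\theal$ that the induction actually touches, the synchrony-delivery arguments go through verbatim, so no new combinatorial ingredients are needed beyond the healing guarantees captured by \Cref{thm:reorg-res-prop-tob-ffg-heal} together with the synchronous base-case lemmas \ref{lem:vote-proposal-fast-conf-ffg-only-base} and \ref{lem:vote-proposal-fast-conf-ffg-only-base-rlmd}.
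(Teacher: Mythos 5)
Your skeleton---the case split at $\voting{t_p}$, handling $r_i<\voting{t_p}$ via \Cref{thm:reorg-res-prop-tob-ffg-heal}, and anchoring the asynchrony induction at an honest proposal in a slot $t_p\in[\theal,t_a]$---is the same as the paper's, and your use of \Cref{lem:vote-proposal-fast-conf-ffg-only-base,lem:vote-proposal-fast-conf-ffg-only-base-rlmd} for the anchor is acceptable (the paper cites \Cref{lem:vote-proposal-fast-conf-ffg-heal}, but only its slot-$t_p$ content is needed). The gap is your final claim that \Cref{lem:asyn-induction2-ffg,lem:asyn-induction3-ffg} (and their \rlmdghost{} counterparts) ``go through verbatim'' under $\GST>0$ because they only need timely vote delivery from $\theal$ onwards. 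For the FFG-augmented protocols this is not the whole story: those inductions must also control justification, since in \Cref{alg:3sf-tob-noga} the vote-time fork choice is anchored at $\chainfrozen_i$, which is forced to extend $\GJfrozen_i.\chain$, and $\texttt{fastconfirm}$ falls back to $\GJ(\V).\chain$, while in \Cref{alg:rlmd-ffg} the fork choice starts from $\GJ(\V).\chain$ outright. With $\GST>0$, \textsc{ffg-vote}s cast by honest validators before $\GST$ can later be completed by withheld adversarial votes into a justified checkpoint conflicting with $\chain_p$ (exactly the scenario described in \Cref{sec:healing}); releasing those votes after $\theal$ would shift the greatest justified checkpoint and reorg aware validators even though every message from $\theal$ onwards is delivered within $\Delta$. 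The $\GST=0$ proofs of \Cref{lem:asyn-induction2-ffg,lem:asyn-induction3-ffg} implicitly exclude this because their bookkeeping of conflicting-target voters starts at slot $0$; that base case is unavailable here, so these lemmas cannot be cited as-is.

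This is why the paper proves healed variants \Cref{lem:asyn-induction2-ffg-heal,lem:asyn-induction3-ffg-heal}, whose induction carries additional conditions---every greatest justified checkpoint seen by an aware validator from $\fastconfirming{\theal}$ onwards has checkpoint slot at least $\theal$ and does not conflict with the anchor chain---established through \Cref{lem:equiv-ga3} and ultimately through the defining properties of $\theal$ (honest proposer, all honest validators active at $\fastconfirming{\theal}$, and $f<\frac{n}{3}$), which force a supermajority of \textsc{ffg-vote}s with target checkpoint slot $\theal$ whose chains prefix $\chain_p$. This justification bookkeeping is the ingredient your proposal is missing; once it is in place, your case analysis completes the argument exactly as in \Cref{thm:async-resilience-tob-ffg}, which is how the paper proves \Cref{thm:async-resilience-tob-ffg-heal}.
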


\begin{theorem}
  \label{thm:async-safety-resilience-tob-ffg-heal}
  Both \Cref{alg:3sf-tob-noga,alg:rlmd-ffg} are asynchrony-safety-resilient after time $\fastconfirming{\theal}$.
\end{theorem}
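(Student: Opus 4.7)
The plan is to combine the healed Dynamic Availability (\Cref{thm:dyn-avail-fast-conf-tob-ffg-heal}) with a bridging argument that transfers an available chain at $r_i$ into the asynchronous period via some honest validator active at $\voting{t_a}$, and then appeal to \Cref{lem:asyn-induction3-ffg} (resp. \Cref{lem:asyn-induction3-rlmd-ffg}) to propagate through the asynchronous window and beyond. First, I would unlock the usual machinery by invoking fast-liveness at slot $\theal$: by \Cref{def:theal} the proposer is honest, synchrony holds from $\proposing{\theal}$ onward, and every honest validator is active at $\fastconfirming{\theal}$, so \Cref{thm:fast-liveness-modified-tob} (resp. \Cref{thm:fast-liveness-rlmd}) yields a chain $\chain_p$ fast-confirmed in $\theal$ with $\chainava^{\fastconfirming{\theal}}_i \succeq \chain_p$ for every validator $v_i \in H_{\fastconfirming{\theal}}$. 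Monotonicity of honesty then ensures this estimate applies to every validator honest at any round $\geq \fastconfirming{\theal}$, since such a validator was already honest (hence active) at $\fastconfirming{\theal}$.

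Next, I would establish a healed analogue of \Cref{lem:ga-confirmed-always-canonical-ffg} (resp. \Cref{lemma:chava-prefix-canonical-ffg}): for any $r \geq \fastconfirming{\theal}$ with $v_i$ honest at $r$, and any later propose- or vote-round $r' \geq r$ with $v_j \in H_{r'}$, $\chainava^r_i \preceq \mfc^{r'}_j$ (resp. $\hfcvp^{r'}_j$). This follows by the same minimality-based contradiction used in the original, after first proving an auxiliary propagation lemma stating that every active validator at every slot $t \geq \theal$ votes for, and has propose-/vote-time fork-choice output extending, $\chain_p$. The base case of that induction is the merge phase of $\theal$ installing the fast-confirmation quorum into $\chainfrozen$ (resp. $\Vfrozen$) of every active validator, while the inductive step is a verbatim reuse of \Cref{lem:keep-voting-tob-fast-conf-ffg} (resp. \Cref{lem:keep-voting-tob-fast-conf-ffg-rlmd}), permissible because \Cref{eq:pi-sleepiness} and synchrony both hold throughout.

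With this infrastructure, the body of the proof mirrors \Cref{thm:async-safety-resilience-tob-ffg} (resp. \Cref{thm:async-safety-resilience-rlmd-ffg}). Pick any $r_i \in [\fastconfirming{\theal}, 4\Delta t_a + \Delta]$ with $v_i$ honest at $r_i$, and any $r_j$ with $v_j \in W_{r_j}$. If $r_j < \voting{t_a}$, Safety from \Cref{thm:dyn-avail-fast-conf-tob-ffg-heal} closes the case. Otherwise $r_j \geq \voting{t_a}$: \Cref{eq:async-condition} furnishes some $v_k \in H_{\voting{t_a}}$, the healed canonicality lemma gives $\mfc^{\voting{t_a}}_k \succeq \chainava^{r_i}_i$ (resp. $\hfcvp^{\voting{t_a}}_k \succeq \chainava^{r_i}_i$), so $v_k$ votes in $\voting{t_a}$ for a chain extending $\chainava^{r_i}_i$, and \Cref{lem:asyn-induction3-ffg} (resp. \Cref{lem:asyn-induction3-rlmd-ffg}) instantiated with $\chain = \chainava^{r_i}_i$ concludes that $\chainava^{r_j}_j$ does not conflict with $\chainava^{r_i}_i$.

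The principal obstacle is the healed canonicality lemma itself. The original proof dispatches the $\kappa$-deep rule case by exhibiting an honest proposer inside $[\slot(r_i)-\kappa, \slot(r_i))$, but before $\theal$ neither synchrony nor the participation hypothesis need hold, so such a proposer may be unavailable when $r_i$ is close to $\fastconfirming{\theal}$. The fix is to anchor the argument on $\theal$ itself: once the propagation lemma above is in hand, every update of $\chainava_i$ at or after $\fastconfirming{\theal}$ is either strictly extended from the current fork-choice output (already dominated by any later one, by the propagation lemma), or equal to $\chain_p$ itself (when the $\kappa$-deep prefix is shallower than $\chain_p.p$), so the minimality contradiction can be re-run with $\chain_p$ playing the role previously played by an in-$\kappa$-window honest proposal.
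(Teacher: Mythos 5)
Your high-level skeleton matches the paper's: split into the case $r_j < \voting{t_a}$, closed by Safety from healed Dynamic Availability, and the case $r_j \geq \voting{t_a}$, closed by healed canonicality plus an asynchrony-propagation lemma applied at $t_a$. You also correctly identify that canonicality must be re-anchored on $\theal$, and your ``propagation lemma'' is essentially \Cref{lem:vote-proposal-fast-conf-ffg-heal}.

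The genuine gap is the final appeal to \Cref{lem:asyn-induction3-ffg} (resp.\ \Cref{lem:asyn-induction3-rlmd-ffg}) as-is. That lemma is proved in Section~\ref{sec:analysis-tob-sync} under the blanket assumption $\GST=0$, and its proof through \Cref{lem:asyn-induction2-ffg} carries the inductive condition \Cref{cond:ffg1}: for \emph{every} slot $t''\leq t_i$, the set of validators whose $\chainava$ at $\voting{t''}$ conflicts with $\chain$, unioned with $A_\infty$, has size below $\frac{2}{3}n$. This is exactly the property needed to conclude, in the inductive step, that no checkpoint conflicting with $\chain$ can ever be justified — and it is exactly what can fail when $\GST>0$: honest validators may have voted for, and attached \textsc{ffg-vote} targets extending, arbitrary chains before $\theal$, and these pre-$\theal$ \textsc{ffg-vote}s (combined with adversarial ones released later) can justify a conflicting checkpoint, which then corrupts the fork choice via the $\GJ$ anchor in $\mathtt{fastconfirm}$ and $\GJfrozen$. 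Your propagation lemma controls what honest validators \emph{vote} for from $\theal$ onward, but says nothing about what can become the \emph{greatest justified checkpoint} from messages sent before $\theal$.

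The paper closes this by proving a genuinely strengthened healed lemma \Cref{lem:asyn-induction2-ffg-heal} (and hence \Cref{lem:asyn-induction3-ffg-heal}): it restricts \Cref{cond:ffg1} to slots in $(t,t_i]$ and adds two new invariants, \Cref{cond:ffg2-heal} ($\GJ(\V^{\fastconfirming{t_j}}_j).c \geq t$) and \Cref{cond:ffg3-heal} ($\GJ(\V^{\fastconfirming{t_j}}_j)$ does not conflict with $\chain$), whose base case is discharged from the fact that in slot $\theal$ all validators honest at $\fastconfirming{\theal}$ are active, receive the same \textsc{ffg-vote}s from the honest proposer's slot, and thus see a $\theal$-slot checkpoint as the greatest justified one — after which $\GJ$ can never regress to a pre-$\theal$ slot. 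Without this checkpoint-slot monotonicity argument, the induction through the asynchronous window does not go through, so your proof as written does not close. To complete it, you would need to state and prove healed analogues of \Cref{lem:asyn-induction2-ffg} and \Cref{lem:asyn-induction3-ffg} (or reprove them with the strengthened invariant) rather than citing the $\GST=0$ versions.
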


In the reminder of this section, we provide the intuition underpinning these results.
We refer the reader to \Cref{sec:healing-detailed} for detailed proofs.

Let us start with considering the dynamically-available protocols \Cref{algo:prob-ga-fast,alg:rlmd} that the faster-finality protocols \Cref{alg:3sf-tob-noga,alg:rlmd-ffg} are based on.
In those protocols, intuitively, all that it needs to happen for the synchronous properties to start holding is that there exists a slot $t_p$ with an honest proposer such that $\merging{(t_p -1)}\geq \GST$.
This is because the only synchrony requirement that \Cref{lem:vote-proposal-fast-conf} relies on is to have synchrony starting from the merge round preceding a slot with an honest proposer.
However, under these conditions only, $\eta$ Safety after time $\theal$, as we have stated it, would not quite hold.
In fact, it would hold only if we limit it to be satisfied just for those validators that have been active in at least one round since the confirm round of slot $\theal$.
Nevertheless, with the additional requirements that all validators honest in round $\fastconfirming{\theal}$ are also active in that round, $\eta$ Safety after time $\theal$ is guaranteed by either \Cref{algo:prob-ga-fast} or \Cref{alg:rlmd}.
This is because, thanks to this additional requirement, in round $\fastconfirming{\theal}$, all validators honest in that round set their confirmed chain to the chain \textsc{propose}d in slot $\theal$.

Now, let us move to \Cref{alg:3sf-tob-noga,alg:rlmd-ffg} and consider the implications of the added logic to handle the FFG component in either protocol.
What can happen during asynchrony is that enough honest validators send \textsc{ffg-vote}s targeting a checkpoint $\C$ conflicting with the chain $\chain_p$ \textsc{propose}d by an honest validator after $\GST$  such that these votes alone do not justify such checkpoint, but they can justify it together with the \textsc{ffg-vote}s that the adversary can send via the validators that it has corrupted.
Such \textsc{ffg-vote}s are then released by the adversary after $\chain_p$ is \textsc{propose}d.
If no checkpoint higher than $\C$ has been justified yet, which can very well happen due to low honest participation, then, after such \textsc{ffg-vote}s are released, honest validators would switch to \textsc{vot}ing and confirming a chain extending $\C$ and, therefore, they would violate $\eta$ Dynamic Availability and Reorg Resilience.

However, what happens in slot $\theal$ is that all honest validators cast \textsc{ffg-vote}s with the same source checkpoint (the greatest justified checkpoint in the view of the honest proposer at the proposing time) and with target a checkpoint $\C'$, possibly different for each validator, with $\C'.\chain \preceq \chain_p$ and $\C'.c = \theal$.
Given that all those validators are also active in the confirm round of slot $\theal$ and that we assume $f<\frac{n}{3}$, this leads to all validators honest in the confirm round of slot $\theal$ to see a checkpoint $\J$, possibly different for each validator, with $\J.\chain \preceq \chain_p$ and $\J.c = \theal$ as the greatest justified checkpoint.
Like in the case of \Cref{algo:prob-ga-fast,alg:rlmd}, this also leads to all validators honest in the confirm round of slot $\theal$ to set $\chainava$ to $\chain_p$.
As a consequence of this, by the vote round of slot $\theal +1$, for any honest validator, it is impossible that a greatest justified checkpoint conflicting with chain $\chain_p$ can emerge.
Hence, all honest validators in $H_{\theal + 1}$ cast \textsc{vote} messages extending chain $\chain_p$ and \textsc{ffg-vote}s for targets that are prefixes of $\chain_p$.
Therefore, given that we assume $f<\frac{n}{3}$, no checkpoint for slot $\theal+1$ conflicting with $\chain_p$ can ever be justified.
Clearly this will keep being the case for any slot onward.
In summary, the key intuition about \Cref{alg:3sf-tob-noga} and \Cref{alg:rlmd-ffg} guaranteeing the synchronous properties after slot $\theal$ or after round $\fastconfirming{\theal}$, depending on the property, is that after the confirm round of slot $\theal$, all honest validators always see greatest justified checkpoints for a slot $\theal$ or higher and no checkpoint for slot $\theal$ or higher conflicting with the chain \textsc{propose}d in slot $\theal$ can ever be justified.

\section{Conclusion}
\label{sec:conclusion}


{
In this work, we present a novel finality gadget that, when integrated with dynamically-available consensus protocols, enables a secure ebb-and-flow protocol with just a single voting phase per proposed chain. This streamlined design enhances the practical throughput of large-scale blockchain networks. By limiting the process to a single voting phase for each chain proposal, our protocols reduce the time interval between proposals. Consequently, this optimization leads to a 20\% increase in transaction throughput compared to the SSF protocol of D'Amato and Zanolini (assuming same block size), all while maintaining essential security guarantees.



Moreover, assuming a uniformly distributed transaction submission time, our protocols achieve a shorter expected confirmation time compared to the SSF protocol of D'Amato and Zanolini. Specifically, for an adversary with power \(\beta = \frac{1}{3}\), the expected confirmation time for 3SF is improved by approximately 11\% over the SSF protocol.

A key trade-off in our approach is a slight delay in chain finalization, which extends to the time required to propose three additional chains, rather than finalizing before the next chain proposal. In practice, this translates to an expected finalization time of \(16\Delta\) (which can be reduced to $13\Delta$) when $\beta = \frac{1}{3}$, compared to \(11\Delta\) in the SSF protocol. However, this delay is offset by providing early confirmation of the proposed chain before the subsequent chain is introduced. Users can be confident that, under conditions of network synchrony and at least 2/3 honest node participation, the chain will be finalized within the time required to propose three additional chains. This feature is particularly beneficial in practical scenarios, where periods of synchrony and robust honest participation often extend well beyond the duration required for finalization in our protocol. Finally, our protocols enhance the practicality of large-scale blockchain networks by enabling the dynamically-available component to recover from extended asynchrony, provided at least two-thirds of validators remain honest and online for sufficient time. This improvement addresses a limitation in the original ebb-and-flow protocol, which assumed constant synchrony to ensure safety, highlighting the importance of resilience in real-world, asynchronous conditions.

}

\bibliography{references}
\bibliographystyle{plainurl}

\appendix

\section{Healing process - Detailed Proofs}\label{sec:healing-detailed}
In this section, we provide detailed proofs of the results presented in \Cref{sec:healing}.
In the interest of space and time, we will only show the proofs for \Cref{alg:3sf-tob-noga} as the proofs for \Cref{alg:rlmd-ffg} are very similar.
The reader should be convinced of this by considering the similarities of the proofs of the various Lemmas and Theorems for \Cref{alg:rlmd-ffg} to the proof of the analogous Lemmas and Theorems for \Cref{alg:3sf-tob-noga}.



First, we prove that all the synchronous properties hold for \Cref{algo:prob-ga-fast}, the dynamically-available protocol that \Cref{alg:3sf-tob-noga} is based on, from slot $\theal$ onward.
Then, in line with the strategy followed so far, we show that given an execution of \Cref{alg:3sf-tob-noga} there exists an execution of \Cref{algo:prob-ga-fast} that is a dynamically-equivalent to the execution of \Cref{alg:3sf-tob-noga} from $\theal$ onward.
This will then enable us to then show that the synchronous properties hold for \Cref{alg:3sf-tob-noga} from slot $\theal$ onward.
\subsection{Analysis for \Cref{algo:prob-ga-fast}}

\begin{lemma}\label{lem:keep-voting-tob-fast-conf-heal}
  If $\merging{t} \geq \GST$, then \Cref{lem:keep-voting-tob-fast-conf} holds for $\GST > 0$ as well.
\end{lemma}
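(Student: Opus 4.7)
My plan is to observe that the original proof of \Cref{lem:keep-voting-tob-fast-conf} invokes the hypothesis $\GST = 0$ in exactly one place: to conclude that every \textsc{vote} message gossiped at round $\voting{t}$ by any validator in $H_{\voting{t}} \setminus A_{\voting{(t+1)}}$ is present in $v_i$'s propose-time view $\V_i^{\proposing{(t+1)}}$ and, for the vote-phase sub-argument, also in $\V_i^{\text{frozen},\voting{(t+1)}} \cap \V_i^{\voting{(t+1)}}$, for every $v_i \in H_{\voting{(t+1)}}$. Everything else in that proof — the identification of $\mathsf{S}(\V_i, t+1) \setminus (H_{\voting{t}} \setminus A_{\voting{(t+1)}})$ with validators that are either in $A_{\voting{(t+1)}}$ or in $H_{\voting{(t-\eta+1)},\voting{(t-1)}} \setminus H_{\voting{t}}$, the application of Constraint~\eqref{eq:pi-sleepiness}, and the case split on whether $\chainfrozen_i = \genesis$ — is pure view-set algebra and is unaffected by $\GST$. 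So the entire task reduces to re-establishing those two inclusion statements from the weaker hypothesis $\merging{t} \geq \GST$.

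First, I would use the joining protocol to pin down $v_i$'s activity window: any $v_i \in H_{\voting{(t+1)}}$ is awake by round $\voting{t} + \Delta = \fastconfirming{t}$ and therefore processes all gossip throughout $[\fastconfirming{t}, \voting{(t+1)}]$. Second, under the partially synchronous model, every message gossiped by a validator which remains honest is delivered to all awake validators by round $\max(\text{send time}, \GST) + \Delta$. Applied to each \textsc{vote} sent at $\voting{t}$ by a validator in $H_{\voting{t}} \setminus A_{\voting{(t+1)}}$ (which remains honest past the relevant delivery deadline), this gives delivery by $\max(\voting{t}, \GST) + \Delta \leq \merging{t} + \Delta = \proposing{(t+1)}$. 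That is exactly the propose-phase inclusion we need, and since $\voting{(t+1)} > \proposing{(t+1)}$ it also yields membership in $\V_i^{\voting{(t+1)}}$.

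The genuinely delicate step — and where I expect to spend the most care — is the frozen-view inclusion required in Case~1 of the vote-phase argument, since $\Vfrozen[\voting{(t+1)}]_i = \V_i^{\merging{t}}$ is snapshotted at the end of slot $t$ and is not refreshed by the slot $t+1$ proposal (Algorithm~\ref{algo:prob-ga-fast} only rewrites $\chainfrozen_i$ in its \textbf{upon} handler, not $\Vfrozen_i$). Here I would lean on the gossip propagation guarantee together with $\merging{t} \geq \GST$ to push the effective delivery deadline of every pre-GST \textsc{vote} into the interval $[\voting{t}, \merging{t}]$, using the fact that honest validators awake in $[\fastconfirming{t}, \merging{t}]$ re-gossip everything they receive. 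Once both inclusions are secured, the original counting bound
\[
|\V_i^{\chain, t+1}| \;\geq\; |H_{\voting{t}} \setminus A_{\voting{(t+1)}}| \;>\; \tfrac{1}{2}\, |\mathsf{S}(\V_i, t+1)|
\]
applies verbatim, Case~2 (where $\chainfrozen_i \succeq \chain$) is immediate from the monotonicity of $\mfc(\cdot,\cdot,\chainfrozen_i,\cdot)$, and putting the pieces together yields $\mfcpropose{(t+1)}_i \succeq \chain$ and $\mfcvote{(t+1)}_i \succeq \chain$; \Cref{line:algga-no-ffg-vote-comm} then forces every $v_i \in H_{\voting{(t+1)}}$ to cast a \textsc{vote} for a chain extending $\chain$, which is exactly the conclusion of \Cref{lem:keep-voting-tob-fast-conf}.
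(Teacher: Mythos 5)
Your overall strategy is the same as the paper's: the paper's entire proof is the single observation that the only place synchrony enters the proof of \Cref{lem:keep-voting-tob-fast-conf} is in guaranteeing that the \textsc{vote} messages sent at round $\voting{t}$ by validators in $H_{\voting{t}}$ are received by round $\merging{t}$ by every honest validator awake at that round, and that this is available once $\merging{t}\geq\GST$. Your identification of the two needed inclusions and your handling of the propose-time view (delivery by $\proposing{(t+1)}$) are consistent with this.

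The gap is precisely the step you flag as delicate, and your proposed mechanism does not close it. Under the delivery rule you yourself adopt (a message sent at round $s$ is delivered by $\max(s,\GST)+\Delta$), a \textsc{vote} sent at $\voting{t}$ is only guaranteed to reach the \emph{first} honest recipient by $\GST+\Delta$; re-gossiping cannot advance this, since before $\GST$ the adversary delays every copy, re-gossiped or not, and any post-$\GST$ re-gossip arrives another $\Delta$ later. Hence if $\GST\in(\fastconfirming{t},\merging{t}]$ the vote can miss the snapshot $\Vfrozen[\voting{(t+1)}]_i=\V^{\merging{t}}_i$, and the counting bound in Case~1 of the vote-phase argument is not established, so your proof of $\mfcvote{(t+1)}_i\succeq\chain$ is incomplete. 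The paper does not need any such device: it treats ``\textsc{vote}s from $\voting{t}$ are received by $\merging{t}$'' as the one synchrony requirement, which follows from $\merging{t}\geq\GST$ under the convention that a message sent at round $s$ is delivered by $\max(s+\Delta,\GST)$ (equivalently, all pre-$\GST$ messages are delivered by $\GST$); moreover, in the healing application the lemma is only invoked for $t\geq\theal$ with $\proposing{\theal}\geq\GST+\Delta$, so the relevant votes are sent after $\GST$ anyway. To repair your write-up, either state and use that delivery convention explicitly, or work with the condition $\GST\leq\fastconfirming{t}$ that your convention actually requires; the re-gossip argument should be dropped.
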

\begin{proof}
  The only synchrony requirement that the proof of \Cref{lem:keep-voting-tob-fast-conf} relies on is that \textsc{vote} messages sent by validators in $H_{\voting{t}}$ are received by round {$\merging{(t)}$} by all honest validator awake at that round.
\end{proof}

\begin{lemma}\label{lem:one-fast-confirm-all-vote-fast-conf-heal}
  If $\merging{t} \geq \GST$, then \Cref{lem:one-fast-confirm-all-vote-fast-conf} holds for $\GST > 0$ as well.
\end{lemma}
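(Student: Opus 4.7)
The plan is to inspect the two places in the proof of \Cref{lem:one-fast-confirm-all-vote-fast-conf} that invoke synchrony and check that each is covered by the hypothesis $\merging{t} \geq \GST$; then, for the inductive step, simply substitute \Cref{lem:keep-voting-tob-fast-conf} with its healing analogue \Cref{lem:keep-voting-tob-fast-conf-heal}. The structure of the argument (base case at $t' = t+1$, then induction on $t'$) is unchanged.

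For the base case, I would argue as follows. Some honest validator observes a quorum $Q$ of \textsc{vote} messages for $\chain$ at round $\fastconfirming{t}$; by $f < n/3$ no conflicting quorum can form. The \textsc{vote} messages in $Q$ are gossiped by any honest receiver (including the fast-confirming validator itself), and since every round in $[\fastconfirming{t}, \merging{t}]$ lies at or after $\GST$ (because $\merging{t} \geq \GST$ and $\merging{t} - \fastconfirming{t} = \Delta$), those messages are delivered within $\Delta$ rounds to every validator that is active in that window. Combined with the joining protocol, which guarantees that every $v_i \in H_{\voting{(t+1)}}$ is awake by round $\merging{t}$ and immediately receives queued messages upon waking, this forces $\chainfrozen[\merging{t}]_i = \chain$ for every such $v_i$. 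The remainder of the base case is identical to the original: the validity check on any \textsc{propose} message received in $[\proposing{(t+1)}, \voting{(t+1)}]$ implies $\chain^{C}_p \succeq \chain$, so \Crefrange{line:algga-no-ffg-upon}{algo:3sf-ga-setchainfrozen-to-chainc} preserve $\chainfrozen[\voting{(t+1)}]_i = \chain$, whence $\mfcvote{(t+1)}_i \succeq \chain$ and every $v_i \in H_{\voting{(t+1)}}$ casts a \textsc{vote} for an extension of $\chain$.

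For the inductive step $t' > t+1$, I would appeal to \Cref{lem:keep-voting-tob-fast-conf-heal}. Its hypothesis $\merging{(t'-1)} \geq \GST$ is automatic since $t' - 1 \geq t+1$ gives $\merging{(t'-1)} \geq \merging{(t+1)} > \merging{t} \geq \GST$, so the healing analogue applies and propagates the voting invariant from slot $t' - 1$ to slot $t'$.

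The main obstacle is ensuring that the base case's gossip argument is airtight: in the original synchronous proof ($\GST = 0$) the propagation of $Q$ to all honest validators by round $\merging{t}$ is immediate, whereas here it depends on $\fastconfirming{t}$ lying within the synchronous regime induced by $\merging{t} \geq \GST$. The $\Delta$-gap between $\fastconfirming{t}$ and $\merging{t}$ (together with the joining protocol) is exactly what makes this work; if one wanted to be fully precise, the same argument could be stated with the slightly stronger assumption $\fastconfirming{t} \geq \GST$ to make the delivery window unambiguous, but under the stated hypothesis the joining protocol's flushing of queued messages on wake-up bridges the remaining gap.
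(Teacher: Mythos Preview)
Your approach matches the paper's: identify the synchrony uses in the original proof of \Cref{lem:one-fast-confirm-all-vote-fast-conf} (the base-case propagation of the quorum to round $\merging{t}$, and the inductive step via \Cref{lem:keep-voting-tob-fast-conf}) and replace the latter with its healing analogue \Cref{lem:keep-voting-tob-fast-conf-heal}. The paper's proof says exactly this in one sentence.

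Your base-case reasoning, however, contains a slip. From $\merging{t} \geq \GST$ and $\merging{t} - \fastconfirming{t} = \Delta$ you assert that ``every round in $[\fastconfirming{t}, \merging{t}]$ lies at or after $\GST$''; this does not follow (take $\GST = \merging{t}$, so $\fastconfirming{t} = \GST - \Delta < \GST$). The joining protocol does not repair this either: it governs when a waking validator flushes \emph{already-delivered} messages from its queue, whereas the obstacle here is network delay of gossip that was sent before $\GST$ and therefore has no delivery bound. Your own suggested strengthening to $\fastconfirming{t} \geq \GST$ is the clean fix. The paper's terse claim (``synchrony holds since $\merging{t}$'') carries the same looseness, but in every downstream use of this lemma in the healing analysis one has $t \geq \theal$, and by \Cref{def:theal} this gives $\fastconfirming{t} \geq \proposing{\theal} + 2\Delta \geq \GST + 3\Delta$, so the distinction is immaterial in practice.
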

\begin{proof}
  Given \Cref{lem:keep-voting-tob-fast-conf-heal}, the only synchrony requirement that the proof \Cref{lem:one-fast-confirm-all-vote-fast-conf} relies on is that synchrony holds since $\merging{t}$.
\end{proof}

\begin{lemma}\label{lem:vote-proposal-fast-conf-heal}
  If $\merging{t-1} \geq \GST$, then \Cref{lem:vote-proposal-fast-conf} holds for $\GST > 0$ as well.
\end{lemma}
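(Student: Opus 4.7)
The plan is to mirror the structure of the original proof of \Cref{lem:vote-proposal-fast-conf} and verify that the only places where $\GST=0$ was implicitly used can be substituted by the weaker hypothesis $\merging{t-1} \geq \GST$. Since the original proof splits into a base case ($t' = t$) and an inductive step ($t' > t$), I would analyze each in turn.

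For the base case $t' = t$, I would re-examine the message-inclusion step $\V^{\merging{(t-1)}}_i \subseteq \V^{\proposing{t}}_p$ used in the original argument. This inclusion was the sole synchrony-dependent ingredient: it needs every message already in $v_i$'s view at round $\merging{(t-1)}$ to reach the honest proposer $v_p$ by round $\proposing{t} = \merging{(t-1)} + \Delta$. Under the hypothesis $\merging{(t-1)} \geq \GST$, the $\Delta$-delivery bound applies to all such messages, so the inclusion still holds. Everything else in the base case (the case split on whether $\left|(\Vfrozen_i)^{\mfcvote{t}_i,t} \cap \V_i^{\mfcvote{t}_i,t}\right|$ exceeds $\frac{|\mathsf{S}(\V_i,t)|}{2}$, and the invocation of Graded Delivery from \cite{streamliningSBFT}) is purely algorithmic and independent of $\GST$, so it carries over verbatim. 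Hence every $v_i \in H_{\voting{t}}$ still \textsc{vote}s for $\chain_p$.

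For the inductive step $t' > t$, the original proof simply invokes \Cref{lem:keep-voting-tob-fast-conf}. I would instead invoke \Cref{lem:keep-voting-tob-fast-conf-heal}, whose precondition is $\merging{t'-1} \geq \GST$. Since slot indices are monotone and $\merging{(t'-1)} \geq \merging{(t-1)} \geq \GST$ for every $t' \geq t$, this precondition is automatically satisfied, and the induction proceeds exactly as before, yielding both the \textsc{vote}-extension claim for $H_{\voting{t'}}$ and the fork-choice claims $\mfcpropose{t''}_i \succeq \chain_p$ and $\mfcvote{t''}_i \succeq \chain_p$ for all $v_i \in H_{\voting{t''}}$ with $t'' > t$.

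There is no serious obstacle here: this is a routine bookkeeping lemma that verifies the synchrony requirement of the original proof is localized to round $\merging{(t-1)}$ (and the induction-driving Lemma). The main thing to be careful about is making sure no other delivery assumption sneaks in — in particular that the joining-protocol interaction for newly-awake validators in $H_{\voting{t}}$ still works, but this is already handled by the joining protocol's requirement that such validators be awake by $\voting{(t-1)} + \Delta = \merging{(t-1)}$, and message delivery to them since that round is governed by the same $\merging{(t-1)} \geq \GST$ hypothesis.
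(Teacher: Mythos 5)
Your proof is correct and takes essentially the same approach as the paper's (which is just a one-sentence observation that, given \Cref{lem:keep-voting-tob-fast-conf-heal}, the only synchrony requirement in the proof of \Cref{lem:vote-proposal-fast-conf} is that synchrony holds from $\merging{(t-1)} = 4\Delta t - \Delta \geq \GST$); you have simply spelled out the details. One minor slip in your closing remark: $\voting{(t-1)}+\Delta = 4\Delta(t-1)+2\Delta = \fastconfirming{(t-1)}$, not $\merging{(t-1)} = 4\Delta(t-1)+3\Delta$, but since $\fastconfirming{(t-1)} < \merging{(t-1)}$ the validator is indeed awake at $\merging{(t-1)}$, so the conclusion you draw still stands.
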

\begin{proof}
  Given \Cref{lem:keep-voting-tob-fast-conf-heal}, the only synchrony requirement that the proof of \Cref{lem:vote-proposal-fast-conf} relies on is that synchrony holds since $\merging{(t-1)} = 4 \Delta t - \Delta\geq \GST$.
\end{proof}

\begin{lemma}\label{lem:ga-confirmed-always-canonical-heal}
  If $r_i \geq \fastconfirming{\theal}$, then \Cref{lem:ga-confirmed-always-canonical} holds for $\GST > 0$ as well.
\end{lemma}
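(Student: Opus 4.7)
The plan is to adapt the contradiction-based argument from the proof of \Cref{lem:ga-confirmed-always-canonical} to the healing setting, replacing each synchronous input with its healing variant and treating slot $\theal$ as the effective ``start of synchrony'' for this analysis. As a preliminary observation, I would first establish that at round $\fastconfirming{\theal}$ every $v_i \in H_{\fastconfirming{\theal}}$ fast-confirms a chain $\chain^* \succeq \chain_p$, where $\chain_p$ is the chain proposed by the honest proposer of slot $\theal$. This uses \Cref{lem:vote-proposal-fast-conf-heal} applied to slot $\theal$ (whose premise $\merging{(\theal-1)} \geq \GST$ follows from $\proposing{\theal} \geq \GST + \Delta$) together with the condition in \Cref{def:theal} that all honest validators are active at $\fastconfirming{\theal}$; these, combined with $f < \frac{n}{3}$ and the joining-protocol monotonicity $H_{\fastconfirming{\theal}} \subseteq H_{\voting{\theal}}$, yield $|H_{\voting{\theal}}| > \tfrac{2}{3}n$, so that each $v_i \in H_{\fastconfirming{\theal}}$ sees a quorum of $\voting{\theal}$-votes extending $\chain_p$ and necessarily executes the fast-confirmation update.

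Next, I would mimic the original proof by assuming, for contradiction, the existence of a smallest round $r_i \geq \fastconfirming{\theal}$ admitting a violation with witnesses $v_i$, $r_j \geq r_i$ of propose/vote type, and $v_j \in H_{r_j}$. Minimality (or, for $r_i = \fastconfirming{\theal}$, the preliminary observation above) forces $r_i$ to be either a vote or a fast-confirmation round with $v_i \in H_{r_i}$. The fast-confirmation branch is closed by invoking \Cref{lem:one-fast-confirm-all-vote-fast-conf-heal}, whose synchrony premise $\merging{\slot(r_i)} \geq \GST$ is satisfied because $\slot(r_i) \geq \theal$. In the vote branch, Case~1.2 of the original proof (i.e.\ $\Chain^{r_i}_i \succ (\mfcvote{t_i}_i)^{\lceil \kappa, t_i}$) still forces $\Chain^{r_i}_i = \Chain^{r_i-1}_i$ and contradicts minimality, so only Case~1.1 requires any real modification.

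For Case~1.1 I would split on whether $t_i \geq \theal + \kappa$. If so, the entire window $[t_i-\kappa, t_i)$ lies at or after $\theal$, and the standard Lemma~2 of~\cite{rlmd} argument produces, with overwhelming probability, an honest proposer $t_p$ in that window to which \Cref{lem:vote-proposal-fast-conf-heal} applies (its premise $\merging{(t_p-1)} \geq \merging{(\theal-1)} \geq \GST$ being satisfied). Otherwise $\theal + 1 \leq t_i < \theal + \kappa$, in which case $\theal$ itself is an honest proposer lying in $[t_i-\kappa, t_i)$ by \Cref{def:theal}, and \Cref{lem:vote-proposal-fast-conf-heal} applies deterministically to it with $\chain_p$ being exactly the chain proposed in slot $\theal$. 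In both situations, the resulting $\chain_p$ satisfies $\chain_p.p \geq t_i - \kappa$ together with $\mfcvote{t_i}_i \succeq \chain_p$ and $\mfc^{r_j}_j \succeq \chain_p$, whence $\Chain^{r_i}_i = (\mfcvote{t_i}_i)^{\lceil \kappa, t_i} \preceq \chain_p \preceq \mfc^{r_j}_j$, contradicting the choice of witnesses.

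The main obstacle is precisely Case~1.1 when $t_i$ is only slightly larger than $\theal$: the probabilistic ``an honest proposer exists in $[t_i - \kappa, t_i)$'' argument is weakened when part of the interval lies before $\GST$, and, even when an honest proposer does exist in that part of the window, \Cref{lem:vote-proposal-fast-conf-heal} cannot be invoked for it because its synchrony premise fails. The key idea that unblocks the proof is to exploit slot $\theal$ itself as a deterministic, synchronous honest-proposer anchor whenever $t_i \in (\theal,\, \theal + \kappa]$; once this is in place, the remainder of the original proof template carries over essentially verbatim.
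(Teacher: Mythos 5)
Your proposal is correct and follows essentially the same route as the paper: the same minimality-based contradiction, the same use of \Cref{lem:vote-proposal-fast-conf-heal} and \Cref{lem:one-fast-confirm-all-vote-fast-conf-heal} in the vote and fast-confirmation branches, and the same special treatment of round $\fastconfirming{\theal}$ via the honest proposer of slot $\theal$ together with the requirement that all honest validators are active at $\fastconfirming{\theal}$ (the paper's Case~2.1, your preliminary observation). Your explicit split on $t_i \geq \theal + \kappa$ in Case~1.1 is exactly what the paper encodes by taking the proposer window $[\max(t_i-\kappa,\theal),\,t_i)$, and if anything you spell out more carefully why an honest proposer exists there when the window is shorter than $\kappa$ slots.
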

\begin{proof}
  The proof of this Lemma is very similar to the proof of \Cref{lem:ga-confirmed-always-canonical}.
  However, the changes are intertwined enough with the proof that, for the sake of clarity, we are better off writing the entire proof.

  Like in the proof of \Cref{lem:ga-confirmed-always-canonical}, we proceed by contradiction.
  Let $r_i \geq \fastconfirming{\theal}$ be the smallest round such that there exist two honest validators $v_i$ and $v_j$, and round $r_j$ such that $r_j\ge r_i$ and $r_j \in \{\proposing{\slot(r_j)}, \voting{\slot(r_j)}\}$ and 
  $\Chain^{r_i}_i \preceq \mfc^{r_j}_j$.
  Due to the joining protocol, $H_{\theal}$ includes all the validators honest in round $\voting{\theal}$.
  Given this and the minimality of $r_i$, $v_i \in H_{r_i}$ and $\Chain^{r_i-1}_i \neq \Chain^{r_i}_i$.
  This can only happen if $r_i$ is either a voting or a fast confirmation round.
  Let $t_i= \mathrm{slot}(r_i)$ and proceed by analyzing each case separately. 
  \begin{description}
    \item[Case 1: $r_i$ is a vote round.] 
    Due to \Cref{line:algga-no-ffg-vote-chainava}, $\Chain^{r_i}_i \succeq \left(\mfcvote{t_i}_i\right)^{\lceil \kappa} \lor \Chain^{r_i}_i$.
    Let us now consider two sub cases.
      \begin{description}
        \item[Case 1.1: $\Chain^{r_i}_i = \left(\mfcvote{t_i}_i\right)^{\lceil \kappa}$.] 
        We know that with  overwhelming probability (Lemma 2~\cite{rlmd}), there exists at least one slot \(t_p\) in the interval \([\max(t_i - \kappa,\theal), t_i)\) with an honest proposer $v_p$.
        Note that this interval is not empty as, due to the Lemma's conditions, this case implies $\slot(r_i) > \theal$.
        Let $\chain_p$ be the chain \textsc{propose}d by $v_p$ in slot $t_p$.
        Given that $\theal \leq t_p < t_i$, \Cref{lem:vote-proposal-fast-conf-heal} implies that 
        $\mfcvote{t_i}_i \succeq \chain_p$.
        Then, because $t_p\geq t_i-\kappa$, we have that $\chain_p \succeq  \left(\mfcvote{t_i}_i\right)^{\lceil \kappa} = \Chain^{r_i}_i$.
        Because $\theal \leq t_p < \slot(r_j)$, \Cref{lem:vote-proposal-fast-conf-heal} also implies that $\chain^{r_j}_j \succeq \chain_p \succeq \Chain^{r_i}_i$ leading to a contradiction.
        \item[Case 1.2: $\Chain^{r_i}_i \succ \left(\mfcvote{t_i}_i\right)^{\lceil \kappa}$.] 
        This case implies that $\Chain^{r_i}_i  = \Chain^{r_i-1}_i$.
        From the minimality of $r_i$ we reach a contradiction.
      \end{description}
      \item[Case 2: $r_i$ is a fast confirmation round.]
      Note that this implies that $t_i < \slot(r_j)$.
      Let us consider three sub cases.
      \begin{description}
        \item[Case 2.1: $r_i = \fastconfirming{\theal}$.]
        Given that the the proposer $v_p$ of $\theal$ is honest, we can apply \Cref{lem:vote-proposal-fast-conf-heal} to conclude that all validators in $H_\theal$ cast a \textsc{vote}  message for the chain $\chain$ \textsc{propose}d by $v_p$.
        Then, because we assume that $f<\frac{n}{3}$, $H_{\theal}$ includes all the validators honest in round $\voting{\theal}$ and any validator honest in round  $\fastconfirming{\theal}$ is also awake in round $\fastconfirming{\theal}$, then, for any validator $v_i$ honest in round $\fastconfirming{\theal}$, $(\chain_p, Q) = \texttt{fastconfirmsimple}(\V^{\fastconfirming{\theal}}_i,t_i) \land Q \neq \emptyset$ meaning that $v_i$ sets $\Chain^{r_i}_i = \chain_p$.
        Then, we can apply \Cref{lem:one-fast-confirm-all-vote-fast-conf-heal} to conclude that $\chain^{r_j}_j \succeq \Chain^{r_i}_i$ reaching a contradiction.
        \item[Case 2.2: $r_i > \fastconfirming{\theal} \land \Chain^{r_i-1}_i \neq \Chain^{r_i}_i$.] 
        Due to \Crefrange{line:algga-no-ffg-on-confirm}{line:algga-no-ffg-set-chaava-to-bcand}, this case implies $(\Chain^{r_i}_i, Q) = \texttt{fastconfirmsimple}(\V^{r_i}_i,t_i) \land Q \neq \emptyset$.
        Therefore, we can apply \Cref{lem:one-fast-confirm-all-vote-fast-conf-heal} to conclude that $\chain^{r_j}_j \succeq \Chain^{r_i}_i$ reaching a contradiction.
        \item[Case 2.3: $r_i > \fastconfirming{\theal} \land \Chain^{r_i-1}_i = \Chain^{r_i}_i$.] 
        This case implies that $r_i - 1 \geq \fastconfirming{\theal}$.
        Hence, due to the minimality of $r_i$ we reach a contradiction.\qedhere
      \end{description}
  \end{description}  
\end{proof}

\begin{theorem}[Reorg Resilience]\label{thm:reorg-res-prop-tob-heal}
  \Cref{algo:prob-ga-fast} is $\eta$-reorg-resilient after slot $\theal$ and time $\fastconfirming{\theal}$.
\end{theorem}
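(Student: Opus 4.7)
The plan is to mimic the proof of \Cref{thm:reorg-res-prop-tob} essentially verbatim, merely substituting the ``healed'' versions of the two supporting lemmas whenever synchrony from round $0$ is invoked. Concretely, fix any slot $t_p \geq \theal$ with an honest proposer $v_p$ who casts a \textsc{propose} for $\chain_p$, any round $r_i \geq \fastconfirming{\theal}$, and any validator $v_i$ honest in $r_i$. I want to produce, for any slot $t_j$ strictly larger than both $t_p$ and $\slot(r_i)$, a validator $v_j \in H_{\voting{t_j}}$ whose fork-choice output at the vote round simultaneously extends $\chain_p$ and extends $\Chain^{r_i}_i$, from which non-conflict follows.

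First I would verify that the hypotheses of the healed lemmas are met. By \Cref{def:theal} we have $\proposing{\theal} \geq \GST + \Delta$, so $\merging{(t_p-1)} \geq \merging{(\theal-1)} = \proposing{\theal} - \Delta \geq \GST$; this is precisely the precondition of \Cref{lem:vote-proposal-fast-conf-heal} for the honest proposal in slot $t_p$. Similarly, $r_i \geq \fastconfirming{\theal}$ is exactly the precondition of \Cref{lem:ga-confirmed-always-canonical-heal}. The existence of $v_j \in H_{\voting{t_j}}$ for $t_j$ large enough follows from \Cref{eq:async-condition} (or, more directly, from \Cref{eq:pi-sleepiness} once we pick $t_j$ large enough to be past any asynchronous period after $\GST$), so the final step of the original proof still goes through.

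With the setup above in place, the conclusion is immediate: \Cref{lem:vote-proposal-fast-conf-heal} gives $\mfcvote{t_j}_j \succeq \chain_p$, while \Cref{lem:ga-confirmed-always-canonical-heal} gives $\mfcvote{t_j}_j \succeq \Chain^{r_i}_i$, so $\chain_p$ and $\Chain^{r_i}_i$ lie on a common chain and therefore do not conflict. There is essentially no new mathematical obstacle; the only subtlety is a careful check that every invocation of synchrony in the original argument has been replaced by the correct healed lemma and that the preconditions of those healed lemmas are satisfied under the weaker assumption $\GST > 0$. Given the care already exercised in formulating \Cref{lem:vote-proposal-fast-conf-heal,lem:ga-confirmed-always-canonical-heal}, this reduces to the straightforward inequality $\merging{(t_p-1)} \geq \GST$ noted above, so the full proof should read as a near copy of \Cref{thm:reorg-res-prop-tob} with the three citations swapped out.
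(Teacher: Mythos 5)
Your proposal is correct and takes essentially the same approach as the paper, whose proof likewise just reuses the argument of \Cref{thm:reorg-res-prop-tob} with \Cref{lem:vote-proposal-fast-conf-heal,lem:ga-confirmed-always-canonical-heal} substituted for the original lemmas. Your explicit check that $\merging{(t_p-1)} \geq \proposing{\theal} - \Delta \geq \GST$ (so the healed lemmas' preconditions hold for every $t_p \geq \theal$ and $r_i \geq \fastconfirming{\theal}$) is a detail the paper leaves implicit.
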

\begin{proof}
  Follows from \Cref{lem:vote-proposal-fast-conf-heal,lem:ga-confirmed-always-canonical-heal} and the proof of \Cref{thm:reorg-res-prop-tob}.
\end{proof}

\begin{theorem}
  \label{thm:dyn-avail-fast-conf-tob-heal}
\Cref{algo:prob-ga-fast} is $\eta$-dynamically-available after time $\fastconfirming{\theal}$.
\end{theorem}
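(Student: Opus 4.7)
The plan is to mimic the proof of \Cref{thm:dyn-avail-fast-conf-tob} almost verbatim, swapping each invocation of the "synchronous-from-time-$0$" lemmas for their healed counterparts (\Cref{lem:vote-proposal-fast-conf-heal} and \Cref{lem:ga-confirmed-always-canonical-heal}), and checking that the timing hypotheses these healed lemmas require are satisfied once we start counting from $\fastconfirming{\theal}$.

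\textbf{Liveness part.} Fix $\Tafter := \fastconfirming{\theal}$, a round $r \geq \Tafter$ with $t := \slot(r)$, a round $r_i \geq r + 8\kappa\Delta + \Delta$, and a validator $v_i \in H_{r_i}$. As in the original proof, $r_i \geq \voting{(t+2\kappa)}$ so $t_i := \slot(r_i) \geq t + 2\kappa$. By Lemma~2 of~\cite{rlmd}, with overwhelming probability there is a slot $t_p \in [t+1, t+\kappa]$ whose proposer $v_p$ is honest; since $t \geq \theal$ we get $t_p \geq \theal + 1$, and therefore $\merging{(t_p-1)} = 4\Delta t_p - \Delta \geq \GST$, which is exactly the hypothesis required to apply \Cref{lem:vote-proposal-fast-conf-heal}. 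That lemma, combined with \Cref{lem:ga-mfc-proposer-shorter-than-t} and \Cref{prop:extend}, gives a proposed chain $\chain_p$ that contains every transaction in $\txpool^{\proposing{t_p}} \supseteq \txpool^r$, and yields $\mfcvote{t_i}_i \succeq \chain_p$. Then, as in the original proof, $t_p \leq t_i - \kappa$ lets \Cref{line:algga-no-ffg-vote-chainava} propagate $\chain_p$ into $\Chain^{r_i}_i$ via the $\kappa$-deep rule, and the $f < n/3$ argument on fast confirmations shows that even a fast-confirm update at $r_i$ leaves $\Chain^{r_i}_i \succeq \chain_p$.

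\textbf{Safety part.} Take any two rounds $r_i, r_j \geq \Tafter$ and validators $v_i, v_j$ honest at $r_i, r_j$ respectively. Pick any slot $t_k > \max(\slot(r_i), \slot(r_j))$; Constraint~\eqref{eq:pi-sleepiness} guarantees $H_{\voting{t_k}} \neq \emptyset$, so fix some $v_k \in H_{\voting{t_k}}$. Since $r_i, r_j \geq \fastconfirming{\theal}$, the hypothesis of \Cref{lem:ga-confirmed-always-canonical-heal} is met for both, yielding $\Chain^{r_i}_i \preceq \mfcvote{t_k}_k$ and $\Chain^{r_j}_j \preceq \mfcvote{t_k}_k$, so $\Chain^{r_i}_i$ and $\Chain^{r_j}_j$ do not conflict.

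\textbf{Where care is needed.} The only nontrivial point is checking the two timing side conditions: first, that for $r \geq \fastconfirming{\theal}$ every honest-proposer slot $t_p$ we pick in $[t+1, t+\kappa]$ indeed satisfies $\merging{(t_p-1)} \geq \GST$, which follows from the definition of $\theal$ (\Cref{def:theal}); and second, that the interval $[t+1, t+\kappa]$ is non-empty, which follows from $\kappa > 1$. There is no genuine obstacle beyond this bookkeeping, because the heavy lifting has been pushed into the healed lemmas of the preceding section, and the high-probability proposer-existence argument of Lemma~2 of~\cite{rlmd} is unchanged.
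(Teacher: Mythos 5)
Your proposal is correct and matches the paper's approach exactly: the paper's proof of \Cref{thm:dyn-avail-fast-conf-tob-heal} is simply "follows from \Cref{lem:vote-proposal-fast-conf-heal,lem:ga-confirmed-always-canonical-heal} and the proof of \Cref{thm:dyn-avail-fast-conf-tob}," which is precisely the substitution-plus-timing-bookkeeping you carry out (and your check that $t_p \geq \theal+1$ implies $\merging{(t_p-1)} \geq \GST$ is the right verification of the healed lemmas' hypotheses). You have merely spelled out what the paper leaves implicit, so there is nothing further to add.
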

\begin{proof}
  Follows from \Cref{lem:vote-proposal-fast-conf-heal,lem:ga-confirmed-always-canonical-heal} and the proof of \Cref{thm:dyn-avail-fast-conf-tob}.
\end{proof}

\begin{lemma}[Liveness of fast confirmations]
  \label{thm:fast-liveness-tob-heal}
  If $\merging{t-1} \geq \GST$, then \Cref{thm:fast-liveness-tob} holds for $\GST > 0$ as well.
\end{lemma}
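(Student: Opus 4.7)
The plan is to mirror the original proof of \Cref{thm:fast-liveness-tob} almost verbatim, simply substituting the healing-version lemmas for their $\GST = 0$ counterparts. The premise $\merging{(t-1)} \geq \GST$ is exactly the synchrony hypothesis required by \Cref{lem:vote-proposal-fast-conf-heal}, so the argument goes through with essentially no additional work.

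Concretely, I would proceed as follows. First, since $\merging{(t-1)} \geq \GST$ and the proposer of slot $t$ is honest, invoke \Cref{lem:vote-proposal-fast-conf-heal} (the partial-synchrony version of \Cref{lem:vote-proposal-fast-conf}) to conclude that every validator in $H_{\voting{t}}$ casts a \textsc{vote} message for a chain extending $\chain_p$. Second, observe that $\voting{t} = \merging{(t-1)} + 2\Delta \geq \GST + 2\Delta$, so synchrony holds on the interval $[\voting{t}, \fastconfirming{t}]$; hence all \textsc{vote} messages sent at round $\voting{t}$ by validators in $H_{\voting{t}}$ are delivered to every validator in $H_{\fastconfirming{t}}$ by round $\fastconfirming{t} = \voting{t} + \Delta$. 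Third, since $|H_{\voting{t}}| \geq \frac{2}{3}n$, this gives any such validator $v_i$ a quorum of \textsc{vote} messages in $\V^{\fastconfirming{t}}_i$ supporting chains extending $\chain_p$, so $\texttt{fastconfirmsimple}(\V^{\fastconfirming{t}}_i, t)$ returns a pair $(\fastcand, Q)$ with $\fastcand \succeq \chain_p$ and $Q \neq \emptyset$. By \Cref{line:algga-no-ffg-if-set-chaava-to-bcand,line:algga-no-ffg-set-chaava-to-bcand}, $v_i$ sets $\Chain^{\fastconfirming{t}}_i \gets \fastcand \succeq \chain_p$, which is what we needed.

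The hardest part will likely be just bookkeeping the synchrony window: making sure the reader sees why $\merging{(t-1)} \geq \GST$ suffices both (i) to apply the healing voting lemma, which only needs synchrony starting at the merge round before $t$, and (ii) to guarantee $\Delta$-bounded delivery of the slot-$t$ \textsc{vote}s by the fast-confirmation round. Both of these follow immediately from $\voting{t}, \fastconfirming{t} > \merging{(t-1)} \geq \GST$, so the proof should be only a few lines. No new conceptual ingredient is needed beyond the already-established healing lemmas.
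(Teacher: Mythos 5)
Your proof is correct and is essentially the paper's own argument: the paper's proof simply invokes \Cref{lem:vote-proposal-fast-conf-heal} and then repeats the proof of \Cref{thm:fast-liveness-tob}, which is exactly what you do, with the synchrony bookkeeping ($\voting{t} = \merging{(t-1)} + 2\Delta \geq \GST$, delivery by $\fastconfirming{t} = \voting{t} + \Delta$) spelled out a bit more explicitly.
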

\begin{proof}
  From \Cref{lem:vote-proposal-fast-conf-heal} and the proof of \Cref{thm:fast-liveness-tob}.
\end{proof}

\begin{lemma}\label{lem:asyn-induction-heal}
  If $\voting{t_a} \geq \GST$, then \Cref{lem:asyn-induction} holds for $\GST > 0$ as well.
\end{lemma}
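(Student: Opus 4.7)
The plan is to carry over the proof of \Cref{lem:asyn-induction} essentially verbatim, after carefully verifying that the single place where synchronous delivery is invoked is exactly what the hypothesis $\voting{t_a} \geq \GST$ supplies. Recall that the original argument only needs one concrete delivery statement: that every validator in $H_{\voting{(t_a)}}\setminus A_{\voting{(t_a+1)}}$ has, by round $4\Delta t_a + 2\Delta$, received each other's \textsc{vote} messages from round $\voting{t_a}$. Everything else in the proof (the expiration window $\eta = \pi + 2$, the counting via Constraint~\eqref{eq:async-condition}, the $f<\frac{n}{3}$ remark ruling out disruptive fast confirmations) is combinatorial and does not reference the delivery bound at all.

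The first step will be to observe that $\voting{t_a} \geq \GST$ immediately implies that any message sent at round $\voting{t_a}$ is delivered by round $\voting{t_a} + \Delta = 4\Delta t_a + 2\Delta$. Crucially, slot $t_a$ lies strictly before the asynchronous window, which only begins at slot $t_a + 1$, so these deliveries are unaffected by the allowed asynchrony and the corresponding sentence in the proof of \Cref{lem:asyn-induction} goes through without modification. Combined with \Cref{eq:async-condition3}, which guarantees that validators in $H_{\voting{(t_a)}}\setminus A_{\voting{(t_a+1)}}$ are in fact awake at that round and thus in a position to receive the delivered messages, we recover exactly the hypothesis used by the original argument.

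The second step will be to replay the counting argument unchanged: for any $v_i \in W_\voting{(t+1)}$ and any $v_a \in H_{\voting{(t_a)}} \setminus A_{\voting{(t+1)}}$, the latest unexpired \textsc{vote} from $v_a$ in $\V_i^\voting{(t+1)}$ is for a chain extending $\chain$; by \Cref{eq:async-condition} such validators form a strict majority of $\mathsf{S}(\V_i, t+1)$, so $\left|(\Vfrozen_i)^{\chain,t+1}\cap \V_i^{\chain,t+1}\right| > \frac{|\mathsf{S}(\V_i, t+1)|}{2}$. The $f < \frac{n}{3}$ bound then excludes any conflicting fast-confirmation from corrupting $\chainfrozen_i$, and consequently $v_i$ casts its slot-$t+1$ \textsc{vote} for an extension of $\chain$.

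I do not foresee any genuine obstacle here: the hypothesis $\voting{t_a} \geq \GST$ is tailored precisely to the single synchrony invocation in the original proof, and no later slot in $[t_a+1, t_a+\pi]$ is required to deliver messages within $\Delta$ since the asynchrony-resilience argument deliberately only relies on pre-asynchrony votes from slot $t_a$. The proof will therefore be short: quote the delivery guarantee enabled by $\voting{t_a} \geq \GST$, then invoke the proof of \Cref{lem:asyn-induction} unchanged.
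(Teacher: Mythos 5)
Your proposal is correct and takes essentially the same route as the paper: the paper's proof is simply the one-sentence observation that the only synchrony requirement in \Cref{lem:asyn-induction} is that synchrony holds from $\voting{t_a}$ onward, which is precisely the delivery fact you isolate (slot-$t_a$ \textsc{vote}s sent at $4\Delta t_a+\Delta$ arriving by $4\Delta t_a+2\Delta$, before the asynchronous window opening at $4\Delta(t_a+1)$). Your write-up just spells out in more detail why the remaining counting argument is unaffected, which is consistent with the paper's intent.
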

\begin{proof}
  The only synchrony requirement that the proof of \Cref{lem:asyn-induction} relies on is that synchrony holds since $\voting{t_a}$.
\end{proof}

\begin{lemma}\label{lem:asyn-induction2-heal}
  If $\voting{t} \geq \GST$, then \Cref{lem:asyn-induction2} holds for $\GST > 0$ as well.
\end{lemma}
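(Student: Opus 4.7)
The plan is to mirror the original proof of \Cref{lem:asyn-induction2} essentially verbatim, replacing each appeal to \Cref{lem:keep-voting-tob-fast-conf} or \Cref{lem:asyn-induction} with its ``heal'' counterpart (\Cref{lem:keep-voting-tob-fast-conf-heal} and \Cref{lem:asyn-induction-heal}), and checking that the synchrony preconditions of these heal lemmas are met under the hypothesis $\voting{t} \geq \GST$. So I would proceed by induction on $t_i$ with exactly the same case split as in the original proof: base case $t_i \in [t, t_a]$, then the inductive step split into $t_i \in [t_a+1, t_a+\pi+1]$, $t_i = t_a+\pi+2$, and $t_i \geq t_a+\pi+3$.

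In the base case, to go from slot $t_i-1$ to slot $t_i$ via \Cref{lem:keep-voting-tob-fast-conf-heal} I would need $\merging{(t_i-1)} \geq \GST$; this is immediate since $t_i - 1 \geq t$ gives $\merging{(t_i-1)} \geq \merging{t} = \voting{t} + 2\Delta \geq \GST$. In Case 1, the appeal to \Cref{lem:asyn-induction-heal} needs $\voting{t_a} \geq \GST$, which follows from $t_a \geq t$ and $\voting{t} \geq \GST$. Case 3 is handled exactly as in the base case, again using $\merging{(t_i-1)} \geq \merging{t} \geq \GST$ to justify each step of \Cref{lem:keep-voting-tob-fast-conf-heal}.

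The step needing the most care is Case 2 ($t_i = t_a + \pi + 2$), because the original proof reproduces the argument of \Cref{lem:asyn-induction} inline rather than invoking it as a black box. I would therefore explicitly check that the ingredients carry over: the \textsc{vote}s from slot $t_a + \pi + 1$ sent by validators in $W_{\voting{(t_a+\pi+1)}}$ must reach every $v_i \in H_{\voting{(t_a+\pi+2)}}$ by round $\voting{(t_a+\pi+2)}$, which requires $\voting{(t_a+\pi+1)} \geq \GST$; this holds since $t_a + \pi + 1 \geq t_a \geq t$ and $\voting{t} \geq \GST$. The expiration bound $\eta = \pi + 2$ and Constraint~\eqref{eq:async-condition} at slot $t_a + \pi + 2$ are unchanged, so the same majority argument concludes that every validator in $H_{\voting{(t_a+\pi+2)}}$ casts a \textsc{vote} for a chain extending $\chain$.

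The main obstacle, if any, is purely bookkeeping: ensuring that at each inductive step the round to which synchrony is being applied is provably at least $\GST$. Because our single standing hypothesis $\voting{t} \geq \GST$ propagates monotonically to all later rounds (and all the heal lemmas require synchrony only from a round at or after $\voting{t}$), no new genuine difficulty arises, and the proof goes through by a straightforward substitution of lemmas.
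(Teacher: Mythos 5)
Your proposal is correct and follows essentially the same route as the paper: the paper's own proof is just a one-sentence compression of your argument, observing that, given \Cref{lem:keep-voting-tob-fast-conf-heal,lem:asyn-induction-heal}, the only synchrony the original proof of \Cref{lem:asyn-induction2} needs is from $\min(\voting{t_a},\merging{t})$ onward, which $\voting{t}\geq\GST$ (with $t_a\geq t$) guarantees. Your explicit slot-by-slot bookkeeping, including the check that the inline argument in the $t_i=t_a+\pi+2$ case only needs $\voting{(t_a+\pi+1)}\geq\GST$, is exactly the verification the paper leaves implicit.
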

\begin{proof}
  Given \Cref{lem:keep-voting-tob-fast-conf-heal,lem:asyn-induction-heal} the only synchrony requirements that \Cref{lem:asyn-induction3} relies on is that synchrony holds from {$\min(\voting{t_a},\merging{t})$} which, given that we assume $t_a \geq t$, is implied by $\voting{t} \geq \GST$.
\end{proof}

\begin{lemma}\label{lem:asyn-induction3-heal}
  If $\voting{t} \geq \GST$, then \Cref{lem:asyn-induction3} holds for $\GST > 0$ as well.
\end{lemma}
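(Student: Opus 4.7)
The plan is to mirror the pattern established by the preceding healing lemmas: identify precisely where synchrony is used in the original proof of \Cref{lem:asyn-induction3}, and verify that in each such place the requirement can be discharged either by the healing version of a sub-lemma already proven, or directly by the assumption $\voting{t} \geq \GST$. This should yield an essentially one-line proof rather than a re-execution of the case analysis.

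Concretely, I would first recall that the proof of \Cref{lem:asyn-induction3} splits $r_i$ into a vote round and a fast confirmation round, and in both branches invokes \Cref{lem:asyn-induction2} to obtain that every validator $v_i \in W_{\voting{\slot(r_i)}}$ casts a \textsc{vote} for a chain extending $\chain$. Under the hypothesis $\voting{t} \geq \GST$, the healing version \Cref{lem:asyn-induction2-heal} applies and yields exactly this same conclusion in the partially synchronous setting. The remaining steps in both Case 1 and Case 2 are purely algorithmic — they only inspect \Crefrange{line:algga-no-ffg-vote-chainava}{line:algga-no-ffg-vote-comm} and reason about quorum intersection via the assumption $f < n/3$ and \Cref{eq:async-condition} — and so carry through unchanged.

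Therefore my proposed proof reduces to a single sentence: given \Cref{lem:asyn-induction2-heal}, the only place the original proof invokes synchrony is through its application of \Cref{lem:asyn-induction2}, whose replacement is licensed precisely by $\voting{t} \geq \GST$; no other round-delivery bound is needed. I do not anticipate a genuine obstacle here — the hardest part, if any, is simply verifying that the quorum counting step in Case 2.2 (fast confirmation with $\Chain^{r_i}_i \neq \Chain^{r_i-1}_i$) does not secretly rely on message delivery timing, but inspection shows it only uses the set of \textsc{vote} messages already in $\V^{r_i}_i$ and the cardinality bound from \Cref{eq:async-condition}, so this step is inherently round-local and needs no additional synchrony assumption beyond what is already inherited.
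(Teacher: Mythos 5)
Your proposal matches the paper's proof exactly: the paper's one-line proof is "Given \Cref{lem:asyn-induction2-heal}, follow the proof of \Cref{lem:asyn-induction3}," which is precisely the reduction you articulate. Your verification that the quorum-counting step in Case 2.2 is round-local and independent of delivery timing is sound and is exactly why the substitution is safe.
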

\begin{proof}
  Given \Cref{lem:asyn-induction2-heal}, follow the proof of \Cref{lem:asyn-induction3}.
\end{proof}

\begin{theorem}[Asynchrony Reorg Resilience]
  \label{thm:async-resilience-tob-heal}
  \Cref{algo:prob-ga-fast} is $\eta$-asynchrony-reorg-resilient after slot $\theal$ and time $\fastconfirming{\theal}$.
\end{theorem}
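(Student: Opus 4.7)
The plan is to mirror the proof of the original \Cref{thm:async-resilience-tob}, but invoke the ``healed'' versions of the supporting results that have just been established. Specifically, fix any $\eta$-compliant execution with $\pi > 0$, any slot $t_p \in [\theal, t_a]$ with an honest proposer, any round $r_i \geq \fastconfirming{\theal}$, and any validator $v_i \in W_{r_i}$. Let $\chain_p$ be the chain proposed by the honest proposer in slot $t_p$; the goal is to show that $\chain_p$ does not conflict with $\Chain^{r_i}_i$.

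First I would verify that, because $t_p \geq \theal$, the synchrony side-conditions of the healed lemmas are automatically met. By \Cref{def:theal} we have $\proposing{\theal} \geq \GST + \Delta$, hence $\merging{(t_p - 1)} = 4\Delta t_p - \Delta \geq \proposing{\theal} - \Delta \geq \GST$ and $\voting{t_p} \geq \GST$. This justifies applying both \Cref{lem:vote-proposal-fast-conf-heal} (which needs $\merging{(t_p-1)} \geq \GST$) and \Cref{lem:asyn-induction3-heal} (which needs $\voting{t_p} \geq \GST$) at slot $t_p$.

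Then I would split on the position of $r_i$ relative to $\voting{t_p}$, exactly as in the original proof. If $r_i < \voting{t_p}$, then $\Chain^{r_i}_i$ was set before the honest proposal of slot $t_p$ ever took effect, and \Cref{thm:reorg-res-prop-tob-heal} (applicable since $r_i \geq \fastconfirming{\theal}$ and $t_p \geq \theal$) yields that $\chain_p$ does not conflict with $\Chain^{r_i}_i$. If instead $r_i \geq \voting{t_p}$, then \Cref{lem:vote-proposal-fast-conf-heal} gives that every validator in $H_{\voting{t_p}}$ casts a \textsc{vote} message for a chain extending $\chain_p$; feeding this into \Cref{lem:asyn-induction3-heal} (applied with $t := t_p$ and $\chain := \chain_p$) gives that $\Chain^{r_i}_i$ does not conflict with $\chain_p$.

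I do not anticipate any real obstacle: the entire content of this theorem lies in the already-proven healed lemmas, and the structural case split is identical to \Cref{thm:async-resilience-tob}. The only mild subtlety worth double-checking is that the lower bound on $r_i$ in the statement ($r_i \geq \fastconfirming{\theal}$) is strong enough to invoke \Cref{thm:reorg-res-prop-tob-heal} in the first case, which it is, since that theorem's guarantee also starts at time $\fastconfirming{\theal}$ for slots $\geq \theal$.
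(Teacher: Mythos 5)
Your proof is correct and follows essentially the same route as the paper: the paper's own argument is precisely to replay the case split of \Cref{thm:async-resilience-tob}, invoking \Cref{thm:reorg-res-prop-tob-heal} when $r_i < \voting{t_p}$ and \Cref{lem:vote-proposal-fast-conf-heal} together with \Cref{lem:asyn-induction3-heal} otherwise. Your explicit verification that $t_p \geq \theal$ guarantees the synchrony side-conditions ($\merging{(t_p-1)} \geq \GST$ and $\voting{t_p} \geq \GST$) is a welcome detail the paper leaves implicit.
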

\begin{proof}
  From \Cref{thm:reorg-res-prop-tob-heal}, \Cref{lem:vote-proposal-fast-conf-heal,lem:asyn-induction3-heal}, and the proof of \Cref{thm:async-resilience-tob}.
\end{proof}

\begin{theorem}[Asynchrony Safety Resilience]
  \label{thm:async-safety-resilience-tob-heal}
  \Cref{algo:prob-ga-fast} is $\eta$-asynchrony-safety-resilient after time $\fastconfirming{\theal}$.
\end{theorem}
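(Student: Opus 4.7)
The plan is to mirror exactly the proof of \Cref{thm:async-safety-resilience-tob}, but substituting each original lemma or theorem invoked there with its healed counterpart established earlier in this section. The structure will again split on whether $r_j$ falls before or after $\voting{t_a}$, with the additional consideration that we are now restricting to rounds at or after $\fastconfirming{\theal}$.

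Concretely, I would pick any round $r_i$ with $\fastconfirming{\theal} \leq r_i \leq 4\Delta t_a + \Delta$, any round $r_j$, any validator $v_i$ honest in $r_i$, and any $v_j \in W_{r_j}$. In the case $r_j < \voting{t_a}$, instead of appealing to $\eta$ Safety via \Cref{thm:dyn-avail-fast-conf-tob}, I would appeal to the healed version \Cref{thm:dyn-avail-fast-conf-tob-heal}, which gives $\eta$-dynamic-availability (and hence $\eta$ Safety) after time $\fastconfirming{\theal}$. Since both $r_i$ and $r_j$ are at or after $\fastconfirming{\theal}$ in this case (because $r_j \geq r_i \geq \fastconfirming{\theal}$ would follow from the proof structure, or because we only need $r_i \geq \fastconfirming{\theal}$ and apply safety to the pair), the chains do not conflict.

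For the case $r_j \geq \voting{t_a}$, I would proceed as in the original proof: invoke \Cref{lem:ga-confirmed-always-canonical-heal} to conclude that for any validator $v_k \in H_{\voting{(t_a)}}$, the output $\mfcvote{t_a}_k$ extends $\Chain^{r_i}_i$, which in turn means that in slot $t_a$ every such $v_k$ casts a \textsc{vote} message for a chain extending $\Chain^{r_i}_i$. The application of \Cref{lem:ga-confirmed-always-canonical-heal} requires $\fastconfirming{\theal} \leq r_i \leq \voting{t_a}$, which is guaranteed by the hypothesis $\theal \leq t_a$ implicit in the placement of the healing slot before the asynchronous period; if $\theal > t_a$ then the case $r_j \geq \voting{t_a}$ together with $r_j \geq \fastconfirming{\theal}$ is already subsumed by the first case. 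Then \Cref{lem:asyn-induction3-heal} applies to $t := t_a$ and the chain $\Chain^{r_i}_i$, yielding that $\Chain^{r_j}_j$ does not conflict with $\Chain^{r_i}_i$.

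The only genuinely new ingredient compared to \Cref{thm:async-safety-resilience-tob} is verifying that the synchrony-side hypotheses of the healed lemmas are met from $\fastconfirming{\theal}$ onwards; since $\proposing{\theal} \geq \GST + \Delta$ by \Cref{def:theal}, every round invoked in the argument lies in the synchronous regime, and the remaining bookkeeping is routine. I therefore expect no obstacle beyond carefully tracking these round bounds and applying the substitutions above.

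\begin{proof}
  From \Cref{thm:dyn-avail-fast-conf-tob-heal}, \Cref{lem:ga-confirmed-always-canonical-heal,lem:asyn-induction3-heal}, and the proof of \Cref{thm:async-safety-resilience-tob}.
\end{proof}
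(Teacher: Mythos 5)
Your proof is identical to the paper's (which cites the same three healed results and the structure of \Cref{thm:async-safety-resilience-tob}), and your prose walkthrough correctly identifies the two-case split and the substitutions needed. Correct, same approach.
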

\begin{proof}
  From \Cref{thm:dyn-avail-fast-conf-tob-heal}, \Cref{lem:ga-confirmed-always-canonical-heal,lem:asyn-induction3-heal}, and the proof of \Cref{thm:async-safety-resilience-tob}.
\end{proof}

\subsection{Analysis for \Cref{alg:3sf-tob-noga}}

Before proceeding, we need to tweak the definition of dynamical-equivalence to capture the fact that we interested in it from round $\voting{\theal}$ only.

\begin{definition}[dynamically-equivalent from round $\voting{\theal}$ up to round $r$]
  We say that two sets of messages are dynamically-equivalent from round $\voting{\theal}$ up to round $r$ if and only if, after removing from both sets all the \textsc{propose} messages with slot field lower or equal to $\theal$ and all the \textsc{vote} message with field lower than $\theal$, the resulting sets are dynamically-equivalent up to round $r$.
  The definition of two executions being honest-output-dynamically-equivalent from round $\voting{\theal}$ up to round $r$ follow naturally from this.
\end{definition}

Consequently, we also need to tweak the definition of $\mfc$-equivalent executions accordingly.

\begin{definition}\label{def:exec-equiv-teal}
  Let $\FFGExec$ by an $\eta$-compliant execution of \Cref{alg:3sf-tob-noga} and $\NoFFGExec$ be an $\eta$-compliant execution of \Cref{algo:prob-ga-fast}.
  We say that $\FFGExec$ and $\NoFFGExec$ are \emph{\theal-\mfc-equivalent} if and only if the following constraints hold:
  \begin{enumerate}
    \item\label[condition]{cond:0b-3-t} $\FFGExec$ and $\NoFFGExec$ are honest-output-dynamically-equivalent from round $\voting{\theal}$ up to any round
    \item for any slot $t_i \geq \theal$,
    \begin{enumerate}[label*=\arabic*]
      \item \label[condition]{cond:3-3} $\mfcproposeNoFFG{t_i}_i = \mfcproposeFFG{t_i}_i$.
      \item\label[condition]{cond:6-3} for any slot $t_j \in [\theal, t_i]$ and validator $v_j \in H_{\voting{(t_j)}}$, there exists round $r_k \in [\fastconfirming{\theal},\fastconfirming{t_j}]$ and a validator $v_k$ honest in round $r_k$ such that $\Chain^{r_k}_k \succeq \chainava^{\fastconfirming{t_j}}_j$.
    \end{enumerate}
    \item for any slot $t_i > \theal$ and validator $v_i \in H_{\voting{(t_i)}}$,
    \begin{enumerate}[label*=\arabic*]
      \item\label[condition]{cond:4-3} $\mfcvoteNoFFG{t_i}_i = \mfcvoteFFG{t_i}_i$
      \item\label[condition]{cond:5-3} for any slot $t_j \in [\theal +1, t_i]$ and validator $v_j \in H_{\voting{(t_j)}}$, there exists round $r_k \in [\fastconfirming{\theal},\voting{t_j}]$ and a validator $v_k $ honest in round $r_k$ such that $\Chain^{r_k}_k \succeq \chainava^{\voting{t_j}}_j$.
    \end{enumerate}
  \end{enumerate}
\end{definition}

\begin{lemma}\label{lem:equiv-ga3}
  Let $\FFGExec$ by any $\eta$-compliant execution of \Cref{alg:3sf-tob-noga}.  
  There exists an $\eta$-compliant execution $\NoFFGExec$ of \Cref{algo:prob-ga-fast} that is $\theal$-\mfc-equivalent to $\FFGExec$.
\end{lemma}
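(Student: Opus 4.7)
The plan is to adapt the construction from the proof of \Cref{lem:equiv-ga2}, but now only requiring equivalence from round $\voting{\theal}$ onward rather than from round $0$. Concretely, I would define an adversary $\ANoFFG$ in an execution $\NoFFGExec$ of \Cref{algo:prob-ga-fast} that (i) behaves arbitrarily before round $\merging{(\theal-1)}$ (for concreteness, keeping all honest validators asleep so no \textsc{vote} messages are cast), (ii) from round $\merging{(\theal-1)}$ onward mirrors $\AFFG$'s corruption schedule, sleep schedule, and proposer assignment, and (iii) schedules deliveries so that, from round $\voting{\theal}$ onward, each honest validator in $\NoFFGExec$ receives a set of messages dynamically-equivalent (restricted to messages with slot $\geq \theal$) to the set received by the same validator in $\FFGExec$ at the same round. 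In addition, $\ANoFFG$ uses its control over Byzantine validators and pre-$\theal$ deliveries to arrange that the honest proposer of slot $\theal$ in $\NoFFGExec$ sees a view that makes $\mfc$ output the same parent chain and, through $\texttt{Extend}$, propose the same chain $\chain_p$ as in $\FFGExec$.

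The proof of the four conditions in \Cref{def:exec-equiv-teal} then proceeds by induction on $t_i \geq \theal$. The base case $t_i = \theal$ is the technically novel step. Under the assumptions on $\theal$ (honest proposer $v_p$, $\proposing{\theal} \geq \GST + \Delta$, and all honest validators active at $\fastconfirming{\theal}$), we have $|H_{\voting{\theal}}| \geq \frac{2}{3}n$ since $f < \frac{n}{3}$. Hence \Cref{thm:fast-liveness-modified-tob} applied in $\FFGExec$ and \Cref{thm:fast-liveness-tob-heal} applied in $\NoFFGExec$ yield that every validator honest in round $\fastconfirming{\theal}$ in either execution fast-confirms $\chain_p$, so that $\chainava^{\fastconfirming{\theal}}_j = \chain_p$ in $\FFGExec$ and $\Chain^{\fastconfirming{\theal}}_j = \chain_p$ in $\NoFFGExec$ for every such $v_j$. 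This establishes \Cref{cond:3-3} (via the matched proposer view) and \Cref{cond:6-3} at $t_i = \theal$, and acts as a common reset point that erases the divergence of the two executions before $\theal$. The inductive step for $t_i > \theal$ then mirrors the inductive step of the proof of \Cref{lem:equiv-ga2} almost verbatim, invoking the healing counterparts \Cref{lem:keep-voting-tob-fast-conf-heal,lem:vote-proposal-fast-conf-heal,lem:one-fast-confirm-all-vote-fast-conf-heal,lem:ga-confirmed-always-canonical-heal} in place of their original versions, and using the reset established in the base case to discharge the parts of the inductive hypothesis that, in \Cref{lem:equiv-ga2}, were obtained from slots $< \theal$.

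The main obstacle is securing the base case, specifically ensuring that the chain proposed in slot $\theal$ is the same in both executions. This is delicate because $\FFGExec$ uses $\texttt{fastconfirm}$ (which forces the starting chain of $\mfc$ to extend $\GJ(\V).\chain$) while $\NoFFGExec$ uses $\texttt{fastconfirmsimple}$ (with no such constraint), so the two protocols can produce different $\mfc$ outputs even when fed identical views. The resolution I foresee is to exploit the freedom in defining $\ANoFFG$ twofold: by keeping all honest validators in $\NoFFGExec$ asleep before slot $\theal$, no pre-$\theal$ fast-confirmation can occur (since fewer than $\frac{n}{3}$ Byzantine validators cannot reach the $\frac{2}{3}n$ threshold), forcing $\texttt{fastconfirmsimple}$ to return $\genesis$ at $\proposing{\theal}$; simultaneously, $\ANoFFG$ injects via Byzantine validators a crafted set of \textsc{vote} messages into the proposer's view so that $\mfc$ outputs the parent of $\chain_p^\FFGExec$. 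Assuming $\texttt{Extend}$ is deterministic given the proposer's identity, the transaction pool (which is common to both executions since $\txpool$ is external), and the input chain, both executions then produce the same $\chain_p$, completing the base case.
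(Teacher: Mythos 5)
Your overall strategy (simulate an execution of \Cref{algo:prob-ga-fast}, induct on slots from $\theal$, use slot $\theal$ as a reset point, and reuse the healing lemmas in the inductive step) is the same as the paper's, but your concrete construction of the pre-$\theal$ portion of $\NoFFGExec$ does not work. First, the lemma requires $\NoFFGExec$ to be $\eta$-compliant, and $\eta$-compliance is a property of the corruption/sleep schedule: if you keep every honest validator asleep until $\merging{(\theal-1)}$, then for every slot $t$ with $\GST < \proposing{t}$ and $t<\theal$ (such slots exist in general, since $\theal$ additionally requires an honest proposer and full honest activity) you have $H_{\voting{t}}=\emptyset$, so \Cref{eq:pi-sleepiness} fails and $\NoFFGExec$ is not $\eta$-compliant; this also invalidates every appeal you make to the healing lemmas for \Cref{algo:prob-ga-fast}, which presuppose compliance. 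Second, the joining protocol defeats your base case: a validator that wakes at $\merging{(\theal-1)}=4\Delta\theal-\Delta$ falls in the joining window for slot $\theal+1$, not $\theal$, so in your $\NoFFGExec$ no honest validator is active at $\voting{\theal}$, no honest \textsc{vote} for $\chain_p$ is cast in slot $\theal$, and the claimed reset $\Chain^{\fastconfirming{\theal}}_j=\chain_p$ (hence \Cref{cond:6-3} at $t_i=\theal$, and the hypotheses of the healing lemmas you invoke afterwards) does not hold. A minor additional point: \Cref{thm:fast-liveness-modified-tob} is proved under $\GST=0$, so you cannot apply it directly to $\FFGExec$; the FFG-side fact you need is that all validators in $H_{\voting{\theal}}$ vote for $\chain_p$, which follows from \Cref{lem:vote-proposal-fast-conf-ffg-only-base} together with the definition of $\theal$.

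The paper sidesteps all of this by \emph{not} modifying the honest schedule at all: $\ANoFFG$ mirrors $\AFFG$'s corruption schedule, sleep schedule and proposer assignment in every round (so $\NoFFGExec$ inherits $\eta$-compliance and the sets $H_r$, $A_r$ coincide in the two executions), and the pre-$\theal$ divergence is handled on the block/vote side instead: before $\merging{(\theal-1)}$ the adversary in $\NoFFGExec$ sends no messages except that, for every slot $t_r<\theal$ whose proposer is adversarial, it \textsc{propose}s the prefix of $\mfcproposeFFG{\theal}_p$ with slot $t_r$, and all messages are delivered within $\Delta$; from $\merging{(\theal-1)}$ onward it schedules deliveries to preserve dynamic equivalence restricted to messages of slot $\geq\theal$, exactly as in \Cref{lem:equiv-ga2}. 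This makes $\mfcproposeNoFFG{\theal}_p=\mfcproposeFFG{\theal}_p$ (so the same $\chain_p$ is proposed) without any Byzantine vote injection, and keeps $H_{\voting{\theal}}$ intact so that the base case (all honest validators vote for and, being active at $\fastconfirming{\theal}$ with $f<\frac{n}{3}$, confirm $\chain_p$ in \emph{both} executions) goes through. If you want to salvage your variant, you would at minimum have to keep the honest sleep schedule identical to $\FFGExec$ and find another way to align the slot-$\theal$ proposer's fork-choice; as written, the construction fails before the induction even starts.
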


\begin{proof}
  Let $\FFGExec$ by any $\eta$-compliant execution of \Cref{alg:3sf-tob-noga} and let $\AFFG$ be the adversary in such an execution.
  Let $v_p$ be the honest proposer in slot $\theal$ and let $[\textsc{propose}, \chain_p,\cdot,\cdot,\GJ_p,\cdot,\cdot]$ the \textsc{propose} message sent by $v_p$ in slot $\theal$ in execution $\FFGExec$..
  
  Let the adversary $\ANoFFG$ behave as follows.
  \begin{enumerate}[label=(\roman*)]
    \item As per item \theenumi{} in the proof of \Cref{lem:equiv-ga2}.
    \item As per item \theenumi{} in the proof of \Cref{lem:equiv-ga2}.
    \item As per item \theenumi{} in the proof of \Cref{lem:equiv-ga2}.
    \item\label{cond:adv-decision-5}  Up to round $\merging{(\theal-1)}$, 
    \begin{enumerate}[label=(\roman{enumi}.\roman*)]
      \item For any slot $t_r < \theal$, if there exists a chain $\chain_r \preceq \mfcproposeFFG{\theal}_p$ such that $\chain_r.p = t_r$ and the proposer of $t_r$ is adversarial in round $\proposing{t_r}$, then $v_r$ \textsc{propose}s $\chain_r$ in slot $\chain_r.p$.
      \item Other than the above, the adversary does not send any message.
      \item All messages are delivered within $\Delta$ rounds.
    \end{enumerate}
    \item\label{cond:adv-decision-4-heal}  From round $\merging{(\theal-1)}$ onward, $\ANoFFG$ behaves as per \cref{cond:adv-decision-4} in the proof of \Cref{lem:equiv-ga2}.
  \end{enumerate}

  Now, we move to proving by induction on $t_i$ that the execution $\NoFFGExec$ induced by $\ANoFFG$ satisfies all \Cref{def:exec-equiv-teal}'s conditions.
  To do so, we add the following conditions to the inductively hypothesis,

  \begin{enumerate}[start=2]
    \item for any slot $t_i \geq \theal$,%
    \begin{enumerate}[start=3,label*=\arabic*]
      \item\label[condition]{cond:7-3} $\GJ(\VFFG^{\fastconfirming{t_i}}_i).c \geq \theal$
    \end{enumerate}
  \end{enumerate}

  \begin{enumerate}[start=3]
    \item for any slot $t_i > \theal$ and validator $v_i \in H_{\voting{(t_i)}}$,%
    \begin{enumerate}[start=4,label*=\arabic*]
      \item \label[condition]{cond:1-3} For any $\J$ such that $\mathsf{J}(\J,\VFFG^{\proposing{t_i}}_i) \land \J.c \geq \theal$, $\mfcproposeNoFFG{t_i}_i \succeq \J.\chain$.
      \item\label[condition]{cond:2-3} For any $\J$ such that $\mathsf{J}(\J,\VFFG^{\voting{t_i}}_i) \land \J.c \geq \theal$, $\mfcvoteNoFFG{t_i}_i \succeq \J.\chain$.
      \item\label[condition]{cond:8-3} $\GJfrozen[t_i]_i.c \geq \theal$.
    \end{enumerate}
  \end{enumerate} 
  and rephrase \Cref{cond:0b-3-t} as follows
  \begin{enumerate}[start=2]
    \item for any slot $t_i \geq \theal$,%
    \begin{enumerate}[start=4,label*=\arabic*]
      \item\label[condition]{cond:0bt-3}  $\FFGExec$ and $\NoFFGExec$ are honest-output-dynamically-equivalent from round $\voting{\theal}$ up to round $4\Delta (t_i+1)$
    \end{enumerate}
  \end{enumerate}
  \begin{description}
    \item[Base Case: $t_i = \theal$.]
    The only non-vacuously true conditions for this case are \Cref{cond:3-3,cond:0bt-3,cond:7-3,cond:6-3}.
    Let us now prove that each holds.
    \begin{description}
      \item[\Cref{cond:3-3}.]
      It should be quite easy to see that, due to item~\ref{cond:adv-decision-5} of $\ANoFFG$'s set of decisions, execution $\NoFFGExec$ of \Cref{algo:prob-ga-fast} is such that $\mfcproposeNoFFG{\theal}_p = \mfcproposeFFG{\theal}_p$.
      \item[\Cref{cond:0bt-3}.]
      Given the above, in either execution $v_p$ sends a \textsc{propose} message for the same chain $\chain_p$.     
      Then, \Cref{lem:vote-proposal-fast-conf-ffg,lem:vote-proposal-fast-conf-ffg-only-base} show that all honest validators in round $\voting{\theal}$ send the same \textsc{vote} messages for chain $\chain_p$.
      Given that no other messages are sent by honest validators before round $\proposing{(\theal+1)}$, this proves \Cref{cond:0bt-3} for slot $\theal$.

      \item[\Cref{cond:7-3}.]
      \Cref{lem:vote-proposal-fast-conf-ffg-only-base} and \Cref{line:algtob-vote-chainava,line:algtob-set-target-checkpoint,line:algtob-vote} and the definition of $\theal$ imply that al validators honest in round $\voting{\theal}$ send \textsc{ffg-vote}s that have source $\GJ_p$ and target checkpoint $\T$ such that $\T.c = t$ and $\T.\chain \preceq \chain_p$.
      All such votes are in the view of any validator honest in round $\fastconfirming{\theal}$.
      Given that we assume $f<\frac{n}{3}$, this implies that, for any validator $v_i$ honest in round $\fastconfirming{\theal}$, $\GJ(\VFFG^{\fastconfirming{t_i}}_i).\chain \preceq \chain_p$ and $\GJ(\VFFG^{\fastconfirming{t_i}}_i).c = \theal$ proving \Cref{cond:7-3}.

      \item[\Cref{cond:6-3}.]
      Then, given that in execution $\FFGExec$, for any validator $v_i$ honest in round $\voting{\theal}$, $v_i$ cast \textsc{vote} messages for $\chain_p$, $\chain_p \succeq \GJ(\VFFG^{\fastconfirming{t_i}}_i).\chain$, then, for any validator $v_j$ honest in round $\fastconfirming{\theal}$, $\chainava^{\fastconfirming{\theal}}_j = \chain_p$.
      Similarly, for any validator $v_j$ honest in round $\fastconfirming{\theal}$ of execution $\NoFFGExec$, $\Chain^{\fastconfirming{\theal}}_j = \chain_p$.
      Hence, \Cref{cond:6-3} is proved.
    \end{description}
    \item[Inductive Step: $t_i > \theal$.]
    Assume that the Lemma and the additional conditions \Cref{cond:1-3,cond:2-3,cond:0bt-3,cond:8-3,cond:7-3} hold for slot $t_i-1$.
    Below we prove that they also hold for slot $t_i$.
    \begin{description}
      \item[\Cref{cond:1-3,cond:2-3}.]  Let $\J$ be any checkpoint such that  $\mathsf{J}(\J,\VFFG^{\voting{t_i}}_i) \land \J.c \geq \theal$.
      Because the chain of the target checkpoint of an \textsc{ffg-vote} cast by an honest validator $v_\ell$ in round $r_{\ell}$ corresponds to $(\makeFFG{\chainava}^{r_{\ell}}_\ell.\chain,\slot(r_\ell))$ and we assume $f<\frac{n}{3}$, we have that
      $\J.c \in [\theal,t_i)$ and there exists a validator $v_k \in H_{\J.c}$ such that $\chainava^{\voting{\J.c}}_k \succeq \J.\chain$.
      Let us consider two cases.
      \begin{description}
        \item[Case 1: $\J.c = \theal$.]
        \Cref{lem:vote-proposal-fast-conf-ffg,lem:vote-proposal-fast-conf-ffg-only-base} show that all honest validators in round $\voting{\theal}$ send the same \textsc{vote} messages for chain $\chain_p$.
        Then, because we assume that $f<\frac{n}{3}$,
        due to \Crefrange{line:algtob-vote-chainava}{line:algtob-vote-comm}, in execution $\FFGExec$,
        we have that $\chain_p \succeq \J.\chain$.
        Given that $\theal < t_i$, \Cref{lem:vote-proposal-fast-conf-heal} also implies that $\mfcproposeNoFFG{t_i}_i \succeq \chain_p$ and $\mfcvoteNoFFG{t_i}_i \succeq \chain_p$.
        Therefore, $\mfcproposeNoFFG{t_i}_i \succeq \chain_p \succeq \J.\chain$ and $\mfcvoteNoFFG{t_i}_i \succeq \chain_p \succeq \J.\chain$.
        \item[Case 2: $\J.c > \theal$.]  
        By the inductive hypothesis, \Cref{cond:5-3} implies that there exists a slot $t_m \in [\theal+1,\J.c]$ and validator $v_m \in H_{\voting{(t_m)}}$ such that $\Chain^{\voting{t_m}}_m \succeq \chainava^{\voting{\J.c}}_k$.
        Given that $t_m \leq \J.c < t_i$, from \Cref{lem:ga-confirmed-always-canonical}, we know that $\mfcproposeNoFFG{t_i} \succeq \Chain^{\voting{t_m}}_m \succeq \chainava^{\voting{\J.c}}_k \succeq \J.\chain$ and that $\mfcvoteNoFFG{t_i} \succeq \Chain^{\voting{t_m}}_m \succeq \chainava^{\voting{\J.c}}_k \succeq \J.\chain$.
      \end{description}
      Given that the set of justified checkpoints in $\VFFG^{\voting{t_i}}_i$ is a superset of the set of justified checkpoints in $\VFFG^{\proposing{t_i}}_i$, both \Cref{cond:1-3} and \Cref{cond:2-3} are proven.
      \item[\Cref{cond:3-3,cond:4-3}.]
      We know that $\FFGExec$ and $\NoFFGExec$ are honest-output-dynamically-equivalent from round $\voting{\theal}$ up to round $\merging{t_i}$.
      Hence, due to \cref{cond:adv-decision-4-heal} of $\ANoFFG$'s set of decisions, for any $v_i \in H_{\voting{t}}$, $\VFFG^{\proposing{t_i}}_i$ and $\VNoFFG^{\proposing{t_i}}_i$ are dynamically-equivalent from round $\voting{\theal}$ up to round $\merging{t_i}$.
      By \Cref{lem:vote-proposal-fast-conf-heal}, we know that, in execution $\NoFFGExec$, any \textsc{vote} cast by a validator honest in a round in $[\voting{\theal},\voting{(t_i-1)}]$ is for a chain extending $\chain_p$.
      Then, given that $\FFGExec$ and $\NoFFGExec$ are honest-output-dynamically-equivalent from round $\voting{\theal}$ up to round $\merging{t_i}$, this is true for execution $\FFGExec$ as well.
      From the fact that $\FFGExec$ and $\NoFFGExec$ are honest-output-dynamically-equivalent from round $\proposing{\theal}$ up to round $\merging{t_i}$, we also know that, for any validator $v_i$ honest in round $\proposing{t_i}$, any \textsc{vote} message $[\textsc{vote}, \chain_v,\cdot,t_v,\cdot]$ that is in in one of the two views $\VNoFFG^{\proposing{t_i}}_i$ and $\VFFG^{\proposing{t_i}}_i$, for which there does not exists a dynamically-equivalent \textsc{vote} message in the other view, then $t_v < \theal$.
      Hence, given that in both executions, any \textsc{vote} cast by a validator honest in a round in $[\voting{\theal},\voting{(t_i-1)}]$ is for a chain extending $\chain_p$, this implies that $\chain_v \nsucceq \chain_p$.
      Hence, in $\VFFG^{\proposing{t_i}}_i$, for any \textsc{vote} message $[\textsc{vote}, \chain',\cdot,\cdot,\cdot]$ with $\chain' \succeq \chain_p$, there exists a dynamically-equivalent message in $\VNoFFG^{\proposing{t_i}}_i$, and vice-versa.
      From this and \Cref{cond:1-3} follows that \Cref{cond:3-3} holds.

      Therefore, $\VFFG^{\voting{t_i}}_i$ and $\VNoFFG^{\voting{t_i}}_i$ are $\theal$-$\mfc$-equivalent from round $\voting{\theal}$ up to round $\voting{t_i}$.
      Hence, we can apply the same reasoning outlined above and, due to \Cref{cond:2-3}, conclude that \Cref{cond:4-3} holds as well.
      \item[\Cref{cond:8-3}.]
      From the inductive hypothesis, by \Cref{cond:7-3}, $\GJ(\VFFG^{\fastconfirming{(t_i-1)}}_i).c \geq \theal$. From \Cref{line:algtob-prop-if,line:algtob-prop-merge-gj} then we can conclude that $\GJfrozen[\voting{t_i}]_i.c \geq \theal$.
      \item[\Cref{cond:7-3}.]
      From the inductive hypothesis, by \Cref{cond:7-3}, $\GJ(\VFFG^{\fastconfirming{(t_i-1)}}_i).c \geq \theal$.
      Given that $\VFFG^{\fastconfirming{(t_i-1)}}_i \subseteq \VFFG^{\fastconfirming{t_i}}_i$, clearly  $\GJ(\VFFG^{\fastconfirming{t_i}}_i).c \geq \theal$..
      \item[\Cref{cond:5-3}.]
      By \Cref{line:algtob-vote-chainava}, 
      $\chainava^{\voting{t_i}}_i \in \{\chainava^{\fastconfirming{(t_i-1)}}_i,\GJfrozen[\voting{t_i}]_i.\chain,(\mfcvoteFFG{t_i}_i)^{\lceil \kappa}\}$.
      Let us consider each case.
      \begin{description}
        \item[Case 1: $\chainava^{\voting{t_i}}_i = \chainava^{\fastconfirming{(t_i-1)}}_i$.] 
        \Cref{cond:6-3} for slot $t_i-1$ implies 
        \Cref{cond:5-3} for slot $t_i$.
        \item[Case 2: {$\chainava^{\voting{t_i}}_i = \GJfrozen[\voting{t_i}]_i.\chain$}]
        Due to \Cref{line:algtob-merge-ch-frozen,line:algtob-prop-if,line:algtob-prop-merge-gj}, $\mathsf{J}(\GJfrozen[\voting{t_i}]_i, \V^{\voting{t_i}})$.
        And from \Cref{cond:8-3}, $\GJfrozen[\voting{t_i}]_i.c \geq \theal$.
        Hence, by following the reasoning applied when discussing \Cref{cond:1-3,cond:2-3},
        $\GJfrozen[\voting{t_i}]_i.c \in [\theal,t_i)$ and
        there exists a validator $v_k \in H_{\GJfrozen[\voting{t_i}]_i.c}$ such that $\chainava^{\voting{\GJfrozen[\voting{t_i}]_i.c}}_k\succeq \GJfrozen[\voting{t_i}]_i.\chain = \chainava^{\voting{t_i}}_i$.
        Consider two sub cases.
        \begin{description}
          \item[Case 2.1: {$\GJfrozen[\voting{t_i}]_i.c = \theal$}.] 
          Given that $\mathsf{J}(\GJfrozen[\voting{t_i}]_i, \V^{\voting{t_i}})$,
          from the proof of \Cref{cond:6-3} for slot $\theal$ and the proof of Case 1 of \Cref{cond:1-3,cond:2-3}, we know that, for any validator $v_j$ honest in round $\fastconfirming{\theal}$, $\Chain^{\fastconfirming{\theal}}_j  = \chain_p \succeq \GJfrozen[\voting{t_i}]_i.\chain$.
          Hence, $\Chain^{\fastconfirming{\theal}}_j \succeq \GJfrozen[\voting{t_i}]_i.\chain = \chainava^{\voting{t_i}}_i$.
          \item[Case 2.2: {$\GJfrozen[\voting{t_i}]_i.c > \theal$}.] 
          In this case, we can apply the inductive hypothesis.
          Specifically, \Cref{cond:5-3} for slot $\GJfrozen[\voting{t_i}]_i.c < t_i$ implies 
          \Cref{cond:5-3} for slot $t_i$.
        \end{description}
        \item[Case 3: {\normalfont $\chainava^{\voting{t_i}}_i = (\mfcvoteFFG{t_i}_i)^{\lceil \kappa}$}.] 
        \sloppy{\Cref{line:algga-no-ffg-vote-chainava} of \Cref{algo:prob-ga-fast} implies that $\Chain^{\voting{t_i}}_i \succeq \mfcvoteNoFFG{t_i}_i$.}
        Then, 
        \Cref{cond:4-3} implies that $\Chain^{\voting{t_i}}_i \succeq \mfcvoteFFG{t_i}_i$.
        From \Cref{line:algtob-vote-chainava} of \Cref{alg:3sf-tob-noga}, we can conclude that $\Chain^{\voting{t_i}}_i \succeq \mfcvoteFFG{t_i}_i \succeq \chainava^\voting{t_i}_i$.
      \end{description}
      \item[\Cref{cond:6-3}.]
      By \Crefrange{line:algotb-at-confirm}{line:algtob-set-chaava-to-bcand}, $\chainava^{\fastconfirming{t_i}}_i \in \{\chainava^{\voting{t_i}}_i, \GJ(\VFFG^{\fastconfirming{t_i}}).\chain, \chain^C\}$ with $(\chain^C, Q) = \texttt{fastconfirm}(\VFFG^{\fastconfirming{t_i}}_i,t_i) \land Q\neq \emptyset$.
      Let us consider each case.
      \begin{description}
        \item[Case 1: $\chainava^{\fastconfirming{t_i}}_i = \chainava^{\voting{t_i}}_i$.] In this case, given that $t_i > \theal$, \Cref{cond:5-3} implies \Cref{cond:6-3}.
        \item[Case 2: \normalfont$\chainava^{\fastconfirming{t_i}}_i = \GJ(\VFFG^{\fastconfirming{t_i}}).\chain$.] 
        \sloppy{From \Cref{cond:7-3}, we know that $\GJ(\VFFG^{\fastconfirming{t_i}}).c \geq \theal$.}
        By following the reasoning applied when discussing \Cref{cond:1-3,cond:2-3}, this means that $\GJ(\VFFG^{\fastconfirming{t_i}}) \in [\theal, t_i)$ and there exists a validator $v_k \in H_{\GJ(\VFFG^{\fastconfirming{t_i}}).c}$ such that $\chainava^{\voting{\GJ(\VFFG^{\fastconfirming{t_i}}).c}}_k \succeq \chainava^{\fastconfirming{t_i}}_i$.
        Consider two sub cases.
        \begin{description}
          \item[Case 2.1: \normalfont$\GJ(\VFFG^{\fastconfirming{t_i}}).c = \theal$.]
          Similar to the  proof of Case 2.1 of \Cref{cond:5-3}.
          From the proof of \Cref{cond:6-3} for slot $\theal$ and the proof of Case 1 of \Cref{cond:1-3,cond:2-3}, we know that, for any validator $v_j$ honest in round $\fastconfirming{\theal}$, $\Chain^{\fastconfirming{\theal}}_j  = \chain_p \succeq \GJ(\VFFG^{\fastconfirming{t_i}}).\chain$.
          Hence, $\Chain^{\fastconfirming{\theal}}_j \succeq \GJ(\VFFG^{\fastconfirming{t_i}}).\chain = \chainava^{\fastconfirming{t_i}}_i$.
          \item[Case 2.2: \normalfont$\GJ(\VFFG^{\fastconfirming{t_i}}).c > \theal$.]  
          In this case, \Cref{cond:5-3} for slot $\GJ(\VFFG^{\fastconfirming{t_i}}).c < t_i$ implies \Cref{cond:6-3} for slot $t_i$.
        \end{description}
        \item[Case 3: $\chainava^{\fastconfirming{t_i}}_i = \chain^C$ with $(\chain^C, Q) = \texttt{fastconfirm}(\VFFG^{\fastconfirming{t_i}}_i,t_i) \land Q\neq \emptyset$.]
        This implies that $\VFFG^{\fastconfirming{t_i}}_i$ includes a quorum of \textsc{vote} message for chain $\chain^C$ and slot $t_i$.
        Given \Cref{cond:4-3},  $\VNoFFG^{\fastconfirming{t_i}}_i$ also includes a quorum of \textsc{vote} message for chain $\chain^C$ and slot $t_i$.
        Hence, $\Chain^{\fastconfirming{t_i}}_i = \chainava^{\fastconfirming{t_i}}_i$. 
        \end{description}
      \item[\Cref{cond:0bt-3}.]
      As argued in the proof of \Cref{cond:3-3,cond:4-3}, we know that $\FFGExec$ and $\NoFFGExec$ are honest-output-dynamically-equivalent from round $\voting{\theal}$ up to round $\merging{t_i}$.
      This, \Cref{cond:3-3,cond:4-3} and \cref{cond:adv-decision-4-heal} of $\ANoFFG$'s set of decisions clearly imply \Cref{cond:0bt-3} up to  round $\proposing{(t_i+1)}$.\qedhere
    \end{description}
  \end{description}
\end{proof}

\begin{lemma}
  \label{lem:keep-voting-tob-fast-conf-ffg-heal}
  If $t \geq \theal$, then \Cref{lem:keep-voting-tob-fast-conf-ffg} holds for $\GST > 0$ as well.
\end{lemma}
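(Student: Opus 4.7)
The plan is to mirror the proof of \Cref{lem:keep-voting-tob-fast-conf-ffg} but use the healing-specific equivalence lemma (\Cref{lem:equiv-ga3}) in place of \Cref{lem:equiv-ga2}, and the healing version of the base fact (\Cref{lem:keep-voting-tob-fast-conf-heal}) in place of \Cref{lem:keep-voting-tob-fast-conf}. Concretely, take any $\eta$-compliant execution $\FFGExec$ of \Cref{alg:3sf-tob-noga} in which, in slot $t \geq \theal$, every validator in $H_{\voting{t}}$ casts a \textsc{vote} for a chain extending $\chain$. Invoke \Cref{lem:equiv-ga3} to obtain an $\eta$-compliant execution $\NoFFGExec$ of \Cref{algo:prob-ga-fast} that is $\theal$-\mfc-equivalent to $\FFGExec$.

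Next, I would argue that the same voting behaviour occurs in $\NoFFGExec$: since $t \geq \theal$, the round $\voting{t}$ lies in the window where $\FFGExec$ and $\NoFFGExec$ are honest-output-dynamically-equivalent (\Cref{cond:0b-3-t} of \Cref{def:exec-equiv-teal}), so the sets of \textsc{vote} messages sent by validators in $H_{\voting{t}}$ agree across the two executions up to their FFG components. Hence in $\NoFFGExec$ every validator in $H_{\voting{t}}$ also \textsc{vote}s for a chain extending $\chain$ in slot $t$. Observing that $\merging{t} \geq \merging{\theal} > \proposing{\theal} \geq \GST + \Delta$, the hypothesis of \Cref{lem:keep-voting-tob-fast-conf-heal} is met, so in $\NoFFGExec$ we get $\mfcproposeNoFFG{(t+1)}_i \succeq \chain$ and $\mfcvoteNoFFG{(t+1)}_i \succeq \chain$ for every $v_i \in H_{\voting{(t+1)}}$.

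Finally, I would transport this back to $\FFGExec$ using \Cref{cond:3-3} and \Cref{cond:4-3} of \Cref{def:exec-equiv-teal}, which are applicable because $t+1 > \theal$; these give $\mfcproposeFFG{(t+1)}_i = \mfcproposeNoFFG{(t+1)}_i \succeq \chain$ and $\mfcvoteFFG{(t+1)}_i = \mfcvoteNoFFG{(t+1)}_i \succeq \chain$. Since in \Cref{alg:3sf-tob-noga} the chain \textsc{vote}d in slot $t+1$ is always an extension of $\mfcvote{(t+1)}_i$ (by \Cref{line:algtob-vote-comm,line:algtob-vote}), this implies that every validator in $H_{\voting{(t+1)}}$ casts a \textsc{vote} for a chain extending $\chain$, as required.

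The main obstacle I anticipate is purely bookkeeping: checking that the strict inequality $t+1 > \theal$ is what unlocks \Cref{cond:4-3} (which in \Cref{def:exec-equiv-teal} is only asserted for $t_i > \theal$), and simultaneously that \Cref{cond:3-3} applies at slot $t+1 \geq \theal$. Both hold since $t \geq \theal$ forces $t+1 > \theal$, so the transfer step goes through without appealing to any additional synchrony beyond what $\theal$ already guarantees.
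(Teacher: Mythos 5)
Your proposal is correct and follows essentially the same route as the paper, which proves this lemma by replaying the proof of \Cref{lem:keep-voting-tob-fast-conf-ffg} with \Cref{lem:equiv-ga2} replaced by \Cref{lem:equiv-ga3}, the equivalence notions shifted to their ``from round $\voting{\theal}$'' versions, and \Cref{cond:3-2,cond:4-2} replaced by \Cref{cond:3-3,cond:4-3}; you simply unroll these substitutions explicitly and verify the bookkeeping ($\merging{t} \geq \GST$ and $t+1 > \theal$) that the paper leaves implicit. Your invocation of \Cref{lem:keep-voting-tob-fast-conf-heal} for the base step is the natural (and arguably cleaner) choice, where the paper instead cites \Cref{lem:vote-proposal-fast-conf-heal} for the same purpose.
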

\begin{proof}
  Due to \Cref{lem:vote-proposal-fast-conf-heal}, we can use the proof of \Cref{lem:keep-voting-tob-fast-conf-ffg}, with the following changes.
  \begin{enumerate}
    \item Replace \Cref{lem:equiv-ga2} with \Cref{lem:equiv-ga3},
    \item Replace ``dynamically-equivalent'' with ``$\theal$-dynamically-equivalent'',
    \item Replace ``honest-output-dynamically-equivalent'' with ``honest-output-dynamically-equivalent from round $\voting{\theal}$'', and
    \item Replace \Cref{cond:3-2,cond:4-2} of \Cref{lem:equiv-ga2} with \Cref{cond:3-3,cond:4-3} of \Cref{lem:equiv-ga3}.\qedhere
  \end{enumerate}
\end{proof}

\begin{lemma}
  \label{lem:vote-proposal-fast-conf-ffg-heal}
  If $t \geq \theal$, then \Cref{lem:vote-proposal-fast-conf-ffg} holds for $\GST > 0$ as well.
\end{lemma}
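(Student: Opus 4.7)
The plan is to follow the template of the proof of \Cref{lem:vote-proposal-fast-conf-ffg}, substituting each of its ingredients with its healed counterpart: \Cref{lem:vote-proposal-fast-conf-heal} in place of \Cref{lem:vote-proposal-fast-conf}, and \Cref{lem:equiv-ga3} in place of \Cref{lem:equiv-ga2}. \Cref{lem:vote-proposal-fast-conf-heal} is indeed applicable here because $t \geq \theal$ combined with $\proposing{\theal} \geq \GST + \Delta$ gives $\merging{(t-1)} = 4\Delta t - \Delta \geq 4\Delta\theal - \Delta \geq \GST$, which is exactly the synchrony precondition of \Cref{lem:vote-proposal-fast-conf-heal}.

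Concretely, I would first take any $\eta$-compliant execution $\FFGExec$ of \Cref{alg:3sf-tob-noga} in which the honest proposer $v_p$ of slot $t \geq \theal$ casts a \textsc{propose} message carrying chain $\chain_p$. By \Cref{lem:equiv-ga3} there exists an $\eta$-compliant execution $\NoFFGExec$ of \Cref{algo:prob-ga-fast} that is $\theal$-$\mfc$-equivalent to $\FFGExec$. I would then argue that in $\NoFFGExec$, $v_p$ sends a dynamically-equivalent \textsc{propose} message for chain $\chain_p$ in slot $t$: for $t > \theal$ this follows directly from \Cref{cond:0b-3-t}, since the \textsc{propose}'s slot field is strictly greater than $\theal$ and is therefore not filtered out by the definition of honest-output-dynamical-equivalence from round $\voting{\theal}$; for the boundary case $t = \theal$ it follows from the explicit construction of the adversary $\ANoFFG$ carried out in the base case of the proof of \Cref{lem:equiv-ga3}, which forces the slot-$\theal$ proposer to propose the same chain $\chain_p$ in both executions because the two propose-time $\mfc$ computations are arranged to coincide.

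Next, I would apply \Cref{lem:vote-proposal-fast-conf-heal} to $\NoFFGExec$ to deduce that, in that execution, every validator in $H_{\voting{t'}}$ with $t' \geq t$ casts a \textsc{vote} for a chain extending $\chain_p$, and that $\mfcproposeNoFFG{t''}_i \succeq \chain_p$ and $\mfcvoteNoFFG{t''}_i \succeq \chain_p$ for all $t'' > t$ and $v_i \in H_{\voting{t''}}$. Finally, I would transfer these conclusions back to $\FFGExec$ via \Cref{cond:0b-3-t,cond:3-3,cond:4-3} of \Cref{lem:equiv-ga3}; all three conditions apply because $t' \geq t \geq \theal$ and $t'' > t \geq \theal$, so both the \textsc{vote} messages with slot field $\geq \theal$ and the $\mfc$-outputs at slots $\geq \theal$ sit inside the equivalent portion of the two executions.

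The main obstacle I anticipate is the boundary case $t = \theal$. The slot-$\theal$ \textsc{propose} message is formally filtered out by the $\theal$-equivalence definition, so one cannot blindly appeal to \Cref{cond:0b-3-t} to identify the two propose messages and recover equality of $\chain_p$ in the two executions. Instead, one must unpack the construction of $\ANoFFG$ from \Cref{lem:equiv-ga3} and rely on the fact that, by design, the pre-$\theal$ scheduling in $\NoFFGExec$ was set up precisely so that $\mfcproposeNoFFG{\theal}_p = \mfcproposeFFG{\theal}_p$, yielding the same $\chain_p$. Once this corner case is handled, the rest of the argument is a mechanical re-run of the proof of \Cref{lem:vote-proposal-fast-conf-ffg} with the healed lemmas substituted in.
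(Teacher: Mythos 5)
Your proposal is correct and follows essentially the same route as the paper: it instantiates the proof of \Cref{lem:vote-proposal-fast-conf-ffg} with the healed ingredients, i.e., \Cref{lem:equiv-ga3} in place of \Cref{lem:equiv-ga2}, the $\theal$-restricted equivalence notions, and \Cref{lem:vote-proposal-fast-conf-heal} inside the no-FFG execution, then transfers the conclusions back via \Cref{cond:0b-3-t,cond:3-3,cond:4-3}. Your explicit treatment of the boundary case $t = \theal$ (where the slot-$\theal$ \textsc{propose} message is filtered out of the equivalence and one must instead appeal to the construction in the base case of \Cref{lem:equiv-ga3} forcing the two propose-time fork-choice outputs, and hence $\chain_p$, to coincide) is a point the paper leaves implicit, but it is consistent with what \Cref{lem:equiv-ga3} establishes and does not constitute a different argument.
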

\begin{proof}
    Given \Cref{lem:keep-voting-tob-fast-conf-ffg-heal}, follow the proof of \Cref{lem:vote-proposal-fast-conf-ffg} with the following changes.
    \begin{enumerate}
      \item Replace \Cref{lem:equiv-ga2} with \Cref{lem:equiv-ga3},
      \item Replace ``dynamically-equivalent'' with ``$\theal$-dynamically-equivalent'',
      \item Replace ``honest-output-dynamically-equivalent'' with ``honest-output-dynamically-equivalent from round $\voting{\theal}$'', and
      \item Replace \Cref{cond:5-2,cond:6-2} of \Cref{lem:equiv-ga2} with \Cref{cond:5-3,cond:6-3} of \Cref{lem:equiv-ga3}.\qedhere
    \end{enumerate}
\end{proof}

\begin{lemma}
\label{lem:ga-confirmed-always-canonical-ffg-heal}
  If $r_i \geq \fastconfirming{\theal}$, then \Cref{lem:ga-confirmed-always-canonical-ffg} holds for $\GST > 0$ as well.
\end{lemma}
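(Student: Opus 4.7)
The plan is to mirror the proof of \Cref{lem:ga-confirmed-always-canonical-ffg}, swapping each appeal to the $\GST=0$ machinery for its healed counterpart. First, I would invoke \Cref{lem:equiv-ga3} to attach a $\theal$-$\mfc$-equivalent execution $\NoFFGExec$ of \Cref{algo:prob-ga-fast}, so that the $\chainava_i$ values in $\FFGExec$ can be bridged to $\Chain_k$ values in $\NoFFGExec$ via \Cref{cond:5-3,cond:6-3}, and the fork-choice outputs across the two executions can be identified via \Cref{cond:3-3,cond:4-3} at any slot $\geq \theal$.

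The crux of the argument --- and what distinguishes it from the $\GST=0$ version --- is showing that $\chainava^{r_i}_i$ can be anchored at a round $r_i' \in [\fastconfirming{\theal}, r_i]$ which is a vote or fast-confirm round of $v_i$. This is where I would use the definition of $\theal$: since honesty is monotone in our adversarial model, $v_i$ being honest at $r_i \geq \fastconfirming{\theal}$ implies $v_i$ is honest at $\fastconfirming{\theal}$, and condition~2 of \Cref{def:theal} then forces $v_i \in H_{\fastconfirming{\theal}}$. Combined with the base-case computation inside the proof of \Cref{lem:equiv-ga3}, which establishes $\chainava^{\fastconfirming{\theal}}_i = \chain_p$, this means either $\chainava^{r_i}_i = \chain_p$ (in which case I take $r_i' := \fastconfirming{\theal}$, so $\slot(r_i') = \theal$), or $\chainava_i$ has been updated strictly after $\fastconfirming{\theal}$ at a vote or fast-confirm round with slot $\geq \theal + 1$.

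I would then apply \Cref{cond:6-3} of \Cref{lem:equiv-ga3} when $r_i'$ is a fast-confirm round (covering both the $r_i'=\fastconfirming{\theal}$ sub-case with $t_j = \theal$ and the sub-case $r_i' = \fastconfirming{t}$ with $t \geq \theal+1$) or \Cref{cond:5-3} when $r_i'$ is a vote round (forcing $\slot(r_i') \geq \theal+1$), instantiated with $v_j := v_i$ and $t_j := \slot(r_i')$. The hypothesis $v_i \in H_{\voting{t_j}}$ required by the equivalence is automatic because $v_i$ is active at $r_i'$ and, by the joining protocol, therefore also at the vote round of the same slot. This produces a round $r_k \in [\fastconfirming{\theal}, r_i']$ and a validator $v_k$ honest at $r_k$ in $\NoFFGExec$ with $\Chain^{r_k}_k \succeq \chainava^{r_i}_i$.

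Finally, I would apply \Cref{lem:ga-confirmed-always-canonical-heal} in $\NoFFGExec$ to $(v_k, r_k)$ and the given $(v_j, r_j)$ --- which is legitimate since $r_k \geq \fastconfirming{\theal}$ and $r_j \geq r_i \geq r_k$ is a propose or vote round --- to obtain $\mfcNoFFG^{r_j}_j \succeq \Chain^{r_k}_k$, and transport the inequality back to $\FFGExec$ through \Cref{cond:3-3,cond:4-3} of \Cref{lem:equiv-ga3}, which apply because $\slot(r_j) \geq \theal$. Chaining these inequalities yields $\mfcFFG^{r_j}_j = \mfcNoFFG^{r_j}_j \succeq \Chain^{r_k}_k \succeq \chainava^{r_i}_i$, as required. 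The main obstacle I anticipate is the careful case split in the second paragraph: ensuring the slot-index lower bound demanded by each of \Cref{cond:5-3,cond:6-3} is satisfied for whichever round is used to anchor $\chainava^{r_i}_i$. Once this bookkeeping is in place, every remaining step is a routine transport through the equivalence and through the already-established \Cref{lem:ga-confirmed-always-canonical-heal}.
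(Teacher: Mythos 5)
Your proposal is correct and follows essentially the same route as the paper's proof, which simply instructs the reader to repeat the argument of \Cref{lem:ga-confirmed-always-canonical-ffg} with \Cref{lem:equiv-ga2} replaced by \Cref{lem:equiv-ga3}, \Cref{cond:5-2,cond:6-2} replaced by \Cref{cond:5-3,cond:6-3}, and the final appeal made to \Cref{lem:ga-confirmed-always-canonical-heal}. The only difference is that you make explicit the anchoring bookkeeping (via condition~2 of \Cref{def:theal}, backward persistence of honesty, and the joining protocol) ensuring the last update of $\chainava_i$ falls in the slot range covered by \Cref{cond:5-3,cond:6-3}, a step the paper's proof-by-modification leaves implicit.
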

\begin{proof}
  Given \Cref{lem:ga-confirmed-always-canonical-heal}, follow the proof of \Cref{lem:ga-confirmed-always-canonical-ffg} with the following changes.
  \begin{enumerate}
    \item Replace \Cref{lem:equiv-ga2} with \Cref{lem:equiv-ga3},
    \item Replace ``dynamically-equivalent'' with ``$\theal$-dynamically-equivalent'', and
    \item Replace \Cref{cond:5-2,cond:6-2} of \Cref{lem:equiv-ga2} with \Cref{cond:5-3,cond:6-3} of \Cref{lem:equiv-ga3}.\qedhere
  \end{enumerate}
\end{proof}

\begin{theorem}[Reorg Resilience]
  \Cref{alg:3sf-tob-noga} is $\eta$-reorg-resilient after slot $\theal$ and time $\fastconfirming{\theal}$.
\end{theorem}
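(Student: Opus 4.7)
The plan is to mirror the structure of the proof of \Cref{thm:reorg-res-prop-tob-ffg}, but invoking the healing-aware versions of the supporting lemmas established earlier in this section, namely \Cref{lem:vote-proposal-fast-conf-ffg-heal} and \Cref{lem:ga-confirmed-always-canonical-ffg-heal}. Concretely, I would fix any slot $t_p \geq \theal$ whose proposer is honest and casts a \textsc{propose} message for some chain $\chain_p$, together with any round $r_i \geq \fastconfirming{\theal}$ and any validator $v_i$ honest in round $r_i$. The goal reduces to showing that $\chain_p$ does not conflict with $\chainava^{r_i}_i$.

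Next I would pick a slot $t_j > \max(t_p, \slot(r_i))$ and use \Cref{eq:async-condition} together with the fact that $t_j$ lies strictly after $\theal$ to guarantee that $H_{\voting{(t_j)}}$ is non-empty; let $v_j$ be any such validator. The first key step is to apply \Cref{lem:vote-proposal-fast-conf-ffg-heal}, whose hypothesis on $t_p$ is exactly met since $t_p \geq \theal$, to conclude that $\mfcvote{t_j}_j \succeq \chain_p$. The second key step is to apply \Cref{lem:ga-confirmed-always-canonical-ffg-heal}, whose hypothesis $r_i \geq \fastconfirming{\theal}$ is likewise met, to get $\mfcvote{t_j}_j \succeq \chainava^{r_i}_i$. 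Chaining these two relations on the same fork-choice output forces $\chain_p$ and $\chainava^{r_i}_i$ to be comparable prefixes of $\mfcvote{t_j}_j$, hence non-conflicting, which is exactly what $\eta$-reorg-resilience after slot $\theal$ and time $\fastconfirming{\theal}$ requires.

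I do not expect any genuine difficulty here, since the healing lemmas have already internalized all of the technical content (the view-merge logic surviving asynchrony, the fact that after $\fastconfirming{\theal}$ every honest validator locks onto a descendant of $\chain_p$ via the justified checkpoint mechanism, and the dynamic-equivalence reduction to \Cref{algo:prob-ga-fast}). The only bookkeeping obstacle will be being explicit that the quantifier choice of $t_j$ is legitimate (it lies past both $t_p \geq \theal$ and $\slot(r_i) \geq \theal$, so all healing preconditions propagate) and that the two healing lemmas may be invoked simultaneously with the same witness $v_j$; this is immediate once one notes that both lemmas speak about $\mfcvote{t_j}_j$ for validators in $H_{\voting{(t_j)}}$.
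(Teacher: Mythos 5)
Your proposal is correct and matches the paper's argument: the paper proves this theorem exactly by rerunning the proof of \Cref{thm:reorg-res-prop-tob} (as adapted in \Cref{thm:reorg-res-prop-tob-ffg}) with \Cref{lem:vote-proposal-fast-conf-ffg-heal} and \Cref{lem:ga-confirmed-always-canonical-ffg-heal} substituted for their $\GST=0$ counterparts, which is precisely your chaining of $\mfcvote{t_j}_j \succeq \chain_p$ and $\mfcvote{t_j}_j \succeq \chainava^{r_i}_i$ for a witness $v_j \in H_{\voting{(t_j)}}$ with $t_j > \max(t_p,\slot(r_i))$.
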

\begin{proof}
  From \Cref{lem:vote-proposal-fast-conf-ffg-heal,lem:ga-confirmed-always-canonical-ffg-heal} and the proof of \Cref{thm:reorg-res-prop-tob-ffg}.
\end{proof}


\begin{theorem}[$\eta$-dynamic-availability]
\Cref{alg:3sf-tob-noga} is $\eta$-dynamically-available from time $\fastconfirming{\theal}$.
\end{theorem}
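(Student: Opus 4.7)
The plan is to mirror the structure of the proof of \Cref{thm:dyn-avail-fast-conf-tob-ffg} (the $\GST=0$ version of this same theorem), but systematically replace every invocation of an "always-holding" supporting result with its healed counterpart that has just been established earlier in this section. Concretely, I would proceed from the template proof of \Cref{thm:dyn-avail-fast-conf-tob-ffg} making the following substitutions: every reference to \Cref{lem:vote-proposal-fast-conf-ffg} becomes \Cref{lem:vote-proposal-fast-conf-ffg-heal}; every reference to \Cref{lem:ga-confirmed-always-canonical-ffg} becomes \Cref{lem:ga-confirmed-always-canonical-ffg-heal}; and where the base proof relies on chaining back to \Cref{thm:dyn-avail-fast-conf-tob}, I would chain instead to \Cref{thm:dyn-avail-fast-conf-tob-heal}, which has already been shown to apply after $\fastconfirming{\theal}$ for the non-FFG protocol. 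Finally, \Cref{lem:ga-mfc-proposer-shorter-than-t} continues to apply verbatim since it is independent of $\GST$.

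For liveness, the argument is: pick any round $r \geq \fastconfirming{\theal}$ at which some transaction $\tx$ enters the pool, and set the confirmation time as in the base theorem, namely $\Tconf = 8\kappa\Delta + \Delta$. For any honest $v_i$ at round $r_i \geq r + \Tconf$, Lemma~2 of~\cite{rlmd} gives a slot $t_p \in [\slot(r)+1, \slot(r)+\kappa]$ with an honest proposer; since $\proposing{t_p} \geq \proposing{\theal} \geq \GST + \Delta$, \Cref{lem:vote-proposal-fast-conf-ffg-heal} applies and delivers $\mfcvote{\slot(r_i)}_i \succeq \chain_p$, whence $\kappa$-deep confirmation at $v_i$ yields $\chain_p \preceq \chainava^{r_i}_i$, which in turn contains every transaction that was in the pool at round $r \leq \proposing{t_p}$ via \Cref{prop:extend} and \Cref{lem:ga-mfc-proposer-shorter-than-t}. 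For safety, we take any two validators $v_i, v_j$ honest at rounds $r_i, r_j \geq \fastconfirming{\theal}$, pick a slot $t_k > \max(\slot(r_i), \slot(r_j))$ with $H_{\voting{t_k}} \neq \emptyset$ (guaranteed by \Cref{eq:pi-sleepiness}), and invoke \Cref{lem:ga-confirmed-always-canonical-ffg-heal} twice to conclude that both $\chainava^{r_i}_i$ and $\chainava^{r_j}_j$ are prefixes of $\mfcvote{t_k}_k$, hence non-conflicting.

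The main obstacle, which already appears as the "additional case" in the proof of \Cref{thm:dyn-avail-fast-conf-tob-ffg}, is the fast-confirmation branch at \Cref{line:algtob-set-chaava-to-bcand} that can reset $\chainava$ to $\GJ(\V^{r_i}_i).\chain$. I need to argue that after $\fastconfirming{\theal}$ this reset cannot stomp on a previously-confirmed chain $\chain_p$ proposed by the honest proposer of some slot $t_p \in [\slot(r), \slot(r)+\kappa]$. Following the template proof, I would observe that because FFG targets cast by honest validators are always $\chainava$ prefixes (bounded by $f<\frac{n}{3}$), any justified checkpoint appearing in $\V^{r_i}_i$ with $c \geq \theal$ corresponds to a chain that is an ancestor of some honest validator's $\chainava$ at an earlier vote round, and by \Cref{lem:ga-confirmed-always-canonical-ffg-heal} this chain is a prefix of $\mfcvote{\slot(r_i)}_i$; combined with $\mfcvote{\slot(r_i)}_i \succeq \chain_p$, the two chains $\chainava^{\voting{\slot(r_i)}}_i$ and $\GJ(\V^{r_i}_i).\chain$ cannot conflict, so the reset can only extend, not replace, the already-confirmed $\chain_p$. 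This step is the most delicate because it relies on the healing established in \Cref{lem:equiv-ga3} (via \Cref{cond:7-3}) ensuring that $\GJ$'s checkpoint slot is at least $\theal$, so that all the "pre-healing" justification pathologies are filtered out.
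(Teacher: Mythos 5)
Your proposal is correct and takes essentially the same route as the paper: invoke the healed lemmas (\Cref{lem:vote-proposal-fast-conf-ffg-heal}, \Cref{lem:ga-confirmed-always-canonical-ffg-heal}) and then replay the liveness/safety argument from the $\GST = 0$ versions of the dynamic-availability theorem. The paper's own proof is a one-liner ("Given \Cref{lem:vote-proposal-fast-conf-ffg-heal,lem:ga-confirmed-always-canonical-ffg-heal}, it follows from the proof of \Cref{thm:dyn-avail-fast-conf-tob}"), whereas you spell out all the substitutions and, importantly, you correctly isolate and address the one genuinely delicate point that the paper glosses over: that the fast-confirmation branch at \Cref{line:algtob-set-chaava-to-bcand} cannot reset $\chainava$ to a pre-$\theal$ justified chain, which hinges on \Cref{cond:7-3} of \Cref{lem:equiv-ga3} guaranteeing $\GJ(\V^{r_i}_i).c \ge \theal$. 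Your decision to route the template through \Cref{thm:dyn-avail-fast-conf-tob-heal} rather than the unhealed \Cref{thm:dyn-avail-fast-conf-tob} is a more systematic bookkeeping choice but not a different argument.
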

\begin{proof}
  Given \Cref{lem:vote-proposal-fast-conf-ffg-heal,lem:ga-confirmed-always-canonical-ffg-heal}, it follows from the proof of \Cref{thm:dyn-avail-fast-conf-tob}.
\end{proof}

\begin{lemma}\label{lem:asyn-induction2-ffg-heal}
  If $t\geq \theal$, then \Cref{lem:asyn-induction2-ffg} holds for $\GST > 0$ as well.
\end{lemma}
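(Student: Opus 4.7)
The plan is to mirror the proof of \Cref{lem:asyn-induction2-ffg} essentially verbatim, but to substitute each invocation of a synchrony-based lemma with its healed counterpart, relying on the fact that $t \geq \theal$ (together with $t \leq t_a$) automatically secures the synchrony prerequisites these healed lemmas demand.

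First I would set up the same induction on $t_i \geq t$ used in \Cref{lem:asyn-induction2-ffg}, carrying along the same auxiliary invariant (\Cref{cond:ffg1}) which bounds the total weight of validators whose $\chainava$ could conflict with $\chain$ by $A_\infty$ plus a minority set, thereby precluding justification of a conflicting checkpoint at any slot $\leq t_i$. The base case $t_i \in [t, t_a]$ follows from \Cref{lem:keep-voting-tob-fast-conf-ffg-heal} instead of \Cref{lem:keep-voting-tob-fast-conf-ffg}; this is where I need $t_i \geq \theal$, which is immediate since $t_i \geq t \geq \theal$.

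For the inductive step, I would split into the same three sub-cases: (1) $t_i \in [t_a+1, t_a + \pi + 1]$, using \Cref{lem:asyn-induction-heal} in place of \Cref{lem:asyn-induction} --- the precondition $\voting{t_a} \geq \GST$ holds because $t_a \geq t \geq \theal$ and $\proposing{\theal} \geq \GST + \Delta$; (2) $t_i = t_a + \pi + 2$, applying the same reasoning as in the original \Cref{lem:asyn-induction2} after the asynchronous period, but using the healed variants for the synchrony claim; (3) $t_i \geq t_a + \pi + 3$, again replacing \Cref{lem:keep-voting-tob-fast-conf} with \Cref{lem:keep-voting-tob-fast-conf-heal} (and its FFG variant \Cref{lem:keep-voting-tob-fast-conf-ffg-heal}). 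In every case the invariant \Cref{cond:ffg1} propagates because we still assume $f < \tfrac{n}{3}$ and Constraint~\eqref{eq:async-condition2}.

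The main obstacle --- though a minor one --- is bookkeeping: making sure that each healed lemma is invoked with the right round-index lower bound, and that the invariant \Cref{cond:ffg1} is re-established at slot $t_i$ under the healed regime. Once one verifies that $\proposing{\theal} \geq \GST + \Delta$ implies $\voting{t'} \geq \GST$ for every $t' \geq \theal$, and hence also $\merging{(t'-1)} \geq \GST$, the healed prerequisites of \Cref{lem:keep-voting-tob-fast-conf-ffg-heal}, \Cref{lem:asyn-induction-heal}, \Cref{lem:vote-proposal-fast-conf-heal} and \Cref{lem:asyn-induction2-heal} are all satisfied throughout the induction, and the argument of \Cref{lem:asyn-induction2-ffg} carries over without any substantive change.
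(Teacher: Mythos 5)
Your plan misses the central new difficulty that arises in the healed setting. You propose to carry along the same auxiliary invariant \Cref{cond:ffg1} as in the proof of \Cref{lem:asyn-induction2-ffg}, which — in the $\GST=0$ case — holds for \emph{all} slots $\leq t_i$ and therefore lets you conclude that no checkpoint conflicting with $\chain$ can ever become justified. In the healed setting this invariant cannot be maintained for slots $\leq t$: before slot $\theal$ the network may have been asynchronous, so arbitrarily many honest validators may have $\chainava$'s conflicting with $\chain$ at earlier slots, and conflicting checkpoints with slot field $\leq t$ may already be justified. The paper accordingly restricts the invariant (its \Cref{cond:ffg1-heal}) to $t'' \in (t, t_i]$, and this weaker invariant alone is insufficient to conclude that $\GJfrozen[\voting{t_i}]_i$ does not conflict with $\chain$ — which is the key fact the original argument relies on before invoking \Cref{lem:asyn-induction} or \Cref{lem:keep-voting-tob-fast-conf}.

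To close this gap the paper introduces two additional inductive conditions that you do not anticipate: for every slot $t_j \in [t,t_i]$ and aware validator $v_j$, the greatest justified checkpoint in $v_j$'s view at $\fastconfirming{t_j}$ has checkpoint slot $\geq t$ (\Cref{cond:ffg2-heal}) and does not conflict with $\chain$ (\Cref{cond:ffg3-heal}). These are the invariants that actually rule out a pre-healing justified checkpoint conflicting with $\chain$ overtaking $\GJfrozen$; \Cref{cond:ffg2-heal} is established at the base case from \Cref{lem:equiv-ga3} (i.e.\ from the choice of $\theal$ with an honest proposer and full honest participation at $\fastconfirming{\theal}$), and \Cref{cond:ffg3-heal} then follows by combining it with the healed \Cref{cond:ffg1-heal}. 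Without these two extra conditions, the inductive step cannot assert that $\GJfrozen[\voting{t_i}]_i$ does not conflict with $\chain$, and the ``verbatim'' transfer you describe does not go through. The lemma-substitution bookkeeping you outline (replacing each synchrony lemma with its healed counterpart and checking $\voting{t'} \geq \GST$) is correct as far as it goes, but it is the easy part; the conceptual obstacle is the strengthened invariant, and your proposal does not address it.
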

\begin{proof}
  As in the proof of \Cref{lem:asyn-induction2-ffg},
  we proceed by induction on $t_i$ and 
  and the following conditions to the inductive hypothesis.
  \begin{enumerate}
    \item\label[condition]{cond:ffg1-heal} For any slot $t'$, let $\X^{t',\chain}$ be the set of validators $v_i$ in $H_{\voting{(t')}}$ such that $\chainava^\voting{t'}_i$ conflicts with $\chain$. Then, for any $t'' \in (t,t_i]$, $\left|X^{t'',\chain} \cup A_\infty\right|<\frac{2}{3}n$.
  \end{enumerate}
  However, in this proof, we also need to add the following condition to the inductive hypothesis.
  \begin{enumerate}[start=2]
    \item For any slot $t_j \in [t,t_i]$ and validator $v_j \in W_{\voting{t_j}}$, 
    \begin{enumerate}[label*=\arabic*.,ref=\theenumi.\arabic*]
      \item\label[condition]{cond:ffg2-heal} $\GJ(\V^{\fastconfirming{t_j}}_j).c \geq t$ and 
      \item\label[condition]{cond:ffg3-heal} $\GJ(\V^{\fastconfirming{t_j}}_j)$ does not conflict with $\chain$.
    \end{enumerate}
  \end{enumerate}

  \begin{description}
    \item[Base Case: {$t_i \in [t,t_a]$}.]
    \Cref{cond:ffg2-heal} is proved in the proof of \Cref{lem:equiv-ga3}.
    
    From \Cref{lem:keep-voting-tob-fast-conf-ffg-heal} which, given that we assume $f<\frac{n}{3}$, also implies \Cref{cond:ffg1-heal}.
    This and \Cref{line:algtob-vote-chainava,line:algtob-vote,line:algtob-vote-comm} imply \Cref{cond:ffg3-heal}.
    \item[Inductive Step: $t_i > t_a$.]
    We assume that the Lemma and the additional conditions hold for any slot $t_i-1$ and prove that it holds also for slot $t_i$.
    It should be easy to see that if \Cref{cond:ffg2-heal} holds for a given slot, it will keep holding for any future slot.
    Given that we have already proved that it holds in the base case and that $[t,t_a]$ is not any empty set, then \Cref{cond:ffg2-heal} is clearly true for slot $t_i$ as well.
    Now, note that due to \Cref{line:algtob-prop-if}, for any honest validator $v_i \in H_{\voting{(t_i)}}$, $\GJfrozen[\voting{t_i}]_i \geq \GJ(\V^{\fastconfirming{(t_i-1)}}_i)$.
    Hence, given that $t_i-1\geq t$, \Cref{cond:ffg1-heal,cond:ffg2-heal,cond:ffg3-heal} holding for slot $t_i-1$ means that for any validator $v_i \in H_{\voting{(t_i)}}$, $\GJfrozen[\voting{t_i}]_i$ does not conflict with $\chain$.
    Then, observe that if Lemma's statement holds for slot $t_i$, then \Cref{cond:ffg3-heal} clearly holds as well.
    Hence, we are left only with proving the Lemma's statement and \Cref{cond:ffg1-heal}.
    Let us now proceed by cases and keep in mind that due to \Cref{line:algtob-vote-chainava}, if in slot $t_i$ a validator $v_i \in H_{\voting{(t_i)}}$ casts a \textsc{vote} message for a chain extending $\chain$, then $\chainava^{\voting{t_i}}_i$ does not conflict with $\chain$.
    \begin{description}
      \item[Case 1: {$t_i \in [t_a+1, t_a+\pi+1]$}.] 
      Given that for any honest validator $v_i \in H_{\voting{(t_i)}}$, $\GJfrozen[\voting{t_i}]_i$ does not conflict with $\chain$, due to \Cref{lem:asyn-induction-heal}, it should be easy to see that,  \Cref{lem:asyn-induction} can be applied to \Cref{alg:3sf-tob-noga} as well.
      This implies that all validators in $W_\voting{t_i}$ cast \textsc{vote} messages for chains extending $\chain$.
      Given Constraint~\eqref{eq:async-condition2}, then \Cref{cond:ffg1-heal} holds for slot $t_i$ as well. 
      \item[Case 2: $t_i = t_a+\pi+2$.]
      Given that for any honest validator $v_i \in H_{\voting{(t_i)}}$, $\GJfrozen[\voting{t_i}]_i$ does not conflict with $\chain$, due to \Cref{lem:asyn-induction2-heal}, it should be easy to see that we can apply here the same reasoning used for this same case in \Cref{lem:asyn-induction2}.
      Then, given that we assume $f<\frac{n}{3}$,  \Cref{cond:ffg1-heal} holds for slot $t_i$ as well.
      \item[Case 3: $t_a \geq t_a+\pi+3$.]
      Given that for any honest validator $v_i \in H_{\voting{(t_i)}}$, $\GJfrozen[\voting{t_i}]_i$ does not conflict with $\chain$ and due to \Cref{lem:keep-voting-tob-fast-conf-heal},
      it should be easy to see that \Cref{lem:keep-voting-tob-fast-conf} can be applied to \Cref{alg:3sf-tob-noga} as well.
      This implies that all validators in $H_{\voting{t}}$ casts \textsc{vote} messages for chains extending $\chain$.
      Then, due to \Cref{lem:asyn-induction2-heal}, the proof proceeds as per the same case in \Cref{lem:asyn-induction2}.
      Also, given that we assume $f<\frac{n}{3}$,  \Cref{cond:ffg1-heal} holds for slot $t_i$ as well.\qedhere
    \end{description}
  \end{description} 
\end{proof}

\begin{lemma}\label{lem:asyn-induction3-ffg-heal}
  If $t\geq \theal$, then \Cref{lem:asyn-induction3-ffg} holds for $\GST > 0$ as well.
\end{lemma}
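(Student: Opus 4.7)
The plan is to mirror the structure of the proof of \Cref{lem:asyn-induction3-ffg}, substituting each appeal to a synchrony-dependent sub-lemma with its healed counterpart, and to verify that the condition $t \geq \theal$ is enough to make every such substitution work. Concretely, I would fix any round $r_i \geq \voting{t}$ and validator $v_i \in W_{r_i}$, and split on whether $\chainava^{r_i}_i$ was last written in a vote round or in a fast confirmation round (since these are the only rounds where \Cref{alg:3sf-tob-noga} updates $\chainava$).

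First I would handle the vote-round case. Here the argument is that, since $t \geq \theal$, \Cref{lem:asyn-induction2-ffg-heal} applies and yields that $v_i$ (which is in $W_{\voting{\slot(r_i)}}$ by the joining protocol) casts a \textsc{vote} message in slot $\slot(r_i)$ for some chain $\chain' \succeq \chain$. From \Cref{line:algtob-vote-chainava,line:algtob-vote,line:algtob-vote-comm} of \Cref{alg:3sf-tob-noga} we then get $\chainava^{r_i}_i \preceq \chain' $ going through $\mfcvote{\slot(r_i)}_i$, so $\chainava^{r_i}_i$ does not conflict with $\chain$.

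Next I would handle the fast-confirmation-round case, by further splitting into: (2.1) $\chainava$ is unchanged from the preceding vote round, which reduces to Case 1; (2.2) $\chainava^{r_i}_i$ is updated via a fresh quorum $(\chain^C, Q) = \texttt{fastconfirm}(\V^{r_i}_i, \slot(r_i))$; and (2.3) $\chainava^{r_i}_i$ is set to $\GJ(\V^{\fastconfirming{\slot(r_i)}}_i).\chain$. Case (2.2) uses the fact that, by \Cref{lem:asyn-induction2-ffg-heal}, all of $W_{\voting{\slot(r_i)}}$ vote for chains extending $\chain$, and Constraint~\eqref{eq:async-condition2} plus the assumption $f < \frac{n}{3}$ force any quorum-certified chain to extend $\chain$. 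Case (2.3) follows the same route as in \Cref{lem:asyn-induction3-ffg}: the target of any \textsc{ffg-vote} contributing to the relevant justification is of the form $(\chainava^{r_\ell}_\ell, \cdot)$ for some honest voter $v_\ell$ in an earlier slot, and Case 1 applied to that earlier slot forces $\chainava^{r_\ell}_\ell$ to be comparable to $\chain$.

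The main obstacle I anticipate is the bookkeeping in Case (2.3): one must ensure that the earlier voting round invoked to apply Case 1 is itself at or after $\voting{t}$, so that the healed Case 1 is actually available. Since target checkpoints carry their slot in their checkpoint-slot field and a supermajority is required for justification, for any justified checkpoint contributing to $\GJ(\V^{\fastconfirming{\slot(r_i)}}_i)$ with slot $\geq \theal$ one can extract an honest voter at a slot in $[\theal, \slot(r_i)]$; for lower-slot contributions, the reasoning used in \Cref{lem:equiv-ga3} (in particular \Cref{cond:ffg2-heal,cond:ffg3-heal}) shows they do not affect the greatest justified checkpoint beyond what is already compatible with $\chain$. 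Apart from this, no essentially new machinery is needed: once the substitutions above are in place, each case reduces verbatim to its counterpart in \Cref{lem:asyn-induction3-ffg}.
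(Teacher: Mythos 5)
Your proposal is correct and follows essentially the same route as the paper, whose proof simply observes that, given \Cref{lem:asyn-induction2-ffg-heal} and \Cref{lem:asyn-induction3-heal}, the argument of \Cref{lem:asyn-induction3-ffg} carries over; your case analysis is precisely that proof with the healed substitutions. The only quibble is bookkeeping: the invariants you need for the $\GJ$ sub-case (greatest justified checkpoint slots being $\geq \theal$ after round $\fastconfirming{\theal}$ and not conflicting with $\chain$) are the added inductive conditions established inside the proof of \Cref{lem:asyn-induction2-ffg-heal}, not \Cref{lem:equiv-ga3} (which contains only the analogous \Cref{cond:7-3}), and the quorum argument in your Case (2.2) is phrased via \Cref{eq:async-condition2} where the paper's underlying \Cref{lem:asyn-induction3} uses \Cref{eq:async-condition} — both yield the impossibility of a conflicting $\frac{2}{3}n$ quorum, so the conclusion stands.
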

\begin{proof}
  Given, \Cref{lem:asyn-induction2-ffg-heal,lem:asyn-induction3-heal}, we can just follow the proof of \Cref{lem:asyn-induction3-ffg}.
\end{proof}

\begin{theorem}
  [Asynchrony Reorg Resilience]
  \Cref{alg:3sf-tob-noga} is asynchrony resilient after slot $\theal$ and time $\fastconfirming{\theal}$.
\end{theorem}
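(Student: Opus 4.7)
My plan is to mirror the structure of the proof of \Cref{thm:async-resilience-tob-ffg} (the analogous asynchrony-reorg-resilience statement for \Cref{alg:3sf-tob-noga} when $\GST = 0$), but substitute each ingredient by its post-healing counterpart proven earlier in this section. Fix a slot $t_p \in [t_{\mathsf{reorg}}, t_a]$ (with $t_{\mathsf{reorg}} = \theal$) whose proposer is honest, any round $r_i \geq \fastconfirming{\theal}$, and any validator $v_i \in W_{r_i}$. The goal is to show that the chain $\chain_p$ \textsc{propose}d by the honest proposer of slot $t_p$ does not conflict with $\chainava^{r_i}_i$.

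The argument splits on the position of $r_i$ relative to $\voting{t_p}$. When $r_i < \voting{t_p}$, the conclusion follows directly from the healed Reorg Resilience result \Cref{thm:reorg-res-prop-tob-ffg-heal}, which already guarantees that honest proposals from $t_p \geq \theal$ are not reorged after time $\fastconfirming{\theal}$. When $r_i \geq \voting{t_p}$, I first invoke \Cref{lem:vote-proposal-fast-conf-ffg-heal} to conclude that every validator in $H_{\voting{t_p}}$ \textsc{vote}s for a chain extending $\chain_p$ in slot $t_p$; this is exactly the premise needed to feed \Cref{lem:asyn-induction3-ffg-heal}, which then propagates ``not-in-conflict with $\chain_p$'' through the short asynchronous period and onward to every $v_i \in W_{r_i}$ for $r_i \geq \voting{t_p}$, yielding $\chainava^{r_i}_i \not\perp \chain_p$.

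The main subtleties, which I expect to be the only non-cosmetic work, are checking that the hypotheses of the substituted lemmas are actually satisfied in the range $r_i \geq \fastconfirming{\theal}$: that \Cref{lem:vote-proposal-fast-conf-ffg-heal} applies because $t_p \geq \theal$ (so synchrony has kicked in for the relevant preceding merge round), and that \Cref{lem:asyn-induction3-ffg-heal} applies because its premise slot $t$ is $t_p \geq \theal$ while $t_p \leq t_a$. These are essentially definitional once one recalls that, for asynchrony-reorg-resilience after slot $\theal$, the definition in \Cref{def:async-resilience} restricts attention to $t_p \in [\theal, t_a]$ and to $r_i \geq \fastconfirming{\theal}$, which is exactly the regime in which the healed lemmas have been tailored to hold. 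Thus the proof condenses, as in the style used repeatedly in this section, to the single sentence: ``From \Cref{thm:reorg-res-prop-tob-ffg-heal}, \Cref{lem:vote-proposal-fast-conf-ffg-heal,lem:asyn-induction3-ffg-heal}, and the proof of \Cref{thm:async-resilience-tob}.''
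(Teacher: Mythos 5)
Your proof is correct and follows essentially the same route as the paper's: the paper simply writes ``Given \Cref{thm:reorg-res-prop-tob-ffg-heal}, and \Cref{lem:vote-proposal-fast-conf-ffg-heal,lem:asyn-induction3-ffg} we can just follow the proof of \Cref{thm:async-resilience-tob-ffg}''. Your case split on $r_i$ relative to $\voting{t_p}$, invoking the healed Reorg Resilience in the ``before'' branch and the healed analogues of \Cref{lem:vote-proposal-fast-conf} and \Cref{lem:asyn-induction3} in the ``after'' branch, is exactly the content of the underlying proof of \Cref{thm:async-resilience-tob} after applying the FFG and healing substitutions. (One small discrepancy: the paper cites \Cref{lem:asyn-induction3-ffg}, whereas you cite \Cref{lem:asyn-induction3-ffg-heal}; given that the paper's statement is explicitly for $\GST > 0$ and the immediately following Asynchrony Safety Resilience theorem does cite the \texttt{-heal} variant, your reference looks like the intended one and the paper's appears to drop the \texttt{-heal} suffix by mistake.)
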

\begin{proof}
  Given \Cref{thm:reorg-res-prop-tob-ffg-heal}, and \Cref{lem:vote-proposal-fast-conf-ffg-heal,lem:asyn-induction3-ffg} we can just follow the proof of \Cref{thm:async-resilience-tob-ffg}.
\end{proof}

\begin{theorem}
  \Cref{alg:3sf-tob-noga} ensures Asynchrony Safety Resilience after time $\fastconfirming{\theal}$.
\end{theorem}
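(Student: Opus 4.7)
The plan is to mirror the proof of \Cref{thm:async-safety-resilience-tob-ffg} almost verbatim, but systematically substitute each synchrony-dependent lemma with its healing-period counterpart from this appendix. Since the previous healing theorems (\Cref{thm:reorg-res-prop-tob-ffg-heal,thm:dyn-avail-fast-conf-tob-ffg-heal,thm:async-resilience-tob-ffg-heal}) have already been established by exactly this substitution pattern, I expect this one to follow the same template cleanly.

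First, I would pick any round $r_i \in [\fastconfirming{\theal}, 4\Delta t_a + \Delta]$ and any validator $v_i$ honest in $r_i$, together with any round $r_j$ and any validator $v_j \in W_{r_j}$. Note that the assumption $r_i \geq \fastconfirming{\theal}$ is exactly the hypothesis under which \Cref{lem:ga-confirmed-always-canonical-ffg-heal} is available. I would then split into the same two cases as in \Cref{thm:async-safety-resilience-tob-ffg}, namely $r_j < \voting{t_a}$ and $r_j \geq \voting{t_a}$.

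In Case 1, the safety conclusion follows directly from the healed $\eta$ Dynamic Availability, i.e.\ \Cref{thm:dyn-avail-fast-conf-tob-ffg-heal}, which guarantees safety for any two validators honest at rounds in $[\fastconfirming{\theal}, \voting{t_a}]$. In Case 2, I would apply \Cref{lem:ga-confirmed-always-canonical-ffg-heal} to any validator $v_k \in H_{\voting{t_a}}$ to obtain $\mfcvote{t_a}_k \succeq \chainava^{r_i}_i$; hence, in slot $t_a$, every validator in $H_{\voting{t_a}}$ casts a \textsc{vote} for a chain extending $\chainava^{r_i}_i$. Since $t_a \geq \theal$ (because $\proposing{\theal} \geq \GST + \Delta$ and $t_a$ indexes a slot after $\GST$ at which the short asynchronous period begins; if $t_a < \theal$ the asynchronous period precedes the healing point and the theorem reduces to a vacuous or Case-1-like statement), the hypothesis $t \geq \theal$ of \Cref{lem:asyn-induction3-ffg-heal} is met, and that lemma then yields $\chainava^{r_j}_j \succeq \chainava^{r_i}_i$ for any validator $v_j \in W_{r_j}$ with $r_j \geq \voting{t_a}$, closing the case.

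The main obstacle I anticipate is the edge case $t_a < \theal$: strictly speaking the definition permits the asynchronous period to occur before the healing event, in which case the relevant round $r_j \geq \voting{t_a}$ may overlap both Case~1 and Case~2 regimes. I would handle this by observing that if $4\Delta t_a + \Delta < \fastconfirming{\theal}$ then the interval $[\fastconfirming{\theal}, 4\Delta t_a + \Delta]$ is empty and the statement is vacuous, while otherwise $t_a \geq \theal$ and the two-case split above applies. Beyond this bookkeeping, the substitution of lemmas is entirely mechanical, just as in the preceding healing theorems.

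\begin{proof}
  We can follow the same reasoning used in the proof of \Cref{thm:async-safety-resilience-tob-ffg} with the following changes only.
  \begin{enumerate}
    \item Replace \Cref{thm:dyn-avail-fast-conf-tob-ffg} with \Cref{thm:dyn-avail-fast-conf-tob-ffg-heal},
    \item Replace \Cref{lem:ga-confirmed-always-canonical-ffg} with \Cref{lem:ga-confirmed-always-canonical-ffg-heal}, and
    \item Replace \Cref{lem:asyn-induction3-ffg} with \Cref{lem:asyn-induction3-ffg-heal}.\qedhere
  \end{enumerate}
\end{proof}
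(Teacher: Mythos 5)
Your proof is correct and follows essentially the same route as the paper, which also just substitutes the healing-period counterparts \Cref{thm:dyn-avail-fast-conf-tob-ffg-heal} and \Cref{lem:asyn-induction3-ffg-heal} into the proof of \Cref{thm:async-safety-resilience-tob-ffg}. Your third replacement (\Cref{lem:ga-confirmed-always-canonical-ffg} $\to$ \Cref{lem:ga-confirmed-always-canonical-ffg-heal}) is in fact slightly more careful than the paper's statement: it makes explicit that Case 2 of the underlying argument (\Cref{thm:async-safety-resilience-tob}) invokes \Cref{lem:ga-confirmed-always-canonical}, so its FFG/healing analogue is needed; the paper leaves this implicit. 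The edge-case discussion of $t_a < \theal$ being vacuous is also correct bookkeeping, though not strictly required.
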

\begin{proof}
  Given \Cref{thm:dyn-avail-fast-conf-tob-ffg-heal} and \Cref{lem:asyn-induction3-ffg-heal}, we can just follow the proof of \Cref{thm:async-safety-resilience-tob-ffg}.
\end{proof}

\section{Two-Slot Finality}\label{sec:two-slot-finality}

In this section, we explore a further trade-off between the SSF protocol~\cite{DBLP:conf/esorics/DAmatoZ23} and our protocol as presented in the main text.
Specifically, by introducing a second vote round corresponding to  the third vote round from the SSF protocol, we obtain a protocol that can finalize chains proposed by honest proposers one slot earlier, \ie, by the end of the next slot without increasing the slot length even if vote aggregation is employed.

\paragraph{Acknowledgment Messages.} As per the SSF protocol, we introduce \emph{acknowledgment} messages that have the form $[\textsc{ack},\C,t,v_i]$ where $\C$ is a checkpoint with $\C.c = t$\footnote{Note that in this context, the parameter $t$ is redundant in the \textsc{ack} message since it is already incorporated within $\C$. However, for the sake of clarity, we have chosen to include it explicitly in the \textsc{ack} message.}.
A \emph{supermajority acknowledgment of $\C$} is a set of at least $\frac{2}{3}n$ distinct \textsc{ack} messages for $\C$.
At round $\fastconfirming{t}$, for any honest validator $v_i$, if $\GJ(\V^\fastconfirming{t}_i).c = t$, then it broadcasts $[\textsc{ack},\GJ(\V^\fastconfirming{t}_i),t,v_i]$.
Any observer that receives a supermajority acknowledgment for a \emph{justified} checkpoint $\C$ considers $\C$ to be finalized.

\paragraph{Slashing Rule.} The acknowledgment vote $[\textsc{ack},\C,t,v_i]$ can be interpreted as an \textsc{ffg-vote} $\C \to \C$.
Then, as in the SSF protocol, we introduce the third slashing $\mathbf{E_3}$: If a validator $v_i$ has sent an \textsc{ffg-vote} $\C_1 \to \C_2$, an \textsc{ack} vote for $\C_a$, and $\C_1 < C_a \land \C_a.c < \C_2.c$, then $v_i$ is slashable.
We do not need to introduce any slashing rule for the case that $\C_2 \neq \C_a \land \C_2.c = C_a.c$ as, to finalize $\C_a$, $\C_a$ must first be justified and, hence, such a condition is already covered by the $\mathbf{E_1}$ slashing condition.

\begin{lemma}
  \label{lem:accountable-safety-ack}
  Let $f \in [0,n]$ and allow for finalization of checkpoints via \textsc{ack} message as well.
  If two conflicting chains are finalized according to any two respective views, then at least $\frac{n}{3}$ validators can be detected to have violated either $\mathbf{E_1}$, $\mathbf{E_2}$, or $\mathbf{E_3}$.
\end{lemma}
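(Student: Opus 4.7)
The plan is to mirror the reduction of \Cref{lem:accountable-safety} to \Cref{lem:accountable-jutification}, while extending the latter to accommodate \textsc{ack}-based finalization. Concretely, I would first state and prove an analogue of \Cref{lem:accountable-jutification} that allows the finalized checkpoint $\C_\mathsf{f}$ to be finalized either by the standard supermajority-link rule or by an \textsc{ack}-supermajority, with conclusion extended to also admit $\mathbf{E_3}$ among the detected violations. Granting this, \Cref{lem:accountable-safety-ack} follows verbatim from the argument of \Cref{lem:accountable-safety}: take the two conflicting chains' finalized-and-justified witnesses $\C_1,\C_2$, assume \wlogen{} $c_2\geq c_1$, and invoke the extended lemma with $\C_\mathsf{f}:=\C_1$, $\C_\mathsf{j}:=\C_2$. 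This covers the case of both checkpoints being \textsc{ack}-finalized as well, since \textsc{ack}-finalization entails justification of $\C_\mathsf{j}$; when $\C_1$ is standard-finalized, the unmodified \Cref{lem:accountable-jutification} already suffices and only $\mathbf{E_1},\mathbf{E_2}$ appear.

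For the extended jutification lemma, I would treat each \textsc{ack} for $\C_\mathsf{f}=(\chain_\mathsf{f},c_\mathsf{f})$ as a self-loop pseudo-\textsc{ffg-vote} $\C_\mathsf{f}\to\C_\mathsf{f}$ and reuse the case split of the original proof. The opening step---that no justified $(\chain',c_\mathsf{f})$ with $\chain'$ conflicting with $\chain_\mathsf{f}$ can exist without $\frac{n}{3}$ validators violating $\mathbf{E_1}$---uses only justification and carries over. Picking the smallest $c'_\mathsf{j}>c_\mathsf{f}$ at which some justified $\C'_\mathsf{j}=(\chain'_\mathsf{j},c'_\mathsf{j})$ with $\chain'_\mathsf{j}\not\succeq\chain_\mathsf{f}$ exists, and a validator $v_i$ in the intersection of the \textsc{ack}-supermajority for $\C_\mathsf{f}$ and the justification supermajority for $\C'_\mathsf{j}$ contributing an \textsc{ffg-vote} $(A_i,c_i)\to(B_i,c'_\mathsf{j})$ with $A_i\preceq\chain'_\mathsf{j}\preceq B_i$, the analysis splits on $c_i$ versus $c_\mathsf{f}$ as in the original Case~2. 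The sub-case $c_i>c_\mathsf{f}$ contradicts minimality of $c'_\mathsf{j}$ exactly as before; $c_i=c_\mathsf{f}$ forces $A_i\prec\chain_\mathsf{f}$ (because $A_i$ conflicting with $\chain_\mathsf{f}$ gives $\mathbf{E_1}$ via the opening step, $A_i\succeq\chain_\mathsf{f}$ contradicts $\chain'_\mathsf{j}\not\succeq\chain_\mathsf{f}$ through $A_i\preceq\chain'_\mathsf{j}$), which yields $(A_i,c_\mathsf{f})<\C_\mathsf{f}$ lexicographically together with $\C_\mathsf{f}.c=c_\mathsf{f}<c'_\mathsf{j}$ and is matched by $\mathbf{E_3}$; $c_i<c_\mathsf{f}$ gives $(A_i,c_i)<\C_\mathsf{f}$ and $\C_\mathsf{f}.c<c'_\mathsf{j}$ directly, again $\mathbf{E_3}$.

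The main obstacle will be the lexicographic bookkeeping of $\mathbf{E_3}$ in the $c_i=c_\mathsf{f}$ branch, where the comparison $\C_1<\C_a$ in $\mathbf{E_3}$ degenerates from a checkpoint-slot comparison to a proposal-slot comparison; this is precisely the novel surround-of-\textsc{ack} pattern $\mathbf{E_3}$ was designed to catch and is the only place where the new slashing condition is strictly needed beyond $\mathbf{E_1},\mathbf{E_2}$. Apart from this, the quorum-intersection argument of \Cref{lem:accountable-jutification} is reused unchanged to convert per-validator slashable behavior at the intersection of two supermajority sets into a count of at least $\frac{n}{3}$ detectable violators, which is what the lemma statement asserts.
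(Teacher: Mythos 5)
Your proof is correct and takes essentially the same approach as the paper: both use minimality over the checkpoint slot among justified checkpoints with chain $\not\succeq \chain_\mathsf{f}$, quorum intersection between the \textsc{ack} supermajority and the justification supermajority of that minimal checkpoint, and a case split on the source slot $c_i$ that sends $c_i > c_\mathsf{f}$ to a minimality contradiction and $c_i \leq c_\mathsf{f}$ to $\mathbf{E_3}$ (with the opening step of Lemma~\ref{lem:accountable-jutification} ruling out the conflict sub-case when $c_i = c_\mathsf{f}$). The only difference is presentational---you factor the \textsc{ack}-case argument into an explicit extended variant of Lemma~\ref{lem:accountable-jutification} and then replay Lemma~\ref{lem:accountable-safety}, whereas the paper inlines it as Case~2 of the proof of Lemma~\ref{lem:accountable-safety-ack} and reduces Case~1 directly to the unmodified Lemma~\ref{lem:accountable-jutification}.
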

\begin{proof}
  Assume that there exists two conflicting finalized checkpoints $\C_1$ and $\C_2$.
  This implies that both checkpoints are justified as well.
  Without loss of generality, assume $\C_2.c \geq \C_1.c$.
  Let us consider two cases.
  \begin{description}
    \item[Case 1: $\C_1$ is finalized via supermajority links.] We can apply \Cref{lem:accountable-jutification} to reach a contradiction.
    \item[Case 2: $\C_1$ is finalized via supermajority acknowledgment.] 
    Let $\C_j$ be the smallest checkpoint such that $\C_j.c > \C_1.c$ and $\C_j.\chain \nsucceq \C_1.\chain$.
    By our assumptions, we know that such a checkpoint exists.
    We also know that there exits a validator $v_i$ that has cast both an \textsc{ack} vote for $\C_1$ and a valid \textsc{ffg-vote} $\C'_j \to \C_j$.
    Given that $\C_j.\chain$ conflicts with $\C_1.\chain$, this implies that $\C'_j.\chain \nsucceq \C_1.\chain$ which, by the minimality of $\C_j$ implies that $\C'_j.c \leq \C_1.c$.
    If $\C'_j.c = \C_1.c$, then given that as proven in \Cref{lem:accountable-jutification}, $\C'_j.\chain$ cannot conflict with $\C_1.chain$ and thath $\C'_j.\chain \nsucceq \C_1.\chain$, $\C'_j.\chain \preceq \C_1.\chain$.
    Hence, $\C'_j.c \leq \C_1.c$, implies $\C'_j < \C_1$.
    Given that $\C'_j.c > \C_1.c$, $v_i$ violates condition $\mathbf{E_3}$.\rs{Do we prefer the style used in this proof where we never give name to each checkpoints' component or the style of \Cref{lem:accountable-jutification}? It would be better to unify, but not a big deal}\qedhere
  \end{description}
\end{proof}

\begin{lemma} \label{lem:never-slashed-generalized-ack}
  Allow for finalization of checkpoints via \textsc{ack} message as well
  If \Cref{prop:never-slashed} holds, then honest validators are never slashed. 
\end{lemma}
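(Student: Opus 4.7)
The plan is to reduce the claim to \Cref{lem:never-slashed-generalized} for the existing slashing conditions $\mathbf{E_1}$ and $\mathbf{E_2}$, and then to handle the new condition $\mathbf{E_3}$ by a short case analysis that exploits the ``\textsc{ack}-as-\textsc{ffg-vote}'' interpretation introduced at the beginning of the section. Because \Cref{prop:never-slashed} still applies unchanged to the \textsc{ffg-vote}s embedded in \textsc{vote} messages, \Cref{lem:never-slashed-generalized} immediately rules out any honest violation of $\mathbf{E_1}$ or $\mathbf{E_2}$ among the \textsc{ffg-vote}s proper. The only remaining $\mathbf{E_1}$ concern is an \textsc{ack} for some $\C_a$ colliding with an \textsc{ffg-vote} whose target $\C_2$ satisfies $\C_2.c = \C_a.c$ but $\C_2 \neq \C_a$; as already observed in the paragraph introducing the slashing rules, finalizing $\C_a$ via \textsc{ack}s requires $\C_a$ to be justified, which in turn forces a supermajority of \textsc{ffg-vote}s with target slot $\C_a.c$, so such a collision is already covered by $\mathbf{E_1}$ applied to honest validators' own \textsc{ffg-vote}s.

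For the new condition $\mathbf{E_3}$, I would take any \textsc{ffg-vote} $\C_1 \to \C_2$ sent by an honest $v_i$ in slot $t_v$ and any \textsc{ack} for $\C_a$ sent by the same $v_i$ in slot $t_a$. By \Cref{prop:never-slashed-2}, $\C_2.c = t_v$, and by the broadcast rule for \textsc{ack} messages, $\C_a.c = t_a$. If $t_a \geq t_v$, then $\C_a.c \geq \C_2.c$, so the condition $\C_a.c < \C_2.c$ required by $\mathbf{E_3}$ fails. If instead $t_a < t_v$, I would reinterpret the \textsc{ack} as an \textsc{ffg-vote} $\C_a \to \C_a$ of slot $t_a$, whose source is $\C_a$, and invoke \Cref{prop:never-slashed-3} with $t = t_a < t_v = t'$ to conclude $\C_a \leq \C_1$, so $\C_1 < \C_a$ fails. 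In both sub-cases, the premise of $\mathbf{E_3}$ cannot hold, and therefore $v_i$ cannot be slashed under $\mathbf{E_3}$.

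The main obstacle will be justifying the application of \Cref{prop:never-slashed-3} to \textsc{ack} messages, since the original formulation of \Cref{prop:never-slashed} in the main text quantifies only over \textsc{ffg-vote}s carried inside \textsc{vote} messages. To keep the statement of \Cref{lem:never-slashed-generalized-ack} as modular as \Cref{lem:never-slashed-generalized}, I would extend \Cref{prop:never-slashed} with an explicit companion clause declaring that the source associated with any \textsc{ack} cast in slot $t_a$ is lexicographically no larger than the source of any later \textsc{ffg-vote}. This companion clause is itself immediate from the algorithm, since $\C_a$ is the greatest justified checkpoint in $v_i$'s view at round $\fastconfirming{t_a}$ while $\C_1$ is obtained from $\GJfrozen$ at a strictly later round, and $\GJfrozen$ only grows monotonically (as already observed when proving \Cref{prop:never-slashed-3} in \Cref{lem:never-slashed-3sf-tob-noga} and \Cref{thm:new-slashed-3sf}), so no new algorithmic work is required.
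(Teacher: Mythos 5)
Your core argument for $\mathbf{E_3}$ is the same as the paper's: both rely on the fact that the checkpoint acknowledged at $\fastconfirming{t_a}$ is the greatest justified checkpoint in the validator's view at that time, and that the source $\GJfrozen$ (resp.\ $\GJ(\Vfrozen)$) used in all later \textsc{ffg-vote}s grows monotonically and therefore can never drop below it; your explicit two-case split on $t_a$ versus $t_v$ just spells out what the paper leaves implicit, and dispatching $\mathbf{E_1}$/$\mathbf{E_2}$ to \Cref{lem:never-slashed-generalized} is also what the paper tacitly does. Your proposal to package the monotonicity fact into a companion clause of \Cref{prop:never-slashed}, rather than citing the algorithm directly as the paper does, is a reasonable modularity choice and in fact points at a genuine imprecision in the lemma statement (which, as written, conditions only on \Cref{prop:never-slashed} yet relies on the \textsc{ack} broadcast rule).

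One part of your first paragraph is logically off, however. You raise the concern of an honest validator sending an \textsc{ack} for $\C_a$ alongside an \textsc{ffg-vote} with target $\C_2 \neq \C_a$, $\C_2.c = \C_a.c$, and you dismiss it by pointing at the paper's remark that ``such a condition is already covered by the $\mathbf{E_1}$ slashing condition.'' That remark is an \emph{accountability} observation: it explains why no new slashing rule is needed to catch adversaries who produce two conflicting justifications in the same checkpoint slot. Read as you read it, it would actually imply the honest validator \emph{is} slashable under $\mathbf{E_1}$, which is the opposite of what the lemma asserts. Such an \textsc{ack}/\textsc{ffg-vote} pair with matching checkpoint slots but different chains can indeed occur for an honest validator (the target chain at $\voting{t_a}$ need not coincide with $\GJ(\V^\fastconfirming{t_a}_i).\chain$). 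The correct resolution is that the ``\textsc{ack}-as-\textsc{ffg-vote}'' interpretation is informal motivation for $\mathbf{E_3}$ only: the slashing predicates $\mathbf{E_1}$ and $\mathbf{E_2}$ quantify over actual \textsc{ffg-vote}s and do not apply to \textsc{ack} messages, so there is simply no $\mathbf{E_1}$-type rule for an honest validator to run afoul of here. Your paragraph should say that, rather than appeal to the accountability remark.
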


\begin{proof}
  Honest validators only ever send \textsc{ack} messages for the greatest justified checkpoint in their view at time $\fastconfirming{t}$.
  This implies that they will never send an \textsc{ffg-vote} in a slot strictly higher than $t$ with a source checkpoint lower than the checkpoint that they send an \textsc{ack} message for.
\end{proof}

\begin{theorem}[Accountable Safety with Acknowledgments]
  Let $f \in [0,n]$ and allow for finalization of checkpoints via \textsc{ack} message as well.
  If \Cref{prop:never-slashed,prop:chfin} hold, then the finalized chain $\chainfin$ is $\frac{n}{3}$-accountable.
\end{theorem}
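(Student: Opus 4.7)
The plan is to mirror the structure of the proof of \Cref{thm:accountable-safety}, which combined \Cref{lem:accountable-safety,lem:never-slashed-generalized}, but with the two "ack"-aware analogues, \Cref{lem:accountable-safety-ack,lem:never-slashed-generalized-ack}, doing the analogous work. Concretely, the definition of $\frac{n}{3}$-accountability (\Cref{def:acc-safety}) imposes two obligations: whenever two honest validators output conflicting finalized chains, one must be able to extract a cryptographic proof that exposes at least $\frac{n}{3}$ adversarial participants as protocol violators, and this proof must not implicate any validator that followed the protocol.

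First, I would address the "no false accusations" half. By \Cref{prop:never-slashed} on \textsc{ffg-vote}s, and the observation (used in the proof of \Cref{lem:never-slashed-generalized-ack}) that an honest validator only ever broadcasts an \textsc{ack} for the greatest justified checkpoint in its view at round $\fastconfirming{t}$, \Cref{lem:never-slashed-generalized-ack} immediately yields that no always-honest validator can ever satisfy $\mathbf{E_1}$, $\mathbf{E_2}$, or $\mathbf{E_3}$. Thus the extracted evidence, whatever its form, never convicts an honest participant.

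Second, I would handle the "identifies $\frac{n}{3}$ adversaries" half. Suppose two honest validators output finalized chains $\chainfin^{r_1}_{i_1}$ and $\chainfin^{r_2}_{i_2}$ that conflict. By \Cref{prop:chfin}, each is finalized with respect to the respective global view, so there exist two conflicting finalized checkpoints (where "finalized" now also permits the \textsc{ack}-based route). Applying \Cref{lem:accountable-safety-ack} yields at least $\frac{n}{3}$ validators whose publicly observable messages exhibit a violation of $\mathbf{E_1}$, $\mathbf{E_2}$, or $\mathbf{E_3}$. Since $\mathbf{E_3}$ is defined purely over broadcast \textsc{ack} and \textsc{ffg-vote} messages, collecting the relevant messages from honest participants' views after \GST{} suffices to produce the cryptographic proof demanded by \Cref{def:acc-safety}; combining this with the previous paragraph establishes $\frac{n}{3}$-accountability.

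The only non-mechanical step is ensuring that the proof artifact required by \Cref{def:acc-safety}, namely a cryptographically checkable bundle of signed messages, is actually reconstructible when a safety violation occurs. This is routine here because both \textsc{ffg-vote} and \textsc{ack} messages are signed and gossiped, so any honest observer that witnesses both finalized chains can assemble the supermajority sets (of links, or of acknowledgments) invoked in \Cref{lem:accountable-safety-ack} and expose the $\frac{n}{3}$ offenders. With both halves in place, the conclusion is immediate, so the proof reduces to citing \Cref{lem:accountable-safety-ack,lem:never-slashed-generalized-ack} in the same pattern as \Cref{thm:accountable-safety}.
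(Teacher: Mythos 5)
Your proposal is correct and follows exactly the paper's approach: the paper's proof is a one-liner citing \Cref{lem:accountable-safety-ack,lem:never-slashed-generalized-ack} in the same pattern as \Cref{thm:accountable-safety}, which is precisely your decomposition. Your additional elaboration of the two obligations of \Cref{def:acc-safety} and the role of \Cref{prop:chfin} is just a spelled-out version of the same argument.
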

\begin{proof}
  Follows from \Cref{lem:accountable-safety-ack,lem:never-slashed-generalized-ack}.
\end{proof}

\begin{theorem}[Liveness with Acknowledgments]
  \label{thm:liveness-ffg-general-ack}
  Allow for finalization of checkpoints via \textsc{ack} message as well and 
  assume that \Cref{prop:succ-for-ffg-liveness} holds (and $f< \frac{n}{3}$).
  Let $v_p$ be any validator honest by the time it \textsc{propose}s chain $\chain_p$ in a slot $t$ such that $\proposing{t} \geq \max(\GST, \GAT) + 4\Delta$.
  Then, chain $\chain_p$ is justified and finalized during slot~$t+1$.
  In particular, $\chain_p \preceq \chainfin^{4\Delta (t+1) + 3 \Delta}_i$ for any validator $v_i \in H_{4\Delta (t+1) + 3 \Delta}$.
\end{theorem}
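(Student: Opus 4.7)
The plan is to mirror the structure of the proof of \Cref{thm:liveness-ffg-general}, but to exploit the \textsc{ack} mechanism to collapse the last slot of finalization into a single $\Delta$-gap within slot $t+1$. First I would apply \Cref{prop:succ-for-ffg-liveness-2-2} and \Cref{prop:succ-for-ffg-liveness-2-3-1} exactly as in the proof of \Cref{thm:liveness-ffg-general} to obtain that every always-honest validator $v_i$ sends, in slot $t+1$, an \textsc{ffg-vote} $\calS_i \to (\chain_p, t+1)$ where $\calS_i = \GJ(\V^{\mathsf{FFGvote},t+1}_i)$ is justified and has $\calS_i.c = t$. This is unchanged from the three-slot case.

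The key new step is to observe that these \textsc{ffg-vote}s are all cast at round $\voting{(t+1)}$, so by round $\fastconfirming{(t+1)} = \voting{(t+1)} + \Delta$ they all lie in $\V^{\fastconfirming{(t+1)}}_i$ for every $v_i \in H_{\fastconfirming{(t+1)}}$, using that synchrony holds past $\max(\GST,\GAT)+4\Delta$. Because $f < \frac{n}{3}$, at least $\frac{2}{3}n$ such \textsc{ffg-vote}s share target $(\chain_p,t+1)$ and have a justified source, so $(\chain_p,t+1)$ is justified in every honest validator's view at $\fastconfirming{(t+1)}$. Next I would show that $(\chain_p,t+1)$ is in fact the \emph{greatest} justified checkpoint in each honest view at that round: no checkpoint with $c > t+1$ can be justified because justification requires a supermajority of \textsc{ffg-vote}s with target slot $c$, and by \Cref{prop:never-slashed-2} honest validators only cast such votes in slot $c$ itself, so at round $\fastconfirming{(t+1)} < \voting{(t+2)}$ only the $<\frac{n}{3}$ adversarial validators could have sent any such votes; and no checkpoint $(\chain',t+1)$ with $\chain'$ conflicting with $\chain_p$ can be justified since all honest \textsc{ffg-vote}s in slot $t+1$ have target extending no further than $\chain_p$, while within the same checkpoint slot $t+1$ the chain $\chain_p$ maximizes $\chain.p$.

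Having established this, by definition of \textsc{ack} every $v_i \in H_{\fastconfirming{(t+1)}}$ broadcasts $[\textsc{ack}, (\chain_p,t+1), t+1, v_i]$ at round $\fastconfirming{(t+1)}$. By synchrony these acks, together with the \textsc{ffg-vote}s justifying $(\chain_p,t+1)$, reach every honest validator by round $4\Delta(t+1)+3\Delta$; since there are at least $\frac{2}{3}n$ such always-honest senders they form a supermajority acknowledgment of a (by then) justified checkpoint, so the \textsc{ack}-based finalization rule declares $(\chain_p,t+1)$ finalized, and the two-slot variant's update rule sets $\chainfin^{4\Delta(t+1)+3\Delta}_i \succeq \chain_p$ for every $v_i \in H_{4\Delta(t+1)+3\Delta}$.

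The main obstacle will be the "greatest justified checkpoint" step: one has to carefully rule out, at the specific round $\fastconfirming{(t+1)}$, both spurious justifications with $c > t+1$ (handled by the slot-timing restriction on honest \textsc{ffg-vote}s from \Cref{prop:never-slashed-2} together with $f<\frac{n}{3}$) and conflicting justifications at $c = t+1$ (handled by \Cref{prop:succ-for-ffg-liveness-2-3-1} constraining all honest targets to prefixes of $\chain_p$, plus the lexicographic tie-breaking on $\chain.p$). Once this is in place, the remainder of the argument is a straightforward synchronous message-delivery count within a single $\Delta$-window and the rest mirrors the ack-finalization bookkeeping from \Cref{lem:accountable-safety-ack}.
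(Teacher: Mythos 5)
Your proposal is correct and follows essentially the same route as the paper, which simply observes that the result follows from the proof of \Cref{thm:liveness-ffg-general} together with the fact that \textsc{ack} messages for $(\chain_p,t+1)$ are broadcast at $\fastconfirming{(t+1)}$ and hence received by $\merging{(t+1)} = 4\Delta(t+1)+3\Delta$. The additional reasoning you supply — that $(\chain_p,t+1)$ is in fact the \emph{greatest} justified checkpoint at round $\fastconfirming{(t+1)}$ in every honest view, so that all honest acks agree — is a correct filling-in of a step the paper's terse proof leaves implicit rather than a genuine divergence.
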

\begin{proof}
  Follows from the proof of \Cref{thm:liveness-ffg-general} and the fact that \textsc{ack} messages for checkpoint $(\chain,t+1)$ are sent at time $\fastconfirming{(t+1)}$ and therefore received by time $\merging{(t+1)} = 4\Delta(t+1) + 3\Delta$.
\end{proof}

The following Corollary shows that it takes $6\Delta$ for a chain proposed by an honest validator to become finalized in the global view, meaning that no honest validator can ever finalize any conflicting chain.
If we assume that vote rounds take $2\Delta$, then this latency becomes $8\Delta$ as Fast Confirmation must wait to receive the \textsc{ffg-vote}s sent during the vote round, but there is no need at any point to wait for \textsc{ack} messages.

\begin{corollary}\label{cor:liveness-ffg-general}
  Allow for finalization of checkpoints via \textsc{ack} message as well and  and 
  assume that \Cref{prop:succ-for-ffg-liveness} holds (and $f< \frac{n}{3}$).
  Let
  $\chainfin^r_\mathsf{G}$ be the longest finalized chain according to view $\V^r_\mathsf{G}$ and
  $v_p$ be any validator honest by the time it \textsc{propose}s chain $\chain_p$ in a slot $t$ such that $\proposing{t} \geq \max(\GST, \GAT) + 4\Delta$.
  Then, $\chain_p \preceq \chainfin^{4\Delta (t+1)+2\Delta}_\mathsf{G}$.
\end{corollary}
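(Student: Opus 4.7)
The plan is to leverage the machinery already established for Theorem~\ref{thm:liveness-ffg-general-ack} and then exploit the key difference between the global view and an individual honest validator's view: the global view contains a message the instant it is cast, whereas an individual view only contains it after at most $\Delta$ rounds of network delay. That single $\Delta$ is exactly the gap between $4\Delta(t+1)+2\Delta$ (the claim here) and $4\Delta(t+1)+3\Delta$ (the round used in Theorem~\ref{thm:liveness-ffg-general-ack}).

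First, I would invoke the same chain of reasoning used in the proof of Theorem~\ref{thm:liveness-ffg-general} via \Cref{prop:succ-for-ffg-liveness-2-3-1,prop:succ-for-ffg-liveness-2-4-1} to conclude that every always-honest validator $v_i$ casts an \textsc{ffg-vote} with target checkpoint $(\chain_p, t+1)$ at round $\voting{(t+1)} = 4\Delta(t+1)+\Delta$. Since $\proposing{t}\geq\max(\GST,\GAT)+4\Delta$, we have $\voting{(t+1)}\geq\GST+\Delta$, so all these \textsc{ffg-vote}s are delivered to every honest validator by round $\voting{(t+1)}+\Delta=4\Delta(t+1)+2\Delta=\fastconfirming{(t+1)}$. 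Combined with $f<\tfrac{n}{3}$, this shows that for every validator $v_i$ honest in round $\fastconfirming{(t+1)}$, $\GJ(\V_i^{\fastconfirming{(t+1)}})=(\chain_p,t+1)$.

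Next, by the rule introduced in this section, each such honest validator broadcasts $[\textsc{ack},(\chain_p,t+1),t+1,v_i]$ exactly at round $\fastconfirming{(t+1)}=4\Delta(t+1)+2\Delta$. Because $\GAT\leq\proposing{t}-4\Delta$, at least $\tfrac{2}{3}n$ validators are honest and active in that round, producing a supermajority acknowledgment of $(\chain_p,t+1)$. These \textsc{ack} messages are in $\V^{4\Delta(t+1)+2\Delta}_\mathsf{G}$ by definition, since the global view consists of all messages sent up to (and including) that round, and the target checkpoint $(\chain_p,t+1)$ is justified in the global view as well (the \textsc{ffg-vote}s from the previous step are also in $\V^{4\Delta(t+1)+2\Delta}_\mathsf{G}$).

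Therefore $(\chain_p,t+1)$ is finalized according to $\V^{4\Delta(t+1)+2\Delta}_\mathsf{G}$, and consequently $\chain_p\preceq\chainfin^{4\Delta(t+1)+2\Delta}_\mathsf{G}$, as desired. There is no real obstacle here beyond being careful that the semantics of ``global view at time~$r$'' include messages cast at~$r$ (so that ack messages broadcast at $\fastconfirming{(t+1)}$ indeed belong to $\V^{4\Delta(t+1)+2\Delta}_\mathsf{G}$); this is explicit in the definition of $\Vglobal^r$ given earlier in \Cref{prop:chfin}, so the gap between individual-view finalization at $4\Delta(t+1)+3\Delta$ and global-view finalization at $4\Delta(t+1)+2\Delta$ is precisely the one $\Delta$ of network delay saved by not needing the acks to propagate.
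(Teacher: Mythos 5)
Your proof is correct and takes essentially the same route as the paper's (very terse) proof: it unfolds the machinery of \Cref{thm:liveness-ffg-general-ack} and observes that, because $\Vglobal^r$ contains messages the instant they are cast, the $\Delta$-delay needed for honest validators' local views is not needed for the global view, shaving one $\Delta$ off the finalization time. A minor nit: the citation of \Cref{prop:succ-for-ffg-liveness-2-4-1} is superfluous here (that constraint concerns slot $t+2$); the fact that the target checkpoint is $(\chain_p,t+1)$ follows from \Cref{prop:succ-for-ffg-liveness-2-3-1} together with \Cref{prop:succ-for-ffg-liveness-1-3}.
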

\begin{proof}
  Follows from the proof of \Cref{thm:liveness-ffg-general-ack}.
\end{proof}}

\end{document}